\def\nocolor{1}
  \def\linkcolor{black}
  \def\citecolor{black}
  \def\linkcolor{purple}
  \def\citecolor{blue}
\renewcommand{\footnoterule}{%
  \kern 3pt
  \hrule width \textwidth height 0.5pt
  \kern 8pt
}
\newwrite\figpages
\def\cleardoublepage{\clearpage\if@twoside \ifodd\c@page\else
  \hbox{}
  \thispagestyle{empty}
  \newpage
  \if@twocolumn\hbox{}\newpage\fi\fi\fi}
\newcommand{\nocontentsline}[3]{}
\newcommand{\tocless}[2]{\bgroup\let\addcontentsline=\nocontentsline#1{#2}\egroup}
\newcommand{\tbf}[1]{\textbf{#1}}
\newcommand{\mc}[1]{\mathcal{#1}}
\newcommand{\ul}[1]{\underline{#1}}
\newcommand{\mf}[1]{\mathfrak{#1}}
\newcommand{\fequiv}[1]{\ensuremath{\equiv_{#1, \fo}}}
\newcommand{\mequiv}[1]{\ensuremath{\equiv_{#1, \mso}}}
\newcommand{\lequiv}[1]{\ensuremath{\equiv_{#1, \mc{L}}}}
\newcommand{\ftp}[3]{\ensuremath{\mathsf{tp}_{#1, #2, #3, \fo}}}
\newcommand{\ltp}[3]{\ensuremath{\mathsf{tp}_{#1, #2, #3, \mc{L}}}}
\newcommand{\fotp}[2]{\ensuremath{\mathsf{tp}_{#1, #2}}}
\newcommand{\pitp}[2]{\ensuremath{\mathsf{tp}_{\Pi, #1, #2}}}
\newcommand{\lth}[2]{\ensuremath{\mathsf{Th}_{#1,
      \mc{L}}(#2)}}
\newcommand{\foth}{\ensuremath{\mathsf{Th}}}
\newcommand{\true}{\ensuremath{\mathsf{True}}}
\newcommand{\false}{\ensuremath{\mathsf{False}}}
\newcommand{\fef}{\ensuremath{\fo\text{-}\text{EF}}}
\newcommand{\mef}{\ensuremath{\mso\text{-}\text{EF}}}
\newcommand{\fo}{\ensuremath{\text{FO}}}
\newcommand{\mso}{\ensuremath{\text{MSO}}}
\newcommand{\rank}[1]{\ensuremath{\text{rank}(#1)}}
\newcommand{\univ}[1]{\ensuremath{\mathsf{U}_{#1}}}
\newcommand{\diag}[1]{\ensuremath{\text{Diag}(#1)}}
\newcommand{\eldiag}[1]{\ensuremath{\text{El-diag}(#1)}}
\newcommand{\aset}{\{a_1,\ldots,a_k\}}
\newcommand{\M}[2]{\ensuremath{{#1}({#2})}}
\newcommand{\hpc}[1]{\ensuremath{h\text{-}PC({#1})}}
\newcommand{\lt}{{{\L}o{\'s}-Tarski}}
\newcommand{\glt}[1]{\ensuremath{\mathsf{GLT}({#1})}}
\newcommand{\hpt}{\ensuremath{\mathsf{HPT}}}
\newcommand{\lghpt}[1]{\ensuremath{\mc{L}\text{-}\mathsf{GHPT}({#1})}}
\newcommand{\fghpt}[1]{\ensuremath{\fo\text{-}\mathsf{GHPT}({#1})}}
\newcommand{\lebsp}[2]{\ensuremath{\mc{L}\text{-}\mathsf{EBSP}({#1},
    {#2})}}
\newcommand{\febsp}[2]{\ensuremath{\fo\text{-}\mathsf{EBSP}({#1},
    {#2})}}
\newcommand{\mebsp}[2]{\ensuremath{\mso\text{-}\mathsf{EBSP}({#1},
    {#2})}}
\newcommand{\lebspcond}{\ensuremath{\mc{L}\text{-}\mathsf{EBSP}\text{-}\mathsf{condition}}}
\newcommand{\febspcond}{\ensuremath{\fo\text{-}\mathsf{EBSP}\text{-}\mathsf{condition}}}
\newcommand{\reflebsp}{\ensuremath{\mc{L}\text{-}\mathsf{EBSP}}}
\newcommand{\witfn}[3]{\ensuremath{\theta_{(#1, #2, #3)}}}
\newcommand{\hlebsp}[2]{\ensuremath{h\text{-}\mc{L}\text{-}\mathsf{EBSP}({#1},
    {#2})}}
\newcommand{\hfebsp}[2]{\ensuremath{h\text{-}\fo\text{-}\mathsf{EBSP}({#1},
    {#2})}}
\newcommand{\hmebsp}[2]{\ensuremath{h\text{-}\mso\text{-}\mathsf{EBSP}({#1},
    {#2})}}
\newcommand{\wqo}[2]{\ensuremath{\mathsf{WQO}({#1}, {#2})}}
\newcommand{\ls}{L\"owenheim-Skolem}
\newcommand{\dls}{downward L\"owenheim-Skolem}
\newcommand{\dlsfull}{downward L\"owenheim-Skolem theorem}
\newcommand{\ldlsp}[2]{\ensuremath{\mc{L}\text{-}\mathsf{DLSP}({#1},
    {#2})}}
\newcommand{\ldlspm}[2]{\ensuremath{\mc{L}\text{-}\mathsf{DLSP}_{\mathsf{M}}({#1},
    {#2})}}
\newcommand{\ldlsps}[2]{\ensuremath{\mc{L}\text{-}\mathsf{DLSP}_{\mathsf{S}}({#1},
    {#2})}}
\newcommand{\cl}[1]{\ensuremath{\mathcal{#1}}}
\newcommand{\words}{\mathsf{Words}}
\newcommand{\eat}[1]{}
\newcommand{\tree}[1]{\ensuremath{\mathsf{#1}}}
\newcommand{\nesword}[1]{\ensuremath{\mathsf{#1}}}
\newcommand{\nestedwords}[1]{\mathsf{Nested}\text{-}\mathsf{words}(#1)}
\newcommand{\ncographs}{\ensuremath{n\text{-}\mathsf{partite}\text{-}\mathsf{cographs}}}
\newcounter{remark-tracker}
\theoremstyle{plain}
\newcounter{thm} 
\numberwithin{thm}{section}
\newcounter{thmfuturework} 
\numberwithin{thmfuturework}{chapter}
\newcounter{thmchapter} 
\numberwithin{thmchapter}{chapter}
\newtheorem{theorem}[thm]{Theorem}
\newtheorem{corollary}[thm]{Corollary}
\newtheorem{lemma}[thm]{Lemma}
\newtheorem{hypothesis}[thm]{Hypothesis}
\newtheorem{proposition}[thm]{Proposition}
\newtheorem{futureworktheorem}[thmfuturework]{Theorem}
\newtheorem{futureworklemma}[thmfuturework]{Lemma}
\newtheorem{futureworkconj}[thmfuturework]{Conjecture}
\newtheorem{chaptertheorem}[thmchapter]{Theorem}
\theoremstyle{definition}
\newtheorem{defn}[thm]{Definition}
\newtheorem{example}[thm]{Example}
\newtheorem{remark}[thm]{Remark}
\newtheorem{chapterdefn}[thmchapter]{Definition}
\newtheorem{chapterremark}[thmchapter]{Remark}
\newtheorem{futureworkremark}[thmfuturework]{Remark}
\newtheorem{futureworkdefn}[thmfuturework]{Definition}
\begin{document}


\clearpage\pagenumbering{roman}  
\thispagestyle{empty}

\title{A Generalization of the {\lt} Preservation Theorem}

\author{Abhisekh Sankaran}

\date{2016}

\rollnum{07405001} 

\iitbdegree{Doctor of Philosophy}

\thesis



\department{Department of Computer Science and Engineering}

\setguide{Prof. Supratik Chakraborty}
\setcoguide{Prof. Bharat Adsul}

\maketitle

\begin{dedication}
\large{To my parents}
\end{dedication}
%

\makeapproval




%

\makedeclaration

%



\begin{abstract}
  Preservation theorems are amongst the earliest areas of study in
classical model theory. One of the first preservation theorems to be
proven is the {\lt} theorem that provides over arbitrary structures
and for arbitrary finite vocabularies, semantic characterizations of
the $\forall^*$ and $\exists^*$ prefix classes of first order logic
(FO) sentences, via the properties of preservation under substructures
and preservation under extensions respectively.  In the classical
model theory part of this thesis, we present new parameterized
preservation properties that provide for each natural number $k$,
semantic characterizations of the $\exists^k \forall^*$ and $\forall^k
\exists^*$ prefix classes of FO sentences, over the class of all
structures and for arbitrary finite vocabularies. These properties,
that we call \emph{preservation under substructures modulo $k$-cruxes}
and \emph{preservation under $k$-ary covered extensions} respectively,
correspond exactly to the properties of preservation under
substructures and preservation under extensions, when $k$ equals 0.
As a consequence, we get a parameterized generalization of the
{\L}o{\'s}-Tarski theorem for sentences, in both its substructural and
extensional forms.  We call our characterizations collectively the
\emph{generalized {\L}o{\'s}-Tarski theorem for sentences at level
  $k$}, abbreviated $\glt{k}$.  To the best of our knowledge,
$\glt{k}$ is the first to relate \emph{counts} of quantifiers
appearing in the sentences of the $\Sigma^0_2$ and $\Pi^0_2$ prefix
classes of FO, to natural quantitative properties of models, and hence
provides new semantic characterizations of these sentences.  We
generalize $\glt{k}$ to theories, by showing that theories that are
preserved under $k$-ary covered extensions are characterized by
theories of $\forall^k
\exists^*$ sentences, and theories that are preserved under
substructures modulo $k$-cruxes, are equivalent, under a
well-motivated model-theoretic hypothesis, to theories of $\exists^k
\forall^*$ sentences. We also present natural variants of our
preservation properties in which, instead of natural numbers $k$, we
consider infinite cardinals $\lambda$, and show that these variants
provide new semantic characterizations of $\Sigma^0_2$ and $\Pi^0_2$
theories.

In contrast to existing preservation properties in the literature that
characterize $\Sigma^0_2$ and $\Pi^0_2$ sentences, our preservation
properties are combinatorial and finitary in nature, and stay
non-trivial over finite structures as well.  Hence, in the finite
model theory part of the thesis, we investigate $\glt{k}$ over finite
structures. Like most preservation theorems, $\glt{k}$ fails over the
class of all finite structures. To ``recover'' $\glt{k}$, we identify
a new logic based combinatorial property of classes $\cl{S}$ of finite
structures, that we call the \emph{$\mc{L}$-equivalent bounded
substructure property}, abbreviated $\lebsp{\cl{S}}{k}$, where
$\mc{L}$ is either $\fo$ or $\mso$. We show that $\lebsp{\cl{S}}{k}$
entails $\glt{k}$, and even an effective version of the latter, under
suitable ``computability'' assumptions.  A variety of classes of
finite structures of interest in computer science turn out to satisfy
$\lebsp{\cl{S}}{k}$, and the just mentioned computability assumptions
as well, whereby all of these classes satisfy an effective version of
$\glt{k}$.  Examples include the classes of words, trees (unordered,
ordered or ranked), nested words, cographs, graph classes of bounded
tree-depth, graph classes of bounded shrub-depth and $n$-partite
cographs. These classes were earlier not known to even satisfy the
{\lt} theorem. All of the aforesaid classes have received significant
attention due to their excellent logical and algorithmic properties,
and moreover, many of these are recently defined (in the last 10
years).  We go further to give ways to construct new classes of
structures satisfying $\lebsp{\cdot}{\cdot}$ by showing the closure of
the latter under set-theoretic operations and special kinds of
translation schemes. As a consequence, we get that $\lebsp{\cdot}{k}$
is closed under unary operations like complementation, transpose and
the line-graph operation, and binary ``sum-like'' operations like
disjoint union and join, while $\febsp{\cdot}{\cdot}$ is additionally
closed under ``product-like'' operations like the cartesian, tensor,
lexicographic and strong products.  On studying the $\lebsp{\cdot}{k}$
property further, it turns out that any class of structures that is
well-quasi-ordered under embedding satisfies $\lebsp{\cdot}{0}$, that
$\lebsp{\cdot}{k}$ classes, under the aforementioned computability
assumptions, admit decidability of the satisfiability problem for
$\mc{L}$, and that $\lebsp{\cdot}{k}$ entails the homomorphism
preservation theorem.  Finally, we find it worth mentioning that
$\lebsp{\cl{S}}{k}$ has a remarkably close resemblance to the
classical downward L\"owenheim-Skolem property, and can very well be
regarded as a finitary analogue of the latter.  It is pleasantly
surprising that while the downward L\"owenheim-Skolem property is by
itself meaningless over finite structures, a natural finitary analogue
of it is satisfied by a wide spectrum of classes of finite structures,
that are of interest and importance in computer science.

In summary, the properties introduced in this thesis are interesting
in both the classical and finite model theory contexts, and yield in
both these contexts, a new parameterized generalization of the {\lt}
preservation theorem.

\end{abstract}

\tableofcontents
\listoffigures
\chapter*{List of Publications}
\addcontentsline{toc}{chapter}{List of publications}
{\large{I. Classical model theory}}

\begin{enumerate}
\item Abhisekh Sankaran, Bharat Adsul, and Supratik
  Chakraborty. A generalization of the {\lt} preservation
    theorem. \emph{Annals of Pure and Applied Logic}, 167(3):189 - 210, 2016.
\end{enumerate}

{\large{II. Finite model theory}}
\begin{enumerate}
\item
Abhisekh Sankaran, Bharat Adsul, and Supratik Chakraborty. A
  generalization of the {\lt} preservation theorem over classes of
  finite structures. In \emph{Proceedings of the 39th International
Symposium on Mathematical Foundations of Computer Science, MFCS 2014,
Budapest, Hungary, August 25-29, 2014, Part I}, pages 474 - 485, 2014.

\item
A. Sankaran, B. Adsul, V. Madan, P. Kamath, and
S. Chakraborty. Preservation under substructures modulo bounded
cores. In \emph{Proceedings of the 19th International Workshop on
  Logic, Language, Information and Computation, WoLLIC 2012, Buenos
  Aires, Argentina, September 3-6, 2012}, pages 291 - 305, 2012.

\end{enumerate}

\newpage

\begingroup
\let\clearpage\relax
\endgroup

%

\cleardoublepage\pagenumbering{arabic} 

\pagestyle{fancy}
\renewcommand{\chaptermark}[1]%
{\markboth{#1}{}}
\renewcommand{\sectionmark}[1]%
{\markright{#1}}
\renewcommand{\headrulewidth}{0pt}
\renewcommand{\footrulewidth}{0pt}
\newcommand{\helv}{%
\fontfamily{phv}\fontsize{9}{12}\selectfont}
\fancyhf{}
\fancyhead[RE]{\helv \nouppercase{\leftmark}}
\fancyhead[LE]{\helv Chapter~\thechapter}

\fancyhead[RO]{
\ifnum\value{section}=0
 {\helv \nouppercase{\leftmark}} 
\else
 {\helv \nouppercase{\rightmark}}
\fi
}

\fancyhead[LO]{ 
\ifnum\value{section}=0
 {\helv Chapter~\thechapter}
\else
 {\helv Section~\thesection}
\fi
}
\fancyfoot[C]{\thepage}
\pretocmd{\bibliography}{\cleardoublepage \fancyhead[LE,LO]{}}{}{}

\setcounter{secnumdepth}{5}

\chapter{Introduction}\label{chapter:introduction}


Classical model theory is a subject within mathematical logic, that
studies the relationship between a formal language and its
interpretations, also called structures or models~\cite{chang-keisler,
  hodges}. The most well-studied formal language in classical model
theory is \emph{first order logic} (henceforth called FO), a language
that is built up from predicates, functions and constant symbols using
boolean connectives, and existential and universal quantifications.
Classical model theory largely studies the correspondence between the
\emph{syntax} of a description in FO with the \emph{semantics} of the
description, where the latter is the class of all structures that
satisfy the description~\cite{tarski-truth}.


Amongst the earliest areas of study in classical model theory, is a
class of results called \emph{preservation theorems}. A preservation
theorem identifies syntactic features that capture a
\emph{preservation property}, which is a special kind of semantics
that defines classes of arbitrary structures (that is, structures that
could be finite or infinite) that are closed or \emph{preserved} under
some model-theoretic operation.  For instance, the class of all
cliques (graphs in which any two vertices are adjacent) is preserved
under the operation of taking substructures (which are induced
subgraphs in this context). The class of all cliques is defined by the
FO sentence that says ``for all (vertices) $x$ and forall all
(vertices) $y$, (there is an) edge between $x$ and $y$''. The latter
is a description in FO having the special syntactic feature that it
contains only universal quantifications and no existential
quantifications. One of the earliest preservation theorems of
classical model theory, the {\lt} theorem, proven by Jerzy {\L}o{\'s}
and Alfred Tarski in 1954-55~\cite{hodges}, says that the
aforementioned syntactic feature is indeed \emph{expressively
  complete} for the semantics of preservation under substructures. In
other words, a class of arbitrary structures that is defined by an FO
sentence, is preserved under substructures if, and only if, it is
definable by a universal sentence, the latter being an FO sentence in
which only universal quantifications appear (Theorem 3.2.2
in~\cite{chang-keisler}).  In ``dual'' form, the {\lt} theorem states
that a class of arbitrary structures that is defined by an FO
sentence, is preserved under extensions if, and only if, it is
definable by an existential sentence which is an FO sentence that uses
only existential quantifications.  The theorem extends to theories
(sets of sentences) as well: a class of arbitrary structures that is
defined by an FO theory, is preserved under substructures
(respectively, extensions) if, and only if, it is definable by a
theory of universal (respectively, existential) sentences.
Historically speaking, the study of preservation theorems began with
Marczewski asking in 1951, which FO definable classes of structures
are preserved under surjective
homomorphisms~\cite{hodges-history}. This question triggered off an
extensive study of preservation theorems in which a variety of
model-theoretic operations like substructures, extensions,
homomorphisms, unions of chains, direct products, reduced products,
etc. were taken up and preservation theorems for these operations were
proven.

The {\lt} theorem holds a special place amongst preservation theorems,
for its significance from at least two points of view: historical and
technical. From the historical point of view, the theorem was amongst
the earliest applications of the \emph{compactness theorem} (G\"odel
1930, Mal'tsev 1936)~\cite{hodges}, a result that is now regarded as
one of the pillars of classical model theory. Further, the method of
proof of the {\lt} theorem lent itself to adaptations that enabled
proving the various other preservation theorems mentioned above. This
extensive research into preservation theorems from the '50s to the
'70s (indeed these theorems were subsequently also proven for
extensions of FO, like infinitary
logics~\cite{keisler:infinitary-logic-book}) contributed much to the
development of classical model theory~\cite{hodges-history}.  From the
technical point of view, the property of preservation under
substructures that the {\lt} theorem characterizes, has been studied
substantially in the literature of various mathematical disciplines,
under the name of \emph{hereditariness}. A property is hereditary if
for any structure satisfying the property, any substructure of it also
satisfies the property.\sindex[term]{hereditary} Hereditary properties
or their variants have been of significant interest in topology, set
theory, graph theory and poset theory, to name a few areas. In more
detail, in topology, the notions of second countability and
metrisability are hereditary, while those of sequentiality and
Hausdorff compactness are what are called \emph{weakly
  hereditary}~\cite{kelley}. In set theory, the notions of hereditary
sets, hereditarily finite sets and hereditarily countable sets are all
hereditary properties~\cite{kunen}. Various classes of graphs of
interest in graph theory are hereditary; examples include cliques,
forests, $n$-partite graphs, planar graphs, graphs of bounded degree,
graphs that exclude any fixed finite set of graphs as subgraphs or
induced subgraphs~\cite{diestel}. In poset theory, a landmark result
in the sub-area of well-quasi-orders~\cite{basicwqotheory}, namely the
Robertson-Seymour theorem~\cite{robertson-seymour}, characterizes a
variant of hereditariness, called \emph{minor-hereditariness}. The
{\lt} theorem studies hereditariness from the point of view of logic,
and specifically, provides a syntactic characterization of hereditary
classes of structures that are FO definable.

While a preservation theorem can be seen as providing a syntactic
characterization of a preservation property, the same theorem, flipped
around, can also be seen as providing a semantic characterization (and
furthermore, via a preservation property) of a syntactic class of FO
theories. Thus, the {\lt} theorem provides semantic characterizations
of existential and universal theories, in terms of preservation under
extensions and preservation under substructures
respectively. Existential and universal theories are equivalent
respectively to what are known in the literature as $\Sigma^0_1$ and
$\Pi^0_1$ theories. For $n \ge 1$, a $\Sigma^0_n$ theory is a set of
$\Sigma^0_n$ sentences, where a $\Sigma^0_n$ sentence is a
\emph{prenex} FO sentence in which from left to right, there is a
\emph{quantifier prefix} consisting of $n$ blocks of quantifiers
(equivalently, $n-1$ alternations of quantifiers) beginning with a
block of existential quantifiers, followed by a quantifier-free
formula.  Likewise a $\Pi^0_n$ theory is a set of $\Pi^0_n$ sentences,
where a $\Pi^0_n$ sentence is a prenex FO sentence in which from left
to right, there is a quantifier prefix consisting of $n$ blocks of
quantifiers beginning with a block of universal quantifiers, followed
by a quantifier-free formula. The {\lt} theorem provides semantic
characterizations of $\Sigma^0_1$ and $\Pi^0_1$ theories. For
$\Sigma^0_n$ sentences and $\Pi^0_n$ theories for $n \ge 2$, semantic
characterizations were proven using preservation properties defined in
terms of \emph{ascending chains} and \emph{descending chains} (Theorem
5.2.8 in~\cite{chang-keisler}). Finally in 1960, Keisler proved the
\emph{$n$-sandwich theorem}~\cite{keisler-sandwich} that provides a
characterization of $\Sigma^0_n$ and $\Pi^0_n$ sentences and theories,
for each $n \ge 1$, using preservation properties defined uniformly in
terms of the notion of \emph{$n$-sandwiches}.  It is important to note
that all of the characterizations mentioned above are over arbitrary
structures, and make important use of the presence of infinite
structures.




In 1973, Fagin proved a remarkable syntax-semantics correspondence
over \emph{finite} structures. He showed that an isomorphism-closed
class of finite structures has the (algorithmic) semantic property of
being in the complexity class \textsf{NP} (Non-deterministic
Polynomial time) if, and only if, it is definable in an extension of
{\fo} called \emph{existential second order logic} (Theorem 3.2.4
in~\cite{gradel-vardi-et-al}). This result gave birth to the area of
\emph{finite model theory}, whose aims are similar to classical model
theory (i.e. study of the expressive power of formal languages) but
now the structures under consideration are only finite. Finite model
theory~\cite{libkin, gradel-vardi-et-al} is closely connected with
computer science since many disciplines within the latter use formal
languages, such as programming languages, database query languages or
specification languages, and further the structures that arise in
these disciplines are often finite, such as data structures, databases
or program models respectively. It is natural to ask if the results
and techniques of classical model theory can be carried over to the
finite.  Unfortunately, it turns out that many important results and
methods of classical model theory fail in the context of finite
structures.  The most stark failure is that of the compactness
theorem, whereby, all proofs based on the compactness theorem --
indeed this includes the proofs of almost every preservation theorem
-- fail when restricted to only finite structures. But worse still,
the statements of most preservation theorems fail too. The {\lt}
theorem fails in the finite; Tait~\cite{tait} showed there is an FO
sentence that is preserved under substructures over the class of all
finite structures, but that is not equivalent over this class, to any
universal sentence. The other preservation theorems from the classical
model theory literature mentioned earlier, namely those characterizing
$\Sigma^0_n$ and $\Pi^0_n$ sentences/theories for $n \ge 2$, fail in
the finite too; this is simply because the characterizing notions
become trivial over finite structures.
Two rare theorems that survive passage into the finite are the modal
characterization theorem and the homomorphism preservation
theorem\sindex[term]{homomorphism preservation theorem} -- the former
was shown by Rosen~\cite{rosen-van-benthem}, and the latter was a
striking result due to Rossman~\cite{rossman-hom}, that settled a long
standing open problem in finite model theory concerning the status of
this theorem in the finite. But then, these results are
exceptions. (See~\cite{gurevich84, gurevich-ajtai,
  ajtai-gurevich94, rosen-thesis, stolboushkin, rosen-weinstein,
  gurevich-alechina, gradel-rosen} for more on the investigations of
results from classical model theory in the context of all finite
structures. See~\cite{rosen} for an excellent survey of these.)

To ``recover'' classical preservation theorems in the finite model
theory setting, recent
research~\cite{dawar-pres-under-ext,duris-ext,nicole-lmcs-15,dawar-hom,dawar-quasi-wide,dawar-survey}
in the last ten years, has focussed attention on studying these
theorems over ``well-behaved'' classes of finite structures.  In
particular, Atserias, Dawar and Grohe showed
in~\cite{dawar-pres-under-ext} that under suitable closure
assumptions, classes of structures that are acyclic or of bounded
degree admit the {\lt} theorem for sentences. Likewise, the class of
all structures of tree-width at most $k$ also admits the {\lt}
theorem, for each natural number $k$.  These classes of structures are
well-behaved in the sense that they have proved especially important
in modern graph structure theory as also from an algorithmic point of
view~\cite{downey-fellows}. A classic result from graph structure
theory states that a minor-hereditary class of graphs has bounded
treewidth if, and only if, the class has a finite set of forbidden
minors that includes a planar graph~\cite{graphminortreewidth}.  From
an algorithmic point of view, many computational problems that are
otherwise intractable (such as 3-colorability), become tractable when
restricted to structures of bounded
treewidth~\cite{courcelle}. Likewise, over structures of bounded
degree, many problems that are polynomial time solvable in general
(such as checking if a graph is triangle-free) become solvable in
linear time~\cite{seese}.  Atserias, Dawar and Kolatis showed that the
homomorphism preservation theorem also holds over the aforesaid
classes of structures~\cite{dawar-hom}. (Note that this theorem being
true over all finite structures does not imply that it would be true
over subclasses of finite structures; restricting attention to a
subclass weakens both the hypothesis and the consequent of the
statement of the theorem).  Subsequently, Harwath, Heimberg and
Schweikardt~\cite{nicole-lmcs-15} studied the bounds for an effective
version of the {\lt} theorem and the homomorphism preservation theorem
over bounded degree structures. In~\cite{duris-ext}, Duris showed that
the {\lt} theorem holds for structures that are acyclic in a more
general sense.  All of the classes of structures mentioned above are
thus ``well-behaved'' from the model-theoretic point of view as well
(in that, these classes admit theorems from classical model
theory). The investigation of such model-theoretic well-behavedness is
a current and active area of research in finite model theory.

\section{Our results}\label{section:our-results}


The properties in the classical model theory literature that
characterize $\Sigma^0_n$ and $\Pi^0_n$ sentences or theories,
characterize these syntactic classes ``as a whole''. None of these
characterize $\Sigma^0_n$ and $\Pi^0_n$ sentences/theories in which
for some given block, the number of quantifiers in that block is
\emph{fixed} to a given natural number $k$.  Further, all of these
properties are in terms of notions that are ``infinitary'',
i.e. notions that are non-trivial only when arbitrary (i.e. finite and
infinite) structures are considered, and that become trivial when
restricted to only finite structures.  Given the active interest in
preservation theorems in the finite model theory context, none of the
properties mentioned above can be used to characterize $\Sigma^0_n$
and $\Pi^0_n$ sentences in the finite, for $n \ge 2$. Further, the
preservation theorems that have been investigated over well-behaved
classes in the finite model theory literature, namely the {\lt}
theorem and the homomorphism preservation theorem, are those that
characterize only $\Sigma^0_1$ or $\Pi^0_1$ sentences, or subclasses
of these.

The observations above raise the following two natural questions:

\begin{enumerate}[nosep]
\item[\textsf{Q1}.] Are there properties that semantically characterize, over
  arbitrary structures,  $\Sigma^0_n$ and $\Pi^0_n$ sentences/theories in
  which the number of quantifiers appearing in a given block(s) is
  fixed to a given natural number(s)?  

\item[\textsf{Q2}.] Are there properties that semantically characterize, over
  classes of finite structures, $\Sigma^0_n$ and $\Pi^0_n$ sentences
  in which the number of quantifiers appearing in a given block(s) is
  fixed to a given natural number(s)? If so, what are these classes?


\end{enumerate}

In this thesis, we consider the case when $n = 2$, and present our
partial results towards addressing the above questions. Specifically,
for the case of $\Sigma^0_2$ and $\Pi^0_2$ sentences, in which the
number of quantifiers in the \emph{leading block} is fixed to a given
natural number, we identify preservation properties that uniformly
answer both \textsf{Q1} and (the first part of) \textsf{Q2} in the
affirmative. In other words, we present quantitative dual
parameterized preservation properties that are finitary and
combinatorial, and that characterize over arbitrary structures and
over a variety of interesting classes of finite structures,
$\Sigma^0_2$ and $\Pi^0_2$ sentences whose quantifier prefixes are
respectively of the form $\exists^k \forall^*$ or $\forall^k
\exists^*$ (i.e. $k$ quantifiers in the first quantifier block
followed by zero or more quantifiers in the second quantifier block).
Our properties, that we call \emph{preservation under substructures
  modulo $k$-cruxes} and \emph{preservation under $k$-ary covered
  extensions} are exactly the classical properties of preservation
under substructures and preservation under extensions for the case of
$k = 0$. Whereby, our characterizations of $\exists^k \forall^*$ and
$\forall^k \exists^*$ sentences yield the {\lt} theorem for sentences
for the case of $k = 0$. We hence call our characterizations
collectively as the \emph{generalized {\lt} theorem for sentences at
  level $k$}, and denote it as $\glt{k}$. To the best of our
knowledge, our characterizations are the first to relate natural
quantitative properties of models of sentences in a semantic class to
counts of leading quantifiers in equivalent $\Sigma^0_2$ or $\Pi^0_2$
sentences. Before we present our results in more detail and provide
our answer to the second part of \textsf{Q2}, we briefly describe the
importance of the $\Sigma^0_2$ and $\Pi^0_2$ classes of sentences.



After Hilbert posed the \emph{Entscheidungsproblem} in 1928, namely
the problem of deciding if a given FO sentence is satisfiable,
abbreviated the SAT problem, one of the first classes of FO sentences
for which SAT was shown to be decidable, was the $\Sigma^0_2$
class. This was shown by Bernays and Sch\"onfinkel for $\Sigma^0_2$
sentences without equality, and later extended to full $\Sigma^0_2$ by
Ramsey ~\cite{ramsey} (on a historical note: it was in showing this
result that Ramsey proved the famous \emph{Ramsey's theorem}). In a
subsequent extensive research of about 70 years on the SAT problem for
prefix classes, it was shown~\cite{classical-decision-problem} that
$\Sigma^0_2$ is indeed one of the \emph{maximal} prefix classes for
which the SAT problem is decidable. Interestingly, on the other hand,
various subclasses of $\Pi^0_2$ class turn out to be \emph{minimal}
prefix classes for which the SAT problem is undecidable; for instance,
the class of $\Pi^0_2$ sentences with only two universal quantifiers
and over a vocabulary containing just one binary relation symbol, is
undecidable for SAT, when equality is allowed.  With the growth of
parameterized complexity theory~\cite{downey-fellows}, it became
interesting to study the computational complexity of the
satisfiability problem for the $\Sigma^0_2$ class, in terms of
\emph{counts of quantifiers} as parameters.  As shown
in~\cite{classical-decision-problem}, satisfiability for the
$\Sigma^0_2$ class is in $\mathsf{NTIME}((n\cdot k^m)^c)$, where $n$
is the length of the input sentence, $k$ and $m$ are the number of
existential and universal quantifiers respectively in the sentence,
and $c$ is a suitable constant.  In recent years, there has been
significant interest in the $\Sigma^0_2$ class from the program
verification and program synthesis communities as
well~\cite{srivastava,emmer,gulwani,bjorner}. Here, the $\Sigma^0_2$
class is also referred to as \emph{effectively propositional
  logic}. For the $\Pi^0_2$ class on the other hand, the database
community has shown a lot of active interest in this class in the
context of data exchange, data integration and data
interoperability~\cite{data-exch,data-integ,datalog,forall-exis-rules},
and much more recently, in the context of query answering over RDF and
OWL knowledge~\cite{mathew-1,mathew-2}.






In the remainder of this section, we describe our preservation
properties, and our main results and techniques. All of these in the
classical model theory setting are described in
Section~\ref{section:intro-results-in-CMT}, and these in the finite
model theory setting are described in
Section~\ref{section:intro-results-in-FMT}. The latter section also
contains our answer to the second part of \textsf{Q2} raised above.
The results that we present here contain, and generalize
significantly, the results
in~\cite{abhisekh-apal,abhisekh-mfcs,abhisekh-wollic}.


\subsection{Results in the classical model theory context}\label{section:intro-results-in-CMT}

Our property of preservation under substructures modulo $k$-cruxes
($PSC(k)$), is a natural parameterized generalization of preservation
under substructures, as can be seen from its definition
(Definition~\ref{defn:PSC(k)}): A sentence $\phi$ is $PSC(k)$ if every
model $\mf{A}$ of $\phi$ contains a set $C$ of at most $k$ elements
such that any substructure of $\mf{A}$, \emph{that contains $C$},
satisfies $\phi$. It is evident that preservation under substructures
is a special case of $PSC(k)$ when $k$ equals 0. The property of
preservation under $k$-ary covered extensions ($PCE(k)$) is defined as
the dual of $PSC(k)$, whereby it generalizes the property of
preservation under extensions (Definition~\ref{defn:PCE(k)}). The
generalized Los-Tarski theorem for sentences at level $k$ ($\glt{k}$)
gives syntactic characterizations of $PSC(k)$ and $PCE(k)$ as follows
(Theorem~\ref{theorem:glt(k)}): (i) an FO sentence is $PSC(k)$ if, and
only if, it is equivalent to an $\exists^k \forall^*$ sentence, and
(ii) an FO sentence is $PCE(k)$ if, and only if, it is equivalent to a
$\forall^k \exists^*$ sentence. We call the former the
\emph{substructural version} of $\glt{k}$, and the latter the
\emph{extensional version} of $\glt{k}$.  The {\lt} theorem for
sentences is indeed a special case of $\glt{k}$ when $k$ equals 0.

Towards extending $\glt{k}$ to the case of theories (sets of
sentences), we first extend the notions of $PSC(k)$ and $PCE(k)$ to
theories, and consider separately the substructural and extensional
versions of $\glt{k}$. The extensional version of $\glt{k}$ lifts
naturally: a theory is $PCE(k)$ if, and only if, it is equivalent to a
theory of $\forall^k \exists^*$ sentences
(Theorem~\ref{theorem:ext-chars}(\ref{theorem:char-of-PCE(k)-theories})). The
substructural version of $\glt{k}$ however does not lift to theories,
as is witnessed by an intriguing counterexample that shows that there
is a theory of $\exists \forall^*$ sentences, i.e. $\Sigma^0_2$
sentences with just one existential quantifier, that is not $PSC(k)$
for any $k$. Nevertheless, we show that $PSC(k)$ theories are always
equivalent to $\Sigma^0_2$ theories, and as a (conditional) refinement
of this result, we show that under a well-motivated model-theoretic
hypothesis, $PSC(k)$ theories are equivalent to theories of $\exists^k
\forall^*$ sentences (Theorems
~\ref{theorem:subst-char-for-theories}(\ref{theorem:PSC-and-PSC(aleph_0)-for-theories})
and ~\ref{theorem:conditional-refinement}).


The above results give new semantic characterizations of the classes
of $\Sigma^0_2$ and $\Pi^0_2$ sentences: the properties of ``is
$PSC(k)$ for some $k$'' and ``is $PCE(k)$ for some $k$'' respectively
characterize these sentences. The situation however becomes different
when these characterizations are considered in the context of
theories: $\Pi^0_2$ theories turn out to be more general than $PCE(k)$
theories for any $k$, and $\Sigma^0_2$ theories, indeed even $\exists
\forall^*$ theories, turn out to be, as mentioned earlier, more
general than $PSC(k)$ theories for any $k$.  To get a characterization
of $\Sigma^0_2$ and $\Pi^0_2$ theories by staying within the ambit of
the flavour of our preservation properties, we introduce the
properties of $PSC(\lambda)$ and $PCE(\lambda)$ as natural
``infinitary'' extensions of $PSC(k)$ and $PCE(k)$ respectively, in
which the sizes of cruxes and arities of covers are now less than
$\lambda$, for an infinite cardinal $\lambda$. Indeed, these
extensions characterize $\Sigma^0_2$ and $\Pi^0_2$ theories
(Theorems~\ref{theorem:ext-chars}(\ref{theorem:char-of-PCE(lambda)-theories})
and~\ref{theorem:subst-char-for-theories}(\ref{theorem:PSC(lambda)-char-for-theories})),
thereby giving new characterizations of the latter. We apply these
characterizations to give new and simple proofs of well-known
inexpressibility results in FO such as the inexpressibility of
acyclicity, connectedness, bipartiteness, etc.

This completes the description of our results in the classical model
theory context. We present various directions for future work, and
sketch how natural generalizations of the properties of $PSC(k)$ and
$PCE(k)$ can be used to get finer characterizations of $\Sigma^0_n$
and $\Pi^0_n$ sentences/theories for $n > 2$, analogous to the finer
characterizations of $\Sigma^0_2$ and $\Pi^0_2$ sentences/theories by
$PSC(k)$ and $PCE(k)$.

We conclude this subsection by briefly describing the techniques we
use in proving our results described above. For $\glt{k}$, we give two
proofs, one via a special class of structures called
\emph{$\lambda$-saturated structures}, and the other via ascending
chains of structures (a similar proof works for the characterization
of $PCE(k)$ and $PCE(\lambda)$ theories). To very quickly describe the
former, we first show $\glt{k}$ over the class of $\lambda$-saturated
structures, and then using the fact that any arbitrary structure has
an ``FO-similar'' structure (i.e. a structure that satisfies the same
FO sentences) that is $\lambda$-saturated, we ``transfer'' the truth
of $\glt{k}$ over the class of $\lambda$-saturated structures, to that
over the class of all structures.  To show that $PSC(k)$ and
$PSC(\lambda)$ theories are equivalent to $\Sigma^0_2$ theories, we
use Keisler's characterization of $\Sigma^0_2$ theories in terms of a
preservation property defined in terms of 1-sandwiches, and show that
any theory that is $PSC(k)$ or $PSC(\lambda)$ satisfies this
preservation property. The proof of our result showing that under the
well-motivated model-theoretic hypothesis alluded to earlier, a
$PSC(k)$ theory is equivalent to an $\exists^k \forall^*$ theory, is
the most involved of all our proofs. It introduces a novel technique
of getting a syntactically defined FO theory equivalent to a given FO
theory satisfying a semantic property, \emph{by going outside of
  FO}. Specifically, for the case of $PSC(k)$ theories, under the
aforementiond model-theoretic hypothesis, we first ``go up'' into an
\emph{infinitary logic} and show that a $PSC(k)$ theory can be
characterized by syntactically defined sentences of this logic
(Lemma~\ref{lemma:inf-characterization-of-PSC-var(k)}). We then ``come
down'' back to FO by providing a translation of the aforesaid
infinitary sentences, to their equivalent FO theories, whenever these
sentences are known to be equivalent to FO theories
(Proposition~\ref{prop:char-of-inf-formulae-defining-elem-classes}).
The FO theories are obtained from suitable \emph{finite
  approximations} of the infinitary sentences, and turn out to be
theories of $\exists^k \forall^*$ sentences.  The ``coming down''
process can be seen as a ``compilation'' process (in the sense of
compilers used in computer science) in which a ``high level''
description -- via infinitary sentences that are known to be
equivalent to FO theories -- is translated into an equivalent ``low
level'' description -- via FO theories.  We believe this technique of
accessing the descriptive power of an infinitary logic followed by
accessing the translation power of ``compiler results'' of the kind
just mentioned, may have other applications as well.

\subsection{Results in the finite model theory context}\label{section:intro-results-in-FMT}

While the failure of the {\lt} theorem in the finite shows that
universal sentences cannot capture in the finite, the property of
preservation under substructures, we show a stronger result: that for
any $k \ge 0$, the class of $\exists^k \forall^*$ also cannot capture
in the finite, the property of preservation under substructures, and
hence (not capture) $PSC(l)$ for any $l \ge 0$
(Proposition~\ref{prop:failure-of-glt(k)-in-the-finite}).  This
therefore shows the failure of $\glt{k}$ over all finite structures,
for all $k \ge 0$.  What happens to $\glt{k}$ over the well-behaved
classes that have been identified by Atserias, Dawar and Grohe to
admit the {\lt} theorem?  It unfortunately turns out that none of the
above classes, in general, admits $\glt{k}$ for any $k \ge 2$.  We
show that the existence of induced paths of unbounded length in a
class is, under reasonable assumptions on the class, the reason for
the failure of $\glt{k}$ over the class
(Theorem~\ref{theorem:glt(k)-implies-bounded-ind-path-lengths}).
Since these assumptions are satisfied by the aforesaid well-behaved
classes and the latter allow unbounded induced path lengths in
general, $\glt{k}$ fails over these classes in general.

To ``recover'' $\glt{k}$ in the face of the above failures, we define
a new logic based combinatorial property of classes of finite
structures, that we call the \emph{$\mc{L}$-equivalent bounded
  substructure property}, denoted $\lebsp{\cl{S}}{k}$, where $\mc{L}$
is either $\fo$ or an extension of $\fo$, called \emph{monadic second
  order} logic ($\mso$), $\cl{S}$ is a class of finite structures, and
$k$ is a natural number (Definition~\ref{defn:lebsp}).  Intuitively,
this property says that any structure $\mf{A}$ in $\cl{S}$ contains a
small substructure $\mf{B}$ that is in $\cl{S}$ and that is
``logically similar'' to $\mf{A}$. More precisely, $\mf{B}$ is ``$(m,
\mc{L})$-similar'' to $\mf{A}$, in that $\mf{B}$ and $\mf{A}$ agree on
all $\mc{L}$ sentences of quantifier rank $m$, where $m$ is a given
number. The bound on the size of $\mf{B}$ depends only on $m$ (if
$\cl{S}$ and $k$ are fixed).  Further, such a small and $(m, \mc{L})$-similar
substructure can always be found ``around'' any given set of at most
$k$ elements of $\mf{A}$.

We show that $\lebsp{\cl{S}}{k}$ indeed entails
$\glt{k}$. Interestingly, it also entails the homomorphism
preservation theorem ($\hpt$). (In fact, more general versions of
$\glt{k}$ and $\hpt$ are entailed by $\lebsp{\cl{S}}{k}$; see
Theorem~\ref{theorem:lebsp-implies-glt(k)} and
Theorem~\ref{theorem:h-ebsp-implies-ghpt(k)}.) Furthermore, if
$\lebsp{\cl{S}}{k}$ holds with ``computable bounds'', i.e. if the
bound on the size of the small substructure as referred to in the
$\reflebsp$ definition, is computable , then effective versions of the
$\glt{k}$ and $\hpt$ are entailed by $\lebsp{\cl{S}}{k}$.

It turns out that a variety of classes of finite structures, that are
of interest in computer science and finite model theory, satisfy
$\lebsp{\cdot}{k}$, and moreover, with computable bounds. The classes
that we consider are broadly of two kinds: special kinds of labeled
posets and special kinds of graphs. For the case of labeled posets, we
show $\lebsp{\cdot}{k}$ for the cases of words, trees (of various
kinds such as unordered, ordered and ranked), and nested words over a
finite alphabet, and regular subclasses of these
(Theorem~\ref{theorem:words-and-trees-and-nested-words-satisfy-lebsp}).
While words and trees have had a long history of studies in the
literature, nested words are much more recent~\cite{alur-madhu}, and
have attracted a lot of attention as they admit a seamless
generalization of the theory of regular languages and are also closely
connected with visibly pushdown languages~\cite{visibly-pushdown}. For
the case of graphs, we show $\lebsp{\cdot}{k}$ holds for a very
general, and again very recently defined, class of graphs called
\emph{$n$-partite cographs}, and all hereditary subclasses of this
class (Theorem~\ref{theorem:n-partite-cographs-satisfy-lebsp}). This
class of graphs, introduced in~\cite{shrub-depth}, jointly generalizes
the classes of cographs, graph classes of bounded tree-depth and those
of bounded shrub-depth. The latter graph classes have various
interesting finiteness properties, and have become very prominent in
the context of fixed parameter tractability of MSO model checking, and
in the context of investigating when FO equals MSO in its expressive
power~\cite{gajarsky-MSO-FPT-bdd-tree-depth, lampis,
  shrub-depth-FO-equals-MSO,tree-depth-FO-equals-MSO}. Being
hereditary subclasses of the class of $n$-partite cographs, all these
graph classes satisfy $\lebsp{\cdot}{k}$.

We go further to give many methods to construct new classes of
structures satisfying $\lebsp{\cdot}{\cdot}$ (with computable bounds)
from classes known to satisfy $\lebsp{\cdot}{\cdot}$ (with computable
bounds). We show that $\lebsp{\cdot}{\cdot}$ is closed under taking
subclasses that are hereditary or $\mc{L}$-definable, and is also
closed under finite intersections and finite unions
(Lemma~\ref{lemma:lebsp-closure-under-set-theoretic-ops}).  We show
that $\lebsp{\cdot}{\cdot}$ remains preserved under various operations
on structures, that have been well-studied in the literature: unary
operations like complementation, transpose and the line graph
operation, binary ``sum-like'' operations~\cite{makowsky} such as
disjoint union and join, and binary ``product-like'' operations that
include various kinds of products like cartesian, tensor,
lexicographic and strong products. All of these are examples of
operations that can be implemented using what are called
\emph{quantifier-free translation
  schemes}~\cite{makowsky,model-theoretic-methods}.  We show that
$\febsp{\cdot}{\cdot}$ is always closed under such operations, and
$\mebsp{\cdot}{k}$ is closed under such operations, provided that they
are unary or sum-like. It follows that finite unions of classes
obtained by finite compositions of the aforesaid operations also
satisfies $\lebsp{\cdot}{\cdot}$. However, many interesting classes of
structures can be obtained only by taking infinite unions of the kind
just described, a notable example being the class of hamming graphs of
the $n$-clique~\cite{hamming-graphs}. We show that if the
aforementioned infinite unions are ``regular'', in a sense we make
precise, then $\lebsp{\cdot}{0}$ is preserved under these unions,
under reasonable assumptions on the operations
(Theorem~\ref{theorem:lebsp(S,0)-pres-under-MSO-def-op-tree-lang}). As
applications of this result, we get that the class of hamming graphs
of the $n$-clique satisfies $\febsp{\cdot}{0}$, as does the class of
$p$-dimensional grid posets, where $p$ belongs to any {\mso} definable
(using a linear order) class of natural numbers (like, even numbers).

The proofs of the above results rely on tree-representations of
structures, and proceed by performing appropriate ``prunings'' of, and
``graftings'' within, these trees, in a manner that preserves the
substructure and ``$(m, \mc{L})$-similarity'' relations between the
structures represented by these trees. The process eventually yields
small subtrees that represent bounded structures that are
substructures of, and are $(m, \mc{L})$-similar to, the original
structures.  Two key technical elements that are employed to perform
the aforementioned prunings and graftings are the finiteness of the
index of the ``$(m, \mc{L})$-similarity'' relation (which is an
equivalence relation) and the \emph{type-transfer property} of the
tree-representations. The latter means that the $(m,
\mc{L})$-similarity type of the structure represented by a tree
$\tree{t}$ is \emph{determined} by the multi-set of the 
$(m, \mc{L})$-similarity types of the structures represented by the
subtrees rooted at the children of the root of $\tree{t}$, and
further, determined only by a threshold number of appearances of each
$(m, \mc{L})$-similarity type in the multi-set, with the threshold
depending solely on $m$. (Thus any change in the multi-set with
respect to the $(m, \mc{L})$-similarity types in it that appear less
than threshold number of times, gets ``transferred'' to the $(m,
\mc{L})$-similarity type of the structure represented by the (changed)
tree $\tree{t}$.)  These techniques have been incorporated into a
single abstract result concerning tree representations,
(Theorem~\ref{theorem:abstract-tree-theorem}), which we believe might
be of independent interest.

Finally, we present three additional findings about the
$\lebsp{\cdot}{k}$ property.  We show that the $\mc{L}$-SAT problem
(the problem of deciding if a given $\mc{L}$ sentence is satisfiable)
is decidable over any class satisfying $\lebsp{\cdot}{k}$ with
computable bounds (Lemma~\ref{lemma:lebsp-and-lSAT}).  We next show
that any class of structures that is well-quasi-ordered under the
embedding relation satisfies $\lebsp{\cdot}{0}$
(Theorem~\ref{theorem:wqo-implies-lebsp}). The notion of
well-quasi-orders is very well-studied in the literature~\cite{higman,
  wqosurveykruskal, basicwqotheory} and has great algorithmic
implications. For instance, checking membership in any hereditary
subclass of a well-quasi-ordered class can be done efficiently
(i.e. in polynomial time).  Our result above not only gives a
technique to show the $\lebsp{\cdot}{0}$ property for a class (by
showing the class to be well-quasi-ordered) but also, flipped around,
gives a ``logic-based'' tool to show that a class of structures is not
w.q.o. under embedding (by showing that the class does not satisfy
$\lebsp{\cdot}{0}$).  Finally, we show that $\lebsp{\cdot}{k}$ can
very well be seen to be a \emph{finitary analogue} of the
model-theoretic property that the classical downward
L\"owenheim-Skolem theorem (one the first results of classical model
theory and a widely used tool in the subject, along with the
compactness theorem) states of FO and arbitrary structures.  This
theorem says that an infinite structure $\mf{A}$ over a countable
vocabulary always contains a countable substructure $\mf{B}$ that is
``FO-similar'' to $\mf{A}$, in that $\mf{B}$ and $\mf{A}$ agree on all
FO sentences. Further such a countable and FO-similar substructure can
always be found ``around'' any given countable set of elements of
$\mf{A}$.  The importance of the downward L\"owenheim-Skolem theorem
in classical model theory can be gauged from the fact that this
theorem, along with the compactness theorem, characterizes
FO~\cite{lindstrom}.  It indeed is pleasantly surprising that while
the downward L\"owenheim-Skolem theorem is by itself meaningless over
finite structures, a natural finitary analogue of the model-theoretic
property that this theorem talks about, is satisfied by a wide
spectrum of classes of finite structures, that are of interest and
importance in computer science and finite model theory.

This  answers the second part of \textsf{Q2} raised at the outset of
Section~\ref{section:our-results}.

We conclude this part of the thesis with several directions for future
work, two of which we highlight here. The first asks for an
investigation of a \emph{structural characterization} of
$\lebsp{\cdot}{k}$ motivated by the observation that any hereditary
class of graphs satisfying $\lebsp{\cdot}{k}$ has bounded induced path
lengths. The second of these is a conjecture. Though the failure of
$\glt{k}$ over the class of all finite structures shows that $PSC(k)$
sentences are more expressive than $\exists^k \forall^*$ sentences
over this class, it is still possible that all $PSC(k)$ sentences for
all $k \ge 0$, taken together, are just as expressive as $\Sigma^0_2$
sentences, over all finite structures. We conjecture that this is
indeed the case.

\vspace{5pt}
In summary, the properties and notions introduced in this thesis are
interesting in both the classical and finite model theory contexts,
and yield in both these contexts, a new, natural and parameterized
generalization of the {\lt} preservation theorem.

\section{Organization of the thesis}\label{section:thesis-organization}

The thesis is in two parts: the first concerning classical model
theory, and the second concerning finite model theory. We describe the
organization within each of these parts below. For (the statements of)
our results in the classical model theory part of the thesis (that are
contained in
Chapters~\ref{chapter:the-case-of-sentences},~\ref{chapter:the-case-of-theories}
and~\ref{chapter:open-questions-classical-model-theory}), we assume
\emph{arbitrary finite} vocabularies, unless explicitly stated
otherwise (though the proofs of these results may resort to infinite
vocabularies).  In the finite model theory part of the thesis, we
always consider \emph{finite relational} vocabularies, unless we
explicitly state otherwise.

\vspace{10pt}\tbf{\ul{Part I: Classical model theory}}

\vspace{3pt} Chapter~\ref{chapter:background-CMT}: We recall relevant
notions and notation from classical model theory literature. We also
recall the {\lt} theorem, and other results that we use in the
subsequent chapters.

Chapter~\ref{chapter:our-properties}: We define the properties of
$PSC(k)$ and $PCE(k)$, and formally show their duality.

Chapter~\ref{chapter:the-case-of-sentences}: We characterize our
properties and their variants for the case of sentences.  We present
$\glt{k}$, and provide two proofs of it using $\lambda$-saturated
structures and ascending chains of structures
(Section~\ref{section:glt-for-sentences}). We define the properties of
$PSC(\lambda)$ and $PCE(\lambda)$ for infinite cardinals $\lambda$,
provide characterizations of these properties, and present
applications of these characterizations in getting new proofs of known
inexpressibility results in FO
(Section~\ref{section:lambda-cruxes-and-lambda-ary-covers}).  We make
further observations about our results so far, and prove an
uncomputability result in connection with $PSC(k)$
(Section~\ref{section:uncomputability}).

Chapter~\ref{chapter:the-case-of-theories}: We characterize all the
properties introduced thus far, for the case of theories. 

Chapter~\ref{chapter:open-questions-classical-model-theory}: We
present directions for future work (in the classical model theory
context).
\vspace{10pt}

\tbf{\ul{Part II: Finite model theory}}

\vspace{3pt}Chapter~\ref{chapter:background-FMT}: We recall basic
notions, notation, and results from the finite model theory
literature. 

\vspace{3pt}Chapter~\ref{chapter:need-for-new-classes-for-GLT}: We
discuss the need to investigate new classes of finite structures for
$\glt{k}$.

Chapter~\ref{chapter:lebsp}: We define the property
$\lebsp{\cdot}{k}$, show that it entails $\glt{k}$
(Section~\ref{section:lebsp-and-glt}) and show precisely, its
connections with the {\dls} property
(Section~\ref{section:lebsp-and-dlsp}).

Chapter~\ref{chapter:classes-satisfying-lebsp}: We show that various
classes of structures satisfy $\lebsp{\cdot}{k}$. We first prove an
abstract result concerning tree representations
(Section~\ref{section:abstract-tree-theorem}), and then demonstrate
its applications in showing the $\lebsp{\cdot}{k}$ property for
classes of words, trees (unordered, ordered, ranked) and nested words
(Section~\ref{section:words-trees-nestedwords}), and the class of
$n$-partite cographs and its various important subclasses
(Section~\ref{section:n-partite-cographs}). We then give ways of
constructing new classes satisfying $\lebsp{\cdot}{\cdot}$, by
presenting various closure properties of the latter
(Section~\ref{section:closure-prop-of-lebsp}).

Chapter~\ref{chapter:additional-studies-on-lebsp}: We present
additional studies on $\lebsp{\cdot}{k}$. We show the decidability of
the $\mc{L}$-satisfiability problem over classes satisfying
$\lebsp{\cdot}{k}$ with computable bounds
(Section~\ref{section:lebsp-and-SAT}), and the connections of
$\lebsp{\cdot}{k}$ with well-quasi-orders
(Section~\ref{section:wqo-and-lebsp}). We also show that
$\lebsp{\cdot}{k}$ entails a parameterized generalization of the
homomorphism preservation theorem
(Section~\ref{section:lebsp-entails-the-hpt}).

Chapter~\ref{chapter:open-questions-finite-model-theory}: We present
directions for future work (in the finite model theory context).

Chapter~\ref{chapter:conclusion}: We summarize the contributions of
this thesis on both fronts, of classical model theory and finite model
theory.

\part{Classical Model Theory}
\chapter{Background and preliminaries}\label{chapter:background-CMT}

In this part of the thesis, we shall be concerned with arbitrary
structures (i.e. structures that are finite or infinite), and with the
logic $\fo$.  The forthcoming sections of this chapter introduce the
notation and terminology that we use throughout this part of the
thesis. We also recall relevant results from the literature that we
use in our proofs in the subsequent chapters.  The classic references
for all of the background that we set up in this chapter
are~\cite{chang-keisler, hodges, libkin}.

We denote ordinals and cardinals using the letters $\lambda, \mu,
\kappa$ or $\eta$\sindex[symb]{$\lambda, \mu, \kappa, \eta$}.  We let
$\mathbb{N}$ denote the set of natural numbers (zero included), and
typically denote the elements of $\mathbb{N}$ by the letters $i, j$
etc. The cardinality of a set $A$ is denoted as $|A|$; likewise the
length of a tuple $\bar{a}$ is denoted as $|\bar{a}|$.  We denote
$|\mathbb{N}|$ by either $\omega$ or $\aleph_0$.  We abbreviate in the
standard way, some English phrases that commonly appear in
mathematical literature. Specifically, `w.l.o.g' stands for `without
loss of generality', `iff' stands for `if and only if', `w.r.t.'
stands for `with respect to' and `resp.'  stands for `respectively'.

\section{Syntax and semantics of $\fo$}\label{section:background:FO-syntax-semantics}\sindex[term]{first order logic}

\tbf{Syntax}: A \emph{vocabulary}\sindex[term]{vocabulary}, denoted by
$\tau$\sindex[symb]{$\tau, \sigma$} or $\sigma$, is a (possibly
infinite) set of predicate, function and constant symbols.  We denote
variables by $x, y, z$, etc., possibly with numbers as subscripts. We
denote a sequence of variables by $\bar{x}, \bar{y}, \bar{z}$, etc.,
again possibly with numbers as subscripts. We define below the notions
of term, atomic formula and $\fo$ formula over $\tau$.
\begin{enumerate}[leftmargin=*,nosep]
\item A \emph{term}\sindex[term]{term} over $\tau$, or simply term if
  $\tau$ is clear from context, denoted using the letter `$t$'
  typically along with numbers as subscripts, is defined inductively
  as follows:
\begin{enumerate}[leftmargin=*,nosep]
\item A constant (of $\tau$) and a variable are terms each.
\item If $t_1, \ldots, t_n$ are terms over $\tau$, then $f(t_1,
  \ldots, t_n)$ is also a term over $\tau$ where $f$ is an $n$-ary
  function symbol of $\tau$.
\end{enumerate}
\item An \emph{atomic} formula\sindex[term]{atomic formula} over
  $\tau$ is one of the following:
\begin{enumerate}[leftmargin=*,nosep]
\item The formula $t_1 = t_2$ where $t_1$ and $t_2$ are terms over
  $\tau$ and `=' is a special predicate symbol not a part of $\tau$
  which is interpreted always as the equality relation.
\item The formula $R(t_1, \ldots, t_n)$ where $R$ is an $n$-ary
  relation symbol of $\tau$, and $t_1, \ldots, t_n$ are terms over
  $\tau$.
\end{enumerate}
\item An \emph{$\fo$ formula}\sindex[term]{formula} over $\tau$, also
  called an $\fo(\tau)$ formula, or simply formula if $\tau$ is clear
  from context, is defined inductively as follows:
\begin{enumerate}[leftmargin=*,nosep]
\item An atomic formula over $\tau$ is an $\fo(\tau)$ formula. 
\item If $\varphi_1$ and $\varphi_2$ are $\fo(\tau)$ formulae, then
  each of $\varphi_1 \wedge \varphi_2$, $\varphi_1 \vee \varphi_2$ and
  $\neg \varphi_1$ are also $\fo(\tau)$ formulae. Here, the symbols
  $\wedge, \vee$ and $\neg$ denote the usual boolean connectives 
  `and', `or' and `not' respectively.
\item If $\varphi_1$ is an $\fo(\tau)$ formula, then $\exists x
  \varphi_1$ and $\forall x \varphi_1$ are also $\fo(\tau)$ formulae.
  Here the symbols $\exists$ and $\forall$ denote respectively, the
  existential and universal quantifiers.
\end{enumerate}
\end{enumerate}

In addition to the letter $\varphi$, we use other Greek letters like
$\phi, \psi, \chi, \xi, \gamma, \alpha$ and $\beta$ to denote
formulae\sindex[symb]{$\alpha, \beta, \gamma, \xi, \phi, \varphi,
  \chi, \psi$}.  A formula without any quantifiers is called
\emph{quantifier-free}\sindex[term]{quantifier-free}.  We abbreviate
a block of quantifiers of the form $Q x_1 \ldots Q x_k$ by $Q^k
\bar{x}$ or $Q \bar{x}$\sindex[symb]{$\exists^k \bar{x}, \exists
  \bar{x}, \forall^k \bar{x}, \forall \bar{x}, \exists^*, \forall^*$}
(depending on what is better suited for the context), where $Q \in
\{\forall, \exists\}$ and $k \in \mathbb{N}$.  By $Q^*$, we mean a
block of $k$ $Q$ quantifiers, for some $k \in \mathbb{N}$.

We now define the notion of \emph{free variables} 
\sindex[term]{free variable} of a term or a formula. The term $x$,
where $x$ is a variable, has only one free variable, which is $x$
itself. A term that is a constant has no free variables. The set of
free variables of the term $f(t_1, \ldots, t_n)$ is the union of the
sets of free variables of $t_1, \ldots, t_n$.  The latter is also the
set of free variables of the atomic formula $R(t_1, \ldots, t_n)$. Any
free variable of the atomic formula $t_1 = t_2$ is a free variable of
either $t_1$ or $t_2$. The set of free variables of the formula
$\varphi_1 \wedge \varphi_2$, and of the formula $\varphi_1 \vee
\varphi_2$, is the union of the sets of free variables of $\varphi_1$
and $\varphi_2$. Negation preserves the free variables of a
formula. Finally, the free variables of $\exists x \varphi$ and
$\forall x \varphi$ are the free variables of $\varphi$ except for
$x$. We let $t(\bar{x})$, resp.\ $\varphi(\bar{x})$, denote a term
$t$, resp.\ formula $\varphi$, whose free variables are \emph{among}
$\bar{x}$.  A formula with no free variables is called a
\emph{sentence}.

\vspace{3pt} \tbf{Semantics}: Let $\tau = \tau_C \sqcup \tau_R \sqcup
\tau_F$ where $\tau_C, \tau_R$ and $\tau_F$ are respectively the set
of constant, relation and function symbols of $\tau$.  A
\emph{$\tau$-structure}\sindex[term]{$\tau$-structure} $\mf{A} =
(\mathsf{U}_{\mf{A}}, (c^{\mf{A}})_{c \in \tau_C}, (R^{\mf{A}})_{R \in
  \tau_R}, (f^{\mf{A}})_{f \in \tau_F})$ consists of a set
$\mathsf{U}_{\mf{A}}$ x\sindex[symb]{$\univ{\mf{A}}, c^{\mf{A}},
  R^{\mf{A}}, f^{\mf{A}}, t^{\mf{A}}$} called the
\emph{universe}\sindex[term]{universe} or the \emph{domain} of
$\mf{A}$, along with interpretations $c^{\mf{A}}, R^{\mf{A}}$ and
$f^{\mf{A}}$ of each of the symbols $c, R$ and $f$ of $\tau_C, \tau_R$
and $\tau_F$ respectively, such that
\begin{itemize}[nosep]
\item the constant symbol $c$ is interpreted as an element $c^{\mf{A}}
  \in \mathsf{U}_{\mf{A}}$
\item the $n$-ary relation symbol $R$ is interpreted as a set
  $R^{\mf{A}}$ of $n$-tuples of $\mf{A}$, i.e. $R^{\mf{A}} \subseteq
  (\mathsf{U}_{\mf{A}})^n$
\item the $n$-ary function symbol $f$ is interpreted as a function
  $f^{\mf{A}}: (\mathsf{U}_{\mf{A}})^n \rightarrow
  \mathsf{U}_{\mf{A}}$
\end{itemize}

When $\tau$ is clear from context, we refer to a $\tau$-structure as
simply a structure. We denote structures by $\mf{A}, \mf{B} $
\sindex[symb]{$\mf{A}, \mf{B}$}etc.

Towards the semantics of $\fo$, we first define, for a given structure
$\mf{A}$ and a term $t(x_1, \ldots, x_n)$, the \emph{value} in
$\mf{A}$, of $t(x_1, \ldots, x_n)$ for a given assignment $a_1,
\ldots, a_n$ of elements of $\mf{A}$, to the variables $x_1, \ldots,
x_n$. We denote this value as $t^{\mf{A}}(\bar{a})$ where $\bar{a} =
(a_1, \ldots, a_n)$.
\begin{itemize}[nosep]
\item If $t$ is a constant symbol, then $t^{\mf{A}}(\bar{a}) =
  c^{\mf{A}}$.
\item If $t$ is the variable $x_i$, then $t^{\mf{A}}(\bar{a}) = a_i$.
\item If $t = f(t_1, \ldots, t_n)$, then $t^{\mf{A}}(\bar{a}) =
  f^{\mf{A}}(t_1^{\mf{A}}(\bar{a}), \ldots, t_n^{\mf{A}}(\bar{a}))$.
\end{itemize}

We now define, for a given structure $\mf{A}$ and formula
$\varphi(\bar{x})$, the notion of the \emph{truth} of
$\varphi(\bar{x})$ in $\mf{A}$\sindex[term]{truth in a structure}
given an assignment $\bar{a}$ of elements of $\mf{A}$, to the
variables $\bar{x}$.  We denote by $(\mf{A}, \bar{a}) \models
\varphi(\bar{x})$\sindex[symb]{$\models$}, that $\varphi(\bar{x})$ is
true in $\mf{A}$ for the assignment $\bar{a}$ to $\bar{x}$. We then
call $(\mf{A}, \bar{a})$ a \emph{model} of
$\varphi(\bar{x})$\sindex[term]{model}.
\begin{itemize}[nosep]
\item If $\varphi(\bar{x})$ is the formula $t_1 = t_2$, then $(\mf{A},
  \bar{a}) \models \varphi(\bar{x})$ iff $t_1^{\mf{A}}(\bar{a}) =
  t_2^{\mf{A}} (\bar{a})$.
\item If $\varphi(\bar{x})$ is the formula $R(t_1, \ldots, t_n)$, then
  $(\mf{A}, \bar{a}) \models \varphi(\bar{x})$ iff
  $(t_1^{\mf{A}}(\bar{a}), \ldots, t_n^{\mf{A}}(\bar{a})) \in
  R^{\mf{A}}$.
\item If $\varphi(\bar{x})$ is the formula $\varphi_1(\bar{x}) \wedge
  \varphi_2(\bar{x})$, then $(\mf{A}, \bar{a}) \models
  \varphi(\bar{x})$ iff $(\mf{A}, \bar{a}) \models \varphi_1(\bar{x})$
  and \linebreak $(\mf{A}, \bar{a}) \models \varphi_2(\bar{x})$.
\item If $\varphi(\bar{x})$ is the formula $\varphi_1(\bar{x}) \vee
  \varphi_2(\bar{x})$, then $(\mf{A}, \bar{a}) \models
  \varphi(\bar{x})$ iff $(\mf{A}, \bar{a}) \models \varphi_1(\bar{x})$
  or \linebreak $(\mf{A}, \bar{a}) \models \varphi_2(\bar{x})$.
\item If $\varphi(\bar{x})$ is the formula $\neg \varphi_1(\bar{x})$,
  then $(\mf{A}, \bar{a}) \models \varphi(\bar{x})$ iff it is not the
  case that\linebreak $(\mf{A}, \bar{a}) \models \varphi_1(\bar{x})$.
\item If $\varphi(\bar{x})$ is the formula $\exists y
  \varphi_1(\bar{x}, y)$, then $(\mf{A}, \bar{a}) \models
  \varphi(\bar{x})$ iff there exists an element $b \in
  \mathsf{U}_{\mf{A}}$ such that $(\mf{A}, \bar{a}, b) \models
  \varphi_1(\bar{x}, y)$.
\item If $\varphi(\bar{x})$ is the formula $\forall y
  \varphi_1(\bar{x}, y)$, then $(\mf{A}, \bar{a}) \models
  \varphi(\bar{x})$ iff for all elements $b \in \mathsf{U}_{\mf{A}}$,
  $(\mf{A}, \bar{a}, b) \models \varphi_1(\bar{x}, y)$.
\end{itemize}

If $\varphi$ is a sentence, we denote the truth of $\varphi$ in
$\mf{A}$ simply as $\mf{A} \models \varphi$, and call $\mf{A}$ a model
of $\varphi$.

Given an $\fo(\tau)$ formula $\varphi(x_1, \ldots, x_n)$ and distinct
constants $c_1, \ldots, c_n$ not appearing in $\tau$, let $\varphi'$
be the $\fo$ sentence over the vocabulary $\tau \cup \{c_1, \ldots,
c_n\}$, obtained by substituting $c_i$ for the free occurrences of
$x_i$ in $\varphi(x_1, \ldots, x_n)$ for each $i \in \{1, \ldots,
n\}$. The following lemma connects the notions of truth of
$\varphi(x_1, \ldots, x_n)$ in a model and the truth of $\varphi'$ in
a model.

\begin{lemma}\label{lemma:connecting-truth-of-formulae-and-sentences}
$(\mf{A}, a_1, \ldots, a_n) \models \varphi(x_1, \ldots, x_n)$ iff
  $(\mf{A}, a_1, \ldots, a_n) \models \varphi'$.
\end{lemma}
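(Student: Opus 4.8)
The plan is to prove the equivalence by structural induction on $\varphi$, preceded by an induction on terms. Throughout, I read the right-hand side notation $(\mf{A}, a_1, \ldots, a_n) \models \varphi'$ as asserting truth of $\varphi'$ in the expansion of $\mf{A}$ to the vocabulary $\tau \cup \{c_1, \ldots, c_n\}$ in which each new constant $c_i$ is interpreted as $a_i$. The single fact that is used at every step is that this expansion leaves the universe $\univ{\mf{A}}$ and all interpretations of the symbols of $\tau$ completely unchanged.

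First I would establish the term-level analogue: for every term $t(x_1, \ldots, x_n)$ over $\tau$, letting $t'$ be the term over $\tau \cup \{c_1, \ldots, c_n\}$ obtained by replacing each $x_i$ by $c_i$, the value $t^{\mf{A}}(\bar{a})$ equals the value of $t'$ in the expansion $(\mf{A}, a_1, \ldots, a_n)$. This is a routine induction on the formation of terms following the value clauses given earlier: a constant of $\tau$ is interpreted identically in both structures; the variable $x_i$ has value $a_i$ on the left, while on the right $c_i$ is interpreted as $a_i$; and the case $t = f(t_1, \ldots, t_m)$ is immediate from the induction hypothesis together with the invariance of $f^{\mf{A}}$ under the expansion.

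With the term claim in hand, the base cases of the main induction are the atomic formulae $t_1 = t_2$ and $R(t_1, \ldots, t_m)$; their truth conditions reduce to equalities of term-values, resp.\ to membership of a tuple of term-values in $R^{\mf{A}}$, and both are preserved because term-values agree and $R^{\mf{A}}$ is unchanged by the expansion. The Boolean cases $\varphi_1 \wedge \varphi_2$, $\varphi_1 \vee \varphi_2$ and $\neg \varphi_1$ are immediate applications of the induction hypothesis, since the substitution commutes with the connectives and the semantic clauses are defined compositionally.

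The one delicate step, and the part I would be most careful about, is the quantifier case. Here the substitution replaces only the free occurrences of $x_1, \ldots, x_n$ and leaves the quantified variable $y$ untouched, so a direct appeal to the induction hypothesis as stated (which substitutes constants for \emph{all} free variables) does not match. To repair this I would prove the statement in the slightly more general form that allows some free variables to survive as genuine variables: for a formula $\psi$ whose free variables are among $\bar{x}, \bar{y}$, substituting constants only for $\bar{x}$ yields a formula equivalent, under every assignment, to $\psi$ evaluated with the values of $\bar{x}$ read off the expansion and the values of $\bar{y}$ supplied by the variable assignment. In this form the case $\exists y\, \varphi_1$ (resp.\ $\forall y\, \varphi_1$) goes through cleanly: since the expansion does not alter $\univ{\mf{A}}$, the witnessing (resp.\ universally quantified) element $b$ ranges over exactly the same universe in both structures, and applying the induction hypothesis to $\varphi_1$ with $y$ now among the surviving variables gives the equivalence for each choice of $b$. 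The lemma as stated is the special case in which no free variable survives, i.e.\ $\bar{y}$ is empty.
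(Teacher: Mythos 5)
Your proof is correct: the term-level induction, the atomic and Boolean cases, and in particular the strengthening of the induction hypothesis to formulas in which some free variables survive the substitution (needed because the quantified variable $y$ is not replaced by a constant) is exactly the standard argument, and the observation that the expansion leaves the universe and the $\tau$-interpretations untouched is the right load-bearing fact. The paper states this lemma without any proof, treating it as a routine fact, so there is nothing to compare against; your write-up simply supplies the standard substitution lemma argument in full, and it is sound.
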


Note the distinction between the two occurrences of ``$(\mf{A}, a_1,
\ldots, a_n)$'' in the lemma above. The occurrence on the left denotes
that $a_1, \ldots, a_n$ is an assignment to $x_1, \ldots, x_n$ in the
$\tau$-structure $\mf{A}$, whereas the occurrence on the right denotes
a $\tau_n$-structure.

\vspace{5pt} \tbf{Extending syntax and semantics to theories}: In
classical model theory, one frequently talks about $\fo$ theories. We
define the syntax and semantics of these now.  A \emph{theory},
resp.\ \emph{$\fo(\tau)$ theory},\sindex[term]{theory} is simply a set
of sentences, resp.\ a set of $\fo(\tau)$ sentences. An $\fo(\tau)$
theory is also referred to as a \emph{theory over $\tau$}.  We
typically denote theories using capital letters like $T, V, Y, Z$,
possibly with numbers as subscripts. A theory, resp.\ $\fo(\tau)$
theory, \emph{whose free variables are among $\bar{x}$}, is a set of
formulae, resp.\ $\fo(\tau)$ formulae, all of whose free variables are
among $\bar{x}$. We denote such theories as $T(\bar{x}), V(\bar{x})$
etc. Given a theory $T(\bar{x})$, a structure $\mf{A}$, and a tuple
$\bar{a}$ from $\mf{A}$ such that $|\bar{a}| = |\bar{x}|$, we denote
by $(\mf{A}, \bar{a}) \models T(\bar{x})$ that $(\mf{A}, \bar{a})
\models \varphi(\bar{x})$\sindex[symb]{$\models$} for each formula
$\varphi(\bar{x}) \in T(\bar{x})$. In such a case, we say $T(\bar{x})$
is \emph{true} in $\mf{A}$\sindex[term]{truth in a structure} for the
assignment $\bar{a}$ to $\bar{x}$, and that $(\mf{A}, \bar{a})$ is a
\emph{model} of $T(\bar{x})$\sindex[term]{model}.  If $T$ has no free
variables, then we denote the truth of $T$ in $\mf{A}$ as $\mf{A}
\models T$, and say that $\mf{A}$ is a model of $T$. Observe that if
$T$ is empty, then trivially, every structure $\mf{A}$ is a model of
$T$.

Given an $\fo(\tau)$ theory $T(x_1, \ldots, x_n)$ and distinct
constants $c_1, \ldots, c_n$ not appearing in $\tau$, let $T'$ be the
$\fo$ theory without free variables over the vocabulary $\tau \cup
\{c_1, \ldots, c_n\}$, obtained by substituting $c_i$ for the free
occurrences of $x_i$ in $T(x_1, \ldots, x_n)$ for each $i \in \{1,
\ldots, n\}$. Analogous to
Lemma~\ref{lemma:connecting-truth-of-formulae-and-sentences}, we have
the following lemma.

\begin{lemma}
$(\mf{A}, a_1, \ldots, a_n) \models T(x_1, \ldots, x_n)$ iff $(\mf{A},
  a_1, \ldots, a_n) \models T'$.
\end{lemma}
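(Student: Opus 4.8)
The final statement is the lemma:
$(\mf{A}, a_1, \ldots, a_n) \models T(x_1, \ldots, x_n)$ iff $(\mf{A}, a_1, \ldots, a_n) \models T'$,

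where $T'$ is obtained from the theory $T(x_1,\ldots,x_n)$ by replacing free variables $x_i$ with fresh constants $c_i$ (interpreted as $a_i$). This is the theory-level analogue of Lemma stated just before it (connecting truth of formulae and sentences).

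The proof is essentially immediate from the definitions and the already-proven formula-level lemma.

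**Proof proposal.**

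The plan is to reduce the statement for theories directly to the already-established Lemma~\ref{lemma:connecting-truth-of-formulae-and-sentences} for individual formulae, using nothing more than the definition of truth of a theory in a model. First I would unfold the definition: by the semantics of theories extended above, $(\mf{A}, a_1, \ldots, a_n) \models T(x_1, \ldots, x_n)$ means precisely that $(\mf{A}, a_1, \ldots, a_n) \models \varphi(x_1, \ldots, x_n)$ for every formula $\varphi(x_1, \ldots, x_n) \in T(x_1, \ldots, x_n)$. Analogously, $(\mf{A}, a_1, \ldots, a_n) \models T'$ means that $(\mf{A}, a_1, \ldots, a_n) \models \varphi'$ for every sentence $\varphi' \in T'$, where each such $\varphi'$ is obtained from the corresponding $\varphi(x_1, \ldots, x_n) \in T(x_1, \ldots, x_n)$ by substituting $c_i$ for the free occurrences of $x_i$.

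The key step is then to observe that the substitution operation defining $T'$ is applied formula-by-formula, so there is a natural bijection between the members $\varphi(x_1, \ldots, x_n)$ of $T(x_1, \ldots, x_n)$ and the members $\varphi'$ of $T'$. For each such pair, Lemma~\ref{lemma:connecting-truth-of-formulae-and-sentences} gives that $(\mf{A}, a_1, \ldots, a_n) \models \varphi(x_1, \ldots, x_n)$ iff $(\mf{A}, a_1, \ldots, a_n) \models \varphi'$. Quantifying this equivalence over all members of the theory yields the result: $(\mf{A}, a_1, \ldots, a_n) \models \varphi(x_1, \ldots, x_n)$ holds for all $\varphi(x_1, \ldots, x_n) \in T(x_1, \ldots, x_n)$ if and only if $(\mf{A}, a_1, \ldots, a_n) \models \varphi'$ holds for all $\varphi' \in T'$, which is exactly the claim.

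There is essentially no obstacle here; the only point requiring any care is making the formula-to-sentence correspondence between $T(x_1,\ldots,x_n)$ and $T'$ explicit, so that the universal quantification over the theory distributes correctly over the biconditional. I would note that this distribution is valid precisely because the biconditional from the earlier lemma holds uniformly (member by member) for every formula in the theory, under the one fixed assignment $\bar{a}$ and the one fixed interpretation $c_i^{\mf{A}} = a_i$. Since the empty-theory case is vacuously true and the general case follows by this member-wise application, the proof is complete.
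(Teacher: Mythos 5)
Your proof is correct and matches the paper's intent: the paper states this lemma without proof, presenting it as the immediate theory-level analogue of Lemma~\ref{lemma:connecting-truth-of-formulae-and-sentences}, which is exactly the member-wise reduction you carry out. Nothing is missing.
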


\tbf{Consistency, validity, entailment and equivalence}: Let
$T(\bar{x})$ be a given theory and $\varphi(\bar{x})$ be a given
formula.  We say $T(\bar{x})$ is \emph{consistent} or
\emph{satisfiable}\sindex[term]{consistent}\sindex[term]{satisfiable}
\footnote{\noindent In the literature, consistency has a
  proof-theoretic definition. However G\"odel's completeness theorem
  shows that for $\fo$, consistency is the same as satisfiability, the
  latter meaning the existence of a model. Hence, we do not make a
  distinction between consistency and satisfiability in this thesis.}
if it has a model, i.e. if there exists a structure $\mf{A}$ and tuple
$\bar{a}$ of $\mf{A}$ such that $|\bar{a}| = |\bar{x}|$ and $(\mf{A},
\bar{a}) \models T(\bar{x})$. If $T(\bar{x})$ is not consistent, then
we say it is \emph{inconsistent} or \emph{unsatisfiable}.  We say
$T(\bar{x})$ is \emph{valid}\sindex[term]{valid} if $(\mf{A}, \bar{a})
\models T(\bar{x})$ for every structure $\mf{A}$ and every tuple
$\bar{a}$ of $\mf{A}$ such that $|\bar{a}| = |\bar{x}|$. The notions
above have natural adaptations to formulae.  We say $\varphi(\bar{x})$
is satisfiable, unsatisfiable, or valid\sindex[term]{satisfiable} if
$\{\varphi(\bar{x})\}$ is satisfiable, unsatisfiable, or valid,
respectively. It is easy to see that $\varphi(\bar{x})$ is valid iff
$\neg \varphi(\bar{x})$ is unsatisfiable, and that $\varphi(\bar{x})$
and $\neg \varphi(\bar{x})$ can both be satisfiable.

We say $T(\bar{x})$ \emph{entails} $\varphi(\bar{x})$, denoted
$T(\bar{x}) \vdash
\varphi(\bar{x})$\sindex[symb]{$\vdash$}\sindex[term]{entails}, if
every model $(\mf{A}, \bar{a})$ of $T(\bar{x})$ is also a model of
$\varphi(\bar{x})$. For a formula $\psi(\bar{x})$, we denote by
$\psi(\bar{x}) \rightarrow \varphi(\bar{x})$, that the theory
$\{\psi(\bar{x})\}$ entails $\varphi(\bar{x})$. It is easy to verify
that $\psi(\bar{x}) \rightarrow \varphi(\bar{x})$ iff $\neg
\psi(\bar{x}) \vee \varphi(\bar{x})$ is valid. Given a theory
$Y(\bar{x})$, we say $T(\bar{x})$ is \emph{equivalent to} $Y(\bar{x})$
\sindex[term]{equivalent} if $T(\bar{x})$ entails every formula of
$Y(\bar{x})$, and vice-versa. We denote by $\psi(\bar{x})
\leftrightarrow \varphi(\bar{x})$\sindex[symb]{$\leftrightarrow$} that
$\{\psi(\bar{x})\}$ is equivalent to $\{\varphi(\bar{x})\}$.

We now adapt all the notions above to versions of these \emph{modulo
  theories}. Given a consistent theory $V$, we say $T(\bar{x})$ is
\emph{consistent or satisfiable modulo $V$} if $(V \cup T(\bar{x}))$
is consistent, and say $T(\bar{x})$ is \emph{inconsistent or
  unsatisfiable modulo $V$} if $(V \cup T(\bar{x}))$ is
inconsistent. We say $T(\bar{x})$ \emph{entails $\varphi(\bar{x})$
  modulo $V$}\sindex[term]{modulo a theory!entails} if $(V \cup
T(\bar{x})) \vdash \varphi(\bar{x})$, and say $T(\bar{x})$ and
$Y(\bar{x})$ are \emph{equivalent modulo $V$} 
\sindex[term]{modulo a theory!equivalent} if $(V \cup T(\bar{x}))$ is
equivalent to $(V \cup Y(\bar{x}))$.  This last notion is particularly
relevant for this part of the thesis, and stated in other words, it
says that for every model $\mf{A}$ of $V$, and for every tuple
$\bar{a}$ from $\mf{A}$ such that $|\bar{a}| = |\bar{x}|$, it is the
case that $(\mf{A}, \bar{a}) \models T(\bar{x})$ iff $(\mf{A},
\bar{a}) \models Y(\bar{x})$.  One can define all the notions just
mentioned, for formulae, analogously as in the previous paragraphs.

\section{$\Sigma^0_n$ and $\Pi^0_n$ formulae}\label{section:background:sigma-n-and-pi-n}

An $\fo$ formula in which all quantifiers appear first (from left to
right) followed by a quantifier-free formula, is said to be in
\emph{prenex normal form}. For such a formula, the sequence of
quantifiers is called the \emph{quantifier
  prefix}\sindex[term]{quantifier prefix}, and the quantifier-free
part is called the \emph{matrix} of the formula.  For every non-zero
$n \in \mathbb{N}$, we denote by $\Sigma^0_n$,
resp.\ $\Pi^0_n$\sindex[symb]{$\Sigma^0_n, \Pi^0_n$}, the class of all
$\fo$ formulae in prenex normal form, whose quantifier prefix begins
with $\exists$, resp.\ $\forall$, and consists of $n-1$ alternations
of quantifiers. We call $\Sigma^0_1$ formulae \emph{existential} and
$\Pi^0_1$ formulae \emph{universal}.  We call $\Sigma^0_2$ formulae
having $k$ existential quantifiers \emph{$\exists^k \forall^*$
  formulae}, and $\Pi^0_2$ formulae having $k$ universal quantifiers
\emph{$\forall^k \exists^*$ formulae}\sindex[symb]{$\exists^k
  \forall^*, \forall^k \exists^*$}. The aforementioned notions for
formulae, have natural liftings to theories: A $\Sigma^0_n$ theory,
resp.\ $\Pi^0_n$ theory, is a theory all of whose formulae are
$\Sigma^0_n$, resp.\ $\Pi^0_n$; an existential theory,
resp.\ universal theory, is a $\Sigma^0_1$ theory, resp.\ $\Pi^0_1$
theory; an $\exists^k \forall^*$ theory, resp.\ $\forall^k \exists^*$
theory, is a theory all of whose formulae are $\exists^k \forall^*$,
resp.\ $\forall^k \exists^*$.  We now have the following lemma.

\begin{lemma}
Every $\fo$ formula is equivalent to an $\fo$ formula in prenex normal
form. By extension, every $\fo$ theory is equivalent to a theory of
$\fo$ formulae, all of which are in prenex normal form.
\end{lemma}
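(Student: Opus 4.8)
The plan is to prove the claim for a single formula by structural induction on the way the formula is built up from atomic formulae (following the syntax definition of Section~\ref{section:background:FO-syntax-semantics}), and then to obtain the statement for theories as an immediate corollary. The base case is trivial: an atomic formula contains no quantifiers, so it is already in prenex normal form with an empty quantifier prefix. For the inductive step I would assume that the immediate subformulae have already been converted to equivalent prenex formulae, and then push all of their quantifiers to the front of the composite formula. To do this cleanly, I first assemble a small toolkit of logical equivalences.

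The equivalences I need are the standard quantifier-shifting rules, each of which I would verify directly from the Tarskian semantics given in Section~\ref{section:background:FO-syntax-semantics}: for negation, $\neg \exists x\, \varphi \equiv \forall x\, \neg \varphi$ and $\neg \forall x\, \varphi \equiv \exists x\, \neg \varphi$; and for the binary connectives, \emph{provided $x$ does not occur free in $\psi$}, the rules $(\exists x\, \varphi) \wedge \psi \equiv \exists x\,(\varphi \wedge \psi)$ and $(\forall x\, \varphi)\wedge\psi \equiv \forall x\,(\varphi\wedge\psi)$, together with the analogous rules for $\vee$ and for the symmetric placement of $\psi$. I would also need the bound-variable renaming equivalence $Q x\, \varphi \equiv Q y\, \varphi'$, for $Q \in \{\exists,\forall\}$, where $y$ is a variable not occurring in $\varphi$ and $\varphi'$ is obtained by replacing the free occurrences of $x$ by $y$; this rests on a routine substitution lemma relating the truth of a formula under an assignment before and after a change of variable.

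With the toolkit in hand, the induction proceeds as follows. For $\varphi = \neg \varphi_1$, by hypothesis $\varphi_1 \equiv Q_1 x_1 \cdots Q_n x_n\, \xi$ with $\xi$ quantifier-free; applying the negation rules repeatedly dualises each quantifier and drives the negation onto $\xi$, yielding a prenex formula equivalent to $\varphi$. For $\varphi = \varphi_1 \wedge \varphi_2$ (and symmetrically for $\vee$), I would first rename the bound variables of the two prenex equivalents of $\varphi_1$ and $\varphi_2$ so that the quantified variables are pairwise distinct and disjoint from the free variables of the whole formula; the side condition of the shifting rules is then met, so I can extract the quantifiers of $\varphi_1$ and then those of $\varphi_2$ one at a time to the front, leaving a quantifier-free matrix. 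For $\varphi = \exists x\, \varphi_1$ or $\forall x\, \varphi_1$, prepending the quantifier to the prenex equivalent of $\varphi_1$ already gives a prenex formula. The main obstacle is precisely the variable-capture issue in the conjunction/disjunction case: without the preliminary renaming, pulling a quantifier $Q x$ out across a subformula in which $x$ occurs free alters the meaning. Handling this correctly is what forces the bound-variable renaming equivalence, and its underlying substitution lemma, to be established before the induction; everything else is bookkeeping.

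Finally, the statement for theories follows at once. Given an $\fo(\tau)$ theory $T$, replace each $\varphi \in T$ by a prenex formula $\varphi'$ equivalent to it as just shown, and let $T' = \{\varphi' \mid \varphi \in T\}$. Since a structure (with an assignment to the free variables) satisfies $\varphi$ iff it satisfies $\varphi'$, the theories $T$ and $T'$ have exactly the same models, hence entail the same formulae and are equivalent in the sense of Section~\ref{section:background:FO-syntax-semantics}; and every formula of $T'$ is in prenex normal form by construction.
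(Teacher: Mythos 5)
Your proof is correct and is the standard textbook argument. The paper itself states this lemma without proof, recalling it as a classical fact from the literature (cf.~\cite{chang-keisler, hodges}), so there is no proof in the paper to compare against; your structural induction, with the quantifier-shifting equivalences and the bound-variable renaming step to avoid capture in the $\wedge$/$\vee$ cases, is exactly the canonical way to establish it, and your reduction of the theory case to the formula case is immediate and correct.
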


\section{Notions concerning structures}\label{section:background:structures}

The \emph{size (or power)} of a structure $\mf{A}$ is the cardinality
of $\univ{\mf{A}}$.  A structure is called \emph{finite} if its size
is finite, else it is called \emph{infinite}.  A \emph{substructure of
  $\mf{A}$ induced by} a subset $B$ of $\mathsf{U}_{\mf{A}}$ is a
structure $\mf{B}$ such that (i) $\mathsf{U}_{\mf{B}} = \{
t^{\mf{A}}(\bar{a}) \mid t(x_1, \ldots, x_n)~\text{is a term
  over}~\tau,$ $ \bar{a}~\text{is an}~n\text{-tuple from}~B\}$ (ii)
$c^{\mf{B}} = c^{\mf{A}}$ for each constant symbol $c \in \tau$, (iii)
$R^{\mf{B}} = R^{\mf{A}} \cap (\mathsf{U}_{\mf{B}})^n$ for each
$n$-ary relation symbol $R \in \tau$, and (iv) $f^{\mf{B}}$ is the
restriction of $f^{\mf{A}}$ to $\mathsf{U}_{\mf{B}}$, for each $n$-ary
function symbol $f \in \tau$.  A
\emph{substructure}\sindex[term]{substructure} $\mf{B}$ of $\mf{A}$,
denoted $\mf{B} \subseteq \mf{A}$,\sindex[symb]{$\subseteq$} is a
substructure of $\mf{A}$ induced by some subset of
$\mathsf{U}_{\mf{A}}$. If $\mf{B} \subseteq \mf{A}$, we say $\mf{A}$
is an \emph{extension} of $\mf{B}$.\sindex[term]{extension} It is easy
to see that if $\mf{B} \subseteq \mf{A}$, then for all quantifier-free
formulae $\varphi(\bar{x})$ and all $n$-tuples $\bar{a}$ from $\mf{B}$
where $n = |\bar{x}|$, we have that $(\mf{B}, \bar{a}) \models
\varphi(\bar{x})$ iff $(\mf{A}, \bar{a}) \models \varphi(\bar{x})$. If
$(\mf{B}, \bar{a})$ and $(\mf{A}, \bar{a})$ agree on \emph{all} $\fo$
formulae $\varphi(\bar{x})$ (instead of only quantifier-free formulae)
for all $n$-tuples $\bar{a}$ from $\mf{B}$ where $n = |\bar{x}|$, then
we say $\mf{B}$ is an \emph{elementary
  substructure}\sindex[term]{substructure!elementary} of $\mf{A}$, or
$\mf{A}$ is an \emph{elementary
  extension}\sindex[term]{extension!elementary} of $\mf{B}$, and
denote it as $\mf{B} \preceq \mf{A}$.\sindex[symb]{$\preceq$} A notion
related to the notion of elementary substructure is that of elementary
equivalence: We say two structures $\mf{A}$ and $\mf{C}$ are
\emph{elementarily equivalent},\sindex[term]{elementarily!equivalent}
denoted $\mf{A} \equiv \mf{C}$,\sindex[symb]{$\equiv$} if they agree
on all $\fo$ sentences.

Given $\tau$-structures $\mf{A}$ and $\mf{B}$, an \emph{isomorphism}
from $\mf{A}$ to $\mf{B}$\sindex[term]{isomorphism}, denoted $h:
\mf{A} \rightarrow \mf{B}$, is a bijection $h: \mathsf{U}_{\mf{A}}
\rightarrow \mathsf{U}_{\mf{B}}$ such that (i) $c^{\mf{B}} =
h(c^{\mf{A}})$ for every constant symbol $c \in \tau$ (ii) $(a_1,
\ldots, a_n) \in R^{\mf{A}}$ iff $(h(a_1), \ldots, h(a_n)) \in
R^{\mf{B}}$ for every $n$-ary relation symbol $R \in \tau$, and (iii)
$f^{\mf{A}}(a_1, \ldots, a_n) = a$ iff $f^{\mf{B}}(h(a_1), \ldots,
h(a_n)) = h(a)$ for every $n$-ary function symbol $f \in \tau$. If an
isomorphism from $\mf{A}$ to $\mf{B}$ exists, then so does an
isomorphism from $\mf{B}$ to $\mf{A}$ (namely, the inverse of the
former isomorphism), and we say $\mf{A}$ and $\mf{B}$ are
\emph{isomorphic}, and denote it as $\mf{A} \cong \mf{B}$.
\sindex[symb]{$\cong$} We say $\mf{A}$ is \emph{(isomorphically)
  embeddable} in $\mf{B}$, or simply embeddable in $\mf{B}$, denoted
as $\mf{A} \hookrightarrow \mf{B}$,\sindex[symb]{$\hookrightarrow$} if
there exists a substructure $\mf{C}$ of $\mf{B}$ such that there is an
isomorphism $h: \mf{A} \rightarrow \mf{C}$. In such a case, we say $h$
is an \emph{(isomorphic) embedding}, or simply an
embedding\sindex[term]{embedding}, of $\mf{A}$ in $\mf{B}$.  We say
$\mf{A}$ is \emph{elementarily
  embeddable}\sindex[term]{elementarily!embeddable} in $\mf{B}$ if
there exists an elementary substructure $\mf{C}$ of $\mf{B}$ such that
there is an isomorphism $h: \mf{A} \rightarrow \mf{C}$. In such a
case, we say $h$ is an \emph{elementary
  embedding}\sindex[term]{elementary!embedding} of $\mf{A}$ in
$\mf{B}$.

Given vocabularies $\tau, \tau'$, we say $\tau'$ is an
\emph{expansion}\sindex[term]{expansion!of a vocabulary} of $\tau$ if
$\tau \subseteq \tau'$. Given a $\tau$-structure $\mf{A}$ and a
$\tau'$-structure $\mf{A}'$, we say $\mf{A}'$ is a
\emph{$\tau$'-expansion} of $\mf{A}$, or simply an expansion of
$\mf{A}$\sindex[term]{expansion!of a structure} (if $\tau'$ is clear
from context), if the universe of $\mf{A}'$ and the interpretations in
$\mf{A}'$, of the constant, predicate and function symbols of $\tau$
are exactly the same as those in $\mf{A}$ respectively. In such a
case, we also say that $\mf{A}$ is a
\emph{$\tau$-reduct}\sindex[term]{reduct of a structure} of
$\mf{A}'$. In this thesis, we will mostly consider expansions $\tau'$
of $\tau$ in which all the symbols of $\tau' \setminus \tau$ are
constants. Given a cardinal $\lambda$, we denote by
$\tau_\lambda$\sindex[symb]{$\tau_\lambda$}, a fixed expansion of
$\tau$ such that $\tau_\lambda \setminus \tau$ consists only of the
constants $c_1, \ldots c_\lambda$ that are distinct and do not appear
in $\tau$, and say that $\tau_\lambda$ is an expansion of $\tau$
\emph{with} constants $c_1, \ldots c_\lambda$.  Given a
$\tau$-structure $\mf{A}$ and a $\lambda$-tuple (i.e. a tuple of
length $\lambda$) $\bar{a} = (a_1, \ldots, a_\lambda)$ of elements of
$\mf{A}$, we denote by $(\mf{A}, \bar{a})$\sindex[symb]{$(\mf{A},
  \bar{a})$} the $\tau_\lambda$-structure whose $\tau$-reduct is
$\mf{A}$, and in which $c_i$ is interpreted as $a_i$, for each $i <
\lambda$.  Given a $\tau$-structure $\mf{A}$ and a subset $X$ of
$\mathsf{U}_{\mf{A}}$, we denote by $\tau_X$, the expansion of $\tau$
with $|X|$ many fresh and distinct constants, one constant per element
of $X$. We denote by $(\mf{A}, (a)_{a \in X})$ the $\tau_X$-expansion
of $\mf{A}$ in which the constant in $\tau_X \setminus \tau$
corresponding to an element $a$ of $X$, is interpreted as $a$ itself.
Given a $\tau$-structure $\mf{B}$ such that $\mf{B} \subseteq \mf{A}$,
if $X = \univ{\mf{B}}$, then we denote $\tau_X$ as
$\tau_{\mf{B}}$\sindex[symb]{$\tau_{\mf{B}}$}, and the structure
$(\mf{A}, (a)_{a \in X})$ as
$\mf{A}_{\mf{B}}$\sindex[symb]{$\mf{A}_{\mf{B}}$}. The
\emph{diagram}\sindex[term]{diagram} of $\mf{A}$, denoted
$\diag{\mf{A}}$\sindex[symb]{$\diag{\mf{A}}$}, is the set of all
quantifier-free $\fo(\tau_{\mf{A}})$ sentences that are true in
$\mf{A}_{\mf{A}}$.  The \emph{elementary
  diagram}\sindex[term]{elementary!diagram} of $\mf{A}$, denoted
$\eldiag{\mf{A}}$\sindex[symb]{$\eldiag{\mf{A}}$}, is the set of all
$\fo(\tau_{\mf{A}})$ sentences that are true in $\mf{A}_{\mf{A}}$. The
following lemma connects the notions described in the previous and
current paragraphs.
\begin{lemma}\label{lemma:prop-of-structures}
Let $\mf{A}$ and $\mf{B}$ be given $\tau$-structures. Then the
following are true.
\begin{enumerate}[nosep]
\item $\mf{A} \cong \mf{B}$ implies $\mf{A} \equiv \mf{B}$.
\item $\mf{A} \preceq \mf{B}$ implies $\mf{A} \equiv \mf{B}$.
\item $\mf{A} \preceq \mf{B}$ iff $\mf{A}_{\mf{A}} \equiv
  \mf{B}_{\mf{A}}$.
\item $\mf{A}$ is embeddable in $\mf{B}$ iff for some
  $\tau_{\mf{A}}$-expansion $\mf{B}'$ of $\mf{B}$, it is the case that
  $\mf{B}' \models \diag{\mf{A}}$.
\item $\mf{A}$ is elementarily embeddable in $\mf{B}$ iff for some
  $\tau_{\mf{A}}$-expansion $\mf{B}'$ of $\mf{B}$, it is the case that
  $\mf{B}' \models \eldiag{\mf{A}}$.
\item If $\mf{A}$ is finite and $\mf{A} \equiv \mf{B}$, then $\mf{A}
  \cong \mf{B}$.\label{lemma:prop-of-structures:finite-A}
\end{enumerate}
\end{lemma}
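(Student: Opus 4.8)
The first five parts of the lemma are routine unwindings of the definitions set up above: part~(1) is the standard fact that isomorphisms preserve the truth of every $\fo$ formula (proved by induction on formula structure), part~(2) is immediate since $\mf{A} \preceq \mf{B}$ means $\mf{A}$ and $\mf{B}$ agree on all $\fo$ formulae with parameters from $\mf{A}$, in particular on all $\fo$ sentences, and part~(3) follows by reading the definition of $\preceq$ through Lemma~\ref{lemma:connecting-truth-of-formulae-and-sentences}, which identifies truth of $\fo(\tau_{\mf{A}})$ sentences in $\mf{A}_{\mf{A}}, \mf{B}_{\mf{A}}$ with truth of $\fo(\tau)$ formulae with parameters. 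Parts~(4) and~(5) are the classical diagram lemmas, obtained by noting that a $\tau_{\mf{A}}$-expansion of $\mf{B}$ satisfying $\diag{\mf{A}}$ (resp.\ $\eldiag{\mf{A}}$) is exactly the data of an embedding (resp.\ elementary embedding) of $\mf{A}$ into $\mf{B}$. The substantive claim is part~(6), for which the plan is as follows.

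The plan is to exploit the finiteness of $\mf{A}$ to write a \emph{single} $\fo(\tau)$ sentence that pins $\mf{A}$ down up to isomorphism, and then transport its truth to $\mf{B}$ using $\mf{A} \equiv \mf{B}$. Enumerate $\univ{\mf{A}} = \{a_1, \ldots, a_n\}$ with the $a_i$ pairwise distinct, where $n = |\univ{\mf{A}}|$. Consider the set $\Lambda$ of atomic $\fo(\tau)$ formulae over variables $x_1, \ldots, x_n$ of the three shapes $R(x_{i_1}, \ldots, x_{i_m})$, $f(x_{i_1}, \ldots, x_{i_m}) = x_j$, and $c = x_i$, where $R, f, c$ range over the relation, function and constant symbols of $\tau$ and all indices range over $\{1, \ldots, n\}$; since $\tau$ and $n$ are finite, $\Lambda$ is finite. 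Let $\delta(x_1, \ldots, x_n)$ be the conjunction that contains, for each $\alpha \in \Lambda$, the literal $\alpha$ if $(\mf{A}, a_1, \ldots, a_n) \models \alpha$ and the literal $\neg\alpha$ otherwise, and define the characteristic sentence
\[
\phi_{\mf{A}} \;=\; \exists x_1 \cdots \exists x_n \Big( \bigwedge_{1 \le i < j \le n} \neg(x_i = x_j) \;\wedge\; \delta(x_1, \ldots, x_n) \;\wedge\; \forall y \bigvee_{i=1}^{n} (y = x_i) \Big).
\]
By construction $\mf{A} \models \phi_{\mf{A}}$, witnessed by $x_i \mapsto a_i$. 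Since $\mf{A} \equiv \mf{B}$, we get $\mf{B} \models \phi_{\mf{A}}$, so there are pairwise distinct $b_1, \ldots, b_n \in \univ{\mf{B}}$ satisfying $\delta$ and exhausting $\univ{\mf{B}}$; in particular $|\univ{\mf{B}}| = n$, and $h \colon a_i \mapsto b_i$ is a bijection.

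It remains to verify that $h$ is an isomorphism, which is mechanical. Because $\delta$ records the truth value in $\mf{A}$ of every formula in $\Lambda$ and these same truth values hold in $\mf{B}$ under $x_i \mapsto b_i$, the map $h$ preserves each relation symbol (the $R(\cdots)$ literals of $\delta$ match up), sends each constant's interpretation in $\mf{A}$ to its interpretation in $\mf{B}$ (the $c = x_i$ literals), and commutes with each function symbol (the $f(\cdots) = x_j$ literals pin down every function value, using that every element of $\mf{A}$ is some $a_i$ and every element of $\mf{B}$ is some $b_i$ via the final $\forall y$ clause). Thus $h$ satisfies the three defining conditions of an isomorphism and $\mf{A} \cong \mf{B}$.

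The main obstacle is not the extraction of $h$ but ensuring that $\phi_{\mf{A}}$ really is a single finite $\fo$ sentence: this depends essentially on \emph{both} $\tau$ being finite and $\mf{A}$ being finite, since only then is the collection $\Lambda$ of relevant atomic formulae finite (note in particular that to handle function symbols one must specify finitely many function values, which requires finiteness of the domain). If either hypothesis were dropped, no such finite characteristic sentence would exist in general, and the conclusion itself would fail.
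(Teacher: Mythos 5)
The paper states Lemma~\ref{lemma:prop-of-structures} in its background chapter without proof, as a collection of standard facts recalled from the classical model theory literature (cf.\ Chang--Keisler, Hodges), so there is no in-paper argument to compare against. Your proof is the standard one and is correct: parts (1)--(5) are indeed routine, and for part (6) the characteristic sentence $\phi_{\mf{A}}$ you build does pin down $\mf{A}$ up to isomorphism over a finite vocabulary, with the $\forall y \bigvee_i (y = x_i)$ clause correctly forcing the witnesses to exhaust $\univ{\mf{B}}$ so that the function and constant literals determine $f^{\mf{B}}$ and $c^{\mf{B}}$ completely. One small inaccuracy in your closing remark: if finiteness of $\tau$ (but not of $\mf{A}$) is dropped, it is only your particular method that breaks, not the conclusion. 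The statement ``$\mf{A}$ finite and $\mf{A} \equiv \mf{B}$ implies $\mf{A} \cong \mf{B}$'' remains true for infinite vocabularies: one first gets $|\univ{\mf{B}}| = |\univ{\mf{A}}| = n$ from a pure equality sentence, then, assuming none of the $n!$ bijections is an isomorphism, picks one violated atomic formula per bijection, passes to the finite subvocabulary $\tau_0$ of symbols occurring in these finitely many formulae, and applies your argument to the $\tau_0$-reducts to reach a contradiction. Since this part of the thesis fixes a finite $\tau$ anyway, this does not affect the validity of your proof for the lemma as used.
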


A class of structures is said to be
\emph{elementary}\sindex[term]{elementary!class of structures} if it
is the class of models of an $\fo$ theory. It is easy to see that an
elementary class of structures is closed under elementary equivalence,
and hence under isomorphisms.

We conclude this section by recalling some important results from the
literature~\cite{chang-keisler}. The first two of these below are
arguably the most important theorems\footnote{As an aside, by a
  celebrated result of Lindstr\"om~\cite{lindstrom}, $\fo$ is the only
  logic having certain well-defined and reasonable closure properties,
  that satisfies Theorem~\ref{theorem:compactness} and
  Theorem~\ref{theorem:DLS}.} of classical model theory
(see Theorem 1.3.22 in~\cite{chang-keisler} and Corollary 3.1.4
in~\cite{hodges}).

\begin{theorem}[Compactness theorem, G\"odel 1930, Mal'tsev 1936]\label{theorem:compactness}\sindex[term]{compactness theorem}
A theory $T(\bar{x})$ is consistent iff every finite subset of it is
consistent.
\end{theorem}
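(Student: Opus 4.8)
The plan is to prove the nontrivial direction, since the forward implication is immediate: any model of $T(\bar{x})$ is, \emph{a fortiori}, a model of every finite subset. For the converse, I assume every finite subset of $T(\bar{x})$ is consistent and must produce a single model of all of $T(\bar{x})$. First I would reduce to the case of sentences: replacing the free variables $\bar{x}$ by fresh distinct constants yields a theory $T'$ over an expanded vocabulary, and by (the theory-version of) Lemma~\ref{lemma:connecting-truth-of-formulae-and-sentences} a model of $T'$ yields a model of $T(\bar{x})$ and conversely, while finite satisfiability is plainly preserved under this substitution. So I may assume $T$ is a set of $\tau$-sentences.

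The core is a Henkin-style term-model construction. First I would enlarge $T$ to a set $T^\ast$ of $\tau$-sentences that is \emph{maximal} with respect to the property ``every finite subset is satisfiable''; such a maximal set exists by Zorn's lemma (equivalently, by transfinite recursion along an enumeration of all $\tau$-sentences), and maximality forces $T^\ast$ to be complete, i.e.\ for every sentence $\varphi$ exactly one of $\varphi,\neg\varphi$ lies in $T^\ast$. Next I would add \emph{Henkin witnesses}: for each sentence $\exists x\,\psi(x)$ I introduce a fresh constant $c$ together with the axiom $\exists x\,\psi(x) \rightarrow \psi(c)$. The crucial observation is that adding one such axiom with a brand-new constant preserves finite satisfiability---given a model of any finite fragment, interpret $c$ as an actual witness when the antecedent holds and arbitrarily otherwise. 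Iterating this ($\omega$ many times, since each round spawns new existential sentences) and re-maximalizing yields a complete, finitely-satisfiable, Henkin-complete theory $T^\ast$ over an expanded vocabulary $\tau^\ast$.

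I would then build the canonical model on the set of closed $\tau^\ast$-terms, quotiented by $t_1 \sim t_2 \iff (t_1 = t_2) \in T^\ast$; finite satisfiability guarantees the equality axioms hold, so $\sim$ is a congruence and the interpretations of the relation and function symbols (read off from membership in $T^\ast$) are well defined. The heart of the argument is the \emph{Truth Lemma}, proved by induction on formula structure: for every $\tau^\ast$-sentence $\varphi$, the canonical model satisfies $\varphi$ iff $\varphi \in T^\ast$. The atomic and Boolean cases follow from the definition of the model together with the completeness of $T^\ast$, while the existential case is precisely where the Henkin witnesses are used---a true existential sentence has its witness among the closed terms. Taking the $\tau$-reduct and recalling $T \subseteq T^\ast$ then gives the desired model of $T$.

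I expect the main obstacle to be the witness-addition step: one must check that finite satisfiability survives the \emph{iterated} introduction of Henkin constants, and then that the completion and the congruence quotient interact correctly so that the inductive Truth Lemma goes through at the quantifier step. A shorter alternative avoids this bookkeeping entirely: index the finite subsets $S \subseteq T$ by their (guaranteed) models $\mf{A}_S$, extend the filter generated by the sets $\{\,S : \varphi \in S\,\}$ (which has the finite intersection property) to an ultrafilter, and apply {\L}o{\'s}'s theorem to the resulting ultraproduct; but this presupposes the ultrafilter lemma and {\L}o{\'s}'s theorem, neither of which is available in the development so far, so I would favour the self-contained Henkin route here.
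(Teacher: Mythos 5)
The paper does not prove this statement at all: the compactness theorem is recalled here as classical background, with an explicit citation to Theorem 1.3.22 of Chang--Keisler, so there is no in-paper proof to compare yours against. Your proposal is the standard Henkin-style proof and is essentially correct as a sketch. The forward direction is indeed trivial; the reduction from $T(\bar{x})$ to a sentence theory via fresh constants is sound and consistent with the paper's own conventions (cf.\ Lemma~\ref{lemma:connecting-truth-of-formulae-and-sentences}); and the witness-addition step does preserve finite satisfiability for exactly the reason you give, with the $\omega$-iteration handling the new existential sentences spawned by each round of fresh constants. Two small remarks on the bookkeeping: the initial maximalization before adding witnesses is harmless but unnecessary (one usually adds all Henkin constants first and completes once at the end, which avoids having to argue that re-maximalization preserves the Henkin property); and in the existential case of the Truth Lemma, the direction ``model satisfies $\exists x\,\psi(x)$ implies $\exists x\,\psi(x) \in T^\ast$'' uses not just the witnesses but also the observation that a complete finitely-satisfiable theory is closed under single-premise logical consequence (otherwise $\{\psi(t), \neg\exists x\,\psi(x)\}$ would be an unsatisfiable finite subset) --- worth stating explicitly if you write this out. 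Neither point is a gap; your outline would compile into a complete proof.
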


\begin{theorem}[Downward L\"owenheim-Skolem theorem, L\"owenheim 1915, Skolem 1920s, Mal'tsev 1936]\label{theorem:DLS}\sindex[term]{downward L\"owenheim-Skolem theorem}
Let $\mf{A}$ be a structure over a countable vocabulary and $W$ be a
set of at most $\lambda$ elements of $\mf{A}$, where $\lambda$ is an
infinite cardinal. Then there exists an elementary substructure
$\mf{B}$ of $\mf{A}$, that contains $W$ and that has size at most
$\lambda$.
\end{theorem}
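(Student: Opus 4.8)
The plan is to realize $\mf{B}$ as a \emph{Skolem hull} of $W$ inside $\mf{A}$: a subset $B \subseteq \univ{\mf{A}}$ that contains $W$, is closed under the constants and functions of $\tau$, and is additionally closed under a family of \emph{Skolem functions} that supply witnesses for existential statements, so that the substructure induced on $B$ turns out to be elementary by a direct induction on formulas. Concretely, using the axiom of choice, for each $\fo(\tau)$ formula $\varphi(x, \bar{y})$ (with a distinguished variable $x$) I would fix a function $f_\varphi \colon (\univ{\mf{A}})^{|\bar{y}|} \to \univ{\mf{A}}$ with the property that whenever $\mf{A} \models \exists x\, \varphi(x, \bar{b})$ for a tuple $\bar{b}$ from $\mf{A}$, the value $f_\varphi(\bar{b})$ is an actual witness, i.e.\ $\mf{A} \models \varphi(f_\varphi(\bar{b}), \bar{b})$ (on tuples admitting no witness, $f_\varphi$ may be chosen arbitrarily).

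First I would define $B$ as the smallest subset of $\univ{\mf{A}}$ that contains $W$, contains $c^{\mf{A}}$ for every constant $c \in \tau$, and is closed under every $f^{\mf{A}}$ for $f \in \tau$ and under every $f_\varphi$. Explicitly $B = \bigcup_{n \in \mathbb{N}} B_n$, where $B_0 = W \cup \{c^{\mf{A}} \mid c \in \tau\}$ and $B_{n+1}$ adjoins to $B_n$ all values of the vocabulary functions and all Skolem functions applied to tuples from $B_n$. Taking $\mf{B}$ to be the substructure of $\mf{A}$ induced by $B$, closure under the $\tau$-functions guarantees $\univ{\mf{B}} = B$, so $\mf{B} \subseteq \mf{A}$ is well-defined, and $W \subseteq \univ{\mf{B}}$ by construction. (If $W$ is empty and $\tau$ has no constants, I would seed $B_0$ with one arbitrary element to keep the universe nonempty.)

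For the cardinality bound I would note that, since $\tau$ is countable, there are only countably many vocabulary function symbols and countably many $\fo(\tau)$ formulas, hence only countably many finitary functions to close under. Starting from $|B_0| \le \lambda$ (using $|W| \le \lambda$ and $\lambda$ infinite), each stage satisfies $|B_{n+1}| \le |B_n| + \aleph_0 \cdot \lambda = \lambda$, since countably many functions are applied to the at most $\lambda$ many finite tuples over $B_n$; a countable union of sets of size at most $\lambda$ again has size at most $\lambda$, so $|\univ{\mf{B}}| \le \lambda$.

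The heart of the proof, and the step I expect to carry the real content, is showing $\mf{B} \preceq \mf{A}$. I would prove by induction on the structure of a formula $\varphi$ that, for every assignment $\bar{a}$ from $B$ to its free variables, $(\mf{B}, \bar{a}) \models \varphi$ iff $(\mf{A}, \bar{a}) \models \varphi$. The atomic cases hold because $B$ is closed under terms (so every term takes the same value in $\mf{B}$ as in $\mf{A}$) and because the relations of $\mf{B}$ are the restrictions of those of $\mf{A}$; the Boolean connectives follow immediately from the induction hypothesis. The only delicate case is $\varphi = \exists x\, \psi(x, \bar{y})$: the implication from $\mf{B}$ to $\mf{A}$ is trivial, since a witness lying in $B$ is in particular a witness in $\univ{\mf{A}}$; for the converse, if $\mf{A} \models \exists x\, \psi(x, \bar{a})$ then $f_\psi(\bar{a})$ lies in $B$ by closure and satisfies $\mf{A} \models \psi(f_\psi(\bar{a}), \bar{a})$, whence by the induction hypothesis $\mf{B} \models \psi(f_\psi(\bar{a}), \bar{a})$ and so $\mf{B} \models \exists x\, \psi(x, \bar{a})$. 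This existential case is exactly where the Skolem functions earn their keep --- it is the Tarski--Vaught condition in disguise --- and is the only place the full construction is genuinely needed; the remaining difficulty is merely the bookkeeping in the cardinality count.
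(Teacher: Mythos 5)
Your Skolem-hull argument is correct and complete: the choice of witness functions via AC, the $\omega$-stage closure, the cardinality count $|B_{n+1}| \le \lambda$ using the countability of $\tau$ and hence of the set of $\fo(\tau)$ formulae, and the induction on formulas whose only nontrivial case ($\exists$, i.e.\ the Tarski--Vaught condition) is discharged by the Skolem functions, are all in order, and you correctly handle the edge case of an empty $W$ over a constant-free vocabulary. The paper itself gives no proof of this theorem --- it is recalled as a classical background result with a citation to Chang--Keisler and Hodges --- and your argument is essentially the canonical proof found in those references, so there is nothing further to compare.
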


An easy but important corollary of the compactness theorem is the
following.

\begin{lemma}[Corollary 5.4.2, ref.~\cite{hodges}]\label{lemma:exis-amalgam}
Let $\mf{A}$ and $\mf{B}$ be structures such that every existential
sentence that is true in $\mf{B}$ is true in $\mf{A}$. Then $\mf{B}$
is embeddable in an elementary extension of $\mf{A}$.
\end{lemma}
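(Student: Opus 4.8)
The plan is to realize both $\mf{B}$ and an elementary extension of $\mf{A}$ inside a single structure produced by compactness. Working with disjoint sets of naming constants, form the vocabulary $\sigma = \tau_{\mf{A}} \cup \tau_{\mf{B}}$, where $\tau_{\mf{A}} \setminus \tau$ names the elements of $\mf{A}$ and $\tau_{\mf{B}} \setminus \tau$ names the elements of $\mf{B}$, these two constant sets being chosen distinct. Consider the $\sigma$-theory $T = \eldiag{\mf{A}} \cup \diag{\mf{B}}$. If $T$ has a model $\mf{D}$, let $\mf{C}$ be its $\tau$-reduct. Since $\mf{D} \models \eldiag{\mf{A}}$, Lemma~\ref{lemma:prop-of-structures}(5) gives that $\mf{A}$ is elementarily embeddable in $\mf{C}$; since $\mf{D} \models \diag{\mf{B}}$, Lemma~\ref{lemma:prop-of-structures}(4) gives that $\mf{B}$ is embeddable in $\mf{C}$. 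A routine relabeling of the universe of $\mf{C}$ then lets us assume outright that $\mf{A} \preceq \mf{C}$, so that $\mf{B}$ embeds into the elementary extension $\mf{C}$ of $\mf{A}$, as required.

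The entire proof thus reduces to showing that $T$ is consistent, which I would establish via the compactness theorem (Theorem~\ref{theorem:compactness}): it suffices that every finite subset $T_0 \subseteq T$ has a model. Split $T_0$ into its part $E$ coming from $\eldiag{\mf{A}}$ and its part $\Delta$ coming from $\diag{\mf{B}}$. The conjunction of $\Delta$ is a quantifier-free $\sigma$-sentence, and collecting the $\mf{B}$-naming constants $d_{b_1}, \ldots, d_{b_n}$ that it mentions, we may write it as $\delta(d_{b_1}, \ldots, d_{b_n})$ for a quantifier-free $\tau$-formula $\delta(\bar{x})$. Because $\mf{B}$ satisfies its own diagram under the naming $d_b \mapsto b$, we have $\mf{B} \models \delta(b_1, \ldots, b_n)$, and hence $\mf{B} \models \exists \bar{x}\, \delta(\bar{x})$.

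This is where the hypothesis does the essential work: $\exists \bar{x}\, \delta(\bar{x})$ is an existential sentence true in $\mf{B}$, so by assumption it is true in $\mf{A}$, yielding witnesses $a'_1, \ldots, a'_n \in \univ{\mf{A}}$ with $\mf{A} \models \delta(a'_1, \ldots, a'_n)$. Now expand the structure $\mf{A}_{\mf{A}}$, which already models all of $\eldiag{\mf{A}}$ and in particular $E$, by interpreting each $\mf{B}$-constant $d_{b_i}$ as $a'_i$; since the $\mf{A}$- and $\mf{B}$-naming constants are disjoint, this is a well-defined expansion. As $\delta$ is quantifier-free and its $\mf{B}$-constants now point to the witnesses, this expansion satisfies $\delta(d_{b_1}, \ldots, d_{b_n})$, hence all of $\Delta$, while still satisfying $E$; thus it models $T_0$. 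Compactness then delivers a model of $T$, closing the argument.

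The main obstacle, and the one genuinely substantive step, is this finite-satisfiability argument: the move from a finite chunk of $\diag{\mf{B}}$ to an existential sentence that the hypothesis transfers to $\mf{A}$. The bookkeeping that keeps this clean is maintaining the disjointness of the $\mf{A}$- and $\mf{B}$-naming constants, so that witnesses found in $\mf{A}$ can be freely assigned to the $\mf{B}$-constants without disturbing the elementary diagram being realized. The only other point needing mild care is the final passage from ``$\mf{A}$ is elementarily embeddable in $\mf{C}$'' to the genuine relation $\mf{A} \preceq \mf{C}$, which is handled by the standard renaming of $\mf{C}$'s elements that identifies the image of each $a \in \univ{\mf{A}}$ with $a$ itself.
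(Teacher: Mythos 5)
The paper states this lemma as a recalled result from the literature (Corollary 5.4.2 of Hodges) and gives no proof of its own, so there is nothing internal to compare against; your argument is the standard elementary-diagram-plus-diagram compactness proof, which is exactly how the cited source establishes it, and it is correct. The only point worth noting is that injectivity of the resulting embedding of $\mf{B}$ is not visible at the level of a single finite fragment (where witnesses in $\mf{A}$ may collide), but is recovered in the limit because $\diag{\mf{B}}$ contains all the inequations $\neg(d_{b} = d_{b'})$ for distinct $b, b'$ — your proof implicitly handles this correctly by working with the full theory $T$.
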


To state the final result that we recall here from literature, we need
some terminology.  Given a cardinal $\lambda$, an \emph{ascending
  chain}\sindex[term]{ascending chain}, or simply chain,
$(\mf{A}_\eta)_{\eta < \lambda}$\sindex[symb]{$(\mf{A}_\eta)_{\eta <
    \lambda}$} of structures is a sequence $\mf{A}_0, \mf{A}_1,
\ldots$ of structures such that $\mf{A}_{0} \subseteq \mf{A}_{1}
\subseteq \ldots$. The \emph{union}\sindex[term]{union of a chain} of
the chain $(\mf{A}_\eta)_{\eta < \lambda}$ is a structure $\mf{A}$
defined as follows: (i) $\univ{\mf{A}} = \bigcup_{\eta < \lambda}
\univ{\mf{A}_\eta}$ (ii) $c^{\mf{A}} = c^{\mf{A}_\eta}$ for every
constant symbol $c \in \tau$ and every $\eta < \lambda$ (observe that
$c^{\mf{A}}$ is well-defined) (iii) $R^{\mf{A}} = \bigcup_{\eta <
  \lambda} R^{\mf{A}_\eta}$ for every relation symbol $R \in \tau$
(iv) $f^{\mf{A}} = \bigcup_{\eta < \lambda} f^{\mf{A}_\eta}$ for every
function symbol $f \in \tau$ (here, in taking the union of functions,
we view an $n$-ary function as its corresponding $(n+1)$-ary
relation). It is clear that $\mf{A}$ is well-defined. We denote
$\mf{A}$ as $\bigcup_{\eta < \lambda}
\mf{A}_\eta$\sindex[symb]{$\bigcup_{\eta < \lambda} \mf{A}_\eta$}. A
chain $(\mf{A}_\eta)_{\eta < \lambda}$ with the property that
$\mf{A}_{0} \preceq \mf{A}_{1} \preceq \ldots$ is said to be an
\emph{elementary chain}. We now have the following result (Theorem
3.1.9 of~\cite{chang-keisler}).

\begin{theorem}[Elementary chain theorem, Tarski-Vaught]\label{theorem:elem-chain-theorem}\sindex[term]{elementary!chain theorem}
Let $(\mf{A}_\eta)_{\eta < \lambda}$ be an elementary chain of
structures. Then $\bigcup_{\eta < \lambda} \mf{A}_\eta$ is an
elementary extension of $\mf{A}_\eta$ for each $\eta < \lambda$.
\end{theorem}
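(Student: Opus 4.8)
The plan is to show directly that $\mf{A}_\eta \preceq \mf{A}$ for each fixed $\eta < \lambda$, where $\mf{A} = \bigcup_{\eta < \lambda} \mf{A}_\eta$ denotes the union of the chain. By the definition of $\preceq$, this amounts to proving that $(\mf{A}_\eta, \bar{a})$ and $(\mf{A}, \bar{a})$ agree on every $\fo$ formula $\varphi(\bar{x})$, for every tuple $\bar{a}$ from $\mf{A}_\eta$ with $|\bar{a}| = |\bar{x}|$. The central claim I would establish, simultaneously over all indices, is the following: for every $\fo$ formula $\varphi(\bar{x})$, every $\eta < \lambda$, and every tuple $\bar{a}$ from $\mf{A}_\eta$, we have $(\mf{A}_\eta, \bar{a}) \models \varphi(\bar{x})$ iff $(\mf{A}, \bar{a}) \models \varphi(\bar{x})$. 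The proof is by induction on the structure of $\varphi$, where it suffices, up to logical equivalence, to treat atomic formulae, negation, disjunction, and existential quantification.

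First I would dispose of the easy cases. For atomic $\varphi$, the claim is immediate from $\mf{A}_\eta \subseteq \mf{A}$, since substructures and extensions agree on all quantifier-free formulae (as recorded in Section~\ref{section:background:structures}); note also that $\univ{\mf{A}_\eta} \subseteq \univ{\mf{A}}$, so that every element of $\bar{a}$ indeed lies in $\mf{A}$. The negation and disjunction steps are routine, since the stated biconditional is preserved under $\neg$ and $\vee$, and thus they follow directly from the induction hypothesis applied to the immediate subformulae.

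The crux is the existential case $\varphi(\bar{x}) = \exists y\, \psi(\bar{x}, y)$. The forward direction is harmless: a witness $b \in \univ{\mf{A}_\eta}$ for $(\mf{A}_\eta, \bar{a}) \models \exists y\, \psi$ also lies in $\univ{\mf{A}}$, and the induction hypothesis at the same index $\eta$ turns $(\mf{A}_\eta, \bar{a}, b) \models \psi$ into $(\mf{A}, \bar{a}, b) \models \psi$. The difficulty lies entirely in the backward direction, and this is where the \emph{elementary} character of the chain is indispensable. Suppose $(\mf{A}, \bar{a}) \models \exists y\, \psi$, witnessed by some $b \in \univ{\mf{A}} = \bigcup_{\mu < \lambda} \univ{\mf{A}_\mu}$; then $b \in \univ{\mf{A}_\mu}$ for some $\mu$, and since the chain is ascending we may take $\mu \ge \eta$, so that both $\bar{a}$ and $b$ lie in $\mf{A}_\mu$. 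Applying the induction hypothesis at index $\mu$ to $\psi$ yields $(\mf{A}_\mu, \bar{a}, b) \models \psi$, hence $(\mf{A}_\mu, \bar{a}) \models \exists y\, \psi$. At this point the witness lives only at level $\mu$, and to descend back to level $\eta$ I would invoke $\mf{A}_\eta \preceq \mf{A}_\mu$, which holds for $\eta \le \mu$ by the elementary chain hypothesis: since $\exists y\, \psi$ is an $\fo$ formula with parameters $\bar{a}$ drawn from $\mf{A}_\eta$, its truth transfers from $\mf{A}_\mu$ down to $\mf{A}_\eta$, giving $(\mf{A}_\eta, \bar{a}) \models \exists y\, \psi$ as required.

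I expect the main obstacle to be precisely this backward existential step, specifically the fact that the witness $b$ supplied by $\mf{A}$ may originate at an arbitrarily high index $\mu$, so that the induction hypothesis cannot be applied at $\eta$ directly. Overcoming it requires two ingredients working together: that the union is a genuine directed union, so that every finite set of elements, here $\bar{a}$ together with $b$, already sits inside a single $\mf{A}_\mu$; and that $\mf{A}_\eta \preceq \mf{A}_\mu$ whenever $\eta \le \mu$, which is exactly the strengthening of the substructure relation that distinguishes an elementary chain from an ordinary one. For non-consecutive indices, and in particular past limit stages, this relation $\mf{A}_\eta \preceq \mf{A}_\mu$ is part of what the definition of elementary chain asserts. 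Once the claim is proved, specializing to sentences and to each $\eta$ gives $\mf{A}_\eta \preceq \mf{A}$, completing the proof.
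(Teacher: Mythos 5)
Your proof is correct and is the standard Tarski--Vaught argument: induction on formula structure with the claim stated simultaneously for all indices, the only nontrivial case being the backward direction of the existential step, which you handle correctly by locating the witness in some $\mf{A}_\mu$ with $\mu \ge \eta$ and then pulling the existential statement down via $\mf{A}_\eta \preceq \mf{A}_\mu$. The paper itself gives no proof of this theorem (it is recalled as Theorem 3.1.9 of Chang--Keisler), and your argument is precisely the one found there, including the correct observation that $\mf{A}_\eta \preceq \mf{A}_\mu$ for non-consecutive indices is part of what the definition of an elementary chain must be taken to assert.
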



\section{Types and $\mu$-saturation}\label{section:background:saturation}

Given a vocabulary $\tau$, a set $\Gamma(x_1, \ldots, x_k)$ of
$\fo(\tau)$ formulae, all of whose free variables are among $x_1,
\ldots, x_k$, is said to be an \emph{$\fo$-type of $\tau$}, or simply
a type of $\tau$, if it is \emph{maximally consistent}, i.e. if it is
consistent and for any $\fo(\tau)$ formula $\varphi(x_1, \ldots,
x_k)$, exactly one of $\varphi(x_1, \ldots, x_k)$ and $\neg
\varphi(x_1, \ldots, x_k)$ belongs to $\Gamma(x_1, \ldots, x_k)$.
Given a $\tau$-structure $\mf{A}$ and a $k$-tuple $\bar{a}$ of
$\mf{A}$, we let $\fotp{\mf{A}}{\bar{a}}(x_1, \ldots,
x_k)$\sindex[symb]{$\fotp{\mf{A}}{\bar{a}}(x_1, \ldots, x_k)$} denote
the \emph{type of $\bar{a}$ in $\mf{A}$}\sindex[term]{type of a
  tuple}, i.e. the set of all $\fo(\tau)$ formulae $\varphi(x_1,
\ldots, x_k)$ such that $(\mf{A}, \bar{a}) \models \varphi(x_1,
\ldots, x_k)$.  It is clear that $\fotp{\mf{A}}{\bar{a}}(x_1, \ldots,
x_k)$ is a type of $\tau$. The converse is clear too: if $\Gamma(x_1,
\ldots, x_k)$ is a type of $\tau$, then for some $\tau$-structure
$\mf{A}$ and some $k$-tuple $\bar{a}$ of $\mf{A}$, it is the case that
$\Gamma(x_1, \ldots, x_k)$ is the type of $\bar{a}$ in $\mf{A}$. In
such a case, we say that $\mf{A}$ \emph{realizes} $\Gamma(x_1, \ldots,
x_k)$, and that $\bar{a}$ \emph{satisfies}, or
\emph{realizes}\sindex[term]{tuple realizing a type}, $\Gamma(x_1,
\ldots, x_k)$ in $\mf{A}$.  It is easy to see for given structures
$\mf{A}$ and $\mf{B}$, and given $k$-tuples $\bar{a}$ and $\bar{b}$
from $\mf{A}$ and $\mf{B}$ resp., that $\fotp{\mf{A}}{\bar{a}}(x_1,
\ldots, x_k)$ = $\fotp{\mf{A}}{\bar{a}}(x_1, \ldots, x_k)$ iff
$(\mf{A}, \bar{a}) \equiv (\mf{B}, \bar{b})$.  By
$\pitp{\mf{A}}{\bar{a}}(x_1, \ldots,
x_k)$\sindex[symb]{$\pitp{\mf{A}}{\bar{a}}(x_1, \ldots, x_k)$}, we
denote the \emph{$\Pi^0_1$-type of $\bar{a}$ in
  $\mf{A}$}\sindex[term]{$\Pi^0_1$-type of a tuple}, i.e. the subset
of $\fotp{\mf{A}}{\bar{a}}(x_1, \ldots, x_k)$ that consists of all
$\Pi^0_1$ formulae of the latter. We denote by
$\foth(\mf{A})$\sindex[symb]{$\foth(\mf{A})$}, the \emph{theory of
  $\mf{A}$}, i.e. the set of all $\fo(\tau)$ sentences that are true
in $\mf{A}$.

We now recall the important notion of $\mu$-saturated structures from
the literature. 



\begin{defn}[Chp. 5, ref.~\cite{chang-keisler}]\label{defn:lambda-saturated-structure}\sindex[term]{$\mu$-saturation}
Let $\mu$ be a cardinal. A $\tau$-structure $\mf{A}$ is said to be
\emph{$\mu$-saturated} if for every subset $X$ of $\mathsf{U}_{\mf{A}}$,
of cardinality less than $\mu$, if $\mf{A}'$ is the
$\tau_X$-expansion $(\mf{A}, (a)_{a \in X})$ of $\mf{A}$, then
$\mf{A}'$ realizes every type $\Gamma(x)$ of the vocabulary $\tau_X$,
that is consistent modulo $\foth(\mf{A}')$.
\end{defn}

Following are some results in connection with $\mu$-saturated
structures, that we crucially use in many proofs in the forthcoming
chapters.

\begin{proposition}\label{prop:saturation-results}
The following are true for any vocabulary $\tau$ and any
$\tau$-structure $\mf{A}$.
\begin{enumerate}[nosep]
\item \,[Proposition 5.1.1, ref.~\cite{chang-keisler}]~ $\mf{A}$ is
  $\mu$-saturated if and only if for every ordinal $\eta < \mu$ and
  every $\eta$-tuple $\bar{a}$ of $\mf{A}$, the expansion $(\mf{A},
  \bar{a})$ is $\mu$-saturated.
\label{prop:saturation-results:expansion-preserves-saturation}
\item \,[Proposition 5.1.2, ref.~\cite{chang-keisler}]~$\mf{A}$ is
  finite if and only if $\mf{A}$ is $\mu$-saturated for all cardinals
  $\mu$.\label{prop:saturation-results:finite-is-always-saturated}
\item \,[Lemma 5.1.4, ref.~\cite{chang-keisler}]~There exists a
  $\mu$-saturated elementary extension of $\mf{A}$, for some cardinal
  $\mu \ge
  |\tau|$.\label{prop:saturation-results:exis-of-sat-elem-ext}
\item \,[Lemma 5.1.10, ref.~\cite{chang-keisler}]~If $\mf{A}$ is
  $\mu$-saturated, $\mf{A} \equiv \mf{B}$ and $\bar{b}$ is an
  $\eta$-tuple of $\mf{B}$ where $\eta < \mu$, then there exists an
  $\eta$-tuple $\bar{a}$ of $\mf{A}$ such that $(\mf{A}, \bar{a})
  \equiv (\mf{B},
  \bar{b})$.\label{prop:saturation-results:type-realization}
\item \,[Lemma 5.2.1, ref.~\cite{chang-keisler}]~Suppose every
  existential sentence that holds in $\mf{A}$ also holds in $\mf{B}$,
  where $\mf{B}$ is $\mu$-saturated for $\mu \ge |\mf{A}|$. Then
  $\mf{A}$ is embeddable in
  $\mf{B}$.\label{prop:saturation-results:isomorphic-embedding}
\end{enumerate}
\end{proposition}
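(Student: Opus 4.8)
The plan is to derive all five parts from Definition~\ref{defn:lambda-saturated-structure} by realizing suitable types, the only genuine construction being an iterated, transfinite realization of types along an enumeration of a structure's universe. The first four parts are routine. Part~1 is immediate in one direction (take $\eta = 0$), and in the other direction follows because a type over $\tau_X$ with $|X| < \mu$ may be read as a type over a parameter set that already includes the coordinates of $\bar a$. Part~2 uses that a finite structure realizes only finitely many distinct one-types, so every consistent type is already realized, while an infinite structure can be forced to omit a consistent type once $\mu$ exceeds its size. Part~3 is the standard union-of-chain construction: one builds an elementary chain in which each successor realizes all types over small parameter sets of its predecessor, takes the union (an elementary extension by Theorem~\ref{theorem:elem-chain-theorem}), and iterates to reach saturation in some $\mu \ge |\tau|$. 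Part~4 is proved by exactly the inductive mechanism I describe next for Part~5, so I concentrate on the latter, which is the workhorse used repeatedly in the sequel.

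For Part~5, by Lemma~\ref{lemma:prop-of-structures}(4) it suffices to exhibit a $\tau_{\mf A}$-expansion $\mf B'$ of $\mf B$ satisfying $\diag{\mf A}$; concretely, enumerating $\univ{\mf A} = \{a_i : i < \lambda\}$ with $\lambda = |\mf A| \le \mu$, I would construct elements $(b_i)_{i<\lambda}$ of $\mf B$ so that $a_i \mapsto b_i$ is an embedding. The construction proceeds by transfinite recursion, maintaining the invariant that, at stage $i$, every existential sentence over $\tau \cup \{c_j : j < i\}$ true in $(\mf A, (a_j)_{j<i})$ is also true in $(\mf B, (b_j)_{j<i})$. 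The base case $i = 0$ is exactly the hypothesis of the statement, and limit stages require no work, since each existential sentence mentions only finitely many constants and is therefore already controlled by the invariant at an earlier stage.

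At a successor stage I would form the set $p(z)$ of all existential formulas in one free variable $z$, over the parameters $\{c_j : j<i\}$ interpreted by $\{b_j : j<i\}$, that are satisfied by $a_i$ in $(\mf A, (a_j)_{j<i})$. The crux is that $p(z)$ is consistent modulo $\foth((\mf B, (b_j)_{j<i}))$: any finite subset of $p(z)$ is equivalent to a single existential formula in $z$, whose existential closure is an existential sentence witnessed by $a_i$ in $(\mf A, (a_j)_{j<i})$, hence true in $(\mf B, (b_j)_{j<i})$ by the invariant; so $p(z)$ is finitely satisfiable modulo the theory and thus consistent by Theorem~\ref{theorem:compactness}. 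Since $i < \lambda \le \mu$ forces $|\{b_j : j<i\}| < \mu$, Part~1 makes $(\mf B, (b_j)_{j<i})$ $\mu$-saturated, so after completing $p(z)$ to a full type it is realized by some $b_i$, preserving the invariant. Once all $\lambda$ elements are chosen, atomic and negated-atomic sentences are themselves existential, so the invariant yields $(\mf B, (b_i)_{i<\lambda}) \models \diag{\mf A}$, and Lemma~\ref{lemma:prop-of-structures}(4) delivers the embedding. I expect the main obstacle to be precisely this consistency check together with the bookkeeping that keeps the parameter set of size strictly below $\mu$ at every stage; this is exactly where the hypothesis $\mu \ge |\mf A|$ is used, and it is why I prefer the direct recursion to a shortcut through Part~4 (which would require the strict bound $\eta < \mu$ and so stumble in the boundary case $|\mf A| = \mu$, whereas here every partial parameter set has size $|i| < \mu$ even when $\lambda = \mu$). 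A cleaner but essentially equivalent alternative would first use Lemma~\ref{lemma:exis-amalgam} with the roles of $\mf A$ and $\mf B$ reversed to embed $\mf A$ into an elementary extension of $\mf B$, and then pull the image back into $\mf B$ by realizing its type there via saturation.
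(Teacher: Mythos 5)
The paper offers no proof of this proposition to compare against: each of the five parts is imported verbatim from Chang--Keisler, with the exact result numbers cited, and is used as a black box in the sequel. Your arguments are correct and are essentially the standard ones from that source; in particular your treatment of Part 5 --- a transfinite recursion that at stage $i < \lambda \le \mu$ realizes a completion of the existential type of $a_i$ over the previously chosen parameters, the consistency check resting on the fact that a finite conjunction of existential formulas is again existential, and the realization being legitimate because the parameter set has size $|i| < \mu$ even when $\lambda = \mu$ --- is the textbook proof of Lemma 5.2.1, and your observation that one cannot shortcut through Part 4 in the boundary case $|\mf{A}| = \mu$ correctly identifies why the recursion must be run directly rather than quoted.
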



\section{Two classical preservation properties}\label{section:background-classical-properties}

We first recall the classical dual notions of preservation under
substructures and preservation under extensions. We fix a finite
vocabulary $\tau$ in our discussion below.

\begin{defn}\label{defn:PS-and-PE}
Let $\cl{S}$ be a class of structures. 
\begin{enumerate}[nosep]
\item A subclass $\cl{U}$ of $\cl{S}$ is said to be \emph{preserved
  under substructures over $\cl{S}$}, abbreviated as \emph{$\cl{U}$ is
  $PS$ over $\cl{S}$}, if for each structure $\mf{A} \in \cl{U}$, if
  $\mf{B} \subseteq \mf{A}$ and $\mf{B} \in \cl{S}$, then $\mf{B} \in
  \cl{U}$.\sindex[term]{preservation under!substructures}
\item A subclass $\cl{U}$ of $\cl{S}$ is said to be \emph{preserved
  under extensions over $\cl{S}$}, abbreviated as \emph{$\cl{U}$ is
  $PE$ over $\cl{S}$}, if for each structure $\mf{A} \in \cl{U}$, if
  $\mf{A} \subseteq \mf{B}$ and $\mf{B} \in \cl{S}$, then $\mf{B} \in
  \cl{U}$.\sindex[term]{preservation under!extensions}
\end{enumerate}
If $V$ and $T$ are theories, then we say \emph{$T$ is $PS$ modulo $V$}
(resp.\ \emph{$T$ is $PE$ modulo $V$}) if the class of models of $T
\cup V$ is $PS$ (resp.\ $PE$) over the class of models of $V$. For a
sentence $\phi$, we say \emph{$\phi$ is $PS$ modulo $V$}
(resp.\ \emph{$\phi$ is $PE$ modulo $V$}) if the theory $\{\phi\}$ is
$PS$ (resp.\ $PE$) modulo $V$.
\end{defn}

As an example, let $\tau = \{E\}$ be the vocabulary consisting of a
single relation symbol $E$ that is binary, and let $\cl{S}$ be the
class of all $\tau$-structures in which $E$ is interpreted as a
symmetric binary relation. The class $\cl{S}$ can be seen as the class
of all undirected graphs. Let $\cl{U}_1$ be the subclass of $\cl{S}$
consisting of all undirected graphs that are acyclic. Let $\cl{U}_2$
be the subclass of $\cl{S}$ consisting of all undirected graphs that
contain a triangle as a subgraph. It is easy to see that $\cl{U}_1$ is
$PS$ over $\cl{S}$, and $\cl{U}_2$ is $PE$ over $\cl{S}$. Observe that
$\cl{S}$ is defined by the theory $V = \{\forall x \forall y \,(E(x,
y) \rightarrow E(y, x))\}$.  Let $\psi_n$ be the universal sentence
that asserts the absence of a cycle of length $n$ as a subgraph. Then
$\cl{U}_1$ is exactly the class of models in $\cl{S}$, of the theory
$T = \{\psi_n \mid n \ge 3\}$, and $\cl{U}_2$ is exactly the class of
models in $\cl{S}$, of the sentence $\phi = \neg \psi_3$.  Whereby,
$T$ is $PS$ modulo $V$, and $\phi$ is $PE$ modulo $V$.

\vspace{3pt}
The following lemma establishes the duality between $PS$ and $PE$.

\begin{lemma}[$PS$-$PE$ duality]\label{lemma:PS-PE-duality}
Let $\cl{S}$ be a class of structures, $\cl{U}$ be a subclass of
$\cl{S}$ and $\overline{\cl{U}}$ be the complement of $\cl{U}$ in
$\cl{S}$. Then $\cl{U}$ is $PS$ over $\cl{S}$ iff $\overline{\cl{U}}$
is $PE$ over $\cl{S}$. In particular, if $\,\cl{S}$ is defined by a
theory $V$, then a sentence $\phi$ is $PS$ modulo $V$ iff $\neg \phi$
is $PE$ modulo $V$.
\end{lemma}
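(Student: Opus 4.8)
The plan is to establish the general biconditional by unfolding the two definitions and observing that they express \emph{complementary} propagation directions along $\subseteq$, after which the ``in particular'' clause drops out as a direct instance. Recall that $\cl{U}$ being $PS$ over $\cl{S}$ says membership in $\cl{U}$ propagates \emph{downward} (to substructures lying in $\cl{S}$), whereas $\overline{\cl{U}}$ being $PE$ over $\cl{S}$ says membership in $\overline{\cl{U}}$ propagates \emph{upward} (to extensions lying in $\cl{S}$). I would prove the equivalence by a contrapositive argument, exploiting that $\overline{\cl{U}}$ is the complement of $\cl{U}$ \emph{within} $\cl{S}$, so that any $\mf{A} \in \overline{\cl{U}}$ automatically satisfies $\mf{A} \in \cl{S}$.

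For the forward direction, assume $\cl{U}$ is $PS$ over $\cl{S}$ and take any $\mf{A} \in \overline{\cl{U}}$ with $\mf{A} \subseteq \mf{B}$ and $\mf{B} \in \cl{S}$; to show $\mf{B} \in \overline{\cl{U}}$, I would suppose instead $\mf{B} \in \cl{U}$ and derive a contradiction. Since $\mf{A} \subseteq \mf{B}$, $\mf{A} \in \cl{S}$, and $\cl{U}$ is $PS$ over $\cl{S}$, applying the $PS$ hypothesis to $\mf{B}$ forces $\mf{A} \in \cl{U}$, contradicting $\mf{A} \in \overline{\cl{U}}$. The backward direction is entirely symmetric; alternatively, since $\overline{\overline{\cl{U}}} = \cl{U}$, it is just the forward direction applied with $\overline{\cl{U}}$ in place of $\cl{U}$, so I would invoke that symmetry rather than rewrite the argument.

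For the concluding clause, let $\cl{S}$ be the class of models of $V$ and let $\cl{U}$ be the class of models of $\{\phi\} \cup V$; then $\overline{\cl{U}}$ is exactly the class of models of $\{\neg\phi\} \cup V$, since a structure of $\cl{S}$ fails $\phi$ iff it satisfies $\neg\phi$. By Definition~\ref{defn:PS-and-PE}, ``$\phi$ is $PS$ modulo $V$'' is precisely ``$\cl{U}$ is $PS$ over $\cl{S}$'', and ``$\neg\phi$ is $PE$ modulo $V$'' is precisely ``$\overline{\cl{U}}$ is $PE$ over $\cl{S}$'', so the clause is an instance of the general biconditional. I do not anticipate any genuine obstacle here; the only point demanding care is the bookkeeping of the ambient class $\cl{S}$ in each definition, and in particular remembering that complementation is relative to $\cl{S}$, so that the structures appearing in the hypotheses always lie in $\cl{S}$ as the definitions require.
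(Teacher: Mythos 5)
Your proof is correct, and it is essentially the standard argument: the paper states this lemma without an explicit proof, but your contrapositive unfolding of the two definitions (with careful attention to complementation being relative to $\cl{S}$) matches exactly the style of the paper's proof of the generalized duality, Lemma~\ref{lemma:PSC(k)-PCE(k)-duality}, which argues both directions by contradiction in the same way. Nothing is missing.
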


The notion of a theory being $PS$ modulo $V$ or $PE$ modulo $V$ can be
extended to theories with free variables in a natural manner.  Given
$n \in \mathbb{N}$, recall from
Section~\ref{section:background:structures} that $\tau_n$ is the
vocabulary obtained by expanding $\tau$ with $n$ fresh and distinct
constants symbols $c_1, \ldots, c_n$.  Let $T(\bar{x})$ be an
$\fo(\tau)$ theory with free variables among $\bar{x} = (x_1, \ldots,
x_n)$, and let $T'$ be the $\fo(\tau_n)$ theory obtained by
substituting $c_i$ for the free occurrences of $x_i$ in $T(\bar{x})$,
for each $i \in \{1, \ldots, n\}$. Given a theory $V$, we say
\emph{$T(\bar{x})$ is $PS$ modulo $V$} if $T'$ is $PS$ modulo $V$,
where $V$ is treated as an $\fo(\tau_n)$ theory. The notion
\emph{$T(\bar{x})$ is $PE$ modulo $V$} is defined similarly.

\subsection{The {\lt} preservation theorem}\label{subsection:los-tarski-theorem}

In the mid 1950s, Jerzy {\L}o{\'s} and Alfred Tarski provided syntactic
characterizations of theories that are $PS$ and theories that are $PE$
via the following preservation theorem. This result (Theorem 3.2.2. in
Chapter 3 of~\cite{chang-keisler}) and its proof set the trend for
various other preservation theorems to follow.

\begin{theorem}[{\lt}, 1954-55]\label{theorem:los-tarski-subst-version}\sindex[term]{{\lt} theorem}
Let $T(\bar{x})$ be a theory whose free variables are among
$\bar{x}$. Given a theory $V$, each of the following is true.
\begin{enumerate}[nosep]
\item $T(\bar{x})$ is $PS$ modulo $V$ iff $T(\bar{x})$ is equivalent
  modulo $V$ to a theory $Y(\bar{x})$ of universal formulae, all of
  whose free variables are among $\bar{x}$. If $T(\bar{x})$ is a
  singleton, then so is $Y(\bar{x})$.
\item $T(\bar{x})$ is $PE$ modulo $V$ iff $T(\bar{x})$ is equivalent
  modulo $V$ to a theory $Y(\bar{x})$ of existential formulae, all of
  whose free variables are among $\bar{x}$. If $T(\bar{x})$ is a
  singleton, then so is $Y(\bar{x})$.
\end{enumerate}
 \end{theorem}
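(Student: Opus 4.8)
The plan is to prove both parts by the classical ``consequence-theory plus compactness'' method, after first reducing to the case of sentences. By the reduction to sentences described in Section~\ref{section:background-classical-properties}, $T(\bar{x})$ is $PS$ (resp.\ $PE$) modulo $V$ iff the $\fo(\tau_n)$ theory $T'$ obtained by substituting fresh constants $c_1,\ldots,c_n$ for $x_1,\ldots,x_n$ is $PS$ (resp.\ $PE$) modulo $V$ over $\tau_n$; a universal (resp.\ existential) theory equivalent to $T'$ then translates back, on replacing each $c_i$ by $x_i$, to a theory of universal (resp.\ existential) \emph{formulae} with free variables among $\bar{x}$, the equivalence being preserved by (the theory version of) Lemma~\ref{lemma:connecting-truth-of-formulae-and-sentences}. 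So from now I treat $T$ as a theory of sentences over a vocabulary I still call $\tau$. The two easy directions are immediate from the fact recalled in Section~\ref{section:background:structures} that quantifier-free formulae are preserved in both directions across the substructure relation: a universal sentence is then preserved downwards under substructures and an existential sentence upwards under extensions, so any theory equivalent modulo $V$ to a universal (resp.\ existential) theory is automatically $PS$ (resp.\ $PE$) modulo $V$.

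For the hard direction of the extensional part, suppose $T$ is $PE$ modulo $V$ and let $Y$ be the set of all existential sentences $\psi$ with $V\cup T\vdash\psi$. Trivially every model of $V\cup T$ models $V\cup Y$; I must show the converse, so I fix $\mf{A}\models V\cup Y$ and aim at $\mf{A}\models T$. The first step is to produce some $\mf{B}\models V\cup T$ every existential consequence of which already holds in $\mf{A}$. To find it I consider $V\cup T\cup\Delta$, where $\Delta$ is the set of negations of existential sentences false in $\mf{A}$ (these negations being universal), and argue by Theorem~\ref{theorem:compactness} that it is consistent: were some finite piece $V\cup T\cup\{\neg\psi_1,\ldots,\neg\psi_n\}$ inconsistent, then $V\cup T\vdash\psi_1\vee\cdots\vee\psi_n$, and since a finite disjunction of existential sentences is again equivalent to an existential sentence, this disjunction lies in $Y$; but $\mf{A}\models Y$ while each $\psi_i$ is false in $\mf{A}$, a contradiction. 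Any model $\mf{B}$ of $V\cup T\cup\Delta$ has the required property, so Lemma~\ref{lemma:exis-amalgam} embeds $\mf{B}$ in an elementary extension $\mf{A}^{*}$ of $\mf{A}$. Since $\mf{A}\equiv\mf{A}^{*}$ we have $\mf{A}^{*}\models V$, and replacing $\mf{B}$ by an isomorphic substructure of $\mf{A}^{*}$ we may assume $\mf{B}\subseteq\mf{A}^{*}$ with $\mf{B}\models V\cup T$; hence $PE$ modulo $V$ gives $\mf{A}^{*}\models T$, and elementary equivalence gives $\mf{A}\models T$.

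For the substructural part I would argue symmetrically but via diagrams, which sidesteps the fact that Lemma~\ref{lemma:PS-PE-duality} is stated only for sentences. Suppose $T$ is $PS$ modulo $V$ and let $Y$ be the set of all universal consequences of $V\cup T$; fix $\mf{A}\models V\cup Y$. It suffices to embed $\mf{A}$ into some $\mf{B}\models V\cup T$, for then $PS$ modulo $V$ yields $\mf{A}\models T$. By Lemma~\ref{lemma:prop-of-structures}(4) such an embedding exists as soon as $V\cup T\cup\diag{\mf{A}}$ is consistent, and this I would check with Theorem~\ref{theorem:compactness}: a finite inconsistent piece gives $V\cup T\vdash\neg\delta$ for some conjunction $\delta=\delta_0(c_{a_1},\ldots,c_{a_m})$ of diagram sentences, whose extra constants do not occur in $V\cup T$ and so may be universally generalized, yielding $V\cup T\vdash\forall\bar{y}\,\neg\delta_0(\bar{y})$, a universal sentence and hence a member of $Y$; this contradicts $\mf{A}\models\delta_0(a_1,\ldots,a_m)$. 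The singleton refinement in either part is then a short compactness argument: if $T=\{\phi\}$ then $V\cup Y\vdash\phi$ gives, by Theorem~\ref{theorem:compactness}, a finite $\{\psi_1,\ldots,\psi_n\}\subseteq Y$ with $V\cup\{\psi_1\wedge\cdots\wedge\psi_n\}\vdash\phi$, and the conjunction is a single universal (resp.\ existential) sentence equivalent to $\phi$ modulo $V$.

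The main obstacle, and the step deserving the most care, is the consistency argument showing that the consequence theory $Y$ is ``rich enough'' to force $\mf{A}$ into a model of $T$: everything hinges on the closure of the relevant syntactic class under the Boolean operation produced by compactness --- finite disjunctions of existentials staying existential in the extensional case, and universal generalization of the freshly introduced diagram constants in the substructural case --- together with the transfer across the elementary extension $\mf{A}^{*}$ in the extensional part. This is precisely where the presence of infinite structures, and hence Theorem~\ref{theorem:compactness} and Lemma~\ref{lemma:exis-amalgam}, are indispensable, and where any attempt to restrict to finite structures would collapse.
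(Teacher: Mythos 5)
Your argument is correct. Note, however, that the paper does not prove Theorem~\ref{theorem:los-tarski-subst-version} at all: it is recalled from the literature (Theorem 3.2.2 of \cite{chang-keisler}), so there is no in-paper proof to compare against. What you give is the standard classical proof — reduce to sentences via fresh constants, take the universal (resp.\ existential) consequence theory $Y$ of $V\cup T$, and use compactness together with the diagram (resp.\ Lemma~\ref{lemma:exis-amalgam}) to show every model of $V\cup Y$ embeds into / contains a model of $V\cup T$ — and all the delicate points (closure of existentials under finite disjunction, generalization on the fresh diagram constants, transfer across the elementary extension, and the singleton refinement by a finite conjunction) are handled correctly. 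It is also consistent in spirit with how the paper later proves $\glt{k}$ (consequence theory plus compactness), except that at level $k=0$ your elementary diagram/amalgamation argument suffices where the paper's generalizations need $\lambda$-saturated structures or elementary chains.
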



In the remainder of the thesis, if $\cl{S}$, as mentioned in the
definitions above, is clear from context, then we skip mentioning its
associated qualifier, namely, `over $\cl{S}$'. Likewise, we skip
mentioning `modulo $V$' when $V$ is clear from context.

\chapter{New parameterized preservation properties}\label{chapter:our-properties}
We fix a finite vocabulary $\tau$ in our discussion in this and in all
the subsequent chapters of this part of the thesis. By formula,
theory, and structure, we always mean respectively an $\fo(\tau)$
formula, an $\fo(\tau)$ theory and a $\tau$-structure, unless
explicitly stated otherwise.

\section{Preservation under substructures modulo $k$-cruxes}\label{section:PSC(k)}

\begin{defn}\label{defn:PSC(k)}\sindex[symb]{$PSC(k)$}
Let $\cl{S}$ be a class of structures and $k \in \mathbb{N}$. A
subclass $\cl{U}$ of $\cl{S}$ is said to be \emph{preserved under
  substructures modulo $k$-cruxes over
  $\cl{S}$}\sindex[term]{preservation under!substructures modulo
  $k$-cruxes}, abbreviated as \emph{$\cl{U}$ is $PSC(k)$ over
  $\cl{S}$}\sindex[symb]{$PSC(k)$}, if for every structure $\mf{A} \in
\cl{U}$, there exists a subset $C$ of the universe of $\mf{A}$, of
size at most $k$, such that if $\mf{B} \subseteq \mf{A}$, $\mf{B}$
contains $C$ and $\mf{B} \in \cl{S}$, then $\mf{B} \in \cl{U}$.  The
set $C$ is called a \emph{$k$-crux of $\mf{A}$ w.r.t. $\cl{U}$ over
  $\cl{S}$}\sindex[term]{$k$-crux of a model}. Any substructure
$\mf{B}$ of $\mf{A}$, that contains $C$ is called a \emph{substructure
  of $\mf{A}$ modulo the $k$-crux
  $C$}\sindex[term]{substructure!modulo a $k$-crux}. Given theories
$V$ and $T$, we say \emph{$T$ is $PSC(k)$ modulo $V$}, if the class of
models of $T \cup V$ is $PSC(k)$ over the class of models of $V$. For
a sentence $\phi$, we say \emph{$\phi$ is $PSC(k)$ modulo
  $V$}\sindex[term]{modulo a theory!$PSC(k)$} if the theory $\{\phi\}$
is $PSC(k)$ modulo $V$.

\end{defn}

Let $\cl{U}, \cl{S}, \mf{A}, C, V, T$ and $\phi$ be as above. If
$\cl{S}$ is defined by $V$ and $\cl{U}$ is defined by $T$ over
$\cl{S}$, then we say $C$ is a \emph{$k$-crux of $\mf{A}$ w.r.t. $T$
  modulo $V$}. If $\cl{U}$ is defined by $\phi$ over $\cl{S}$, then we
say $C$ is a \emph{$k$-crux of $\mf{A}$ w.r.t. $\phi$ modulo $V$}. In
many occasions in this thesis, the set $C$ is the set of elements of a
tuple $\bar{a}$ of elements of $\mf{A}$. Hence, we use the phrase
\emph{$\bar{a}$ is a $k$-crux of $\mf{A}$ w.r.t. $T$ modulo $V$} or
\emph{$\bar{a}$ is a $k$-crux of $\mf{A}$ w.r.t. $\phi$ modulo $V$} to
mean the corresponding statements with $C$ in place of $\bar{a}$.  As
in Section~\ref{section:background-classical-properties}, if any of
$\cl{U}, \cl{S}, T, V$ or $\phi$ is clear from context, then we skip
mentioning its associated qualifier (viz., `w.r.t. $\cl{U}$', `over
$\cl{S}$', `w.r.t. $T$', `modulo $V$' and `w.r.t. $\phi$'
respectively) in the definitions above.

\begin{remark}
Definition~\ref{defn:PSC(k)} is an adapted version of related
definitions in~\cite{abhisekh-wollic} and~\cite{arxiv-self-2010}. The
notion of `core' in Definition 1 of~\cite{abhisekh-wollic} is exactly
the notion of `crux' defined above, where the underlying class
$\cl{U}$ in the definition above, is the class of all structures. We
avoid using the word `core' for a crux to prevent confusion with
existing notions of cores in the literature~\cite{rossman-hom,
  dawar-hom}.
\end{remark}

Given an $\text{FO}(\tau)$ theory $T(\bar{x})$ and an FO$(\tau)$
formula $\phi(\bar{x})$ each of whose free variables are among
$\bar{x} = (x_1, \ldots, x_n)$, we can define the notion of
$T(\bar{x})$, resp.\ $\phi(\bar{x})$, being $PSC(k)$ modulo a theory
$V$ analogously to the notion of $T(\bar{x})$, resp.\ $\phi(\bar{x})$,
being $PS$ modulo $V$ as defined in
Section~\ref{section:background-classical-properties}. Specifically,
let $c_1, \ldots, c_n$ be the distinct constant symbols of $\tau_n
\setminus \tau$, and let $T'$ be the $\text{FO}(\tau_n)$ theory
obtained by substituting $c_i$ for the free occurrences of $x_i$ in
$T(\bar{x})$, for each $i \in \{1, \ldots, n\}$. Then we say
\emph{$T(\bar{x})$ is $PSC(k)$ modulo $V$} if $T'$ is $PSC(k)$ modulo
$V$, where $V$ is treated as an $\text{FO}(\tau_n)$ theory. The notion
\emph{$\phi(\bar{x})$ is $PSC(k)$ modulo $V$} is defined similarly.

\begin{example}
Let $\cl{S}$ be the class of all undirected graphs. Given $k \in
\mathbb{N}$, consider the class $\cl{U}_k$ of all graphs of $\cl{S}$
containing a cycle of length $k$ as a subgraph. Clearly, for any graph
$G$ in $\cl{U}_k$, the vertices of any cycle of length $k$ in $G$ form
a $k$-crux of $G$ w.r.t. $\cl{U}_k$.  Hence $\cl{U}_k$ is $PSC(k)$.
It is easy to see that $\cl{U}_k$ is definable by an FO sentence, call
it $\phi$, whereby $\phi$ is $PSC(k)$.
\end{example}

Fix a class $\cl{S}$ of structures. For properties $P_1$ and $P_2$ of
subclasses of $\cl{S}$, we denote by $P_1 \Rightarrow P_2$ that any
subclass of $\cl{S}$ satisfying $P_1$ also satisfies $P_2$. We denote
by $P_1 \Leftrightarrow P_2$ that $P_1 \Rightarrow P_2$ and $P_2
\Rightarrow P_1$.  It is now easy to check the following facts
concerning the $PSC(k)$ subclasses of $\cl{S}$: (i) $PSC(0)$ coincides
with the property of preservation under substructures, so $PSC(0)
\Leftrightarrow PS$ (ii) $PSC(l) \Rightarrow PSC(k)$ for $l \leq
k$. If $\cl{S}$ is any substructure-closed class of structures over a
purely relational vocabulary (a vocabulary that contains only relation
symbols), that contains infinitely many finite structures, then for
each $l$, there exists $k > l$ and a $PSC(k)$ subclass $\cl{U}$ of
$\cl{S}$ such that $\cl{U}$ is not $PSC(l)$ over $\cl{S}$. This is
seen as follows. Given $l$, let $k > l$ be such that there is some
structure of size $k$ in $\cl{S}$, and let $\phi_k$ be the sentence
asserting that there are at least $k$ elements in any model. Clearly
$\phi_k$ is $PSC(k)$ over $\cl{S}$ but not $PSC(l)$ over $\cl{S}$.

Define the property $PSC$ of subclasses of $\cl{S}$ as follows: A
subclass $\cl{U}$ of $\cl{S}$ is \emph{$PSC$ over
  $\cl{S}$}\sindex[symb]{$PSC$} if it is $PSC(k)$ over $\cl{S}$ for
some $k \in \mathbb{N}$. Notationally, $PSC \Leftrightarrow \bigvee_{k
  \ge 0} PSC(k)$.  If $\cl{S}$ is defined by a theory $V$, then the
notions of `a sentence is $PSC$ modulo $V$' and `a theory is $PSC$
modulo $V$' are defined similarly as in Definition~\ref{defn:PSC(k)},
and these notions are extended to formulae and theories with free
variables similarly as done above for $PSC(k)$. The implications
mentioned in the previous paragraph show that $PSC$ generalizes $PS$.
If $\cl{S}$ is any substructure-closed class of purely relational
structures, that contains infinitely many finite structures, then the
strict implications mentioned above show a strictly infinite hierarchy
within $PSC$; whence $PSC$ provides a strict generalization of $PS$.

Suppose that $\cl{S}$ is defined by a theory $V$. Given a $\Sigma^0_2$
sentence $\phi = \exists x_1 \ldots \exists x_k$ $ \forall
\bar{y}~\varphi(x_1, \ldots, x_k, \bar{y})$ and a structure $\mf{A}$
of $\cl{S}$ such that $\mf{A} \models \phi$, any set of
\emph{witnesses}\sindex[term]{witness} in $\mf{A}$ of the existential
quantifiers of $\phi$, forms a $k$-crux of $\mf{A}$. In particular, if
$a_1, \ldots, a_k$ are witnesses in $\mf{A}$, of the quantifiers
associated with $x_1, \ldots, x_k$ (whence $\mf{A} \models \forall
\bar{y}~\varphi(a_1,a_2,\ldots, a_k,\bar{y})$), then given any
substructure $\mf{B}$ of $\mf{A}$ containing $a_1, \ldots, a_k$, the
latter elements can again be chosen as witnesses in $\mf{B}$, to make
$\phi$ true in $\mf{B}$. Therefore, $\phi$ is $PSC(k)$ (modulo
$V$). It follows that $\Sigma^0_2$ formulae with $k$ existential
quantifiers are also $PSC(k)$ (modulo $V$).

\begin{remark}
Contrary to intuition, witnesses and $k$-cruxes cannot always be
equated!  Consider the sentence $\phi = \exists x \forall y E(x,y)$
and the structure $\mf{A} = (\mathbb{N}, \leq)$, i.e.\ the natural
numbers with the usual ordering. Let $\cl{S}$ be the class of all
structures. Clearly, $\phi$ is $PSC(1)$, $\mf{A} \models \phi$ and the
only witness of the existential quantifier of $\phi$ in $\mf{A}$ is
the minimum element $0 \in \mathbb{N}$.  In contrast, every singleton
subset of $\mathbb{N}$ is a 1-crux of $\mf{A}$, since each
substructure of $\mf{A}$ contains a minimum element under the induced
order; this in turn is due to $\mathbb{N}$ being well-ordered by
$\leq$.  This example shows that there can be models having many more
(even infinitely more) cruxes than witnesses.
\end{remark}

Since $\Sigma^0_1$ and $\Pi^0_1$ formulae are also $\Sigma^0_2$
formulae and the latter are $PSC$, the former are also $PSC$. However,
$\Pi^0_2$ formulae are not necessarily $PSC$. Consider $\phi = \forall
x \exists y E(x, y)$ and consider the model $\mf{A}$ of $\phi$ given
by $\mf{A} = (\mathbb{N}, E^\mf{A} = \{(i, i+1) \mid i \in
\mathbb{N}\})$. It is easy to check that no finite substructure of
$\mf{A}$ models $\phi$; then $\mf{A}$ does not have any $k$-crux for
any $k \in \mathbb{N}$, whence $\phi$ is not $PSC(k)$ for any $k$, and
hence is not $PSC$.

\section{Preservation under $k$-ary covered extensions}\label{section:PCE(k)}
The classical notion of ``extension of a structure'' has a natural
generalization to\linebreak the notion of \emph{extension of a
  collection of structures} as follows.  A structure $\mf{A}$ is said
to be an extension of a collection $R$ of structures if for each
$\mf{B} \in R$, we have $\mf{B} \subseteq \mf{A}$.  We now define a
special kind of extensions of a collection of structures.
\begin{defn}\label{defn:k-ary-covered-ext}
For $k \in \mathbb{N}$, a structure $\mf{A}$ is said to be a
\emph{$k$-ary covered extension}
\sindex[term]{extension!$k$-ary covered} of a non-empty collection $R$
of structures if (i) $\mf{A}$ is an extension of $R$, and (ii) for
every subset $C$ of the universe of $\mf{A}$, of size at most $k$,
there is a structure in $R$ that contains $C$.  We call $R$ a
\emph{$k$-ary cover}\sindex[term]{$k$-ary cover} of $\mf{A}$.
\end{defn}

\begin{example}
Let $\mf{A}$ be a graph on $n$ vertices and let $R$ be the collection
of all $r$ sized induced subgraphs of $\mf{A}$, where $1 \leq r \leq
n$. Then $\mf{A}$ is a $k$-ary covered extension of $R$ for every $k$
in $\{0, \ldots, r\}$.
\end{example}

\begin{remark}\label{remark:some-obs-about-k-ary-covered-ext}
Note that a $0$-ary covered extension of $R$ is simply an extension of
$R$.  For $k > 0$, the universe of a $k$-ary covered extension of $R$
is necessarily the union of the universes of the structures in $R$.
However, different $k$-ary covered extensions of $R$ can differ in the
interpretation of predicates (if any) of arity greater than $k$.  Note
also that a $k$-ary covered extension of $R$ is an $l$-ary covered
extension of $R$ for every $l \in \{0, \ldots, k\}$.
\end{remark}

\begin{defn}\label{defn:PCE(k)}
Let $\cl{S}$ be a class of structures and $k \in \mathbb{N}$. A
subclass $\cl{U}$ of $\cl{S}$ is said to be \emph{preserved under
  $k$-ary covered extensions}\sindex[term]{preservation under!$k$-ary
  covered extensions} over $\cl{S}$, abbreviated as \emph{$\cl{U}$ is
  $PCE(k)$ over $\cl{S}$}, if for every collection $R$ of structures
of $\cl{U}$, if $\mf{A}$ is a $k$-ary covered extension of $R$ and
$\mf{A} \in \cl{S}$, then $\mf{A} \in \cl{U}$. Given theories $V$ and
$T$, we say \emph{$T$ is $PCE(k)$ modulo $V$}\sindex[symb]{$PCE(k)$}
if the class of models of $T \cup V$ is $PCE(k)$ over the class of
models of $V$. For a sentence $\phi$, we say \emph{$\phi$ is $PCE(k)$
  modulo $V$} if the theory $\{\phi\}$ is $PCE(k)$ modulo $V$.
\end{defn}

As in the previous subsection, if any of $\cl{S}$ or $V$ is clear from
context, then we skip mentioning its associated qualifier.  Again as
in the previous section, given a theory $T(\bar{x})$ and a formula
$\phi(\bar{x})$ each of whose free variables is among $\bar{x}$, we
can define the notions of `$T(\bar{x})$ is $PCE(k)$ modulo $V$' and
`$\phi(\bar{x})$ is $PCE(k)$ modulo $V$' analogously to the
corresponding notions in the context of $PSC(k)$.

The following lemma establishes the duality between $PSC(k)$ and
$PCE(k)$, generalizing the duality between $PS$ and $PE$ given by
Lemma \ref{lemma:PS-PE-duality}.

\begin{lemma}[$PSC(k)$-$PCE(k)$ duality]\label{lemma:PSC(k)-PCE(k)-duality}
Let $\cl{S}$ be a class of structures, $\cl{U}$ be a subclass of
$\cl{S}$ and $\overline{\cl{U}}$ be the complement of $\cl{U}$ in
$\cl{S}$. Then $\cl{U}$ is $PSC(k)$ over $\cl{S}$ iff
$\overline{\cl{U}}$ is $PCE(k)$ over $\cl{S}$, for each $k \in
\mathbb{N}$. In particular, if $\,\cl{S}$ is defined by a theory $V$,
then a sentence $\phi$ is $PSC(k)$ modulo $V$ iff $\neg \phi$ is
$PCE(k)$ modulo $V$.
\end{lemma}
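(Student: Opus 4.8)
The plan is to prove the biconditional in its contrapositive form, showing that $\cl{U}$ \emph{fails} to be $PSC(k)$ over $\cl{S}$ if and only if $\overline{\cl{U}}$ \emph{fails} to be $PCE(k)$ over $\cl{S}$; this is cleaner than arguing the positive statements directly, because the two failure conditions are almost verbatim duals of one another. Unwinding Definition~\ref{defn:PSC(k)}, the failure of $PSC(k)$ for $\cl{U}$ means: there is a structure $\mf{A} \in \cl{U}$ such that for \emph{every} subset $C$ of $\univ{\mf{A}}$ with $|C| \le k$, there is a substructure $\mf{B}_C \subseteq \mf{A}$ containing $C$ with $\mf{B}_C \in \cl{S} \setminus \cl{U} = \overline{\cl{U}}$. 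Unwinding Definitions~\ref{defn:k-ary-covered-ext} and~\ref{defn:PCE(k)}, the failure of $PCE(k)$ for $\overline{\cl{U}}$ means: there is a non-empty collection $R$ of structures from $\overline{\cl{U}}$ and a $k$-ary covered extension $\mf{A}$ of $R$ with $\mf{A} \in \cl{S}$ but $\mf{A} \notin \overline{\cl{U}}$, i.e.\ $\mf{A} \in \cl{U}$.

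For the forward direction, I would take the witnessing $\mf{A}$ for the failure of $PSC(k)$ and set $R = \{\, \mf{B}_C \mid C \subseteq \univ{\mf{A}},\ |C| \le k \,\}$, a collection of structures all lying in $\overline{\cl{U}}$. Each $\mf{B}_C$ is a substructure of $\mf{A}$, so $\mf{A}$ is an extension of $R$ (condition (i) of Definition~\ref{defn:k-ary-covered-ext}); and for any $C'$ of size at most $k$ the member $\mf{B}_{C'} \in R$ contains $C'$, which is exactly condition (ii). Hence $\mf{A}$ is a $k$-ary covered extension of $R$; it lies in $\cl{S}$ (as $\mf{A} \in \cl{U} \subseteq \cl{S}$) yet in $\cl{U}$, witnessing that $\overline{\cl{U}}$ is not $PCE(k)$.

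For the converse, I would take the $R$ and $\mf{A}$ witnessing the failure of $PCE(k)$ for $\overline{\cl{U}}$, and claim that this same $\mf{A} \in \cl{U}$ admits no $k$-crux, so that $\cl{U}$ is not $PSC(k)$. Indeed, given any candidate $C \subseteq \univ{\mf{A}}$ with $|C| \le k$, condition (ii) supplies some $\mf{B} \in R$ containing $C$; by condition (i), $\mf{B} \subseteq \mf{A}$, and $\mf{B} \in \overline{\cl{U}} \subseteq \cl{S}$. Thus $\mf{B}$ is a substructure of $\mf{A}$ in $\cl{S}$ that contains $C$ but lies outside $\cl{U}$, so $C$ fails the crux condition. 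Since $C$ was arbitrary, $\mf{A}$ has no $k$-crux. The ``in particular'' clause then follows by instantiating $\cl{S}$ as the class of models of $V$ and $\cl{U}$ as the class of models of $\{\phi\} \cup V$: its complement in $\cl{S}$ is exactly the class of models of $\{\neg\phi\} \cup V$, so the biconditional specializes to ``$\phi$ is $PSC(k)$ modulo $V$ iff $\neg\phi$ is $PCE(k)$ modulo $V$'', recovering and generalizing Lemma~\ref{lemma:PS-PE-duality} at $k=0$.

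The argument is a direct unwinding of definitions and I do not anticipate a deep obstacle; the one point that genuinely needs care is the verification, in the forward direction, that the collection $R$ of witnessing substructures is simultaneously (a) non-empty --- guaranteed since the choice $C = \emptyset$ already yields a member $\mf{B}_\emptyset$ --- and (b) a genuine $k$-ary cover, which works precisely because each $\mf{B}_C$ was chosen to contain its index set $C$, so that the ``for all $C$, there exists $\mf{B}_C$'' quantifier structure of the $PSC(k)$-failure serves double duty as the covering condition (ii) of Definition~\ref{defn:k-ary-covered-ext}.
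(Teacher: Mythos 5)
Your proof is correct and is essentially the paper's own argument: the paper proves each direction by contradiction, which amounts to exactly your contrapositive equivalence, with the same construction of the cover $R=\{\mf{B}_C\}$ in one direction and the same "no candidate $C$ can be a crux" argument in the other. The reorganization into a single contrapositive biconditional is a cosmetic difference only.
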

\begin{proof} 

\underline{If:}~ Suppose $\overline{\cl{U}}$ is $PCE(k)$ over $\cl{S}$
but $\cl{U}$ is not $PSC(k)$ over $\cl{S}$. Then there exists $\mf{A}
\in \cl{U}$ such that for every set $C$ of at most $k$ elements of
$\mf{A}$, there is a substructure $\mf{B}_C$ of $\mf{A}$ that (i)
contains $C$, and (ii) belongs to $\cl{S} \setminus \cl{U}$,
i.e.\ belongs to $\overline{\cl{U}}$. Then $R = \{\mf{B}_C \mid C
~\text{is a subset of }\mf{A}, ~\text{of size at most }k\}$ is a
$k$-ary cover of $\mf{A}$. Since $\overline{\cl{U}}$ is $ PCE(k)$ over
$\cl{S}$, it follows that $\mf{A} \in \overline{\cl{U}}$ -- a
contradiction.

\underline{Only If}:~ Suppose $\cl{U}$ is $PSC(k)$ over $\cl{S}$ but
$\overline{\cl{U}}$ is not $PCE(k)$ over $\cl{S}$. Then there exists
$\mf{A} \in \cl{U}$ and a $k$-ary cover $R$ of $\mf{A}$ such that
every structure $\mf{B}$ of $R$ belongs to $\overline{\cl{U}}$. Since
$\cl{U}$ is $PSC(k)$ over $\cl{S}$, there exists a $k$-crux $C$ of
$\mf{A}$ w.r.t. $\cl{U}$ over $\cl{S}$. Consider the structure
$\mf{B}_C \in R$ that contains $C$ -- this exists since $R$ is a
$k$-ary cover of $\mf{A}$. Then $\mf{B}_C \in \cl{U}$ since $C$ is a
$k$-crux of $\mf{A}$ -- a contradiction.
\end{proof}

Fix a class $\cl{S}$ of structures.  Analogous to the notion of $PSC$,
we define the notion of $PCE$\sindex[symb]{$PCE$} as $PCE
\Leftrightarrow \bigvee_{k \ge 0} PCE(k)$.  The notions of a class, a
sentence, a formula, a theory (without free variables) and a theory
with free variables being $PCE$ are defined analogously to
corresponding notions for $PSC$.  Then from the discussion in
Section~\ref{section:PSC(k)}, and from
Remark~\ref{remark:some-obs-about-k-ary-covered-ext} and
Lemma~\ref{lemma:PSC(k)-PCE(k)-duality} above, we see that (i) $PCE(0)
\Leftrightarrow PE$, (ii) $PCE(l) \Rightarrow PCE(k)$ for $l \leq k$,
and (iii) a subclass $\cl{U}$ of $\cl{S}$ is $PSC$ over $\cl{S}$ iff
the complement $\overline{\cl{U}}$ of $\cl{U}$ in $\cl{S}$, is $PCE$
over $\cl{S}$.  Further, if $\cl{S}$ is defined by a theory $V$, then
all $\Pi^0_2$ formulae having at most $k$ universal quantifiers are
$PCE(k)$ (modulo $V$) and hence $PCE$, whereby all $\Sigma^0_1$ and
$\Pi^0_1$ formulae are $PCE$ as well. However $\Sigma^0_2$ formulae,
in general, are not $PCE$ since, as seen towards the end of
Section~\ref{section:PSC(k)}, $\Pi^0_2$ formulae are, in general, not
$PSC$.

In the next two chapters, we present characterizations of the $PSC(k)$
and $PCE(k)$ properties, and some natural variants of these. Our
results and methods of proof are in general very different in the case
of sentences vis-\'a-vis the case of theories. Hence, we deal with the
two cases separately.

\chapter{Characterizations: the case of sentences}\label{chapter:the-case-of-sentences}

\section{The generalized {\lt} theorem for sentences -- $\glt{k}$}\label{section:glt-for-sentences}

The central result of this section is as follows. 
This result, for the case of sentences, is called \emph{the
  generalized {\lt} theorem for sentences at level $k$}, or simply
\emph{the generalized {\lt} theorem for
  sentences}\sindex[term]{generalized {\lt} theorem}, and is denoted
as $\glt{k}$\sindex[symb]{$\glt{k}$}. Observe that for $k = 0$ below,
we get exactly the {\lt} theorem for sentences.

\begin{theorem}\label{theorem:glt(k)}
Given a theory $V$, the following are true for each $k \in
\mathbb{N}$.
\begin{enumerate}[nosep]
\item A formula $\phi(\bar{x})$ is $PSC(k)$ modulo $V$ iff
  $\phi(\bar{x})$ is equivalent modulo $V$ to a $\Sigma^0_2$ formula
  whose free variables are among $\bar{x}$, and that has $k$
  existential quantifiers.\label{theorem:glt(k)-subst}
\item A formula $\phi(\bar{x})$ is $PCE(k)$ modulo $V$ iff
  $\phi(\bar{x})$ is equivalent modulo $V$ to a $\Pi^0_2$ formula
  whose free variables are among $\bar{x}$, and that has $k$ universal
  quantifiers.\label{theorem:glt(k)-ext}
\end{enumerate}
\end{theorem}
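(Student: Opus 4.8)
The plan is to reduce to the case of sentences and then prove only the substructural version~(\ref{theorem:glt(k)-subst}), obtaining~(\ref{theorem:glt(k)-ext}) by duality. By the definition of ``$PSC(k)$ modulo $V$'' for formulae, a formula $\phi(\bar{x})$ with $\bar{x}=(x_1,\dots,x_n)$ is $PSC(k)$ modulo $V$ iff the sentence $\phi'$ over $\tau_n$ obtained by replacing each $x_i$ by a fresh constant $c_i$ is $PSC(k)$ modulo $V$; likewise $\phi(\bar{x})$ is equivalent modulo $V$ to an $\exists^k\forall^*$ formula iff $\phi'$ is equivalent modulo $V$ to an $\exists^k\forall^*$ sentence. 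As $\tau_n$ is again finite, it suffices to treat sentences $\phi$. Part~(\ref{theorem:glt(k)-ext}) then follows from part~(\ref{theorem:glt(k)-subst}) applied to $\neg\phi$ via Lemma~\ref{lemma:PSC(k)-PCE(k)-duality}: $\phi$ is $PCE(k)$ modulo $V$ iff $\neg\phi$ is $PSC(k)$ modulo $V$, iff $\neg\phi$ is equivalent modulo $V$ to some $\exists\bar{x}\forall\bar{y}\,\alpha$, iff $\phi$ is equivalent modulo $V$ to $\forall\bar{x}\exists\bar{y}\,\neg\alpha$. The easy (``if'') direction of~(\ref{theorem:glt(k)-subst}) is the observation at the end of Section~\ref{section:PSC(k)}: if $\phi$ is equivalent modulo $V$ to $\exists\bar{x}\forall\bar{y}\,\alpha(\bar{x},\bar{y})$, then in any $\mf{A}\models\phi\cup V$ a witnessing tuple $\bar{a}$ for the existential block is a $k$-crux, since $\forall\bar{y}\,\alpha(\bar{a},\bar{y})$ is universal and survives to every substructure containing $\bar{a}$.

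For the hard direction, let $\phi$ be a sentence that is $PSC(k)$ modulo $V$. The delicate point is to produce a \emph{single} $\exists^k\forall^*$ sentence, since a conjunction of two such sentences would require $2k$ leading existentials. I would therefore collect candidate matrices into the set $\Delta=\{\chi(\bar{x})\in\Pi^0_1 : |\bar{x}|=k,\ V\cup\{\exists\bar{x}\,\chi(\bar{x})\}\vdash\phi\}$, observing that $\Delta$ is closed under disjunction: a disjunction of universal formulae is again equivalent to a universal formula, and if each $\exists\bar{x}\,\chi_i$ entails $\phi$ modulo $V$ then so does $\exists\bar{x}\bigvee_i\chi_i$. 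The goal becomes the following Subclaim: the theory $\{\phi\}\cup V\cup\{\forall\bar{x}\,\neg\chi(\bar{x}) : \chi\in\Delta\}$ is inconsistent. Granting it, compactness yields finitely many $\chi_1,\dots,\chi_m\in\Delta$ with $\phi$ entailing $\bigvee_i\exists\bar{x}\,\chi_i$ modulo $V$; putting $\chi:=\bigvee_i\chi_i\in\Delta$, the sentence $\psi:=\exists\bar{x}\,\chi(\bar{x})$ is $\exists^k\forall^*$, is entailed by $\phi$ modulo $V$, and (since $\chi\in\Delta$) entails $\phi$ modulo $V$; hence $\phi$ and $\psi$ are equivalent modulo $V$.

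To prove the Subclaim I would argue by contradiction through a saturated model. Suppose the displayed theory has a model; by Proposition~\ref{prop:saturation-results}(\ref{prop:saturation-results:exis-of-sat-elem-ext}) it has a $\mu$-saturated elementary extension $\mf{A}$ for some uncountable $\mu$, and as the theory is first order, $\mf{A}$ still models it. Since $\mf{A}\models\phi\cup V$ and $\phi$ is $PSC(k)$ modulo $V$, $\mf{A}$ has a $k$-crux, which (padding with arbitrary elements if it has fewer than $k$ members, still a $k$-crux) we take to be a $k$-tuple $\bar{a}$; set $p(\bar{x})=\pitp{\mf{A}}{\bar{a}}(\bar{x})$. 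The heart of the matter is the Key Lemma, $V\cup\{\exists\bar{x}\,\textstyle\bigwedge p(\bar{x})\}\vdash\phi$. To see it, let $\mf{B}\models V$ realize $p$ by a tuple $\bar{b}$ with $\mf{B}\models\neg\phi$; by Theorem~\ref{theorem:DLS} I pass to a countable elementary substructure, so $|\mf{B}|<\mu$. Realizing $p$ means every universal formula true of $\bar{a}$ in $\mf{A}$ holds of $\bar{b}$ in $\mf{B}$, equivalently every existential $\tau_k$-sentence true in $(\mf{B},\bar{b})$ holds in $(\mf{A},\bar{a})$; since $(\mf{A},\bar{a})$ is $\mu$-saturated (Proposition~\ref{prop:saturation-results}(\ref{prop:saturation-results:expansion-preserves-saturation})) with $\mu\ge|\mf{B}|$, Proposition~\ref{prop:saturation-results}(\ref{prop:saturation-results:isomorphic-embedding}) embeds $(\mf{B},\bar{b})$ into $(\mf{A},\bar{a})$ by some $h$ with $h(\bar{b})=\bar{a}$. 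Then $h(\mf{B})$ is a substructure of $\mf{A}$ that models $V$ and contains the crux $\bar{a}$, so $h(\mf{B})\models\phi$ and thus $\mf{B}\models\phi$, a contradiction; this proves the Key Lemma. Compactness now supplies a finite conjunction $\chi_0$ of formulae of $p$ with $V\cup\{\exists\bar{x}\,\chi_0\}\vdash\phi$, i.e.\ $\chi_0\in\Delta$; but $\mf{A}\models\chi_0(\bar{a})$, contradicting $\mf{A}\models\forall\bar{x}\,\neg\chi_0(\bar{x})$. This establishes the Subclaim.

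The main obstacle, and the step that genuinely goes beyond the classical {\lt} proof, is producing \emph{exactly} $k$ leading existentials rather than an uncontrolled number: naively intersecting all $\Sigma^0_2$ consequences of $\phi$ blows up the leading block. Two ideas overcome this: first, a single $k$-crux simultaneously realizes the \emph{entire} $\Pi^0_1$-type $p$, so all the needed universal information sits under one existential block; and second, $\Delta$ is closed under \emph{disjunction} with the block width held fixed at $k$, which lets compactness collapse finitely many candidates into one $\exists^k\forall^*$ sentence. Saturation of $\mf{A}$ is essential precisely because the crux property is not first order and need not persist to arbitrary elementary extensions; saturation converts type-containment into an honest embedding \emph{into $\mf{A}$ itself}, where the crux hypothesis applies. (I would expect the alternative route sketched in the introduction to replace this embedding step by a union-of-ascending-chain construction.)
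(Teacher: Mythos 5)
Your proof is correct, but it is organized dually to the paper's. The paper's first proof attacks the \emph{extensional} version directly: it forms the set $\Gamma$ of all $\forall^k\exists^*$ consequences of $\{\phi\}$ modulo $V$, proves via Lemma~\ref{lemma:k-ary-covers-for-saturated-models} that any $\mu$-saturated model of $V\cup\Gamma$ admits a $k$-ary cover by models of $V\cup\{\phi\}$ (each cover element obtained by embedding a small model of $V\cup\{\phi\}\cup\pitp{\mf{A}}{\bar{a}}$ into $\mf{A}$ using Proposition~\ref{prop:saturation-results}(\ref{prop:saturation-results:isomorphic-embedding})), invokes $PCE(k)$, and then collapses $\Gamma$ to a single sentence by compactness plus closure of $\forall^k\exists^*$ under finite \emph{conjunction}; part~(\ref{theorem:glt(k)-subst}) is then read off by duality. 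You instead attack the \emph{substructural} version directly: your Key Lemma — that the $\Pi^0_1$-type of a $k$-crux of a saturated model of $V\cup\{\phi\}$ entails $\phi$ modulo $V$ — is proved by exactly the same saturation-driven embedding, and the collapse to a single $\exists^k\forall^*$ sentence uses closure of your set $\Delta$ under \emph{disjunction} of universal matrices with the block width pinned at $k$. The two proofs use the same engine in dual positions, and each buys something: the paper's consequence-set formulation lifts verbatim to theories on the extensional side (Theorem~\ref{theorem:ext-chars}(\ref{theorem:char-of-PCE(k)-theories})), since arbitrary conjunctions of $\forall^k\exists^*$ sentences stay in the class, whereas your crux-type formulation is precisely what the paper later needs on the substructural side for theories — your Key Lemma is the sentence-level instance of ``crux determination'' (Lemma~\ref{lemma:crux-determination-using-saturated-models}, Lemma~\ref{lemma:char-of-distinguished-k-cruxes}), and your disjunction over crux types is the finite shadow of the infinitary disjunction in Lemma~\ref{lemma:inf-characterization-of-PSC-var(k)}. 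Your closing expectation is also right: the paper's second proof (Section~\ref{subsection:proof-of-glt-using-chains}) replaces the embedding-into-a-saturated-model step by an elementary-chain construction. The only loose end is the degenerate case where $V\cup\{\phi\}$ is unsatisfiable, in which your compactness extraction could return an empty disjunction; one line fixes this (take $\chi$ to be $x_1\neq x_1$, which then lies in $\Delta$), and the paper likewise dispatches it in a sentence.
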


Recall that $PSC \Leftrightarrow \bigvee_{k \ge 0} PSC(k)$ and $PCE
\Leftrightarrow \bigvee_{k \ge 0} PCE(k)$. We then have the following
corollary.

\begin{corollary}\label{corollary:glt-PSC-PCE}
Given a theory $V$, the following are true.
\begin{enumerate}[nosep]
\item A formula $\phi(\bar{x})$ is $PSC$ modulo $V$ iff
  $\phi(\bar{x})$ is equivalent modulo $V$ to a $\Sigma^0_2$ formula
  whose free variables are among $\bar{x}$.
\item A formula $\phi(\bar{x})$ is $PCE$ modulo $V$ iff
  $\phi(\bar{x})$ is equivalent modulo $V$ to a $\Pi^0_2$ formula
  whose free variables are among $\bar{x}$.
\end{enumerate}
\end{corollary}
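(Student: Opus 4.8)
The plan is to derive this directly from the parameterized theorem $\glt{k}$ (Theorem~\ref{theorem:glt(k)}), by taking a disjunction over all $k \in \mathbb{N}$. The two facts I would invoke at the outset are purely definitional. First, by the definition of $PSC$ (respectively $PCE$) recalled just above the corollary, $\phi(\bar{x})$ is $PSC$ modulo $V$ exactly when it is $PSC(k)$ modulo $V$ for some $k \in \mathbb{N}$, and similarly $\phi(\bar{x})$ is $PCE$ modulo $V$ exactly when it is $PCE(k)$ modulo $V$ for some $k$. Second, by the convention fixed in Section~\ref{section:background:sigma-n-and-pi-n}, a $\Sigma^0_2$ formula is precisely an $\exists^k \forall^*$ formula for some $k \ge 0$ (with $k$ the number of existential quantifiers in its leading block, the case $k = 0$ giving a universal formula), and dually a $\Pi^0_2$ formula is precisely a $\forall^k \exists^*$ formula for some $k \ge 0$.

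For part (1), I would chain the following equivalences. The formula $\phi(\bar{x})$ is $PSC$ modulo $V$ iff $\phi(\bar{x})$ is $PSC(k)$ modulo $V$ for some $k$; by Theorem~\ref{theorem:glt(k)}(\ref{theorem:glt(k)-subst}), this holds iff for some $k$, $\phi(\bar{x})$ is equivalent modulo $V$ to a $\Sigma^0_2$ formula with $k$ existential quantifiers and free variables among $\bar{x}$; and by the second fact above, the existence of such a $k$ is exactly the statement that $\phi(\bar{x})$ is equivalent modulo $V$ to some $\Sigma^0_2$ formula with free variables among $\bar{x}$. Part (2) is identical with $PSC$, $PSC(k)$, $\Sigma^0_2$, and Theorem~\ref{theorem:glt(k)}(\ref{theorem:glt(k)-subst}) replaced throughout by $PCE$, $PCE(k)$, $\Pi^0_2$, and Theorem~\ref{theorem:glt(k)}(\ref{theorem:glt(k)-ext}).

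Since all the real content is already packaged in $\glt{k}$, there is no substantive obstacle here; the proof is a bookkeeping argument. The one point I would check explicitly is the matching of the parameter $k$ across the two directions of the biconditional: in the forward direction the $k$ supplied by $PSC(k)$ becomes the existential-quantifier count of the witnessing $\Sigma^0_2$ formula, while in the backward direction one reads off $k$ as the leading existential-block length of a given equivalent $\Sigma^0_2$ formula and feeds it back into $\glt{k}$ to conclude $PSC(k)$, hence $PSC$. I would also note that the side condition ``whose free variables are among $\bar{x}$'' is preserved automatically, since Theorem~\ref{theorem:glt(k)} already tracks free variables in both directions of its statement.
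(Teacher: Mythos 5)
Your proposal is correct and is exactly the paper's (implicit) argument: the paper states the corollary immediately after recalling $PSC \Leftrightarrow \bigvee_{k \ge 0} PSC(k)$ and $PCE \Leftrightarrow \bigvee_{k \ge 0} PCE(k)$, so the intended proof is precisely the bookkeeping reduction to Theorem~\ref{theorem:glt(k)} that you describe, including the identification of $\Sigma^0_2$ (resp.\ $\Pi^0_2$) formulae with $\exists^k \forall^*$ (resp.\ $\forall^k \exists^*$) formulae for some $k$. Your explicit check that the parameter $k$ transfers correctly in both directions of the biconditional is a fair point of care, but it raises no issue beyond what the theorem already provides.
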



\vspace{2pt} The rest of this section is devoted to proving
Theorem~\ref{theorem:glt(k)}. We present two proofs of this result,
one that uses $\lambda$-saturated structures (Section
\ref{subsection:proof-of-glt-using-saturated-structures}), and the
other that uses ascending chains of structures (Section
\ref{subsection:proof-of-glt-using-chains}).


\subsection{Proof of $\glt{k}$ using $\lambda$-saturated structures}\label{subsection:proof-of-glt-using-saturated-structures}

Given theories $T$ and $V$, we say that $\Gamma$ is the \emph{set of
  $\forall^k \exists^*$ consequences of $T$ modulo $V$} if $\Gamma =
\{\varphi \mid \varphi~\text{is a}~\forall^k \exists^*~\text{sentence,
  and}~T~\text{entails}~\varphi~\text{modulo}~V\}$. The following
lemma is key to the proof.

\begin{lemma}\label{lemma:k-ary-covers-for-saturated-models}
Let $V$ and $T$ be consistent theories, and $k \in \mathbb{N}$.  Let
$\Gamma$ be the set of $\forall^k \exists^*$ consequences of $T$
modulo $V$.  Then for all infinite cardinals $\mu$, for every
$\mu$-saturated structure $\mf{A}$ that models $V$, we have that
$\mf{A} \models \Gamma$ iff there exists a $k$-ary cover $R$ of
$\mf{A}$ such that \linebreak $\mf{B} \models (V \cup T)\,$ for every
$\mf{B} \in R$.
\end{lemma}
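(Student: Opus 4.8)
The plan is to prove the two implications separately. The backward direction is a direct unwinding of definitions and uses neither $\mu$-saturation nor the hypothesis $\mf{A} \models \Gamma$; all the real work is in the forward direction.

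For ($\Leftarrow$), suppose $R$ is a $k$-ary cover of $\mf{A}$ with $\mf{B} \models V \cup T$ for every $\mf{B} \in R$, and let $\varphi = \forall x_1 \cdots \forall x_k\, \exists \bar{z}\, \psi(\bar{x}, \bar{z}) \in \Gamma$, where $\psi$ is quantifier-free. Since $V \cup T \vdash \varphi$, every $\mf{B} \in R$ satisfies $\varphi$. I would fix an arbitrary $k$-tuple $\bar{a}$ from $\mf{A}$, use clause (ii) of the definition of a cover to find $\mf{B} \in R$ containing the entries of $\bar{a}$, obtain from $\mf{B} \models \varphi$ a tuple $\bar{b}$ in $\mf{B}$ with $\mf{B} \models \psi(\bar{a}, \bar{b})$, and then push this witness up to $\mf{A}$ using that quantifier-free formulae are preserved in both directions between a structure and its substructures. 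As $\bar{a}$ is arbitrary, $\mf{A} \models \varphi$, so $\mf{A} \models \Gamma$.

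For ($\Rightarrow$), assume $\mf{A}$ is $\mu$-saturated, $\mf{A} \models V$ and $\mf{A} \models \Gamma$. It suffices to produce, for every subset $C$ of $\univ{\mf{A}}$ of size at most $k$, a substructure $\mf{B}_C \subseteq \mf{A}$ with $C \subseteq \univ{\mf{B}_C}$ and $\mf{B}_C \models V \cup T$; the family $R$ of all these $\mf{B}_C$ is then the required cover (non-empty, each member a substructure of $\mf{A}$, and covering every subset of size at most $k$ by construction). Fixing $C$ and enumerating it, with padding if $|C| < k$, as a $k$-tuple $\bar{a}$, I would pass to the expansion $(\mf{A}, \bar{a})$ over $\tau_k = \tau \cup \{c_1, \ldots, c_k\}$, which is still $\mu$-saturated by Proposition~\ref{prop:saturation-results}(\ref{prop:saturation-results:expansion-preserves-saturation}), and look for a $\tau_k$-structure $(\mf{C}, \bar{d})$ with $\mf{C} \models V \cup T$ that embeds into $(\mf{A}, \bar{a})$. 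The $\tau$-reduct of the image is then a substructure of $\mf{A}$ containing $C$ and isomorphic to $\mf{C}$, hence a model of $V \cup T$. By Proposition~\ref{prop:saturation-results}(\ref{prop:saturation-results:isomorphic-embedding}), such an embedding exists once (a) every existential $\tau_k$-sentence true in $(\mf{C}, \bar{d})$ is true in $(\mf{A}, \bar{a})$, and (b) $|\mf{C}| \le \mu$; and (a) is equivalent to asking that $(\mf{C}, \bar{d})$ satisfy the set $\Delta$ of all universal $\tau_k$-sentences true in $(\mf{A}, \bar{a})$.

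The whole matter thus reduces to the consistency of $V \cup T \cup \Delta$, which is the crux of the argument and the only point where $\mf{A} \models \Gamma$ enters. By the compactness theorem (Theorem~\ref{theorem:compactness}), together with the fact that a finite conjunction of universal sentences is again equivalent to a universal sentence true in $(\mf{A}, \bar{a})$, it is enough to show $V \cup T \cup \{\delta\}$ consistent for each universal $\tau_k$-sentence $\delta = \forall \bar{w}\, \eta(\bar{w}, \bar{c})$ with $(\mf{A}, \bar{a}) \models \delta$. If this failed, then $V \cup T \vdash \neg \delta$; since $c_1, \ldots, c_k$ do not occur in $V \cup T$, generalizing on these fresh constants yields $V \cup T \vdash \forall \bar{x}\, \exists \bar{w}\, \neg \eta(\bar{w}, \bar{x})$, where $\bar{x}$ replaces $\bar{c}$. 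This sentence is $\forall^k \exists^*$ and entailed by $T$ modulo $V$, so it belongs to $\Gamma$; hence $\mf{A}$ models it, and instantiating $\bar{x} := \bar{a}$ gives $(\mf{A}, \bar{a}) \models \neg \delta$, contradicting $(\mf{A}, \bar{a}) \models \delta$. So $V \cup T \cup \Delta$ is consistent. Finally I would take a model of it of size at most $\mu$ via the downward L\"owenheim-Skolem theorem (Theorem~\ref{theorem:DLS}), legitimate since $\tau_k$ is finite and $\aleph_0 \le \mu$; call it $(\mf{C}, \bar{d})$. Conditions (a) and (b) then hold, so the desired embedding exists and its image reduct is $\mf{B}_C$. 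I expect the bookkeeping around generalizing on the fresh constants and the reformulation of (a) in terms of $\Delta$ to be the only delicate points, the rest being routine invocations of the quoted saturation, compactness, and L\"owenheim-Skolem results.
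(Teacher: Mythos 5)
Your proposal is correct and follows essentially the same route as the paper's proof: the backward direction is the same unwinding of the cover definition, and the forward direction reduces, exactly as in the paper, to the consistency of $V \cup T$ together with the universal theory of $(\mf{A}, \bar{a})$ (your $\Delta$ is just the sentence-form of the paper's $\Pi^0_1$-type $\mathsf{tp}_{\Pi, \mf{A}, \bar{a}}(x_1, \ldots, x_k)$), established by the same compactness-plus-generalization-on-fresh-constants argument that produces a $\forall^k\exists^*$ member of $\Gamma$ violated by $\mf{A}$, followed by the same applications of downward L\"owenheim-Skolem and Proposition~\ref{prop:saturation-results}(\ref{prop:saturation-results:isomorphic-embedding}).
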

\begin{proof} 
The `If' direction is easy: for each $\mf{B} \in R$, since $\mf{B}
\models (V \cup T)$, we have $\mf{B} \models \varphi$ for each
$\varphi \in \Gamma$. From the discussion towards the end of
Section~\ref{section:PCE(k)}, any $\forall^k \exists^*$ sentence is
$PCE(k)$ modulo $V$. Then since $R$ is a $k$-ary cover of $\mf{A}$, we
have $\mf{A} \models \varphi$ for each $\varphi \in \Gamma$.

For the `Only If' direction, let the vocabulary of $V$ and $T$ be
$\tau$. We show that for every $k$-tuple $\bar{a}$ of $\mf{A}$, there
is a substructure $\mf{A}_{\bar{a}}$ of $\mf{A}$ containing (the
elements of) $\bar{a}$ such that $\mf{A}_{\bar{a}} \models (V \cup
T)$. Then the set $R = \{\mf{A}_{\bar{a}} \mid \bar{a}~\text{is
  a}~k\text{-tuple of}~\mf{A}\}$ forms the desired $k$-ary cover of
$\mf{A}$.  To show the existence of $\mf{A}_{\bar{a}}$, it suffices to
show that there exists a $\tau$-structure $\mf{B}$ such that (i)
$|\mf{B}| \le \mu$, (ii) $\mf{B} \models (V \cup T)$, and (iii)
the $\Pi^0_1$-type of $\bar{a}$ in $\mf{A}$, i.e. $\mathsf{tp}_{\Pi,
  \mf{A}, \bar{a}}(x_1, \ldots, x_k)$, is realized in $\mf{B}$ by some
$k$-tuple, say $\bar{b}$. Then every $\Sigma^0_1$ sentence of
FO$(\tau_k)$ true in $(\mf{B}, \bar{b})$ is also true in $(\mf{A},
\bar{a})$. Since $\mf{A}$ is $\mu$-saturated, we have by
Proposition~\ref{prop:saturation-results}(\ref{prop:saturation-results:expansion-preserves-saturation}),
that $(\mf{A}, \bar{a})$ is also $\mu$-saturated.  There exists
then, an isomorphic embedding $f: (\mf{B}, \bar{b}) \rightarrow
(\mf{A}, \bar{a})$ by
Proposition~\ref{prop:saturation-results}(\ref{prop:saturation-results:isomorphic-embedding}).
Whereby the $\tau$-reduct of the image of $(\mf{B},
\bar{b})$ under $f$ can serve as $\mf{A}_{\bar{a}}$.  The proof is
therefore completed by showing the existence of $\mf{B}$ with the
above properties.

Suppose $Z(x_1, \ldots, x_k) = V \cup T \cup \mathsf{tp}_{\Pi, \mf{A},
  \bar{a}}(x_1, \ldots, x_k)$ is inconsistent. By the compactness
theorem, there is a finite subset of $Z(x_1, \ldots, x_k)$ that is
inconsistent. Since $\mathsf{tp}_{\Pi, \mf{A}, \bar{a}}(x_1, \ldots,
x_k)$ is closed under taking finite conjunctions and since each of
$\mathsf{tp}_{\Pi, \mf{A}, \bar{a}}(x_1, \ldots, x_k)$, $V$ and $T$ is
consistent, there is a formula $\psi(x_1, \ldots, x_k)$ in
$\mathsf{tp}_{\Pi, \mf{A}, \bar{a}}(x_1, \ldots, x_k)$ such that $V
\cup T \cup \{\psi(x_1, \ldots, x_k)\}$ is inconsistent. In other
words, $(V \cup T) \vdash \neg \psi(x_1, \ldots, x_k)$.  By
$\forall$-introduction, we have $(V \cup T) \vdash \varphi$, where
$\varphi = \forall x_1 \ldots \forall x_k \neg \psi(x_1, \ldots,
x_k)$.  Observe that $\varphi$ is a $\forall^k \exists^*$ sentence;
then by the definition of $\Gamma$, we have $\varphi \in \Gamma$, and
hence $\mf{A} \models \varphi$. Instantiating the $k$-tuple $(x_1,
\ldots, x_k)$ as $\bar{a}$, we have $(\mf{A}, \bar{a}) \models \neg
\psi(x_1, \ldots, x_k)$, contradicting the fact that $\psi(x_1,
\ldots, x_k) \in \mathsf{tp}_{\Pi, \mf{A}, \bar{a}}(x_1, \ldots,
x_k)$.  Then $Z(x_1, \ldots, x_k)$ must be consistent.  By the
downward L\"owenheim-Skolem theorem, there is a model $(\mf{B},
\bar{b})$ of $Z(x_1, \ldots, x_k)$ of power at most $\mu$; then
$\mf{B}$ is as desired.
\end{proof}

\begin{proof}[Proof of Theorem~\ref{theorem:glt(k)}]

We prove part~(\ref{theorem:glt(k)-ext}) of
Theorem~\ref{theorem:glt(k)}. Part~(\ref{theorem:glt(k)-subst}) of
Theorem~\ref{theorem:glt(k)} follows from the duality of $PSC(k)$ and
$PCE(k)$ given by Lemma~\ref{lemma:PSC(k)-PCE(k)-duality}. Also, we
prove part~(\ref{theorem:glt(k)-ext}) of Theorem~\ref{theorem:glt(k)}
for the case of sentences; the result for formulae follows from
definitions.

Suppose $\phi$ is equivalent modulo $V$ to a $\forall^k \exists^*$
sentence $\varphi$.  That $\varphi$ is $PCE(k)$ modulo $V$ follows
from the discussion towards the end of
Chapter~\ref{chapter:our-properties}. Whereby $\phi$ is $PCE(k)$
modulo $V$.

In the converse direction, suppose $\phi$ is $PCE(k)$ modulo $V$. If
$V \cup \{\phi\}$ is unsatisfiable, we are trivially done. Otherwise,
let $\Gamma$ be the set of $\forall^k \exists^*$ consequences of
$\{\phi\}$ modulo $V$. Then $(V \cup \{\phi\}) \vdash \Gamma$. We show
below that $(V \cup \Gamma) \vdash \phi$, thereby showing that $\phi$
is equivalent to $\Gamma$ modulo V. Then by the compactness theorem,
we have $\phi$ is equivalent to a finite subset of $\Gamma$ modulo
$V$.  Since a finite conjunction of $\forall^k \exists^*$ sentences is
equivalent to a single $\forall^k \exists^*$ sentence, it follows that
$\phi$ is equivalent to a $\forall^k \exists^*$ sentence, completing
the proof.

Suppose $\mf{A} \models (V \cup \Gamma)$. Consider a $\mu$-saturated
elementary extension $\mf{A}^+$ of $\mf{A}$, for some $\mu \ge \omega$
($\mf{A}^+$ exists by
Proposition~\ref{prop:saturation-results}(\ref{prop:saturation-results:exis-of-sat-elem-ext})).
Then $\mf{A}^+ \models (V \cup \Gamma)$.  By
Lemma~\ref{lemma:k-ary-covers-for-saturated-models}, there exists a
$k$-ary cover $R$ of $\mf{A}^+$ such that $\mf{B} \models (V \cup
\{\phi\})$ for every $\mf{B} \in R$. Since $\phi$ is $PCE(k)$ modulo
$V$, it follows that $\mf{A}^+ \models \phi$.  Then since $\mf{A}
\preceq \mf{A}^+$, we have $\mf{A} \models \phi$.
\end{proof}


\subsection{Proof of $\glt{k}$ using ascending chains of structures}\label{subsection:proof-of-glt-using-chains}

We first define the notion of a \emph{$k$-ary cover of a structure
  $\mf{A}$ in an elementary extension of $\mf{A}$}. This notion
generalizes the notion of $k$-ary cover seen earlier in
Definition~\ref{defn:k-ary-covered-ext} -- the latter corresponds to
the notion in Definition~\ref{defn:k-ary-cover-inside-an-elem-ext}
below, with $\mf{A}^+$ being the same as $\mf{A}$.

\begin{defn}\label{defn:k-ary-cover-inside-an-elem-ext}
Let $\mf{A}$ be a structure and $\mf{A}^+$ be an elementary extension
of $\mf{A}$. A non-empty collection $R$ of substructures of $\mf{A}^+$
is said to be a \emph{$k$-ary cover of $\mf{A}$ in $\mf{A}^+$} if for
every $k$-tuple $\bar{a}$ of elements of $\mf{A}$, there exists a
structure in $R$ containing $\bar{a}$.
\end{defn}



The following lemma is key to the proof.

\begin{lemma}\label{lemma:exis-of-k-ary-cover-in-an-elem-ext-satisfying-T}
Let $V$ and $T$ be consistent theories and $k \in \mathbb{N}$. Let
$\Gamma$ be the set of $\forall^k \exists^*$ consequences of $T$
modulo $V$. Then for every structure $\mf{A}$ that models $V$, we have
that $\mf{A} \models \Gamma$ iff there exists an elementary extension
$\mf{A}^+$ of $\mf{A}$ and a $k$-ary cover $R$ of $\mf{A}$ in
$\mf{A}^+$ such that \linebreak $\mf{B} \models (V \cup T)$ for every
$\mf{B} \in R$.
\end{lemma}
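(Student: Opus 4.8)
The statement is the ascending-chain counterpart of Lemma~\ref{lemma:k-ary-covers-for-saturated-models}: where that lemma used $\mu$-saturation of $\mf{A}$ to embed the relevant models \emph{into $\mf{A}$ itself}, here $\mf{A}$ is arbitrary, so the plan is to embed them into a suitable elementary extension $\mf{A}^+$ of $\mf{A}$ that I build as the union of an elementary chain. The set $\Gamma$ of $\forall^k \exists^*$ consequences plays the same role in both directions.

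For the `If' direction I would argue exactly as in the saturated case. Fix $\varphi = \forall x_1 \cdots \forall x_k\, \exists \bar{y}\, \psi(\bar{x}, \bar{y}) \in \Gamma$ with $\psi$ quantifier-free, and let $\bar{a}$ be any $k$-tuple of $\mf{A}$. Since $R$ is a $k$-ary cover of $\mf{A}$ in $\mf{A}^+$, some $\mf{B} \in R$ contains $\bar{a}$; as $\mf{B} \models (V \cup T)$ and $V \cup T \vdash \varphi$, we get $\mf{B} \models \exists \bar{y}\, \psi(\bar{a}, \bar{y})$, witnessed by some $\bar{b}$ from $\mf{B}$. Because $\psi$ is quantifier-free and $\mf{B} \subseteq \mf{A}^+$, we have $\mf{A}^+ \models \psi(\bar{a}, \bar{b})$, so $(\mf{A}^+, \bar{a}) \models \exists \bar{y}\, \psi$. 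Since $\bar{a}$ lies in $\mf{A}$ and $\mf{A} \preceq \mf{A}^+$, this transfers down to $(\mf{A}, \bar{a}) \models \exists \bar{y}\, \psi$. As $\bar{a}$ was arbitrary, $\mf{A} \models \varphi$, and hence $\mf{A} \models \Gamma$.

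For the `Only If' direction, assume $\mf{A} \models \Gamma$. First, for each $k$-tuple $\bar{a}$ of $\mf{A}$ I would show that $Z_{\bar{a}} = V \cup T \cup \pitp{\mf{A}}{\bar{a}}(\bar{x})$ is consistent, by the same compactness-plus-$\forall$-introduction argument as in Lemma~\ref{lemma:k-ary-covers-for-saturated-models}: an inconsistent finite subset, together with closure of $\pitp{\mf{A}}{\bar{a}}(\bar{x})$ under finite conjunctions, would yield a single $\Pi^0_1$ formula $\psi(\bar{x}) \in \pitp{\mf{A}}{\bar{a}}(\bar{x})$ with $V \cup T \vdash \neg\psi(\bar{x})$, whence $V \cup T \vdash \forall \bar{x}\, \neg\psi$; but $\forall \bar{x}\, \neg\psi$ is a $\forall^k \exists^*$ sentence, so it lies in $\Gamma$, contradicting $(\mf{A}, \bar{a}) \models \psi$. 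I would then take a model $(\mf{B}_{\bar{a}}, \bar{b}_{\bar{a}})$ of $Z_{\bar{a}}$; thus $\mf{B}_{\bar{a}} \models (V \cup T)$ and $\bar{b}_{\bar{a}}$ realizes $\pitp{\mf{A}}{\bar{a}}(\bar{x})$, which is equivalent to saying that every existential $\fo(\tau_k)$ sentence true in $(\mf{B}_{\bar{a}}, \bar{b}_{\bar{a}})$ is true in $(\mf{A}, \bar{a})$.

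The remaining task, and the main obstacle, is to place isomorphic copies of all the $\mf{B}_{\bar{a}}$ inside one common elementary extension of $\mf{A}$, each copy containing the corresponding $\bar{a}$. I would do this by a transfinite elementary chain. Enumerate the $k$-tuples of $\mf{A}$ as $(\bar{a}_\eta)_{\eta < \lambda}$ and build an elementary chain $\mf{A} = \mf{A}_0 \preceq \mf{A}_1 \preceq \cdots$, taking unions at limits, which remain elementary over each earlier term by Theorem~\ref{theorem:elem-chain-theorem} (so every $\mf{A}_\eta \succeq \mf{A}$). At stage $\eta + 1$, since $\mf{A} \preceq \mf{A}_\eta$ and $\bar{a}_\eta$ is from $\mf{A}$, we have $(\mf{A}, \bar{a}_\eta) \equiv (\mf{A}_\eta, \bar{a}_\eta)$, so every existential $\fo(\tau_k)$ sentence true in $(\mf{B}_{\bar{a}_\eta}, \bar{b}_{\bar{a}_\eta})$ is true in $(\mf{A}_\eta, \bar{a}_\eta)$; by Lemma~\ref{lemma:exis-amalgam} applied over $\tau_k$, $(\mf{B}_{\bar{a}_\eta}, \bar{b}_{\bar{a}_\eta})$ embeds in an elementary extension $(\mf{A}_{\eta+1}, \bar{a}_\eta)$ of $(\mf{A}_\eta, \bar{a}_\eta)$, the embedding sending $\bar{b}_{\bar{a}_\eta}$ to $\bar{a}_\eta$. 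Setting $\mf{A}^+ = \bigcup_{\eta < \lambda} \mf{A}_\eta$ and letting $\mf{C}_{\bar{a}_\eta}$ be the image of $\mf{B}_{\bar{a}_\eta}$ (a substructure of $\mf{A}_{\eta+1} \subseteq \mf{A}^+$ isomorphic to $\mf{B}_{\bar{a}_\eta}$, hence a model of $V \cup T$ containing $\bar{a}_\eta$), the collection $R = \{\mf{C}_{\bar{a}_\eta} \mid \eta < \lambda\}$ is the desired $k$-ary cover of $\mf{A}$ in $\mf{A}^+$. The delicate point throughout is that each already-fixed tuple $\bar{a}_\eta$ must stay available with an unchanged type as the chain grows, which is exactly what $\mf{A} \preceq \mf{A}_\eta$ guarantees.
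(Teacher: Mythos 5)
Your proposal is correct and follows essentially the same route as the paper's proof: the same compactness-plus-$\forall$-introduction argument for the consistency of $V \cup T \cup \pitp{\mf{A}}{\bar{a}}(\bar{x})$, the same use of Lemma~\ref{lemma:exis-amalgam} to embed each resulting model over the tuple $\bar{a}$, and the same transfinite elementary chain glued by Theorem~\ref{theorem:elem-chain-theorem}. The only cosmetic difference is that you fix all the models $\mf{B}_{\bar{a}}$ up front and fold the finite case into the chain construction, whereas the paper isolates an auxiliary observation $(\dagger)$ and treats finite $\mf{A}$ separately.
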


\begin{proof} 
\ul{If}: We show that $\mf{A} \models \varphi$ for each sentence
$\varphi$ of $\Gamma$. Let $\varphi = \forall^k \bar{x} \psi(\bar{x})$
for a $\Sigma^0_1$ formula $\psi(\bar{x})$.  Let $\bar{a}$ be a
$k$-tuple of $\mf{A}$. Since $R$ is a $k$-ary cover of $\mf{A}$ in
$\mf{A}^+$, there exists $\mf{B}_{\bar{a}} \in R$ such that
$\mf{B}_{\bar{a}}$ contains $\bar{a}$. Since $\mf{B}_{\bar{a}} \models
(V \cup T)$, we have $\mf{B}_{\bar{a}} \models \Gamma$. Then
$\mf{B}_{\bar{a}} \models \varphi$ and hence $(\mf{B}_{\bar{a}},
\bar{a}) \models \psi(\bar{x})$. Since $\psi(\bar{x})$ is a
$\Sigma^0_1$ formula and $\mf{B}_{\bar{a}} \subseteq \mf{A}^+$, we
have $(\mf{A}^+, \bar{a}) \models \psi(\bar{x})$, whence $(\mf{A},
\bar{a}) \models \psi(\bar{x})$ since $\mf{A} \preceq \mf{A}^+$. Since
$\bar{a}$ is arbitrary, $\mf{A} \models \varphi$.

\ul{Only If}: We have two cases here depending on whether $\mf{A}$ is
finite or infinite. Before considering these cases, we present the
following observation, call it $\dagger$. Let the vocabulary of
$\mf{A}$ be $\tau$.

\vspace{5pt} $(\dagger)$ Given an elementary extension $\mf{A}'$ of
$\mf{A}$ and a $k$-tuple $\bar{a}$ of $\mf{A}$, there exist an
elementary extension $\mf{A}''$ of $\mf{A}'$ and a substructure
$\mf{B}$ of $\mf{A}''$ such that (i) $\mf{B}$ contains $\bar{a}$ and
(ii) $\mf{B} \models (V \cup T)$.

\vspace{5pt} This is seen as follows.  Let $Z(\bar{x})$ be the theory
given by $Z(\bar{x}) = V \cup T \cup \mathsf{tp}_{\Pi, \mf{A},
  \bar{a}}(\bar{x})$. We can show that $Z(\bar{x})$ is satisfiable by
following the same argument as in the last paragraph of the proof of
Lemma~\ref{lemma:k-ary-covers-for-saturated-models}.  Whereby if
$(\mf{D}, \bar{d}) \models Z(\bar{x})$, then every existential
sentence that is true in $(\mf{D}, \bar{d})$ is also true in $(\mf{A},
\bar{a})$, and hence in $(\mf{A}', \bar{a})$. Then by
Lemma~\ref{lemma:exis-amalgam}, there is an isomorphic embedding
$f$ of $(\mf{D}, \bar{d})$ in an elementary extension $(\mf{A}'',
\bar{a})$ of $(\mf{A}, \bar{a})$. Taking $\mf{B}$ to be the
$\tau$-reduct of the image of $(\mf{D}, \bar{d})$ under $f$, we see
that $\mf{B}$ and $\mf{A}''$ are indeed as desired.

We now consider the two cases mentioned above.

(1) $\mf{A}$ is finite: Given a $k$-tuple $\bar{a}$ of $\mf{A}$, by
$(\dagger)$, there exists an elementary extension $\mf{A}''$ of
$\mf{A}$ and a substructure $\mf{B}_{\bar{a}}$ of $\mf{A}''$ such that
(i) $\mf{B}_{\bar{a}}$ contains $\bar{a}$ and (ii) $\mf{B}_{\bar{a}}
\models (V \cup T)$. Since $\mf{A}$ is finite, it follows from
Lemma~\ref{lemma:prop-of-structures}, that $\mf{A}'' =
\mf{A}$. Whereby, taking $\mf{A}^+ = \mf{A}$ and $R =
\{\mf{B}_{\bar{a}} \mid \bar{a} \in \mathsf{U}_{\mf{A}}^k\}$, we see
that $\mf{A}^+$ and $R$ are respectively indeed the desired elementary
extension of $\mf{A}$ and $k$-ary cover of $\mf{A}$ in $\mf{A}^+$.

(2) $\mf{A}$ is infinite: The proof for this case is along the lines
of the proof of the characterization of $\Pi^0_2$ sentences in terms
of the property of preservation under unions of chains (see proof of
Theorem 3.2.3 in Chapter 3 of~\cite{chang-keisler}). Let $\lambda$ be
the successor cardinal of $|\mf{A}|$ and $(\bar{a}_\kappa)_{\kappa <
  \lambda}$ be an enumeration of the $k$-tuples of $\mf{A}$. For $\eta
\leq \lambda$, given sequences $(\mf{E}_\kappa)_{\kappa < \eta}$ and
$(\mf{F}_\kappa)_{\kappa < \eta}$ of structures, we say that
$\mc{P}((\mf{E}_\kappa)_{\kappa < \eta}, (\mf{F}_\kappa)_{\kappa <
  \eta})$ is true iff $(\mf{E}_\kappa)_{\kappa < \eta}$ is an
ascending elementary chain and $\mf{A} \preceq \mf{E}_0$, and for each
$\kappa < \eta$, we have (i) $\mf{F}_\kappa \subseteq \mf{E}_\kappa$
(ii) $\mf{F}_{\kappa}$ contains $\bar{a}_{\kappa}$ and (iii)
$\mf{F}_\kappa \models (V \cup T)$. We then show the existence of
sequences $(\mf{A}_\kappa)_{\kappa < \lambda}$ and
$(\mf{B}_\kappa)_{\kappa < \lambda}$ of structures such that
$\mc{P}((\mf{A}_\kappa)_{\kappa < \lambda}, (\mf{B}_\kappa)_{\kappa <
  \lambda})$ is true. Then by
Theorem~\ref{theorem:elem-chain-theorem}, taking $\mf{A}^+ =
\bigcup_{\kappa < \lambda} \mf{A}_\kappa$ and $R = \{\mf{B}_\kappa
\mid \kappa < \lambda\}$, we see that $\mf{A}^+$ and $R$ are
respectively indeed the desired elementary extension of $\mf{A}$ and
$k$-ary cover of $\mf{A}$ in $\mf{A}^+$.

We construct the sequences $(\mf{A}_\kappa)_{\kappa < \lambda}$ and
$(\mf{B}_\kappa)_{\kappa < \lambda}$ by constructing for each $\eta
\leq \lambda$, the partial (initial) sequences
$(\mf{A}_\kappa)_{\kappa < \eta}$ and $(\mf{B}_\kappa)_{\kappa <
  \eta}$ and showing that $\mc{P}((\mf{A}_\kappa)_{\kappa < \eta},
(\mf{B}_\kappa)_{\kappa < \eta})$ is true. We do this by (transfinite)
induction on $\eta$.  For the base case of $\eta = 1$, we see by
$(\dagger)$ above that if $\mf{A}' = \mf{A}$, then there exists an
elementary extension $\mf{A}''$ of $\mf{A}$ and a substructure
$\mf{B}$ of $\mf{A}''$ such that (i) $\mf{B}$ contains $\bar{a}_0$ and
(ii) $\mf{B} \models (V \cup T)$. Then taking $\mf{A}_0 = \mf{A}''$
and $\mf{B}_0 = \mf{B}$, we see that $\mc{P}((\mf{A}_0), (\mf{B}_0))$
is true.  As the induction hypothesis, assume that we have constructed
sequences $(\mf{A}_\kappa)_{\kappa < \eta}$ and
$(\mf{B}_\kappa)_{\kappa < \eta}$ such that
$\mc{P}((\mf{A}_\kappa)_{\kappa < \eta}, (\mf{B}_\kappa)_{\kappa <
  \eta})$ is true.  Then by Theorem~\ref{theorem:elem-chain-theorem},
the structure $\mf{A}' = \bigcup_{\kappa < \eta} \mf{A}_\kappa$ is
such that $\mf{A} \preceq \mf{A}'$. Then for the tuple $\bar{a}_\eta$
of $\mf{A}$, by $(\dagger)$, there exists an elementary extension
$\mf{C}$ of $\mf{A}'$ and a substructure $\mf{D}$ of $\mf{C}$ such
that (i) $\mf{D}$ contains $\bar{a}_\eta$ and (ii) $\mf{D} \models (V
\cup T)$.  Then taking $\mf{A}_\eta = \mf{C}$ and $\mf{B}_\eta =
\mf{D}$, and letting $\mu$ be the successor ordinal of $\eta$, we see
that $\mc{P}((\mf{A}_\kappa)_{\kappa < \mu}, (\mf{B}_\kappa)_{\kappa <
  \mu})$ is indeed true, completing the induction.
\end{proof}

\begin{proof}[Proof of Theorem~\ref{theorem:glt(k)}]
We prove part~(\ref{theorem:glt(k)-ext}) of
Theorem~\ref{theorem:glt(k)}. Part~(\ref{theorem:glt(k)-subst}) of
Theorem~\ref{theorem:glt(k)} follows from the duality of $PSC(k)$ and
$PCE(k)$ given by Lemma~\ref{lemma:PSC(k)-PCE(k)-duality}.  We prove
part~(\ref{theorem:glt(k)-ext}) of Theorem~\ref{theorem:glt(k)} for
the case of sentences; the result for formulae follows. The `If'
direction of part~(\ref{theorem:glt(k)-ext}) of
Theorem~\ref{theorem:glt(k)} is proved exactly as the proof of this
part of Theorem~\ref{theorem:glt(k)}, as presented in the Section
\ref{subsection:proof-of-glt-using-saturated-structures}. We hence
prove the `Only if' direction below.

Suppose $\phi$ is $PCE(k)$ modulo $V$. If $V \cup \{\phi\}$ is
unsatisfiable, we are trivially done. Otherwise, let $\Gamma$ be the
set of $\forall^k \exists^*$ consequences of $\{\phi\}$ modulo
$V$. Then $(V \cup \{\phi\}) \vdash \Gamma$. We show below that $(V
\cup \Gamma) \vdash \phi$, thereby showing that $\phi$ is equivalent
to $\Gamma$ modulo V. Suppose $\mf{A} \models (V \cup \Gamma)$.
Consider the sequence $(\mf{A}_i)_{i \ge 0}$ of structures and the
sequence $(R_i)_{i \ge 0}$ of collections of structures with the
following properties.
\begin{enumerate}[nosep]
\item $(\mf{A}_i)_{i \ge 0}$ is an ascending elementary chain such
  that $\mf{A} \preceq \mf{A}_0$ (whereby $\mf{A}_i \models (V \cup
  \Gamma)$ for each $i \ge 0$) and for each $i \ge 0$, $\mf{A}_{i+1}$
  is the elementary extension of $\mf{A}_i$ as given by
  Lemma~\ref{lemma:exis-of-k-ary-cover-in-an-elem-ext-satisfying-T}.
\item For each $i \ge 0$, $R_i$ is the $k$-ary cover of $\mf{A}_i$ in
  $\mf{A}_{i+1}$ as given by
  Lemma~\ref{lemma:exis-of-k-ary-cover-in-an-elem-ext-satisfying-T}.
\end{enumerate}

Consider the structure $\mf{A}^+ = \bigcup_{i \ge 0}
\mf{A}_i$. Consider any $k$-tuple $\bar{a}$ of $\mf{A}^+$; it is clear
that there must exist $j \ge 0$ such $\bar{a}$ is contained in
$\mf{A}_j$. Then there exists a structure $\mf{B}_{\bar{a}} \in R_j$
such that (i) $\mf{B}_{\bar{a}}$ contains $\bar{a}$ and (ii)
$\mf{B}_{\bar{a}} \models (V \cup \{\phi\})$. Since $\mf{B}_{\bar{a}}
\in R_j$, we have $\mf{B}_{\bar{a}} \subseteq \mf{A}_{j+1}$ and since
$\mf{A}_{j+1} \preceq \mf{A}^+$ (by
Theorem~\ref{theorem:elem-chain-theorem}), we have $\mf{B}_{\bar{a}}
\subseteq \mf{A}^+$. Then $R = \{\mf{B}_{\bar{a}} \mid
\bar{a}~\text{is a}~k\text{-tuple from}~\mf{A}^+\}$ is a $k$-ary cover
of $\mf{A}^+$ (or equivalently, a $k$-ary cover of $\mf{A}^+$ in
$\mf{A}^+$) such that $\mf{B} \models (V \cup \{\phi\})$ for each
$\mf{B} \in R$. Since $\phi$ is $PCE(k)$ modulo $V$, it follows that
$\mf{A}^+ \models \phi$. Then since $\mf{A} \preceq \mf{A}^+$, we have
that $\mf{A} \models \phi$, completing the proof.
\end{proof}




\section{Variants of our properties and their characterizations}\label{section:lambda-cruxes-and-lambda-ary-covers}

In this section, we present natural generalizations of the $PSC(k)$
and $PCE(k)$ properties in which, rather than insisting on bounded
sized cruxes and bounded arity covers, we allow cruxes of sizes, and
covers of arities, less than $\lambda$, where $\lambda$ is an infinite
cardinal.  We first define the notion of \emph{$\lambda$-ary covered
  extensions}.

\begin{defn}\label{defn:lambda-ary-covered-extension}
Given an infinite cardinal $\lambda$, a structure $\mf{A}$ is called a
\emph{$\lambda$-ary covered extension} of a collection $R$ of
structures if (i) $\mf{A}$ is an extension of $R$ (ii) for each subset
$C$ of the universe of $\mf{A}$, \emph{of size less than $\lambda$},
there is a structure in $R$ containing $C$. We call $R$ a
\emph{$\lambda$-ary cover} of $\mf{A}$.
\end{defn}


Observe that in the definition above, $\mf{A}$ must be unique such
since all relation symbols and function symbols have finite arity.

\begin{defn}\label{defn:PSC(lambda)-and-PCE(lambda)}
Let $\cl{S}$ be a class of structures and $\cl{U}$ be a subclass of
$\cl{S}$.
\begin{enumerate}[nosep]
\item We say $\cl{U}$ is \emph{preserved under substructures modulo
  $\lambda$-cruxes over $\cl{S}$},
\sindex[term]{preservation under!substructures modulo
  $\lambda$-cruxes} abbreviated \emph{$\cl{U}$ is
  $PSC(\lambda)$\sindex[symb]{$PSC(\lambda), PCE(\lambda)$} over
  $\cl{S}$}, if for each structure $\mf{A} \in \cl{U}$, there is a
subset $C$ of the universe of $\mf{A}$, of size less than $\lambda$,
such that, if $\mf{B} \subseteq \mf{A}$, $\mf{B}$ contains $C$ and
$\mf{B} \in \cl{S}$, then $\mf{B} \in \cl{U}$. The set $C$ is called
an \emph{$\lambda$-crux} of $\mf{A}$ w.r.t. $\cl{U}$ over $\cl{S}$.
\item We say $\cl{U}$ is \emph{preserved under $\lambda$-ary covered
  extensions over $\cl{S}$}, 
\sindex[term]{preservation under!$\lambda$-ary covered extensions}
abbreviated \emph{$\cl{U}$ is
  $PCE(\lambda)$\sindex[symb]{$PSC(\lambda), PCE(\lambda)$} over
  $\cl{S}$}, if for every collection $R$ of structures of $\cl{U}$, if
$\mf{A}$ is an $\lambda$-ary covered extension of $R$ and $\mf{A} \in
\cl{S}$, then $\mf{A} \in \cl{U}$.
\end{enumerate}
\end{defn}

It is easy to see that given classes $\cl{U}$ and $\cl{S}$, and
infinite cardinals $\lambda$ and $\mu$ such that $\lambda \leq \mu$,
if $\cl{U}$ is $PSC(\lambda)$ (resp. $PCE(\lambda)$) over $\cl{S}$,
then $\cl{U}$ is $PSC(\mu)$ (resp. $PCE(\mu)$) over $\cl{S}$.

If $\phi(\bar{x})$ and $T(\bar{x})$ are respectively a formula and a
theory with free variables $\bar{x}$, then given a theory $V$, the
notions of `$\phi(\bar{x})$ is $PSC(\lambda)$ ($PCE(\lambda)$) modulo
$V$' and `$T(\bar{x})$ is $PSC(\lambda)$ ($PCE(\lambda)$) modulo $V$'
are defined similarly as the corresponding notions for $PSC(k)$ and
$PCE(k)$.

Analogous to Lemma~\ref{lemma:PSC(k)-PCE(k)-duality},
Lemma~\ref{lemma:k-ary-covers-for-saturated-models},
Lemma~\ref{lemma:exis-of-k-ary-cover-in-an-elem-ext-satisfying-T} and
Theorem~\ref{theorem:glt(k)}, we have the following results for
$PSC(\lambda)$ and $PCE(\lambda)$. The proofs are similar to the
corresponding results for $PSC(k)$ and $PCE(k)$ and are hence skipped.

\begin{lemma}[$PSC(\lambda)$-$PCE(\lambda)$ duality]\label{lemma:PSC(lambda)-PCE(lambda)-duality}
Let $\cl{S}$ be a class of structures, $\cl{U}$ be a subclass of
$\cl{S}$ and $\overline{\cl{U}}$ be the complement of $\cl{U}$ in
$\cl{S}$. Then $\cl{U}$ is $PSC(\lambda)$ over $\cl{S}$ iff
$\overline{\cl{U}}$ is $PCE(\lambda)$ over $\cl{S}$. In particular, if
$\,\cl{S}$ is defined by a theory $V$, then a sentence $\phi$ is
$PSC(\lambda)$ modulo $V$ iff $\neg \phi$ is $PCE(\lambda)$ modulo
$V$.
\end{lemma}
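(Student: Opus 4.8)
The plan is to mirror, essentially verbatim, the proof of the $PSC(k)$-$PCE(k)$ duality (Lemma~\ref{lemma:PSC(k)-PCE(k)-duality}), replacing the bound ``at most $k$'' by ``less than $\lambda$'' throughout, and invoking the $\lambda$-analogues of the notions of crux and cover from Definitions~\ref{defn:lambda-ary-covered-extension} and~\ref{defn:PSC(lambda)-and-PCE(lambda)}. I would establish both implications by contraposition, exactly as in the $k$-ary case.

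First, to show that $\overline{\cl{U}}$ being $PCE(\lambda)$ over $\cl{S}$ forces $\cl{U}$ to be $PSC(\lambda)$ over $\cl{S}$, I would suppose $\cl{U}$ is not $PSC(\lambda)$. Then there is a structure $\mf{A} \in \cl{U}$ such that for every subset $C$ of $\univ{\mf{A}}$ of size less than $\lambda$, some substructure $\mf{B}_C \subseteq \mf{A}$ contains $C$, lies in $\cl{S}$, but lies in $\cl{S} \setminus \cl{U} = \overline{\cl{U}}$. I would collect $R = \{ \mf{B}_C \mid C \subseteq \univ{\mf{A}},\ |C| < \lambda \}$. By construction $\mf{A}$ is an extension of every member of $R$, and every subset $C$ of size less than $\lambda$ is contained in some $\mf{B}_C \in R$, so $R$ is a $\lambda$-ary cover of $\mf{A}$ consisting entirely of $\overline{\cl{U}}$-structures. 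Since $\overline{\cl{U}}$ is $PCE(\lambda)$ and $\mf{A} \in \cl{S}$, this forces $\mf{A} \in \overline{\cl{U}}$, contradicting $\mf{A} \in \cl{U}$.

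Conversely, to show that $\cl{U}$ being $PSC(\lambda)$ forces $\overline{\cl{U}}$ to be $PCE(\lambda)$, I would suppose $\overline{\cl{U}}$ is not $PCE(\lambda)$, so there is some $\mf{A} \in \cl{S}$ with a $\lambda$-ary cover $R$ all of whose members lie in $\overline{\cl{U}}$, yet $\mf{A} \notin \overline{\cl{U}}$, i.e.\ $\mf{A} \in \cl{U}$. Using the $PSC(\lambda)$ property of $\cl{U}$, I would fix a $\lambda$-crux $C$ of $\mf{A}$, so $|C| < \lambda$ and every substructure of $\mf{A}$ lying in $\cl{S}$ and containing $C$ belongs to $\cl{U}$. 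Since $R$ is a $\lambda$-ary cover, there is $\mf{B}_C \in R$ containing $C$; as $\mf{B}_C \subseteq \mf{A}$ and $\mf{B}_C \in \cl{S}$, the crux property yields $\mf{B}_C \in \cl{U}$, contradicting $\mf{B}_C \in \overline{\cl{U}}$.

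Finally, the ``in particular'' clause would follow immediately: when $\cl{S}$ is the class of models of $V$ and $\cl{U}$ is the class of models of $\{\phi\} \cup V$, the complement $\overline{\cl{U}}$ in $\cl{S}$ is exactly the class of models of $\{\neg \phi\} \cup V$, so the equivalence ``$\cl{U}$ is $PSC(\lambda)$ iff $\overline{\cl{U}}$ is $PCE(\lambda)$'' specializes to ``$\phi$ is $PSC(\lambda)$ modulo $V$ iff $\neg\phi$ is $PCE(\lambda)$ modulo $V$''. I do not expect any genuine obstacle: the only point deserving a check is that the indexing family of the cover $R$ in the first direction now ranges over all subsets of size strictly below $\lambda$ rather than over a finite bound, but this is precisely what Definition~\ref{defn:lambda-ary-covered-extension} requires of a $\lambda$-ary cover, and no finiteness of $R$ was ever used in the $k$-ary argument. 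Hence the passage from the natural number $k$ to an infinite cardinal $\lambda$ is purely notational, which is why the authors remark that the proof can be safely skipped.
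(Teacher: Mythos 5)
Your proposal is correct and is exactly the argument the paper intends: the authors explicitly skip the proof, noting it is the same as that of the $PSC(k)$-$PCE(k)$ duality (Lemma~\ref{lemma:PSC(k)-PCE(k)-duality}) with ``at most $k$'' replaced by ``less than $\lambda$'', which is precisely the substitution you carry out in both contrapositive directions. Your closing observation that no finiteness of the cover $R$ was used in the $k$-ary argument is the right sanity check and confirms the passage to infinite cardinals is purely notational.
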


\begin{lemma}\label{lemma:lambda-ary-covers-for-saturated-models}
Let $V$ and $T$ be consistent theories, and let $\Gamma$ be the set of
$\Pi^0_2$ consequences of $T$ modulo $V$.  Then for all infinite
cardinals $\lambda$ and $\mu$, and for every $\mu$-saturated structure
$\mf{A}$ that models $V$, we have that $\mf{A} \models \Gamma$ iff
there exists a $\lambda$-ary cover $R$ of $\mf{A}$ such that $\mf{B}
\models (V \cup T)$ for every $\mf{B} \in R$.
\end{lemma}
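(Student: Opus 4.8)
The plan is to mirror the proof of Lemma~\ref{lemma:k-ary-covers-for-saturated-models}, replacing the fixed finite arity $k$ by the infinite cardinal $\lambda$ and the set of $\forall^k\exists^*$ consequences by that of $\Pi^0_2$ consequences, while carefully tracking where the degree of saturation $\mu$ is used. Throughout, let $\tau$ be the vocabulary of $V$ and $T$.

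For the `If' direction, I would first observe that every $\Pi^0_2$ sentence $\varphi = \forall x_1\cdots\forall x_l\,\psi(x_1,\ldots,x_l)$, with $\psi$ a $\Sigma^0_1$ formula and $l$ finite, is $PCE(\lambda)$ modulo $V$, exactly as in the $PCE(k)$ discussion: given an $l$-tuple $\bar{a}$ of $\mf{A}$, the set of its entries has size $\le l < \lambda$, so some $\mf{B}\in R$ contains it; since $\mf{B}\models\varphi$ we get $(\mf{B},\bar{a})\models\psi$, and as $\psi$ is existential and $\mf{B}\subseteq\mf{A}$, preservation of existential formulae under extensions gives $(\mf{A},\bar{a})\models\psi$. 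As $\bar{a}$ is arbitrary, $\mf{A}\models\varphi$. Each $\mf{B}\in R$ models $V\cup T$ and hence every $\varphi\in\Gamma$, so the $\lambda$-ary cover forces $\mf{A}\models\Gamma$.

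For the `Only If' direction, assume $\mf{A}$ is $\mu$-saturated with $\mf{A}\models V$ and $\mf{A}\models\Gamma$. I would construct, for each subset $C\subseteq\univ{\mf{A}}$ with $|C|<\lambda$, a substructure $\mf{A}_C\subseteq\mf{A}$ containing $C$ with $\mf{A}_C\models(V\cup T)$; then $R=\{\mf{A}_C\}$ is the desired cover. Fix such a $C$, let $\bar{a}$ enumerate it (so $|\bar{a}|=|C|<\lambda$), and over the expansion $\tau_C$ consider $Z(\bar{x})=V\cup T\cup\pitp{\mf{A}}{\bar{a}}(\bar{x})$. To see $Z$ is consistent, suppose not; by Theorem~\ref{theorem:compactness} a finite subset is inconsistent, and since $\pitp{\mf{A}}{\bar{a}}(\bar{x})$ is closed under finite conjunction there is a single $\Pi^0_1$ formula $\psi$ in it with $(V\cup T)\vdash\neg\psi$. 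The genuinely new point relative to the finite case is that compactness forces $\psi$ to involve only finitely many variables $x_{i_1},\ldots,x_{i_m}$ of the possibly infinite tuple $\bar{x}$, so that $\forall x_{i_1}\cdots\forall x_{i_m}\,\neg\psi$ is a legitimate $\Pi^0_2$ sentence. It lies in $\Gamma$, hence holds in $\mf{A}$, and instantiating the corresponding subtuple of $\bar{a}$ contradicts $\psi\in\pitp{\mf{A}}{\bar{a}}(\bar{x})$; thus $Z$ is consistent. By the downward \ls{} theorem (Theorem~\ref{theorem:DLS}, in the version appropriate for the vocabulary $\tau_C$) there is a model $(\mf{B},\bar{b})$ of $Z$ of size at most $\max(|C|,\aleph_0)$ in which $\bar{b}$ realizes $\pitp{\mf{A}}{\bar{a}}(\bar{x})$; consequently every existential $\tau_C$-sentence true in $(\mf{B},\bar{b})$ holds in $(\mf{A},\bar{a})$.

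The embedding step then uses saturation: provided $(\mf{A},\bar{a})$ is $\mu$-saturated with $\mu\ge|\mf{B}|$, Proposition~\ref{prop:saturation-results}(\ref{prop:saturation-results:isomorphic-embedding}) yields an embedding of $(\mf{B},\bar{b})$ into $(\mf{A},\bar{a})$, whose $\tau$-reduct (restricted to the image) serves as $\mf{A}_C$. By Proposition~\ref{prop:saturation-results}(\ref{prop:saturation-results:expansion-preserves-saturation}), $(\mf{A},\bar{a})$ is $\mu$-saturated as soon as $|C|<\mu$, and then $|\mf{B}|\le\max(|C|,\aleph_0)\le\mu$ as required. I expect the main obstacle to be precisely this interplay: the embedding needs $|C|<\mu$ for \emph{every} $C$ with $|C|<\lambda$, which is exactly the condition $\mu\ge\lambda$. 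The `If' direction holds unconditionally, while the `Only If' direction runs through as above whenever the saturation degree dominates the cover arity; I would therefore carry out the argument under the standing assumption $\mu\ge\lambda$ and note that this is harmless downstream, since in the applications $\mf{A}$ may be replaced by a $\mu$-saturated elementary extension for any desired $\mu\ge\lambda$ (via Proposition~\ref{prop:saturation-results}(\ref{prop:saturation-results:exis-of-sat-elem-ext}) and the existence of arbitrarily saturated elementary extensions).
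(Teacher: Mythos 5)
Your proof is correct and is exactly the route the paper intends: the paper skips this proof, saying only that it is analogous to Lemma~\ref{lemma:k-ary-covers-for-saturated-models}, and your adaptation --- closure of the $\Pi^0_1$-type under finite conjunctions, the key observation that the offending formula $\psi$ mentions only finitely many of the (possibly infinitely many) variables so that its universal closure is a genuine $\Pi^0_2$ sentence lying in $\Gamma$, downward L\"owenheim--Skolem over $\tau_C$, and the saturation-based embedding --- is precisely that analogue. Moreover, your caveat that the `Only If' direction requires $\mu \ge \lambda$ is not just a convenience but a genuine correction to the lemma as stated: for $T = \{\exists x \forall y\, R(x,y)\}$ and $V = \emptyset$, an $\aleph_1$-saturated elementary extension $\mf{A}^*$ of $(\mathbb{N}, R)$ with $R$ interpreted as $\ge$ satisfies every $\Pi^0_2$ consequence of $T$ (witness the $\aleph_0$-ary cover by the initial segments $\{0,\dots,n\}$, each of which has an $R$-maximum), yet $\mf{A}^* \not\models T$; taking $\lambda = |\mf{A}^*|^+$, any $\lambda$-ary cover of $\mf{A}^*$ would have to contain $\mf{A}^*$ itself, so no such cover exists. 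Since every downstream use of the lemma is free to pass first to a $\mu$-saturated elementary extension with $\mu \ge \lambda$, your standing assumption is exactly the right fix and loses nothing.
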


\begin{lemma}\label{lemma:exis-of-lambda-ary-cover-in-an-elem-ext-satisfying-T}
Let $V$ and $T$ be consistent theories, and let $\Gamma$ be the set of
$\Pi^0_2$ consequences of $T$ modulo $V$. Then for all infinite
cardinals $\lambda$, and for every structure $\mf{A}$ that models $V$,
we have that $\mf{A} \models \Gamma$ iff there exists an elementary
extension $\mf{A}^+$ of $\mf{A}$ and a $\lambda$-ary cover $R$ of
$\mf{A}$ in $\mf{A}^+$ such that $\mf{B} \models (V \cup T)$ for every
$\mf{B} \in R$.
\end{lemma}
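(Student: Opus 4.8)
The plan is to mirror the proof of Lemma~\ref{lemma:exis-of-k-ary-cover-in-an-elem-ext-satisfying-T} almost verbatim, making two substitutions: the finite parameter block $\forall^k$ (and the class $\forall^k\exists^*$ of relevant consequences) is replaced by an arbitrary finite universal block (and the full class $\Pi^0_2$), and the single $k$-tuple is replaced by a subset $C$ of $\univ{\mf{A}}$ of size less than $\lambda$ (possibly infinite). Fix the vocabulary $\tau$ of $V$ and $T$. For the \ul{If} direction, suppose such an $\mf{A}^+$ and $\lambda$-ary cover $R$ exist, and let $\varphi \in \Gamma$, say $\varphi = \forall x_1 \cdots \forall x_m\, \psi(\bar{x})$ with $\psi$ a $\Sigma^0_1$ formula. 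Given any $m$-tuple $\bar{a}$ of $\mf{A}$, its underlying set has size at most $m < \lambda$, so some $\mf{B}_{\bar{a}} \in R$ contains $\bar{a}$; since $\mf{B}_{\bar{a}} \models (V\cup T)$ we have $\mf{B}_{\bar{a}} \models \Gamma$, hence $(\mf{B}_{\bar{a}}, \bar{a}) \models \psi(\bar{x})$. As $\psi$ is existential and $\mf{B}_{\bar{a}} \subseteq \mf{A}^+$, we get $(\mf{A}^+,\bar{a}) \models \psi(\bar{x})$, and then $(\mf{A},\bar{a}) \models \psi(\bar{x})$ since $\mf{A} \preceq \mf{A}^+$. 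As $\bar{a}$ is arbitrary, $\mf{A} \models \varphi$; this is exactly the $k$-ary argument, using only that every $\Pi^0_2$ sentence has a finite (hence $<\lambda$) universal block.

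The core of the \ul{Only If} direction is the $\lambda$-ary analogue of the observation $(\dagger)$ from the $k$-ary proof: given an elementary extension $\mf{A}'$ of $\mf{A}$ and a subset $C$ of $\univ{\mf{A}}$ of size less than $\lambda$, there exist an elementary extension $\mf{A}''$ of $\mf{A}'$ and a substructure $\mf{B}$ of $\mf{A}''$ with $C \subseteq \univ{\mf{B}}$ and $\mf{B} \models (V \cup T)$. Enumerating $C$ as a (possibly transfinite) tuple $\bar{a}$, I would consider $Z(\bar{x}) = V \cup T \cup \pitp{\mf{A}}{\bar{a}}(\bar{x})$. If $Z$ were inconsistent, compactness together with closure of $\pitp{\mf{A}}{\bar{a}}$ under finite conjunction would yield a single $\psi(\bar{x}) \in \pitp{\mf{A}}{\bar{a}}$ with $(V \cup T) \vdash \neg\psi$; universally quantifying the finitely many free variables of $\psi$ gives a $\Pi^0_2$ sentence in $\Gamma$ whose instantiation at $\bar{a}$ contradicts $\psi \in \pitp{\mf{A}}{\bar{a}}$. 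Hence $Z$ is consistent; any model $(\mf{D},\bar{d})$ of it realizes every $\Pi^0_1$ formula of $\pitp{\mf{A}}{\bar{a}}$, so every existential sentence true in $(\mf{D},\bar{d})$ is true in $(\mf{A},\bar{a})$ and hence in $(\mf{A}',\bar{a})$. Lemma~\ref{lemma:exis-amalgam} then embeds $(\mf{D},\bar{d})$ into an elementary extension $(\mf{A}'',\bar{a})$ of $(\mf{A}',\bar{a})$, and the $\tau$-reduct of the image serves as $\mf{B}$.

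With $(\dagger)$ in hand I would split on whether $\mf{A}$ is finite or infinite. If $\mf{A}$ is finite, every $C \subseteq \univ{\mf{A}}$ has size $<\lambda$, and Lemma~\ref{lemma:prop-of-structures}(\ref{lemma:prop-of-structures:finite-A}) forces $\mf{A}'' = \mf{A}$, so taking $\mf{A}^+ = \mf{A}$ and $R = \{\mf{B}_C \mid C \subseteq \univ{\mf{A}}\}$ finishes the case. If $\mf{A}$ is infinite, I would enumerate all subsets of $\univ{\mf{A}}$ of size less than $\lambda$ as $(C_\xi)_{\xi < \kappa}$, where $\kappa$ is the cardinality of this collection, and build by transfinite recursion an ascending elementary chain $(\mf{A}_\xi)_{\xi<\kappa}$ with $\mf{A} \preceq \mf{A}_0$, together with substructures $\mf{B}_\xi \subseteq \mf{A}_\xi$ satisfying $C_\xi \subseteq \univ{\mf{B}_\xi}$ and $\mf{B}_\xi \models (V\cup T)$: at stage $\xi$ apply $(\dagger)$ to $C_\xi$ and to the current union $\bigcup_{\zeta<\xi}\mf{A}_\zeta$, which is an elementary extension of $\mf{A}$ by Theorem~\ref{theorem:elem-chain-theorem}, and take unions at limit stages. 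Setting $\mf{A}^+ = \bigcup_{\xi<\kappa}\mf{A}_\xi$ and $R = \{\mf{B}_\xi \mid \xi<\kappa\}$ then yields, again via Theorem~\ref{theorem:elem-chain-theorem}, the required elementary extension and $\lambda$-ary cover.

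The main obstacle, and the only genuine point of departure from the $k$-ary case, is that $C$ may now be infinite, so $\pitp{\mf{A}}{\bar{a}}(\bar{x})$ is a type over transfinitely many variables; the argument survives precisely because inconsistency is finitary, so the offending $\psi$ and its universal closure involve only finitely many variables and the closure is a bona fide $\Pi^0_2$ sentence belonging to $\Gamma$. A secondary point requiring care is the continuity of the elementary chain at limit ordinals, which Theorem~\ref{theorem:elem-chain-theorem} supplies. (The parallel Lemma~\ref{lemma:lambda-ary-covers-for-saturated-models} for $\mu$-saturated models can be obtained the same way, replacing the chain construction by a single appeal to saturation.)
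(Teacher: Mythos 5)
Your proposal is correct and is essentially the proof the paper intends: the paper explicitly omits the argument for this lemma, stating only that it is obtained by adapting the proof of Lemma~\ref{lemma:exis-of-k-ary-cover-in-an-elem-ext-satisfying-T}, and your adaptation — replacing $k$-tuples by sets of size less than $\lambda$, noting that compactness and the finiteness of each universal block keep the type argument in $(\dagger)$ finitary, and running the elementary chain transfinitely over an enumeration of all such sets with unions at limit stages via Theorem~\ref{theorem:elem-chain-theorem} — is exactly that adaptation. You also correctly isolate the only two points where care is needed beyond the $k$-ary case.
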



\begin{theorem}\label{theorem:PCE(lambda)-and-PSC(lambda)-characterizations}
Given a theory $V$, the following hold for each infinite cardinal
$\lambda$.
\begin{enumerate}[nosep]
\item A formula $\phi(\bar{x})$ is $PSC(\lambda)$ modulo $V$ iff
  $\phi(\bar{x})$ is equivalent modulo $V$ to a $\Sigma^0_2$ formula
  having free variables $\bar{x}$.\label{theorem:PSC(lambda)-char}
\item A formula $\phi(\bar{x})$ is $PCE(\lambda)$ modulo $V$ iff
  $\phi(\bar{x})$ is equivalent modulo $V$ to a $\Pi^0_2$ formula
  having free variables $\bar{x}$.\label{theorem:PCE(lambda)-char}
\end{enumerate}
\end{theorem}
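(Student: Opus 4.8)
The plan is to establish part~(\ref{theorem:PCE(lambda)-char}), the $PCE(\lambda)$ characterization, directly, and then to read off part~(\ref{theorem:PSC(lambda)-char}) from it using the $PSC(\lambda)$--$PCE(\lambda)$ duality of Lemma~\ref{lemma:PSC(lambda)-PCE(lambda)-duality}, exactly as Theorem~\ref{theorem:glt(k)} derives its substructural half from its extensional half. As in that proof, it suffices to treat sentences; the statement for a formula $\phi(\bar{x})$ then follows by substituting fresh distinct constants $c_1, \ldots, c_n$ for the free variables (passing to $\tau_n$) and observing that $\phi(\bar{x})$ is $PCE(\lambda)$ modulo $V$ precisely when the resulting sentence is, while $\Pi^0_2$-ness and the set of free variables are preserved under this substitution.

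For the forward direction of part~(\ref{theorem:PCE(lambda)-char}), suppose $\phi$ is equivalent modulo $V$ to a $\Pi^0_2$ sentence $\varphi$. Since $\varphi$ has only finitely many leading universal quantifiers, say $n$, it is a $\forall^n \exists^*$ sentence, hence $PCE(n)$ modulo $V$ by the discussion at the end of Chapter~\ref{chapter:our-properties}. As $n$ is finite and $\lambda$ infinite, every subset of size at most $n$ has size less than $\lambda$, so each $\lambda$-ary covered extension is in particular an $n$-ary covered extension (by the reasoning of Remark~\ref{remark:some-obs-about-k-ary-covered-ext}); therefore $PCE(n)$ modulo $V$ entails $PCE(\lambda)$ modulo $V$, and $\phi$ is $PCE(\lambda)$ modulo $V$. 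For the converse, suppose $\phi$ is $PCE(\lambda)$ modulo $V$; if $V \cup \{\phi\}$ is unsatisfiable we are trivially done, so assume otherwise and let $\Gamma$ be the set of $\Pi^0_2$ consequences of $\{\phi\}$ modulo $V$, so that $(V \cup \{\phi\}) \vdash \Gamma$. I would show $(V \cup \Gamma) \vdash \phi$. Given $\mf{A} \models (V \cup \Gamma)$, pass to a $\mu$-saturated elementary extension $\mf{A}^+$ of $\mf{A}$ (Proposition~\ref{prop:saturation-results}(\ref{prop:saturation-results:exis-of-sat-elem-ext})); then $\mf{A}^+ \models (V \cup \Gamma)$, and Lemma~\ref{lemma:lambda-ary-covers-for-saturated-models} furnishes a $\lambda$-ary cover $R$ of $\mf{A}^+$ with $\mf{B} \models (V \cup \{\phi\})$ for every $\mf{B} \in R$. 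Since $\phi$ is $PCE(\lambda)$ modulo $V$, we obtain $\mf{A}^+ \models \phi$, and as $\mf{A} \preceq \mf{A}^+$ this yields $\mf{A} \models \phi$.

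Having shown $\phi$ equivalent to $\Gamma$ modulo $V$, the compactness theorem (Theorem~\ref{theorem:compactness}) gives a finite $\Gamma_0 \subseteq \Gamma$ with $\phi$ equivalent modulo $V$ to $\bigwedge \Gamma_0$; since a finite conjunction of $\Pi^0_2$ sentences is again equivalent to a single $\Pi^0_2$ sentence, part~(\ref{theorem:PCE(lambda)-char}) follows. Part~(\ref{theorem:PSC(lambda)-char}) is then immediate: by Lemma~\ref{lemma:PSC(lambda)-PCE(lambda)-duality}, $\phi$ is $PSC(\lambda)$ modulo $V$ iff $\neg\phi$ is $PCE(\lambda)$ modulo $V$, iff $\neg\phi$ is equivalent modulo $V$ to a $\Pi^0_2$ formula $\psi(\bar{x})$ by the part just proved, iff $\phi$ is equivalent modulo $V$ to $\neg\psi(\bar{x})$, which after driving in the negation is a $\Sigma^0_2$ formula with the same free variables. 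A fully chain-based alternative for the backward direction would instead invoke Lemma~\ref{lemma:exis-of-lambda-ary-cover-in-an-elem-ext-satisfying-T} along an ascending elementary chain, paralleling the second proof of Theorem~\ref{theorem:glt(k)}. I expect no genuine obstacle at the level of the theorem itself: all the real content sits in the two transfer lemmas (Lemmas~\ref{lemma:lambda-ary-covers-for-saturated-models} and~\ref{lemma:exis-of-lambda-ary-cover-in-an-elem-ext-satisfying-T}), whose only-if directions must realize the $\Pi^0_1$-type of a tuple of length less than $\lambda$ inside the saturated (respectively chain-limit) structure. The single point requiring care there, and the place where the $\lambda$-analogue departs quantitatively from the $k$-analogue, is ensuring the saturation degree $\mu$ is large enough relative to $\lambda$ to accommodate such long tuples when applying Proposition~\ref{prop:saturation-results}.
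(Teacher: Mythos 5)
Your proposal is correct and follows exactly the route the paper intends: the paper explicitly skips this proof, stating it is analogous to that of Theorem~\ref{theorem:glt(k)} (prove the extensional half via Lemma~\ref{lemma:lambda-ary-covers-for-saturated-models} and $\mu$-saturated elementary extensions, then dualize via Lemma~\ref{lemma:PSC(lambda)-PCE(lambda)-duality}), and you have filled in that analogue faithfully, including the compactness step and the observation that $PCE(n)$ for finite $n$ entails $PCE(\lambda)$. Your closing remark about choosing $\mu$ large enough relative to $\lambda$ is the right point of care, though it concerns the proof of the transfer lemma rather than the theorem itself.
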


The above theorem implies the following result that is not obvious
from the definitions of the properties concerned.

\begin{corollary}\label{corollary:PSC(lambda)=PSC-and-PCE(lambda)=PCE}
For every infinite cardinal $\lambda$, a sentence is $PSC(\lambda)$
(resp. $PCE(\lambda)$) modulo a theory $V$ iff it is $PSC$
(resp. $PCE$) modulo $V$.
\end{corollary}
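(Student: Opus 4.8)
The plan is to obtain this corollary for free by composing the two syntactic characterizations already in hand, rather than by attempting any direct combinatorial argument that would shrink a $\lambda$-crux to a finite crux (or a $\lambda$-ary cover to a $k$-ary cover). First I would handle the substructural case. By Theorem~\ref{theorem:PCE(lambda)-and-PSC(lambda)-characterizations}(\ref{theorem:PSC(lambda)-char}), a sentence $\phi$ is $PSC(\lambda)$ modulo $V$ if and only if $\phi$ is equivalent modulo $V$ to a $\Sigma^0_2$ sentence. By Corollary~\ref{corollary:glt-PSC-PCE}, $\phi$ is $PSC$ modulo $V$ if and only if $\phi$ is equivalent modulo $V$ to a $\Sigma^0_2$ sentence. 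Since both properties are pinned down by the very same syntactic condition, they must coincide: $\phi$ is $PSC(\lambda)$ modulo $V$ iff $\phi$ is $PSC$ modulo $V$.

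The extensional case is entirely dual. By Theorem~\ref{theorem:PCE(lambda)-and-PSC(lambda)-characterizations}(\ref{theorem:PCE(lambda)-char}), $\phi$ is $PCE(\lambda)$ modulo $V$ iff it is equivalent modulo $V$ to a $\Pi^0_2$ sentence, while by Corollary~\ref{corollary:glt-PSC-PCE}, $\phi$ is $PCE$ modulo $V$ iff it is equivalent modulo $V$ to a $\Pi^0_2$ sentence; these two conditions again agree, yielding the claimed equivalence. Alternatively, one may deduce the extensional case from the substructural one by invoking the $PSC(\lambda)$-$PCE(\lambda)$ duality of Lemma~\ref{lemma:PSC(lambda)-PCE(lambda)-duality} together with the analogous $PSC$-$PCE$ correspondence, noting that $\phi$ is $PCE(\lambda)$ (resp.\ $PCE$) modulo $V$ iff $\neg\phi$ is $PSC(\lambda)$ (resp.\ $PSC$) modulo $V$.

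I do not expect any genuine obstacle here, precisely because the substantive work has already been carried out in proving the two characterizations. The point worth underscoring is that the statement is far from obvious when read directly off the definitions: a priori there is no reason that a crux whose size is merely bounded by an infinite cardinal $\lambda$ should be replaceable by one of some \emph{finite} size $k$. What the argument reveals is that this semantic gap collapses as soon as one routes through the common prenex normal form ($\Sigma^0_2$, resp.\ $\Pi^0_2$), where it is ultimately the compactness theorem---already the engine behind $\glt{k}$ and behind Theorem~\ref{theorem:PCE(lambda)-and-PSC(lambda)-characterizations}---that forces a finite quantifier prefix, hence a finite $k$.
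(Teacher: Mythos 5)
Your proposal is correct and is exactly how the paper derives this corollary: the paper states it as an immediate consequence of Theorem~\ref{theorem:PCE(lambda)-and-PSC(lambda)-characterizations} combined with Corollary~\ref{corollary:glt-PSC-PCE}, matching the two syntactic characterizations via $\Sigma^0_2$ (resp.\ $\Pi^0_2$) just as you do. Your closing remark about compactness being the engine that collapses the infinite-cardinal bound to a finite $k$ is also the right way to understand why the statement is non-obvious from the definitions alone.
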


The above characterizations, along with the characterizations in
Section~\ref{section:glt-for-sentences}, are depicted pictorially
below.\\

\begin{figure}[H]
\centering 
\includegraphics[scale=0.6]{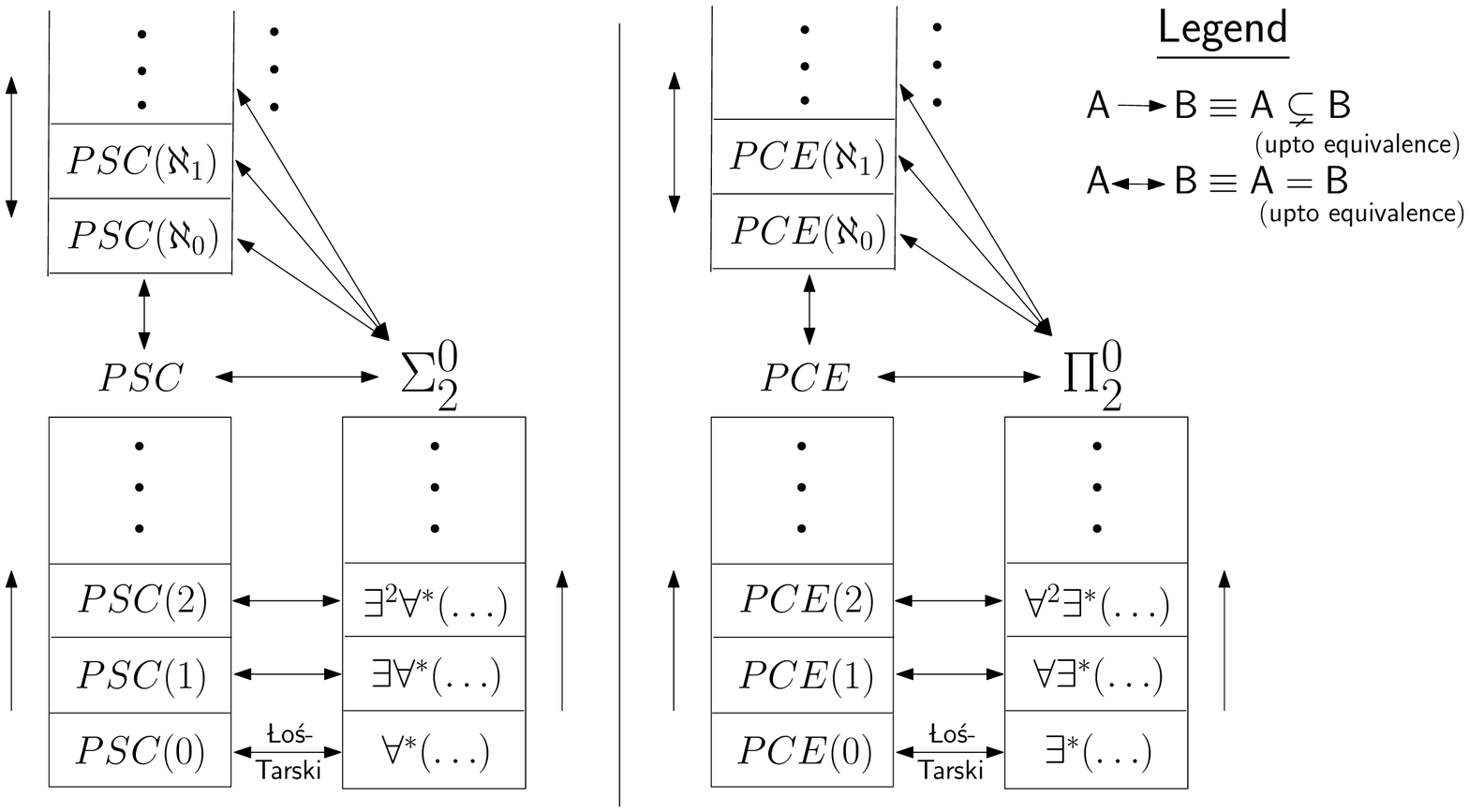}
\caption{Characterizations of $PSC(k), PCE(k), PSC(\lambda)$ and $PCE(\lambda)$ sentences}
\end{figure}

\subsection{Applications: new proofs of inexpressibility results in FO}\label{subsection:inexp}

Typical proofs of inexpressibility results in FO are either via
compactness theorem, or Ehrenfeucht-Fr\"aiss\'e games or locality
arguments. We present a new approach to proving inexpressibility
results, using our results. We illustrate this approach via the
example presented below. Below, the underlying class $\cl{S}$ of
graphs is the class of all undirected graphs.

Consider the subclass $\cl{U}$ of $\cl{S}$ consisting of graphs that
contain a cycle as a subgraph.  It is easy to see that in any graph $G
\in \cl{U}$, the vertices of any cycle form an $\aleph_0$-crux of
$G$. Then $\cl{U}$ is $PSC(\aleph_0)$. If $\cl{U}$ were definable by
an FO sentence, say $\varphi$, then $\varphi$ is $PSC(\aleph_0)$. By
Corollary \ref{corollary:PSC(lambda)=PSC-and-PCE(lambda)=PCE}, it
follows that $\varphi$ is $PSC(k)$ for some $k \in \mathbb{N}$. Now
consider the cycle graph $G$ of length $k+1$; clearly $G$ models
$\varphi$. No proper induced subgraph of $G$ is a cycle, whence $G$
contains no $k$-crux at all. This contradicts the earlier inference
that $\varphi$ is $PSC(k)$. Thus $\cl{U}$ is not definable by any FO
sentence.

A short report containing more examples of inexpressibility results
proven using our preservation theorems can be found
at~\cite{inexp-TR}. These examples include connectedness,
bipartiteness, caterpillars, etc.  Note that the notion of `core'
in~\cite{inexp-TR} is exactly what we mean by a `crux' in this thesis.

\section{An uncomputability result}\label{section:uncomputability}

Corollary~\ref{corollary:PSC(lambda)=PSC-and-PCE(lambda)=PCE} tells
that given a sentence $\phi$ that is $PSC(\lambda)$
(resp.\ $PCE(\lambda)$) modulo a theory $V$, there exists $k \in
\mathbb{N}$ such that $\phi$ is $PSC(k)$ (resp.\ $PCE(k)$) modulo $V$.
This raises the question: is $k$ computable? The following proposition
answers the aforesaid question in the negative for $PSC(\aleph_0)$
(resp.\ $PCE(\aleph_0)$), and hence for $PSC(\lambda)$
(resp. $PCE(\lambda)$) for each (infinite cardinal) $\lambda$. Below, a
\emph{relational} sentence is a sentence over a vocabulary that does
not contain any function symbols. Let the length of a sentence $\phi$
be denoted by $|\phi|$.

\begin{proposition}\label{prop:non-recursiveness-for-lambda=bounded}
Let $V$ be the empty theory. For every recursive function $\nu:
\mathbb{N} \rightarrow \mathbb{N}$,  the following are true:
\begin{enumerate}[nosep]
\item There is a relational $\Pi^0_2$ sentence $\phi$ that is
  $PSC(\aleph_0)$ modulo $V$ but that is not $PSC(k)$ modulo $V$ for
  any $k \leq
  \nu(|\phi|)$.\label{lemma:non-recursiveness-for-lambda=bounded-part-1}
\item There is a relational $\Sigma^0_2$ sentence $\phi$ that is
  $PCE(\aleph_0)$ modulo $V$ but that is not $PCE(k)$ modulo $V$ for
  any $k \leq
  \nu(|\phi|)$.\label{lemma:non-recursiveness-for-lambda=bounded-part-2}
\end{enumerate}
\end{proposition}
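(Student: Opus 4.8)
The plan is to prove part~(\ref{lemma:non-recursiveness-for-lambda=bounded-part-1}) directly by a reduction from the halting problem and then derive part~(\ref{lemma:non-recursiveness-for-lambda=bounded-part-2}) from it by duality. The whole argument for part~(\ref{lemma:non-recursiveness-for-lambda=bounded-part-1}) is a single proof by contradiction. Fix a recursive $\nu$ and suppose, toward a contradiction, that \emph{every} relational $\Pi^0_2$ sentence $\phi$ that is $PSC(\aleph_0)$ modulo $V$ is already $PSC(k)$ modulo $V$ for some $k \le \nu(|\phi|)$. Since $PSC(l) \Rightarrow PSC(k)$ for $l \le k$, and since we may replace $\nu$ by $n \mapsto \max_{i \le n}\nu(i)$, this assumption is equivalent to: every such $\phi$ is $PSC(\nu(|\phi|))$ modulo $V$, i.e.\ its minimal crux is at most $\nu(|\phi|)$. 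I will contradict this by exhibiting, for each Turing machine $M$ halting on empty input, a short sentence whose minimal crux outgrows every recursive function of $|M|$.

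Concretely, I would construct a computable map $M \mapsto \phi_M$ sending a Turing machine $M$ to a relational $\Pi^0_2$ sentence over a fixed vocabulary --- a linear order $<$ together with ``horizontal/vertical successor'' relations and finitely many unary predicates for states and tape symbols --- encoding the space--time diagram of the run of $M$ on empty input, with $|\phi_M| = O(|M|)$. The sentence $\phi_M$ asserts that the domain carries a discrete grid with least and greatest elements, that the first row is the initial configuration, that the last cell is halting, and that every \emph{induced} successor step respects $M$'s transition table. The crucial point is that ``induced successor'' is defined with a $\neg\exists z\,(x<z<y)$ clause, so these local-consistency requirements are genuinely $\Pi^0_2$ and are \emph{not} preserved under substructures (were they $\Pi^0_1$, they would be downward preserved and the crux would collapse). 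For a halting $M$ with running time $t_M$, forward determinism forces $\phi_M$ to have a \emph{unique} model up to isomorphism, namely the $\Theta(t_M^2)$-element grid $\mf{A}_M$ of the actual computation. Deleting any element of $\mf{A}_M$ either destroys discreteness or creates an induced successor pair violating the transition check, so the only substructure of $\mf{A}_M$ satisfying $\phi_M$ is $\mf{A}_M$ itself. Hence the sole crux of $\mf{A}_M$ is its entire universe: $\phi_M$ is $PSC(|\mf{A}_M|)$, in particular $PSC(\aleph_0)$, yet its minimal crux equals $|\mf{A}_M| = \Theta(t_M^2) \ge t_M$, so it is not $PSC(k)$ for any $k < |\mf{A}_M|$.

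Granting such a construction, the contradiction is immediate: the contrary assumption gives, for every halting $M$, the chain $t_M \le |\mf{A}_M| \le \nu(|\phi_M|) \le \nu(O(|M|))$, a recursive upper bound on the halting time of every halting machine in terms of its size; running $M$ for that many steps would then decide the halting problem. Hence no recursive $\nu$ is a uniform bound, which is exactly the assertion of part~(\ref{lemma:non-recursiveness-for-lambda=bounded-part-1}): for the fixed $\nu$ there is a relational $\Pi^0_2$ sentence that is $PSC(\aleph_0)$ but not $PSC(k)$ for any $k \le \nu(|\phi|)$. Part~(\ref{lemma:non-recursiveness-for-lambda=bounded-part-2}) then follows by applying part~(\ref{lemma:non-recursiveness-for-lambda=bounded-part-1}) to the recursive function $\nu^{+}(n) := \nu(n+c)$, where $c$ is the length overhead of prefixing a negation: if $\phi$ is the resulting $\Pi^0_2$ witness, then $\psi := \neg\phi$ is a relational $\Sigma^0_2$ sentence that is $PCE(\aleph_0)$ by Lemma~\ref{lemma:PSC(lambda)-PCE(lambda)-duality} and, by Lemma~\ref{lemma:PSC(k)-PCE(k)-duality}, is not $PCE(k)$ for $k \le \nu(|\psi|) = \nu^{+}(|\phi|)$.

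The main obstacle is the construction of $\phi_M$ and, in particular, the simultaneous verification of the two semantic claims: that the genuine computation grid is the \emph{only} model (so that $PSC(\aleph_0)$ holds and the crux equals the model size), and that no pathological model --- notably no infinite order of type $\omega + \omega^{*}$ threading a spurious bi-infinite pseudo-run --- can satisfy $\phi_M$. Pinning the first row to the true initial configuration and using forward determinism is what rules these out for halting $M$, while care is needed in axiomatising discreteness and grid adjacency so that the transition checks remain $\Pi^0_2$ yet still force the entire grid to be present. This is a routine but delicate adaptation of the standard first-order encoding of Turing-machine computations, and it is where essentially all the work of the proof lies.
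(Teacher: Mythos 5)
Your overall strategy is genuinely different from the paper's. The paper does not touch Turing machines directly: it invokes Rossman's theorem (Theorem~\ref{theorem:rossman-los-tarski}), which gives a non-recursive lower bound on the \emph{length} of universal equivalents of $PS$ sentences, converts that into a lower bound on the \emph{number of universal variables} via a DNF normal-form estimate (Corollary~\ref{corollary:los-tarski-no-of-foralls}), and then derives the proposition by a short contradiction: if $\psi=\neg\phi$ (for $\phi$ the Rossman sentence) had cruxes of size $\le \rho(|\phi|)$ in all models, its minimal models would be small, so $\psi$ would be equivalent to a short existential sentence and $\phi$ to a universal sentence with few variables. Your skeleton --- unique finite model $\Rightarrow$ minimal crux is the whole universe; a recursive crux bound would yield a recursive bound on halting times; duality for part~2 --- is logically sound, and the duality step matches the paper's. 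If your construction of $\phi_M$ existed, the rest would go through.

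The genuine gap is the construction itself, which you defer as ``routine but delicate'' while it is in fact the entire mathematical content of the argument (it would amount to reproving a strong form of Rossman's theorem from scratch). What you need is a relational $\forall^*\exists^*$ sentence $\phi_M$, computable from $M$, whose \emph{unique model over all structures, finite and infinite}, is the computation grid. Categoricity is essential here: if $\phi_M$ has any other model --- in particular any infinite model --- you lose $PSC(\aleph_0)$, since that model need not have a finite crux; and if it has other finite models you lose the lower bound on the minimal crux. The standard encodings behind the undecidability of the $\forall^2\exists^*$ class give satisfiability-preserving reductions, not categoricity; they freely admit models with extra elements interleaved into the order (each locally consistent once it carries some valid label), non-standard order types such as $\omega+\omega^*$ threading a forward run against a backward run from the halting configuration, and models in which the successor relations are not the intended partial functions. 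You gesture at forward determinism and pinning the first row, but determinism only propagates information forward along the \emph{intended} successor chain; it does not by itself exclude elements that sit outside that chain or reorganize it. Ruling all of this out within the $\forall^*\exists^*$ prefix (where, e.g., ``$x$ is the $<$-maximum'' can only be imposed via a guarded unary predicate whose nonemptiness is separately asserted) requires a careful, nontrivial axiomatization together with a full verification of uniqueness, none of which is present. As written, the proposal establishes the reduction scaffolding but not the object the reduction needs.
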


Towards the proof of the above proposition, we first present a recent
unpublished result of Rossman~\cite{rossman-los-tarski}.

\begin{theorem}[Rossman, 2012]\label{theorem:rossman-los-tarski}
Let $V$ be the empty theory. For every recursive function $\nu:
\mathbb{N} \rightarrow \mathbb{N}$, there exists a relational
$\Sigma^0_2$ sentence $\phi$ that is $PS$ modulo $V$, and for which
every equivalent $\Pi^0_1$ sentence has length at least $\nu(|\phi|) +
1$.
\end{theorem}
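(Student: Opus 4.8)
The {\lt} theorem (Theorem~\ref{theorem:los-tarski-subst-version}) already guarantees that any $PS$ sentence has \emph{some} equivalent $\Pi^0_1$ sentence, so the entire content of the statement is the \emph{size} lower bound. I would first observe that one cannot hope to extract this from a purely ``soft'' computability contradiction: even granting a recursive bound $\nu$, the property ``$\phi$ is $PS$'' stays merely recursively enumerable. Indeed, were such a bound to exist, one could compute from a $\Sigma^0_2$ sentence $\phi$ the universal sentence $\Psi_\phi$ formed as the conjunction of all universal sentences of length $\le \nu(|\phi|)$ that are consequences of $\phi$ (each such entailment being a Bernays--Sch\"onfinkel--Ramsey validity, hence decidable), and then $\phi$ is $PS$ iff $\Psi_\phi \to \phi$ is valid --- a $\Sigma^0_2$-validity, which is recursively enumerable but undecidable. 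Thus the bound does not collapse the complexity of $PS$, and the theorem must be proved by exhibiting explicit hard instances. The plan is therefore to construct, for each recursive $\nu$, a concrete witness $\phi$.

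The witnesses should come from a recursive family $(\phi_e)_e$ indexed by Turing machines, with $|\phi_e| = O(|e|)$ but whose shortest $\Pi^0_1$ equivalent has length growing like the halting time $T(e)$ of $e$ on empty input, which over halting machines is not bounded by any recursive function of $|e|$. The compression that keeps $\phi_e$ short while forcing its universal form to be long comes from the \emph{local checkability} of computations: membership of a model in the class defined by $\phi_e$ is governed by $O(1)$-local consistency conditions on a successor/tape structure, with the transition table of $e$ contributing only $O(|e|)$ to the matrix, so that an $\exists^{O(1)}\forall^{O(1)}$ (i.e.\ $\Sigma^0_2$, Bernays--Sch\"onfinkel--Ramsey) sentence suffices, whereas a model encoding an actual halting run has size $\Theta(T(e))$. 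The delicate point is to arrange $\phi_e$ to lie in $\Sigma^0_2 \cap PS$, so that {\lt} applies and a universal equivalent exists at all; since the naive ``contains a halting computation'' sentence is preserved under \emph{extensions} rather than substructures, I would instead work with the substructure-closed dual --- a class of ``halting-computation-free'' structures --- and pass through the $PSC(k)$--$PCE(k)$ duality (Lemma~\ref{lemma:PSC(k)-PCE(k)-duality}) when convenient, proving the dual statement about $PE$, $\Pi^0_2$ and $\Sigma^0_1$ and then negating.

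The crux is the length lower bound, for which I would use the forbidden-substructure characterization of universal sentences: a $\Pi^0_1$ sentence with $q$ quantifiers defines a class all of whose minimal non-members have size at most $q$, so it suffices to exhibit, for the class defined by $\phi_e$, a \emph{minimal} forbidden substructure of size $\Omega(T(e))$ --- namely a full halting tableau of $e$, which is minimal precisely because deleting any cell violates a local transition check. Any $\Pi^0_1$ sentence equivalent to $\phi_e$ then has at least $\Omega(T(e))$ quantifiers and hence length $\ge \Omega(T(e))$; choosing $e$ with $T(e) > \nu(c\,|e|) \ge \nu(|\phi_e|)$ yields the desired $\phi$. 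I expect the main obstacle to be the simultaneous engineering of $\phi_e$ so that it is (i) genuinely $\Sigma^0_2$ and of size $O(|e|)$, (ii) preserved under substructures, and (iii) possesses a provably \emph{minimal} forbidden configuration of the full size $\Theta(T(e))$ with no spurious smaller obstructions. If a clean minimal-forbidden-substructure argument proves too brittle, the fallback is to replace it with a formula-size (Adler--Immerman-style) game lower bound showing that no short universal sentence can separate a halting tableau from its one-cell deletions.
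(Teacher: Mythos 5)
First, a point of context: the paper does not prove this statement at all --- it is imported as an unpublished result of Rossman (cited as~\cite{rossman-los-tarski}) and used as a black box to derive Corollary~\ref{corollary:los-tarski-no-of-foralls} and Proposition~\ref{prop:non-recursiveness-for-lambda=bounded}, so there is no in-paper proof to measure your attempt against. Judged on its own terms, your skeleton is the standard and surely correct one: the observation that a $\Pi^0_1$ sentence with $q$ universal quantifiers can only exclude structures that already contain a bad induced substructure on at most $q$ elements (so a quantifier-count lower bound, hence a length lower bound, follows from exhibiting a non-model all of whose small induced substructures are models); the encoding of Turing-machine tableaux to compress a size-$T(e)$ obstruction into an $O(|e|)$-size sentence; and the diagonalization $T(e) > \nu(c\,|e|)$ over halting machines. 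Your preliminary argument that no soft computability contradiction is available is also sound (the hypothesized bound would only re-certify that $PS$ is r.e., which it is unconditionally), and your route, if completed, would yield the quantifier-count form of the statement directly, whereas the paper has to recover that form from the length form via the separate normal-form size argument in Corollary~\ref{corollary:los-tarski-no-of-foralls}.

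The gap is that the load-bearing step --- the actual construction of $\phi_e$ --- is never carried out, and it is exactly where all of the difficulty concentrates; you acknowledge this yourself under items (i)--(iii). The tension is real and is not obviously resolved by the moves you name. The class ``contains a halting tableau of $e$ as an induced substructure'' is preserved under extensions and has a large minimal member, but its natural definition uses $\Theta(T(e)^2)$ existential quantifiers; the compressed ``$\exists$ start and halt markers, $\forall$ local consistency'' formulation is short and of the right quantifier shape, but it is \emph{not} preserved under extensions (new elements entering the relations falsify the universal part), so its negation is not $PS$ and the {\lt} theorem does not even apply to it; and the dual ``tableau-free'' class is naturally a long $\Pi^0_1$ sentence with no evident short $\Sigma^0_2$ definition. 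Reconciling shortness, genuine $\Sigma^0_2$ shape, preservation under substructures over \emph{arbitrary} (including infinite) structures, and a provably minimal obstruction of size at least $T(e)$ with no smaller spurious obstructions is the entire content of Rossman's result; the proposal offers only the expectation that this can be done, plus an unelaborated fallback to Adler--Immerman-style games. As written, this is a plausible research plan, not a proof.
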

Theorem~\ref{theorem:rossman-los-tarski} gives a non-recursive lower
bound on the length of $\Pi^0_1$ sentences equivalent to sentences
that are $PS$ (in terms of the lengths of the latter sentences). This
strengthens the non-elementary lower bound proved
in~\cite{dawar-model-theory-large}.

\begin{corollary}\label{corollary:los-tarski-no-of-foralls}
Let $V$ be the empty theory. For every recursive function $\nu:
\mathbb{N} \rightarrow \mathbb{N}$, there exists a relational
$\Sigma^0_2$ sentence $\phi$ that is $PS$ modulo $V$, and for which
every equivalent $\Pi^0_1$ sentence has at least $\nu(|\phi|) + 1$
universal variables.
\end{corollary}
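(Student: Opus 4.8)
The plan is to derive Corollary~\ref{corollary:los-tarski-no-of-foralls} from Theorem~\ref{theorem:rossman-los-tarski} by trading the length lower bound supplied there for a lower bound on the number of universal quantifiers, via a normal form argument. The key observation is that over a fixed finite relational vocabulary, a $\Pi^0_1$ sentence using only boundedly many universal variables can be taken to have bounded length. Concretely, I would first establish the following normal form lemma: there is a recursive function $g \colon \mathbb{N} \times \mathbb{N} \to \mathbb{N}$ such that every $\Pi^0_1$ sentence with at most $m$ universal variables, over a relational vocabulary all of whose symbols occur in a fixed sentence of length at most $n$, is equivalent to a $\Pi^0_1$ sentence of length at most $g(m,n)$.

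To see this, suppose $\psi = \forall x_1 \cdots \forall x_m\, \chi(x_1,\ldots,x_m)$ over a relational vocabulary $\tau$. Every relation symbol of $\tau$ then has arity at most $n$ and there are at most $n$ such symbols, since each atomic occurrence of a symbol inside a sentence of length at most $n$ witnesses both bounds. Hence the number of atomic formulae over $\tau$ in the variables $x_1,\ldots,x_m$ — namely the formulae $R(x_{i_1},\ldots,x_{i_{\mathrm{ar}(R)}})$ together with the equalities $x_i = x_j$ — is bounded by a recursive function $N(m,n)$. The matrix $\chi$ is a Boolean combination of these atoms and, up to logical equivalence, is determined by the set of truth assignments to the atoms that satisfy it; rewriting $\chi$ in disjunctive normal form over these atoms produces an equivalent $\Pi^0_1$ sentence whose length is bounded by $g(m,n)$ for a suitable recursive $g$. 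This establishes the lemma.

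With the lemma in hand, I would proceed as follows. Given the recursive function $\nu$ from the statement of the corollary, define $\nu'(n) := g(\nu(n), n)$, which is recursive since $\nu$ and $g$ are. Apply Theorem~\ref{theorem:rossman-los-tarski} to $\nu'$ to obtain a relational $\Sigma^0_2$ sentence $\phi$ that is $PS$ modulo the empty theory and for which every equivalent $\Pi^0_1$ sentence has length at least $\nu'(|\phi|) + 1$. I claim this same $\phi$ witnesses the corollary. Suppose toward a contradiction that some $\Pi^0_1$ sentence $\psi$ equivalent to $\phi$ has at most $\nu(|\phi|)$ universal variables. Since $\psi$ and $\phi$ are over the relational vocabulary of $\phi$, whose symbols all occur in $\phi$, the normal form lemma (with $m = \nu(|\phi|)$ and $n = |\phi|$) yields an equivalent $\Pi^0_1$ sentence of length at most $g(\nu(|\phi|), |\phi|) = \nu'(|\phi|)$, contradicting the length lower bound $\nu'(|\phi|) + 1$ provided by the theorem. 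Hence every $\Pi^0_1$ sentence equivalent to $\phi$ has at least $\nu(|\phi|) + 1$ universal variables, as required.

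The routine work is confined to the normal form lemma, and the one point I expect to require care in stating cleanly — the main obstacle — is the bookkeeping that any $\Pi^0_1$ sentence equivalent to $\phi$ may be taken over a vocabulary whose symbols already occur in $\phi$, so that its arities and cardinality are genuinely bounded by $|\phi|$. Any extra relation symbols appearing in a candidate $\psi$ are dummy, since the truth value of $\phi$ does not depend on them, and they may be discarded without affecting equivalence or increasing the variable count.
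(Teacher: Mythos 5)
Your proof is correct and follows essentially the same route as the paper's: a normal-form lemma bounding the length of a $\Pi^0_1$ sentence by a recursive function of its number of universal variables (obtained by putting the matrix in DNF over the finitely many atoms), composed with the recursive length lower bound of Theorem~\ref{theorem:rossman-los-tarski}. If anything, your version is slightly more careful than the paper's, since you make the bound $g(m,n)$ depend explicitly on the vocabulary (controlled via $|\phi|$) and you explicitly dispose of dummy relation symbols in the candidate $\Pi^0_1$ sentence, points that the paper's function $\rho(n)$ glosses over.
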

\begin{proof}

We show below that there is a monotone recursive function $\rho:
\mathbb{N} \rightarrow \mathbb{N}$ such that if $\xi$ is a $\Pi^0_1$
sentence with $n$ variables, then the shortest (in terms of length)
$\Pi^0_1$ sentence equivalent to $\xi$ has length at most
$\rho(n)$. That would prove this corollary as follows.  Suppose there
is a recursive function $\nu: \mathbb{N} \rightarrow \mathbb{N}$ such
that for each relational $\Sigma^0_2$ sentence $\psi$ that is $PS$
modulo $V$, there is an equivalent $\Pi^0_1$ sentence having at most
$\nu(|\psi|)$ universal variables. Then consider the recursive
function $\theta:\mathbb{N} \rightarrow \mathbb{N}$ given by
$\theta(n) = \rho(\nu(n))$ and let $\phi$ be the relational
$\Sigma^0_2$ sentence given by Theorem
\ref{theorem:rossman-los-tarski} for the function $\theta$. Then
$\phi$ is $PS$ modulo $V$ and the shortest $\Pi^0_1$ sentence
equivalent to $\phi$ has length $> \theta(|\phi|)$. By the assumption
about $\nu$ above, there is a $\Pi^0_1$ sentence equivalent to $\phi$
having at most $\nu(|\phi|)$ universal variables. Whence there is a
$\Pi^0_1$ sentence equivalent to $\phi$ whose length is at most
$\rho(\nu|\phi|) = \theta(|\phi|)$ -- a contradiction.

Let $\xi$ be a universal sentence given by $\xi = \forall^n \bar{z}
\beta(\bar{z})$. Let the vocabulary of $\xi$ be $\tau$ and the maximum
arity of any predicate of $\tau$ be $q$. Then the number $k$ of atomic
formulae of $\tau$ having variables from $\bar{z}$ is at most $|\tau|
\cdot n^q$. It follows that the length $r$ of the disjunctive normal
form, say $\alpha$, of $\beta$ satisfies $r \leq (d \cdot k \cdot
2^k)$ for some constant $d \ge 1$.  Then $\xi$ is equivalent to the
sentence $\gamma = \forall^n \bar{z} \alpha(\bar{z})$; the size of
$\gamma$ is at most $e \cdot (n + r)$ for some constant $e \ge
1$. Since $k$ and $r$ are bounded by monotone recursive functions of
$n$, so is the length of $\gamma$.
\end{proof}




\begin{proof}[Proof of Proposition~\ref{prop:non-recursiveness-for-lambda=bounded}]

We give the proof for part
(\ref{lemma:non-recursiveness-for-lambda=bounded-part-1}). The
negation of the sentence $\phi$ showing part
(\ref{lemma:non-recursiveness-for-lambda=bounded-part-1}) proves part
(\ref{lemma:non-recursiveness-for-lambda=bounded-part-2}). Also, we
omit the mention of $V$ for the sake of readability.

Suppose there is a recursive function $\nu: \mathbb{N} \rightarrow
\mathbb{N}$ such that if $\xi$ is a relational $\Pi^0_2$ sentence that
is $PSC(\aleph_0)$, then $\xi$ is $PSC(k)$ for some $k \leq
\nu(|\xi|)$. In other words, for $\xi$ as mentioned, every model of
$\xi$ has a crux of size at most $\nu(|\xi|)$. Consider the recursive
function $\rho: \mathbb{N} \rightarrow \mathbb{N}$ given by $\rho(n) =
\nu(n+1)$. Then, for the function $\rho$, consider the relational
$\Sigma^0_2$ sentence $\phi$ given by Corollary
\ref{corollary:los-tarski-no-of-foralls}. The sentence $\phi$ is $PS$
and every $\Pi^0_1$ sentence equivalent to it has $> \rho(|\phi|)$
number of universal variables. Now the $\Pi^0_2$ sentence $\psi$ given
by $\psi = \neg \phi$ is equivalent to a $\Sigma^0_1$ sentence. Since
$\Sigma^0_1$ sentences are $PSC$, and hence $PSC(\aleph_0)$, it
follows that $\psi$ is $PSC(\aleph_0)$. Now, by our assumption about
$\nu$ above, every model of $\psi$ has a crux of size at most
$\nu(|\psi|) = \nu(|\phi| + 1) = \rho(|\phi|)$. Then all minimal
models of $\psi$ have size at most $\rho(|\phi|) + q$, where $q$ is
the number of constant symbols in the vocabulary of $\phi$. Using the
fact that $\psi$ is preserved under extensions, it is easy to
construct a $\Sigma^0_1$ sentence having $\rho(|\phi|)$ number of
existential variables, that is equivalent to $\psi$. Whereby $\phi$ is
equivalent to a $\Pi^0_1$ sentence having $\rho(|\phi|)$ number of
universal variables -- a contradiction.
\end{proof}

\chapter{Characterizations: the case of theories}\label{chapter:the-case-of-theories}

\section{Characterizations of the extensional properties}\label{section:ext-char-for-theories}

The central result of this section is as below.

\begin{theorem}\label{theorem:ext-chars}
Given a theory $V$, the following hold for each $k \in \mathbb{N}$ and
each $\lambda \ge \aleph_0$:
\begin{enumerate}[nosep]
\item A theory $T(\bar{x})$ is $PCE(k)$ modulo $V$ iff $T(\bar{x})$ is
  equivalent modulo $V$ to a theory of $\Pi^0_2$ formulae, all of
  whose free variables are among $\bar{x}$ and all of which have $k$
  universal quantifiers.\label{theorem:char-of-PCE(k)-theories}
\item A theory $T(\bar{x})$ is $PCE(\lambda)$ modulo $V$ iff
  $T(\bar{x})$ is equivalent modulo $V$ to a theory of $\Pi^0_2$
  formulae, all of whose free variables are among
  $\bar{x}$.\label{theorem:char-of-PCE(lambda)-theories}
\end{enumerate}
\end{theorem}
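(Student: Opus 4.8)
The plan is to prove both parts by lifting the sentence-level proof of the extensional half of $\glt{k}$ (Theorem~\ref{theorem:glt(k)}, part~(\ref{theorem:glt(k)-ext})) to theories, with the single sentence $\phi$ replaced throughout by the theory $T$. As a preliminary step I would dispose of the free variables in the usual way: substituting fresh distinct constants $c_1, \ldots, c_n$ for $x_1, \ldots, x_n$ yields an $\fo(\tau_n)$ theory $T'$ without free variables, and by the definition of ``is $PCE(k)$ modulo $V$'' for theories with parameters it suffices to prove the statement for $T'$ over $\tau_n$, with $V$ regarded as an $\fo(\tau_n)$ theory; translating the resulting $\forall^k\exists^*$ sentences back to formulae in $\bar{x}$ then gives the claim. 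I would also dispatch the trivial cases separately: if $V$ is inconsistent every theory is vacuously equivalent modulo $V$, and if $V \cup T$ is inconsistent while $V$ is not, then $T$ is equivalent modulo $V$ to the single unsatisfiable $\forall^k\exists^*$ sentence $\forall x_1 \cdots \forall x_k\,\exists y\,(y \ne y)$. Hence I may assume $V \cup T$ is consistent, so that $V$ and $T$ are each consistent and the lemmas of Section~\ref{section:glt-for-sentences} apply.

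For part~(\ref{theorem:char-of-PCE(k)-theories}), the ``if'' direction reduces to the observation (recorded at the end of Chapter~\ref{chapter:our-properties}) that every $\forall^k\exists^*$ sentence is $PCE(k)$ modulo $V$, together with the fact that $PCE(k)$ lifts from sentences to theories: if $Y$ is a theory of $\forall^k\exists^*$ sentences, $R$ is a collection of models of $V \cup Y$, and $\mf{A} \models V$ is a $k$-ary covered extension of $R$, then for each $\psi \in Y$ every $\mf{B} \in R$ models $V \cup \{\psi\}$, so $\psi$ being $PCE(k)$ modulo $V$ forces $\mf{A} \models \psi$; thus $\mf{A} \models Y$, and equivalence modulo $V$ transfers this back to $T$. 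For the ``only if'' direction I would let $\Gamma$ be the set of $\forall^k\exists^*$ consequences of $T$ modulo $V$; then $V \cup T \vdash \Gamma$ holds by definition, and it remains to show $V \cup \Gamma \vdash T$. Given $\mf{A} \models V \cup \Gamma$, I would pass to a $\mu$-saturated elementary extension $\mf{A}^+$ of $\mf{A}$ (Proposition~\ref{prop:saturation-results}(\ref{prop:saturation-results:exis-of-sat-elem-ext})), which again models $V \cup \Gamma$. Since $\mf{A}^+$ is $\mu$-saturated, Lemma~\ref{lemma:k-ary-covers-for-saturated-models} yields a genuine $k$-ary cover $R$ of $\mf{A}^+$ with $\mf{B} \models V \cup T$ for every $\mf{B} \in R$; thus $\mf{A}^+$ is a $k$-ary covered extension of a collection of models of $V \cup T$, and since $T$ is $PCE(k)$ modulo $V$ we obtain $\mf{A}^+ \models T$, whence $\mf{A} \models T$ because $\mf{A} \preceq \mf{A}^+$. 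This shows $T$ equivalent modulo $V$ to the $\forall^k\exists^*$ theory $\Gamma$. Unlike the sentence case, no final compactness step is needed to collapse $\Gamma$ into one sentence, so the theory argument is if anything shorter.

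Part~(\ref{theorem:char-of-PCE(lambda)-theories}) would follow by exactly the same two-step argument with $\forall^k\exists^*$ replaced everywhere by $\Pi^0_2$: for the ``if'' direction I use that a $\Pi^0_2$ sentence with (finitely many, say) $m$ universal quantifiers is $PCE(m)$ and hence $PCE(\lambda)$ for infinite $\lambda$, and for the ``only if'' direction I take $\Gamma$ to be the set of $\Pi^0_2$ consequences of $T$ modulo $V$ and replace Lemma~\ref{lemma:k-ary-covers-for-saturated-models} by its infinitary counterpart Lemma~\ref{lemma:lambda-ary-covers-for-saturated-models}, which supplies a genuine $\lambda$-ary cover of the saturated extension.

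The one point I would treat with care --- and the main obstacle --- is the need for a \emph{genuine} cover, in the sense of Definition~\ref{defn:k-ary-covered-ext} (resp.\ Definition~\ref{defn:lambda-ary-covered-extension}), rather than merely a cover of $\mf{A}$ inside $\mf{A}^+$ in the sense of Definition~\ref{defn:k-ary-cover-inside-an-elem-ext}: only the former lets me invoke the $PCE$ hypothesis directly on $\mf{A}^+$. Passing to a saturated elementary extension is precisely what provides this, which is why I favour the saturation route over the ascending-chain route of Lemma~\ref{lemma:exis-of-k-ary-cover-in-an-elem-ext-satisfying-T}; the chain route would additionally require verifying that the union of the covers along the chain forms a genuine cover of the union structure.
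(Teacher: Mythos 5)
Your proposal is correct and follows essentially the same route as the paper: the paper states that the proof is obtained from the sentence-level proof of Theorem~\ref{theorem:glt(k)}(\ref{theorem:glt(k)-ext}) (resp.\ Theorem~\ref{theorem:PCE(lambda)-and-PSC(lambda)-characterizations}(\ref{theorem:PCE(lambda)-char})) by replacing the sentence with a theory throughout, together with Lemma~\ref{lemma:PCE(k)-is-closed-under-intersections}, which is exactly the content of your ``if''-direction argument that $PCE(k)$ lifts from sentences to theories. Your observations that the final compactness step is no longer needed and that the saturation route hands you a genuine cover of $\mf{A}^+$ (rather than a cover inside an elementary extension) are both accurate and match how the paper's argument goes through.
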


The proofs of part (\ref{theorem:char-of-PCE(k)-theories}) and part
(\ref{theorem:char-of-PCE(lambda)-theories}) of the above result are
respectively, nearly identical to the proofs of
Theorem~\ref{theorem:glt(k)}(\ref{theorem:glt(k)-ext}) and
Theorem~\ref{theorem:PCE(lambda)-and-PSC(lambda)-characterizations}(\ref{theorem:PCE(lambda)-char})
-- we just consider theories instead of sentences in the latter proofs
and use the following lemma that is straightforward.
\begin{lemma}\label{lemma:PCE(k)-is-closed-under-intersections}
Let $\cl{S}$ be a class of structures, $k$ a natural number and
$\lambda$ an infinite cardinal. For an index set $I$, let $\{\cl{U}_i
\mid i \in I\}$ be a collection of subclasses of $\cl{S}$ such that
$\cl{U}_i$ is $PCE(k)$, resp. $PCE(\lambda)$, over $\cl{S}$, for each
$i \in I$. Then $\bigcap_{i \in I} \cl{U}_i$ is $PCE(k)$,
resp. $PCE(\lambda)$, over $\cl{S}$.
\end{lemma}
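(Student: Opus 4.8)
The plan is to verify the $PCE(k)$ condition of Definition~\ref{defn:PCE(k)} for $\cl{U} := \bigcap_{i \in I} \cl{U}_i$ directly, reducing it index by index to the hypothesis on each $\cl{U}_i$. Concretely, I would fix an arbitrary non-empty collection $R$ of structures drawn from $\cl{U}$, together with a structure $\mf{A} \in \cl{S}$ that is a $k$-ary covered extension of $R$ in the sense of Definition~\ref{defn:k-ary-covered-ext}, and aim to show $\mf{A} \in \cl{U}$. Since membership in $\cl{U}$ means precisely membership in $\cl{U}_i$ for every $i \in I$, it suffices to establish $\mf{A} \in \cl{U}_i$ for each fixed $i$.

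The key---indeed the only---observation is set-theoretic: because $R \subseteq \cl{U} = \bigcap_{j \in I} \cl{U}_j$, we have $R \subseteq \cl{U}_i$ for every $i$, so $R$ is simultaneously a collection of structures of each $\cl{U}_i$. Thus, for a fixed $i$, the very same $\mf{A}$ is a $k$-ary covered extension of a collection $R$ of structures of $\cl{U}_i$ with $\mf{A} \in \cl{S}$; the assumption that $\cl{U}_i$ is $PCE(k)$ over $\cl{S}$ then yields $\mf{A} \in \cl{U}_i$ at once. As $i$ was arbitrary, $\mf{A} \in \bigcap_{i \in I} \cl{U}_i = \cl{U}$, completing the verification. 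The $PCE(\lambda)$ case is word-for-word identical, replacing every occurrence of ``$k$-ary covered extension'' by ``$\lambda$-ary covered extension'' (Definition~\ref{defn:lambda-ary-covered-extension}) and invoking the relevant clause of Definition~\ref{defn:PSC(lambda)-and-PCE(lambda)} in place of Definition~\ref{defn:PCE(k)}.

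I do not expect any genuine obstacle here: the statement is an immediate consequence of the definitions, which is why it is flagged as straightforward in the text. The single point worth checking is that the pair $(R, \mf{A})$ witnessing a potential failure of $PCE(k)$ for the intersection is a legitimate input to the $PCE(k)$ condition of each factor $\cl{U}_i$---and this is exactly the containment $R \subseteq \cl{U}_i$ noted above. Non-emptiness of $R$, required in Definition~\ref{defn:k-ary-covered-ext}, is automatic, since $R$ already serves as a $k$-ary cover of $\mf{A}$.
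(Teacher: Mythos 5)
Your argument is correct and is exactly the intended one: the paper omits the proof precisely because it is this immediate unwinding of Definition~\ref{defn:PCE(k)}, using only that any $R \subseteq \bigcap_{i} \cl{U}_i$ is a collection of structures of each $\cl{U}_i$. Nothing further is needed, and the $PCE(\lambda)$ case is indeed verbatim the same.
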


\begin{remark} 
By considering singleton theories in Theorem~\ref{theorem:ext-chars},
and using compactness theorem and the fact that a finite conjunction
of $\forall^k \exists^*$ sentences, respectively $\Pi^0_2$ sentences,
is also a $\forall^k \exists^*$ sentence, respectively a $\Pi^0_2$
sentence, we get
Theorem~\ref{theorem:glt(k)}(\ref{theorem:glt(k)-ext}) and
Theorem~\ref{theorem:PCE(lambda)-and-PSC(lambda)-characterizations}(\ref{theorem:PCE(lambda)-char}).
\end{remark}

The following proposition reveals an important difference between
considering the properties of $PCE(k)$ and $PCE(\lambda)$ in the
context of theories, vis-\'a-vis considering these properties in the
context of sentences. Specifically, in contrast to
Corollary~\ref{corollary:PSC(lambda)=PSC-and-PCE(lambda)=PCE}, it
turns out that $PCE(\lambda)$ theories are more general than $PCE$
theories.

\begin{proposition}\label{prop:PCE(lambdas)-and-PCE}
Let $\lambda$ be an infinite cardinal.
\begin{enumerate}[nosep]
\item A theory is $PCE(\lambda)$ modulo a theory $V$ iff it is
  $PCE(\aleph_0)$ modulo $V$.\label{prop:PCE(lambdas)-are-all-the-same}
\item There are theories $T$ and $V$ such that $T$ is $PCE(\aleph_0)$
  modulo $V$, and hence $PCE(\lambda)$ modulo $V$, but $T$ is not $PCE$
  modulo $V$.\label{prop:PCE(lambda)-more-general-than-PCE}
\end{enumerate}
\end{proposition}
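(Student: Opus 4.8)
Part (1) will follow at once from the characterisation already established. By Theorem~\ref{theorem:ext-chars}(\ref{theorem:char-of-PCE(lambda)-theories}), for \emph{every} infinite cardinal $\lambda$ (in particular for $\lambda = \aleph_0$) a theory $T(\bar{x})$ is $PCE(\lambda)$ modulo $V$ if and only if $T(\bar{x})$ is equivalent modulo $V$ to a theory of $\Pi^0_2$ formulae whose free variables are among $\bar{x}$. Since the right-hand condition makes no mention of $\lambda$, it is literally the same for $\lambda$ and for $\aleph_0$; hence $T(\bar{x})$ is $PCE(\lambda)$ modulo $V$ iff it is $PCE(\aleph_0)$ modulo $V$. (The direction $PCE(\aleph_0)\Rightarrow PCE(\lambda)$ is in any case immediate from the monotonicity noted right after Definition~\ref{defn:PSC(lambda)-and-PCE(lambda)}, so only the converse really needs the characterisation.)

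For part (2), the plan is to exhibit a fixed $\Pi^0_2$ theory that genuinely requires unboundedly many universal quantifiers. I would take $V=\emptyset$ over the vocabulary $\tau=\{E\}$, and for each $n\ge 1$ set $\alpha_n = \forall x_1\cdots\forall x_n\,\exists y\,\bigwedge_{i=1}^{n}E(x_i,y)$ (``every $n$ vertices have a common neighbour''), letting $T=\{\alpha_n\mid n\ge 1\}$. Since $T$ is itself a $\Pi^0_2$ theory, $T$ is $PCE(\aleph_0)$ modulo $V$ by the ``If'' direction of Theorem~\ref{theorem:ext-chars}(\ref{theorem:char-of-PCE(lambda)-theories}), and hence $PCE(\lambda)$ modulo $V$ by part (1). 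A direct check is just as short: if $\mf{A}$ is an $\aleph_0$-ary covered extension of a collection $R$ of models of $T$, then any finite tuple $\bar{a}$ of $\mf{A}$ lies inside some $\mf{B}\in R$, and the common neighbour of $\bar{a}$ supplied by $\mf{B}\models\alpha_{|\bar{a}|}$ survives in $\mf{A}$ because $\mf{B}$ is an induced substructure of $\mf{A}$; thus $\mf{A}\models T$.

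The core of part (2) is to show $T$ is \emph{not} $PCE(k)$ for any $k$. By Theorem~\ref{theorem:ext-chars}(\ref{theorem:char-of-PCE(k)-theories}) this is equivalent to $T$ not being equivalent modulo $V$ to any $\forall^k\exists^*$ theory, and I would establish it concretely by refuting $PCE(k)$ for each fixed $k$. Set $m=k+1$ and build a graph $\mf{A}_k$ with $m$ distinguished, pairwise non-adjacent vertices $u_1,\dots,u_m$ together with, for each $(m-1)$-element set $S\subsetneq\{1,\dots,m\}$, an infinite clique $W_S$ of ``hubs'', where each hub of $W_S$ is adjacent to exactly $\{u_i\mid i\in S\}$ and all hubs (across all the $W_S$) are mutually adjacent. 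Then no vertex is adjacent to all of $u_1,\dots,u_m$ (a hub of type $S$ misses the $u_i$ with $i\notin S$, and the $u_i$ are pairwise non-adjacent), so $\mf{A}_k\not\models\alpha_m$ and $\mf{A}_k\not\models T$. For each $k$-subset $C$ of the universe, pick $j_0$ with $u_{j_0}\notin C$, put $S_0=\{1,\dots,m\}\setminus\{j_0\}$, and let $\mf{B}_C$ be the induced substructure on $C\cup W_{S_0}$. Any hub of $W_{S_0}$ lying outside a given finite set $F\subseteq\mf{B}_C$ is adjacent to every vertex of $F$ (it is adjacent to all hubs, and to each $u_i$ in $\mf{B}_C$ since their indices lie in $S_0$), so every finite subset of $\mf{B}_C$ has a common neighbour inside $\mf{B}_C$; hence $\mf{B}_C\models T$. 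The collection $R=\{\mf{B}_C\}$ therefore consists of models of $T$, covers every $k$-subset of $\mf{A}_k$, and has $\mf{A}_k$ as a $k$-ary covered extension (consistently with Remark~\ref{remark:some-obs-about-k-ary-covered-ext}, whose universe is $\bigcup_C\mf{B}_C$), witnessing that $T$ is not $PCE(k)$. As $k$ was arbitrary and $PCE(0)\Rightarrow PCE(1)$ disposes of $k=0$, $T$ is not $PCE$ modulo $V$.

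The main obstacle is the design of $\mf{A}_k$ and the verification of the covering condition simultaneously: the distinguished vertices must have \emph{no} common neighbour (to defeat $\alpha_m$), yet every $k$-subset of the universe — crucially including subsets that contain hubs — must still embed as part of a \emph{full} model of $T$ realised as a substructure of $\mf{A}_k$. The ``infinite hub-clique, each hub missing exactly one $u_i$'' adjacency pattern is exactly what reconciles these two demands, and getting it right (so that a sub-model of $T$ persists around every small set while no single sub-model of $R$ contains all of $u_1,\dots,u_m$) is the delicate point; the rest is the routine bookkeeping indicated above.
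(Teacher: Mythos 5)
Your proof is correct. Part (1) is exactly the paper's argument: both directions fall out of Theorem~\ref{theorem:ext-chars}(\ref{theorem:char-of-PCE(lambda)-theories}), whose right-hand side does not mention $\lambda$. For part (2), however, you take a genuinely different route. The paper takes $V$ to be the theory of undirected graphs and $T$ the $\Pi^0_1$ theory of acyclicity; it then argues \emph{by duality} (Lemma~\ref{lemma:PSC(k)-PCE(k)-duality}) that $PCE(k)$ for $T$ would make the class of graphs containing a cycle $PSC(k)$, which the $(k+1)$-cycle refutes since every proper induced subgraph is acyclic. You instead use a properly $\Pi^0_2$ theory ($\alpha_n$: every $n$ vertices have a common neighbour) over $V=\emptyset$ and refute $PCE(k)$ \emph{directly}, by exhibiting a $k$-ary covered extension $\mf{A}_k$ of models of $T$ that fails $\alpha_{k+1}$; your hub construction checks out (the covering and extension conditions of Definition~\ref{defn:k-ary-covered-ext} hold, each $\mf{B}_C$ models $T$ because $W_{S_0}$ is an infinite clique of common neighbours for everything in $\mf{B}_C$, and no vertex of $\mf{A}_k$ is adjacent to all of $u_1,\dots,u_{k+1}$). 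The paper's example is shorter and recycles the duality machinery already set up for sentences; yours is more self-contained, works over the empty theory $V$, and has the mild additional interest that the witnessing theory is genuinely $\Pi^0_2$ rather than universal, so it also illustrates that the failure of $PCE$ is not an artefact of hiding a $\Pi^0_1$ theory inside $\Pi^0_2$. One small presentational point: you appeal to $PCE(0)\Rightarrow PCE(1)$ to dispose of $k=0$, which is the right direction of the monotonicity noted in Section~\ref{section:PCE(k)}, but your construction in fact already works verbatim at $k=0$ (with $m=1$ and $W_\emptyset$ an infinite clique adjacent to no $u_i$), so the extra remark is not needed.
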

\begin{proof}

Part (\ref{prop:PCE(lambdas)-are-all-the-same}) follows easily from
Theorem~\ref{theorem:ext-chars}(\ref{theorem:char-of-PCE(lambda)-theories}). We
prove part (\ref{prop:PCE(lambda)-more-general-than-PCE}) below.

Let $V$ be the theory defining the class of all undirected graphs.
Let $T$ be a $\Pi^0_1$ theory over graphs asserting that there is no
cycle of length $k$ for any $k \in \mathbb{N}$. Then $T$ defines the
class $\cl{U}$ of all acyclic graphs, and is $PCE(\aleph_0)$ modulo
$V$ by
Theorem~\ref{theorem:ext-chars}(\ref{theorem:char-of-PCE(lambda)-theories}).
Suppose $T$ is $PCE$ modulo $V$, whence $T$ is $PCE(k)$ modulo $V$ for
some $k \in \mathbb{N}$. Then $\cl{U}$ is $PCE(k)$ modulo the class of
models of $V$.  By Lemma~\ref{lemma:PSC(k)-PCE(k)-duality},
$\overline{\cl{U}}$ (the complement of $\cl{U}$) is $PSC(k)$ modulo
the class of models of $V$.  Now consider a cycle $G$ of length $k+1$.
Clearly, $G$ is in $\overline{\cl{U}}$ but every proper substructure
of $G$ is in $\cl{U}$. This contradicts our earlier inference that
$\overline{\cl{U}}$ is $PSC(k)$ modulo the class of models of $V$.
\end{proof}

The characterizations of this section are depicted pictorially below.

\begin{figure}[H]
\centering 
\includegraphics[scale=0.7]{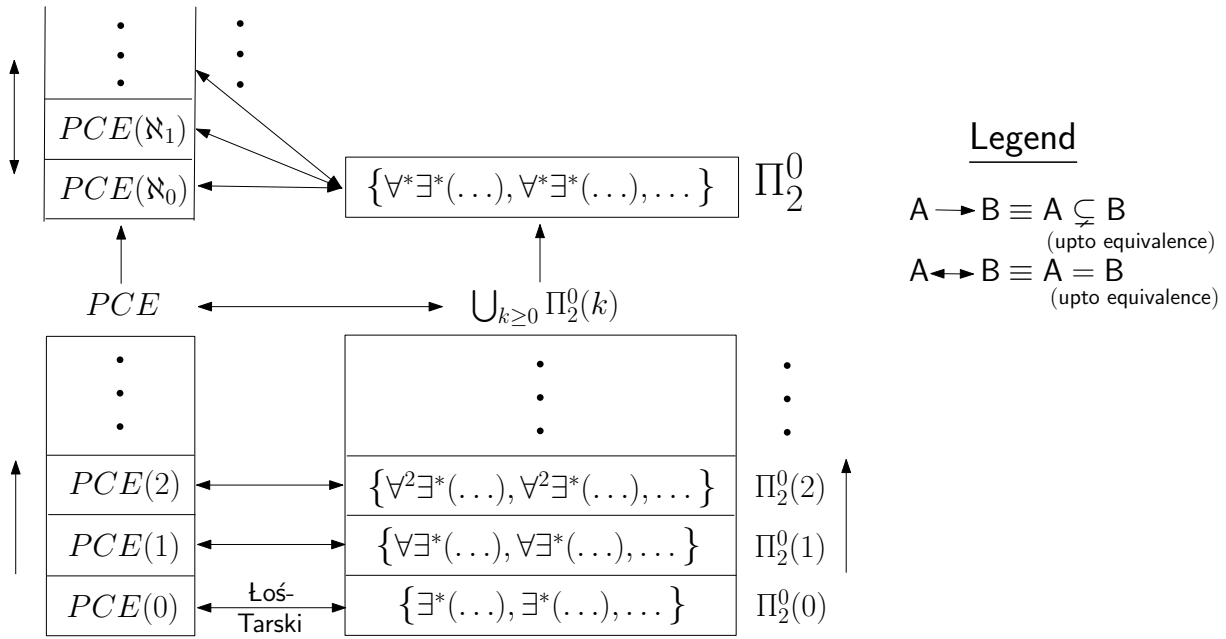}
\caption{Characterizations of $PCE(k)$ and $PCE(\lambda)$ theories}
\end{figure}

\section{Characterizations of the substructural properties}\label{section:subst-char-for-theories}
The central results of this section are as follows.

\begin{theorem}\label{theorem:subst-char-for-theories}
Let $V$ be a given theory, $k \in \mathbb{N}$ and $\lambda >
\aleph_0$.
\begin{enumerate}[nosep]
\item A theory $T(\bar{x})$ is $PSC(\lambda)$ modulo $V$ iff
  $T(\bar{x})$ is equivalent modulo $V$ to a theory of $\Sigma^0_2$
  formulae, all of whose free variables are among
  $\bar{x}$.\label{theorem:PSC(lambda)-char-for-theories}
\item If a theory $T(\bar{x})$ is $PSC(\aleph_0)$ modulo $V$, then
  $T(\bar{x})$ is equivalent modulo $V$ to a theory of $\Sigma^0_2$
  formulae, all of whose free variables are among $\bar{x}$. The same
  consequent (therefore) holds if $T(\bar{x})$ is $PSC(k)$ modulo $V$.
  The converses of these implications are not true. There exist
  theories $T$ and $V$ such that (i) each sentence of $T$ is a
  $\Sigma^0_2$ sentence having exactly one existential quantifier, and
  (ii) $T$ is not $PSC(\aleph_0)$ modulo $V$, and hence not $PSC(k)$
  modulo $V$.\label{theorem:PSC-and-PSC(aleph_0)-for-theories}
\end{enumerate}
\end{theorem}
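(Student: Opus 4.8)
The plan is to handle the two positive implications and the failure of the converse separately, the unifying tool for the positive directions being Keisler's characterization of $\Sigma^0_2$ theories by closure under $1$-sandwiches. Throughout, let $\Gamma$ denote the set of $\Sigma^0_2$ consequences of $T(\bar x)$ modulo $V$, so that $(V\cup T)\vdash\Gamma$ holds automatically and the whole content is the reverse entailment $(V\cup\Gamma)\vdash T$. I begin with the \emph{easy} (backward) half of part~(\ref{theorem:PSC(lambda)-char-for-theories}): a $\Sigma^0_2$ theory $Y(\bar x)$ equivalent to $T$ modulo $V$ is $PSC(\lambda)$ for every $\lambda>\aleph_0$. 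Here I would argue exactly as in the remark at the end of Section~\ref{section:PSC(k)} that witnesses furnish cruxes. Since $\tau$ is finite and all free variables lie among the fixed tuple $\bar x$, there are only countably many $\Sigma^0_2$ formulae, so $Y$ may be taken countable; in any model $\mf{A}\models V\cup Y$, choosing for each $\chi_i=\exists^{k_i}\bar u\,\forall^{*}\psi_i\in Y$ a witnessing tuple of length $k_i$ and taking the union $C$ of these tuples gives a set of size at most $\aleph_0$ which is a crux (each $\chi_i$, being $\Sigma^0_2$, stays true in any substructure in $\cl{S}$ that retains its witnesses). As $|C|\le\aleph_0<\lambda$, $T$ is $PSC(\lambda)$. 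The point to stress is that this bound is exactly $\aleph_0$ and \emph{need not be finite}; this is the source both of the one-directional phrasing of part~(\ref{theorem:PSC-and-PSC(aleph_0)-for-theories}) and of the counterexample below.

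For the \emph{hard} direction I would prove the single statement that, for every $\lambda\ge\aleph_0$, if $T(\bar x)$ is $PSC(\lambda)$ modulo $V$ then the class $\cl{U}$ of models of $T\cup V$ is equivalent modulo $V$ to a $\Sigma^0_2$ theory. Part~(\ref{theorem:PSC(lambda)-char-for-theories}) and the positive half of part~(\ref{theorem:PSC-and-PSC(aleph_0)-for-theories}) both follow, the latter because a crux of size at most $k$ has size below $\aleph_0$, so $PSC(k)\Rightarrow PSC(\aleph_0)\Rightarrow PSC(\lambda)$ by the monotonicity noted after Definition~\ref{defn:PSC(lambda)-and-PCE(lambda)}. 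By Keisler's theorem it suffices to show that $\cl{U}$ is closed under $1$-sandwiches, i.e.\ that whenever $\mf{A}\in\cl{U}$, $\mf{A}\preceq\mf{A}'$ and $\mf{A}\subseteq\mf{B}\subseteq\mf{A}'$ with $\mf{B}\models V$, then $\mf{B}\in\cl{U}$. This is the semantic content of $\Sigma^0_2$-definability: a $\Sigma^0_2$ sentence true in $\mf{A}$ stays true in $\mf{B}$, since its witnesses lie in $\mf{A}\subseteq\mf{B}$ and its universal part, true throughout the elementary cap $\mf{A}'$, only has its range shrunk in passing down to $\mf{B}$. Because $\mf{A}\preceq\mf{A}'$ we have $\mf{A}'\in\cl{U}$, so by $PSC(\lambda)$ the structure $\mf{A}'$ carries a crux $C'$ of size less than $\lambda$; if such a crux can be located \emph{inside} $\mf{B}$, then $\mf{B}$, being a substructure of $\mf{A}'$ containing a crux of $\mf{A}'$, lies in $\cl{U}$.

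The main obstacle is precisely this relocation of a crux into the middle structure $\mf{B}$, and it is where the argument has to do real work. I would handle it through $\lambda$-saturation (Proposition~\ref{prop:saturation-results}) together with an alternating-chain construction in the spirit of the proof of Lemma~\ref{lemma:exis-of-k-ary-cover-in-an-elem-ext-satisfying-T}: one builds an ascending elementary chain with interleaved substructures modelling $V\cup T$, using the $\Pi^0_1$-type realisation argument of the last paragraph of the proof of Lemma~\ref{lemma:k-ary-covers-for-saturated-models} to place, around each small parameter set, a substructure of the next elementary stage that models $T$; the bound ``less than $\lambda$'' on crux sizes is exactly what lets these parameter sets be absorbed. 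I emphasise that this cannot be obtained by dualising the extensional Theorem~\ref{theorem:ext-chars}, because the complement of the model class of a theory is not in general the model class of a theory — this asymmetry is why the substructural case is genuinely harder than the extensional one.

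Finally, for the failure of the converse in part~(\ref{theorem:PSC-and-PSC(aleph_0)-for-theories}) I would exhibit an explicit pair $T,V$. Take $V$ empty over the vocabulary $\tau=\{f\}$ with $f$ a unary \emph{function} symbol, and set $T=\{\phi_n:n\ge 1\}$ where $\phi_n=\exists x\,\forall y\,\bigwedge_{0\le i<j\le n} f^i(x)\ne f^j(x)$ (the variable $y$ is a dummy, present only to display $\phi_n$ literally as an $\exists\forall^{*}$, hence $\Sigma^0_2$, sentence with exactly one existential quantifier). The use of a function symbol is essential: the assertion ``$x$ has a forward $f$-orbit with more than $n$ distinct points'' is \emph{quantifier-free} in the terms $f^i(x)$, so it needs only the single existential binding $x$, whereas a relational encoding would have to name the intermediate orbit points existentially. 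Let $\mf{A}$ be the $\{f\}$-structure that is the disjoint union, over all $m\ge 1$, of a single $f$-cycle of length $m$. Then $\mf{A}\models T$, since cycles of every length are present and so orbits of unbounded length exist. However, a substructure in $\cl{S}$ must be closed under $f$, so the $f$-closure of any finite set $C$ is a finite union of whole cycles and has some maximum cycle length $M$; the substructure $\mf{B}$ on this closure satisfies $C\subseteq\mf{B}\subseteq\mf{A}$ and $\mf{B}\in\cl{S}$, yet $\mf{B}\not\models\phi_M$ (it contains no cycle of length $\ge M+1$). Hence no finite set is a crux of $\mf{A}$, so $\cl{U}$ is not $PSC(\aleph_0)$ over $\cl{S}$, and \emph{a fortiori} not $PSC(k)$ for any $k$, while every sentence of $T$ is $\Sigma^0_2$ with exactly one existential quantifier, as required.
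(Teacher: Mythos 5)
Your overall architecture coincides with the paper's: the easy half of part~(\ref{theorem:PSC(lambda)-char-for-theories}) via witnesses of the existential quantifiers forming a countable, hence size~$<\lambda$, crux; the hard half via Keisler's sandwich theorem (Theorem~\ref{theorem:keisler-sandwich}); part~(\ref{theorem:PSC-and-PSC(aleph_0)-for-theories})'s positive half by the monotonicity $PSC(k)\Rightarrow PSC(\aleph_0)\Rightarrow PSC(\lambda)$; and a counterexample built from cycles of unbounded length. Your counterexample with a unary function symbol is a correct variant of the paper's graph-theoretic one (the $f$-closure of a finite set in the union of one cycle of each length is a finite union of whole cycles, which falsifies $\phi_M$ for $M$ the largest cycle length present), though your claim that a function symbol is \emph{essential} is not right: the paper's relational example achieves one existential quantifier by hiding the intermediate cycle vertices under a negated existential, i.e., inside the universal block.

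The gap is in the hard direction, precisely at the step you yourself flag as ``the real work''. The alternating elementary chain with interleaved substructures modelling $V\cup T$ is the engine of the \emph{extensional} characterization (Lemma~\ref{lemma:exis-of-k-ary-cover-in-an-elem-ext-satisfying-T}): it places substructures of ever larger elementary extensions around parameter sets so as to build a cover, whereas here you must produce a crux lying inside the \emph{fixed} bottom structure $\mf{A}\subseteq\mf{B}$, and that construction does not do so. Two ingredients are missing. First, the sandwich must be re-based so that its bottom is $\mu$-saturated (Lemma~\ref{lemma:sandwich-by-saturated-structures}, itself a compactness argument on $S_\Pi(\mf{A}_{\mf{A}})\cup\eldiag{\mf{B}_1}$); without saturation of the bottom there is no reason a crux of the top can be realized inside it. Second, one needs a ``crux determination'' lemma (Lemma~\ref{lemma:crux-determination-using-saturated-models}): if $\bar{d}$ is a $\lambda$-crux of a $\mu$-saturated model $\mf{D}$ of $V\cup T$, then any tuple realizing $\mathsf{tp}_{\Pi,\mf{D},\bar{d}}$ in a model of $V$ is again a crux. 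With these, one passes to a saturated elementary extension $\mf{D}$ of the top of the sandwich, picks a crux $\bar{d}$ of $\mf{D}$, uses saturation of the bottom $\mf{A}$ together with $\mf{A}\equiv\mf{D}$ to find $\bar{a}$ in $\mf{A}$ with $(\mf{A},\bar{a})\equiv(\mf{D},\bar{d})$, concludes by crux determination that $\bar{a}$ is itself a $\lambda$-crux of $\mf{D}$, and only then observes that $\mf{B}$, being a substructure of $\mf{D}$ containing $\bar{a}$ and modelling $V$, models $T$. The $\Pi^0_1$-type realization argument you cite runs in the opposite direction (it realizes the type of a tuple of $\mf{A}$ in some fresh model of $V\cup T$) and does not yield this relocation; as it stands, your proof of the forward implication is not complete.
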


Since Theorem~\ref{theorem:glt(k)}(\ref{theorem:glt(k)-subst}) shows
that a $PSC(k)$ sentence is always equivalent to an $\exists^k
\forall^*$ sentence, it is natural to ask if a $PSC(k)$ theory is
always equivalent to a theory of $\exists^k \forall^*$ sentences. We
give an affirmative answer to this question, conditioned on a
hypothesis that we present below, and thereby provide a (conditional)
refinement of
Theorem~\ref{theorem:subst-char-for-theories}(\ref{theorem:PSC-and-PSC(aleph_0)-for-theories}).

\begin{hypothesis}\label{hypothesis:H}
Given theories $V$ and $T(\bar{x})$, and $k \in \mathbb{N}$ such that
$T(\bar{x})$ is $PSC(k)$ modulo $V$, it is the case that for each
model $(\mf{A}, \bar{a})$ of $T(\bar{x})$, there exists a $k$-crux
$\bar{b}$ of $(\mf{A}, \bar{a})$ (w.r.t. $T(\bar{x})$ modulo $V$) such
that $\bar{b}$ is also a $k$-crux (w.r.t. $T(\bar{x})$ modulo $V$) of
a $\mu$-saturated elementary extension of $(\mf{A}, \bar{a})$, for
some $\mu \ge \omega$.
\end{hypothesis}


\begin{theorem}\label{theorem:conditional-refinement}
Given theories $V$ and $T(\bar{x})$, suppose $T(\bar{x})$ is $PSC(k)$
modulo $V$ for a given $k \in \mathbb{N}$. Then assuming
Hypothesis~\ref{hypothesis:H}, $T(\bar{x})$ is equivalent modulo $V$
to a theory of $\Sigma^0_2$ formulae, all of whose free variables are
among $\bar{x}$, and all of which have $k$ existential quantifiers.
\end{theorem}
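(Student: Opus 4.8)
The plan is to follow the ``go up to an infinitary logic, then come down via finite approximations'' strategy. Throughout I reduce to the case where $T$ has no free variables, by passing to the theory $T'$ over the expansion $\tau_n$ (substituting fresh constants $\bar c$ for $\bar x$), exactly as in the definition of $PSC(k)$ for theories with free variables; at the very end the constants are turned back into the variables $\bar x$. So assume $T$ is consistent (the inconsistent case being trivial) and $PSC(k)$ modulo $V$, over a finite vocabulary.

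\textbf{Step 1: a forcing lemma for crux-types.} The central step is to show, using Hypothesis~\ref{hypothesis:H}, that every model carries a crux whose $\Pi^0_1$-type already entails $T$: for each model $\mf A$ of $V \cup T$ there is a $k$-tuple $\bar b$ of $\mf A$ with $V \cup \pitp{\mf A}{\bar b}(\bar c) \vdash T$, where $\bar c$ are constants naming $\bar b$. By Hypothesis~\ref{hypothesis:H} pick a crux $\bar b$ of $\mf A$ that is also a crux of a $\mu$-saturated elementary extension $\mf A^+$ with $\mu \ge \omega$; since $\mf A \preceq \mf A^+$ we have $\pitp{\mf A}{\bar b} = \pitp{\mf A^+}{\bar b}$. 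Suppose the entailment fails, so $V \cup \pitp{\mf A^+}{\bar b}(\bar c) \cup \{\neg\sigma\}$ is consistent for some $\sigma \in T$; by the downward L\"owenheim--Skolem theorem (Theorem~\ref{theorem:DLS}, using finiteness of the vocabulary) take a countable model $(\mf D,\bar d)$ of it. Every existential $\tau_k$-sentence true in $(\mf D,\bar d)$ is true in $(\mf A^+,\bar b)$, for otherwise its negation is a universal formula lying in $\pitp{\mf A^+}{\bar b}$ yet holding in $(\mf D,\bar d)$. As $(\mf A^+,\bar b)$ is $\mu$-saturated (Proposition~\ref{prop:saturation-results}(\ref{prop:saturation-results:expansion-preserves-saturation})) with $\mu \ge \aleph_0 = |\mf D|$, Proposition~\ref{prop:saturation-results}(\ref{prop:saturation-results:isomorphic-embedding}) embeds $(\mf D,\bar d)$ into $(\mf A^+,\bar b)$. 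Its image is a substructure $\mf B \subseteq \mf A^+$ containing $\bar b$ with $\mf B \models V$ and $\mf B \models \neg\sigma$, contradicting that $\bar b$ is a $k$-crux of $\mf A^+$. Hence the entailment holds.

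\textbf{Step 2: going up to a single infinitary sentence.} Let $P$ be the (set-sized) collection of all $\Pi^0_1$-types $p(\bar y) = \pitp{\mf A}{\bar b}(\bar y)$ arising from cruxes as in Step~1, so each $p \in P$ satisfies $V \cup p(\bar c) \vdash T$. Form the $\mc{L}_{\infty\omega}$-sentence $\chi := \exists y_1\cdots\exists y_k\,\bigvee_{p \in P}\bigwedge_{\psi \in p}\psi(\bar y)$. Its matrix uses only universal first-order formulae combined by (possibly infinite) $\bigwedge$ and $\bigvee$, hence is an infinitary universal formula, so $\chi$ is an infinitary $\exists^k\forall^*$ sentence. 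I claim $V \cup T \equiv V \cup \{\chi\}$. If $\mf M \models V \cup T$, Step~1 supplies a crux $\bar b$ with $\pitp{\mf M}{\bar b} \in P$, witnessing $\mf M \models \chi$. Conversely, if $\mf M \models V \cup \{\chi\}$ then some tuple $\bar m$ satisfies $\bigwedge p(\bar m)$ for a $p \in P$; since $\mf M \models V$ and $V \cup p(\bar c) \vdash T$, we get $\mf M \models T$.

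\textbf{Step 3: coming down via finite approximations, and the main obstacle.} Because $\chi$ is, modulo $V$, equivalent to the first-order theory $T$, it is \emph{a fortiori} equivalent to a first-order theory; this is precisely the situation in which the infinitary sentence can be replaced by first-order approximants. Each finite sub-disjunction of finite sub-conjunctions of the matrix of $\chi$, prefixed by $\exists^k\bar y$, is a genuine first-order $\exists^k\forall^*$ sentence. Let $Y$ be the set of all first-order $\exists^k\forall^*$ sentences entailed by $V \cup T$; then $V \cup T \vdash Y$ trivially, and the finite-approximation argument --- which is the content of Proposition~\ref{prop:char-of-inf-formulae-defining-elem-classes} --- shows that, \emph{precisely because} $\chi$ is first-order axiomatizable, $V \cup Y \vdash \chi$, whence $V \cup Y \vdash T$. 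Thus $T$ is equivalent modulo $V$ to the theory $Y$ of $\exists^k\forall^*$ formulae, each with $k$ existential quantifiers; reinstating $\bar x$ for $\bar c$ yields the statement. Steps~1--2 are the delicate but essentially routine model-theoretic core, driven only by saturation and Hypothesis~\ref{hypothesis:H}. The genuinely hard part is Step~3: an arbitrary infinitary $\exists^k\forall^*$ sentence is far more expressive than first order, and an infinite disjunction of universal formulae need not be approximable by finitely many of its disjuncts. What rescues the argument is exactly the first-order axiomatizability of $\chi$; showing that this hypothesis forces the finite $\exists^k\forall^*$ approximants of $\chi$ to already axiomatize it is the technical heart, and is what I would isolate as the separate Proposition~\ref{prop:char-of-inf-formulae-defining-elem-classes}.
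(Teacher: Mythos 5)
Your proposal is correct and follows essentially the same route as the paper: Steps 1 and 2 reprove the content of Lemma~\ref{lemma:char-of-distinguished-k-cruxes} and Lemma~\ref{lemma:inf-characterization-of-PSC-var(k)} (the paper merely packages Hypothesis~\ref{hypothesis:H} into the intermediate notion $PSC_{var}(k)$ before running the same saturation/embedding argument), and Step 3 invokes Proposition~\ref{prop:char-of-inf-formulae-defining-elem-classes} exactly as the paper does. The only adjustment needed is cosmetic: your $\chi$ places the infinite disjunction inside the quantifier block, so it is not literally a formula of $\left[\bigvee\right]\left[\exists^k \bigwedge\right]\Pi^0_1$; distribute $\exists^k \bar{y}$ over $\bigvee_{p \in P}$ first, after which the finite approximations are finite disjunctions of $\exists^k\forall^*$ sentences, each equivalent to a single $\exists^k\forall^*$ sentence.
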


The approach of `dualizing' adopted in proving
Theorem~\ref{theorem:glt(k)}(\ref{theorem:glt(k)-subst}) cannot work
for characterizing theories that are $PSC(k)$ or $PSC(\lambda)$ since
the negation of an FO theory might, in general, not be equivalent to
any FO theory. We therefore present in this section, altogether
different approaches to proving the above results. While we show that
$\Sigma^0_2$ theories characterize $PSC(\lambda)$ theories for
$\lambda > \aleph_0$, it is unclear at present what syntactic
fragments of FO theories serve to characterize $PSC(k)$ and
$PSC(\aleph_0)$ theories. However,
Theorem~\ref{theorem:subst-char-for-theories}(\ref{theorem:PSC-and-PSC(aleph_0)-for-theories})
shows that these syntactic fragments must be semantically contained
inside the class of $\Sigma^0_2$ theories, and
Theorem~\ref{theorem:conditional-refinement} shows that under
Hypothesis~\ref{hypothesis:H}, any syntactic fragment that
characterizes $PSC(k)$ theories must be semantically contained inside
the class of theories of $\exists^k \forall^*$ sentences.

The remainder of this section is entirely devoted to proving the
results above. In the next two sections, we present the proofs of
Theorem~\ref{theorem:subst-char-for-theories} and
Theorem~\ref{theorem:conditional-refinement}, and also show that
Hypothesis~\ref{hypothesis:H} is indeed well-motivated.

\subsection*{Proof of Theorem~\ref{theorem:subst-char-for-theories}}\label{subsection:char-of-PSC-lambda-theories}

We first present the proof of part
(\ref{theorem:PSC-and-PSC(aleph_0)-for-theories}) of
Theorem~\ref{theorem:subst-char-for-theories}, assuming part
(\ref{theorem:PSC(lambda)-char-for-theories}) of
Theorem~\ref{theorem:subst-char-for-theories}.

\begin{proof}[Proof of
Theorem~\ref{theorem:subst-char-for-theories}(\ref{theorem:PSC-and-PSC(aleph_0)-for-theories})]
Since a theory $T(\bar{x})$ that is $PSC(k)$ modulo $V$ or
$PSC(\aleph_0)$ modulo $V$ is also $PSC(\lambda)$ modulo $V$ for
$\lambda > \aleph_0$, it follows from
Theorem~\ref{theorem:subst-char-for-theories}(\ref{theorem:PSC(lambda)-char-for-theories})
that $T(\bar{x})$ is equivalent modulo $V$ to a theory of $\Sigma^0_2$
formulae, all of whose free variables are among $\bar{x}$. We show
below that the converse is not true.

Let $V = \{\forall x \forall y (E(x, y) \rightarrow E(y, x))\}$ be the
theory that defines exactly all undirected graphs.  For $n \ge 1$, let
$\varphi_{n}(x)$ be a formula asserting that $x$ is not a part of a
cycle of length $n$. Explicitly, $\varphi_1(x) = \neg E(x, x)$ and for
$n \ge 1$, we have $\varphi_{n+1}(x) = \neg \exists z_1 \ldots \exists
z_n \big((\bigwedge_{1 \leq i < j \leq n} z_i \neq z_j) \wedge
(\bigwedge_{i = 1}^{i = n} (x \neq z_i)) \wedge E(x, z_1) \wedge E
(z_n, x) \wedge$ $ \bigwedge_{i = 1}^{i = n -1} E(z_i, z_{i+1})\big)$.
Consider $\chi_n(x) = \bigwedge_{i = 1}^{i = n} \varphi_i(x)$ which
asserts that $x$ is not a part of any cycle of length $\leq
n$. Observe that $\chi_n(x)$ is equivalent to a universal formula.
Also, if $m \leq n$, then $\chi_n(x) \rightarrow \chi_m(x)$.

Now consider the theory $T = \{ \psi_n \mid n \ge 1\}$, where $\psi_n
= \exists x \chi_n(x)$. Each sentence of $T$ is a $\Sigma^0_2$
sentence having only one existential quantifier. We show that $T$ is
not $PSC(\aleph_0)$ modulo $V$.

Consider the infinite graph $G$ given by $G = \bigsqcup_{i \ge 3} C_i$
where $C_i$ is the cycle graph of length $i$ and $\bigsqcup$ denotes
disjoint union. Any vertex $x$ of $C_i$ satisfies $\chi_j(x)$ in $G$,
for $j < i$. Then $G \models T$. Now consider any finite set $S$ of
vertices of $G$. Let $r$ be the highest index such that some vertex in
$S$ is in the cycle $C_r$. Consider the subgraph $G_1$ of $G$ induced
by the vertices of all the cycles in $G$ of length $\leq r$. Then no
vertex $x$ of $G_1$ satisfies $\chi_l(x)$ for $l > r$. Then $G_1
\not\models T$, whence $S$ cannot be a $k$-crux of $G$ w.r.t. $T$
modulo $V$, for any $k \ge |S|$. Since $S$ is an arbitrary finite
subset of $G$, we conclude that $G$ has no $k$-crux w.r.t. $T$ modulo
$V$, for any $k \in \mathbb{N}$; in other words, $G$ has no
$\aleph_0$-crux. Then $T$ is not $PSC(\aleph_0)$ modulo $V$.
\end{proof}

Towards the proof of part
(\ref{theorem:PSC(lambda)-char-for-theories}) of
Theorem~\ref{theorem:subst-char-for-theories}, we recall the notion of
\emph{sandwiches}\sindex[term]{sandwich} as defined by Keisler
in~\cite{keisler-sandwich}. We say that a triple $(\mf{A}, \mf{B},
\mf{C})$ of structures is a \emph{sandwich} if $\mf{A} \preceq \mf{C}$
and $\mf{A} \subseteq \mf{B} \subseteq \mf{C}$.  Given structures
$\mf{A}$ and $\mf{B}$, we say that \emph{$\mf{B}$ is sandwiched by
  $\mf{A}$} if there exist structures $\mf{A}'$ and $\mf{B}'$ such
that (i) $\mf{B} \preceq \mf{B}'$ and (ii) $(\mf{A}, \mf{B}',
\mf{A}')$ is a sandwich. Given theories $V$ and $T$, we say $T$ is
\emph{preserved under sandwiches by models of $T$ modulo $V$} if for
each model $\mf{A}$ of $V \cup T$, if $\mf{B}$ is sandwiched by
$\mf{A}$ and $\mf{B}$ models $V$, then $\mf{B}$ models $T$. The
following theorem of Keisler (Corollary 5.2
of~\cite{keisler-sandwich}) gives a syntactic characterization of the
aforesaid preservation property in terms of $\Sigma^0_2$ theories.

\begin{theorem}[Keisler, 1960]\label{theorem:keisler-sandwich}
Let $V$ and $T$ be theories. Then $T$ is preserved under sandwiches by
models of $T$ modulo $V$ iff $T$ is equivalent modulo $V$ to a theory
of $\Sigma^0_2$ sentences.
\end{theorem}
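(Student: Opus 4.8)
The plan is to let $\Gamma$ be the set of $\Sigma^0_2$ sentences that are consequences of $T$ modulo $V$, and to prove that $T$ is equivalent modulo $V$ to $\Gamma$ precisely when the preservation property holds. \textbf{The easy direction} (``if'') is clean. Suppose $T$ is equivalent modulo $V$ to a $\Sigma^0_2$ theory $\Gamma$, let $\mf{A}\models V\cup T$, and let $\mf{B}\models V$ be sandwiched by $\mf{A}$, witnessed by $\mf{A}',\mf{B}'$ with $\mf{B}\preceq\mf{B}'$, $\mf{A}\preceq\mf{A}'$ and $\mf{A}\subseteq\mf{B}'\subseteq\mf{A}'$. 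Fix $\varphi=\exists\bar{x}\,\forall\bar{y}\,\theta(\bar{x},\bar{y})\in\Gamma$. Since $\mf{A}\models\varphi$, choose witnesses $\bar{a}$ in $\mf{A}$ with $\mf{A}\models\forall\bar{y}\,\theta(\bar{a},\bar{y})$; as $\mf{A}\preceq\mf{A}'$ this universal statement (with parameters in $\mf{A}$) holds in $\mf{A}'$, and since universal formulas are preserved under passage to substructures and $\bar{a}\in\mf{A}\subseteq\mf{B}'\subseteq\mf{A}'$, it holds in $\mf{B}'$. Hence $\mf{B}'\models\varphi$, and from $\mf{B}\preceq\mf{B}'$ we get $\mf{B}\models\varphi$. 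Thus $\mf{B}\models\Gamma$, so $\mf{B}\models T$ modulo $V$.

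\textbf{The hard direction} (``only if'') is the substantial one: assuming $T$ is preserved under sandwiches by models of $T$ modulo $V$, I would show $V\cup\Gamma\vdash T$ (the reverse entailment being immediate), so fix $\mf{B}\models V\cup\Gamma$ and aim to exhibit $\mf{B}$ as sandwiched by a model of $V\cup T$, after which the hypothesis yields $\mf{B}\models T$. Two consistency facts, both via the compactness theorem (Theorem~\ref{theorem:compactness}) and the definition of $\Gamma$, get the construction started. First, $V\cup T\cup\Delta$ is consistent, where $\Delta$ is the $\Pi^0_1$-theory of $\mf{B}$: a finite inconsistency would make the negation of a conjunction from $\Delta$ -- an existential, hence $\Sigma^0_2$, sentence -- a consequence of $T$ modulo $V$, i.e.\ a member of $\Gamma$, contradicting $\mf{B}\models\Delta\cup\Gamma$. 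A model $\mf{A}$ of $V\cup T\cup\Delta$ then has every existential sentence true in $\mf{A}$ also true in $\mf{B}$, so by Lemma~\ref{lemma:exis-amalgam} (or Proposition~\ref{prop:saturation-results} via a $\mu$-saturated elementary extension of $\mf{B}$), $\mf{A}$ embeds as a substructure into an elementary extension of $\mf{B}$. Dually, for any elementary extension $\mf{B}'\succeq\mf{B}$, the theory $V\cup T\cup\diag{\mf{B}'}$ is consistent, since a finite inconsistency would yield a universal ($\Pi^0_1\subseteq\Sigma^0_2$) consequence of $T$ modulo $V$ contradicting $\mf{B}'\models\Gamma$; this lets one extend $\mf{B}'$ upward to a model of $V\cup T$.

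Following the elementary-chain method used for the characterization of $\Pi^0_2$ theories through unions of chains, I would then weave these two moves into an elementary chain $\mf{B}=\mf{B}_0\preceq\mf{B}_1\preceq\cdots$ of extensions of $\mf{B}$ interleaved with models of $V\cup T$, using Lemma~\ref{lemma:exis-amalgam} and $\mu$-saturation (Proposition~\ref{prop:saturation-results}) to realize at each stage the embeddings dictated by the existential-theory inclusions furnished above. The elementary chain theorem (Theorem~\ref{theorem:elem-chain-theorem}) applied to the union $\mf{B}'$ gives $\mf{B}\preceq\mf{B}'$, a model $\mf{A}\models V\cup T$ with $\mf{A}\subseteq\mf{B}'$, and -- the crucial target -- $\mf{A}$ \emph{existentially closed} in $\mf{B}'$. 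Existential closure, again through Lemma~\ref{lemma:exis-amalgam} applied over the elements of $\mf{A}$, then supplies an elementary extension $\mf{A}'\succeq\mf{A}$ into which $\mf{B}'$ embeds over $\mf{A}$, so that $\mf{A}\subseteq\mf{B}'\subseteq\mf{A}'$. The triple $(\mf{A},\mf{B}',\mf{A}')$ is a sandwich with $\mf{B}\preceq\mf{B}'$, whence $\mf{B}$ is sandwiched by the model $\mf{A}$ of $V\cup T$ and the preservation hypothesis delivers $\mf{B}\models T$. Restricting to singleton $T$ and using closure of $\Sigma^0_2$ under finite conjunction recovers the sentence-level statement.

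\textbf{The main obstacle} will be engineering the chain so that the inner model $\mf{A}$ of $V\cup T$ becomes existentially closed in the elementary extension $\mf{B}'$ of $\mf{B}$; only existential closure makes the outer elementary extension $\mf{A}'$ exist, and a naive interleaving of two elementary chains collapses to a common union (forcing $\mf{A}\equiv\mf{B}$, which is false in general). This is exactly where the restriction to $\Sigma^0_2$ consequences is both necessary and sufficient: $\Gamma$ pins down the existential and universal consequences of $T$ but deliberately \emph{not} its genuinely $\Pi^0_2$ consequences, and the stages must be organized -- with careful existential-type bookkeeping and repeated re-embedding via saturation -- so that no $\Pi^0_2$ obstruction to existential closure survives into the limit. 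Checking that this bookkeeping closes up, rather than the surrounding compactness and chain machinery, is the technical heart of the argument.
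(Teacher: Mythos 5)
First, a framing point: the paper does not prove this theorem — it is quoted as Corollary 5.2 of Keisler's 1960 paper and used as a black box — so your proposal is measured against the standard argument, the harder half of which the paper essentially reconstructs inside its proof of Lemma~\ref{lemma:sandwich-by-saturated-structures}. Your ``if'' direction is correct and complete.

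The ``only if'' direction has a genuine gap, and it sits exactly where you flag the ``technical heart'': you never close it. Two concrete problems. First, your opening compactness step takes $\Delta$ to be the $\Pi^0_1$-theory of $\mf{B}$, which only yields a model $\mf{A}\models V\cup T$ such that every \emph{existential} sentence true in $\mf{A}$ holds in $\mf{B}$. That is too weak. To place $\mf{A}$ inside an elementary extension $\mf{B}'$ of $\mf{B}$ with $\mf{A}\preceq_1\mf{B}'$, you must, for each universal $\tau_{\mf{A}}$-sentence $\xi$ true in $\mf{A}_{\mf{A}}$, find witnesses in $\mf{B}$ for the $\Sigma^0_2$ sentence obtained by existentially quantifying out the constants of $\xi$ — and that requires transfer of $\Sigma^0_2$ sentences from $\mf{A}$ to $\mf{B}$, not merely $\Sigma^0_1$ ones. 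The fix is to take $\Delta$ to be the full $\Pi^0_2$-theory of $\mf{B}$: the same compactness argument still shows $V\cup T\cup\Delta$ consistent, because a finite conjunction of $\Pi^0_2$ sentences is $\Pi^0_2$, so its negation is a $\Sigma^0_2$ consequence of $T$ modulo $V$, hence lies in $\Gamma$ and would contradict $\mf{B}\models\Gamma$. Second, once $\Delta$ is strengthened this way, no elementary chain is needed at all and your worry about the ``bookkeeping closing up'' dissolves: a single further compactness argument shows $\eldiag{\mf{B}}\cup S_\Pi(\mf{A}_{\mf{A}})$ is consistent (here $S_\Pi(\mf{A}_{\mf{A}})$ is the $\Pi^0_1$-theory of $\mf{A}_{\mf{A}}$; this is verbatim the argument in the proof of Lemma~\ref{lemma:sandwich-by-saturated-structures}), and any model of it yields, after taking reducts and renaming, a structure $\mf{B}'$ with $\mf{B}\preceq\mf{B}'$ containing a copy of $\mf{A}$ such that $\mf{A}\preceq_1\mf{B}'$; Lemma~\ref{lemma:relating-e.c.-and-sandwiches} then supplies the outer $\mf{A}'$ and hence the sandwich. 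As written, your chain construction is not carried out, you yourself observe that the naive interleaving fails, and you give no mechanism for forcing existential closure of $\mf{A}$ in the limit — so the proposal does not yet constitute a proof of the hard direction.
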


To prove the `Only if' direction of
Theorem~\ref{theorem:subst-char-for-theories}(\ref{theorem:PSC(lambda)-char-for-theories})
therefore, it just suffices to show that if $T$ is a theory that is
$PSC(\lambda)$ modulo $V$, then $T$ is preserved under sandwiches by
models of $T$ modulo $V$. To do this, we first prove the following
lemmas.

\begin{lemma}[Sandwich by saturated structures]\label{lemma:sandwich-by-saturated-structures}
Let $\mf{A}_1$ and $\mf{B}_1$ be structures such that $\mf{B}_1$ is
sandwiched by $\mf{A}_1$.  Then for each $\mu \ge \omega$, for every
$\mu$-saturated elementary extension $\mf{A}$ of $\mf{A}_1$, there
exists a structure $\mf{B}$ isomorphic to $\mf{B}_1$ such that
$\mf{B}$ is sandwiched by $\mf{A}$.
\end{lemma}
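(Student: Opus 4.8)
The plan is to construct the middle structure $\mf{B}'$ of the desired sandwich directly as a model of a carefully chosen theory, and then to place it inside an elementary extension of $\mf{A}$ using Lemma~\ref{lemma:exis-amalgam}. Throughout, let $\mf{A}_1'$ and $\mf{B}_1'$ be the structures witnessing that $\mf{B}_1$ is sandwiched by $\mf{A}_1$, so that $\mf{B}_1 \preceq \mf{B}_1'$, $\mf{A}_1 \preceq \mf{A}_1'$ and $\mf{A}_1 \subseteq \mf{B}_1' \subseteq \mf{A}_1'$, and let $\mf{A}$ be the given elementary extension of $\mf{A}_1$.

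First I would fix fresh distinct constants $d_b$, one per element $b$ of $\mf{B}_1$, and work over the vocabulary $\sigma = \tau_{\mf{A}} \cup \{d_b : b \in \univ{\mf{B}_1}\}$. Let $\Delta$ be the set of all universal $\tau_{\mf{A}}$-sentences true in $\mf{A}_{\mf{A}}$ (note $\diag{\mf{A}} \subseteq \Delta$), and write $\eldiag{\mf{B}_1}$ using the constants $d_b$. Consider $\Theta = \Delta \cup \eldiag{\mf{B}_1}$. Any model $\mf{M}$ of $\Theta$ yields what we want: its $\tau$-reduct $\mf{B}'$ satisfies $\diag{\mf{A}}$, so $\mf{A} \subseteq \mf{B}'$; the map $b \mapsto d_b^{\mf{M}}$ is an elementary embedding of $\mf{B}_1$ (by the elementary-diagram characterization in Lemma~\ref{lemma:prop-of-structures}), whose image $\mf{B} \cong \mf{B}_1$ satisfies $\mf{B} \preceq \mf{B}'$; and, because $\mf{M} \models \Delta$, every existential $\tau_{\mf{A}}$-sentence true in $(\mf{B}')^{*} := (\mf{B}', (a)_{a \in \univ{\mf{A}}})$ is true in $\mf{A}_{\mf{A}}$. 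Hence, by Lemma~\ref{lemma:exis-amalgam} applied over $\tau_{\mf{A}}$, $(\mf{B}')^{*}$ embeds in an elementary extension of $\mf{A}_{\mf{A}}$; taking $\tau$-reducts produces $\mf{A}'$ with $\mf{A} \preceq \mf{A}'$ and, after renaming along the embedding (which fixes $\mf{A}$ pointwise), $\mf{A} \subseteq \mf{B}' \subseteq \mf{A}'$, so that $\mf{B}$ is sandwiched by $\mf{A}$.

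Everything therefore reduces to the consistency of $\Theta$, which by Theorem~\ref{theorem:compactness} follows from the consistency of each finite subset; this is the main obstacle, and the only place the sandwich hypothesis is used. A finite subset amounts to a single universal sentence $\forall \bar{y}\, \Psi(c_{a_1}, \ldots, c_{a_p}, \bar{y})$ with $\mf{A} \models \forall \bar{y}\, \Psi(a_1, \ldots, a_p, \bar{y})$ (and, without loss of generality, $\Psi$ containing the inequalities $x_i \neq x_j$, since these lie in $\diag{\mf{A}} \subseteq \Delta$), together with a fragment $\Gamma(d_{b_1}, \ldots, d_{b_q})$ of $\eldiag{\mf{B}_1}$ for which $\mf{B}_1 \models \Gamma(b_1, \ldots, b_q)$. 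Since the $c$-constants and the $d$-constants are disjoint, I would satisfy both conjuncts inside the single structure $\mf{B}_1'$. Interpreting $d_{b_i}$ as $b_i$ works because $\mf{B}_1 \preceq \mf{B}_1'$ and $\bar{b}$ lies in $\mf{B}_1$, whence $\mf{B}_1' \models \Gamma(\bar{b})$. For the universal conjunct I would route the witness through the sandwich: from $\mf{A} \models \exists \bar{x}\, \forall \bar{y}\, \Psi(\bar{x}, \bar{y})$ and $\mf{A}_1 \preceq \mf{A}$ we obtain a witness $\bar{x}_0$ lying in $\mf{A}_1$; then $\mf{A}_1 \preceq \mf{A}_1'$ lifts $\forall \bar{y}\, \Psi(\bar{x}_0, \bar{y})$ to $\mf{A}_1'$, and since $\bar{x}_0 \in \mf{A}_1 \subseteq \mf{B}_1' \subseteq \mf{A}_1'$ and universal sentences are preserved under passage to substructures, $\mf{B}_1' \models \forall \bar{y}\, \Psi(\bar{x}_0, \bar{y})$. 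Interpreting each $c_{a_i}$ as the corresponding entry of $\bar{x}_0$ then satisfies the finite subset.

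I would close by observing that this argument uses only $\mf{A}_1 \preceq \mf{A}$, and never the $\mu$-saturation of $\mf{A}$; the hypothesis is kept because the lemma is subsequently invoked for saturated $\mf{A}$. The steps I expect to require the most care are the interface between the two vocabularies — ensuring the embedding supplied by Lemma~\ref{lemma:exis-amalgam} genuinely fixes $\mf{A}$ so that $\mf{A} \subseteq \mf{B}'$ survives on reducts — and the observation that the witness $\bar{x}_0$ can always be located inside $\mf{A}_1$ (hence inside $\mf{B}_1'$), which is exactly what makes the universal constraint and the elementary-diagram fragment of $\mf{B}_1$ simultaneously realizable in the one structure $\mf{B}_1'$.
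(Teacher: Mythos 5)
Your proof is correct, and its skeleton --- realize the theory consisting of all universal $\tau_{\mf{A}}$-sentences true in $\mf{A}_{\mf{A}}$ together with $\eldiag{\mf{B}_1}$, then convert a model of that theory into a sandwich over $\mf{A}$ --- is the same as the paper's. You diverge in the two sub-steps, in each case taking a somewhat more self-contained route. For finite satisfiability, the paper first transfers the $\Sigma^0_2$ sentence $\exists\bar{x}\forall\bar{y}\,\Psi$ from $\mf{A}_1$ to $\mf{B}_1$ by invoking Keisler's Theorem~\ref{theorem:keisler-sandwich} (its observation $(*)$) and then finds the existential witnesses inside $\mf{B}_1$ itself; you instead work directly with the sandwich data $(\mf{A}_1, \mf{B}_1', \mf{A}_1')$, locate the witness $\bar{x}_0$ in $\mf{A}_1$ by elementarity, lift the universal statement to $\mf{A}_1'$, and restrict it to the middle structure $\mf{B}_1'$ --- in effect reproving the needed special case of the easy direction of Keisler's theorem, and realizing the finite fragment in $\mf{B}_1'$ rather than in $\mf{B}_1$. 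For the conversion step, the paper routes through its Lemma~\ref{lemma:relating-e.c.-and-sandwiches} (from $\mf{A}\preceq_1\mf{B}_2$ to a sandwich), which is itself a second compactness argument over $\diag{\mf{B}_2}\cup\eldiag{\mf{A}}$; you reach the same conclusion more directly by applying Lemma~\ref{lemma:exis-amalgam} over the vocabulary $\tau_{\mf{A}}$ and noting that the resulting embedding fixes $\mf{A}$ pointwise because it preserves the constants $c_a$ (the injectivity needed for $\mf{A}\subseteq\mf{B}'$ being supplied by the inequalities in $\diag{\mf{A}}\subseteq\Delta$, as you note). Your closing observation that the $\mu$-saturation of $\mf{A}$ is never used is also accurate: the paper's own proof likewise uses only $\mf{A}_1\preceq\mf{A}$, saturation being needed only in the companion Lemma~\ref{lemma:closure-of-PSC(k)-under-sandwich-by-saturated-models}.
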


\begin{lemma}[Preservation under sandwich by saturated models]\label{lemma:closure-of-PSC(k)-under-sandwich-by-saturated-models}
Let $V$ and $T$ be theories such that $T$ is $PSC(\lambda)$ modulo
$V$, for some $\lambda \ge \aleph_0$. Let $\mf{A}$ be a
$\mu$-saturated model of $V \cup T$, for some $\mu \ge \lambda$, and
let $\mf{B}$ be a model of $V$. If $\mf{B}$ is sandwiched by $\mf{A}$,
then $\mf{B}$ is a model of $T$.
\end{lemma}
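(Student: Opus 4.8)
The plan is to deduce $\mf{B} \models T$ by manufacturing, inside the saturated model $\mf{A}$, an isomorphic copy of $\mf{B}$ (up to elementary equivalence) that is located \emph{around} a $\lambda$-crux of $\mf{A}$; the crux property will then force this copy, and hence $\mf{B}$, to satisfy $T$. Since $T$ is $PSC(\lambda)$ modulo $V$ and $\mf{A} \models V \cup T$, I would first fix a $\lambda$-crux $C$ of $\mf{A}$ w.r.t.\ $T$ modulo $V$, so $|C| < \lambda \le \mu$. Let $\bar{c}$ enumerate $C$ and pass to the expansion $(\mf{A}, \bar{c})$ over $\tau_C$; by Proposition~\ref{prop:saturation-results}(\ref{prop:saturation-results:expansion-preserves-saturation}) this is again $\mu$-saturated (as $|C| < \mu$). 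The whole problem then reduces to realizing a copy of $\mf{B}$ as a $\tau_C$-substructure of $(\mf{A}, \bar{c})$.

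Next I would unpack the hypothesis that $\mf{B}$ is sandwiched by $\mf{A}$: there exist $\mf{B}'$ and $\mf{A}'$ with $\mf{B} \preceq \mf{B}'$, $\mf{A} \preceq \mf{A}'$, and $\mf{A} \subseteq \mf{B}' \subseteq \mf{A}'$. The goal is to apply the embedding-via-saturation result Proposition~\ref{prop:saturation-results}(\ref{prop:saturation-results:isomorphic-embedding}) to embed a copy of $\mf{B}'$, expanded by $\bar{c}$, into $(\mf{A}, \bar{c})$ while fixing $C$ pointwise. This requires two things: (a) every existential $\tau_C$-sentence true in $(\mf{B}', \bar{c})$ is true in $(\mf{A}, \bar{c})$, and (b) a cardinality bound of the form $\mu \ge |\mf{B}'|$.

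Condition (a) is exactly where the sandwich is used, and I expect it to be the conceptual heart of the argument. An existential $\tau_C$-sentence true in $(\mf{B}', \bar{c})$ lifts \emph{upward} through the substructure inclusion $\mf{B}' \subseteq \mf{A}'$ to hold in $(\mf{A}', \bar{c})$ (existential sentences are preserved under passage to extensions), and then descends from $(\mf{A}', \bar{c})$ to its elementary substructure $(\mf{A}, \bar{c})$, using that $\mf{A} \preceq \mf{A}'$ gives $(\mf{A}, \bar{c}) \preceq (\mf{A}', \bar{c})$. Condition (b) is the main technical obstacle, since $\mf{B}'$ could be far larger than $\mu$. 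I would remove it by first shrinking $\mf{B}'$: by the downward L\"owenheim--Skolem theorem (Theorem~\ref{theorem:DLS}) there is an elementary substructure $\mf{B}^* \preceq \mf{B}'$ with $C \subseteq \univ{\mf{B}^*}$ and $|\mf{B}^*| \le \max(|C|, \aleph_0) \le \mu$. Because $\mf{B}^* \preceq \mf{B}'$ and $\mf{B}^* \subseteq \mf{A}'$, the upward-then-downward transfer of (a) goes through verbatim with $\mf{B}^*$ in place of $\mf{B}'$, and now the size bound (b) also holds.

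Finally I would invoke Proposition~\ref{prop:saturation-results}(\ref{prop:saturation-results:isomorphic-embedding}) to obtain a $\tau_C$-embedding $f : (\mf{B}^*, \bar{c}) \hookrightarrow (\mf{A}, \bar{c})$; since $f$ preserves the constants it fixes $C$, so the $\tau$-reduct of the image is a substructure of $\mf{A}$ that contains $C$ and is isomorphic to $\mf{B}^*$. Now $\mf{B} \models V$ with $\mf{B} \preceq \mf{B}'$ and $\mf{B}^* \preceq \mf{B}'$, so $\mf{B}$, $\mf{B}'$, $\mf{B}^*$ are pairwise elementarily equivalent and all model $V$; in particular the image models $V$. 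The crux property of $C$ then yields that the image models $T$, whence $\mf{B}^* \models T$, and therefore $\mf{B} \models T$ by elementary equivalence, as desired. The only delicate point running through the proof is the cardinal bookkeeping tying $|C| < \lambda \le \mu$ to the size of $\mf{B}^*$, which is precisely what the L\"owenheim--Skolem reduction in the previous step is designed to control.
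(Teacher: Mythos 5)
Your proof is correct, and it takes a genuinely different route from the paper's. The paper works ``from above'': it takes a $\mu$-saturated elementary extension $\mf{D}$ of the top of the sandwich, picks a $\lambda$-crux $\bar{d}$ of $\mf{D}$, uses the saturation of $\mf{A}$ to realize the type of $\bar{d}$ by a tuple $\bar{a}$ inside $\mf{A}$, and then invokes the auxiliary ``crux determination'' result (Lemma~\ref{lemma:crux-determination-using-saturated-models}) to conclude that $\bar{a}$ is itself a $\lambda$-crux of $\mf{D}$; since the middle structure $\mf{B}_1$ of the sandwich contains $\mf{A}$ and hence $\bar{a}$, it models $T$ directly. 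You instead work ``from below'': you fix a $\lambda$-crux $C$ of $\mf{A}$ itself and embed a small elementary copy $\mf{B}^*$ of $\mf{B}'$ into $(\mf{A},\bar{c})$ over $C$, so that the image is a substructure of $\mf{A}$ containing $C$ and modelling $V$, hence modelling $T$. All the individual steps check out: $(\mf{A},\bar{c})$ is $\mu$-saturated since $|C|<\lambda\le\mu$; the existential transfer $(\mf{B}^*,\bar{c})\preceq(\mf{B}',\bar{c})\subseteq(\mf{A}',\bar{c})\succeq(\mf{A},\bar{c})$ is sound because every $\tau_C$-sentence mentions only finitely many of the new constants; and the L\"owenheim--Skolem reduction gives $|\mf{B}^*|\le\max(|C|,\aleph_0)\le\mu$, which is exactly what Proposition~\ref{prop:saturation-results}(\ref{prop:saturation-results:isomorphic-embedding}) needs. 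What your approach buys is that it bypasses the crux-determination lemma and the type-realization property of saturation entirely; in effect you have inlined the DLS-plus-embedding argument that the paper uses to \emph{prove} the `If' direction of Lemma~\ref{lemma:crux-determination-using-saturated-models}, but applied it to $\mf{A}$ and its own crux rather than to an auxiliary saturated extension $\mf{D}$. The paper's detour through $\mf{D}$ pays off elsewhere (the crux-determination lemma is reused in the proof of Lemma~\ref{lemma:char-of-distinguished-k-cruxes}), but for this lemma in isolation your argument is the more direct one.
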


Using the above lemmas, we can prove
Theorem~\ref{theorem:subst-char-for-theories}(\ref{theorem:PSC(lambda)-char-for-theories})
as follows.

\begin{proof}[Proof of Theorem~\ref{theorem:subst-char-for-theories}(\ref{theorem:PSC(lambda)-char-for-theories})] 
We give the proof for theories without free variables. The proof for
theories with free variables follows from definitions.

\underline{If}:
Suppose $T$ is equivalent modulo $V$ to a $\Sigma^0_2$ theory $Y$. We
show that $Y$ is $PSC(\lambda)$ modulo $V$ for each $\lambda
> \aleph_0$, whereby the same is true of $T$. Towards this, observe
that $Y$ is a countable set. Let $\mf{A}$ be a model of $V \cup Y$.
Let $C \subseteq \mathsf{U}_{\mf{A}}$ be the (countable) set of
witnesses in $\mf{A}$, of the existential quantifiers of the sentences
of $Y$. In other words, $C$ is a countable subset of $\univ{\mf{A}}$
such that for each sentence $\phi$ of $Y$, there exist elements of $C$
that form a witness in $\mf{A}$, of the existential quantifiers of
$\phi$. It is easy to see that $C$ is an $\aleph_1$-crux of $\mf{A}$
w.r.t. $Y$ modulo $V$. Then $Y$ is $PSC(\aleph_1)$ modulo $V$, and
hence $PSC(\lambda)$ modulo $V$, for each $\lambda > \aleph_0$.

\underline{Only If}:
Suppose $T$ is $PSC(\lambda)$ modulo $V$ for $\lambda > \aleph_0$. To
complete the proof, it suffices to show, owing to
Theorem~\ref{theorem:keisler-sandwich}, that $T$ is preserved under
sandwiches by models of $T$ modulo $V$.  Suppose $\mf{A}_1$ and
$\mf{B}_1$ are given structures such that $\mf{B}_1$ is sandwiched by
$\mf{A}_1$, $\mf{B}_1 \models V$ and $\mf{A}_1 \models (V \cup T)$. 
Consider a $\mu$-saturated elementary extension $\mf{A}$
of $\mf{A}_1$, for some $\mu \ge \lambda$.  By
Lemma~\ref{lemma:sandwich-by-saturated-structures}, there exists a
structure $\mf{B}$ isomorphic to $\mf{B}_1$ such that $\mf{B}$ is
sandwiched by $\mf{A}$.  Then $\mf{A} \models (V \cup T)$ and $\mf{B}
\models V$, whence by
Lemma~\ref{lemma:closure-of-PSC(k)-under-sandwich-by-saturated-models},
we have $\mf{B} \models T$. Since $\mf{B}_1 \cong \mf{B}$, we have
$\mf{B}_1 \models T$, completing the proof.
\end{proof}

We now prove Lemmas~\ref{lemma:sandwich-by-saturated-structures} and
\ref{lemma:closure-of-PSC(k)-under-sandwich-by-saturated-models}. We 
refer the reader to Section~\ref{section:background:structures} for
the notions of
$\tau_{\mf{A}}, \mf{A}_{\mf{A}}, \mf{B}_{\mf{A}}, \diag{\mf{A}}$ and
$\eldiag{\mf{A}}$ for $\mf{A} \subseteq \mf{B}$, that we use in our
proofs below.  We make the simple yet important observation that each
of $\text{Diag}(\mf{A})$ and $\text{El-diag}(\mf{A})$ is closed under
finite conjunctions. We let $\mf{A} \preceq_1 \mf{B}$ denote that (i)
$\mf{A} \subseteq \mf{B}$ and (ii) every $\Sigma^0_1$ sentence of
$\text{FO}(\tau_{\mf{A}})$ that is true in $\mf{B}_{\mf{A}}$ is also
true in $\mf{A}_{\mf{A}}$.

\begin{lemma}\label{lemma:relating-e.c.-and-sandwiches}
$\mf{A} \preceq_1 \mf{B}$ iff there exists $\mf{A}'$ such that
  $(\mf{A}, \mf{B}, \mf{A}')$ is a sandwich.
\end{lemma}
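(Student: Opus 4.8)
The plan is to prove both directions directly, working throughout in the expanded vocabulary $\tau_{\mf{A}}$, where the hypothesis $\mf{A} \preceq_1 \mf{B}$ becomes a clean comparison of the $\Sigma^0_1$ theories of $\mf{A}_{\mf{A}}$ and $\mf{B}_{\mf{A}}$. The model-theoretic content will be carried almost entirely by Lemma~\ref{lemma:exis-amalgam} and by the absoluteness of quantifier-free formulae between a substructure and its extension.

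For the easy direction ($\Leftarrow$), suppose $(\mf{A}, \mf{B}, \mf{A}')$ is a sandwich, so $\mf{A} \preceq \mf{A}'$ and $\mf{A} \subseteq \mf{B} \subseteq \mf{A}'$. Condition (i) of $\preceq_1$ is immediate. For condition (ii), I would take any $\Sigma^0_1$ sentence $\psi$ of $\text{FO}(\tau_{\mf{A}})$ true in $\mf{B}_{\mf{A}}$. Since the elements of $\mf{A}$ lie in all three structures and name the same constants, $\mf{B}_{\mf{A}} \subseteq \mf{A}'_{\mf{A}}$; as existential sentences are preserved under extensions (witnesses in $\mf{B}_{\mf{A}}$ remain witnesses in $\mf{A}'_{\mf{A}}$, using that quantifier-free formulae are absolute between a substructure and its extension), we get $\mf{A}'_{\mf{A}} \models \psi$. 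Finally $\mf{A} \preceq \mf{A}'$ gives $\mf{A}_{\mf{A}} \equiv \mf{A}'_{\mf{A}}$ by Lemma~\ref{lemma:prop-of-structures}(3), so $\mf{A}_{\mf{A}} \models \psi$, as required.

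For the substantive direction ($\Rightarrow$), I would invoke Lemma~\ref{lemma:exis-amalgam} over $\tau_{\mf{A}}$: the hypothesis says exactly that every existential $\tau_{\mf{A}}$-sentence true in $\mf{B}_{\mf{A}}$ is true in $\mf{A}_{\mf{A}}$, so $\mf{B}_{\mf{A}}$ embeds, via some $h$, into an elementary extension $\mf{C}$ of $\mf{A}_{\mf{A}}$. Because $\mf{A}_{\mf{A}} \preceq \mf{C}$, each constant $c_a$ receives the same value $a$ in both, so writing $\mf{A}'$ for the $\tau$-reduct of $\mf{C}$ we have $\mf{C} = \mf{A}'_{\mf{A}}$, whence $\mf{A}_{\mf{A}} \equiv \mf{A}'_{\mf{A}}$ yields $\mf{A} \preceq \mf{A}'$ and in particular $\mf{A} \subseteq \mf{A}'$; moreover any embedding preserves constants, so $h$ fixes $\univ{\mf{A}}$ pointwise.

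The remaining — and fiddliest — step is that $h$ only realizes $\mf{B}$ as an isomorphic copy inside $\mf{A}'$, whereas a sandwich demands the literal inclusion $\mf{A} \subseteq \mf{B} \subseteq \mf{A}'$. I would repair this by transporting along a bijection $\pi$ that sends $h(b) \mapsto b$ for every $b \in \mf{B}$ (well defined, as $h$ is injective, and identity on $\mf{A}$ since $h$ fixes $\mf{A}$) and sends each element of $\univ{\mf{A}'} \setminus h(\univ{\mf{B}})$ to a fresh point disjoint from $\univ{\mf{B}}$, obtaining $\mf{A}'' \cong \mf{A}'$ with $\mf{B} \subseteq \mf{A}''$ literally. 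Since $\pi$ is an isomorphism fixing $\mf{A}$ pointwise, the relation $\mf{A} \preceq \mf{A}'$ transfers to $\mf{A} \preceq \mf{A}''$, and then $(\mf{A}, \mf{B}, \mf{A}'')$ is the desired sandwich. I expect this renaming/transport bookkeeping to be the only genuine obstacle, the logical heart of the argument being a single application of the amalgamation lemma in the vocabulary $\tau_{\mf{A}}$.
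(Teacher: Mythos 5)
Your proof is correct. The forward direction matches the paper's (one-line) argument exactly. For the converse, the paper runs a compactness argument from scratch on the theory $Y = \diag{\mf{B}} \cup \eldiag{\mf{A}}$, arranging the vocabularies $\tau_{\mf{B}}$ and $\tau_{\mf{A}}$ to share the constants naming elements of $\mf{A}$, and extracting from each finite subset of $Y$ an existential $\fo(\tau_{\mf{A}})$-sentence whose transfer from $\mf{B}_{\mf{A}}$ to $\mf{A}_{\mf{A}}$ is exactly the hypothesis $\mf{A} \preceq_1 \mf{B}$. You instead observe that condition (ii) of $\preceq_1$ is verbatim the hypothesis of Lemma~\ref{lemma:exis-amalgam} applied to $\mf{B}_{\mf{A}}$ and $\mf{A}_{\mf{A}}$ over $\tau_{\mf{A}}$, so the whole diagram-plus-compactness computation is delegated to that already-quoted lemma; the constants of $\tau_{\mf{A}}$ then force the resulting embedding to fix $\univ{\mf{A}}$ pointwise, which is the same role the shared constants play in the paper's argument. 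The two routes are logically the same amalgamation, but yours is the tidier packaging given that Lemma~\ref{lemma:exis-amalgam} is available (the paper itself uses it this way in the proof of Lemma~\ref{lemma:exis-of-k-ary-cover-in-an-elem-ext-satisfying-T}). Your explicit renaming step, turning the isomorphic copy $h(\mf{B})$ inside $\mf{A}'$ into a literal inclusion $\mf{B} \subseteq \mf{A}''$, addresses a bookkeeping point the paper silently elides when it declares the $\tau$-reduct of $\mf{C}$ to be ``the desired structure $\mf{A}'$''; this extra care costs nothing and is welcome.
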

\begin{proof}
The `If' direction follows easily from the definition of elementary
substructure and the fact that existential formulae are preserved
under extensions. For the converse, suppose that $\mf{A} \preceq_1
\mf{B}$. Let the vocabularies $\tau_{\mf{B}}$ and $\tau_{\mf{A}}$ be
such that for every element $a$ of $\mf{A}$, the constant in
$\tau_{\mf{B}}$ corresponding to $a$ is \emph{the same} as the
constant in $\tau_{\mf{A}}$ corresponding to $a$ (and hence the
constants in $\tau_{\mf{B}} \setminus \tau_{\mf{A}}$ correspond
exactly to the elements in $\mf{B}$ that are not in $\mf{A}$). Now
consider the theory $Y$ given by $Y = \text{Diag}(\mf{B}) \cup
\text{El-diag}(\mf{A})$. Any non-empty finite subset of
$\text{Diag}(\mf{B})$, resp.\ $\text{El-diag}(\mf{A})$, is satisfied
in $\mf{B}_{\mf{B}}$, resp.\ $\mf{A}_{\mf{A}}$. Let $Z$ be any finite
subset of $Y$, that has a non-empty intersection with both
$\text{Diag}(\mf{B})$ and $\text{El-diag}(\mf{A})$; we can consider
$Z$ as given by $Z = \{\xi, \psi\}$ where $\xi \in
\text{Diag}(\mf{B})$ and $\psi \in \text{El-diag}(\mf{A})$.  Let $c_1,
\ldots, c_r$ be the (distinct) constants of $\tau_{\mf{B}} \setminus
\tau_{\mf{A}}$ appearing in $\xi$, and let $x_1, \ldots, x_r$ be fresh
variables. Consider the sentence $\phi$ given by $\phi = \exists x_1
\ldots \exists x_r \xi \left[c_1 \mapsto x_1; \ldots; c_r \mapsto x_r
  \right]$, where $c_i \mapsto x_i$ denotes substitution of $x_i$ for
$c_i$, for $1 \leq i \leq r$. Observe that $\phi$ is a $\Sigma^0_1$
sentence of $\text{FO}(\tau_{\mf{A}})$ and that $\mf{B}_{\mf{A}}
\models \phi$. Since $\mf{A} \preceq_1 \mf{B}$, we have that
$\mf{A}_{\mf{A}} \models \phi$. Let $a_1, \ldots, a_r$ be the
witnesses in $\mf{A}_{\mf{A}}$, of the quantifiers of $\phi$
corresponding to variables $x_1, \ldots, x_r$. Interpreting the
constants $c_1, \ldots, c_r$ as $a_1, \ldots, a_r$ respectively, we
see that $(\mf{A}_{\mf{A}}, a_1, \ldots, a_r) \models Z$. Since $Z$ is
an arbitrary finite subset of $Y$, by the compactness theorem, $Y$ is
satisfied in a $\tau_{\mf{B}}$-structure $\mf{C}$. The $\tau$-reduct
of $\mf{C}$ is the desired structure $\mf{A}'$. 
\end{proof}

\begin{proof}[Proof of Lemma~\ref{lemma:sandwich-by-saturated-structures}] 
Let $\mf{A}$ be a $\mu$-saturated elementary extension of
$\mf{A}_1$, for some $\mu \ge \omega$.  We show below the
existence of a structure $\mf{B}_2$ such that (i) $\mf{A} \preceq_1
\mf{B}_2$ and (ii) $\mf{B}_1$ is elementarily embeddable in $\mf{B}_2$
via an embedding say $f$. Let $\mf{B}$ be the image of $\mf{B}_1$
under $f$; then $\mf{B} \cong \mf{B}_1$ and $\mf{B} \preceq \mf{B}_2$.
By Lemma~\ref{lemma:relating-e.c.-and-sandwiches}, there exists a
structure $\mf{A}_2$ such that $(\mf{A}, \mf{B}_2, \mf{A}_2)$ is a
sandwich, whence $\mf{B}$ is sandwiched by $\mf{A}$. Then $\mf{B}$ is
indeed as desired.  For our arguments below, we make the following
observation, call it (*): If $\mf{B}$ is sandwiched by $\mf{A}$, then
every $\Sigma^0_2$ sentence true in $\mf{A}$ is also true in
$\mf{B}$. This follows simply from
Theorem~\ref{theorem:keisler-sandwich} by taking $T$ to be the set of
all $\Sigma^0_2$ sentences that are true in $\mf{A}$, and taking $V$
to be the empty theory.

Let $\tau$ be the vocabulary of $\mf{A}$ and $\mf{B}_1$, and let
$\tau_{\mf{A}}$ and $\tau_{\mf{B}_1}$ be such that
$\tau_{\mf{A}} \cap \tau_{\mf{B}_1} = \tau$.  Consider the theory $Y$
given by $Y = S_\Pi(\mf{A}_{\mf{A}}) \cup
\text{El-diag}(\mf{B}_1)$, where $S_\Pi(\mf{A}_{\mf{A}})$ denotes the
set of all $\Pi^0_1$ sentences true in $\mf{A}_{\mf{A}}$. Observe that
$S_\Pi(\mf{A}_{\mf{A}})$ is closed under finite conjunctions.  Let $Z$
be any non-empty finite subset of $Y$. If $Z \subseteq
S_\Pi(\mf{A}_{\mf{A}})$ or $Z \subseteq \text{El-diag}(\mf{B}_1)$,
then $Z$ is clearly satisfiable. Else, $Z = \{\xi, \psi\}$ where $\xi
\in S_\Pi(\mf{A}_{\mf{A}})$ and $\psi \in
\text{El-diag}(\mf{B}_1)$. Let $c_1, \ldots, c_r$ be the (distinct)
constants of $\tau_{\mf{A}} \setminus \tau$ appearing in $\xi$, and
let $x_1, \ldots, x_r$ be fresh variables. Consider the sentence
$\phi$ given by $\phi = \exists x_1 \ldots \exists x_r \xi \left[c_1
  \mapsto x_1; \ldots; c_r \mapsto x_r\right]$, where $c_i \mapsto
x_i$ denotes substitution of $x_i$ for $c_i$, for $1 \leq i \leq r$.
Clearly $\mf{A} \models \phi$, whence $\mf{A}_1 \models \phi$. Since
$\mf{B}_1$ is sandwiched by $\mf{A}_1$ and $\phi$ is a $\Sigma^0_2$
sentence, it follows from observation (*) above, that $\mf{B}_1
\models \phi$. Let $b_1, \ldots, b_r$ be the witnesses in $\mf{B}_1$
of the quantifiers of $\phi$ associated with $x_1, \ldots, x_r$. One
can now check that if $\mf{R} = \mf{B}_1$, then $(\mf{R}_{\mf{R}},
b_1, \ldots, b_r) \models Z$. Since $Z$ is an arbitrary finite subset
of $Y$, by compactness theorem, $Y$ is satisfiable. Whereby, there
exists a $\tau$-structure $\mf{B}_2$ such that (i) $\mf{A} \preceq_1
\mf{B}_2$ and (ii) $\mf{B}_1$ is elementarily embeddable in
$\mf{B}_2$. 
\end{proof}

We now turn to proving
Lemma~\ref{lemma:closure-of-PSC(k)-under-sandwich-by-saturated-models}.
The notion of the FO-type of a $k$-tuple in a given structure for $k
\in \mathbb{N}$ can be naturally extended to the notion of the FO-type
of a tuple of length $< \lambda$ in a given structure for $\lambda \ge
\aleph_0$. Formally, given a structure $\mf{A}$ and a tuple $\bar{a} =
(a_1, a_2, \ldots)$ of $\mf{A}$, of length $< \lambda$, the FO-type of
$\bar{a}$ in $\mf{A}$, denoted $\mathsf{tp}_{\mf{A}, \bar{a}}(x_1,
x_2, \ldots)$, is the set of formulae given by $\mathsf{tp}_{\mf{A},
  \bar{a}}(x_1, x_2, \ldots) = \{ \varphi(x_{\eta_1}, \ldots,
x_{\eta_k}) \mid k \in \mathbb{N}, \,1 \leq \eta_1 < \ldots < \eta_k <
\lambda,\, \varphi(x_{\eta_1}, \ldots, x_{\eta_k})~\text{is an FO
  formula such that}$ $~ (\mf{A}, a_{\eta_1}, \ldots, a_{\eta_k})
\models \varphi(x_{\eta_1}, \ldots, x_{\eta_k})\}$.  The subset of
$\mathsf{tp}_{\mf{A}, \bar{a}}(x_1, x_2, \ldots)$ consisting of all
$\Pi^0_1$ formulae in $\mathsf{tp}_{\mf{A}, \bar{a}}(x_1, x_2,
\ldots)$ is denoted as $\mathsf{tp}_{\Pi, \mf{A}, \bar{a}}(x_1, x_2,
\ldots)$.  For theories $V$ and $T$ such that $T$ is $PSC(\lambda)$
modulo $V$, and for $\mf{A}$ and $\bar{a}$ as mentioned above, we say
that $\mathsf{tp}_{\Pi, \mf{A}, \bar{a}}(x_1, x_2, \ldots)$
\emph{determines a $\lambda$-crux w.r.t. $T$ modulo
  $V$}\sindex[term]{type determining a crux} if it is the case that
given a model $\mf{D}$ of $V$ and a tuple $\bar{d}$ of $\mf{D}$, of
length equal to that of $\bar{a}$, if $(\mf{D}, \bar{d}) \models
\mathsf{tp}_{\Pi, \mf{A}, \bar{a}}(x_1, x_2, \ldots)$, then $\mf{D}
\models T$. Since universal formulae are preserved under
substructures, it follows that for $\mf{D}$ as just mentioned, the
elements of $\bar{d}$ form a $\lambda$-crux of $\mf{D}$ w.r.t. $T$
modulo $V$. To prove
Lemma~\ref{lemma:closure-of-PSC(k)-under-sandwich-by-saturated-models},
we need the next result which characterizes when a $\Pi^0_1$-type
determines a $\lambda$-crux.

\begin{lemma}[Characterizing ``crux determination'']\label{lemma:crux-determination-using-saturated-models}
Let $V$ and $T$ be theories such that $T$ is $PSC(\lambda)$ modulo $V$
for some $\lambda \ge \aleph_0$. Let $\mf{A}$ be a model of $V$ and
$\bar{a}$ be a tuple of elements of $\mf{A}$, of length less than
$\lambda$.  Then $\mathsf{tp}_{\Pi, \mf{A}, \bar{a}}(x_1, x_2,
\ldots)$ determines a $\lambda$-crux w.r.t. $T$ modulo $V$ iff $\mf{A}
\models T$ and for some $\mu \ge \lambda$, there exists a
$\mu$-saturated elementary extension $\mf{B}$ of $\mf{A}$ (hence
$\mf{B} \models (V \cup T)$) such that $\bar{a}$ is a $\lambda$-crux
of $\mf{B}$ w.r.t. $T$ modulo $V$.

\end{lemma}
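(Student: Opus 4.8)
The plan is to prove the two directions separately, using that the $\Pi^0_1$-type $\mathsf{tp}_{\Pi, \mf{A}, \bar{a}}(x_1, x_2, \ldots)$ is preserved downward along substructures (since universal formulae pass to substructures) and upward along elementary extensions (by elementarity), together with the embedding power of $\mu$-saturated models supplied by Proposition~\ref{prop:saturation-results}. Throughout I rely on $V$ and $T$ being theories of sentences, so that they are preserved under elementary equivalence.

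For the forward (``only if'') direction, assume $\mathsf{tp}_{\Pi, \mf{A}, \bar{a}}(x_1,x_2,\ldots)$ determines a $\lambda$-crux. Since $\mf{A} \models V$ and trivially $(\mf{A}, \bar{a}) \models \mathsf{tp}_{\Pi, \mf{A}, \bar{a}}(x_1,x_2,\ldots)$, the determination property immediately gives $\mf{A} \models T$. Next I would fix some $\mu \ge \lambda$ and take a $\mu$-saturated elementary extension $\mf{B}$ of $\mf{A}$ (existence by a routine strengthening of Proposition~\ref{prop:saturation-results}(\ref{prop:saturation-results:exis-of-sat-elem-ext})); note $\mf{B} \models (V \cup T)$ as $\mf{B} \equiv \mf{A}$. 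To see that $\bar{a}$ is a $\lambda$-crux of $\mf{B}$, take any $\mf{B}' \subseteq \mf{B}$ with $\bar{a} \in \mf{B}'$ and $\mf{B}' \models V$: elementarity gives $(\mf{B}, \bar{a}) \models \mathsf{tp}_{\Pi, \mf{A}, \bar{a}}(x_1,x_2,\ldots)$, and since universal formulae are preserved under substructures, $(\mf{B}', \bar{a}) \models \mathsf{tp}_{\Pi, \mf{A}, \bar{a}}(x_1,x_2,\ldots)$; the determination property then forces $\mf{B}' \models T$. As the coordinates of $\bar{a}$ form a set of size less than $\lambda$, this exhibits $\bar{a}$ as the required $\lambda$-crux.

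For the backward (``if'') direction, assume $\mf{A} \models T$ and that some $\mu$-saturated elementary extension $\mf{B}$ of $\mf{A}$, with $\mu \ge \lambda$, has $\bar{a}$ as a $\lambda$-crux. To prove determination, take any $\mf{D} \models V$ and any tuple $\bar{d}$ of $\mf{D}$ of the same length with $(\mf{D}, \bar{d}) \models \mathsf{tp}_{\Pi, \mf{A}, \bar{a}}(x_1,x_2,\ldots)$; the goal is $\mf{D} \models T$. Reading the type condition contrapositively, every existential $\tau$-formula true of $\bar{d}$ in $\mf{D}$ is true of $\bar{a}$ in $\mf{A}$, hence (since $\mf{A} \preceq \mf{B}$) true of $\bar{a}$ in $\mf{B}$. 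The plan is to embed $(\mf{D}, \bar{d})$, viewed as an expansion by constants interpreting $\bar{d}$, into $(\mf{B}, \bar{a})$, which is $\mu$-saturated by Proposition~\ref{prop:saturation-results}(\ref{prop:saturation-results:expansion-preserves-saturation}); the image $\mf{D}'$ would then be a substructure of $\mf{B}$ containing $\bar{a}$ with $\mf{D}' \models V$, so the $\lambda$-crux property of $\bar{a}$ yields $\mf{D}' \models T$, and hence $\mf{D} \models T$.

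The main obstacle is a cardinality mismatch: to invoke Proposition~\ref{prop:saturation-results}(\ref{prop:saturation-results:isomorphic-embedding}) for the embedding I need the source structure to have size at most $\mu$, but $\mf{D}$ may be arbitrarily large. I would resolve this by first applying the downward L\"owenheim-Skolem theorem (Theorem~\ref{theorem:DLS}) to $\mf{D}$ with the coordinates of $\bar{d}$ as the distinguished set, obtaining an elementary substructure $\mf{D}_0 \preceq \mf{D}$ that contains $\bar{d}$ and has $|\mf{D}_0| \le \lambda \le \mu$. Since $\mf{D}_0 \preceq \mf{D}$, we retain $\mf{D}_0 \models V$ and $(\mf{D}_0, \bar{d}) \equiv (\mf{D}, \bar{d})$, so the existential-sentence transfer into $(\mf{B}, \bar{a})$ still holds; Proposition~\ref{prop:saturation-results}(\ref{prop:saturation-results:isomorphic-embedding}) then embeds $(\mf{D}_0, \bar{d})$ into $(\mf{B}, \bar{a})$, the embedding sending $\bar{d}$ to $\bar{a}$ because it preserves the added constants. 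Finally $\mf{D}_0 \models T$ transfers back to $\mf{D} \models T$ by elementary equivalence, completing the argument.
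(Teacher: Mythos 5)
Your proposal is correct and follows essentially the same route as the paper's proof: the forward direction unwinds the determination property against a $\mu$-saturated elementary extension (using downward preservation of universal formulae), and the backward direction combines downward L\"owenheim-Skolem with the embedding property of $\mu$-saturated expansions (Proposition~\ref{prop:saturation-results}(\ref{prop:saturation-results:expansion-preserves-saturation}) and (\ref{prop:saturation-results:isomorphic-embedding})) to pull the crux property of $\bar{a}$ in $\mf{B}$ back to an arbitrary model realizing the $\Pi^0_1$-type. The only difference is expository: you spell out the downward transfer of the type to substructures in the ``only if'' direction, which the paper delegates to the remark preceding the lemma.
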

\begin{proof}
`Only If:' Since $\mf{A} \models V$ and
$(\mf{A}, \bar{a}) \models \mathsf{tp}_{\Pi, \mf{A}, \bar{a}}(x_1,
x_2, \ldots)$, we have that $\mf{A} \models T$. Let $\mf{B}$ be any
$\mu$-saturated elementary extension of $\mf{A}$ for
$\mu \ge \lambda$; then
$(\mf{B}, \bar{a}) \models \mathsf{tp}_{\Pi, \mf{A}, \bar{a}}(x_1,
x_2, \ldots)$ and $\mf{B} \models V$.  Since
$\mathsf{tp}_{\Pi, \mf{A}, \bar{a}}(x_1, x_2, \ldots)$ determines a
$\lambda$-crux w.r.t. $T$ modulo $V$, we have that $\bar{a}$ is a
$\lambda$-crux of $\mf{B}$ w.r.t. $T$ modulo $V$.

`If:' Let $\mf{A}$, $\mf{B}$ and $\bar{a}$ be as mentioned in the
statement. Consider a model $\mf{D}$ of $V$ and a tuple $\bar{d}$ of
$\mf{D}$, of length equal to that of $\bar{a}$, such that
$(\mf{D}, \bar{d}) \models \mathsf{tp}_{\Pi, \mf{A}, \bar{a}}(x_1,
x_2, \ldots)$. By the downward L\"owenheim-Skolem theorem, there
exists $\mf{D}_1 \preceq \mf{D}$ such that (i) $\mf{D}_1$ contains
$\bar{d}$ and (ii) $|\mf{D}_1| \leq \lambda$.  Then
$(\mf{D}_1, \bar{d}) \models \mathsf{tp}_{\Pi, \mf{A}, \bar{a}}(x_1,
x_2, \ldots)$.  Now since $\mf{A} \preceq \mf{B}$, we have that
$\mathsf{tp}_{\Pi, \mf{B}, \bar{a}}(x_1, x_2, \ldots)
= \mathsf{tp}_{\Pi, \mf{A}, \bar{a}}(x_1, x_2, \ldots)$.  Then every
existential sentence that is true in $(\mf{D}_1, \bar{d})$ is also
true in $(\mf{B}, \bar{a})$. Since $\mf{B}$ is $\mu$-saturated, and
the length of $\bar{a}$ is $< \lambda \leq \mu$, we have that
$(\mf{B}, \bar{a})$ is also $\mu$-saturated (by
Proposition~\ref{prop:saturation-results}(\ref{prop:saturation-results:expansion-preserves-saturation})). Further,
since $|\mf{D}_1| \leq \lambda$, we have
$|(\mf{D}_1, \bar{d})| \leq \lambda \leq \mu$. Then there exists an
embedding $f: (\mf{D}_1, \bar{d}) \rightarrow (\mf{B}, \bar{a})$ (by
Proposition~\ref{prop:saturation-results}(\ref{prop:saturation-results:isomorphic-embedding})).
The image of $(\mf{D}_1, \bar{d})$ under $f$ is a substructure
$(\mf{B}_1, \bar{a})$ of $(\mf{B}, \bar{a})$. Since
$\mf{D}_1 \preceq \mf{D}$ and $\mf{D} \models V$, we have
$\mf{B}_1 \models V$.  Further since $\bar{a}$ forms a $\lambda$-crux
of $\mf{B}$ w.r.t. $T$ modulo $V$ (by assumption), we have
$\mf{B}_1 \models T$. Then $\mf{D}_1$, and hence $\mf{D}$, models $T$,
completing the proof.
\end{proof}

\begin{proof}[Proof of Lemma~\ref{lemma:closure-of-PSC(k)-under-sandwich-by-saturated-models}]  
We assume the vocabulary to be $\tau$.  Since $\mf{B}$ is sandwiched
by $\mf{A}$, there exist structures $\mf{A}_1$ and $\mf{B}_1$ such
that (i) $\mf{B} \preceq \mf{B}_1$ and (ii) $(\mf{A},
\mf{B}_1, \mf{A}_1)$ is a sandwich. Let $\mf{D}$ be a
$\mu$-saturated elementary extension of $\mf{A}_1$ for some
$\mu \ge \lambda$. Then $\mf{A} \preceq \mf{D}$. Since $\mf{A}$ models
$V \cup T$, so does $\mf{D}$.

Now, given that $T$ is $PSC(\lambda)$ modulo $V$, there exists a
$\lambda$-crux of $\mf{D}$ w.r.t. $T$ modulo $V$; let $\bar{d}$ be any
tuple (of length $< \lambda$) formed from this
$\lambda$-crux. Consider $\mathsf{tp}_{\mf{D}, \bar{d}}(x_1,
x_2, \ldots)$, namely the FO-type of $\bar{d}$ in $\mf{D}$. Since
$\mf{A} \preceq \mf{D}$, we have $\mf{A} \equiv \mf{D}$ (see
Lemma~\ref{lemma:prop-of-structures}). Then since $\mf{A}$ is
$\mu$-saturated, there exists a tuple $\bar{a}$ of $\mf{A}$, of length
equal to that of $\bar{d}$, such that $(\mf{A}, \bar{a}) \equiv
(\mf{D}, \bar{d})$ (by
Proposition~\ref{prop:saturation-results}(\ref{prop:saturation-results:type-realization})). In
other words, $\mathsf{tp}_{\mf{A}, \bar{a}}(x_1, x_2, \ldots)
= \mathsf{tp}_{\mf{D}, \bar{d}}(x_1, x_2, \ldots)$.  Then since
$\mf{A} \preceq \mf{D}$, it follows that the FO-type of $\bar{a}$ in
$\mf{D}$, namely $\mathsf{tp}_{\mf{D}, \bar{a}}(x_1, x_2, \ldots)$, is
exactly $\mathsf{tp}_{\mf{A}, \bar{a}}(x_1, x_2, \ldots)$. Whence,
$\mathsf{tp}_{\Pi, \mf{D}, \bar{a}}(x_1, x_2, \ldots)
= \mathsf{tp}_{\Pi, \mf{A}, \bar{a}}(x_1, x_2, \ldots)
= \mathsf{tp}_{\Pi, \mf{D}, \bar{d}}(x_1, x_2, \ldots)$. Now since (i)
$\mf{D} \models (V \cup T)$, (ii) $\mf{D}$ is itself $\mu$-saturated
and (iii) $\bar{d}$ is a $\lambda$-crux of $\mf{D}$ w.r.t. $T$ modulo
$V$, we have by
Lemma~\ref{lemma:crux-determination-using-saturated-models}, that
$\mathsf{tp}_{\Pi, \mf{D}, \bar{d}}(x_1, x_2, \ldots)$, and hence
$\mathsf{tp}_{\Pi, \mf{D}, \bar{a}}(x_1, x_2, \ldots)$, determines a
$\lambda$-crux w.r.t. $T$ modulo $V$.  Whence the elements of
$\bar{a}$ form a $\lambda$-crux of $\mf{D}$ w.r.t. $T$ modulo
$V$. Since (i) $\mf{B}_1
\subseteq \mf{A}_1 \preceq \mf{D}$ (ii) $\mf{B}_1$ contains $\bar{a}$ and (iii)
$\mf{B}_1 \models V$ (since $\mf{B} \models V$ and $\mf{B} \preceq
\mf{B}_1$), we have by definition of a $\lambda$-crux (w.r.t. $T$ modulo $V$), 
that $\mf{B}_1 \models T$, whence $\mf{B} \models T$.
\end{proof}

The characterizations of this section are depicted pictorially below.

\begin{figure}[H]\centering 
\includegraphics[scale=0.7]{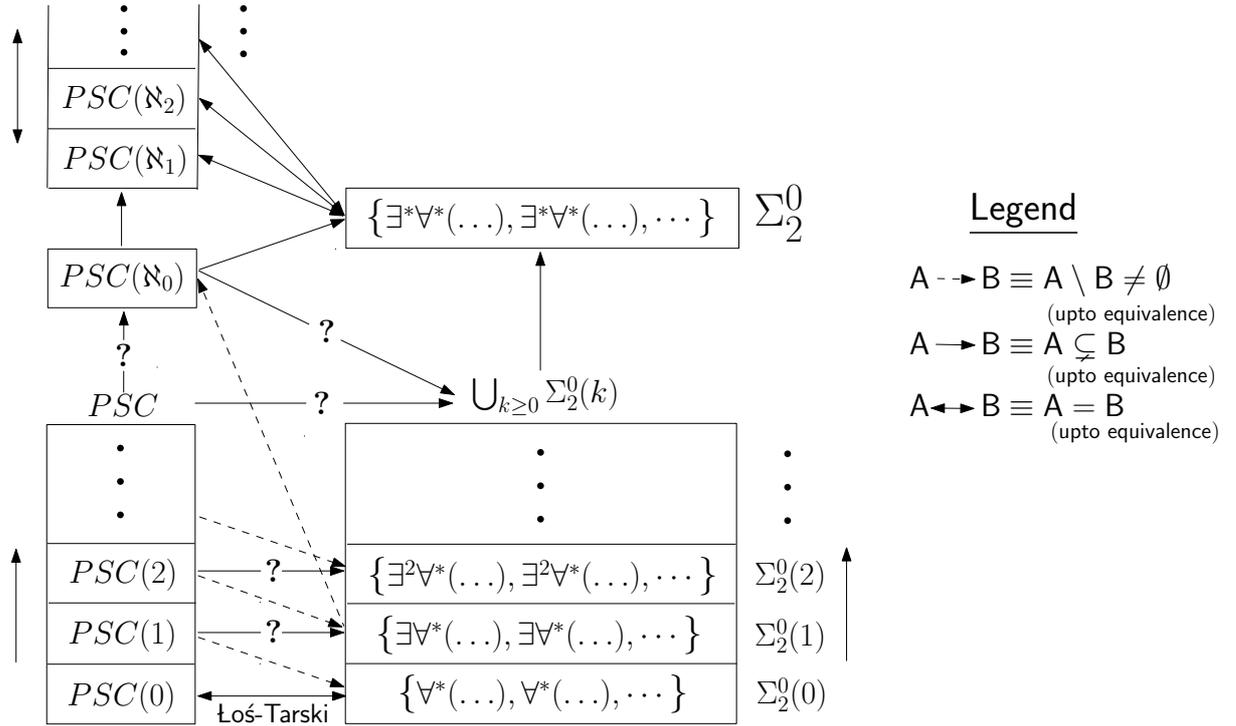}
\caption{(Partial) characterizations of $PSC(k)$ and $PSC(\lambda)$ theories}
\label{figure:subst-chars}
\end{figure}







\subsection*{Proof of Theorem~\ref{theorem:conditional-refinement}}\label{subsection:conditional-refinement}

The technique of our proof is as presented below.
\vspace{2pt}
\begin{enumerate}[leftmargin=*, nosep]
\item We first define a variant of $PSC(k)$, that we call $PSC_{var}(k)$,
into whose definition we build Hypothesis~\ref{hypothesis:H}.
\item We then show that $PSC_{var}(k)$ theories are equivalent to
  theories of $\exists^k \forall^*$ sentences. This is done in the
  following two steps:
\begin{itemize}[leftmargin=13pt, nosep]
\item ``Going up'': We give a characterization of $PSC_{var}(k)$
  theories in terms of sentences of a special infinitary logic
  (Lemma~\ref{lemma:inf-characterization-of-PSC-var(k)}).
\item ``Coming down'': We provide a translation of sentences of the
  aforesaid infinitary logic, into their equivalent FO theories,
  whenever these sentences define \emph{elementary} (i.e. definable
  using FO theories) classes of structures
  (Proposition~\ref{prop:char-of-inf-formulae-defining-elem-classes}).
  The FO theories are obtained from suitable \emph{finite
  approximations} of the infinitary sentences, and turn out to be
  theories of $\exists^k \forall^*$ sentences.
\end{itemize}
\item We hypothesize that $PSC_{var}(k)$ theories are no different
  from $PSC(k)$ theories, as an equivalent reformulation of
  Hypothesis~\ref{hypothesis:H}, to obtain
  Theorem~\ref{theorem:conditional-refinement}. To show that this
  hypothesis is well-motivated, we define a variant of $PCE(k)$,
  denoted $PCE_{var}(k)$, that is dual to $PSC_{var}(k)$. We show that
  $PCE_{var}(k)$ coincides with $PCE(k)$ for theories, and use this to
  conclude that $PSC_{var}(k)$ coincides with $PSC(k)$ for sentences
  (Lemma~\ref{lemma:relations-between-old-and-new-notions}).
\end{enumerate}
\vspace{2pt}

Throughout the section, whenever $V$ and $T$ are clear from the
context, we skip mentioning the qualifier `w.r.t. $T$ modulo $V$' for
a $k$-crux, if $T$ is $PSC(k)$ modulo $V$. Before we present the
definitions of $PSC_{var}(k)$ and $PCE_{var}(k)$, we first define the
notion of `distinguished $k$-crux'.

\begin{defn}\label{defn:distinguished-cruxes}
Suppose $T$ is $PSC(k)$ modulo $V$ for theories $T$ and $V$. Given a
model $\mf{A}$ of $V \cup T$, we call a $k$-tuple $\bar{a}$ of
$\mf{A}$ a \emph{distinguished $k$-crux} of $\mf{A}$, if for some $\mu
\ge \omega$, there is a $\mu$-saturated elementary extension
$\mf{A}^+$ of $\mf{A}$ (whence $\mf{A}^+ \models V \cup T$) such that
$\bar{a}$ is a $k$-crux of $\mf{A}^+$ (whence $\bar{a}$ is also a
$k$-crux of $\mf{A}$).
\end{defn}

We now define $PSC_{var}(k)$ and $PCE_{var}(k)$. We refer the reader
to Definition~\ref{defn:k-ary-cover-inside-an-elem-ext} for the
meaning of the phrase `$k$-ary cover of $\mf{A}$ in $\mf{A}^+$'
appearing in the definition below.


\begin{defn}\label{defn:variants}
Let $V$ and $T$ be theories. 
\begin{enumerate}[nosep]
\item We say $T$ is $PSC_{var}(k)$ modulo $V$ if $T$ is $PSC(k)$
  modulo $V$ and every model of $V \cup T$ contains a distinguished
  $k$-crux.
\item We say $T$ is $PCE_{var}(k)$ modulo $V$ if for every model
  $\mf{A}$ of $V$, there exists a $\mu$-saturated elementary extension
  $\mf{A}^+$ of $\mf{A}$ for some $\mu \ge \omega$, such that for
  every collection $R$ of models of $V \cup T$, if $R$ is a $k$-ary
  cover of $\mf{A}$ in $\mf{A}^+$, then $\mf{A} \models T$.
\end{enumerate}
\end{defn}

If $\phi(\bar{x})$ and $T(\bar{x})$ are respectively a formula and a
theory, each of whose free variables are among $\bar{x}$, then for a
theory $V$, the notions of `$\phi(\bar{x})$ is $PSC_{var}(k)$ (resp.\
$PCE_{var}(k)$) modulo $V$' and `$T(\bar{x})$ is $PSC_{var}(k)$
(resp.\ $PCE_{var}(k)$) modulo $V$' are defined similar to
corresponding notions for $PSC(k)$ (resp. $PCE(k)$).  The following
duality is easy to see.
\begin{lemma}[$PSC_{var}(k)$-$PCE_{var}(k)$ duality]\label{lemma:PSC-var(k)-PCE-var(k)-duality}
Given a theory $V$, a formula $\phi(\bar{x})$ is $PSC_{var}(k)$ modulo
$V$ iff $\neg \phi(\bar{x})$ is $PCE_{var}(k)$ modulo $V$.
\end{lemma}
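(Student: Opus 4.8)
The plan is to prove the duality $\phi(\bar{x})$ is $PSC_{var}(k)$ modulo $V$ iff $\neg\phi(\bar{x})$ is $PCE_{var}(k)$ modulo $V$ by unwinding the two definitions and exploiting the self-dual nature of the saturated-extension machinery built into both. Since the free-variable versions reduce to the sentence case via substitution of fresh constants (exactly as for $PSC(k)$ and $PCE(k)$), I would first reduce to proving: a sentence $\phi$ is $PSC_{var}(k)$ modulo $V$ iff $\neg\phi$ is $PCE_{var}(k)$ modulo $V$.

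The central observation is that the notion of a distinguished $k$-crux and the notion underlying $PCE_{var}(k)$ both quantify over the \emph{same} object -- a $\mu$-saturated elementary extension $\mf{A}^+$ of a model $\mf{A}$ -- and the $PSC(k)$/$PCE(k)$ duality of Lemma~\ref{lemma:PSC(k)-PCE(k)-duality} already relates cruxes to covers \emph{within a single structure}. First I would recall that $\mf{A}\models\phi$ iff $\mf{A}\not\models\neg\phi$, so the models of $V\cup\{\phi\}$ are exactly the models of $V$ that are \emph{not} models of $V\cup\{\neg\phi\}$. Next I would apply Lemma~\ref{lemma:PSC(k)-PCE(k)-duality} \emph{inside} a fixed $\mu$-saturated structure $\mf{A}^+$: a tuple $\bar{a}$ is a $k$-crux of $\mf{A}^+$ w.r.t.\ $\{\phi\}$ modulo $V$ precisely when, dually, no $k$-ary cover of $\mf{A}^+$ by models of $V\cup\{\neg\phi\}$ exists that avoids $\phi$ holding. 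The key bridge is to reformulate Definition~\ref{defn:variants}(1) as: \emph{every} model $\mf{A}$ of $V\cup\{\phi\}$ has a tuple $\bar{a}$ that remains a $k$-crux in some $\mu$-saturated elementary extension $\mf{A}^+$; and Definition~\ref{defn:variants}(2) for $\neg\phi$ as: every model $\mf{A}$ of $V$ admits a $\mu$-saturated $\mf{A}^+$ such that any $k$-ary cover of $\mf{A}$ in $\mf{A}^+$ by models of $V\cup\{\neg\phi\}$ forces $\mf{A}\models\neg\phi$.

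I would then argue both directions by contraposition. For the forward direction, suppose $\neg\phi$ is not $PCE_{var}(k)$ modulo $V$; then there is a model $\mf{A}$ of $V$ such that for every $\mu$-saturated elementary extension $\mf{A}^+$, there is a $k$-ary cover $R$ of $\mf{A}$ in $\mf{A}^+$ by models of $V\cup\{\neg\phi\}$ with $\mf{A}\not\models\neg\phi$, i.e.\ $\mf{A}\models\phi$. The existence of such a cover by models of $\neg\phi$, together with Lemma~\ref{lemma:PSC(k)-PCE(k)-duality} applied in $\mf{A}^+$ (using $\mf{A}\preceq\mf{A}^+$ to transfer membership in the cover and the fact that $\mf{A}^+\models\phi$), shows that $\mf{A}^+$ has \emph{no} $k$-crux realized in it -- precisely witnessing that $\mf{A}$ (a model of $V\cup\{\phi\}$) has no distinguished $k$-crux, so $\phi$ is not $PSC_{var}(k)$ modulo $V$. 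The reverse direction is symmetric, building the needed cover from the failure of a distinguished crux. I expect the main obstacle to be the careful bookkeeping of \emph{which} $\mu$-saturated extension is being used: Definition~\ref{defn:variants}(1) asserts a crux that survives into \emph{some} saturated extension (existential over $\mf{A}^+$), whereas Definition~\ref{defn:variants}(2) provides \emph{some} saturated extension for every model (also existential), so the two existentials must be matched up correctly, invoking Proposition~\ref{prop:saturation-results}(\ref{prop:saturation-results:exis-of-sat-elem-ext}) to guarantee such extensions exist and Proposition~\ref{prop:saturation-results}(\ref{prop:saturation-results:expansion-preserves-saturation}) to preserve saturation under naming the crux elements. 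Verifying that these existential choices can be made coherently -- rather than getting a mismatch of ``for some $\mf{A}^+$'' on one side against ``for all $\mf{A}^+$'' on the other -- is the delicate point; I anticipate the definitions have been engineered (this is the whole purpose of baking Hypothesis~\ref{hypothesis:H} into $PSC_{var}(k)$) so that the quantifier structure aligns and the duality goes through straightforwardly once the single-structure duality of Lemma~\ref{lemma:PSC(k)-PCE(k)-duality} is in hand.
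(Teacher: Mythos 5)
Your proposal is correct, and in fact the paper offers no proof of this lemma at all (it is dispatched with ``The following duality is easy to see''), so there is no official argument to diverge from; your definition-unwinding is exactly the intended one. The quantifier alignment you flag as the delicate point does work out: both $PSC_{var}(k)$ for $\phi$ and $PCE_{var}(k)$ for $\neg\phi$ place an existential quantifier over the saturated extension $\mf{A}^+$, and after contraposition both become universals over $\mf{A}^+$, so the two existential choices never need to be matched against a universal. Concretely, for a model $\mf{A}$ of $V\cup\{\phi\}$ and a fixed $\mf{A}^+$, ``there is no $k$-ary cover of $\mf{A}$ in $\mf{A}^+$ by models of $V\cup\{\neg\phi\}$'' unwinds to ``some $k$-tuple of $\mf{A}$ is a $k$-crux of $\mf{A}^+$ w.r.t.\ $\phi$ modulo $V$'', which is precisely the distinguished-crux condition; for models of $V\cup\{\neg\phi\}$ the inner implication in the $PCE_{var}(k)$ definition is vacuous and Proposition~\ref{prop:saturation-results}(\ref{prop:saturation-results:exis-of-sat-elem-ext}) supplies the required $\mf{A}^+$.

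Two small imprecisions worth tightening. First, the cover $R$ in Definition~\ref{defn:k-ary-cover-inside-an-elem-ext} only covers the $k$-tuples of $\mf{A}$, not of $\mf{A}^+$, so what your forward direction actually establishes is that no $k$-tuple \emph{of $\mf{A}$} is a $k$-crux of $\mf{A}^+$ — not that ``$\mf{A}^+$ has no $k$-crux realized in it''; fortunately the former is exactly what the distinguished-crux definition quantifies over, so nothing breaks. Second, Lemma~\ref{lemma:PSC(k)-PCE(k)-duality} is a statement about classes and cannot literally be ``applied inside a fixed $\mf{A}^+$''; what you are really using is the elementary pointwise trade-off at the heart of its proof, namely that a structure admits a $k$-crux iff it admits no $k$-ary cover by non-models, and it is cleaner to state and use that fact directly.
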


Towards the proof of Theorem~\ref{theorem:conditional-refinement}, we
first show the following.
\begin{lemma}\label{lemma:relations-between-old-and-new-notions}
Given a theory $V$, each of the following holds.
\begin{enumerate}[nosep]
\item A formula $\phi(\bar{x})$ is $PSC(k)$ modulo $V$ iff
  $\phi(\bar{x})$ is $PSC_{var}(k)$ modulo
  $V$.\label{lemma:relations-between-old-and-new-notions-part-1}
\item A theory $T(\bar{x})$ is $PCE(k)$ modulo $V$ iff $T(\bar{x})$ is
  $PCE_{var}(k)$ modulo
  $V$.\label{lemma:relations-between-old-and-new-notions-part-2}
\end{enumerate}
\end{lemma}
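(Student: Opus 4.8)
The plan is to establish part~(\ref{lemma:relations-between-old-and-new-notions-part-2}), for theories, first, and then to obtain part~(\ref{lemma:relations-between-old-and-new-notions-part-1}), for formulae, as a formal consequence of it together with the two duality lemmas. Throughout I may assume $V$ and $V \cup T$ are consistent, the degenerate cases being vacuous from the definitions: since covers are required to be non-empty (Definition~\ref{defn:k-ary-covered-ext}), an inconsistent $V \cup T$ makes both $PCE(k)$ and $PCE_{var}(k)$ hold trivially.

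For part~(\ref{lemma:relations-between-old-and-new-notions-part-2}), the implication ``$PCE_{var}(k) \Rightarrow PCE(k)$'' is the easy direction. Suppose $T$ is $PCE_{var}(k)$ modulo $V$, let $R$ be a collection of models of $V \cup T$, and let $\mf{A} \models V$ be a $k$-ary covered extension of $R$; I must show $\mf{A} \models T$. Each $\mf{B} \in R$ satisfies $\mf{B} \subseteq \mf{A}$, and by Definition~\ref{defn:k-ary-covered-ext} every subset of $\univ{\mf{A}}$ of size at most $k$ lies in some member of $R$. Taking the $\mu$-saturated elementary extension $\mf{A}^+$ of $\mf{A}$ supplied by $PCE_{var}(k)$, these two facts say precisely that $R$ is a $k$-ary cover of $\mf{A}$ in $\mf{A}^+$ in the sense of Definition~\ref{defn:k-ary-cover-inside-an-elem-ext}; since its members model $V \cup T$, the defining property of $PCE_{var}(k)$ gives $\mf{A} \models T$. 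For the converse ``$PCE(k) \Rightarrow PCE_{var}(k)$'', let $\Gamma$ be the set of $\forall^k \exists^*$ consequences of $T$ modulo $V$. Since $T$ is $PCE(k)$ modulo $V$, Theorem~\ref{theorem:ext-chars}(\ref{theorem:char-of-PCE(k)-theories}) gives that $T$ is equivalent modulo $V$ to a theory of $\forall^k \exists^*$ formulae, whence $T$ is equivalent modulo $V$ to $\Gamma$. Now fix a model $\mf{A}$ of $V$ and choose, via Proposition~\ref{prop:saturation-results}(\ref{prop:saturation-results:exis-of-sat-elem-ext}), a $\mu$-saturated elementary extension $\mf{A}^+$ of $\mf{A}$. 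If $R$ is any $k$-ary cover of $\mf{A}$ in $\mf{A}^+$ all of whose members model $V \cup T$, then the `If' direction of Lemma~\ref{lemma:exis-of-k-ary-cover-in-an-elem-ext-satisfying-T} yields $\mf{A} \models \Gamma$, and hence $\mf{A} \models T$ since $\mf{A} \models V$ and $T$ is equivalent to $\Gamma$ modulo $V$. This is exactly the condition required for $T$ to be $PCE_{var}(k)$ modulo $V$.

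Finally, part~(\ref{lemma:relations-between-old-and-new-notions-part-1}) follows by dualizing. After replacing the free variables of $\phi(\bar{x})$ by fresh constants one may treat $\phi$ as a sentence and run the chain of equivalences: $\phi$ is $PSC(k)$ modulo $V$ iff $\neg \phi$ is $PCE(k)$ modulo $V$ (Lemma~\ref{lemma:PSC(k)-PCE(k)-duality}), iff $\neg \phi$ is $PCE_{var}(k)$ modulo $V$ (part~(\ref{lemma:relations-between-old-and-new-notions-part-2}) applied to the singleton theory $\{\neg \phi\}$), iff $\phi$ is $PSC_{var}(k)$ modulo $V$ (Lemma~\ref{lemma:PSC-var(k)-PCE-var(k)-duality}); reading off the endpoints gives the claim. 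I expect the main obstacle to be the forward direction of part~(\ref{lemma:relations-between-old-and-new-notions-part-2}): the hypothesis of $PCE_{var}(k)$ speaks of covers whose members are substructures of the \emph{elementary extension} $\mf{A}^+$ rather than of $\mf{A}$ itself, so the bare definition of $PCE(k)$ cannot be invoked directly, and the reconciliation must be routed through the syntactic characterization by $\forall^k \exists^*$ consequences together with the key covering Lemma~\ref{lemma:exis-of-k-ary-cover-in-an-elem-ext-satisfying-T}. The remaining steps are bookkeeping with the definitions and the duality lemmas.
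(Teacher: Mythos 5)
Your proof is correct and follows essentially the same route as the paper: the nontrivial direction of part~(\ref{lemma:relations-between-old-and-new-notions-part-2}) goes through Theorem~\ref{theorem:ext-chars}(\ref{theorem:char-of-PCE(k)-theories}) and the covering argument of Lemma~\ref{lemma:exis-of-k-ary-cover-in-an-elem-ext-satisfying-T}, and part~(\ref{lemma:relations-between-old-and-new-notions-part-1}) is obtained, exactly as in the paper, by chaining Lemmas~\ref{lemma:PSC(k)-PCE(k)-duality} and~\ref{lemma:PSC-var(k)-PCE-var(k)-duality} through part~(\ref{lemma:relations-between-old-and-new-notions-part-2}). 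The only (harmless) divergence is that for the direction $PCE_{var}(k) \Rightarrow PCE(k)$ you argue directly that an ordinary $k$-ary cover of $\mf{A}$ is a $k$-ary cover of $\mf{A}$ in the witnessing $\mf{A}^+$, whereas the paper routes both directions through the intermediate equivalence with $\forall^k\exists^*$ axiomatizability; your shortcut is valid and slightly more economical.
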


\begin{proof}
We show below the following equivalence, call it ($\dagger$): A theory
$T(\bar{x})$ is $PCE_{var}(k)$ modulo $V$ iff $T(\bar{x})$ is
equivalent modulo $V$ to a theory of $\forall^k\exists^*$ formulae,
all of whose free variables are among $\bar{x}$.  Then part
(\ref{lemma:relations-between-old-and-new-notions-part-2}) of this
lemma follows from ($\dagger$) and
Theorem~\ref{theorem:ext-chars}(\ref{theorem:char-of-PCE(k)-theories}).
Part (\ref{lemma:relations-between-old-and-new-notions-part-1}) of the
lemma in turn follows from part
(\ref{lemma:relations-between-old-and-new-notions-part-2}) and the
dualities given by Lemma~\ref{lemma:PSC(k)-PCE(k)-duality} and
Lemma~\ref{lemma:PSC-var(k)-PCE-var(k)-duality}.

Given that the notion of `$k$-ary cover of $\mf{A}$ in $\mf{A}$' is
the same as the notion of `$k$-ary cover' as defined in
Definition~\ref{defn:k-ary-covered-ext}, we can prove the `Only if'
direction of ($\dagger$) in a manner identical to the proof of the
`Only if' direction of
Theorem~\ref{theorem:ext-chars}(\ref{theorem:char-of-PCE(k)-theories}). The
proof of the `If' direction of ($\dagger$) is also nearly the same as
that of the `If' direction of
Theorem~\ref{theorem:ext-chars}(\ref{theorem:char-of-PCE(k)-theories});
we present this proof below for completeness.  It suffices to give the
proof for theories without free variables.

Let $T$ be equivalent modulo $V$ to a theory of $\forall^k \exists^*$
sentences. Given a model $\mf{A}$ of $V$, let $\mf{A}^+$ be a
$\mu$-saturated elementary extension of $\mf{A}$, for some
$\mu \ge \omega$. Let $R$ be a collection of models of $V \cup T$ that
forms a $k$-ary cover of $\mf{A}$ in $\mf{A}^+$. We show that
$\mf{A} \models T$.  Consider $\varphi \in T$; let $\varphi =
\forall^k \bar{x} \psi(\bar{x})$ for a $\Sigma^0_1$ formula
$\psi(\bar{x})$, and let $\bar{a}$ be a $k$-tuple of $\mf{A}$. Since
$R$ is a $k$-ary cover of $\mf{A}$ in $\mf{A}^+$, there exists
$\mf{B}_{\bar{a}} \in R$ such that $\mf{B}_{\bar{a}}$ contains
$\bar{a}$. Since $\mf{B}_{\bar{a}} \models (V \cup T)$, we have
$\mf{B}_{\bar{a}} \models \varphi$ and hence $(\mf{B}_{\bar{a}},
\bar{a}) \models \psi(\bar{x})$. Since $\psi(\bar{x})$ is a
$\Sigma^0_1$ formula and $\mf{B}_{\bar{a}} \subseteq \mf{A}^+$, we
have $(\mf{A}^+, \bar{a}) \models \psi(\bar{x})$, whence $(\mf{A},
\bar{a}) \models \psi(\bar{x})$ since $\mf{A} \preceq \mf{A}^+$. Since
$\bar{a}$ is arbitrary, $\mf{A} \models \varphi$, and since $\varphi$
is an arbitrary sentence of $T$, we have $\mf{A} \models T$. 
\end{proof}

Motivated by Lemma~\ref{lemma:relations-between-old-and-new-notions},
we put forth the hypothesis below.

\begin{hypothesis}\label{hypothesis:H-reformulated}
If $V$ and $T(\bar{x})$ are theories, then $T(\bar{x})$ is $PSC(k)$
modulo $V$ iff $T(\bar{x})$ is $PSC_{var}(k)$ modulo $V$. 
\end{hypothesis}

It is easy to see that Hypothesis~\ref{hypothesis:H-reformulated} is
an equivalent reformulation of Hypothesis~\ref{hypothesis:H}.  

The following result is the essence of
Theorem~\ref{theorem:conditional-refinement}.

\begin{theorem}\label{theorem:char-of-PSC-var(k)}
Given theories $V$ and $T(\bar{x})$, suppose $T(\bar{x})$ is
$PSC_{var}(k)$ modulo $V$. Then $T(\bar{x})$ is equivalent modulo $V$
to a theory of $\Sigma^0_2$ formulae, all of whose free variables are
among $\bar{x}$, and all of which have $k$ existential quantifiers.
\end{theorem}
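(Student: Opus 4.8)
The plan is to follow the ``going up, then coming down'' route sketched in Section~\ref{section:intro-results-in-CMT}: first replace $T(\bar{x})$ by an equivalent (modulo $V$) sentence of an infinitary logic whose shape mirrors $\exists^k\forall^*$, and then translate that infinitary sentence back into an $\fo$ theory of genuine $\exists^k\forall^*$ sentences, exploiting that it defines an elementary class. First I would dispose of free variables in the usual way: introduce fresh constants $c_1,\dots,c_n$ for $\bar{x}$, let $T'$ be the $\fo(\tau_n)$ theory obtained by substitution, note that $T'$ is $PSC_{var}(k)$ modulo $V$, prove the statement for the sentence theory $T'$, and finally undo the substitution (each $\exists^k\forall^*$ $\tau_n$-sentence becomes, on replacing $c_i$ by $x_i$, a $\Sigma^0_2$ formula with free variables among $\bar{x}$ and exactly $k$ leading existential quantifiers). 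So it suffices to treat theories without free variables.

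\emph{Going up.} Call a $\Pi^0_1$-type $p(x_1,\dots,x_k)$ \emph{crux-determining} if for every model $\mf{D}$ of $V$ and every $k$-tuple $\bar{d}$ with $(\mf{D},\bar{d})\models p$ we have $\mf{D}\models T'$; this is the finite-$k$ reading of ``$p$ determines a crux'' used in Lemma~\ref{lemma:crux-determination-using-saturated-models}. Let $\mathcal{P}$ be the (set-sized) collection of all crux-determining $\Pi^0_1$-types and put
\[
\Phi \;=\; \exists x_1\cdots\exists x_k\ \bigvee_{p\in\mathcal{P}}\ \bigwedge_{\psi\in p}\psi(x_1,\dots,x_k),
\]
an $\mathcal{L}_{\infty\omega}$ sentence whose matrix is a positive combination of $\Pi^0_1$ (i.e.\ universal $\fo$) formulas. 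I claim $\Phi$ is equivalent modulo $V$ to $T'$. If $\mf{A}\models V\cup T'$, then by $PSC_{var}(k)$ (Definition~\ref{defn:variants}) $\mf{A}$ has a distinguished $k$-crux $\bar{a}$ (Definition~\ref{defn:distinguished-cruxes}), and the finite-$k$ form of Lemma~\ref{lemma:crux-determination-using-saturated-models} makes $\pitp{\mf{A}}{\bar{a}}$ crux-determining, so $\bar{a}$ witnesses $\Phi$. Conversely, if $\mf{A}\models V$ and some $\bar{a}$ realizes a $p\in\mathcal{P}$, then applying the definition of crux-determining with $\mf{D}=\mf{A}$ and $\bar{d}=\bar{a}$ yields $\mf{A}\models T'$. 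This is the content I would isolate as the ``going up'' lemma.

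\emph{Coming down.} Let $\Gamma$ be the set of all $\exists^k\forall^*$ $\fo$-sentences entailed by $V\cup\{\Phi\}$ (equivalently by $V\cup T'$); clearly $V\cup T'\vdash\Gamma$, and I must show $V\cup\Gamma\vdash T'$. Suppose $\mf{A}\models V\cup\Gamma$ but $\mf{A}\not\models T'$, and pass to an $\aleph_0$-saturated elementary extension $\mf{A}^{+}$ (which exists by Proposition~\ref{prop:saturation-results}); it still fails $\Phi$. Hence no $k$-tuple of $\mf{A}^{+}$ realizes any $p\in\mathcal{P}$, so for each $p$ saturation (for finite tuples, realizability equals consistency with $\foth(\mf{A}^{+})$) yields a \emph{finite} $p_0\subseteq p$ with $\mf{A}^{+}\models\forall\bar{x}\,\neg\bigwedge p_0$; write $\chi_p=\bigwedge p_0\in\Pi^0_1$ and $\sigma_p=\exists^k\bar{x}\,\chi_p$. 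Since $p_0\subseteq p$ we have $\bigwedge p\models\chi_p$, whence the matrix of $\Phi$ entails $\bigvee_{p}\chi_p$ and therefore $\Phi\models\bigvee_{p}\sigma_p$. Here elementariness is used decisively: because $\Phi$ is equivalent modulo $V$ to the $\fo$ theory $T'$, this becomes $V\cup T'\vdash\bigvee_{p}\sigma_p$, an entailment between honest $\fo$ sentences, so the compactness theorem (Theorem~\ref{theorem:compactness}) supplies a finite $F\subseteq\mathcal{P}$ with $V\cup T'\vdash\bigvee_{p\in F}\sigma_p$. Now a finite disjunction of $\Pi^0_1$ formulas over disjoint quantifier blocks is again $\Pi^0_1$ (as $\forall\bar{y}_1\alpha_1\vee\forall\bar{y}_2\alpha_2\equiv\forall\bar{y}_1\bar{y}_2(\alpha_1\vee\alpha_2)$), so $\bigvee_{p\in F}\chi_p$ is equivalent to a single $\Pi^0_1$ formula and $\sigma_F:=\exists^k\bar{x}\,\bigvee_{p\in F}\chi_p\equiv\bigvee_{p\in F}\sigma_p$ is a genuine $\exists^k\forall^*$ sentence lying in $\Gamma$. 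But $\mf{A}^{+}\models\forall\bar{x}\,\neg\chi_p$ for each $p\in F$, so $\mf{A}^{+}\models\neg\sigma_F$ and hence $\mf{A}\models\neg\sigma_F$, contradicting $\mf{A}\models\Gamma$. Thus $V\cup\Gamma\vdash T'$, and $T'$ is equivalent modulo $V$ to the $\exists^k\forall^*$ theory $\Gamma$.

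The main obstacle is precisely this ``coming down'' step, which is exactly where the theory case is harder than Theorem~\ref{theorem:glt(k)} for sentences: $\exists^k\forall^*$ sentences are not closed under conjunction, so one cannot merely collect finite conjunctions of consequences, and the infinitary detour is what keeps the $k$ existential witnesses in a single block while the universal matrix is allowed to grow. Two ingredients carry the step: (i) the elementariness of the class defined by $\Phi$, which converts an $\mathcal{L}_{\infty\omega}$ entailment into an $\fo$ entailment on which compactness bites, and (ii) the small but essential observation that a finite disjunction of universal formulas with disjoint variables is again universal, keeping the finite approximations inside the $\exists^k\forall^*$ fragment. I would still need to verify the finite-$k$ analogue of Lemma~\ref{lemma:crux-determination-using-saturated-models} and that $\aleph_0$-saturation realizes finite partial types over the empty parameter set, both routine adaptations of the $\lambda$-versions already established.
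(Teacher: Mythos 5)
Your proof is correct, and while the ``going up'' half coincides with the paper's (your $\Phi$ is, after distributing $\exists^k\bar{x}$ over the disjunction, exactly the sentence $\bigvee \exists^k\bar{x}\bigwedge \mathsf{tp}_{\Pi,\mf{A},\bar{a}}(\bar{x})$ of Lemma~\ref{lemma:inf-characterization-of-PSC-var(k)}, and the two facts you defer are precisely Lemma~\ref{lemma:char-of-distinguished-k-cruxes} and a routine saturation fact), your ``coming down'' is a genuinely different argument. The paper descends via a general syntactic machine: it defines finite approximations $\mc{A}(\Phi)$ for the whole class $\left[\bigvee\right]\left[\exists^*\bigwedge\right]^*\fo$, proves a generalized compactness theorem for these formulae by induction on nesting depth (Lemmas~\ref{lemma:compactness-direct-generalization}--\ref{lemma:compactness-generalization-in-terms-of-approximations}), and then establishes the ``compiler'' result Proposition~\ref{prop:char-of-inf-formulae-defining-elem-classes}, of which the present theorem is the one-level instance. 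You instead argue semantically and contrapositively: a model of $V\cup\Gamma$ failing $T'$ is replaced by an $\omega$-saturated elementary extension, saturation converts the omission of each crux-determining type into a finite subtype that is universally refuted, and ordinary first-order compactness (applied to $V\cup T'\cup\{\neg\sigma_p\}$, which is where the elementariness of $\Phi$ bites) extracts a single $\exists^k\forall^*$ consequence of $T'$ that the original model violates. What each approach buys: yours is shorter, avoids the induction on nesting depth entirely, and lands on the clean theory of \emph{all} $\exists^k\forall^*$ consequences of $V\cup T'$; the paper's pays for its extra length with a reusable general proposition (explicitly flagged as being of independent interest and as a template for the $\Sigma^0_n$ generalizations in Chapter~\ref{chapter:open-questions-classical-model-theory}), which your specialized saturation argument does not yield. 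The only points worth making explicit in a write-up are the degenerate cases ($V\cup T'$ unsatisfiable, or $F=\emptyset$), both of which are trivially absorbed since an unsatisfiable $\exists^k\forall^*$ sentence exists.
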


\begin{proof}[Proof of Theorem~\ref{theorem:conditional-refinement}]
Follows from the equivalence of Hypotheses~\ref{hypothesis:H}
and~\ref{hypothesis:H-reformulated}, and
Theorem~\ref{theorem:char-of-PSC-var(k)}.
\end{proof}

We devote the rest of this section to proving
Theorem~\ref{theorem:char-of-PSC-var(k)}.  We first introduce some
notation and terminology. These are adapted versions of similar
notation and terminology introduced in~\cite{keisler-sandwich}
and~\cite{keisler:finite-approx}.  Given a class $\mc{F}$ of formulae
and $k \ge 0$, denote by $\left[\exists^k \bigwedge\right]
\mc{F}$\sindex[symb]{$\left[\exists^k \bigwedge\right] \mc{F}$} the
class of infinitary formulae\sindex[term]{infinitary formulae}
$\Phi(\bar{x})$ of the form $\exists y_1 \ldots \exists y_k
\bigwedge_{i \in I} \psi_i(y_1, \ldots, y_k, \bar{x})$ where $I$ is an
index set (of arbitrary cardinality) and for each $i \in I$, $\psi_i$
is a formula of $\mc{F}$, whose free variables are among $y_1, \ldots,
y_k, \bar{x}$.  Let $\left[\exists^* \bigwedge\right] \mc{F} =
\bigcup_{k \ge 0} \left[\exists^k \bigwedge\right] \mc{F}$. Observe
that $\mc{F} \subseteq \left[\exists^* \bigwedge\right] \mc{F}$.  For
each $j \in \mathbb{N}$, let $\left[\exists^* \bigwedge\right]^j
\mc{F} =$ $\left[\exists^* \bigwedge\right]$ $\left[\exists^*
  \bigwedge\right]^{j-1} \mc{F}$, where $\left[\exists^*
  \bigwedge\right]^0 \mc{F} = \mc{F}$. Let $\left[\exists^*
  \bigwedge\right]^* \mc{F} = \bigcup_{j \ge 0} \left[\exists^*
  \bigwedge\right]^{j} \mc{F}$.  Finally, let $\left[\bigvee\right]
\mc{F}$\sindex[symb]{$\left[\bigvee\right] \mc{F}$} denote arbitrary
disjunctions of formulae of $\mc{F}$.  Observe that $\mc{F} \subseteq
\left[\bigvee\right] \mc{F}$.

Let $\Phi(\bar{x})$ be a formula of
$\left[\bigvee\right]\left[\exists^* \bigwedge\right]^* \text{FO}$,
where FO denotes as usual, the class of all first order formulae.  We
define below, the set $\mc{A}(\Phi)(\bar{x})$ of \emph{finite
  approximations}\sindex[term]{infinitary formulae!finite
  approximations of} of $\Phi(\bar{x})$. Let $\subseteq_f$ denote
`finite subset of'.

\begin{enumerate}[nosep]
\item If $\Phi(\bar{x}) \in \text{FO}$, then $\mc{A}(\Phi)(\bar{x}) =
  \{\Phi(\bar{x})\}$.
\item If $\Phi(\bar{x}) = \exists^k \bar{y} \bigwedge_{i \in I}
  \Psi_i(\bar{x}, \bar{y})$ for $k \ge 0$ and some index set $I$, then
  $\mc{A}(\Phi)(\bar{x}) =$ \linebreak $ \{ \exists^k \bar{y}$
  $\bigwedge_{i \in I_1} \gamma_i(\bar{x}, \bar{y})
  ~\mid~ \gamma_i(\bar{x}, \bar{y}) \in \mc{A}(\Psi_i)(\bar{x}, \bar{y}),~
  I_1 \subseteq_f I \}$.
\item If $\Phi(\bar{x}) = \bigvee_{i \in I} \Psi_i(\bar{x})$, then
  $\mc{A}(\Phi)(\bar{x}) = \{\bigvee_{i \in I_1} \gamma_i(\bar{x})
  \mid \gamma_i(\bar{x}) \in \mc{A}(\Psi_i)(\bar{x}),$ $I_1
  \subseteq_f I \}$.
\end{enumerate}

\vspace{5pt} Our proof of Theorem~\ref{theorem:char-of-PSC-var(k)}
is in two parts. The first part, namely the ``going up'' part as
alluded to in the beginning of this subsection, gives a
characterization of $PSC_{var}(k)$ theories in terms of the formulae
of $\left[\bigvee\right]\left[\exists^k \bigwedge\right]\Pi^0_1$,
where $\Pi^0_1$ is the usual class of all prenex FO formulae having
only universal quantifiers.

\begin{lemma}\label{lemma:inf-characterization-of-PSC-var(k)}
Let $V$ and $T(\bar{x})$ be given theories. Then $T(\bar{x})$ is
$PSC_{var}(k)$ modulo $V$ iff $T(\bar{x})$ is equivalent modulo $V$ to
a formula of $\left[\bigvee\right]\left[\exists^k
  \bigwedge\right]\Pi^0_1$, whose free variables are among $\bar{x}$.
\end{lemma}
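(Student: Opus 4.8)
The plan is to prove the two directions of the biconditional separately, reducing throughout to the case of sentences; the case of formulae with free variables $\bar{x}$ then follows by the standard device of replacing $\bar{x}$ by fresh constants and substituting back, exactly as in the earlier proofs, since $\left[\bigvee\right]\left[\exists^k \bigwedge\right]\Pi^0_1$ allows $\bar{x}$ to occur among the free variables of its $\Pi^0_1$ conjuncts. For orientation, note that a formula $\exists^k \bar{y}\,\bigwedge_{i}\psi_i(\bar{y})$ with each $\psi_i$ universal asserts the existence of a $k$-tuple realizing a prescribed $\Pi^0_1$-type, which is precisely the data carried by a (distinguished) $k$-crux.

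For the ``If'' direction, I would assume $T$ is equivalent modulo $V$ to $\Phi = \bigvee_{j}\exists^k \bar{y}\,\bigwedge_{i \in I_j}\psi_{j,i}(\bar{y})$ with each $\psi_{j,i}$ universal. First, to see that $T$ is $PSC(k)$ modulo $V$: given $\mf{A}\models V\cup T$, fix a disjunct $j$ and a witnessing $k$-tuple $\bar{a}$ with $\mf{A}\models\bigwedge_{i}\psi_{j,i}(\bar{a})$; since universal formulae are preserved under substructures, any $\mf{B}\subseteq\mf{A}$ containing $\bar{a}$ with $\mf{B}\models V$ still satisfies that disjunct, hence $\mf{B}\models\Phi$ and so $\mf{B}\models T$, making $\bar{a}$ a $k$-crux. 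For the distinguished-crux requirement I would pass to any $\mu$-saturated elementary extension $\mf{A}^+$ of $\mf{A}$; as $\mf{A}\preceq\mf{A}^+$, the same $\bar{a}$ satisfies $\bigwedge_i\psi_{j,i}$ in $\mf{A}^+$, and the identical preservation argument shows $\bar{a}$ is a $k$-crux of $\mf{A}^+$. Thus $\bar{a}$ is a distinguished $k$-crux and $T$ is $PSC_{var}(k)$ modulo $V$.

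The ``Only If'' direction carries the real content. Assuming $T$ is $PSC_{var}(k)$ modulo $V$, for each model $\mf{A}$ of $V\cup T$ I would choose a distinguished $k$-crux $\bar{a}$ and form its $\Pi^0_1$-type $p_{\mf{A}}=\pitp{\mf{A}}{\bar{a}}(\bar{y})$. The key step is a finite-$k$ analogue of Lemma~\ref{lemma:crux-determination-using-saturated-models} (which is stated only for infinite $\lambda$): that $p_{\mf{A}}$ \emph{determines a $k$-crux}, i.e.\ any $\mf{D}\models V$ carrying a $k$-tuple $\bar{d}$ that realizes $p_{\mf{A}}$ must satisfy $T$. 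I would prove this as in the ``If'' half of that lemma: by definition of distinguished crux there is a $\mu$-saturated elementary extension $\mf{A}^+$ of $\mf{A}$ in which $\bar{a}$ is a $k$-crux and with $\pitp{\mf{A}^+}{\bar{a}}=\pitp{\mf{A}}{\bar{a}}$; applying the downward L\"owenheim--Skolem theorem gives $\mf{D}_1\preceq\mf{D}$ of size at most $\mu$ containing $\bar{d}$, and since $\bar{d}$ realizes the $\Pi^0_1$-type of $\bar{a}$, every existential sentence true in $(\mf{D}_1,\bar{d})$ holds in $(\mf{A}^+,\bar{a})$. Then Proposition~\ref{prop:saturation-results}(\ref{prop:saturation-results:isomorphic-embedding})---with $(\mf{A}^+,\bar{a})$ being $\mu$-saturated by Proposition~\ref{prop:saturation-results}(\ref{prop:saturation-results:expansion-preserves-saturation})---embeds $(\mf{D}_1,\bar{d})$ into $(\mf{A}^+,\bar{a})$ as a substructure containing $\bar{a}$ that models $V$; the crux property of $\bar{a}$ forces this substructure to model $T$, whence $\mf{D}_1\models T$ and so $\mf{D}\models T$. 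I would then set $\Phi=\bigvee_{\mf{A}}\exists^k \bar{y}\,\bigwedge_{\psi\in p_{\mf{A}}}\psi(\bar{y})$, the disjunction ranging over the set of distinct $\Pi^0_1$-types $p_{\mf{A}}$ arising from models $\mf{A}\models V\cup T$ (legitimate, as there are only set-many such types over the fixed vocabulary), so that $\Phi\in\left[\bigvee\right]\left[\exists^k \bigwedge\right]\Pi^0_1$. Equivalence modulo $V$ is then immediate: any $\mf{D}\models V\cup T$ satisfies the disjunct $p_{\mf{D}}$ via its own distinguished crux, while any $\mf{D}\models V$ meeting some disjunct contains a tuple realizing the corresponding $p_{\mf{A}}$ and hence models $T$ by crux-determination.

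The main obstacle is exactly this crux-determination step. The definition of $PSC_{var}(k)$ via distinguished cruxes is engineered so that a suitably saturated elementary extension is available, and it is the saturation-plus-L\"owenheim--Skolem embedding argument that propagates mere realization of the $\Pi^0_1$-type into full satisfaction of $T$. Getting the saturation cardinal $\mu\ge\lambda$ and the direction of the existential-sentence transfer correct is the delicate point; by contrast the ``If'' direction and the assembly and verification of $\Phi$ are routine applications of preservation of universal formulae under substructures.
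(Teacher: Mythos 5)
Your proposal is correct and follows essentially the same route as the paper: the ``If'' direction via a $\mu$-saturated elementary extension and preservation of universal formulae under substructures, and the ``Only If'' direction via the disjunction of existentially quantified $\Pi^0_1$-types of distinguished $k$-cruxes, with the crux-determination step being exactly the paper's Lemma~\ref{lemma:char-of-distinguished-k-cruxes} (whose proof the paper leaves as ``similar to Lemma~\ref{lemma:crux-determination-using-saturated-models}'' and which you correctly reconstruct, including the direction of the existential-sentence transfer and the saturation/embedding argument). The only cosmetic difference is that you index the disjunction by $\Pi^0_1$-types rather than by pairs of models and cruxes, which is semantically the same and handles the set-size issue a bit more explicitly.
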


The second part of the proof of
Theorem~\ref{theorem:char-of-PSC-var(k)}, namely the ``coming down''
part, consists of getting FO theories equivalent to the formulae of
$\left[\bigvee\right]\left[\exists^k \bigwedge\right]\Pi^0_1$,
whenever the latter define elementary classes of structures. In fact,
we show a more general result as we now describe. Given a theory $V$,
we say that a formula $\Phi(x_1, \ldots, x_k)$ of
$\left[\bigvee\right]\left[\exists^* \bigwedge\right]^* \text{FO}$
(over a vocabulary say $\tau$) \emph{defines an elementary class
  modulo $V$} if the sentence (over the vocabulary $\tau_k$) obtained
by substituting fresh and distinct constants $c_1, \ldots, c_k$ for
the free occurrences of $x_1, \ldots, x_k$ in $\Phi(x_1, \ldots,
x_k)$, defines an elementary class (of $\tau_k$-structures) modulo
$V$.  The result below characterizes formulae of
$\left[\bigvee\right]\left[\exists^* \bigwedge \right]^*\text{FO}$
that define elementary classes, in terms of the finite approximations
of these formulae.

\begin{proposition}\label{prop:char-of-inf-formulae-defining-elem-classes}
Let $\Phi(\bar{x})$ be a formula of
$\left[\bigvee\right]\left[\exists^* \bigwedge \right]^*\text{FO}$ and
$V$ be a given theory. Then $\Phi(\bar{x})$ defines an elementary
class modulo $V$ iff $\Phi(\bar{x})$ is equivalent modulo $V$ to a
countable subset of $\mc{A}(\Phi)(\bar{x})$.
\end{proposition}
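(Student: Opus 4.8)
The plan is to reduce to sentences, dispatch one direction trivially, and attack the other via a two-layer compactness argument in which the elementarity hypothesis is spent only to tame the outermost (arbitrary) disjunction.

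First I would substitute fresh distinct constants $c_1,\ldots,c_k$ for the free variables $\bar{x}$, so that it suffices to prove the statement for a \emph{sentence} $\Phi$ of $\left[\bigvee\right]\left[\exists^*\bigwedge\right]^*\fo$ over the finite vocabulary $\tau_k$; note $\mc{A}(\cdot)$ commutes with this substitution. The direction ``$\Leftarrow$'' is immediate: every member of $\mc{A}(\Phi)$ is an ordinary $\fo$ formula, so a countable $S\subseteq\mc{A}(\Phi)$ to which $\Phi$ is equivalent modulo $V$ is itself an $\fo$ theory, whence $\Phi$ defines an elementary class modulo $V$. For ``$\Rightarrow$'' I would write $\Phi=\bigvee_{i\in I}\Theta_i$ with each $\Theta_i\in\left[\exists^*\bigwedge\right]^*\fo$ \emph{disjunction-free} (the operator $\left[\bigvee\right]$ is applied only once, at the top), and fix an $\fo$ theory $U$ with $\Phi\equiv_V U$; since $\tau_k$ is finite, $\fo(\tau_k)$ is countable, so $U$ may be taken countable.

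The workhorse is an ``approximation-compactness'' lemma: for any family $\{\Theta_i\mid i\in I\}$ of disjunction-free formulae of $\left[\exists^*\bigwedge\right]^*\fo$ and any $\fo$ sentence $\theta$, if $V\cup\{\Theta_i\mid i\in I\}\vdash\theta$ then there are a finite $I_1\subseteq_f I$ and approximations $\delta_i\in\mc{A}(\Theta_i)$ with $V\cup\{\delta_i\mid i\in I_1\}\vdash\theta$. I would prove this by induction on the maximal construction depth of the $\Theta_i$: the base case (all $\Theta_i$ in $\fo$) is plain compactness; in the inductive step I replace each $\Theta_i=\exists^{k_i}\bar{y}_i\bigwedge_{j}\Xi_{ij}$ of positive depth by its conjuncts $\Xi_{ij}(\bar{c}_i)$ for fresh distinct constant tuples $\bar{c}_i$ (using $\exists$-elimination, valid since $\theta$ omits these constants, and leaving the $\fo$ disjuncts untouched), apply the induction hypothesis to the now strictly shallower family $\{\Xi_{ij}\}$, and re-abstract the constants by $\exists$-introduction to obtain $\delta_i=\exists^{k_i}\bar{y}_i\bigwedge_{j\in J_i'}\epsilon_{ij}\in\mc{A}(\Theta_i)$. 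Alongside this I record the monotonicity fact that a disjunction-free $\Theta$ entails \emph{every} member of $\mc{A}(\Theta)$ (induction: drop conjuncts and use the inductive entailments).

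With these in hand I would prove the key lemma: for each $\fo$ consequence $\theta$ of $V\cup\{\Phi\}$ there is $\gamma\in\mc{A}(\Phi)$ with $V\cup\{\Phi\}\vdash\gamma$ and $V\cup\{\gamma\}\vdash\theta$. From $V\cup\{\Phi\}\vdash\theta$ one gets $V\cup\{\Theta_i\}\vdash\theta$ for every $i$, so the singleton case of the lemma yields $\delta_i\in\mc{A}(\Theta_i)$ with $V\cup\{\delta_i\}\vdash\theta$, while monotonicity gives $\Theta_i\vdash\delta_i$. Now elementarity enters: any model of $V\cup U$ satisfies $\Phi$, hence some $\Theta_j$, hence $\delta_j$, so $V\cup U\cup\{\neg\delta_i\mid i\in I\}$ is inconsistent; all formulae being $\fo$, compactness supplies a finite $I_1$ with $V\cup U\vdash\bigvee_{i\in I_1}\delta_i$, and as $U\equiv_V\Phi$ this reads $V\cup\{\Phi\}\vdash\gamma$ for $\gamma:=\bigvee_{i\in I_1}\delta_i\in\mc{A}(\Phi)$; moreover $V\cup\{\gamma\}\vdash\theta$ since each $\delta_i$ forces $\theta$. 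Applying this to every $\theta\in U$ and collecting the resulting $\gamma_\theta$ gives a countable $S\subseteq\mc{A}(\Phi)$ with $\Phi\vdash_V S$ (each $\gamma_\theta$ is a consequence) and $S\vdash_V U\vdash_V\Phi$, i.e.\ $\Phi\equiv_V S$.

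The main obstacle is the outermost arbitrary disjunction. Finite approximations inside an $\exists\bigwedge$-block are \emph{weaker} than (hence consequences of) the formula, whereas a finite sub-disjunction is \emph{stronger} than the full disjunction and is generally \emph{not} a consequence of $\Phi$; thus neither ``$\Phi\equiv\bigwedge\mc{A}(\Phi)$'' nor ``$\Phi\equiv\bigvee\mc{A}(\Phi)$'' holds, and one cannot simply read off an equivalent theory. The example $\Phi=\bigvee_{n}\phi_n$, where $\phi_n$ asserts that there are exactly $n$ elements, defines the non-elementary class of finite structures yet has no nontrivial consequence-approximation; this shows some hypothesis is unavoidable and pinpoints exactly where elementarity is spent, namely in collapsing the arbitrary disjunction to a finite consequence-approximation via compactness applied to $\{\neg\delta_i\}$. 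The secondary technical care is the depth induction in the approximation-compactness lemma, where the bookkeeping of fresh constants under alternating $\exists$-elimination and $\exists$-introduction must be handled cleanly.
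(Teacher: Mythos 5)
Your argument is correct, and its two load-bearing ingredients coincide with the paper's: a compactness principle for the disjunction-free fragment $\left[\exists^* \bigwedge\right]^*\text{FO}$ stated via finite approximations (your ``approximation-compactness'' lemma is the deduction-theorem form of Lemma~\ref{lemma:compactness-generalization-in-terms-of-approximations}, provable by the same depth induction and fresh-constant bookkeeping as Lemmas~\ref{lemma:compactness-direct-generalization} and~\ref{lemma:compactness-generalization-in-terms-of-approximations}), together with elementarity plus ordinary FO compactness to finitize the outermost disjunction (the inconsistency of $V \cup U \cup \{\neg\delta_i \mid i \in I\}$ is exactly the paper's claim $(\ddagger)$ applied to the axiomatization $U$). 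The organization, however, is genuinely different. The paper first derives the normal form $\Phi \leftrightarrow \bigwedge_{(\gamma_i)} \bigvee_{i \in I} \gamma_i$ (Eqs.~\ref{eq:1}--\ref{eq:2}), using the satisfiability form of the compactness lemma together with the theory $\foth(\mf{A})$ of a model to transfer truth of $\Phi$ across elementary equivalence, and then prunes each infinite disjunct to a finite one, extracting countability from the countability of $\fo$ sentences over a finite vocabulary. You instead fix a countable FO axiomatization $U$ of the class and, for each $\theta \in U$, manufacture a single interpolant $\gamma_\theta \in \mc{A}(\Phi)$ with $V \cup \{\Phi\} \vdash \gamma_\theta$ and $V \cup \{\gamma_\theta\} \vdash \theta$, so that countability of the resulting subset comes for free from the countability of $U$. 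Your route avoids the explicit distributivity step and the detour through $\foth(\mf{A})$, and isolates more sharply where elementarity is spent (your finite-spectrum example makes this vivid); the paper's route yields the slightly stronger explicit equivalence of $\Phi$ with the full conjunction of disjunctions of approximations, which is more information than the bare statement requires.
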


The above results prove Theorem~\ref{theorem:char-of-PSC-var(k)}
as follows.

\begin{proof}[Proof of Theorem~\ref{theorem:char-of-PSC-var(k)}]
For any formula $\Phi(\bar{x})$ of
$\left[\bigvee\right]\left[\exists^k \bigwedge \right]\Pi^0_1$, each
formula of the set $\mc{A}(\Phi)(\bar{x})$ can be seen to be
equivalent to an $\exists^k \forall^*$ formula whose free variables
are among $\bar{x}$. The result then follows from
Lemma~\ref{lemma:inf-characterization-of-PSC-var(k)} and
Proposition~\ref{prop:char-of-inf-formulae-defining-elem-classes}.
\end{proof}

We now prove Lemma~\ref{lemma:inf-characterization-of-PSC-var(k)} and
Proposition~\ref{prop:char-of-inf-formulae-defining-elem-classes}. We
observe that it suffices to prove these results only for
theories/formulae without free variables.

The proof of Lemma~\ref{lemma:inf-characterization-of-PSC-var(k)}
requires the following result that characterizes when a $k$-crux of a
model of a $PSC(k)$ theory is a distinguished $k$-crux of the
model. To state this result, we define the notion of `the
$\Pi^0_1$-type of a $k$-tuple determining a $k$-crux', analogously to
the notion of the $\Pi^0_1$-type of a tuple of length $< \lambda$
determining a $\lambda$-crux, that was introduced just before
Lemma~\ref{lemma:crux-determination-using-saturated-models}. Formally,
given theories $V$ and $T$ such that $T$ is $PSC(k)$ modulo $V$, and
given a structure $\mf{A}$ and a $k$-tuple $\bar{a}$ of $\mf{A}$, we
say $\mathsf{tp}_{\Pi, \mf{A}, \bar{a}}(\bar{x})$ \emph{determines a
  $k$-crux w.r.t. $T$ modulo $V$} 
\sindex[term]{type determining a crux} if it is the case that given a
model $\mf{D}$ of $V$ and a $k$-tuple $\bar{d}$ of $\mf{D}$, if
$(\mf{D}, \bar{d}) \models \mathsf{tp}_{\Pi, \mf{A},
  \bar{a}}(\bar{x})$, then $\mf{D} \models T$. Since universal
formulae are preserved under substructures, it follows that for
$\mf{D}$ as just mentioned, the elements of $\bar{d}$ form a $k$-crux
of $\mf{D}$ w.r.t. $T$ modulo $V$.

\begin{lemma}[Characterizing distinguished $k$-cruxes]\label{lemma:char-of-distinguished-k-cruxes}
Let $V$ and $T$ be theories such that $T$ is $PSC(k)$ modulo $V$. Let
$\mf{A}$ be a model of $V \cup T$ and $\bar{a}$ be a $k$-tuple of
elements of $\mf{A}$.\linebreak Then $\bar{a}$ is a distinguished $k$-crux of
$\mf{A}$ w.r.t. $T$ modulo $V$ iff
$\mathsf{tp}_{\Pi, \mf{A}, \bar{a}}(\bar{x})$ determines a $k$-crux\linebreak
w.r.t. $T$ modulo $V$.
\end{lemma}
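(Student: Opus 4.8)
The plan is to prove this as the finite-$k$ counterpart of Lemma~\ref{lemma:crux-determination-using-saturated-models}, whose proof I would follow almost verbatim, replacing tuples of length $<\lambda$ by $k$-tuples, $\lambda$-saturation by $\mu$-saturation for an arbitrary $\mu\ge\omega$, and the size bound $|\mf{D}_1|\le\lambda$ by $|\mf{D}_1|\le\aleph_0$. Observe first that, because $\mf{A}\models V\cup T$ is part of the hypothesis, the assertion ``$\bar{a}$ is a distinguished $k$-crux of $\mf{A}$'' is, by Definition~\ref{defn:distinguished-cruxes}, exactly the clause ``for some $\mu\ge\omega$ there is a $\mu$-saturated elementary extension $\mf{A}^+$ of $\mf{A}$ in which $\bar{a}$ is a $k$-crux''. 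Thus the two directions to establish are (determination $\Rightarrow$ distinguished) and (distinguished $\Rightarrow$ determination).

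For (determination $\Rightarrow$ distinguished), I would pick, using Proposition~\ref{prop:saturation-results}(\ref{prop:saturation-results:exis-of-sat-elem-ext}), a $\mu$-saturated elementary extension $\mf{A}^+\succeq\mf{A}$ with $\mu\ge\omega$; since $\mf{A}\models V\cup T$, also $\mf{A}^+\models V\cup T$, and $\mathsf{tp}_{\Pi,\mf{A}^+,\bar{a}}(\bar{x})=\mathsf{tp}_{\Pi,\mf{A},\bar{a}}(\bar{x})$ because $\mf{A}\preceq\mf{A}^+$. To see $\bar{a}$ is a $k$-crux of $\mf{A}^+$, take any substructure $\mf{C}\subseteq\mf{A}^+$ that contains $\bar{a}$ and models $V$; since $\Pi^0_1$ formulae are preserved under substructures, $(\mf{C},\bar{a})\models\mathsf{tp}_{\Pi,\mf{A}^+,\bar{a}}(\bar{x})=\mathsf{tp}_{\Pi,\mf{A},\bar{a}}(\bar{x})$, so determination yields $\mf{C}\models T$. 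Hence $\bar{a}$ is a $k$-crux of $\mf{A}^+$, i.e.\ a distinguished $k$-crux of $\mf{A}$.

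For the converse (distinguished $\Rightarrow$ determination), let $\mf{A}^+$ be the witnessing $\mu$-saturated elementary extension in which $\bar{a}$ is a $k$-crux, and suppose $\mf{D}\models V$ carries a $k$-tuple $\bar{d}$ with $(\mf{D},\bar{d})\models\mathsf{tp}_{\Pi,\mf{A},\bar{a}}(\bar{x})$. By the downward L\"owenheim--Skolem theorem (Theorem~\ref{theorem:DLS}) I would pass to an elementary substructure $\mf{D}_1\preceq\mf{D}$ containing $\bar{d}$ with $|\mf{D}_1|\le\aleph_0$; then $\mf{D}_1\models V$ and $(\mf{D}_1,\bar{d})\models\mathsf{tp}_{\Pi,\mf{A},\bar{a}}(\bar{x})$. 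Using $\mathsf{tp}_{\Pi,\mf{A}^+,\bar{a}}=\mathsf{tp}_{\Pi,\mf{A},\bar{a}}$ together with the fact that $\neg\sigma$ is $\Pi^0_1$ whenever $\sigma$ is $\Sigma^0_1$, a short comparison of types shows that every existential sentence of $\fo(\tau_k)$ true in $(\mf{D}_1,\bar{d})$ is true in $(\mf{A}^+,\bar{a})$. Since $k<\omega\le\mu$, the expansion $(\mf{A}^+,\bar{a})$ is still $\mu$-saturated by Proposition~\ref{prop:saturation-results}(\ref{prop:saturation-results:expansion-preserves-saturation}), and $|(\mf{D}_1,\bar{d})|\le\aleph_0\le\mu$; hence Proposition~\ref{prop:saturation-results}(\ref{prop:saturation-results:isomorphic-embedding}) provides an embedding $f\colon(\mf{D}_1,\bar{d})\to(\mf{A}^+,\bar{a})$. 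Its image is a substructure $(\mf{A}^+_1,\bar{a})$ of $\mf{A}^+$ containing $\bar{a}$ with $\mf{A}^+_1\cong\mf{D}_1\models V$; as $\bar{a}$ is a $k$-crux of $\mf{A}^+$, we get $\mf{A}^+_1\models T$, so $\mf{D}_1\models T$ and therefore $\mf{D}\models T$ (because $\mf{D}_1\preceq\mf{D}$). This establishes determination.

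The routine parts are the two type-comparison verifications; the one genuine point to get right is the size/saturation bookkeeping in the converse, namely that reducing $\mf{D}$ to a countable $\mf{D}_1$ via downward L\"owenheim--Skolem lets a single, possibly small, saturation level $\mu\ge\omega$ (rather than a $\lambda$ that must grow with $|\mf{D}|$) suffice for applying the embedding result of Proposition~\ref{prop:saturation-results}(\ref{prop:saturation-results:isomorphic-embedding}). This is precisely where the finiteness of $k$, hence of the crux, is used, and it is the reason the distinguished-crux notion may quantify over an arbitrary $\mu\ge\omega$.
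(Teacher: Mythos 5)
Your proof is correct and takes essentially the same route as the paper, which itself proves this lemma by declaring it ``similar to the proof of Lemma~\ref{lemma:crux-determination-using-saturated-models}''; you have simply carried out that adaptation in full, with the right bookkeeping (countable $\mf{D}_1$ via downward L\"owenheim--Skolem so that any $\mu\ge\omega$ suffices for the saturation-based embedding). No gaps.
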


\begin{proof}
Similar to the proof of Lemma~\ref{lemma:crux-determination-using-saturated-models}
\end{proof}

\begin{proof}[Proof of Lemma~\ref{lemma:inf-characterization-of-PSC-var(k)}]
\underline{If:} Let $T$ be equivalent modulo $V$ to the sentence $\Phi
=$ $ \bigvee_{i \in I} \exists^k \bar{y}_i \bigwedge Y_i(\bar{y}_i)$,
where $I$ is an index set and for each $i \in I$, $Y_i$ is a set of
$\Pi^0_1$ formulae, all of whose free variables are among
$\bar{y}_i$. Then given a model $\mf{A}$ of $V \cup T$, there exist $i
\in I$ and $\bar{a}$ in $\mf{A}$ such that $(\mf{A}, \bar{a}) \models
\bigwedge Y_i(\bar{y}_i)$.  Let $\mf{A}^+$ be a $\mu$-saturated
elementary extension of $\mf{A}$, for some $\mu \ge \omega$. Then
$(\mf{A}^+, \bar{a}) \models \bigwedge Y_i(\bar{y}_i)$. Whence for
each $\mf{B} \subseteq \mf{A}^+$ such that $\mf{B}$ contains
$\bar{a}$, $(\mf{B}, \bar{a}) \models \bigwedge Y_i(\bar{y}_i)$, and
hence $\mf{B} \models \Phi$. Since $\Phi$ is equivalent to $T$ modulo
$V$, we have $\bar{a}$ as a distinguished $k$-crux of $\mf{A}$.

\underline{Only If}: Suppose $T$ is $PSC_{var}(k)$ modulo $V$.  Given
a model $\mf{A}$ of $V \cup T$, let\linebreak
$\mathsf{Dist\text{-}k\text{-}cruxes}(\mf{A})$ be the (non-empty) set
of all distinguished $k$-cruxes of $\mf{A}$.  Consider the sentence
$\Phi = \bigvee_{\mf{A} \models V\cup T,\, \bar{a} \in
  \mathsf{Dist\text{-}k\text{-}cruxes}(\mf{A})} \exists^k \bar{x}
\bigwedge \mathsf{tp}_{\Pi, \mf{A}, \bar{a}}(\bar{x})$. 
We show that $T$ is equivalent to $\Phi$ modulo $V$. That $T$ implies
$\Phi$ modulo $V$ is obvious from the definition of $\Phi$. Towards
the converse, suppose $\mf{B} \models \{\Phi\} \cup V$. Then for some
model $\mf{A}$ of $V \cup T$, some distinguished $k$-crux $\bar{a}$ of
$\mf{A}$, and for some $k$-tuple $\bar{b}$ of $\mf{B}$, we have
$(\mf{B}, \bar{b}) \models$
$\mathsf{tp}_{\Pi, \mf{A}, \bar{a}}(\bar{x})$. By
Lemma~\ref{lemma:char-of-distinguished-k-cruxes},
$\mathsf{tp}_{\Pi, \mf{A}, \bar{a}}(\bar{x})$ determines a $k$-crux
w.r.t. $T$ modulo $V$, whence $\mf{B} \models T$.
\end{proof}

We now turn to proving
Proposition~\ref{prop:char-of-inf-formulae-defining-elem-classes}. Our
proof of
Proposition~\ref{prop:char-of-inf-formulae-defining-elem-classes}
crucially uses our compactness result for formulae of
$\left[\exists^* \bigwedge\right]^* \text{FO}$, that we state now.

\begin{lemma}\label{lemma:compactness-generalization-in-terms-of-approximations}
Let $\Phi(\bar{x})$ be a formula of
$\left[\exists^* \bigwedge\right]^* \text{FO}$.  If every formula of
$\mc{A}(\Phi)(\bar{x})$ is satisfiable modulo a theory $V$, then
$\Phi(\bar{x})$ is satisfiable modulo $V$.
\end{lemma}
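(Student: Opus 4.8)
The plan is to reduce this infinitary compactness statement to the ordinary (finitary) compactness theorem (Theorem~\ref{theorem:compactness}) by \emph{flattening} $\Phi$ into an $\fo$ theory over an expanded vocabulary. Since a formula of $\left[\exists^* \bigwedge\right]^* \fo$ has, above its $\fo$ leaves, only $\exists$-blocks and (arbitrary) conjunctions — and no universal quantifiers at the level of this tree structure — every existential block can be witnessed by a \emph{constant} rather than a Skolem function: the witnesses depend only on which conjuncts have been chosen above them, not on any universally quantified variable. As the paper notes just below, it suffices to treat sentences, so I assume $\bar{x}$ is empty.

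First I would associate with $\Phi$ its syntactic parse tree $\mathcal{T}$: each internal node $\nu$ carries an existential block $\exists^{k_\nu}\bar{y}_\nu$ together with a (possibly infinite) conjunction indexed by $I_\nu$, and each leaf carries an $\fo$ formula; the tree has finite depth $j$ since $\Phi \in \left[\exists^* \bigwedge\right]^j \fo$ for some $j$. For each internal node $\nu$ I introduce a fresh tuple $\bar{d}_\nu$ of distinct new constants for $\bar{y}_\nu$, obtaining an expanded vocabulary $\sigma \supseteq \tau$. For each leaf $\ell$, whose free variables lie among the existential variables of its ancestors, let $\hat{\theta}_\ell$ be the $\sigma$-sentence obtained by substituting the ancestor constants for those variables, and set $T^{*} = \{\hat{\theta}_\ell \mid \ell \text{ a leaf of } \mathcal{T}\}$. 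The argument then rests on two facts. First, $V \cup T^{*}$ is satisfiable iff $\Phi$ is satisfiable modulo $V$: given a model of $V \cup T^{*}$, the interpretations of the $\bar{d}_\nu$ serve, by a bottom-up induction over $\mathcal{T}$, as witnesses showing $\Phi$ holds, and conversely one reads witnesses for $\Phi$ off a model and interprets the constants accordingly. Second, every finite $Z \subseteq_f T^{*}$ is satisfiable modulo $V$: such a $Z$ mentions finitely many leaves, whose ancestors span a \emph{finite} subtree $\mathcal{T}_0$ in which each internal node retains only a finite index set $I_1 \subseteq_f I_\nu$; reading $\mathcal{T}_0$ off as a formula yields, by the recursive definition of finite approximations, a member $\alpha \in \mc{A}(\Phi)(\bar{x})$, which is satisfiable modulo $V$ by hypothesis, and any model of $V \cup \{\alpha\}$ expands (interpreting the finitely many constants of $\mathcal{T}_0$ as witnesses for $\alpha$, and all other constants arbitrarily) to a model of $V \cup Z$. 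Combining the second fact with the compactness theorem shows $V \cup T^{*}$ is satisfiable, and then the first fact delivers the conclusion.

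I expect the main obstacle to be the precise bookkeeping in the second fact: verifying that the finite subtree spanned by the leaves occurring in $Z$ really does read off as a formula matching the recursive clauses defining $\mc{A}(\Phi)$ (clause~2 applied down the tree, bottoming out at the $\fo$ leaves via clause~1), and that the constant-to-witness expansion genuinely satisfies each $\hat{\theta}_\ell \in Z$ via the same bottom-up evaluation. A clean way to organize both the flattening equivalence and this correspondence, while sidestepping any appeal to a single global choice of witnesses, is to prove instead a slightly stronger \emph{family} version by induction on the depth $j$: for a family $\{\Phi_s\}$ of formulae of depth at most $j$, if every finite selection of finite approximations (one per each of finitely many $\Phi_s$) is jointly satisfiable modulo $V$, then the whole family is jointly satisfiable modulo $V$. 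The inductive step peels off the outermost existential block of each $\Phi_s$ into fresh constants and applies the induction hypothesis to the resulting depth-$(j-1)$ family of conjuncts. The family formulation is exactly what makes the induction close, since satisfying $\exists \bar{y}\,\bigwedge_{i} \Psi_i$ demands the \emph{same} witnesses $\bar{y}$ across all conjuncts $\Psi_i$, which a single-formula induction cannot supply on its own.
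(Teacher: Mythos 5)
Your proof is correct, but it takes a genuinely different route from the paper's. The paper argues by induction on the nesting depth $j$: for $\Phi = \exists \bar{y} \bigwedge_{i \in I}\Psi_i$ it first establishes an auxiliary compactness lemma for \emph{sets} of $\left[\exists^* \bigwedge\right]^j \fo$ formulae (Lemma~\ref{lemma:compactness-direct-generalization}), then shows that each finite subset $\{\Psi_i \mid i \in I_1\}$ is satisfiable modulo $V$ by merging it into a single formula $\Psi = \exists\bar{z}\bigwedge(\bigcup_{i\in I_1}Z_i)$ and checking that every member of $\mc{A}(\Psi)$ decomposes into members of $\mc{A}(\Psi_i)$ and hence lifts to a member of $\mc{A}(\Phi)$. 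You instead Skolemize every existential block by fresh \emph{constants} (legitimate precisely because, as you note, there are no universal quantifiers above the $\fo$ leaves, so witnesses at a node are shared by all conjuncts below it and depend on nothing else), flatten $\Phi$ into an $\fo$ theory $T^*$ of leaf sentences over the expanded vocabulary, and apply ordinary compactness once: a finite subset of $T^*$ spans a finite subtree of the parse tree that reads off, via clause~2 of the definition of $\mc{A}(\cdot)$ applied down the tree, as exactly a member of $\mc{A}(\Phi)$. Both of your verifications go through (the bottom-up evaluation for the flattening equivalence, and the top-down choice of witnesses for the finite-subtree expansion; note the expanded vocabulary is infinite, which the paper's conventions explicitly permit in proofs). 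What your route buys is economy: a single appeal to compactness replaces the paper's two nested inductions, and in particular you never need the auxiliary family-compactness lemma or the slightly delicate argument that $\mc{A}(\Psi)$ for the merged formula sits inside $\mc{A}(\Phi)$ up to equivalence. What the paper's route buys is that it stays in the original vocabulary and manipulates only the syntactic objects already defined. Your closing observation is also apt: the ``family'' reformulation you sketch is essentially the paper's Lemma~\ref{lemma:compactness-direct-generalization} and the present lemma fused into one induction, and it isolates the same structural point your Skolemization exploits, namely that the conjuncts under an existential block must be satisfied by a \emph{common} witness tuple.
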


Observe that the standard compactness theorem for FO is a special case
of the above result: Given an FO theory $T(\bar{x})$, let
$\Phi(\bar{x})$ be the formula of $\left[\exists^* \bigwedge\right]^*
\text{FO}$ given by $\Phi(\bar{x}) = \bigwedge T(\bar{x})$. Then every
formula of $\mc{A}(\Phi)(\bar{x})$ is equivalent to a finite subset of
$T(\bar{x})$ and vice-versa.

\begin{remark}
The formulas of $\left[\exists^* \bigwedge \right]^* \text{FO}$ are
special kinds of ``conjunctive formulas'', where the latter are as
defined in~\cite{keisler:finite-approx}. The
paper~\cite{keisler:finite-approx} gives a generalization of the
compactness theorem by proving a compactness result for conjunctive
formulas, whose statement is similar to that of
Lemma~\ref{lemma:compactness-generalization-in-terms-of-approximations}.
However,
Lemma~\ref{lemma:compactness-generalization-in-terms-of-approximations}
does not follow from this result of~\cite{keisler:finite-approx}
because the set of finite approximations of sentences $\Phi(\bar{x})$
of $\left[\exists^* \bigwedge \right]^* \text{FO}$, as defined
in~\cite{keisler:finite-approx}, is semantically strictly larger than
the set $\mc{A}(\Phi)(\bar{x})$ that we have defined. Further, the
techniques that we use in proving
Lemma~\ref{lemma:compactness-generalization-in-terms-of-approximations}
are much different from those used in~\cite{keisler:finite-approx} for
proving the compactness result for conjunctive formulas.
\end{remark}

We finally require the following two auxiliary lemmas for the proofs
of Proposition~\ref{prop:char-of-inf-formulae-defining-elem-classes}
and
Lemma~\ref{lemma:compactness-generalization-in-terms-of-approximations}.

\begin{lemma}\label{lemma:compactness-direct-generalization}
For $j \in \mathbb{N}$, let $T(\bar{x})$ be a set of formulae of
$\left[\exists^* \bigwedge\right]^j \text{FO}$, all of whose free
variables are among $\bar{x}$. If every finite subset of $T(\bar{x})$
is satisfiable modulo a theory $V$, then $T(\bar{x})$ is satisfiable
modulo $V$.
\end{lemma}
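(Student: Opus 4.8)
The plan is to prove the statement by induction on $j$, with the base case $j = 0$ being exactly the classical compactness theorem (Theorem~\ref{theorem:compactness}): a set $T(\bar{x})$ of $\fo$ formulae all of whose free variables are among $\bar{x}$ becomes, after substituting fresh constants $c_1, \ldots, c_n$ for $\bar{x}$, an ordinary $\fo$ theory $T'$, and Theorem~\ref{theorem:compactness} tells us that $T'$ is satisfiable modulo $V$ iff each of its finite subsets is. Since finite subsets of $T(\bar{x})$ correspond to finite subsets of $T'$, the base case is immediate.

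For the inductive step, I would assume the result for level $j-1$ and let $T(\bar{x})$ be a set of formulae of $\left[\exists^* \bigwedge\right]^{j} \fo$. By the definition of the level-$j$ class (for $j \ge 1$), every $\Phi \in T(\bar{x})$ admits a representation $\Phi(\bar{x}) = \exists^{k_\Phi} \bar{y}_\Phi \bigwedge_{i \in I_\Phi} \psi_{\Phi, i}(\bar{y}_\Phi, \bar{x})$ with each $\psi_{\Phi, i}$ a formula of $\left[\exists^* \bigwedge\right]^{j-1} \fo$; I fix one such representation per $\Phi$. I would then introduce a single fresh tuple of constants $\bar{c}$ for the common free variables $\bar{x}$, and, for each $\Phi \in T$, a fresh, pairwise disjoint tuple of constants $\bar{d}_\Phi$ of length $k_\Phi$ to serve as witnesses for the leading existential block of $\Phi$. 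Over this expanded vocabulary I define the theory $T' = \{\psi_{\Phi, i}(\bar{d}_\Phi, \bar{c}) \mid \Phi \in T,\ i \in I_\Phi\}$, obtained by substituting the new constants for $\bar{y}_\Phi$ and $\bar{x}$. Then $T'$ is a set of sentences of $\left[\exists^* \bigwedge\right]^{j-1} \fo$ over the expanded vocabulary, so the induction hypothesis applies to it.

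Two reductions then finish the argument. First, $T(\bar{x})$ is satisfiable modulo $V$ iff $T'$ is: a model of $T' \cup V$ has a reduct satisfying $V$ in which $\bar{c}$ supplies an assignment to $\bar{x}$ witnessing each $\Phi$, and conversely a model of $T(\bar{x})$ modulo $V$, with assignment $\bar{a}$, can be expanded by interpreting $\bar{c}$ as $\bar{a}$ and each $\bar{d}_\Phi$ as a witness tuple for $\Phi$'s existential block (the fresh constants do not affect $V$, which is over the original vocabulary). Second, and this is the heart of the induction, finite satisfiability of $T$ modulo $V$ transfers to finite satisfiability of $T'$ modulo $V$: any finite $Z' \subseteq T'$ mentions only finitely many of the $\psi_{\Phi, i}$, hence only finitely many $\Phi_1, \ldots, \Phi_m \in T$; the finite subset $Z = \{\Phi_1, \ldots, \Phi_m\}$ of $T$ is satisfiable modulo $V$ by hypothesis, yielding a single model $\mf{A} \models V$ with an assignment $\bar{a}$ to $\bar{x}$ and, simultaneously, witness tuples $\bar{b}_1, \ldots, \bar{b}_m$ for the $\Phi_l$; expanding $\mf{A}$ by $\bar{c} \mapsto \bar{a}$ and $\bar{d}_{\Phi_l} \mapsto \bar{b}_l$ (interpreting the remaining fresh constants arbitrarily) gives a model of $Z'$ modulo $V$. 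By the induction hypothesis $T'$ is then satisfiable modulo $V$, and by the first reduction so is $T(\bar{x})$.

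The step I expect to require the most care is the bookkeeping around free variables and the single-model property: one must ensure that when a finite subset $Z'$ of $T'$ is satisfied, all conjuncts belonging to the same $\Phi$ are honoured by one and the same witness tuple $\bar{d}_\Phi$, which is exactly what is delivered by satisfying $\Phi$ itself in a single model of the finite subset $Z$. The fact that the leading existential blocks of distinct formulae of $T$ are quantified independently, yet are all interpreted inside one common model furnished by finite satisfiability of $T$, is precisely what legitimises the reduction to the lower level.
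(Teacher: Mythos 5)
Your proof is correct and follows essentially the same route as the paper's: induct on $j$, pool the level-$(j-1)$ conjunct-sets of all formulae in $T$ into one theory, and observe that a finite subset of the pooled theory involves only finitely many $\Phi$'s, whose simultaneous satisfaction in a single model supplies coherent witness tuples. The only (cosmetic) difference is that you Skolemize the leading existential blocks with fresh constants over an expanded vocabulary, whereas the paper keeps pairwise-disjoint fresh free variables $\bar{y}_i$ and applies the induction hypothesis to a set of formulae in those variables.
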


\begin{proof}
We prove the statement by induction on $j$. The base case of $j = 0$
is the standard compactness theorem. As induction hypothesis, suppose
the statement is true for $j$. For the inductive step, consider a set
$T(\bar{x}) = \{\Phi_i(\bar{x}) \mid i \in I\}$ of $\left[\exists^*
  \bigwedge\right]^{j+1} \text{FO}$ formulae, all of whose free
variables are among $\bar{x}$, and suppose every finite subset of
$T(\bar{x})$ is satisfiable modulo $V$. Let $\Phi_i(\bar{x}) = \exists
\bar{y}_i \bigwedge T_i(\bar{x}, \bar{y}_i)$ where $T_i(\bar{x},
\bar{y}_i)$ is a set of formulae of $\left[\exists^*
  \bigwedge\right]^j \text{FO}$. Assume for $i, j \in I$ and $i \neq
j$, that $\bar{y}_i$ and $\bar{y}_j$ have no common variables.
We show that the set $Y$ of $\left[\exists^* \bigwedge\right]^j
\text{FO}$ formulae given by $Y = \bigcup_{i \in I} T_i$ is
satisfiable modulo $V$; then so is $T(\bar{x})$.

By the induction hypothesis, it suffices to show that every finite
subset $Z$ of $Y$ is satisfiable modulo $V$. Let $Z (\bar{x},
\bar{y}_{i_1}, \ldots, \bar{y}_{i_n}) = \bigcup_{r = 1}^{r = n}
Z_r(\bar{x}, \bar{y}_{i_r})$, where $n > 0$, $Z_r(\bar{x},
\bar{y}_{i_r}) \subseteq_f T_{i_r}(\bar{x}, \bar{y}_{i_r})$ and $i_r
\in I$, for each $r \in \{1, \ldots, n\}$.  The subset
$\{\Phi_{i_r}(\bar{x}) \mid r \in \{1, \ldots, n\}\}$ of $T(\bar{x})$
is satisfiable modulo $V$ by assumption, whence for some model
$\mf{A}$ of $V$, and interpretations $\bar{a}$ of $\bar{x}$ and
$\bar{b}_{i_r}$ of $\bar{y}_{i_r}$, we have that $\bigcup_{r = 1}^{r =
n} T_{i_r}(\bar{x}, \bar{y}_{i_r})$ is satisfied in
$(\mf{A}, \bar{a}, \bar{b}_{i_1}, \ldots, \bar{b}_{i_n})$; then
$(\mf{A},\bar{a}, \bar{b}_{i_1}, \ldots, \bar{b}_{i_n}) \models
Z(\bar{x}, \bar{y}_{i_1}, \ldots, \bar{y}_{i_n})$.
\end{proof}

\begin{lemma}\label{lemma:formula-implies-its-approximations}
Let $\Phi(\bar{x})$ be a formula of
  $\left[\exists^* \bigwedge\right]^* \text{FO}$. If
  $(\mf{A}, \bar{a}) \models
\Phi(\bar{x})$, then $(\mf{A}, \bar{a}) \models \xi(\bar{x})$ \linebreak 
for every formula $\xi(\bar{x})$ of $\mc{A}(\Phi)(\bar{x})$.
\end{lemma}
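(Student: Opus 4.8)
The plan is to prove this by induction on the level $j$ for which $\Phi(\bar{x})$ belongs to $\left[\exists^* \bigwedge\right]^j \text{FO}$, proving the statement simultaneously for all tuples of free variables (this is important, since the inductive step introduces new free variables $\bar{y}$, so the induction hypothesis must be available for formulae with arbitrarily many free variables, not merely sentences). For the base case $j = 0$ we have $\Phi(\bar{x}) \in \text{FO}$, so by clause~1 of the definition of finite approximations $\mc{A}(\Phi)(\bar{x}) = \{\Phi(\bar{x})\}$, and the claim is immediate since the only $\xi$ to consider is $\Phi$ itself.

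For the inductive step, suppose the statement holds for every formula of $\left[\exists^* \bigwedge\right]^j \text{FO}$ (with any free-variable tuple), and let $\Phi(\bar{x}) = \exists^k \bar{y} \bigwedge_{i \in I} \Psi_i(\bar{x}, \bar{y})$ be a formula of $\left[\exists^* \bigwedge\right]^{j+1} \text{FO}$, where each $\Psi_i(\bar{x}, \bar{y})$ lies in $\left[\exists^* \bigwedge\right]^j \text{FO}$. Assume $(\mf{A}, \bar{a}) \models \Phi(\bar{x})$; then there is a tuple $\bar{b}$ of $\mf{A}$ witnessing the existential block, so that $(\mf{A}, \bar{a}, \bar{b}) \models \Psi_i(\bar{x}, \bar{y})$ for \emph{every} $i \in I$ simultaneously. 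Now let $\xi(\bar{x}) \in \mc{A}(\Phi)(\bar{x})$ be arbitrary; by clause~2 it has the form $\xi(\bar{x}) = \exists^k \bar{y} \bigwedge_{i \in I_1} \gamma_i(\bar{x}, \bar{y})$ with $I_1 \subseteq_f I$ and $\gamma_i(\bar{x}, \bar{y}) \in \mc{A}(\Psi_i)(\bar{x}, \bar{y})$. Applying the induction hypothesis to each $\Psi_i$ at the expanded tuple $(\bar{a}, \bar{b})$ gives $(\mf{A}, \bar{a}, \bar{b}) \models \gamma_i(\bar{x}, \bar{y})$ for each $i \in I_1$, hence $(\mf{A}, \bar{a}, \bar{b}) \models \bigwedge_{i \in I_1} \gamma_i(\bar{x}, \bar{y})$; reusing the same witness $\bar{b}$ for the existential block then yields $(\mf{A}, \bar{a}) \models \xi(\bar{x})$, completing the induction.

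The proof is essentially routine once the right induction is set up, so there is no serious obstacle; the one point demanding care is the reuse of the single witness tuple $\bar{b}$. The implication holds precisely because passing to a finite approximation only \emph{drops} conjuncts (a logical weakening) and retains the very same existential witnesses, so a model of $\Phi$ is automatically a model of each approximation. I would also explicitly note that no top-level disjunction case arises here: the class $\left[\exists^* \bigwedge\right]^* \text{FO}$ is built only from $\text{FO}$ using existential quantification and conjunction, so clause~3 of the definition of $\mc{A}$ is never invoked. This is reassuring, since for a disjunction $\bigvee_{i \in I}\Psi_i$ the analogous implication would in fact \emph{fail}: a model satisfying a single disjunct $\Psi_{i_0}$ need not satisfy a finite approximation $\bigvee_{i \in I_1}\gamma_i$ whose index set $I_1$ omits $i_0$. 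Thus the restriction to $\left[\exists^* \bigwedge\right]^* \text{FO}$ in the hypothesis is exactly what makes the statement correct.
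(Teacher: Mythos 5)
Your proof is correct and follows essentially the same route as the paper's: induction on the nesting level $j$, with the base case trivial because $\mc{A}(\Phi)(\bar{x}) = \{\Phi(\bar{x})\}$, and the inductive step reusing the single existential witness $\bar{b}$ after applying the induction hypothesis to each $\Psi_i$ at the expanded tuple. Your added remarks on why clause 3 (disjunction) never arises and would break the implication are a sensible sanity check but not needed for the argument.
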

\begin{proof}
We prove the lemma by induction. The statement is trivial for formulae
of $\text{FO} = \left[\exists^* \bigwedge\right]^0 \text{FO}$. Assume
the statement for $\left[\exists^* \bigwedge\right]^j \text{FO}$
formulae. Consider a formula $\Phi(\bar{x})$ of
$\left[\exists^* \bigwedge\right]^{j+1} \text{FO}$ given by
$\Phi(\bar{x}) = \exists^n \bar{y} \bigwedge_{i \in
I} \Psi_i(\bar{x}, \bar{y})$, where $\Psi_i(\bar{x}, \bar{y}) \in
\left[\exists^* \bigwedge\right]^{j} \text{FO}$ for each $i \in
I$. Consider a formula $\xi(\bar{x})$ of $\mc{A}(\Phi)(\bar{x})$; then
$\xi(\bar{x}) = \exists^n \bar{y} \bigwedge_{i \in I_1}
\gamma_{i}(\bar{x}, \bar{y})$, for some $I_1 \subseteq_f I$ and
$\gamma_{i}(\bar{x}, \bar{y}) \in $ $ \mc{A}(\Psi_{i})(\bar{x},
\bar{y})$ for each $i \in I_1$. Since $(\mf{A}, \bar{a}) \models
\Phi(\bar{x})$, there is an $n$-tuple $\bar{b}$ from $\mf{A}$ such
that $(\mf{A}, \bar{a}, \bar{b}) \models \Psi_{i}(\bar{x}, \bar{y})$
for each $i \in I_1$.  By induction hypothesis, $(\mf{A}, \bar{a},
\bar{b}) \models \gamma_{i}(\bar{x}, \bar{y})$ for each $i \in I_1$;
then $(\mf{A}, \bar{a}) \models \xi(\bar{x})$.
\end{proof}

\begin{proof}[Proof of Lemma~\ref{lemma:compactness-generalization-in-terms-of-approximations}]
The proof proceeds by induction. The statement trivially holds for
formulae of $\text{FO} = \left[\exists^* \bigwedge\right]^0
\text{FO}$.  Assume the statement is true for formulae of
$\left[\exists^* \bigwedge\right]^j \text{FO}$.  Consider a formula
$\Phi(\bar{x})$ of $\left[\exists^* \bigwedge\right]^{j+1} \text{FO}$
given by $\Phi(\bar{x})$ = $\exists \bar{y} \bigwedge_{i \in I}
\Psi_i(\bar{x}, \bar{y})$, where $\Psi_i(\bar{x}, \bar{y})$ is a
formula of $\left[\exists^* \bigwedge\right]^{j} \text{FO}$ for each
$i \in I$.  We show that every finite subset of $T(\bar{x}, \bar{y}) =
\{\Psi_i(\bar{x}, \bar{y}) \mid i \in I \}$ is satisfiable modulo
$V$. Then by Lemma~\ref{lemma:compactness-direct-generalization},
$T(\bar{x}, \bar{y})$ is satisfiable modulo $V$; then
$\Phi(\bar{x})$ is also satisfiable modulo $V$.

Let $I_1$ be a finite subset of $I$. For $i \in I_1$, consider the
formula $\Psi_i(\bar{x}, \bar{y})$ of $T(\bar{x}, \bar{y})$; it is
given by $\Psi_i(\bar{x}, \bar{y}) = \exists \bar{z}_i \bigwedge
Z_i(\bar{x}, \bar{y}, \bar{z}_i)$ where $Z_i(\bar{x}, \bar{y},
\bar{z}_i)$ is a set of formulas of $\left[\exists^*
  \bigwedge\right]^{j-1} \text{FO}$. Let $\bar{z} = (\bar{z}_i)_{i \in
  I_1}$ be the tuple of all the variables of the $\bar{z}_i$s, for $i$
ranging over $I_1$. Assume without loss of generality that for $i_1,
i_2 \in I$ such that $i_1 \neq i_2$, none of the variables of
$\bar{z}_{i_1}$ appear in $\Psi_{i_2}$. Consider the formula
$\Psi(\bar{x}, \bar{y})$ of $\left[\exists^* \bigwedge\right]^{j}
\text{FO}$ given by $\Psi(\bar{x}, \bar{y}) = \exists \bar{z}
\bigwedge \big(\bigcup_{i \in I_1} Z_i(\bar{x}, \bar{y}, \bar{z}_i)
\big)$. It is easy to verify that $\Psi(\bar{x}, \bar{y})$ is
equivalent (over all structures) to $\{\Psi_i(\bar{x}, \bar{y}) \mid i
\in I_1 \}$. We now show that the latter is satisfiable modulo $V$ by
showing that the former is satisfiable modulo $V$ -- this in turn is
done by showing that every formula in $\mc{A}(\Psi)(\bar{x}, \bar{y})$
is satisfiable modulo $V$, and then applying the induction hypothesis
mentioned at the outset.

Let $\gamma(\bar{x}, \bar{y})$ be an arbitrary formula of
$\mc{A}(\Psi)(\bar{x}, \bar{y})$. Then $\gamma(\bar{x}, \bar{y}) $ is
of the form $\exists \bar{z} \bigwedge_{i \in I_2} $ $\bigwedge_{~l
  \in \{1, \ldots, n_i\}}$ $\alpha_{i, l}(\bar{x}, \bar{y},
\bar{z}_i)$, where $I_2 \subseteq I_1$, and for each $i \in I_2$, we
have $n_i \ge 1$, $\alpha_{i, l}(\bar{x}, \bar{y}, \bar{z}_i) \in $ $
\mc{A}(\beta_{i, l})(\bar{x}, \bar{y}, \bar{z}_i)$, and $\{\beta_{i,
  1}(\bar{x}, \bar{y}, \bar{z}_i), \ldots,$ $ \beta_{i, n_i}(\bar{x},
\bar{y}, \bar{z}_i)\} \subseteq_f Z_i(\bar{x}, \bar{y},
\bar{z}_i)$. It is easy to see that $\gamma(\bar{x}, \bar{y})$ is
equivalent to the formula $\bigwedge_{i \in I_2} \gamma_i(\bar{x},
\bar{y})$ where $\gamma_i(\bar{x}, \bar{y}) = \exists \bar{z}_i
\bigwedge_{~l \in \{1, \ldots, n_i\}}$ $\alpha_{i, l}(\bar{x},
\bar{y}, \bar{z}_i)$. Observe now that $\gamma_i(\bar{x}, \bar{y}) \in
\mc{A}(\Psi_i)(\bar{x}, \bar{y})$, whence $\exists \bar{y}
\bigwedge_{i \in I_2} \gamma_i(\bar{x}, \bar{y})$ $ \in
\mc{A}(\Phi)(\bar{x})$. By assumption, every formula of
$\mc{A}(\Phi)(\bar{x})$ is satisfiable modulo $V$; then so are
$\exists \bar{y} \bigwedge_{i \in I_2} \gamma_i(\bar{x}, \bar{y})$ and
$\gamma(\bar{x}, \bar{y})$.
\end{proof}

\begin{proof}[Proof of Proposition~\ref{prop:char-of-inf-formulae-defining-elem-classes}]
It suffices to show just the `Only if' direction of the result. Hence,
consider a sentence $\Phi$ of $\left[\bigvee\right]\left[\exists^*
  \bigwedge \right]^*\text{FO}$ given by $\Phi = \bigvee_{i \in I}
\Psi_i$ where $\Psi_i \in \left[\exists^* \bigwedge
  \right]^*\text{FO}$.  Let $\mc{B} = \prod_{i \in I} \mc{A}(\Psi_i)$
where $\prod$ denotes Cartesian product. We now show the following
equivalences modulo $V$:
\begin{eqnarray}
\Phi & \leftrightarrow & \bigvee_{i \in
I} \bigwedge_{\gamma \in \mc{A}(\Psi_i)} \gamma \label{eq:1}\\ 
& \leftrightarrow & \bigwedge_{(\gamma_i) \,\in \,\mc{B}} \,\bigvee_{i \in
I} \gamma_i \label{eq:2}
\end{eqnarray}
In equivalence Eq.~\ref{eq:2} above, $(\gamma_i)$ denotes a sequence in
$\mc{B}$. Let $\mc{P}_{fin}(I)$ be the set of all finite subsets of
$I$. We finally show the existence of a function $g: \mc{B}
\rightarrow \mc{P}_{fin}(I)$ that gives the following
equivalence
\begin{eqnarray}
\Phi & \leftrightarrow &
\bigwedge_{(\gamma_i) \in \mc{B}} \,\bigvee_{j \in
  g((\gamma_i))} ~\gamma_j \label{eq:3}
\end{eqnarray}
Observe that each disjunction in the RHS of Eq.~\ref{eq:3} is a
sentence of $\mc{A}(\Phi)$. Observe also that instead of ranging over
all of $\mc{B}$ in the RHS of Eq.~\ref{eq:3} above, we can range over
only a countable subset of $\mc{B}$, since the number of FO sentences
over a finite vocabulary is countable. We now show the above
equivalences to complete the proof. The equivalence Eq.~\ref{eq:2} is
obtained by applying the standard distributivity laws for conjunctions
and disjunctions, to the sentence in the RHS of Eq.~\ref{eq:1}.

\underline{Proof of Eq.~\ref{eq:1}:} Let $\Gamma = \bigvee_{i \in I} $
$\bigwedge_{\gamma \in \mc{A}(\Psi_i)} \gamma$.  Let $\mf{A}$ be a
model of $V$ such that $\mf{A} \models \Phi$. Then $\mf{A} \models
\Psi_i$ for some $i \in I$. By
Lemma~\ref{lemma:formula-implies-its-approximations}, we have $\mf{A}
\models \mc{A}(\Psi_i)$, whence $\mf{A} \models \Gamma$. Thus $\Phi$
implies $\Gamma$ modulo $V$.  Towards the converse, let $\mf{A}$ be a
model of $V$ such that $\mf{A} \models \Gamma$. Then $\mf{A} \models
\mc{A}(\Psi_i)$ for some $i \in I$.  Let $\Psi = \bigwedge
\big(\foth(\mf{A}) \,\cup $ $ \{\Psi_i\} \big)$, where
$\foth(\mf{A})$ denotes the theory of $\mf{A}$.  It is easy to
see that $\mf{A} \models \mc{A}(\Psi)$ because any sentence $\gamma$
in $\mc{A}(\Psi)$ is given by either $\gamma = \bigwedge Z$ or $\gamma
= \gamma_i \wedge \bigwedge Z$, where $Z \subseteq_f
\foth(\mf{A})$ and $\gamma_i \in \mc{A}(\Psi_i)$.  Also observe
that $\Psi \in \left[\exists^* \bigwedge \right]^*\text{FO} $; then
since every sentence of $\mc{A}(\Psi)$ is satisfiable modulo $V$, it
follows from
Lemma~\ref{lemma:compactness-generalization-in-terms-of-approximations}
that $\Psi$ is satisfied in a model of $V$, say $\mf{B}$. Then (i)
$\mf{B} \equiv \mf{A}$ and (ii) $\mf{B} \models \Psi_i$ whence $\mf{B}
\models \Phi$.  Since $\Phi$ defines an elementary class modulo $V$,
we have $\mf{A} \models \Phi$.

\underline{Proof of Eq.~\ref{eq:3}:} We show the following result, call it
($\ddagger$): If $T, S$ and $V$ are FO theories such that $T$ implies
$\bigvee S$ modulo $V$, then $T$ implies $\bigvee S'$ modulo $V$ for
some finite subset $S'$ of $S$.  Then Eq.~\ref{eq:3} follows from
Eq.~\ref{eq:2} as follows. By Eq.~\ref{eq:2}, we have $\Phi$ implies
$ \bigvee_{i \in I} \gamma_i$ modulo $V$ for each sequence
$(\gamma_i)$ of $\mc{B}$ (recall that $\mc{B} = \prod_{i \in
I} \mc{A}(\Psi_i)$). Then by ($\ddagger$), $\Phi$ implies
$ \bigvee_{i \in I_1} \gamma_i$ modulo $V$ for some $I_1 \subseteq_f
I$. Defining $g((\gamma_i)) = I_1$, we get the forward direction of
Eq.~\ref{eq:3}. The backward direction of Eq.~\ref{eq:3} is trivial
from Eq.~\ref{eq:2} and the fact that $\bigvee_{i \in I_1} \gamma_i$
implies $\bigvee_{i \in I} \gamma_i$ (over all structures). We now
show ($\ddagger$).

Since $T$ implies $\bigvee S$ modulo $V$, we have that $T \cup
\{\neg \xi \mid \xi \in S\}$ is unsatisfiable modulo $V$. Then
by compactness theorem, $T \cup \{\neg \xi \mid \xi \in S'\}$ is
unsatisfiable modulo $V$, for some finite subset $S'$ of $S$. Whereby,
$T$ implies $ \bigvee S'$ modulo $V$.
\end{proof} 

The following figure gives the picture of the (partial) substructural
characterizations, under Hypothesis~\ref{hypothesis:H}, or equivalently,
Hypothesis~\ref{hypothesis:H-reformulated}. (cf.
Figure~\ref{figure:subst-chars}).

\begin{figure}[H]
\includegraphics[scale=0.7]{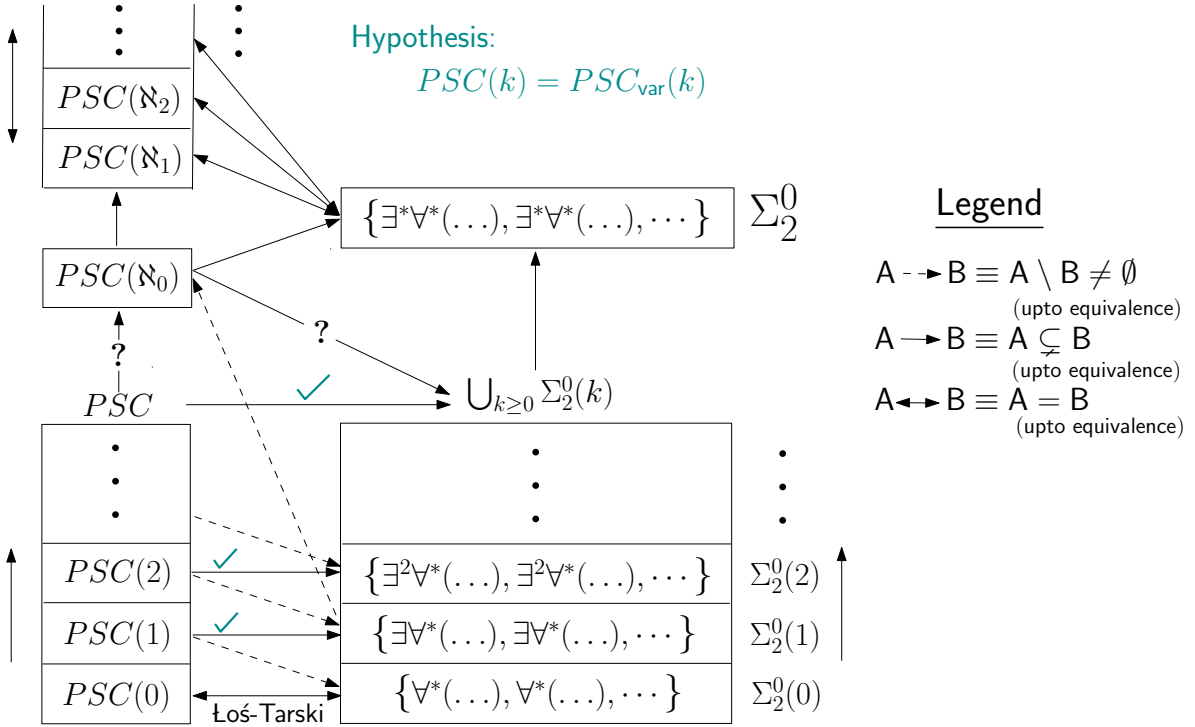}
\caption{(Partial) conditional characterizations of $PSC(k)$ and $PSC(\lambda)$ theories}
\label{figure:conditional-char}
\end{figure}

\chapter{Directions for future work}\label{chapter:open-questions-classical-model-theory}

We propose as a part of future work, various directions that naturally
arise from our results presented thus far.

\begin{enumerate}[leftmargin=*, nosep]
\item We would like to investigate what syntactic subclasses of $\fo$
  theories correspond exactly to $PSC(k)$ and $PSC(\aleph_0)$
  theories. As Theorem~\ref{theorem:subst-char-for-theories} shows,
  these syntactic classes must semantically be subclasses of
  $\Sigma^0_2$ theories. For $PSC(k)$ theories, in addition to
  verifying whether Hypothesis~\ref{hypothesis:H} is true, we would
  further like to investigate what syntactic subclass of theories of
  $\exists^k \forall^*$ sentences characterizes $PSC(k)$ theories,
  assuming Hypothesis~\ref{hypothesis:H} holds.  A technique to
  identify the latter syntactic subclass is to examine the syntactic
  properties of the $\fo$ theories given by
  Proposition~\ref{prop:char-of-inf-formulae-defining-elem-classes},
  and exploit the fact that these theories are obtained from the
  finite approximations of the infinitary sentences of
  $\left[\bigvee\right]\left[\exists^k \bigwedge\right] \Pi^0_1$.

\item As ``converses'' to the investigations above, and as analogues
  of the semantic characterizations of $\Pi^0_2$ theories and theories
  of $\forall^k \exists^*$ sentences by $PCE(\lambda)$ and $PCE(k)$
  respectively (cf. Theorem~\ref{theorem:ext-chars}), we would like to
  semantically characterize $\Sigma^0_2$ theories and theories of
  $\exists^k \forall^*$ sentences, in terms of properties akin to
  (though not the same as) $PSC(\lambda)$ and $PSC(k)$. Likewise, as
  an analogue of
  Proposition~\ref{prop:PCE(lambdas)-and-PCE}(\ref{prop:PCE(lambda)-more-general-than-PCE}),
  we would like to investigate if there are $PSC(\aleph_0)$ theories
  that are not equivalent to (i) any $PSC$ theory (ii) any theory of
  $\exists^k \forall^*$ sentences, for any $k \ge 0$.

\item It is conceivable that many semantic properties of FO theories
  have natural and intuitive descriptions/characterizations in
  infinitary logics
  (Lemma~\ref{lemma:inf-characterization-of-PSC-var(k)} gives one such
  example). Then, results like
  Proposition~\ref{prop:char-of-inf-formulae-defining-elem-classes}
  can be seen as ``compilers'' (in the sense of compilers used in
  computer science), in that they give a means of translating a ``high
  level'' description -- via infinitary sentences that are known to be
  equivalent to $\fo$ theories -- to an equivalent ``low level''
  description -- via $\fo$ theories. The latter $\fo$ theories are
  obtained from appropriately defined finite approximations of the
  infinitary sentences.  It would therefore be useful to investigate
  other infinitary logics and their fragments for which such
  ``compiler-results'' can be established.  An interesting logic to
  investigate in this regard would be $\mc{L}_{\omega_1, \omega}$,
  which is well-known to enjoy excellent model-theoretic properties
  despite compactness theorem not holding of
  it~\cite{keisler:infinitary-logic-book}.

\item Our results give characterizations of $\Sigma^0_2$ and $\Pi^0_2$
  sentences in which the number of quantifiers in the leading block is
  given. As natural generalizations of these results, we can ask for
  characterizations of $\Sigma^0_n$ and $\Pi^0_n$ sentences for each
  $n \ge 2$, where the numbers of quantifiers in all the $n$ blocks
  are given, and further extend these characterizations to
  theories. It may be noted that the results in the literature
  characterize $\Sigma^0_n$ and $\Pi^0_n$ theories as a whole and do
  not provide the finer characterizations suggested here.
\end{enumerate}

We conclude this part of the thesis by presenting our ideas (in
progress) on the last future work mentioned above, and suggesting
concrete directions for pursuing the latter.

\vspace{3pt}
\tbf{Directions for finer characterizations of $\Sigma^0_n$
  and $\Pi^0_n$}

For $n \ge 2$, let $\Sigma^0_n(k, l_1, *, l_2, *, \ldots)$ be the
class of all $\Sigma^0_n$ formulae in which the quantifier prefix is
such that the leading block of quantifiers has $k$ quantifiers, the
$(2i)^{\text{th}}$ block has $l_i$ quantifiers for $i \ge 1$, and the
$(2i+1)^{\text{th}}$ block has zero or more quantifiers for $i \ge
1$. Analogously, define the subclass $\Pi^0_n(k, l_1, *, l_2, *,
\ldots)$ of $\Pi^0_n$. For $n \ge 1$, let $\Sigma^0_n(l_1, *, l_2, *,
\ldots)$ denote the class of all formulae of $\Pi^0_{n+1}(0, l_1, *,
l_2, *, \ldots)$; likewise, let $\Pi^0_n(l_1, *, l_2, *, \ldots)$
denote the class of all formulae of $\Sigma^0_{n+1}(0, l_1, *, l_2, *,
\ldots)$.

Given a structure $\mf{A}$ and a $k$-tuple $\bar{a}$ of $\mf{A}$, the
\emph{$\Pi^0_n(l_1, *, l_2, *, \ldots)$-type} of $\bar{a}$ in $\mf{A}$
is the set of all $\Pi^0_n(l_1, *, l_2, *, \ldots)$ formulae having
free variables among $x_1, \ldots, x_k$, that are true of $\bar{a}$ in
$\mf{A}$.  We say a structure $\mf{B}$ \emph{realizes} the
$\Pi^0_n(l_1, *, l_2, *, \ldots)$-type of $\bar{a}$ in $\mf{A}$, if
there exists a $k$-tuple $\bar{b}$ of $\mf{B}$ such that the
$\Pi^0_n(l_1, *, l_2, *, \ldots)$-type of $\bar{b}$ in $\mf{B}$
contains (as a subset) the $\Pi^0_n(l_1, *, l_2, *, \ldots)$-type of
$\bar{a}$ in $\mf{A}$. We now present generalizations of the notions
of $k$-ary cover, $PSC(k)$ and $PCE(k)$.

\begin{futureworkdefn}[$(n; k, l_1, *, l_2, *, \ldots)$-ary cover] 
A collection $R$ of structures is said to be a $(n; k, l_1, *, l_2, *,
\ldots)$-ary cover of a structure $\mf{A}$ if for every $k$-tuple
$\bar{a}$ from $\mf{A}$, there exists a structure in $R$ that realizes
the $\Pi^0_{n-1}(l_1, *, l_2, *, \ldots)$-type of $\bar{a}$ in
$\mf{A}$.
\end{futureworkdefn}

\begin{futureworkremark}
Note that in the definition above, no structure in $R$ need be a
substructure of $\mf{A}$. This is in contrast with the notion of
$k$-ary cover as presented in Definition~\ref{defn:k-ary-covered-ext},
where if $R_1$ is a $k$-ary cover of a structure $\mf{A}_1$, then each
structure of $R_1$ is necessarily a substructure of $\mf{A}_1$.
\end{futureworkremark}

\begin{futureworkdefn}
Let $T$ and $V$ be given theories.
\begin{enumerate}[nosep]
\item 
We say $T$ is $PSC(n; k, l_1, *, l_2, *, \ldots)$ modulo $V$ if for
every model $\mf{A}$ of $T \cup V$, there exists a $k$-tuple $\bar{a}$
from $\mf{A}$ such that any model of $V$ that realizes the
$\Pi^0_{n-1}(l_1, *, l_2, *, \ldots)$-type of $\bar{a}$ in $\mf{A}$,
is also a model of $T$.
\item
We say $T$ is $PCE(n; k, l_1, *, l_2, *, \ldots)$ modulo $V$ if for
every model $\mf{A}$ of $V$ and every $(n; k, l_1, *, l_2, *,
\ldots)$-ary cover $R$ of $\mf{A}$, if each structure of $R$ is a
model of $T \cup V$, then $\mf{A}$ is a model of $T$.
\end{enumerate}
We say a sentence $\phi$ is $PSC(n; k, l_1, *, l_2, *, \ldots)$,
resp.\ $PCE(n; k, l_1, *, l_2, *, \ldots)$, modulo $V$, if the theory
$\{\phi\}$ is $PSC(n; k, l_1, *, l_2, *, \ldots)$, resp.\ $PCE(n; k,
l_1, *, l_2, *, \ldots)$, modulo $V$.
\end{futureworkdefn}

If $\phi(\bar{x})$ and $T(\bar{x})$ are respectively a formula and a
theory, each of whose free variables are among $\bar{x}$, then for a
theory $V$, the notions of `$\phi(\bar{x})$ is $PSC(n; k, l_1, *, l_2,
*, \ldots)$ modulo $V$', `$\phi(\bar{x})$ is $PCE(n; k, l_1, *, l_2,
*, \ldots)$ modulo $V$', `$T(\bar{x})$ is $PSC(n; k, l_1, *, l_2, *,
\ldots)$ modulo $V$' and `$T(\bar{x})$ is $PCE(n; k, l_1, *, l_2, *,
\ldots)$ modulo $V$' are defined similar to corresponding notions for
$PSC(k)$ and $PCE(k)$.

We can now show the following results analogous to
Lemma~\ref{lemma:PSC(k)-PCE(k)-duality} and
Theorem~\ref{theorem:glt(k)}.

\begin{futureworklemma}
Let $V$ be a given theory. A formula $\phi(\bar{x})$ is $PSC(n; k,
l_1, *, l_2, *, \ldots)$ modulo $V$ iff $\neg \phi(\bar{x})$ is
$PCE(n; k, l_1, *, l_2, *, \ldots)$ modulo $V$.
\end{futureworklemma}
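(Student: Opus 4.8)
The plan is to prove this lemma exactly as the proof of Lemma~\ref{lemma:PSC(k)-PCE(k)-duality}, by a double contrapositive argument, with the notion of ``a model of $V$ realizing the $\Pi^0_{n-1}(l_1, *, l_2, *, \ldots)$-type of a tuple'' now playing the role that ``a substructure containing a $k$-crux'' played there. First I would reduce to the case of sentences: since the notions for formulae with free variables $\bar{x}$ are obtained, as in the earlier sections, by substituting fresh constants $c_1, \ldots, c_n$ for $\bar{x}$, it suffices to establish the equivalence for sentences $\phi$. Throughout I put $T = \{\phi\}$ and use that over models of $V$ exactly one of $\phi$ and $\neg\phi$ holds.

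For the direction ``$\phi$ is $PSC(n; k, l_1, *, l_2, *, \ldots)$ modulo $V$ implies $\neg\phi$ is $PCE(n; k, l_1, *, l_2, *, \ldots)$ modulo $V$'' I would argue contrapositively. Suppose $\neg\phi$ is not $PCE(n; k, l_1, *, l_2, *, \ldots)$ modulo $V$. Then there are a model $\mf{A} \models V$ and an $(n; k, l_1, *, l_2, *, \ldots)$-ary cover $R$ of $\mf{A}$ such that every $\mf{B} \in R$ models $V \cup \{\neg\phi\}$, yet $\mf{A} \not\models \neg\phi$, i.e.\ $\mf{A} \models V \cup \{\phi\}$. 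Given any $k$-tuple $\bar{a}$ of $\mf{A}$, the cover property yields some $\mf{B}_{\bar{a}} \in R$ realizing the $\Pi^0_{n-1}(l_1, *, l_2, *, \ldots)$-type of $\bar{a}$ in $\mf{A}$; since $\mf{B}_{\bar{a}} \models V$ and $\mf{B}_{\bar{a}} \not\models \phi$, the tuple $\bar{a}$ fails the defining condition for $PSC(n; k, l_1, *, l_2, *, \ldots)$. As $\bar{a}$ was arbitrary, the model $\mf{A}$ witnesses that $\phi$ is not $PSC(n; k, l_1, *, l_2, *, \ldots)$ modulo $V$.

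For the reverse direction I would again contrapose. If $\phi$ is not $PSC(n; k, l_1, *, l_2, *, \ldots)$ modulo $V$, then there is a model $\mf{A} \models V \cup \{\phi\}$ such that for every $k$-tuple $\bar{a}$ of $\mf{A}$ there is a model $\mf{B}_{\bar{a}}$ of $V$ realizing the $\Pi^0_{n-1}(l_1, *, l_2, *, \ldots)$-type of $\bar{a}$ in $\mf{A}$ with $\mf{B}_{\bar{a}} \not\models \phi$, hence $\mf{B}_{\bar{a}} \models V \cup \{\neg\phi\}$. Setting $R = \{\mf{B}_{\bar{a}} \mid \bar{a}~\text{is a}~k\text{-tuple of}~\mf{A}\}$ gives, by construction, an $(n; k, l_1, *, l_2, *, \ldots)$-ary cover of $\mf{A}$ all of whose members model $V \cup \{\neg\phi\}$; but $\mf{A} \models \phi$, so $\mf{A} \not\models \neg\phi$, showing $\neg\phi$ is not $PCE(n; k, l_1, *, l_2, *, \ldots)$ modulo $V$.

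I expect no deep obstacle here, as the duality is essentially a definitional unwinding that matches the existential quantifier over $R$ in the cover condition against the universal quantifier over $V$-models realizing the type in the $PSC$ condition. The one point requiring care is to check that the collection $R$ built in the reverse direction genuinely is an $(n; k, l_1, *, l_2, *, \ldots)$-ary cover in the precise sense of the cover definition (one realizing structure per $k$-tuple, with \emph{no} substructure requirement --- consistent with the accompanying remark that members of $R$ need not be substructures of $\mf{A}$), and, symmetrically, that the cover handed to us in the forward direction supplies for each tuple a $V$-model of $\neg\phi$ of exactly the kind that defeats the $PSC$ condition.
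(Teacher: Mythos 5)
Your proposal is correct and is exactly the argument the paper intends: the paper states this lemma without proof as an analogue of Lemma~\ref{lemma:PSC(k)-PCE(k)-duality}, and your double-contrapositive unwinding (with ``a $V$-model realizing the $\Pi^0_{n-1}(l_1, *, l_2, *, \ldots)$-type'' replacing ``a substructure containing the crux'') is the direct adaptation of that lemma's proof, including the correct observation that members of the cover need not be substructures of $\mf{A}$.
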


\begin{futureworktheorem}\label{theorem:glt-for-sigma-n-and-pi-n}
Given a theory $V$, each of the following holds.
\begin{enumerate}[leftmargin=*,nosep]
\item A formula $\phi(\bar{x})$ is $PSC(n; k, l_1, *, l_2, *, \ldots)$
  modulo $V$ iff $\phi(\bar{x})$ is equivalent modulo $V$ to a finite
  disjunction of $\Sigma^0_n(k, l_1, *, l_2, *, \ldots)$ formulae, all
  of whose free variables are among $\bar{x}$.
\item A formula $\phi(\bar{x})$ is $PCE(n; k, l_1, *, l_2, *, \ldots)$
  modulo $V$ iff $\phi(\bar{x})$ is equivalent modulo $V$ to a finite
  conjunction of $\Pi^0_n(k, l_1, *, l_2, *, \ldots)$ formulae, all of
  whose free variables are among $\bar{x}$.
\item A theory $T(\bar{x})$ is $PCE(n; k, l_1, *, l_2, *, \ldots)$
  modulo $V$ iff $T(\bar{x})$ is equivalent modulo $V$ to a theory of
  $\Pi^0_n(k, l_1, *, l_2, *, \ldots)$ formulae, all of whose free
  variables are among $\bar{x}$.
\end{enumerate}
\end{futureworktheorem}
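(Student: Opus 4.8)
The plan is to prove these statements in the same spirit as Theorem~\ref{theorem:ext-chars} and Theorem~\ref{theorem:glt(k)}, isolating the $PCE$-for-theories statement (part~(3)) as the core and deriving the two formula statements (parts~(1) and~(2)) from it by compactness and duality. Fix consistent theories $V$ and $T$, and let $\Gamma$ be the set of all $\Pi^0_n(k, l_1, *, l_2, *, \ldots)$ sentences $\varphi$ with $(V\cup T)\vdash\varphi$. The heart of the argument is a ``consequence lemma'' analogous to Lemma~\ref{lemma:exis-of-k-ary-cover-in-an-elem-ext-satisfying-T}: for every $\mf{A}\models V$, one has $\mf{A}\models\Gamma$ iff there is an $(n; k, l_1, *, l_2, *, \ldots)$-ary cover $R$ of $\mf{A}$ with $\mf{B}\models(V\cup T)$ for every $\mf{B}\in R$. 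A pleasant simplification over the $n=2$ case is that, by the preceding remark, the members of such a cover need \emph{not} be substructures of $\mf{A}$; consequently neither $\lambda$-saturation nor ascending elementary chains are needed, and plain compactness suffices.

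For the ``only if'' direction of the consequence lemma, fix a $k$-tuple $\bar a$ of $\mf{A}$ and consider $V\cup T$ together with the $\Pi^0_{n-1}(l_1, *, l_2, *, \ldots)$-type $p(\bar x)$ of $\bar a$ in $\mf{A}$. I claim this is consistent: otherwise, since $p(\bar x)$ is closed under finite conjunctions (two $\Pi^0_{n-1}(l_1,*,\ldots)$ formulae merge into one by identifying the matching fixed blocks $l_1, l_2,\ldots$ and padding the intervening $*$-blocks), the compactness theorem (Theorem~\ref{theorem:compactness}) yields a single $\psi(\bar x)\in p(\bar x)$ with $(V\cup T)\vdash\neg\psi(\bar x)$, hence $(V\cup T)\vdash\forall^k\bar x\,\neg\psi(\bar x)$ by $\forall$-introduction. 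But $\neg\psi$ is a $\Sigma^0_{n-1}(l_1,*,\ldots)$ formula, so $\forall^k\bar x\,\neg\psi$ is exactly a $\Pi^0_n(k, l_1, *, l_2, *, \ldots)$ sentence and therefore lies in $\Gamma$; evaluating at $\bar a$ contradicts $\psi(\bar x)\in p(\bar x)$. Any model $\mf{B}_{\bar a}$ of $V\cup T\cup p(\bar x)$ then realizes the $\Pi^0_{n-1}(l_1,*,\ldots)$-type of $\bar a$, and $R=\{\mf{B}_{\bar a}\mid \bar a\ \text{a $k$-tuple of}\ \mf{A}\}$ is the desired cover. For the ``if'' direction, take any $\varphi=\forall^k\bar x\,\theta(\bar x)\in\Gamma$ with $\theta$ a $\Sigma^0_{n-1}(l_1,*,\ldots)$ formula, and a $k$-tuple $\bar a$; pick $\mf{B}_{\bar a}\in R$ realizing the type of $\bar a$ via a tuple $\bar b$. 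Since $\mf{B}_{\bar a}\models(V\cup T)\vdash\varphi$ we get $(\mf{B}_{\bar a},\bar b)\models\theta$, and because the $\Pi^0_{n-1}(l_1,*,\ldots)$-type of $\bar b$ contains that of $\bar a$, the contrapositive (applied to the $\Pi^0_{n-1}$ sentence $\neg\theta$) gives $(\mf{A},\bar a)\models\theta$; as $\bar a$ was arbitrary, $\mf{A}\models\varphi$.

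With the consequence lemma in hand, part~(3) follows exactly as for Theorem~\ref{theorem:ext-chars}(\ref{theorem:char-of-PCE(k)-theories}): if $T$ is $PCE(n; k, l_1, *, l_2, *, \ldots)$ modulo $V$ then $(V\cup T)\vdash\Gamma$ trivially, while any $\mf{A}\models(V\cup\Gamma)$ admits, by the lemma, a cover of models of $V\cup T$, so $\mf{A}\models T$; hence $T$ is equivalent modulo $V$ to $\Gamma$, a theory of $\Pi^0_n(k,\ldots)$ formulae. The converse uses that each $\Pi^0_n(k,\ldots)$ sentence is itself $PCE(n; k,\ldots)$ modulo $V$ (the single-sentence ``if'' direction of the lemma, taking $T=\{\varphi\}$) together with a routine closure-under-intersection lemma mirroring Lemma~\ref{lemma:PCE(k)-is-closed-under-intersections}. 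Part~(2) then specializes part~(3) to $T=\{\phi\}$: compactness reduces the equivalent $\Pi^0_n(k,\ldots)$ theory to a finite subset, giving a finite conjunction. Part~(1) is obtained by the duality of $PSC(n;k,\ldots)$ and $PCE(n;k,\ldots)$ stated in the lemma immediately above, applied to $\neg\phi$: a finite conjunction of $\Pi^0_n(k,\ldots)$ formulae negates to a finite disjunction of $\Sigma^0_n(k,\ldots)$ formulae. The free-variable versions reduce to the sentence case by substituting fresh constants for $\bar x$, exactly as in Section~\ref{section:background-classical-properties}.

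The step I expect to be most delicate is not model-theoretic but syntactic: the precise block-accounting showing that the $\Pi^0_{n-1}(l_1,*,\ldots)$-type is closed under finite conjunction, and that $\forall^k\bar x\,\neg\psi$ lands \emph{exactly} in the class $\Pi^0_n(k, l_1, *, l_2, *, \ldots)$ rather than in some coarser prefix class. This is where the specific alternation of fixed blocks $l_i$ with unconstrained $*$-blocks is used essentially, and it is the only place where padding with dummy quantifiers and identification of variables across the matched fixed blocks must be verified with care.
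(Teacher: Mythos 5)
Your proposal is correct, and its overall architecture (a ``consequence lemma'' for the set $\Gamma$ of $\Pi^0_n(k, l_1, *, \ldots)$ consequences, followed by compactness, closure under intersection, and duality) is exactly the template the paper gestures at when it says these results are shown ``analogous to'' Lemma~\ref{lemma:PSC(k)-PCE(k)-duality} and Theorem~\ref{theorem:glt(k)}. Where you genuinely depart from that template is in the proof of the key lemma itself: the paper's $n=2$ arguments (Lemmas~\ref{lemma:k-ary-covers-for-saturated-models} and~\ref{lemma:exis-of-k-ary-cover-in-an-elem-ext-satisfying-T}) must produce cover members that are \emph{substructures} of $\mf{A}$ or of an elementary extension, and this forces the use of $\mu$-saturation (via Proposition~\ref{prop:saturation-results}) or transfinite elementary chains to embed the compactness-witness back into the ambient structure. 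You correctly observe that Definition of the $(n; k, l_1, *, \ldots)$-ary cover, together with the remark following it, replaces embedding by mere \emph{realization} of the $\Pi^0_{n-1}(l_1, *, \ldots)$-type, so the model of $V \cup T \cup p(\bar{x})$ delivered by compactness already is a cover member and no embedding step is needed. This buys a strictly more elementary proof (compactness only), at the price that it does not specialize back to the substructure-based $\glt{k}$ for $n=2$; the two notions of cover genuinely differ there. Your flagged syntactic point checks out: the fixed universal blocks of two $\Pi^0_{n-1}(l_1,*,\ldots)$ formulae can be identified (since $\forall\bar{y}\,\alpha \wedge \forall\bar{z}\,\beta \equiv \forall\bar{y}(\alpha \wedge \beta[\bar{z}\mapsto\bar{y}])$) while the unconstrained existential blocks are concatenated, so the type is closed under finite conjunction up to equivalence, and $\forall^k\bar{x}\,\neg\psi$ lands precisely in $\Pi^0_n(k, l_1, *, \ldots)$. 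Note also that your part~(2) correctly yields only a finite \emph{conjunction} rather than a single formula, matching the statement, since the fixed existential blocks $\exists^{l_i}$ cannot be merged across conjuncts.
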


The above theorem answers in part, the question raised in the last
future work mentioned above.  It also gives new semantic
characterizations of $\Sigma^0_n$ and $\Pi^0_n$ sentences, via the
properties $PSC_n = \bigcup_{k, l_1, l_2, \ldots \in \mathbb{N}}
PSC(n; k, l_1, *, l_2, *, \ldots)$ and $PCE_n = \bigcup_{k, l_1, l_2,
  \ldots \in \mathbb{N}} PCE(n; k, l_1, *, l_2, *, \ldots)$ \linebreak
respectively.

A natural direction for future work that is suggested by
Theorem~\ref{theorem:glt-for-sigma-n-and-pi-n} is the investigation of
suitable variants of $PSC(n; k, l_1, *, l_2, *, \ldots)$ and $PCE(n;
k, l_1, *, l_2, *, \ldots)$ that respectively characterize
$\Sigma^0_n(k, l_1, *, l_2, *, \ldots)$ and $\Pi^0_n(k, l_1, *, l_2,
*, \ldots)$ formulae exactly; likewise, an investigation of whether
$PSC(n; k, l_1, *, l_2, *, \ldots)$ or some suitable variant of it,
characterizes theories of $\Sigma^0_n(k, l_1, *, l_2, *, \ldots)$
formulae. If these characterizations are not obtained in general, then
we would like to get them at least under plausible hypotheses
(cf. Hypothesis~\ref{hypothesis:H}).  Of course, the question of
characterizing $\Sigma^0_n$ and $\Pi^0_n$ sentences and theories in
which the numbers of quantifiers in all blocks are given, still
remains largely. However, observe that for the case of $n = 2$, the
properties $PSC(n; k, l)$, resp.\ $PCE(n; k, l)$, do give semantic
characterizations of finite disjunctions of $\Sigma^0_2$ sentences,
resp.\ finite conjunctions of $\Pi^0_2$ sentences, in which the
numbers of quantifiers in the first and second blocks are given
numbers $k$ and $l$ respectively. This, in a sense, gives a finer
characterization of $\Sigma^0_2$ and $\Pi^0_2$ than the one given by
Theorem~\ref{theorem:glt(k)}. But, we note that $PSC(n; k, l)$ and
$PCE(n; k, l)$ are not combinatorial in nature.  We would therefore
like to investigate whether these notions, and more generally, the
notions of $PSC(n; k, l_1, *, l_2, *, \ldots)$ and $PCE(n; k, l_1, *,
l_2, *, \ldots)$ have combinatorial, or even purely algebraic
equivalents, so that the ``syntactical flavour'' currently in the
definitions of these notions, is eliminated.

\part{Finite Model Theory}
\chapter{Background and preliminaries}\label{chapter:background-FMT}

In this part of the thesis, we consider only finite structures over
finite vocabularies $\tau$ that are relational, i.e. vocabularies that
do not contain any constant or function symbols, unless explicitly
stated otherwise. All the classes of structures that we consider are
thus classes of finite relational structures. We denote classes of
structures by $\cl{S}$ possibly with numbers as subscripts, and assume
these to be \emph{closed under isomorphisms}.  We consider two logics
in this part of the thesis, one $\fo$, and the other, an extension of
it called \emph{monadic second order} logic, denoted $\mso$. We use
the notation $\mc{L}$ to mean either $\fo$ or $\mso$. Any \emph{notion
  or result stated for $\mc{L}$ means that the notion or result is
  stated for both $\fo$ and $\mso$}.  The classic references for the
background from finite model theory presented in this chapter
are~\cite{libkin,gradel-vardi-et-al,ebbinghaus-flum}.

\section{Syntax and semantics of $\mso$}

\tbf{Syntax}: The syntax of $\mso$\sindex[term]{monadic second order
  logic} extends that of $\fo$ by using $\mso$ variables that range
over subsets of the universes of structures, and using quantification
(existential and universal) over these variables. We denote $\mso$
variables using the capital letter $X$, possibly with numbers as
subscripts. A sequence of $\mso$ variables is denoted as
$\bar{X}$. For a vocabulary $\tau$, the notions of $\mso$ terms and
$\mso$ formulae, and their free variables, are defined as follows.
\begin{enumerate}[leftmargin=*,nosep]
\item An $\mso$ term\sindex[term]{term} over $\tau$ is an $\fo$ term
  over $\tau$, i.e. either a constant or an $\fo$ variable. A term
  that is a variable $x$ has only one free variable, namely $x$. A
  constant has no free variables.\sindex[term]{free variable}
\item An atomic $\mso$ formula\sindex[term]{atomic formula} over
  $\tau$ is either of the following.
\begin{itemize}[nosep]
\item An atomic $\fo$ formula over $\tau$, i.e. either $t_1 = t_2$ or
  $R(t_1, \ldots, t_n)$, where $t_1, \ldots, t_n$ are $\mso$ terms
  over $\tau$, and $R$ is a relation symbol of $\tau$ of arity
  $n$. The free variables of these are all $\fo$ variables, and are as
  defined in
  Section~\ref{section:background:FO-syntax-semantics}. These formulae
  have no free $\mso$ variables.
\item $X(t)$ where $X$ is an $\mso$ variable and $t$ is an $\mso$ term
  over $\tau$. This formula has at most one free $\fo$ variable,
  namely the free variable of $t$ (if any), and has exactly one free
  $\mso$ variable, namely $X$.
\end{itemize} 
\item Boolean combinations of $\mso$ formulae using the boolean
  connectives $\wedge, \vee$ and $\neg$ are $\mso$ formulae. The free
  variables of such formulae are defined analogously to those of
  boolean combinations of $\fo$ formulae (see
  Section~\ref{section:background:FO-syntax-semantics}).
\item Given an $\mso$ formula $\varphi$, the formulae $\exists x
  \varphi, \forall x \varphi, \exists X \varphi$ and $\forall X
  \varphi$ are all $\mso$ formulae. The free $\fo$ variables of
  $\exists x \varphi$ and $\forall x \varphi$ are the free $\fo$
  variables of $\varphi$, except for $x$. The free $\mso$ variables of
  these formulae are exactly those of $\varphi$. The free $\fo$
  variables of $\exists X \varphi$ and $\forall X \varphi$ are exactly
  the free $\fo$ variables of $\varphi$, while the free $\mso$
  variables of these formulae are exactly those of $\varphi$, except
  for $X$.\sindex[term]{free variable}
\end{enumerate} 
We denote an $\mso$ formula $\varphi$ with free $\fo$ variables among
$\bar{x}$, and free $\mso$ variables among $\bar{X}$ as
$\varphi(\bar{x}, \bar{X})$. For $\varphi$ as just mentioned, if
$\bar{X}$ is empty, i.e. if $\varphi$ has no free $\mso$ variables,
then we denote $\varphi$ as $\varphi(\bar{x})$.  Like $\fo$ formulae,
an $\mso$ formula with no free variables is called a
\emph{sentence}\sindex[term]{sentence}, and an $\mso$ formula with no
quantifiers is called
\emph{quantifier-free}\sindex[term]{quantifier-free}. Again, like for
$\fo$ formulae, we denote $\mso$ formulae using the Greek letters
$\phi, \varphi, \psi, \chi, \xi, \gamma, \alpha$ or
$\beta$.\sindex[symb]{$\alpha, \beta, \gamma, \xi, \phi, \varphi,
  \chi, \psi$}

Before looking at the semantics, we define the important notion of
\emph{quantifier-rank}, or simply \emph{rank}\sindex[term]{rank}, of
an $\mso$ formula $\varphi$, denoted $\rank{\varphi}$. The definition
is by structural induction.
\begin{enumerate}[nosep]
\item If $\varphi$ is quantifier-free, then $\rank{\varphi} = 0$.
\item If $\varphi = \varphi_1 \wedge \varphi_2$, then $\rank{\varphi}
  = \text{max}(\rank{\varphi_1}, \rank{\varphi_2})$. The same holds if
  $\varphi = \varphi_1 \vee \varphi_2$.
\item If $\varphi = \neg \varphi_1$, then $\rank{\varphi} =
  \rank{\varphi_1}$.
\item If $\varphi$ is any of $\exists x \varphi_1$, $\forall x
  \varphi_1$, $\exists X \varphi_1$ or $\forall X \varphi_1$, then
  $\rank{\varphi} = 1 + \rank{\varphi_1}$.
\end{enumerate}

The above definition also defines the rank of an $\fo$ formula, since
every $\fo$ formula is also an $\mso$ formula.

\begin{remark}
The $\mso$ formulae $\varphi$ that we consider in this part of the
thesis always have \emph{only $\fo$ free variables}, and \emph{no
  $\mso$ free variables} (although of course, the $\mso$ formulae that
build up $\varphi$ surely would contain free $\mso$
variables). 
\end{remark}

Since the semantics of $\mso$ is defined inductively, we consider,
only for the purposes of defining the semantics, $\mso$ formulae with
free $\mso$ variables, in addition to free $\fo$ variables.

\tbf{Semantics}: Given a $\tau$-structure $\mf{A}$ and an $\mso$
formula $\varphi(\bar{x}, \bar{X})$, we define the notion of truth of
$\varphi(\bar{x}, \bar{X})$ for a given assignment $\bar{a}$ of
elements of $\mf{A}$, to $\bar{x}$ and a given assignment $\bar{A}$ of
subsets of elements of $\mf{A}$, to $\bar{X}$. We denote by $(\mf{A},
\bar{a}, \bar{A}) \models \varphi(\bar{x},
\bar{X})$,\sindex[symb]{$\models$}\sindex[term]{truth in a structure}
that $\varphi(\bar{x}, \bar{X})$ is true in $\mf{A}$ for the
assignments $\bar{a}$ to $\bar{x}$ and $\bar{A}$ to $\bar{X}$, and
call $(\mf{A}, \bar{a}, \bar{A})$ a model of $\varphi(\bar{x},
\bar{X})$\sindex[term]{model}. We give the semantics only for the
syntactic features of $\mso$ that are different from those of
$\fo$. Below, $\bar{X} = (X_1, \ldots, X_n)$ and $\bar{A} = (A_1,
\ldots, A_n)$.

\begin{itemize}[nosep]
\item If $\varphi(\bar{x}, \bar{X})$ is the formula $X_i(t)$, then
  $(\mf{A}, \bar{a}, \bar{A}) \models \varphi(\bar{x}, \bar{X})$ iff
  $t^{\mf{A}}(\bar{a}) \in A_i$.
\item If $\varphi(\bar{x}, \bar{X})$ is the formula $\exists X_{n+1}
  \varphi_1(\bar{x}, \bar{X}, X_{n+1})$, then $(\mf{A}, \bar{a},
  \bar{A}) \models \varphi(\bar{x}, \bar{X})$ iff there exists
  $A_{n+1} \subseteq \univ{\mf{A}}$ such that $(\mf{A}, \bar{a},
  \bar{A}, A_{n+1}) \models \varphi(\bar{x}, \bar{X}, X_{n+1})$.
\item If $\varphi(\bar{x}, \bar{X})$ is the formula $\forall X_{n+1}
  \varphi_1(\bar{x}, \bar{X}, X_{n+1})$, then $(\mf{A}, \bar{a},
  \bar{A}) \models \varphi(\bar{x}, \bar{X})$ iff for all $A_{n+1}
  \subseteq \univ{\mf{A}}$, it is the case that $(\mf{A}, \bar{a},
  \bar{A}, A_{n+1}) \models \varphi(\bar{x}, \bar{X}, X_{n+1})$.
\end{itemize}

If $\varphi(\bar{a})$ is an $\mso$ formula with no $\mso$ variables,
then we denote the truth of $\varphi(\bar{x})$ in $\mf{A}$ for an
assignment $\bar{a}$ to $\bar{x}$ as $(\mf{A}, \bar{a}) \models
\varphi(\bar{x})$. For a sentence $\varphi$, we denote the truth of
$\varphi$ in $\mf{A}$ as $\mf{A} \models \varphi$.

\section[Adaptations of classical model theory concepts to finite model theory]{Adaptations of classical model theory concepts to the \\finite model theory setting}\label{section:background-FMT-adaptation-from-CMT-to-FMT}

We assume familiarity with the notions introduced in
Chapter~\ref{chapter:background-CMT}.  We adapt some of these notions
to versions of these \emph{over} classes of structures. For other
notions, they are exactly as defined in
Chapter~\ref{chapter:background-CMT}.

\vspace{3pt}
\tbf{1. Consistency, validity, entailment, and equivalence \emph{over}
  classes of structures}

Given a non-empty class $\cl{S}$ of structures, we say a formula
$\varphi(\bar{x})$ is \emph{satisfiable over
  $\cl{S}$}\sindex[term]{satisfiable} if there exists structure
$\mf{A} \in \cl{S}$ and a tuple $\bar{a}$ of $\mf{A}$ such that
$|\bar{a}| = |\bar{x}|$ and $(\mf{A}, \bar{a}) \models
\varphi(\bar{x})$. We say $\varphi(\bar{x})$ is \emph{unsatisfiable
  over $\cl{S}$} if $\varphi(\bar{x})$ is not satisfiable over
$\cl{S}$.  We say $\psi(\bar{x})$ \emph{entails $\varphi(\bar{x})$
  over $\cl{S}$}\sindex[term]{entails} if any model $(\mf{A},
\bar{a})$ of $\psi(\bar{x})$ such that $\mf{A} \in \cl{S}$ is also a
model of $\varphi(\bar{x})$.  We say $\psi(\bar{x})$ and
$\varphi(\bar{x})$ are equivalent over $\cl{S}$, or simply
\emph{$\cl{S}$-equivalent},\sindex[term]{$\cl{S}$-equivalent} if
$\psi(\bar{x})$ entails $\varphi(\bar{x})$ over $\cl{S}$, and
vice-versa.

\vspace{3pt} \tbf{2. $(m, \mc{L})$-types}: ~ Given $m \in \mathbb{N}$,
a $\tau$-structure $\mf{A}$ and a $k$-tuple $\bar{a}$ from $\mf{A}$,
the \emph{$(m, \mc{L})$-type\sindex[term]{$(m, \mc{L})$-type} of
  $\bar{a}$ in $\mf{A}$}, denoted $\ltp{\mf{A}}{\bar{a}}{m}(x_1,
\ldots, x_k)$,\sindex[symb]{$\ltp{\mf{A}}{\bar{a}}{m}(x_1, \ldots,
  x_k)$} is the set of all $\mc{L}(\tau)$ formulae of rank at most
$m$, whose free variables are among $x_1, \ldots, x_k$, and that are
true of $\bar{a}$ in $\mf{A}$. We denote by
$\lth{m}{\mf{A}}$,\sindex[symb]{$\lth{m}{\mf{A}}$} the set of all
$\mc{L}(\tau)$ sentences of rank at most $m$ that are true in
$\mf{A}$.  Given a $\tau$-structure $\mathfrak{B}$, we say that
$\mf{A}$ and $\mf{B}$ are \emph{$(m, \mc{L})$-equivalent}, denoted
$\mathfrak{A} \lequiv{m} \mathfrak{B}$
\sindex[symb]{$\lequiv{m}, \equiv_m$} if $\lth{m}{\mathfrak{A}} =
\lth{m}{\mf{B}}$. Observe that if $\bar{a}$ and $\bar{b}$ are
$k$-tuples from $\mf{A}$ and $\mf{B}$ respectively, then
$\ltp{\mf{A}}{\bar{a}}{m}(x_1, \ldots, x_k) =
\ltp{\mf{B}}{\bar{b}}{m}(x_1, \ldots, x_k)$ iff $(\mathfrak{A},
\bar{a}) \lequiv{m} (\mf{B}, \bar{b})$. If $\mc{L} = \fo$, then we
also denote $\fequiv{m}$ simply as $\equiv_m$,
\sindex[symb]{$\lequiv{m}, \equiv_m$} following standard notation in
the literature. It is easy to see that $\lequiv{m}$ is an equivalence
relation over all structures.  Following is an important result
concerning the $\lequiv{m}$ relation.

\begin{proposition}[Proposition 7.5, ref.~\cite{libkin}]\label{prop:finite-index-of-lequiv(m)-relation}
There exists a computable function $f: \mathbb{N} \rightarrow
\mathbb{N}$ such that for each $m \in \mathbb{N}$, the index of the
$\lequiv{m}$ relation is at most $f(m)$.
\end{proposition}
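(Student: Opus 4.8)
The plan is to bound the index of $\lequiv{m}$ by counting, \emph{up to logical equivalence}, the $\mc{L}(\tau)$-formulae of quantifier rank at most $m$ over the fixed finite relational vocabulary $\tau$. The decisive structural fact I would isolate first is a finiteness statement: for every $m$ and every pair $(k,l)$ there are, up to logical equivalence, only finitely many $\mc{L}(\tau)$-formulae of rank at most $m$ whose free first-order variables are among $x_1,\dots,x_k$ and whose free monadic variables are among $X_1,\dots,X_l$, and this number is bounded by a computable function $g(m,k,l)$. That $\tau$ is finite \emph{and} relational is exactly what makes this work, since with function symbols the set of terms, hence of atoms, would already be infinite. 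The cleanest conceptual packaging is via rank-$m$ \emph{Hintikka formulae} (equivalently, rank-$m$ $\mc{L}$-types): one constructs by induction on $m$ a formula $\mathsf{Hin}^m_{(\mf{A},\bar a)}$ for each structure-with-parameters, so that (i) $(\mf{A},\bar a)$ satisfies exactly one rank-$m$ Hintikka formula, namely its own, (ii) $(\mf{A},\bar a)\lequiv{m}(\mf{B},\bar b)$ iff $\mathsf{Hin}^m_{(\mf{A},\bar a)}=\mathsf{Hin}^m_{(\mf{B},\bar b)}$, and (iii) every rank-$\le m$ formula is equivalent to a disjunction of rank-$m$ Hintikka formulae. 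The index of $\lequiv{m}$ on structures is then at most the number of $0$-ary rank-$m$ Hintikka formulae, and it remains to bound that number.

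The induction on $m$ proceeds as follows. For the base case $m=0$ I would enumerate the atomic formulae with free variables among $x_1,\dots,x_k,X_1,\dots,X_l$: equality atoms $x_i=x_j$, relational atoms $R(x_{i_1},\dots,x_{i_r})$ for $R\in\tau$, and membership atoms $X_j(x_i)$. Since $\tau$ is finite, relational and of bounded arity, there are finitely many such atoms, say $p_0(k,l)$ of them computably, and a quantifier-free formula is a Boolean combination of these, so up to equivalence there are at most $2^{2^{p_0(k,l)}}$ of them. For the inductive step I would note that, after renaming bound variables, every rank-$\le m{+}1$ formula is a Boolean combination of atoms and of components of the form $\exists x_{k+1}\psi$ or $\exists X_{l+1}\psi$ with $\psi$ of rank at most $m$ (universal quantifiers absorbed via $\forall=\neg\exists\neg$). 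Logically equivalent $\psi$ yield equivalent components, so the number of inequivalent components is at most $p_0(k,l)+g(m,k{+}1,l)+g(m,k,l{+}1)$, and counting Boolean combinations gives
\[
g(m{+}1,k,l)\ \le\ 2^{\,2^{\,p_0(k,l)+g(m,k+1,l)+g(m,k,l+1)}}.
\]
For $\mc{L}=\fo$ the term $g(m,k,l{+}1)$ is dropped and the index $l$ disappears. This is a primitive recursion, so $g$ is computable (albeit of tower-of-exponentials growth), and the MSO case runs in exact parallel to the FO case with the single extra clause for set quantification.

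Finally, specializing to sentences ($k=l=0$), a structure's class under $\lequiv{m}$ is determined by $\lth{m}{\cdot}$, and two structures are $\lequiv{m}$ iff they satisfy the same rank-$\le m$ sentences; hence the number of classes is at most the number of consistent complete rank-$m$ theories, which is at most $2^{g(m,0,0)}$. Setting $f(m)=2^{g(m,0,0)}$ gives the required computable bound. I expect the main obstacle to be the inductive step, specifically the two points that make the counting legitimate: justifying the normal-form decomposition of a rank-$(m{+}1)$ formula into a Boolean combination of atoms and singly-quantified rank-$\le m$ components (which requires a careful bound-variable renaming so that each component's free variables stay within $x_1,\dots,x_{k+1},X_1,\dots,X_{l+1}$), and verifying that passing to logical equivalence genuinely collapses the a priori infinite syntactic supply to the finite, computably bounded count. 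Everything else—the base case and the Hintikka-formula bookkeeping establishing (i)–(iii)—is routine.
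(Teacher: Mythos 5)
The paper does not prove this proposition itself but cites it as Proposition 7.5 of Libkin's \emph{Elements of Finite Model Theory}, whose proof is precisely the argument you give: induct on quantifier rank to show that over a finite relational vocabulary there are, up to logical equivalence, only computably many $\mc{L}$-formulae of rank at most $m$ with a fixed finite supply of free variables (the rank-$m$ Hintikka-formula / type bookkeeping), and then bound the index of $\lequiv{m}$ by the number of complete rank-$m$ theories. Your proposal is correct and matches that standard approach.
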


Given a class $\cl{S}$ of structures and $m \in \mathbb{N}$, we let
$\Delta_{\mc{L}}(m, \cl{S})$\sindex[symb]{$\Delta_{\mc{L}}(m,
  \cl{S})$} denote the set of all equivalence classes of the
$\lequiv{m}$ relation restricted to the structures in $\cl{S}$.  We
denote by $\Lambda_{\cl{S}, \mc{L}}: \mathbb{N} \rightarrow
\mathbb{N}$ \sindex[symb]{$\Lambda_{\cl{S}, \mc{L}}$} a fixed
computable function with the property that $\Lambda_{\cl{S},
  \mc{L}}(m) \ge |\Delta_{\mc{L}}(m, \cl{S})|$. The existence of
$\Lambda_{\cl{S}, \mc{L}}$ is guaranteed by
Proposition~\ref{prop:finite-index-of-lequiv(m)-relation}.

The notion of $\lequiv{m}$ has a characterization in terms of
\emph{Ehrenfeucht-Fr{\"a}iss{\'e}} games for $\mc{L}$. We describe
these in the next section.


\section{$\mc{L}$-Ehrenfeucht-Fr\"aiss\'e games}

We first define the notion of \emph{partial isomorphism} that is
crucially used in the definition of the games we present below. We
assume for this section, that the vocabulary $\tau$ possibly contains
constant symbols, in addition to relation symbols. Let $\mf{A},
\mf{B}$ be two $\tau$-structures, and $\bar{a}=(a_1, \ldots, a_m)$ and
$\bar{b} = (b_1, \ldots, b_m)$ be two $m$-tuples from $\mf{A}$ and
$\mf{B}$ respectively. Then $(\bar{a}, \bar{b})$ defines a partial
isomorphism between $\mf{A}$ and $\mf{B}$ if the following are true.

\begin{enumerate}[nosep]
\item For $1\leq i, j \leq m$, we have $a_i = a_j$ iff $b_i = b_j$.
\item For every constant symbol $c \in \tau$ and $1 \leq i \leq m$, we
  have $a_i = c^{\mf{A}}$ iff $b_i = c^{\mf{B}}$.
\item For every $r$-ary relation symbol $R \in \tau$ and every
  sequence $(i_1, \ldots, i_r)$ of numbers from $\{1, \ldots, m\}$, we
  have $(a_{i_1}, \ldots a_{i_r}) \in R^{\mf{A}}$ iff $(b_{i_1},
  \ldots b_{i_r}) \in R^{\mf{B}}$.
\end{enumerate}

If $\tau$ contains no constant symbols, then the map $a_i \mapsto b_i$
for $1 \leq i \leq n$ is an isomorphism from the substructure of
$\mf{A}$ induced by $\{a_1, \ldots, a_r\}$ to the substructure of
$\mf{B}$ induced by $\{b_1, \ldots, b_r\}$.

\tbf{The $\fo$-Ehrenfeucht-Fr\"aiss\'e
  game}\sindex[term]{Ehrenfeucht-Fr\"aiss\'e game}

The $\fo$-Ehrenfeucht-Fr\"aiss\'e game, or simply the $\fef$ game, is
played on two given structures $\mf{A}$ and $\mf{B}$, and by two
players, called \emph{spoiler} and \emph{duplicator}.  The spoiler
tries to show that the two structures are non-isomorphic, while the
duplicator tries to show otherwise. The $\fef$ game of $m$ rounds
between $\mf{A}$ and $\mf{B}$ is as defined below. Each round consists
of the following steps.

\begin{enumerate}[nosep]
\item The spoiler chooses one of the structures and picks an element
  from it.
\item The duplicator responds by picking an element of the other structure.
\end{enumerate}

At the end of $m$ rounds, let $\bar{a} = (a_1, \ldots, a_m)$ be the
elements chosen from $\mf{A}$ and let $\bar{b} = (b_1, \ldots, b_m)$
be the elements chosen from $\mf{B}$. Let $c_1, \ldots, c_p$ be the
constant symbols of $\tau$, and let $\bar{c}^{\mf{A}} = (c_1^{\mf{A}},
\ldots, c_p^{\mf{A}})$ and $\bar{c}^{\mf{B}} = (c_1^{\mf{B}}, \ldots,
c_p^{\mf{B}})$. Call $(\bar{a}, \bar{b})$ as a \emph{play} of $m$
rounds of the $\fef$ game on $\mf{A}$ and $\mf{B}$. The duplicator is
said to \emph{win} the play $(\bar{a}, \bar{b})$ iff $((\bar{a},
\bar{c}^{\mf{A}}), (\bar{b}, \bar{c}^{\mf{B}}))$ is a partial
isomorphism between $\mf{A}$ and $\mf{B}$. An \emph{$m$-round strategy
  for the duplicator in the $\fef$ game} on $\mf{A}$ and $\mf{B}$ is a
function $\tbf{S}: \bigcup_{i = 0}^{i = m-1} \big( (\univ{\mf{A}}
\times \univ{\mf{B}})^i \times (\univ{\mf{A}} \cup \univ{\mf{B}})
\big) \rightarrow (\univ{\mf{A}} \cup \univ{\mf{B}})$ such that for
all $i \in \{0, \ldots, m-1\}$, for all $(a_1, b_1), \ldots, (a_i,b_i)
\in (\univ{\mf{A}} \times \univ{\mf{B}})$ and all $d \in
(\univ{\mf{A}} \cup \univ{\mf{B}})$, it is the case that
$\tbf{S}((a_1, b_1), \ldots, (a_i, b_i), d) \in \univ{\mf{A}}$ iff $d
\in \univ{\mf{B}}$.  The duplicator is said to have a \emph{winning
  strategy} in the $m$-round $\fef$ game on $\mf{A}$ and $\mf{B}$ if
there exists an $m$-round strategy $\tbf{S}$ for the duplicator in the
$\fef$ game on $\mf{A}$ and $\mf{B}$, such that the duplicator wins
every play of $m$ rounds of the $\fef$ game on $\mf{A}$ and $\mf{B}$,
in which the duplicator responds in accordance with $\tbf{S}$.

The following theorem shows that the existence of a winning strategy
for the duplicator in the $\fef$ game of $m$-rounds on $\mf{A}$ and
$\mf{B}$ characterizes $(m, \fo)$-equivalence of $\mf{A}$ and $\mf{B}$
(see Theorem 3.9 of~\cite{libkin}).

\begin{theorem}[Ehrenfeucht-Fr\"aiss\'e]\label{theorem:foef-theorem}
Let $\mf{A}$ and $\mf{B}$ be two structures over a vocabulary that
possibly contains constant symbols. Let $\bar{a}$ and $\bar{b}$ be
given tuples of elements from $\mf{A}$ and $\mf{B}$ respectively. Then
$(\mf{A}, \bar{a}) \fequiv{m} (\mf{B}, \bar{b})$ iff the duplicator
has a winning strategy in the $m$-round $\fef$ game on $(\mf{A},
\bar{a})$ and $(\mf{B}, \bar{b})$.
\end{theorem}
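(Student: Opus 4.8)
The plan is to prove the biconditional by induction on the number of rounds $m$, stating it uniformly over all pairs of structures and all equal-length tuples $\bar{a}, \bar{b}$ so that the inductive hypothesis may be applied to one-step extensions of these tuples. Since $\tau$ may contain constants $c_1, \ldots, c_p$, I would throughout treat the interpretations $\bar{c}^{\mf{A}}, \bar{c}^{\mf{B}}$ as permanently appended to the played tuples, exactly as in the game's winning condition. For the base case $m = 0$, a winning (empty) play is by definition the requirement that $((\bar{a}, \bar{c}^{\mf{A}}), (\bar{b}, \bar{c}^{\mf{B}}))$ be a partial isomorphism, while $(\mf{A}, \bar{a}) \fequiv{0} (\mf{B}, \bar{b})$ asserts that $\bar{a}$ and $\bar{b}$ satisfy the same quantifier-free $\fo(\tau)$ formulas. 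I would verify that the three clauses defining partial isomorphism (equality among the chosen elements, agreement with the constant interpretations, and agreement on each relation) are precisely the conditions imposed by the atomic formulas $x_i = x_j$, $x_i = c$, and $R(x_{i_1}, \ldots, x_{i_r})$, and that agreement on atomic formulas lifts to all quantifier-free formulas through the boolean connectives. This settles $m = 0$.

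For the inductive step, assume the theorem for $m$ (over all structures and tuples). The direction \emph{winning strategy $\Rightarrow$ equivalence} I would prove by a subsidiary induction on the structure of a formula $\varphi(\bar{x})$ of rank at most $m+1$: the atomic cases follow as in the base case (restricting the strategy to zero further moves forces a partial isomorphism), the boolean cases are immediate, and the substantive case is $\varphi = \exists y\, \psi(\bar{x}, y)$ with $\rank{\psi} \le m$. Here, if $(\mf{A}, \bar{a}) \models \varphi$, I pick a witness $a \in \univ{\mf{A}}$, let the spoiler play $a$ in $\mf{A}$, and let $b$ be the duplicator's strategic response; the tail of the strategy is a winning $m$-round strategy on $(\mf{A}, \bar{a}, a)$ and $(\mf{B}, \bar{b}, b)$, so by the main inductive hypothesis these are $(m, \fo)$-equivalent, whence $(\mf{B}, \bar{b}, b) \models \psi$ and $(\mf{B}, \bar{b}) \models \varphi$. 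The reverse implication uses a spoiler move in $\mf{B}$ instead.

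The direction \emph{equivalence $\Rightarrow$ winning strategy} carries the real weight, and the tool I would set up first is the existence of \emph{characteristic} (Hintikka) formulas. Using Proposition~\ref{prop:finite-index-of-lequiv(m)-relation} together with the fact that over a finite vocabulary, for a fixed tuple of free variables, there are only finitely many $\fo$ formulas of rank at most $m$ up to logical equivalence, I would construct for each $(\mf{A}, \bar{a})$ a single rank-$m$ formula $\theta^m_{(\mf{A}, \bar{a})}(\bar{x})$ whose models are exactly the $(\mf{B}, \bar{b})$ with $(\mf{B}, \bar{b}) \fequiv{m} (\mf{A}, \bar{a})$, taking $\theta^m$ to be the conjunction of representatives of the rank-$\le m$ formulas in the $(m, \fo)$-type of $\bar{a}$ in $\mf{A}$. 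Granting this, the duplicator's strategy is defined one move at a time: when the spoiler picks $a$ in $\mf{A}$, the rank-$(m+1)$ formula $\exists y\, \theta^m_{(\mf{A}, \bar{a}, a)}(\bar{x}, y)$ holds in $(\mf{A}, \bar{a})$, hence by hypothesis in $(\mf{B}, \bar{b})$, producing a response $b$ with $(\mf{B}, \bar{b}, b) \fequiv{m} (\mf{A}, \bar{a}, a)$; the inductive hypothesis then yields a winning $m$-round strategy from the new position, and prepending the response $b$ gives the desired $(m+1)$-round strategy, with spoiler moves in $\mf{B}$ handled symmetrically.

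I expect the construction and verification of the characteristic formulas — in particular establishing their finiteness and confirming that $\theta^m_{(\mf{A}, \bar{a})}$ genuinely pins down the full $(m, \fo)$-type rather than merely a fragment of it — to be the principal obstacle, since this is what converts the purely semantic hypothesis $(\mf{A}, \bar{a}) \fequiv{m+1} (\mf{B}, \bar{b})$ into a usable single-formula statement that can drive the back-and-forth. The remaining bookkeeping, namely threading the strategies of successive positions together and tracking the appended constant interpretations, I regard as routine.
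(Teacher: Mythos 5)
Your proposal is correct, but note that the paper does not prove this theorem at all: it is recalled as background from the literature (the statement is accompanied by a pointer to Theorem 3.9 of Libkin's book), so there is no in-paper proof to compare against. Your argument is the standard one — induction on $m$, with the base case matching partial isomorphisms against quantifier-free formulas, the forward direction by structural induction on formulas of rank $\le m+1$, and the converse via rank-$m$ Hintikka formulas whose existence rests on the finiteness of the vocabulary and of the set of rank-$m$ formulas up to equivalence. The one point worth making explicit is the step you flag as the "principal obstacle": the conjunction of representatives of the rank-$\le m$ formulas true of $\bar{a}$ does pin down the full $(m,\fo)$-type, because for any formula $\varphi$ of rank $\le m$ either $\varphi$ or $\neg\varphi$ (which has the same rank) lies in the type, so any $(\mf{B},\bar{b})$ satisfying the characteristic formula agrees with $(\mf{A},\bar{a})$ on all of them. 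With that observed, the remaining assembly of per-position responses into a strategy function of the form the paper defines is indeed routine.
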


\tbf{The $\mso$-Ehrenfeucht-Fr\"aiss\'e
  game}\sindex[term]{Ehrenfeucht-Fr\"aiss\'e game}

The $\mso$-Ehrenfeucht-Fr\"aiss\'e game, or simply the $\mef$ game, is
similar to the $\fef$ game. It is played on two given structures
$\mf{A}$ and $\mf{B}$, and by two players, namely the spoiler and
duplicator. The difference with the $\fef$ game is that in an $\mef$
game of $m$ rounds on $\mf{A}$ and $\mf{B}$, in each round there are
two kinds of moves.

\begin{enumerate}[nosep]
\item Point move: This is like in the $\fef$ game on $\mf{A}$ and
  $\mf{B}$. The spoiler chooses one of the structures and picks an
  element from it. The duplicator responds by picking an element of
  the other structure.
\item Set move: The spoiler chooses a subset of elements from one of
  the structures.  The duplicator responds by picking a subset of
  elements of the other structure.
\end{enumerate}

At the end of $m$ rounds, let $\bar{a} = (a_1, \ldots, a_p)$ be the
elements chosen from $\mf{A}$ and let $\bar{b} = (b_1, \ldots, b_p)$
be the elements chosen from $\mf{B}$. Likewise, let $\bar{A} = (A_1,
\ldots, A_r)$ be the sets chosen from $\mf{A}$ and $\bar{B} = (B_1,
\ldots, B_r)$ be the sets chosen from $\mf{B}$ such that $p + r = m$.
Let $c_1, \ldots, c_s$ be the constant symbols of $\tau$, and let
$\bar{c}^{\mf{A}} = (c_1^{\mf{A}}, \ldots, c_s^{\mf{A}})$ and
$\bar{c}^{\mf{B}} = (c_1^{\mf{B}}, \ldots, c_s^{\mf{B}})$. Call
$((\bar{a}, \bar{A}), (\bar{b}, \bar{B}))$ as a play of $m$ rounds of
the $\mef$ game on $\mf{A}$ and $\mf{B}$.  The duplicator wins the
play $((\bar{a}, \bar{A}), (\bar{b}, \bar{B}))$ if $((\bar{a},
\bar{c}^{\mf{A}}), (\bar{b}, \bar{c}^{\mf{B}})$ is a partial
isomorphism between $(\mf{A}, \bar{A})$ and $(\mf{B}, \bar{B})$.  Note
that the latter implies $a_i \in A_j$ iff $b_i \in B_j$, for $i \in
\{1, \ldots, p\}$ and $j \in \{1, \ldots, r\}$.  We now define the
notion of a strategy for the duplicator in an $\mef$ game analogous to
that for an $\fef$ game. For a set $X$, let $2^X$ denote the powerset
of $X$. An \emph{$m$-round strategy for the duplicator in the $\mef$
  game} on $\mf{A}$ and $\mf{B}$ is a function $\tbf{S}: \bigcup_{i =
  0}^{i = m-1} \big( \big( (\univ{\mf{A}} \cup 2^{\univ{\mf{A}}})
\times (\univ{\mf{B}} \cup 2^{\univ{\mf{B}}}) \big)^i \times
(\univ{\mf{A}} \cup 2^{\univ{\mf{A}}} \cup \univ{\mf{B}} \cup
2^{\univ{\mf{B}}}) \big) \rightarrow (\univ{\mf{A}} \cup
2^{\univ{\mf{A}}} \cup \univ{\mf{B}} \cup 2^{\univ{\mf{B}}})$ such
that for all $i \in \{0, \ldots, m-1\}$, for all $(d_1, e_1), \ldots,
(d_i, e_i) \in \big( (\univ{\mf{A}} \cup 2^{\univ{\mf{A}}}) \times
(\univ{\mf{B}} \cup 2^{\univ{\mf{B}}}) \big)$ and all $d \in
(\univ{\mf{A}} \cup 2^{\univ{\mf{A}}} \cup \univ{\mf{B}} \cup
2^{\univ{\mf{B}}})$, it is the case that (i) $\tbf{S}((d_1, e_1),
\ldots, (d_i, e_i), d) \in (\univ{\mf{A}} \cup \univ{\mf{B}})$ iff $d
\in (\univ{\mf{A}} \cup \univ{\mf{B}})$, (ii) $\tbf{S}((d_1, e_1),
\ldots, (d_i, e_i), d) \in \univ{\mf{A}}$ iff $d \in \univ{\mf{B}}$,
and (ii) $\tbf{S}((d_1, e_1), \ldots, (d_i, e_i), d) \in
2^{\univ{\mf{A}}}$ iff $d \in 2^{\univ{\mf{B}}}$.  The duplicator is
said to have a \emph{winning strategy} in the $m$-round $\mef$ game on
$\mf{A}$ and $\mf{B}$ if there exists an $m$-round strategy $\tbf{S}$
for the duplicator in the $\mef$ game on $\mf{A}$ and $\mf{B}$ such
that the duplicator wins every play of $m$ rounds of the $\mef$ game
on $\mf{A}$ and $\mf{B}$, when the duplicator responds in accordance
with $\tbf{S}$.

Like Theorem~\ref{theorem:foef-theorem}, the following theorem shows
that the existence of a winning strategy for the duplicator in the
$\mef$ game of $m$-rounds on $\mf{A}$ and $\mf{B}$ characterizes $(m,
\mso)$-equivalence of $\mf{A}$ and $\mf{B}$. This theorem is stated
more generally in Theorem 7.7 of~\cite{libkin}. However, we need the
theorem only as stated below.

\begin{theorem}\label{theorem:msoef-theorem}
Let $\mf{A}$ and $\mf{B}$ be two structures over a vocabulary that
possibly contains constant symbols. Let $\bar{a}$ and $\bar{b}$ be
given tuples of elements from $\mf{A}$ and $\mf{B}$ respectively.
Then $(\mf{A}, \bar{a}) \mequiv{m} (\mf{B}, \bar{b})$ iff the
duplicator has a winning strategy in the $m$-round $\mef$ game on
$(\mf{A}, \bar{a})$ and $(\mf{B}, \bar{b})$.
\end{theorem}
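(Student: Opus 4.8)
The plan is to prove the statement by induction on the number of rounds $m$, exactly paralleling the proof of the $\fo$ analogue (Theorem~\ref{theorem:foef-theorem}), with set moves treated like point moves but corresponding to the $\mso$ quantifiers $\exists X, \forall X$ rather than $\exists x, \forall x$. The first thing I would do is strengthen the statement so that the induction goes through: rather than allowing only point parameters $\bar{a}, \bar{b}$ in the initial configuration, I would prove the equivalence for configurations $(\mf{A}, \bar{a}, \bar{A})$ and $(\mf{B}, \bar{b}, \bar{B})$ carrying both point tuples and set tuples, with $(m, \mso)$-equivalence read off the expansions interpreting these parameters. This strengthening is forced by the game itself: a set move introduces a set assignment that must be carried into the remaining rounds, so the inductive hypothesis has to speak about set parameters even though the theorem as stated does not.

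For the base case $m = 0$, the duplicator wins the $0$-round game precisely when the starting configuration $((\bar{a}, \bar{c}^{\mf{A}}), (\bar{b}, \bar{c}^{\mf{B}}))$ is a partial isomorphism between $(\mf{A}, \bar{A})$ and $(\mf{B}, \bar{B})$, which by definition amounts to agreement on all atomic (hence all quantifier-free) $\mso$ formulae over the parameters -- equalities, relational atoms, and the membership atoms $X_i(t)$. Since $(\mf{A}, \bar{a}, \bar{A}) \mequiv{0} (\mf{B}, \bar{b}, \bar{B})$ asserts exactly this, the base case is immediate.

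For the inductive step I would argue the two directions separately. For the direction ``winning strategy implies $\mequiv{m+1}$'', I observe that every rank-$(m{+}1)$ $\mso$ sentence over the parameters is a Boolean combination of formulae $\exists x\,\psi$ and $\exists X\,\psi$ with $\rank{\psi} \le m$; given a witness for such a formula in one structure, I let the spoiler play it as the first (point or set) move, have the duplicator respond via the winning strategy, and invoke the inductive hypothesis on the resulting $m$-round game to transfer $\psi$, and hence the whole sentence, to the other structure. For the converse direction, ``$\mequiv{m+1}$ implies a winning strategy'', I would use characteristic (Hintikka) formulae: for the spoiler's first move producing the configuration $(\mf{A}, \bar{a}\,a, \bar{A})$ (point move) or $(\mf{A}, \bar{a}, \bar{A}\,A)$ (set move), there is a single formula $\psi$ of rank $m$ defining its $(m, \mso)$-equivalence class. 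Then $\exists x\,\psi$ (resp. $\exists X\,\psi$) has rank $m{+}1$ and holds in $\mf{A}$, so by the assumed equivalence it holds in $\mf{B}$, yielding a response for the duplicator whose extended configuration is $(m,\mso)$-equivalent to the spoiler's; the inductive hypothesis then supplies the strategy for the remaining rounds. Spoiler moves in $\mf{B}$ are handled symmetrically.

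The step I expect to be the main obstacle is guaranteeing the existence of the rank-$m$ characteristic formula $\psi$ in the converse direction. This rests on the fact that, over the fixed finite relational vocabulary and a fixed number of free (first- and second-order) variables, there are only finitely many rank-$\le m$ $\mso$ formulae up to equivalence -- equivalently, the $\mequiv{m}$ relation has finite index, which is precisely Proposition~\ref{prop:finite-index-of-lequiv(m)-relation}. Using this finiteness I can take $\psi$ to be the conjunction of a finite complete set of rank-$\le m$ formulae satisfied by the configuration, so that $\psi$ singles out its equivalence class; alternatively I would build the Hintikka formulae explicitly by a parallel recursion on rank. The theorem then follows as the special case $\bar{A} = \bar{B} = ()$.
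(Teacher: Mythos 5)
Your proposal is correct, and it is essentially the standard textbook argument: the paper itself does not prove Theorem~\ref{theorem:msoef-theorem} but cites it from Libkin (Theorem 7.7 there), and the proof you outline --- strengthening the induction hypothesis to configurations carrying both point and set parameters, a base case via partial isomorphisms, the forward direction by letting the spoiler play a witness, and the converse via rank-$m$ characteristic formulae whose existence rests on the finiteness up to equivalence of rank-$\le m$ $\mso$ formulae in a fixed finite vocabulary with finitely many first- and second-order free variables --- is exactly the argument given in that reference. The one point you flag as the main obstacle, the existence of the Hintikka formula $\psi$, is handled correctly: the conjunction of one representative from each equivalence class of satisfied rank-$\le m$ formulae does isolate the $\mequiv{m}$-class (negations of unsatisfied formulae are themselves satisfied formulae of the same rank), and this is precisely what Proposition~\ref{prop:finite-index-of-lequiv(m)-relation} licenses. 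The only cosmetic omission is that a rank-$(m{+}1)$ formula is a Boolean combination of quantified formulae \emph{and atomic formulae}; the atomic constituents are covered because a winning strategy for $m$ rounds already forces the initial parameter tuples (with the constants appended) to form a partial isomorphism.
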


\section{Translation schemes}\label{section:background-FMT-translation-schemes}

We recall the notion of translation schemes from the
literature~\cite{makowsky}. These were first introduced in the context
of classical model theory, and are known in the literature by
different names, like \emph{$\fo$ interpretations},
\emph{transductions}, etc. We define these below and look at some of
their properties subsequently.

Let $\tau$ and $\sigma$ be given vocabularies, and $t \ge 1$ be a
natural number. Let $\bar{x}_0$ be a fixed $t$-tuple of first order
variables, and for each relation $R \in \sigma$ of arity $\#R$, let
$\bar{x}_R$ be a fixed $(t \times \#R)$-tuple of first order
variables.  A \emph{$(t, \tau, \sigma, \mc{L})$-translation
  scheme}\sindex[term]{translation scheme} $\Xi = (\xi, (\xi_R)_{R \in
  \sigma})$\sindex[symb]{$\Xi$} is a sequence of formulas of
$\mc{L}(\tau)$ such that the free variables of $\xi$ are among those
in $\bar{x}_0$, and for $R \in \sigma$, the free variables of $\xi_R$
are among those in $\bar{x}_R$.  When $t, \sigma$ and $\tau$ are clear
from context, we call $\Xi$ simply as a translation scheme. We call
$t$ as the \emph{dimension}\sindex[term]{translation scheme!dimension}
of $\Xi$.  If $t = 1$, we say $\Xi$ is a \emph{scalar} translation
scheme, and if $t \ge 2$, we say $\Xi$ is a \emph{vectorized}
translation scheme.  In our results in the subsequent chapters, we
consider vectorized translation schemes when $\mc{L} = \fo$, and
scalar translation schemes when $\mc{L} = \mso$.

One can associate with a translation scheme $\Xi$, two partial maps:
(i) $\Xi^*$ from $\tau$-structures to $\sigma$-structures (ii)
$\Xi^\sharp$ from $\mc{L}(\sigma)$ formulae to $\mc{L}(\tau)$
formulae, each of which we define below.  For the ease of readability,
we abuse notation slightly and use $\Xi$\sindex[symb]{$\Xi$} to denote
both $\Xi^*$ and $\Xi^\sharp$.

1. Given a $\tau$-structure $\mf{A}$, the $\sigma$-structure
$\Xi(\mf{A})$ is defined as follows. 


\begin{enumerate}[nosep]
\item $\univ{\Xi(\mf{A})} = \{ \bar{a} \in \univ{\mf{A}}^t \mid
  (\mf{A}, \bar{a}) \models \xi(\bar{x}_0)\}$.
\item For each $R \in \sigma$ of arity $n$, the interpretation of $R$
  in $\Xi(\mf{A})$ is the set $\{ \bar{a} \in \univ{\mf{A}}^{t \times
    n} \mid (\mf{A}, \bar{a}) \models \xi_R(\bar{x}_R) \}$.
\end{enumerate}

2. We define the map $\Xi$ from $\mc{L}(\sigma)$ formulae to
$\mc{L}(\tau)$ formulae. We first define this map for the case of a
$(t, \tau, \sigma, \fo)$-translation scheme.  Given a $\fo(\sigma)$
formula $\varphi(\bar{x})$ where $\bar{x} = (x_1, \ldots, x_n)$, the
$\fo(\tau)$ formula $\Xi(\varphi)(\bar{x}_1, \ldots, \bar{x}_n)$ where
$\bar{x}_i = (x_{i, 1}, \ldots, x_{i, t})$ for $1 \leq i \leq n$, is
as defined below.

\begin{enumerate}[nosep]
\item If $\varphi(\bar{x})$ is the formula $R(x_1, \ldots, x_r)$ for
  an $r$-ary relation symbol $R \in \sigma$, then
\[
\Xi(\varphi)(\bar{x}_1, \ldots, \bar{x}_n) = \xi_R(\bar{x}_1,
\ldots, \bar{x}_n) \wedge \bigwedge_{i = 1}^{i = r} \xi(x_{i, 1},
\ldots, x_{i, t})
\]
\item If $\varphi(\bar{x})$ is the formula $x_1 = x_2$, then 
\[
\Xi(\varphi)(\bar{x}_1, \ldots, \bar{x}_n) = \bigwedge_{j =
  1}^{j = t} (x_{1, j} = x_{2, j}) \wedge \bigwedge_{i = 1}^{i = 2}
\xi(x_{i, 1}, \ldots, x_{i, t})
\]
\item If $\varphi(\bar{x}) = \varphi_1(\bar{x}) \wedge
  \varphi_2(\bar{x})$,  then 
\[
\Xi(\varphi)(\bar{x}_1, \ldots, \bar{x}_n) =
\Xi(\varphi_1)(\bar{x}_1, \ldots, \bar{x}_n) \wedge
\Xi(\varphi_2)(\bar{x}_1, \ldots, \bar{x}_n)
\]
The same holds with $\wedge$ replaced with $\vee$.
\item If $\varphi(\bar{x}) = \neg \varphi_1(\bar{x})$, then
\[
\Xi(\varphi)(\bar{x}_1, \ldots, \bar{x}_n) = \neg 
\Xi(\varphi_1)(\bar{x}_1, \ldots, \bar{x}_n) 
\]
\item If $\varphi(\bar{x}) = \exists y \varphi_1(\bar{x}, y)$, then for
  $\bar{y} = (y_1, \ldots, y_t)$
\[
\Xi(\varphi)(\bar{x}_1, \ldots, \bar{x}_n) = \exists \bar{y}
\big(\Xi(\varphi_1)(\bar{x}_1, \ldots, \bar{x}_n, \bar{y}) \wedge \xi(y_1,
\ldots, y_t)\big)
\]
\item If $\varphi(\bar{x}) = \forall y \varphi_1(\bar{x}, y)$, then
  for $\bar{y} = (y_1, \ldots, y_t)$
\[
\Xi(\varphi)(\bar{x}_1, \ldots, \bar{x}_n) = \forall \bar{y}
\big( \xi(y_1, \ldots, y_t) \rightarrow \Xi(\varphi_1)(\bar{x}_1,
\ldots, \bar{x}_n, \bar{y}) \big)
\]
\end{enumerate}

We now define $\Xi$ for a $(1, \tau, \sigma, \mso)$-translation
scheme.  Given an $\mso(\sigma)$ formula $\varphi(\bar{x})$ where
$\bar{x} = (x_1, \ldots, x_n)$, the $\mso(\tau)$ formula
$\Xi(\varphi)(\bar{x})$, is as defined below.
\begin{enumerate}[nosep]
\item If $\varphi(\bar{x})$ is an $\fo$ atomic formula, then
  $\Xi(\varphi)(\bar{x})$ is as defined in the case of $\fo$
  above.
\item If $\varphi(\bar{x})$ is the formula $X(x_1)$ for an $\mso$
  variable $X$,
\[
\Xi(\varphi)(\bar{x}) = X(x_1) \wedge \xi(x_1)
\]
\item For boolean combinations, and quantification over $\fo$
  variables, $\Xi(\varphi)(\bar{x})$ is as defined in the case
  of $\fo$ above.
\item If $\varphi(\bar{x}) = \exists Y \varphi_1(\bar{x}, Y)$, then
\[
\Xi(\varphi)(\bar{x}) = \exists Y  \Xi(\varphi_1)(\bar{x}, Y) 
\]
\item If $\varphi(\bar{x}) = \forall Y \varphi_1(\bar{x}, Y)$, then
\[
\Xi(\varphi)(\bar{x}) = \forall Y  \Xi(\varphi_1)(\bar{x}, Y)
\]
\end{enumerate}

\tbf{Properties of translation schemes}:

For a $(t, \tau, \sigma, \mc{L})$-translation scheme $\Xi = (\xi,
(\xi_R)_{R \in \sigma})$, let \emph{$\text{rank}(\Xi)$} denote the
maximum of the quantifier ranks of the formulae $\xi$ and $\xi_R$ for
each $R \in \sigma$.

\begin{lemma}\label{lemma:rank-of-Xi(varphi)-in-terms-of-rank-of-varphi}
Let $\Xi$ be a $(t, \tau, \sigma, \mc{L})$-translation scheme, and let
$\varphi$ be an $\mc{L}(\sigma)$ formula of quantifier rank $m$.
\begin{enumerate}[nosep]
\item If $\mc{L} = $FO, then $\Xi(\varphi)$ is an FO$(\tau)$ formula
  having quantifier rank at most $t \cdot m + \text{rank}(\Xi)$.
\item If $\mc{L} = $MSO and $\Xi$ is scalar, then $\Xi(\varphi)$ is an
  MSO$(\tau)$ formula having quantifier rank at most $m +
  \text{rank}(\Xi)$.
\end{enumerate}
\end{lemma}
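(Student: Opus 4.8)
The plan is to prove both parts by structural induction on the $\mc{L}(\sigma)$ formula $\varphi$, following the inductive clauses in the definition of the map $\Xi$ on formulae. Throughout, write $m = \rank{\varphi}$. That $\Xi(\varphi)$ is an $\mc{L}(\tau)$ formula is immediate from that definition, since each clause builds an $\mc{L}(\tau)$ formula out of previously translated $\mc{L}(\tau)$ formulae together with the components $\xi$ and $(\xi_R)_{R \in \sigma}$ of $\Xi$; so the content lies entirely in the rank bounds.

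For the base cases ($\varphi$ atomic, hence $m = 0$), the formula $\Xi(\varphi)$ is a boolean combination of the translation-scheme formulae $\xi$ and $\xi_R$ together with equalities among first-order variables, so its quantifier rank is at most $\rank{\Xi}$, matching $t \cdot 0 + \rank{\Xi}$ in the FO case and $0 + \rank{\Xi}$ in the scalar MSO case. The atom $X(x_1)$ in the MSO setting is handled the same way, since $\Xi(X(x_1)) = X(x_1) \wedge \xi(x_1)$ has rank at most $\rank{\xi} \le \rank{\Xi}$. The boolean cases are routine: conjunction and disjunction take the maximum of the ranks of the two translated subformulae, which is exactly how $\rank{\varphi}$ relates to the ranks of its immediate subformulae, so the inductive bound is preserved; negation leaves ranks unchanged. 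The set-quantifier cases in MSO, namely $\Xi(\exists Y \varphi_1) = \exists Y\, \Xi(\varphi_1)$ and its universal analogue, add exactly one quantifier, and since $\rank{\varphi} = 1 + \rank{\varphi_1}$, the bound $m + \rank{\Xi}$ follows directly from the induction hypothesis.

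The step I expect to require the most care is the first-order quantifier case in the FO part, as this is where the dimension $t$ enters. Here $\Xi(\exists y\, \varphi_1) = \exists \bar{y}\,\bigl(\Xi(\varphi_1) \wedge \xi(y_1, \ldots, y_t)\bigr)$ with $\bar{y} = (y_1, \ldots, y_t)$, so a block of $t$ quantifiers is prepended, raising the rank by exactly $t$ over $\max(\rank{\Xi(\varphi_1)}, \rank{\xi})$. Applying the induction hypothesis to $\varphi_1$, whose rank is $m - 1$, gives $\rank{\Xi(\varphi_1)} \le t(m-1) + \rank{\Xi}$, and $\rank{\xi} \le \rank{\Xi}$, so the maximum is at most $t(m-1) + \rank{\Xi}$; adding $t$ yields $t \cdot m + \rank{\Xi}$, exactly as claimed. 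The universal case is identical, replacing the conjunction by the guarded implication $\xi(y_1, \ldots, y_t) \rightarrow \Xi(\varphi_1)$, which does not change the rank count.

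Finally, the scalar MSO bound is obtained by specializing these computations to $t = 1$: the first-order quantifier clauses then prepend a single variable rather than a $t$-block, so the increment is $1$ rather than $t$, and the whole induction collapses to the additive bound $m + \rank{\Xi}$. Collecting the cases completes the proof. The only genuine bookkeeping is keeping the factor of $t$ attached to the quantifier count and to the inductive estimate simultaneously, which the above arithmetic in the existential case makes explicit.
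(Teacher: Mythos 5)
Your induction is correct and complete; the key case (the first-order quantifier clause, where the block of $t$ new variables contributes the factor $t\cdot m$) is handled with the right arithmetic, and the remaining clauses are indeed routine. The paper states this lemma without proof, treating it as a standard property of translation schemes, and your structural induction is exactly the argument one would write out.
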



The following proposition relates the application of transductions to
structures and formulas to each other.

\begin{proposition}\label{prop:relating-transductions-applications-to-structures-and-formulae}
Let $\Xi$ be either a $(t, \tau, \sigma, \fo)$-translation scheme, or
a $(1, \tau, \sigma, \mso)$-translation scheme.  Then for every
$\mc{L}(\sigma)$ formula $\varphi(x_1, \ldots, x_n)$ where $n \ge 0$,
for every $\tau$-structure $\mf{A}$ and for every $n$-tuple
$(\bar{a}_1, \ldots, \bar{a}_n)$ from $\Xi(\mf{A})$, the following
holds.
\[
\begin{array}{lrll}
& (\Xi(\mf{A}), \bar{a}_1, \ldots, \bar{a}_n) & \models & \varphi(x_1,
  \ldots, x_n)\\ 
\mbox{iff} & (\mf{A}, \bar{a}_1, \ldots, \bar{a}_n) & \models &
\Xi(\varphi)(\bar{x}_1, \ldots, \bar{x}_n)\\
\end{array}
\]
where $\bar{x}_i = (x_{i, 1}, \ldots, x_{i, t})$ for each $i \in \{1,
\ldots, n\}$.
\end{proposition}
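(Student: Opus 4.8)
The plan is to prove Proposition~\ref{prop:relating-transductions-applications-to-structures-and-formulae} by structural induction on the $\mc{L}(\sigma)$ formula $\varphi$, following exactly the recursive clauses used to define the syntactic map $\Xi$ on formulae. The statement to establish is a biconditional relating truth in $\Xi(\mf{A})$ under an assignment $(\bar{a}_1,\ldots,\bar{a}_n)$ of $t$-tuples (each lying in $\univ{\Xi(\mf{A})}$, hence each satisfying $\xi$ in $\mf{A}$) to truth in $\mf{A}$ under the ``unpacked'' assignment to the variables $\bar{x}_i = (x_{i,1},\ldots,x_{i,t})$. Since $\mc{L}$ ranges over both $\fo$ and $\mso$, I would carry out the $\fo$ case with the vectorized scheme first, and then indicate the (simpler, scalar) modifications for the $\mso$ case. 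The key bookkeeping point throughout is the standing invariant that every tuple in $\univ{\Xi(\mf{A})}$ satisfies $\xi(\bar{x}_0)$ in $\mf{A}$, which is precisely how $\univ{\Xi(\mf{A})}$ was defined.

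First I would handle the base cases. For an atomic formula $R(x_1,\ldots,x_r)$, by the definition of the interpretation $R^{\Xi(\mf{A})}$ we have $(\bar a_1,\ldots,\bar a_r)\in R^{\Xi(\mf{A})}$ iff $(\mf{A},\bar a_1,\ldots,\bar a_r)\models \xi_R(\bar x_R)$; since each $\bar a_i$ is assumed to be a tuple of $\univ{\Xi(\mf{A})}$, the conjuncts $\bigwedge_{i=1}^{r}\xi(x_{i,1},\ldots,x_{i,t})$ in the definition of $\Xi(\varphi)$ are automatically satisfied, so the two sides agree. For the equality atom $x_1=x_2$, note that two $t$-tuples of $\univ{\Xi(\mf{A})}$ (viewed as elements of $\univ{\Xi(\mf{A})}$) are equal iff they agree componentwise, which is exactly $\bigwedge_{j=1}^{t}(x_{1,j}=x_{2,j})$, and again the $\xi$-conjuncts hold by the standing invariant. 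For $\mso$ I would add the base clause $X(x_1)$, which unpacks to $X(x_1)\wedge\xi(x_1)$, and the $\xi(x_1)$ conjunct is free by the invariant.

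The inductive steps for $\wedge,\vee,\neg$ are immediate, since $\Xi$ commutes with the Boolean connectives and the assignment is fixed. The quantifier steps are where the real content lies. For $\exists y\,\varphi_1$ in the $\fo$ case, I would argue: $(\Xi(\mf{A}),\bar a_1,\ldots,\bar a_n)\models \exists y\,\varphi_1$ iff there is a witness $\bar b\in\univ{\Xi(\mf{A})}$, i.e.\ a $t$-tuple $\bar b$ of $\mf{A}$ with $(\mf{A},\bar b)\models\xi$, such that $(\Xi(\mf{A}),\bar a_1,\ldots,\bar a_n,\bar b)\models\varphi_1$; by the induction hypothesis applied to $\varphi_1$ (whose free variables now include $y$, unpacked to $\bar y$), this is equivalent to $(\mf{A},\bar a_1,\ldots,\bar a_n,\bar b)\models\Xi(\varphi_1)$ for some $\bar b$ satisfying $\xi(y_1,\ldots,y_t)$, which is precisely $(\mf{A},\bar a_1,\ldots,\bar a_n)\models\exists\bar y\,(\Xi(\varphi_1)\wedge\xi(y_1,\ldots,y_t))$. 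The relativization by $\xi$ restricts the existential witness $\bar b$ to exactly those $t$-tuples that name elements of the derived universe, which is the whole point. The universal case is dual, with the relativizing guard appearing as an implication $\xi\rightarrow\Xi(\varphi_1)$ rather than a conjunction. For $\mso$ with a scalar scheme, the set-quantifier steps $\exists Y\,\varphi_1$ and $\forall Y\,\varphi_1$ require no relativization guard: I would check that quantifying over $Y\subseteq\univ{\Xi(\mf{A})}$ corresponds bijectively to quantifying over the same subsets in $\mf{A}$, precisely because $t=1$ makes $\univ{\Xi(\mf{A})}$ a subset of $\univ{\mf{A}}$, and the $\xi$-relativization of every membership atom $X(x_1)$ already confines the relevant elements to $\univ{\Xi(\mf{A})}$.

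The main obstacle, and the step I would treat most carefully, is the quantifier relativization in the vectorized $\fo$ case: I must verify that the correspondence between elements of $\univ{\Xi(\mf{A})}$ and $t$-tuples of $\univ{\mf{A}}$ satisfying $\xi$ is faithfully preserved under substitution when passing $\bar y$ into the inductive hypothesis, and that the guard $\xi(y_1,\ldots,y_t)$ neither over- nor under-restricts the witnesses. This is purely a matter of matching the definition of $\univ{\Xi(\mf{A})}$ against the syntactic guard, but it is the only place where the dimension $t$ and the relativization interact nontrivially, so it deserves an explicit line. In the $\mso$ scalar case the analogous subtlety is confirming that set quantification needs no explicit guard, which follows because membership is already relativized at the atomic level; I would remark on this to make the contrast with the first-order point-quantifier case clear.
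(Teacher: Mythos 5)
Your proof is correct: the paper states this proposition as a standard background fact about translation schemes and omits the proof, and the structural induction you describe — with the $\xi$-guard handling point-quantifier relativization in the vectorized $\fo$ case and the atomic-level relativization of $X(x_1)$ making set quantifiers guard-free in the scalar $\mso$ case — is exactly the canonical argument. The only point worth making explicit when writing it out is that the $\mso$ induction hypothesis must be strengthened to formulae with free set variables, together with the observation that $\Xi(\varphi_1)$ is insensitive to the part of a set assignment lying outside $\univ{\Xi(\mf{A})}$ (so that a witness $B \subseteq \univ{\mf{A}}$ can be replaced by $B \cap \univ{\Xi(\mf{A})}$ in the backward direction); your remark about confinement at the atomic level is precisely this.
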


An immediate consequence of
Lemma~\ref{lemma:rank-of-Xi(varphi)-in-terms-of-rank-of-varphi} and
Proposition~\ref{prop:relating-transductions-applications-to-structures-and-formulae}
is the following.

\begin{corollary}\label{corollary:transferring-m-equivalence-across-transductions}
Let $\Xi$ be a $(t, \tau, \sigma, \mc{L})$-translation scheme.  Let
$m, r \in \mathbb{N}$ be such that $r = t \cdot m + \text{rank}(\Xi)$.
Suppose $\mf{A}$ and $\mf{B}$ are $\tau$-structures, and suppose
$\bar{a}_1, \ldots, \bar{a}_n$, resp. $\bar{b}_1, \ldots, \bar{b}_n$,
are $n$ elements from $\Xi(\mf{A})$, resp. $\Xi(\mf{B})$.
\begin{enumerate}[nosep]
\item If $(\mf{A}, \bar{a}_1, \ldots, \bar{a}_n) \fequiv{r} (\mf{B},
  \bar{b}_1, \ldots, \bar{b}_n)$, then $(\Xi(\mf{A}), \bar{a}_1,
  \ldots, \bar{a}_n) \fequiv{m} (\Xi(\mf{B}), \bar{b}_1, \ldots,
  \bar{b}_n)$.
\item If $(\mf{A}, \bar{a}_1, \ldots, \bar{a}_n) \mequiv{r} (\mf{B},
  \bar{b}_1, \ldots, \bar{b}_n)$, then $(\Xi(\mf{A}), \bar{a}_1,
  \ldots, \bar{a}_n) \mequiv{m} (\Xi(\mf{B}), \bar{b}_1, \ldots,
  \bar{b}_n)$, when $\Xi$ is scalar.
\end{enumerate}
\end{corollary}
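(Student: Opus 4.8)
The plan is to obtain the corollary as a direct chaining of Proposition~\ref{prop:relating-transductions-applications-to-structures-and-formulae} (which transfers satisfaction across $\Xi$) and Lemma~\ref{lemma:rank-of-Xi(varphi)-in-terms-of-rank-of-varphi} (which controls the rank blow-up of $\Xi$), using only the definition of $(m,\mc{L})$-equivalence via agreement on formulae of bounded rank. I will prove part~(1) in detail and then observe that part~(2) is the same argument specialized to a scalar scheme. Throughout, the $t$-tuples $\bar{a}_1, \ldots, \bar{a}_n \in \univ{\Xi(\mf{A})} \subseteq \univ{\mf{A}}^t$ will be read on the $\Xi(\mf{A})$ side as $n$ elements, and on the $\mf{A}$ side as a single flat $(t\cdot n)$-tuple of elements of $\mf{A}$ interpreting the variables $\bar{x}_1, \ldots, \bar{x}_n$; the hypothesis $\fequiv{r}$ is read with respect to this flat tuple, and similarly for $\bar{b}_1, \ldots, \bar{b}_n$.

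First I would fix an arbitrary $\fo(\sigma)$ formula $\varphi(x_1, \ldots, x_n)$ of quantifier rank at most $m$, and show that $\Xi(\mf{A})$ and $\Xi(\mf{B})$ (with their respective parameters) agree on it; since the $(m,\fo)$-type of a tuple is exactly the set of such formulae true of it, agreement on all of them is precisely the desired $\fequiv{m}$. The chain of equivalences is: by Proposition~\ref{prop:relating-transductions-applications-to-structures-and-formulae}, $(\Xi(\mf{A}), \bar{a}_1, \ldots, \bar{a}_n) \models \varphi$ iff $(\mf{A}, \bar{a}_1, \ldots, \bar{a}_n) \models \Xi(\varphi)$; by Lemma~\ref{lemma:rank-of-Xi(varphi)-in-terms-of-rank-of-varphi}(1), $\Xi(\varphi)$ is an $\fo(\tau)$ formula with $\rank{\Xi(\varphi)} \le t\cdot m + \rank{\Xi} = r$ and free variables among $\bar{x}_1, \ldots, \bar{x}_n$; since $(\mf{A}, \bar{a}_1, \ldots, \bar{a}_n) \fequiv{r} (\mf{B}, \bar{b}_1, \ldots, \bar{b}_n)$ by hypothesis, these two expansions agree on every $\fo(\tau)$ formula of rank at most $r$, in particular on $\Xi(\varphi)$, giving $(\mf{A}, \bar{a}_1, \ldots, \bar{a}_n) \models \Xi(\varphi)$ iff $(\mf{B}, \bar{b}_1, \ldots, \bar{b}_n) \models \Xi(\varphi)$; finally, applying Proposition~\ref{prop:relating-transductions-applications-to-structures-and-formulae} again on the $\mf{B}$ side converts this back into $(\Xi(\mf{B}), \bar{b}_1, \ldots, \bar{b}_n) \models \varphi$. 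Composing these four biconditionals yields agreement on $\varphi$, and as $\varphi$ was arbitrary of rank at most $m$, the conclusion $(\Xi(\mf{A}), \bar{a}_1, \ldots, \bar{a}_n) \fequiv{m} (\Xi(\mf{B}), \bar{b}_1, \ldots, \bar{b}_n)$ follows.

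For part~(2) I would run the identical argument with $\mso$ in place of $\fo$: Proposition~\ref{prop:relating-transductions-applications-to-structures-and-formulae} applies verbatim for a $(1,\tau,\sigma,\mso)$-translation scheme, and Lemma~\ref{lemma:rank-of-Xi(varphi)-in-terms-of-rank-of-varphi}(2) gives $\rank{\Xi(\varphi)} \le m + \rank{\Xi}$; since $\Xi$ is scalar we have $t = 1$, so this bound equals $r = t\cdot m + \rank{\Xi}$, and the hypothesis $\mequiv{r}$ again supplies agreement on $\Xi(\varphi)$, yielding $\mequiv{m}$ after the two transduction transfers. There is essentially no mathematical obstacle here—the statement is genuinely ``immediate'' from the two cited results. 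The only care required, and the one place a careless write-up could slip, is the bookkeeping of the free-variable/tuple correspondence (the reinterpretation of each $\bar{a}_i$ as a $t$-tuple over $\mf{A}$ and the matching of $\varphi$'s variables $x_i$ with the variable blocks $\bar{x}_i$ of $\Xi(\varphi)$) together with the rank arithmetic $r = t\cdot m + \rank{\Xi}$, which is exactly the quantity chosen so that $\rank{\Xi(\varphi)} \le r$ whenever $\rank{\varphi} \le m$; I would state this identification once at the outset so the four-step chain reads cleanly.
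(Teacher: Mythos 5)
Your proposal is correct and is exactly the argument the paper intends: the corollary is stated there as an immediate consequence of Lemma~\ref{lemma:rank-of-Xi(varphi)-in-terms-of-rank-of-varphi} and Proposition~\ref{prop:relating-transductions-applications-to-structures-and-formulae}, and your four-step chain of biconditionals, together with the rank bound $\rank{\Xi(\varphi)} \le t\cdot m + \rank{\Xi} = r$, is precisely the omitted verification.
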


We use the above properties of translation schemes in our results in
the forthcoming chapters.

Before we move to the next chapter, we define three notions that will
frequently appear in our discussions. These are the notions of
hereditariness, disjoint union and cartesian product.  Given a class
$\cl{S}_1$ of structures and a subclass $\cl{S}_2$ of $\cl{S}_1$, we
say $\cl{S}_2$ is \emph{hereditary over
  $\cl{S}_1$}\sindex[term]{hereditary}, if $\cl{S}_2$ is $PS$ over
$\cl{S}_1$ (see Section~\ref{section:background-classical-properties}
for the definition of $PS$). A class $\cl{S}$ is \emph{hereditary} if
it is hereditary over the class of all (finite) structures. Given two
$\tau$-structures $\mf{A}$ and $\mf{B}$, the \emph{disjoint
  union}\sindex[term]{disjoint union} of $\mf{A}$ and $\mf{B}$,
denoted $\mf{A} \sqcup \mf{B}$, is defined as follows. Let $\mf{B}'$
be an isomorphic copy of $\mf{B}$ whose universe is disjoint with that
of $\mf{A}$. Then $\mf{A} \sqcup \mf{B}$ is defined upto isomorphism
as the structure $\mc{C}$ such that (i) $\univ{\mc{C}} = \univ{\mf{A}}
\cup \univ{\mf{B}'}$ and (ii) $R^{\mf{C}} = R^{\mf{A}} \cup
R^{\mf{B}'}$ for each relation symbol $R \in \tau$. Finally, the
\emph{cartesian product}\sindex[term]{cartesian product} of $\mf{A}$
and $\mf{B}$, denoted $\mf{A} \times \mf{B}$, is the structure
$\mc{C}$ such that (i) $\univ{\mc{C}} = \univ{\mf{A}} \times
\univ{\mf{B}}$ and (ii) for each $n$-ary relation symbol $R \in \tau$,
for each $n$-tuple $((a_1, b_1), \ldots, (a_n, b_n))$ from $\mf{C}$,
where $(a_1, \ldots, a_n)$ is an $n$-tuple from $\mf{A}$ and $(b_1,
\ldots, b_n)$ is an $n$-tuple from $\mf{B}$, we have $((a_1, b_1),
\ldots, (a_n, b_n)) \in R^{\mf{C}}$ iff $(a_1, \ldots, a_n) \in
R^{\mf{A}}$ and $(b_1, \ldots, b_n) \in R^{\mf{B}}$.

\chapter{The need to investigate new classes of finite structures for $\glt{k}$}\label{chapter:need-for-new-classes-for-GLT}

\vspace{15pt} As already mentioned in the introduction, the class of
finite structures behaves very differently compared to the class of
arbitrary structures. The failure of the compactness theorem in the
finite causes a collapse of the proofs of most of the classical
preservation theorems in the finite. Worse still, the statements of
these theorems also fail in the finite. The {\lt} theorem is an
instance of these failures~\cite{tait,gurevich84,rosen}.

\begin{chaptertheorem}[Tait 1959, Gurevich-Shelah 1984]
There is a sentence that is preserved under substructures over the
class of all finite structures, but that is not equivalent, over all
finite structures, to any $\Pi^0_1$ sentence.
\end{chaptertheorem}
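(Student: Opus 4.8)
The plan is to prove the statement via the standard \emph{bounded-substructure} criterion for universal sentences, reducing the whole problem to exhibiting one well-chosen sentence. First I would record the tool that drives everything: over a finite relational vocabulary, if an $\fo$ sentence $\phi$ were equivalent over all finite structures to a $\Pi^0_1$ (universal) sentence $\forall x_1 \cdots \forall x_k\, \alpha(\bar{x})$ with $\alpha$ quantifier-free, then for every finite structure $\mf{B}$ one has $\mf{B} \models \phi$ if and only if every induced substructure of $\mf{B}$ on at most $k$ elements satisfies $\phi$. The forward direction is just preservation under substructures; the backward direction follows because any $k$-tuple $\bar{b}$ from $\mf{B}$ lies in its own induced substructure $\mf{D}$ of size at most $k$, whence $\mf{D} \models \alpha(\bar{b})$ and, $\alpha$ being quantifier-free with $\mf{D} \subseteq \mf{B}$, also $\mf{B} \models \alpha(\bar{b})$. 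Consequently, to defeat \emph{every} universal sentence simultaneously it suffices to produce, for each $k \in \mathbb{N}$, a finite structure $\mf{B}_k$ with $\mf{B}_k \not\models \phi$ all of whose induced substructures of size at most $k$ do satisfy $\phi$ --- that is, a minimal ``obstruction'' to $\phi$ of size exceeding $k$.

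Thus the task reduces to constructing a single $\fo$ sentence $\phi$ that (i) is preserved under substructures over the class of all finite structures, yet (ii) whose finite model class is a hereditary class possessing minimal forbidden induced substructures of unbounded size. I would build such a $\phi$ by coding: one works over an auxiliary linear order equipped with a successor relation and named endpoints, so that $\fo$ can ``walk'' a structure one step at a time and propagate a locally checkable invariant along the order. The sentence asserts that whenever the ordered skeleton is correctly laid out, a local consistency condition holds at every successor step in a way that forces a global agreement between the two endpoints; the large obstruction $\mf{B}_k$ is then a correctly-laid-out structure carrying a single ``defect'' that makes the endpoint-agreement fail globally, while every sufficiently small induced substructure is too fragmented to be a correctly-laid-out skeleton and therefore satisfies $\phi$ vacuously.

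The main obstacle --- and the reason this is a genuine theorem rather than a one-line exercise --- is reconciling (i) and (ii), which pull in opposite directions. Preservation under substructures pushes one to make $\phi$ hold vacuously on any ``damaged'' structure, but a naive guarded sentence of the shape $\neg(\text{well-formed}) \lor \chi$ collapses: any structure can be padded (e.g.\ by adjoining one element above the named maximum) into a non-well-formed superstructure that then satisfies $\phi$, which forces $\chi$ to hold on essentially all well-formed structures and renders $\phi$ trivial. Hence the non-models of $\phi$ cannot serve as a universal ``dumping ground''; the coding must be \emph{error-detecting}, so that a structure carrying the global defect is never an induced substructure of a model of $\phi$, whereas the defect itself remains invisible to every substructure of bounded size. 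Making this precise is exactly the content of the classical constructions of Tait and of Gurevich--Shelah. I would therefore either cite those sources directly or reconstruct the successor-coding witness sketched above, in which case the remaining work is the routine verification of (i) preservation under substructures and (ii) the $\le k$-local indistinguishability of the defect --- the latter by a direct inspection of, or a short Ehrenfeucht--Fra\"iss\'e argument on, the small induced substructures of $\mf{B}_k$.
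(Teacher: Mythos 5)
Your skeleton is the classical one and it is sound: the small\-/substructure criterion for universal sentences is correct as you prove it (a finite structure models a sentence equivalent to $\forall^k\bar{x}\,\alpha$ iff all its induced substructures on at most $k$ elements do), and it correctly reduces the theorem to exhibiting one substructure\-/preserved sentence whose class of finite models has minimal forbidden induced substructures of unbounded size. For the record, the paper itself only cites Tait and Gurevich--Shelah here, but it proves the strictly stronger Proposition~\ref{prop:failure-of-glt(k)-in-the-finite}, whose $k=0$ case is this theorem; there the witness (stripped of the auxiliary predicate $P$, which only matters for $k\ge 1$) is the sentence over $\{\leq, S, c, d\}$ asserting that $\leq$ is a linear order with minimum $c$ and maximum $d$, that $S$ is contained in the successor relation, \emph{and} that some element other than $d$ has no $S$-successor. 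Preservation under substructures is shown there by proving the negation is preserved under extensions (any adjoined element violates the endpoint or successor axioms). For the non-definability half the paper cannot use your criterion, since it must defeat $\exists^k\forall^*$ sentences, so it runs a partial-isomorphism argument between two hand-built ordered structures; for the pure $\Pi^0_1$ case your criterion is the cleaner tool, and the obstructions are simply the complete successor chains $\{1<\dots<n\}$ with $S$ the full successor relation: these falsify the sentence, while every proper induced substructure omits some element, thereby manufactures an $S$-gap, and hence is a model.

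The one place your write-up stops short of a proof is the witness sentence itself. You describe it as a guarded ``local consistency forces endpoint agreement'' statement and the obstruction as a well-formed chain ``carrying a single defect''; this has the polarity inverted relative to the working example, where the sentence asserts the \emph{existence} of a gap, the large obstruction is the gap-\emph{free} chain, and its small substructures satisfy the sentence non-vacuously (they remain well-formed but acquire a gap), not because they fail to be correctly laid out. You correctly diagnose why the implicational shape collapses under padding, but you never land on the conjunctive sentence that survives. Two smaller points: your criterion is stated for relational vocabularies while your coding uses named endpoints --- harmless, since with constants ``at most $k$ elements'' becomes ``at most $k$ elements together with the constants'' and the obstructions still have unbounded size, but it should be said; and the preservation-under-substructures check, which you call routine, does require the little chain-following argument (or the dual preservation-under-extensions argument the paper uses), since the witness of the gap in $\mf{A}$ need not survive into the substructure. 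With the concrete sentence and the chains filled in, your argument goes through.
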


In the last 15 years, a lot of research in the area of preservation
theorems in finite model theory has focussed on identifying classes of
finite structures over which classical preservation theorems hold.
The following theorem from~\cite{dawar-pres-under-ext} identifies
classes of finite structures that satisfy structural restrictions that
are interesting from a computational standpoint and also from the
standpoint of modern graph structure theory, and shows that these
classes are ``well-behaved'' with respect some classical preservation
theorems, indeed in particular the {\lt} theorem.

\begin{chaptertheorem}[Atserias-Dawar-Grohe, 2008]\label{theorem:lt-positive-cases-finite}
The {\lt} theorem holds over each of the following classes of finite
structures:
\begin{enumerate}[nosep]
\item Any class of acyclic structures that is closed under
  substructures and disjoint unions.
\item Any class of bounded degree structures that is closed under
  substructures and disjoint unions.
\item The class of all structures of tree-width at most $k$, for each
  $k \in \mathbb{N}$.
\end{enumerate}
\end{chaptertheorem}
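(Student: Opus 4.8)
The plan is to reduce the substructure form of the theorem to its extensional dual and then prove the latter by a bounded-model argument driven by locality. Since each of the three classes $\cl{S}$ is closed under substructures, the $PS$--$PE$ duality of Lemma~\ref{lemma:PS-PE-duality} applies within $\cl{S}$: a sentence $\phi$ is $PS$ over $\cl{S}$ iff $\neg\phi$ is $PE$ over $\cl{S}$, and a universal sentence is exactly the negation of an existential one. Hence it suffices to establish the extensional form of the {\lt} theorem over each $\cl{S}$, namely that every sentence $\phi$ that is $PE$ over $\cl{S}$ is $\cl{S}$-equivalent to an existential ($\Sigma^0_1$) sentence. First I would isolate the single structural fact that does all the work.

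The engine is what I will call the \emph{bounded substructure model property}: for each $q$ there is a bound $N=N(q,\cl{S})$ such that every $\mf{A}\in\cl{S}$ satisfying a $PE$ sentence $\phi$ of quantifier rank $q$ contains a substructure $\mf{B}\in\cl{S}$ with $|\univ{\mf{B}}|\le N$ and $\mf{B}\models\phi$. Granting this, the theorem follows cleanly: let $\mathcal{M}$ be the (finite, up to isomorphism) family of all models of $\phi$ in $\cl{S}$ of size at most $N$. Because $\phi$ is $PE$ over $\cl{S}$, every $\mf{A}\in\cl{S}$ that embeds some member of $\mathcal{M}$ already satisfies $\phi$; conversely, by the bounded model property, every model of $\phi$ in $\cl{S}$ embeds such a member. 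Thus $\phi$ is $\cl{S}$-equivalent to the finite disjunction asserting that one of the fixed finite structures in $\mathcal{M}$ embeds, and ``a fixed finite structure embeds'' is expressible by an existential sentence, so the disjunction is again $\Sigma^0_1$.

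The hard part will be proving the bounded model property, and here the three cases genuinely diverge; all of them rest on locality (Gaifman's theorem, and Hanf locality as captured by the $\fef$-game characterization of $\fequiv{q}$ in Theorem~\ref{theorem:foef-theorem}). For the bounded-degree case I would argue by contradiction: take a minimal model $\mf{A}$ of $\phi$ (no proper substructure in $\cl{S}$ satisfies $\phi$) and suppose it is large. Bounded degree forces $r$-neighborhoods (for a locality radius $r$ determined by $q$) to have bounded size, so there are only finitely many isomorphism types of them (cf.\ the finite index of $\fequiv{q}$, Proposition~\ref{prop:finite-index-of-lequiv(m)-relation}); moreover a large bounded-degree structure contains a large $r$-scattered set. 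Pigeonholing the local types over this scattered set and invoking closure under disjoint unions, one can delete the neighborhood of one scattered vertex and, via a Hanf/$\fef$ argument, preserve the rank-$q$ theory—contradicting minimality. This bounds the size of minimal models, giving the property.

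The bounded-treewidth case is the main obstacle, because there $r$-neighborhoods may be arbitrarily large, so the ``finitely many local types'' step fails outright. The remedy I would pursue is to exploit that such classes are only \emph{quasi-wide}: after deleting a bounded ``apex'' set one recovers large $r$-scattered sets. I would therefore first remove a bounded number of elements, then run the scattered-set pigeonhole and locality-based pruning relative to the removed set, using closure of the class under substructures and disjoint unions to stay inside $\cl{S}$ at each step; the acyclic case is then a degenerate instance of the same wideness mechanism. Making the pruning preserve the rank-$q$ $\fo$-theory while actually shrinking the structure—and controlling the interaction of the removed apex set with the scattered neighborhoods—is the technically delicate step on which the whole argument turns.
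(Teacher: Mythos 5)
You should first be aware that the paper does not prove this statement at all: it is imported verbatim from Atserias, Dawar and Grohe~\cite{dawar-pres-under-ext} and used as background, so there is no in-paper proof to compare against. Measured against the published proof, your overall architecture is the right one: dualizing to the extensional form via Lemma~\ref{lemma:PS-PE-duality}, reducing everything to a bound on the size of minimal models (your ``bounded substructure model property''), converting that bound into an existential disjunction of canonical embedding-sentences, and driving the bound by wideness (large $r$-scattered sets), pigeonholing of neighbourhood types, and closure under substructures and disjoint unions, with almost-wideness (bounded apex sets) handling the acyclic and bounded-treewidth cases. That is essentially the Atserias--Dawar--Grohe proof.

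The one step that is wrong as literally written is the pruning step: deleting the $r$-neighbourhood of a single scattered vertex does \emph{not} preserve the rank-$q$ theory of the structure, since it changes the multiset of realized local types by one, and Hanf/$\fef$ equivalence is exactly sensitive to such counts below the threshold. The actual argument cannot avoid using the hypothesis that $\phi$ is preserved under extensions a second time: one pads both the minimal model $\mf{A}$ and its pruned proper substructure $\mf{B}$ with disjoint copies of the (over-represented) neighbourhood type, so that the two \emph{padded} structures agree on all basic local sentences of the relevant radius and rank; closure under disjoint unions keeps the padded structures in $\cl{S}$, preservation under extensions transfers $\phi$ from $\mf{A}$ up to its padding and hence across to the padding of $\mf{B}$, and only then does one extract the contradiction with minimality. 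Your sketch collapses this into ``delete and preserve the theory,'' which conflates the locality equivalence (which holds only after padding) with the preservation hypothesis (which is what lets you come back down). This is repairable, but it is the heart of the proof rather than a routine detail, and in the treewidth case the same padding must additionally be carried out relative to the apex set (treated as constants), which is where the published argument spends most of its effort.
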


It is natural to ask what happens to $\mathsf{GLT}(k)$ over the
classes of finite structures referred to above. Unfortunately,
$\glt{k}$ fails in general over each of these classes. We show this
failure first for the class of all finite structures, and then for the
special classes of finite structures considered in
Theorem~\ref{theorem:lt-positive-cases-finite}.

\section{Failure of $\glt{k}$ over all finite structures}
The failure of the {\lt} theorem over the class of all finite
structures already implies the failure of $\mathsf{GLT}(0)$ over this
class. We show below that this failure happens for $\mathsf{GLT}(k)$
for each $k \ge 0$. In fact, we show something stronger.


\begin{proposition}[Failure of $\glt{k}$ in the finite]\label{prop:failure-of-glt(k)-in-the-finite}
There exists a vocabulary $\tau$ such that if $\cl{S}$ is the class of
all finite $\tau$-structures, then for each $k \ge 0$, there exists an
FO($\tau$) sentence $\psi_k$ that is preserved under substructures
over $\cl{S}$, but that is not $\cl{S}$-equivalent to any $\exists^k
\forall^*$ sentence. It follows that there is a sentence that is
$PSC(k)$ over $\cl{S}$ ($\psi_k$ being one such sentence) but that is
not $\cl{S}$-equivalent to any $\exists^k \forall^*$ sentence.
\end{proposition}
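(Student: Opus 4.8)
The plan is to exhibit, for each $k \ge 0$, an explicit FO sentence $\psi_k$ over a suitable vocabulary $\tau$ that is preserved under substructures over the class $\cl{S}$ of all finite $\tau$-structures, yet provably cannot be expressed by any $\exists^k \forall^*$ sentence over $\cl{S}$. The natural starting point is Tait's example (the Tait--Gurevich--Shelah theorem recalled just before this statement), which gives a single sentence $\psi$ that is preserved under substructures in the finite but not equivalent to any $\Pi^0_1$ (i.e.\ $\forall^*$) sentence. The idea is to take a \emph{parameterized family} of such examples: first I would recall or reconstruct the standard Tait-style construction, whose models encode a finite linear order with distinguished endpoints together with a successor/predicate structure, so that preservation under substructures is guaranteed by design (the defining property is downward closed), while no universal sentence can capture it because universal sentences are determined by bounded substructures and the example forces arbitrarily large ``obstruction-free'' witnesses.

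The key step is to show not merely non-equivalence to $\Pi^0_1$ sentences, but non-equivalence to $\exists^k \forall^*$ sentences for the \emph{specific} $k$. First I would observe that an $\exists^k \forall^*$ sentence $\varphi = \exists x_1 \cdots \exists x_k \, \forall \bar{y}\, \alpha(\bar x, \bar y)$, when true in a finite structure $\mf{A}$, is witnessed by a $k$-element set $C = \{a_1,\dots,a_k\}$ such that $(\mf{A}, \bar a) \models \forall \bar y\, \alpha$; crucially, the universal part then holds in every substructure of $\mf{A}$ containing $C$. So if $\psi_k$ were $\cl{S}$-equivalent to such a $\varphi$, every finite model of $\psi_k$ would possess a $k$-crux of size at most $k$. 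The strategy is therefore to engineer $\psi_k$ so that it has finite models admitting \emph{no} $k$-element crux: concretely, I would arrange that in each suitably large model the ``reason'' $\psi_k$ holds is spread out so that for any choice of $k$ elements, one can find a substructure containing them that fails $\psi_k$. The cleanest route is to build $\psi_k$ from Tait's sentence by padding it with a requirement forcing at least $k+1$ independent ``loci'' (e.g.\ $k+1$ disjoint copies of a gadget, or a chain long enough that any $k$ elements miss a destructible segment), so that no $k$ elements can simultaneously pin down all the structure that $\psi_k$ asserts.

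I would carry out the argument as follows. First, define $\tau$ and the sentence family $\{\psi_k\}$ explicitly. Second, verify that each $\psi_k$ is preserved under substructures over $\cl{S}$ — this should be a direct syntactic/semantic check from the downward-closed nature of the defining condition. Third, for the crux argument, fix $k$ and suppose for contradiction that $\psi_k$ is $\cl{S}$-equivalent to an $\exists^k \forall^*$ sentence; using the witness-to-crux observation above, derive that each finite model of $\psi_k$ has a $k$-crux of size $\le k$. Fourth, construct a specific finite model $\mf{A}$ of $\psi_k$ (analogous to the cycle-of-length-$k{+}1$ trick used in the classical inexpressibility arguments of Section~\ref{subsection:inexp} and in Proposition~\ref{prop:PCE(lambdas)-and-PCE}) in which every candidate $k$-element set $C$ admits a substructure $\mf{B} \supseteq C$ with $\mf{B} \not\models \psi_k$, contradicting the existence of a $k$-crux. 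Finally, the last sentence of the statement follows immediately, since any sentence preserved under substructures (in particular $\psi_k$) is $PSC(k)$ for $k = 0$, hence for all $k$ by the implication $PSC(0) \Rightarrow PSC(k)$ noted in Chapter~\ref{chapter:our-properties}.

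The main obstacle I anticipate is the fourth step: designing the finite witness model $\mf{A}$ so that it genuinely defeats \emph{every} $k$-element crux simultaneously, rather than just a single bad choice. This requires the padding/obstruction mechanism in $\psi_k$ to interact correctly with substructure-closure, and I expect the delicate part to be ensuring that removing or breaking the locus ``missed'' by a given $C$ really falsifies $\psi_k$ while keeping the resulting structure inside $\cl{S}$. A secondary subtlety is guaranteeing that $\psi_k$ remains genuinely first-order expressible (finite quantifier count) as $k$ grows, which the $(k{+}1)$-copies or bounded-chain formulation handles cleanly since $k$ is a fixed parameter for each sentence.
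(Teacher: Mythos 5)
Your central step is internally inconsistent and cannot be repaired in the form you describe. In step 3 you plan to derive, from the assumed equivalence of $\psi_k$ with an $\exists^k\forall^*$ sentence, that every finite model of $\psi_k$ has a $k$-crux, and in step 4 you plan to exhibit a finite model of $\psi_k$ admitting no $k$-crux. But step 2 of your own plan requires $\psi_k$ to be preserved under substructures over $\cl{S}$ --- and any sentence preserved under substructures has the \emph{empty set} as a $0$-crux of every model (every substructure of a model is again a model), hence has a $k$-crux for every $k$. So the model you need in step 4 cannot exist: the crux-counting argument, which does work for sentences like ``contains a cycle'' in Section~\ref{subsection:inexp} and Proposition~\ref{prop:PCE(lambdas)-and-PCE} precisely because those sentences are \emph{not} preserved under substructures, is structurally incapable of separating a $PS$ sentence from the $\exists^k\forall^*$ class. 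This is the whole difficulty of the proposition, and it is why the paper does not argue via cruxes at all.

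The paper's actual route is finer. It takes $\tau = \{\leq, S, P, c, d\}$ and $\psi_k = (\xi_1\wedge\xi_2\wedge\xi_3)\wedge\neg(\xi_4\wedge\xi_5)$, where $\xi_5$ bounds $|P|$ by $k$; preservation under substructures is checked by showing $\neg\psi_k$ is preserved under extensions. For the lower bound, assuming $\psi_k \equiv \exists^k\bar{x}\,\forall^n\bar{y}\,\beta$, it builds a linear order of length $(8n+1)(k+1)$ with $k+1$ well-separated $P$-points, uses the pigeonhole principle on the $k$ existential \emph{witnesses} to find a block containing none of them, and passes to a structure $\mf{B}$ with the \emph{same universe} but with the $P$-point of that block deleted --- so $\mf{B}\models\neg\psi_k$, and crucially $\mf{B}$ is \emph{not} a substructure of $\mf{A}$. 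It then shows $\mf{B}\models\gamma$ by a partial-isomorphism argument: every $n$-tuple of $\mf{B}$ together with the witnesses and endpoints can be matched to an $n$-tuple of $\mf{A}$ realizing the same quantifier-free type. Your instinct about ``$k+1$ independent loci'' and pigeonhole is in the right direction, but the contradiction must be extracted from the bounded number $n$ of universal variables via such a local-indistinguishability argument between two incomparable structures, not from the absence of a crux inside a single model.
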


\begin{proof}
The second part of the proposition follows from the first part since a
sentence that is preserved under substructures over $\cl{S}$ is also
$PSC(k)$ over $\cl{S}$ for each $k \ge 0$. We now prove the first
part of the proposition.

Consider the vocabulary $\tau = \{\leq, S, P, c, d\}$
where $\leq$ and $S$ are both binary relation symbols, $P$ is a unary
relation symbol, and $c$ and $d$ are constant symbols. The sentence
$\psi_k$ is constructed along the lines of the known counterxample to
the {\lt} theorem in the finite. Following are the details.



\[
\begin{array}{lll}
\psi_k & = & (\xi_1 \wedge \xi_2 \wedge \xi_3) \wedge \neg (\xi_4 \wedge \xi_5) \\
\xi_1 & = & \text{``}\leq~\text{is a linear order~''} \\
\xi_2 & = & \text{``}~c~\text{is minimum under}~\leq~\text{and}~d~\text{is maximum under}~\leq~\text{"}\\
\xi_3 & = & \forall x \forall y ~S(x, y) \rightarrow \text{``}~y~\text{is the successor of}~x~\text{under}~\leq~\text{"}\\
\xi_4 & = & \forall x ~(x \neq d) \rightarrow \exists y S(x, y)\\
\xi_5 & = & \text{``~There exist at most}~k~\text{elements in (the set interpreting)}~P~\text{"}\\
\end{array}
\]

It is easy to see that each of $\xi_1, \xi_2, \xi_3$ and $\xi_5$ can
be expressed using a universal sentence. In particular, $\xi_1$ and
$\xi_3$ can be expressed using a $\forall^3$ sentence each, $\xi_2$
can be expressed using a $\forall$ sentence, and $\xi_5$ can be
expressed using a $\forall^{k+1}$ sentence.

\underline{The sentence $\psi_k$ is preserved under substructures over
  $\cl{S}$}

We show that $\varphi_k = \neg \psi_k$ is preserved under
extensions over $\cl{S}$.

Let $\mf{A} \models \varphi_k$ and $\mf{A} \subseteq \mf{B}$. If
$\alpha = (\xi_1 \,\wedge\, \xi_2 \wedge\, \xi_3)$ is such that
$\mf{A} \models \neg \alpha$, then since $\neg \alpha$ is equivalent
to an existential sentence, we have $\mf{B} \models \neg \alpha$;
whence $\mf{B} \models \varphi_k$. Else, $\mf{A} \models \alpha \wedge
\xi_4$. Let $b$ be an element of $\mf{B}$ that is not in
$\mf{A}$. Then there are two possibilities: 
\begin{enumerate}[nosep]
\item $(\mf{B}, a_1, b, a_2) \models ((x \leq y) \wedge (y \leq z))$
  for two elements $a_1, a_2$ of $\mf{A}$ such that $(\mf{A}, a_1,
  a_2) \models S(x, z)$; then $\mf{B} \models \neg \xi_3$ and hence
  $\mf{B} \models \varphi_k$.
\item $(\mf{B}, b) \models ((d \leq x) \vee (x \leq c))$. Since the
  interpretations of $c, d$ in $\mf{B}$ are the same as those of $c,
  d$ in $\mf{A}$ respectively, we have $\mf{B} \models \neg \xi_2$ and
  hence $\mf{B} \models \varphi_k$.
\end{enumerate}
In all cases, we have $\mf{B} \models \varphi_k$.

\underline{The sentence $\psi_k$ is not equivalent over $\cl{S}$ to
  any $\exists^k \forall^*$ sentence}

Suppose $\psi_k$ is equivalent to the sentence $\gamma = \exists x_1
\ldots \exists x_k \forall^n \bar{y} \beta(x_1, \ldots, x_k,
\bar{y})$, where $\beta$ is a quantifier-free formula.  We show below
that this leads to a contradiction, showing that the sentence $\gamma$
cannot exist.

Consider the structure $\mf{A} = (\mathsf{U}_{\mf{A}}, \leq^{\mf{A}},
S^{\mf{A}}, P^{\mf{A}}, c^{\mf{A}}, d^{\mf{A}})$, where the universe
$\mathsf{U}_{\mf{A}} = \{1, \ldots, (8n+1)\times (k+1)\}$,
$\leq^{\mf{A}}$ and $S^{\mf{A}}$ are respectively the usual linear
order and successor relation on $\mathsf{U}_{\mf{A}}$, $c^{\mf{A}} =
1, d^{\mf{A}} = (8n+1) \times (k+1)$ and $P^{\mf{A}} = \{(4n+1) + i
\times (8n +1) \mid i \in \{0, \ldots, k\} \}$. We see that $\mf{A}
\models (\xi_1 \wedge \xi_2 \wedge \xi_3 \wedge \xi_4 \wedge \neg
\xi_5)$ and hence $\mf{A} \models \psi_k$. Then $\mf{A} \models
\gamma$. Let $a_1, \ldots, a_k$ be the witnesses in $\mf{A}$ to the
$k$ existential quantifiers of $\gamma$.

It is clear that there exists $i^* \in \{0, \ldots, k\}$ such that
$a_j$ does not belong to $\{(8n+1)\times i^* + 1, \ldots, (8n+1)\times
(i^*+1) \}$ for each $j \in \{1, \ldots, k\}$.  Then consider the
structure $\mf{B}$ that is identical to $\mf{A}$ except that
$P^{\mf{B}} = P^{\mf{A}} \setminus \{(4n+1) + i^* \times (8n+1)\}$.
It is clear from the definition of $\mf{B}$ that $\mf{B} \models
(\xi_1 \wedge \xi_2 \wedge \xi_3 \wedge \xi_4 \wedge \xi_5)$ and hence
$\mf{B} \models \neg \psi_k$. We now show a contradiction by showing
that $\mf{B} \models \gamma$.

We show that $\mf{B} \models \gamma$ by showing that $(\mf{B}, a_1,
\ldots, a_k) \models \forall^n \bar{y} \beta(x_1, \ldots, x_k,
\bar{y})$.  This is in turn done by showing that for any $n$-tuple
$\bar{e} = (e_1, \ldots, e_n)$ from $\mf{B}$, there exists an
$n$-tuple $\bar{f} = (f_1, \ldots, f_n)$ from $\mf{A}$ such that the
(partial) map $\rho: \mf{B} \rightarrow \mf{A}$ given by $\rho(1)
= 1$, $\rho((8n+1)\times (k+1)) = (8n+1)\times (k+1)$, $\rho(a_j)
= a_j$ for $j \in \{1, \ldots, k\}$ and $\rho(e_j) = f_j$ for $j \in
\{1, \ldots, n\}$ is such that $\rho$ is a partial isomorphism from
$\mf{B}$ to $\mf{A}$. Then since $(\mf{A}, a_1, \ldots, a_k) \models
\forall^n \bar{y} \beta(x_1, \ldots, x_k, \bar{y})$, we have $(\mf{A},
a_1, \ldots, a_k, \bar{f}) \models \beta(x_1, \ldots, x_k, \bar{y})$
whereby $(\mf{B}, a_1, \ldots, a_k, \bar{e}) \models \beta(x_1,
\ldots, x_k, \bar{y})$. Since $\bar{e}$ is an arbitrary $n$-tuple from
$\mf{B}$, we have $(\mf{B}, a_1, \ldots, a_k) \models \forall^n
\bar{y} \beta(x_1, \ldots, x_k, \bar{y})$.

Define a \emph{contiguous segment in $\mf{B}$} to be a set of $l$
distinct elements of $\mf{B}$, for some $l \ge 1$, that are contiguous
w.r.t. the linear ordering in $\mf{B}$. That is, if $b_1, \ldots, b_l$
are the distinct elements of the aforesaid contiguous segment such
that $(b_j, b_{j+1}) \in \leq^{\mf{B}}$ for $1 \leq j \leq l-1$, then
$(b_j, b_{j+1}) \in S^{\mf{B}}$. We represent such a contiguous
segment as $\left[b_1, b_l\right]$, and view it as an interval in
$\mf{B}$.  Given an $n$-tuple $\bar{e}$ from $\mf{B}$, a
\emph{contiguous segment of $\bar{e}$ in $\mf{B}$} is a contiguous
segment in $\mf{B}$, all of whose elements belong to (the set
underlying) $\bar{e}$.  A \emph{maximal contiguous segment of
  $\bar{e}$ in $\mf{B}$} is a contiguous segment of $\bar{e}$ in
$\mf{B}$ that is not strictly contained in another contiguous segment
of $\bar{e}$ in $\mf{B}$. Let $\mathsf{CS}$ be the set of all maximal
contiguous segments of $\bar{e}$ in $\mf{B}$. Let $\mathsf{CS}_1
\subseteq \mathsf{CS}$ be the set of all those segments of
$\mathsf{CS}$ that have an intersection with the set $\{1, \ldots,
(8n+1)\times i^*\} \cup \{ (8n+1)\times (i^*+1) + 1, \ldots,
(8n+1)\times(k+1) \}$. Let $\mathsf{CS}_2 = \mathsf{CS} \setminus
\mathsf{CS}_1$.  Then all intervals in $\mathsf{CS}_2$ are strictly
contained in the interval $\left[(8n+1) \times i^* + 1, (8n+1) \times
  (i^* + 1) \right]$.  Let $\mathsf{CS}_2 = \{\left[i_1, j_1\right],
\left[i_2, j_2\right] \ldots, \left[i_r, j_r\right]\}$.  Observe that
$r \leq n$. Without loss of generality, assume that $i_1 \leq j_1 <
i_2 \leq j_2 < \ldots < i_r \leq j_r$.  Let $\mathsf{CS}_3$ be the set
of contiguous segments in $\mf{A}$ defined as $\mathsf{CS}_3 =
\{\left[i_1', j_1'\right], \left[i_2', j_2'\right], \ldots,
\left[i_r', j_r'\right]\}$ where $i_1' = (8n+1)\times i^* + n + 1,
j_1' = i_1' + (j_1 - i_1)$, and for $2 \leq l \leq r$, we have $i_l' =
j_{l-1}' + 2$ and $j_l' = i_l' + (j_l - i_l)$. Observe that the sum of
the lengths of the segments of $\mathsf{CS}_2$ is at most $n$, whereby
$j_r' \leq (8n+1)\times i^* + 3n + 1$.

Now consider the tuple $\bar{f} = (f_1, \ldots, f_n)$ defined using
$\bar{e} = (e_1, \ldots, e_n)$ as follows. Let
$\text{Elements}(\mathsf{CS}_1)$,
resp. $\text{Elements}(\mathsf{CS}_2)$, denote the elements contained
in the segments of $\mathsf{CS}_1$, resp. $\mathsf{CS}_2$.  For $1
\leq l \leq n$, if $e_l \in \text{Elements}(\mathsf{CS}_1)$, then $f_l
= e_l$. Else suppose $e_l$ belongs to the segment $\left[i_s,
  j_s\right]$ of $\mathsf{CS}_2$ where $1 \leq s \leq r$, and suppose
that $e_l = i_s + t$ for some $t \in \{0, (j_s - i_s)\}$. Then choose
$f_l = i_s' + t$.

It is easy to see that the (partial) map $\rho: \mf{B} \rightarrow
\mf{A}$ given by $\rho(1) = 1$, $\rho((8n+1)\times (k+1)) =
(8n+1)\times (k+1)$, $\rho(a_j) = a_j$ for $j \in \{1, \ldots, k\}$
and $\rho(e_j) = f_j$ for $j \in \{1, \ldots, n\}$ is such that
$\rho$ is a partial isomorphism from $\mf{B}$ to $\mf{A}$.
\end{proof}

\begin{remark}\label{remark:suggestions-for-PSC=Sigma_2-in-the-finite}
Proposition~\ref{prop:failure-of-glt(k)-in-the-finite} is a stronger
statement than the failure of the {\lt} theorem in the finite. While
the latter only shows that the class of sentences that are preserved
under substructures in the finite, i.e. the class of sentences that
are $PSC(0)$ in the finite, cannot be characterized by the class of
$\forall^*$ sentences,
Proposition~\ref{prop:failure-of-glt(k)-in-the-finite} shows that for
each $l \ge 0$, the class of sentences that is $PSC(l)$ in the finite
cannot be characterized by, or even semantically subsumed by, the
class of $\exists^k \forall^*$ sentences, for any $k \ge
0$. Interestingly, the sentence $\psi_k$ in
Proposition~\ref{prop:failure-of-glt(k)-in-the-finite}, which is not
equivalent in the finite to any $\exists^k \forall^*$ sentence, is
actually equivalent (in the finite) to an $\exists^{k+1} \forall^*$
sentence. We dwell on this observation towards the end of
Chapter~\ref{chapter:open-questions-finite-model-theory}.
\end{remark}

\section[Failure of $\glt{k}$ over classes well-behaved w.r.t. the {\lt} theorem]{Failure of $\glt{k}$ over classes that are well-behaved w.r.t. the {\lt} theorem}

Towards the central result of this section, we first show the following.

\begin{lemma}\label{lemma:failure-of-glt(k)-over-disj-unions-of-paths}
Let $\cl{U}$ be the class of all undirected graphs that are (finite)
disjoint unions of (finite) undirected paths.  Let $\cl{S}$ be a class
of undirected graphs of degree at most 2, that contains $\cl{U}$ as
a subclass. Then for each $k \ge 2$, there is a sentence $\phi_k$ that
is $PSC(k)$ over $\cl{S}$, but that is not $\cl{S}$-equivalent to any
$\exists^k \forall^*$ sentence.
\end{lemma}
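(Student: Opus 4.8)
The plan is to adapt the finite counterexample $\psi_k$ of Proposition~\ref{prop:failure-of-glt(k)-in-the-finite} to the setting of degree-$\le 2$ graphs, replacing the explicit linear order, successor relation and extremal constants by the structure of a single long induced path: its adjacency relation plays the role of the (undirected) successor, and its two degree-$\le 1$ vertices play the roles of the constants $c,d$. Concretely, I would take $\phi_k$ to be an $\fo(\{E\})$ sentence asserting the existence of a configuration of $k$ distinguished vertices lying on a path, together with a global ``completeness'' condition whose violation is detectable only by inspecting the whole path. A guiding constraint I would record first is the following: if $C$ is any $k$-crux of a model $\mf{A}\in\cl{S}$, then the induced substructure $\mf{A}|_C$ has degree $\le 2$ and hence is a disjoint union of paths, so $\mf{A}|_C\in\cl{U}\subseteq\cl{S}$; consequently $\mf{A}|_C\models\phi_k$, and $\phi_k$ must already hold on some graph with at most $k$ vertices. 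Thus $\phi_k$ cannot merely ``count past $k$''; its $PSC(k)$-ness must come from a feature that every substructure of a union of paths regenerates canonically, namely its endpoints.

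For the $PSC(k)$ direction I would, for each $\mf{A}\in\cl{S}$ with $\mf{A}\models\phi_k$, exhibit an explicit $k$-element crux $C$ (the $k$ distinguished vertices of the configuration) and verify that every $\mf{B}$ with $C\subseteq\mf{B}\subseteq\mf{A}$ and $\mf{B}\in\cl{S}$ satisfies $\phi_k$. The key structural fact to exploit is that passing to an induced subgraph of a degree-$\le 2$ graph only \emph{fragments} its paths and cycles into shorter paths, so it can only create (never destroy, short of deletion) the degree-$\le 1$ vertices on which the ``endpoint/completeness'' part of $\phi_k$ relies; combined with the fact that $C$ pins the configuration in place, this makes $\phi_k$ upward-stable on the interval $[\mf{A}|_C,\mf{A}]$.

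The non-definability direction is where the real work lies, and it mirrors the second half of the proof of Proposition~\ref{prop:failure-of-glt(k)-in-the-finite}. Suppose $\phi_k$ were $\cl{S}$-equivalent to $\gamma=\exists x_1\cdots\exists x_k\,\forall^n\bar{y}\,\beta(x_1,\ldots,x_k,\bar{y})$ with $\beta$ quantifier-free. I would choose a model $\mf{A}\in\cl{U}$ consisting of a single path that is long relative to $k$ and $n$ (so that $\mf{A}\models\phi_k$), and let $a_1,\ldots,a_k$ be witnesses in $\mf{A}$ for the $k$ existential quantifiers of $\gamma$. Since these occupy only $k$ of the many positions on the path, there is a long contiguous ``clean'' segment avoiding $a_1,\ldots,a_k$ (and the endpoints); I would perform a single local surgery inside that segment---deleting a vertex or re-routing one edge, so as to break the global condition---to obtain $\mf{B}\in\cl{S}$ with $\mf{B}\not\models\phi_k$. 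It then remains to show $(\mf{B},a_1,\ldots,a_k)\models\forall^n\bar{y}\,\beta$, i.e.\ $\mf{B}\models\gamma$, contradicting that $\phi_k$ is $\cl{S}$-equivalent to $\gamma$. This I would do exactly as before: given an arbitrary $n$-tuple $\bar{e}$ from $\mf{B}$, organise its entries into maximal contiguous segments and build a length-preserving matching of these to corresponding segments of $\mf{A}$, yielding an $n$-tuple $\bar{f}$ in $\mf{A}$ such that the map fixing $a_1,\ldots,a_k$ and the endpoints and sending $\bar{e}\mapsto\bar{f}$ is a partial isomorphism from $\mf{B}$ to $\mf{A}$; transporting $(\mf{A},\bar{a})\models\forall^n\bar{y}\,\beta$ along it gives $(\mf{B},\bar{a},\bar{e})\models\beta$, and arbitrariness of $\bar{e}$ finishes the argument. (Equivalently, the matching step is a winning duplicator strategy in a suitable Ehrenfeucht--Fra\"iss\'e game, cf.\ Theorem~\ref{theorem:foef-theorem}.)

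The main obstacle, as in the preceding proposition, is the bookkeeping in this last step: the local surgery must be simultaneously \emph{visible} to $\phi_k$, so that $\mf{B}\not\models\phi_k$, and \emph{invisible} to every $n$-tuple once the $k$ witnesses and the two endpoints are fixed, so that the segment-matching always succeeds. Making the second requirement precise forces the path to be long enough---as a function of $k$ and $n$---that the clean segment can absorb all $n$ universally quantified points with room to spare on either side of the surgery site; this is the role played by the factor $8n+1$ in the length of $\mf{A}$ in Proposition~\ref{prop:failure-of-glt(k)-in-the-finite}. The remaining, largely routine, tasks are to pin down $\phi_k$ explicitly (and confirm the $\ge 2$ existential quantifiers are genuinely needed, which is why the statement requires $k\ge 2$) and to check that the contiguous-segment matching respects adjacency across the surgery site; I expect both to go through by the same argument that establishes the partial isomorphism $\rho$ there.
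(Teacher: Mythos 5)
Your guiding observation --- that any $k$-crux $C$ of a model $\mf{A} \in \cl{S}$ induces a substructure $\mf{A}|_C \in \cl{U} \subseteq \cl{S}$ that must itself satisfy $\phi_k$ --- is correct, but you have not followed it through, and it in fact defeats the surgery step of your non-definability argument. If $\mf{A}$ is a single long path with $\mf{A} \models \phi_k$ and $C$ is a $k$-crux of $\mf{A}$, then deleting \emph{any} vertex $v \notin C$ yields an induced substructure that still contains $C$, is a disjoint union of two paths and hence lies in $\cl{U} \subseteq \cl{S}$, and therefore still models $\phi_k$ by the very definition of $PSC(k)$. Since $|C| \le k$ and your ``clean segment'' is long, almost every vertex of that segment lies outside $C$, so the proposed deletion cannot falsify $\phi_k$; the only candidate deletion sites are the at most $k$ vertices lying in \emph{every} crux, and nothing guarantees such vertices exist, let alone sit in the middle of the path away from the $\gamma$-witnesses. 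The alternative surgery of re-routing an edge is also unavailable: the result is not a substructure, and $\cl{S}$ is only assumed to contain $\cl{U}$, so (for instance) closing the path into a cycle may leave $\cl{S}$ altogether. A related gap is that $\phi_k$ is never written down; a sentence made true on a single long path by a ``global completeness condition'' is in direct tension with being $PSC(k)$, since every substructure containing the crux must regenerate that condition.

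The paper's proof runs in the opposite direction. It takes $\phi_k$ to say ``there are at least $k$ vertices of degree $0$, or at least $k+1$ vertices of degree at most $1$''; over $\cl{S}$ this is \emph{false} on any graph whose only path component is a single path and true when at least two components are paths, and a $k$-crux is obtained by picking one endpoint from each of two distinct path components (degrees only drop in induced subgraphs, so the endpoint count regenerates). For non-definability one starts from $\mf{A}$ consisting of two long disjoint paths (a model), notes that the $k$ witnesses of a putative $\exists^k\forall^n$ equivalent cannot all lie on one path (else that path alone would model $\phi_k$), and then compares against a \emph{fresh} single path $\mf{B}$ of length at least $4n$ --- not a substructure of $\mf{A}$ --- placing images of the witnesses at matching distances from the two ends of $\mf{B}$ and running the contiguous-segment partial-isomorphism argument you describe. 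Your segment-matching machinery is the right tool, but it must be applied to this two-paths-versus-one-path comparison rather than to a surgered copy of a single path.
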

\begin{proof} Given $k \ge 2$, consider $\phi_k$ that
asserts that either (i) there are at least $k$ nodes of degree exactly
0 or (ii) there are at least $k+1$ nodes of degree atmost 1. We claim
that the sentence $\phi_k$ is the desired sentence for the given
$k$. We give the reasoning for the case of $k = 2$; an analogous
reasoning can be done for $k > 2$. In our arguments below, $\phi =
\phi_2$. We observe that any graph in $\cl{S}$ is a disjoint union of
undirected paths and undirected cycles.

Any graph in $\cl{S}$ that contains a single connected component that
is a path (whereby every other connected component is a cycle), cannot
be a model of $\phi$. Then every model $\mf{D}$ of $\phi$ in $\cl{S}$
has at least two connected components, each of which is a path (of
length $\ge 0$). Consider a set $C$ formed by an end point of one of
these paths and an end point of the other of these paths. It is easy
to check that $C$ is a $2$-crux of $\mf{D}$ w.r.t. $\phi$ over
$\cl{S}$, whereby $\phi$ is $PSC(2)$ over $\cl{S}$. Suppose $\phi$ is
$\cl{S}$-equivalent to $\psi = \exists x_1 \exists x_2~ \forall^n
\bar{y} ~ \beta(x_1, x_2, \bar{y})$ where $\beta$ is a quantifier-free
formula. Consider a graph $\mf{A} \in \cl{U}$ that has exactly two
connected components each of which is a path of length $\geq 2n$.
Clearly $\mf{A}$ is a model of $\phi$. Further since $\mf{A} \in
\cl{S}$, we have $\mf{A} \models \psi$. Let $a_1, a_2$ be witnesses in
$\mf{A}$, to the existential quantifiers of $\psi$. It cannot be that
$a_1, a_2$ are both from the same path of $\mf{A}$ else the path by
itself would be a model for $\psi$, and hence $\phi$. Now consider a
structure $\mf{B} \in \cl{U} \subseteq \cl{S}$, which is a single path
that has length $\ge 4n$, and let $p_1, p_2$ be the end points of this
path. If $a_1$ (resp. $a_2$) is at a distance of $r \leq n$ from the
end point of any path in $\mf{A}$, then choose a point $b_1$
(resp. $b_2$) at the same distance, namely $r$, from $p_1$
(resp. $p_2$) in $\mf{B}$. Else choose $b_1$ (resp. $b_2$) at a
distance of $n+1$ from $p_1$ (resp. $p_2$).  Now consider any
$n$-tuple $\bar{e}$ from $\mf{B}$. By a similar kind of reasoning as
done in Proposition~\ref{prop:failure-of-glt(k)-in-the-finite}, one
can show that it is possible to choose an $n$-tuple $\bar{f}$ from
$\mf{A}$ such that the (partial) map $\rho: \mf{B} \rightarrow \mf{A}$
given by $\rho(b_i) = a_i, ~\rho(e_j) = f_j$ for $1 \leq i \leq 2$ and
$1 \leq j \leq n$, is a partial isomorphism from $\mf{B}$ to $\mf{A}$.
Since $(\mf{A}, a_1, a_2) \models \forall^n \bar{y} \beta(x_1, x_2,
\bar{y})$, we have $(\mf{A}, a_1, a_2, \bar{f}) \models \beta(x_1,
x_2, \bar{y})$ whereby $(\mf{B}, b_1, b_2, \bar{e}) \models \beta(x_1,
x_2, \bar{y})$. Since $\bar{e}$ is arbitrary, we have $\mf{B}$ models
$\psi$, and hence $\phi$ -- a contradiction.
\end{proof}

To state the central result of this section, we first introduce some
terminology.  Let $\tau$ be a vocabulary consisting of unary and
binary relation symbols only. Given a $\tau$-structure $\mf{A}$, let
$\mc{G}(\mf{A}) = (V, E)$ be the graph such that (i) $V$ is exactly
the universe of $\mf{A}$, and (ii) $(a, b) \in E$ iff for some binary
relation symbol $B \in \tau$, we have $(\mf{A}, a, b) \models (B(x, y)
\vee B(y, x))$.  In the language of translation schemes (see
Section~\ref{section:background-FMT-translation-schemes}), the
structure $\mc{G}(\mf{A})$ is indeed $\Xi(\mf{A})$ where $\Xi = (\xi,
\xi_E)$ is the $(1, \tau, \{E\}, \fo)$-translation scheme such that
$\xi$ is the formula $x = x$, and $\xi_E$ is the formula $\bigvee_{D
  \in \tau_{bin}} (D(x, y) \vee D(y, x))$, where $\tau_{bin}$ is the
set of all binary relation symbols of $\tau$. Given a class $\cl{S}$
of $\tau$-structures, let $\mc{G}(\cl{S}) = \{ \mc{G}(\mf{A}) \mid
\mf{A} \in \cl{S}\}$.

We say a class of undirected graphs has \emph{unbounded induced path
  lengths} if for every $n \in \mathbb{N}$, there exists a graph $G$
in the class such that $G$ contains an induced path of length $\ge n$.
We say a class $\cl{S}$ of $\tau$-structures has unbounded induced
path lengths if the class $\mc{G}(\cl{S})$ of undirected graphs has
unbounded induced path lengths. A class of $\tau$-structures is said
to have \emph{bounded induced path lengths} if it does not have
unbounded induced path lengths. 

The central result of this section is as follows.

\begin{theorem}\label{theorem:glt(k)-implies-bounded-ind-path-lengths}
Let $\cl{V}$ be a hereditary class of undirected graphs.  Let $\tau$
be a vocabulary containing unary and binary relation symbols only, and
$\cl{S}$ be the class of all $\tau$-structures $\mf{A}$ such that
$\mc{G}(\mf{A})$ belongs to $\cl{V}$. If $\glt{k}$ holds over
$\cl{S}$ for any $k \ge 2$, then $\cl{S}$ has bounded induced path
lengths.
\end{theorem}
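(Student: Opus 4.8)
I would argue by contraposition: assuming $\cl{S}$ has unbounded induced path lengths, I would produce, for the given $k \ge 2$, an $\fo(\tau)$ sentence that is $PSC(k)$ over $\cl{S}$ but is not $\cl{S}$-equivalent to any $\exists^k \forall^*$ sentence, thereby contradicting $\glt{k}$ over $\cl{S}$.

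First I would carry out reductions on the graph side. Note that $\cl{S}$ is itself hereditary: if $\mf{B} \subseteq \mf{A}$ then $\mc{G}(\mf{B})$ is the induced subgraph of $\mc{G}(\mf{A})$ on $\univ{\mf{B}}$, so $\mf{A} \in \cl{S}$ and $\cl{V}$ hereditary give $\mc{G}(\mf{B}) \in \cl{V}$, i.e. $\mf{B} \in \cl{S}$. Hence $\mc{G}(\cl{S})$ is a hereditary graph class contained in $\cl{V}$. Unbounded induced path lengths of $\cl{S}$ means $\mc{G}(\cl{S})$, and so $\cl{V}$, contains an induced path $P_n$ for every $n$; by hereditariness each $P_n$ lies in $\cl{V}$. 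Since every finite disjoint union of finite paths embeds as an induced subgraph of a single sufficiently long path (lay the subpaths along $P_N$ separated by omitted vertices), hereditariness yields $\cl{U} \subseteq \cl{V}$, where $\cl{U}$ is the class of Lemma~\ref{lemma:failure-of-glt(k)-over-disj-unions-of-paths}. Let $\cl{G}_2 = \{G \in \mc{G}(\cl{S}) : G \text{ has degree} \le 2\}$; this is a class of degree-$\le 2$ graphs with $\cl{U} \subseteq \cl{G}_2$, so Lemma~\ref{lemma:failure-of-glt(k)-over-disj-unions-of-paths} supplies a graph sentence $\phi_k$ that is $PSC(k)$ over $\cl{G}_2$ but not $\cl{G}_2$-equivalent to any $\exists^k \forall^*$ sentence.

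The key device is a \emph{degree restriction}, which is needed because $\cl{V}$ may contain high-degree graphs (e.g. stars), over which $\phi_k$ fails to be $PSC(k)$. Using the translation scheme $\Xi = (\xi, \xi_E)$ with $\xi = (x=x)$ and $\xi_E = \bigvee_{D}(D(x,y) \vee D(y,x))$ ranging over the binary symbols $D$ of $\tau$ (so that $\Xi(\mf{A}) = \mc{G}(\mf{A})$), I would set $\psi_k = \delta \wedge \Xi(\phi_k)$, where $\delta = \Xi(\delta_{\mathrm{gr}})$ and $\delta_{\mathrm{gr}}$ is the universal graph sentence ``every vertex has at most two neighbours''. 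By Proposition~\ref{prop:relating-transductions-applications-to-structures-and-formulae}, $\mf{A} \models \psi_k$ iff $\mc{G}(\mf{A}) \in \cl{G}_2$ and $\mc{G}(\mf{A}) \models \phi_k$, and $\delta$ is universal, hence preserved under substructures. To check $\psi_k$ is $PSC(k)$ over $\cl{S}$: a model $\mf{A}$ has $\mc{G}(\mf{A}) \in \cl{G}_2$ with a $k$-crux $C$ for $\phi_k$ over $\cl{G}_2$; because substructures of $\mf{A}$ correspond exactly to induced subgraphs of $\mc{G}(\mf{A})$ on the same vertex set, such subgraphs remain of degree $\le 2$ and lie in $\mc{G}(\cl{S})$, so the same $C$ is a $k$-crux of $\mf{A}$ for $\Xi(\phi_k)$; conjoining the universal sentence $\delta$ does not disturb the crux, giving $PSC(k)$ for $\psi_k$.

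The hard part is the non-equivalence, namely transferring ``not equivalent to any $\exists^k \forall^*$ sentence'' from graphs to $\tau$-structures across $\Xi$, since $\Xi$ translates graph sentences into $\tau$-sentences but not conversely. I would do this concretely, using the specific disjoint-unions-of-paths structures $G_1 \models \phi_k$ and $G_2 \not\models \phi_k$ produced inside the proof of Lemma~\ref{lemma:failure-of-glt(k)-over-disj-unions-of-paths}; these lie in $\cl{U} \subseteq \cl{V}$. Encode them as $\tau$-structures $\mf{A}^\ast, \mf{B}^\ast \in \cl{S}$ by interpreting a single binary symbol symmetrically as the edge set and leaving all other unary and binary symbols empty, so that $\mc{G}(\mf{A}^\ast) = G_1$ and $\mc{G}(\mf{B}^\ast) = G_2$, whence $\mf{A}^\ast \models \psi_k$ and $\mf{B}^\ast \not\models \psi_k$. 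If $\psi_k$ were $\cl{S}$-equivalent to some $\gamma = \exists^k \bar{x}\, \forall^n \bar{y}\, \beta$, then $\mf{A}^\ast \models \gamma$, and I would replay the partial-isomorphism argument of the lemma: its chosen witnesses, together with the matching tuple it produces for each universal challenge, are graph partial isomorphisms, and since the only non-empty relation of $\mf{A}^\ast$ and $\mf{B}^\ast$ is the symmetric edge relation (every other relation being empty and hence trivially respected), these are equally partial isomorphisms of $\tau$-structures, yielding $\mf{B}^\ast \models \gamma$ via the quantifier-free agreement guaranteed by Theorem~\ref{theorem:foef-theorem}. Then $\mf{B}^\ast \models \psi_k$, a contradiction. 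This establishes the failure of $\glt{k}$ over $\cl{S}$ and completes the contrapositive; the main obstacle, which both the degree restriction and the explicit graph encoding are designed to circumvent, is precisely the one-directional nature of the translation scheme combined with the presence of high-degree graphs in $\cl{V}$.
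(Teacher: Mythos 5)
Your proposal is correct, and its first two thirds coincide with the paper's proof: both arguments use hereditariness of $\cl{V}$ plus unbounded induced path lengths to place the class $\cl{U}$ of disjoint unions of paths inside $\cl{V}$, both cut down to degree at most $2$ with a universal graph sentence, both take the candidate $\tau$-sentence to be the $\Xi$-translation of the conjunction of that degree restriction with the $\phi_k$ of Lemma~\ref{lemma:failure-of-glt(k)-over-disj-unions-of-paths}, and both verify $PSC(k)$ over $\cl{S}$ via the same two observations (substructures commute with $\mc{G}(\cdot)$, and truth of $\beta\left[E \mapsto \xi_E\right]$ mirrors truth of $\beta$ in the Gaifman-style graph). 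The genuine divergence is in the non-equivalence transfer. The paper keeps the lemma as a black box: from a hypothetical $\exists^k\forall^*$ sentence $\gamma_1$ over $\tau$ it builds a graph sentence $\gamma_2$ by substituting $E$ for one binary symbol and $\false$ for the others, and uses the canonical expansion $\mf{C}^\psi$ of a graph to show $\phi_k$ would be $\cl{V}_1$-equivalent to $\gamma_2$, contradicting the lemma's statement. You instead re-open the lemma's proof: you encode its two witnessing graphs (two long paths versus one long path, sized after the universal block length $n$ of $\gamma$ is fixed) as $\tau$-structures with one symmetric binary relation and all other relations empty, and replay the partial-isomorphism argument, noting that graph partial isomorphisms are automatically $\tau$-partial isomorphisms on such structures. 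Both routes work; the paper's buys modularity (any graph-level inexpressibility counterexample lifts wholesale to $\cl{S}$), while yours avoids the back-translation machinery at the price of re-running the lemma's combinatorial core --- in particular you must carry over to the $\tau$-setting the step that the $k$ existential witnesses cannot all lie in a single path component (which uses that a single path component, as a $\tau$-structure, is in $\cl{S}$, satisfies the universal part of $\gamma$ with the same witnesses, yet fails $\psi_k$). A minor note: the final quantifier-free transfer needs only the definition of partial isomorphism, not Theorem~\ref{theorem:foef-theorem}.
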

\begin{proof}
If $\tau$ contains only unary relation symbols, then trivially
$\cl{S}$ has bounded induced path lengths; the induced path lengths in
all structures in $\cl{S}$ is bounded by 0. Therefore, assume $\tau$
contains at least one binary relation symbol.  Let $\tau_{bin}$ be the
set of all binary relation symbols of $\tau$, and let $B$ be one such
relation symbol of $\tau_{bin}$.

We prove the result by contradiction. Suppose $\cl{S}$ has unbounded
induced path lengths. Then for every $n \in \mathbb{N}$, there exists
$\mf{A} \in \cl{S}$ such that the graph $\mc{G}(\mf{A})$ contains an
induced path of length $r \ge n$. Since $\mc{G}(\mf{A}) \in \cl{V}$
and $\cl{V}$ is hereditary, it follows that the undirected path graph
of length $n$ belongs to $\cl{V}$, for each $n \in \mathbb{N}$. Then,
again by the hereditariness of $\cl{V}$, the class $\cl{U}$ of all
(finite) disjoint unions of (finite) undirected paths is a subclass of
$\cl{V}$.  Let $\chi$ be a universal sentence in the vocabulary
$\{E\}$ of graphs such that any model of $\chi$ is an undirected graph
whose degree is at most 2, and let $\cl{V}_1$ be the class of all
models of $\chi$ in $\cl{V}$. Clearly, $\cl{U}$ is a subclass of
$\cl{V}_1$.  For $k \ge 2$, let $\phi_k$ be the sentence given by
Lemma~\ref{lemma:failure-of-glt(k)-over-disj-unions-of-paths} such
that $\phi_k$ is $PSC(k)$ over $\cl{V}_1$ but $\phi_k$ is not
$\cl{V}_1$-equivalent to any $\exists^k \forall^*$ sentence.

Before we proceed, we present two observations.  Let $\xi_E(x, y) =
\bigvee_{D \in \tau_{bin}} (D(x, y) \vee D(y, x))$ as seen earlier.
Given an $\fo(\{E\})$ sentence $\beta$, let $\beta\left[E \mapsto
  \xi_E \right]$ be the $\fo(\tau)$ sentence obtained from $\beta$ by
replacing each occurence of ``$E(x, y)$'' in $\beta$, with the formula
$\xi_E(x, y)$.  Following are two observations that are easy to
verify. Let $\mf{A}, \mf{B}$ be given $\tau$-structures.

\begin{enumerate}[nosep]
\item[O.1] If $\mf{B} \subseteq \mf{A}$, then $\mc{G}(\mf{B})
  \subseteq \mc{G}(\mf{A})$. 
\item[O.2] $\mf{A} \models \beta\left[E \mapsto \xi_E \right]$ ~iff~
  $\mc{G}(\mf{A}) \models \beta$.  
\end{enumerate}

(Indeed, Observation O.2 can also be verified using
Proposition~\ref{prop:relating-transductions-applications-to-structures-and-formulae},
while Observation O.1 can also be verified using
Lemma~\ref{lemma:qt-free-transductions-preserve-substructure-prop},
that we present later.)

Consider now the $\fo(\tau)$ sentence $\alpha = (\phi_k \wedge
\chi)\left[E \mapsto \xi_E\right]$. We have the following.

\ul{$\alpha$ is $PSC(k)$ over $\cl{S}$}: Suppose $\mf{A} \in \cl{S}$
is such that $\mf{A} \models \alpha$. Then $\mc{G}(\mf{A}) \in \cl{V}$
and $\mc{G}(\mf{A}) \models \phi_k \wedge \chi$ by Observation O.2;
whereby $\mc{G}(\mf{A}) \in \cl{V}_1$. Then since $\phi_k$ is $PSC(k)$
over $\cl{V}_1$, there exists a $k$-crux $C$ of $\mc{G}(\mf{A})$
w.r.t. $\phi_k$ over $\cl{V}_1$. Consider a substructure $\mf{B}$ of
$\mf{A}$ such that $\mf{B} \in \cl{S}$ and $\mf{B}$ contains $C$. Then
$\mc{G}(\mf{B}) \in \cl{V}$. By Observation O.1 above, we have
$\mc{G}(\mf{B}) \subseteq \mc{G}(\mf{A})$. Since $\chi$ is a universal
sentence that is true in $\mc{G}(\mf{A})$, we have that
$\mc{G}(\mf{B})$ models $\chi$, whereby $\mc{G}(\mf{B}) \in \cl{V}_1$.
Then since $\mc{G}(\mf{B})$ contains $C$ and $C$ is a $k$-crux of
$\mc{G}(\mf{A})$ w.r.t. $\phi_k$ over $\cl{V}_1$, it follows that
$\mc{G}(\mf{B}) \models (\phi_k \wedge \chi)$. By Observation O.2
above, $\mf{B} \models \alpha$.  Whereby $C$ is a $k$-crux of $\mf{A}$
w.r.t. $\alpha$ over $\cl{S}$.  Then $\alpha$ is $PSC(k)$ over
$\cl{S}$.

\ul{$\alpha$ is not $\cl{S}$-equivalent to any $\exists^k \forall^*$
  sentence}: Suppose $\alpha$ is $\cl{S}$-equivalent to an $\exists^k
\forall^*$ $\fo(\tau)$ sentence $\gamma_1$. Let $\gamma_2 =
\gamma_1\left[B \mapsto E; D \mapsto \false, D \in \tau_{bin}, D \neq
  B\right]$ be the $\fo(\{E\})$ sentence obtained from $\gamma_1$ by
replacing (i) each occurence of $B$ in $\gamma_1$ with $E$, and (ii)
each occurence of $D$ in $\gamma_1$ with the constant formula
$\false$, for each $D \in \tau_{bin}, D \neq B$. Observe that
$\gamma_2$ is an $\exists^k \forall^*$ sentence.  We show that
$\phi_k$ is $\cl{V}_1$-equivalent to $\gamma_2$, contradicting
Lemma~\ref{lemma:failure-of-glt(k)-over-disj-unions-of-paths}.

Let $\psi$ be the $\fo(\tau)$ sentence given by $\psi = \forall x
\forall y \big( (B(x, y) \leftrightarrow B(y, x)) \wedge \bigwedge_{D
  \in \tau_{bin}, D \neq B} \neg D(x, y) \big)$. Given a
$\{E\}$-structure $\mf{C}$, let $\mf{C}^\psi$ be the $\tau$-structure
such that (i) $\univ{\mf{C}^\psi} = \univ{\mf{C}}$, (ii) for any two
elements $a, b \in \univ{\mf{C}^\psi}$, $(\mf{C}^\psi, a, b) \models
B(x, y)$ iff $(\mf{C}, a, b) \models E(x, y)$, and (iii) for any two
elements $a, b \in \univ{\mf{C}^\psi}$, $(\mf{C}^\psi, a, b) \models
\neg D(x, y)$ for each $D \in \tau_{bin}, D \neq B$. It is easy to see
that $\mc{C}^\psi \models \psi$, and that $\mc{G}(\mf{C}^\psi) =
\mf{C}$.  

Before proceeding to show that $\phi_k$ is $\cl{V}_1$-equivalent to
$\gamma_2$, we make the simple yet important observation, call it
$(\dagger)$, that any model of $\psi$ in $\cl{S}$ models the sentence
$(\gamma_1 \leftrightarrow \gamma_2\left[E \mapsto \xi_E\right])$.

\begin{itemize}[leftmargin=*,nosep]
\item $\phi_k$ entails $\gamma_2$ over $\cl{V}_1$: Suppose $\mf{C} \in
  \cl{V}_1$ is such that $\mf{C} \models \phi_k$. Then $\mf{C} \models
  (\phi_k \wedge \chi)$. Let $\mf{A} = \mf{C}^\psi$.  Then $\mf{A}
  \models \psi$ and $\mc{G}(\mf{A}) = \mf{C}$, whereby $\mf{A} \in
  \cl{S}$. Since $\mf{C} \models (\phi_k \wedge \chi)$, we have
  $\mf{A} \models \alpha$ by Observation O.2 above. Now since $\alpha$
  is $\cl{S}$-equivalent to $\gamma_1$ (by assumption), we have
  $\mf{A} \models \gamma_1$. Since $\mf{A}$ models $\psi$, we have by
  $(\dagger)$ that $\mf{A} \models \gamma_2\left[E \mapsto
    \xi_E\right]$. By Observation O.2 above, $\mf{C} \models
  \gamma_2$.
\item $\gamma_2$ entails $\phi_k$ over $\cl{V}_1$: Suppose $\mf{C} \in
  \cl{V}_1$ is such that $\mf{C} \models \gamma_2$. Let $\mf{A} =
  \mf{C}^\psi$. Then $\mf{A} \models \psi$ and $\mc{G}(\mf{A}) =
  \mf{C}$. By Observation O.2 above, $\mf{A} \models \gamma_2\left[E
    \mapsto \xi_E\right]$. By $(\dagger)$, we have $\mf{A} \models
  \gamma_1$. Since $\gamma_1$ is $\cl{S}$-equivalent to $\alpha$, we
  have $\mf{A} \models \alpha$. By Observation O.2, $\mf{C} \models
  (\phi_k \wedge \chi)$, and hence $\mf{C} \models \phi_k$.
\end{itemize}
\end{proof}












\begin{proposition}\label{prop:glt(k)-failure-for-special-classes}
Let $\tau$ be a vocabulary containing unary and binary relation
symbols only, and let there be at least one binary relation symbol in
$\tau$. Then there exist classes $\cl{S}_1$ and $\cl{S}_2$ of
$\tau$-structures such that,
\begin{itemize}[nosep]
\item $\cl{S}_1$ is acyclic, of degree at most 2, and is closed under
  substructures and disjoint
  unions\label{corollary:glt(k)-failure-for-special-classes:acyclic-and-bounded-degree}
\item $\cl{S}_2$ is the class of all $\tau$-structures of treewidth
  1\label{corollary:glt(k)-failure-for-special-classes:treewidth-1}
\end{itemize}
and $\glt{k}$ fails over each of $\cl{S}_1$ and $\cl{S}_2$ for each $k
\ge 2$.
\end{proposition}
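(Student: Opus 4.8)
The plan is to obtain this as a direct application of the contrapositive of Theorem~\ref{theorem:glt(k)-implies-bounded-ind-path-lengths}. That theorem says that whenever a class has the form $\cl{S} = \{\mf{A} \mid \mc{G}(\mf{A}) \in \cl{V}\}$ for a \emph{hereditary} class $\cl{V}$ of undirected graphs, then $\glt{k}$ holding over $\cl{S}$ for some $k \ge 2$ forces $\cl{S}$ to have \emph{bounded} induced path lengths. Hence it suffices to exhibit two hereditary graph classes $\cl{V}_1, \cl{V}_2$, each having \emph{unbounded} induced path lengths, such that the induced $\tau$-structure classes $\cl{S}_i = \{\mf{A} \mid \mc{G}(\mf{A}) \in \cl{V}_i\}$ have the two structural shapes demanded by the proposition. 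The contrapositive of the theorem then immediately gives the failure of $\glt{k}$ over each $\cl{S}_i$ for every $k \ge 2$. A preliminary observation I would record once and reuse: fixing a binary symbol $B \in \tau$ (which exists by hypothesis), any finite undirected graph $G$ is realized as $\mc{G}(\mf{A}_G)$ where $\mf{A}_G$ has universe $V(G)$, interprets $B$ as the symmetric relation $\{(a,b),(b,a) \mid \{a,b\} \in E(G)\}$, and interprets all other symbols as empty. Consequently $\mc{G}(\cl{S}_i) = \cl{V}_i$, and since each $\cl{V}_i$ below contains all finite paths, each $\cl{S}_i$ indeed has unbounded induced path lengths.

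For $\cl{S}_1$ I would take $\cl{V}_1$ to be the class of all finite disjoint unions of finite paths, which is exactly the class of acyclic undirected graphs of maximum degree at most $2$; this class is clearly hereditary (an induced subgraph of a disjoint union of paths is again a disjoint union of paths). Then $\cl{S}_1$ consists precisely of those $\tau$-structures whose graph $\mc{G}(\mf{A})$ is acyclic and of degree at most $2$, so $\cl{S}_1$ is acyclic and of degree at most $2$. For closure under substructures, if $\mf{B} \subseteq \mf{A} \in \cl{S}_1$, then by Observation O.1 in the proof of Theorem~\ref{theorem:glt(k)-implies-bounded-ind-path-lengths} we have $\mc{G}(\mf{B}) \subseteq \mc{G}(\mf{A})$; moreover, since a substructure is induced, $\mc{G}(\mf{B})$ is in fact the \emph{induced} subgraph of $\mc{G}(\mf{A})$ on $\univ{\mf{B}}$, and hereditariness of $\cl{V}_1$ then yields $\mc{G}(\mf{B}) \in \cl{V}_1$, i.e.\ $\mf{B} \in \cl{S}_1$. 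Closure under disjoint unions follows from $\mc{G}(\mf{A} \sqcup \mf{B}) = \mc{G}(\mf{A}) \sqcup \mc{G}(\mf{B})$, which lies in $\cl{V}_1$ whenever both summands do.

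For $\cl{S}_2$ I would take $\cl{V}_2$ to be the class of all finite forests, i.e.\ all undirected graphs of treewidth at most $1$, which is again hereditary. Since $\tau$ contains only unary and binary relation symbols, the Gaifman graph of a $\tau$-structure coincides with $\mc{G}(\mf{A})$ (unary relations contribute no edges), and the treewidth of a structure is by definition the treewidth of its Gaifman graph; hence $\cl{S}_2 = \{\mf{A} \mid \mc{G}(\mf{A}) \text{ is a forest}\}$ is exactly the class of all $\tau$-structures of treewidth $1$, as required. Both $\cl{V}_1$ and $\cl{V}_2$ are hereditary and contain all finite paths, so by the realization remark each $\cl{S}_i$ has unbounded induced path lengths; applying the contrapositive of Theorem~\ref{theorem:glt(k)-implies-bounded-ind-path-lengths} to each completes the proof. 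The argument is essentially routine, so the only points needing care — and the place where I expect any friction — are the two bookkeeping verifications: that $\mc{G}(\mf{B})$ is an \emph{induced} (not merely spanning) subgraph of $\mc{G}(\mf{A})$ for a substructure $\mf{B}$, which is what licenses invoking hereditariness; and the identification of the structural notion of ``treewidth $1$'' with ``$\mc{G}(\mf{A})$ is a forest'' via the Gaifman graph, which is where the restriction to unary and binary symbols is used.
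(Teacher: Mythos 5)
Your proposal is correct and follows essentially the same route as the paper: both define $\cl{V}_1$ as the hereditary class of acyclic degree-at-most-2 graphs and $\cl{V}_2$ as the hereditary class of acyclic graphs (forests), pull back along $\mc{G}$ to get $\cl{S}_1$ and $\cl{S}_2$, note unbounded induced path lengths, and invoke the contrapositive of Theorem~\ref{theorem:glt(k)-implies-bounded-ind-path-lengths}. Your extra bookkeeping (the realization of every graph as some $\mc{G}(\mf{A})$, and the induced-subgraph and Gaifman-graph verifications) only makes explicit what the paper leaves to the reader.
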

\begin{proof}
Let $\cl{V}_2$ be the class of all undirected graphs that are acyclic,
and let $\cl{V}_1$ be the class of all the graphs in $\cl{V}_1$ that
have degree at most 2. Clearly $\cl{V}_1$ and $\cl{V}_2$ are
hereditary.  Let $\cl{S}_i$ be the class of all $\tau$-structures
$\mf{A}$ such that $\mc{G}(\mf{A}) \in \cl{V}_i$, for $i \in \{1,
2\}$. It is easy to see that $\cl{S}_1$ is acyclic, of degree at most
2, and is closed under substructures and disjoint unions. That
$\cl{S}_2$ is the class of all $\tau$-structures of tree-width 1
follows from definitions. Observe now that each of $\cl{S}_1$ and
$\cl{S}_2$ has unbounded induced path lengths. It then follows from
Theorem~\ref{theorem:glt(k)-implies-bounded-ind-path-lengths}, that
$\glt{k}$ cannot hold over $\cl{S}_i$ for any $k \ge 2$, for each $i
\in \{1, 2\}$.
\end{proof}

The above result motivates us to ask the following question: \emph{Can
  we identify structural properties (possibly abstract) of classes of
  finite structures that are satisfied by interesting classes of
  finite structures, and that entail $\mathsf{GLT}(k)$? And further,
  entail $\mathsf{GLT}(k)$ in effective form?}.  We identify one such
property in this thesis.  Notably, the classes of structures that
satisfy our property are incomparable to those studied
in~\cite{dawar-pres-under-ext,nicole-lmcs-15,duris-ext}.

\chapter[The $\mc{L}$-Equivalent Bounded Substructure Property -- $\lebsp{\cl{S}}{k}$]{The $\mc{L}$-Equivalent Bounded Substructure Property -- $\lebsp{\cl{S}}{k}$}\label{chapter:lebsp}


We define a new logic-based combinatorial property of classes of
finite structures.

\begin{chapterdefn}[$\lebsp{\cl{S}}{k}$]\label{defn:lebsp}
Let $\cl{S}$ be a class of structures and $k \in \mathbb{N}$.  We say
that $\cl{S}$ satisfies the \emph{$\mc{L}$-equivalent bounded
  substructure property} for parameter $k$, abbreviated
\emph{$\lebsp{\cl{S}}{k}$ is true} (alternatively,
\emph{$\lebsp{\cl{S}}{k}$ holds}),\sindex[symb]{$\lebsp{\cl{S}}{k}$}
if there exists a function $\witfn{\cl{S}}{k}{\mc{L}}: \mathbb{N}
\rightarrow \mathbb{N}$\sindex[symb]{$\witfn{\cl{S}}{k}{\mc{L}}$} such
that for each $m \in \mathbb{N}$, for each structure $\mf{A}$ of
$\cl{S}$ and for each $k$-tuple $\bar{a}$ from $\mf{A}$, there exists
a structure $\mf{B}$ such that (i) $\mf{B} \in \cl{S}$, (ii) $\mf{B}
\subseteq \mf{A}$, (iii) the elements of $\bar{a}$ are contained in
$\mf{B}$, (iv) $|\mf{B}| \leq \witfn{\cl{S}}{k}{\mc{L}}(m)$, and (v)
$\ltp{\mf{B}}{\bar{a}}{m}(\bar{x}) =
\ltp{\mf{A}}{\bar{a}}{m}(\bar{x})$.  The conjunction of these five
conditions is denoted as \emph{$\lebspcond(\cl{S}, \mf{A}, \mf{B}, k,
  m, \bar{a}, \witfn{\cl{S}}{k}{\mc{L}})$}\sindex[symb]{$\lebspcond$}.
We call $\witfn{\cl{S}}{k}{\mc{L}}$ a \emph{witness
  function}\sindex[term]{witness function} for $\lebsp{\cl{S}}{k}$.
\end{chapterdefn}

\begin{chapterremark}\label{remark:lebsp-witness-functions}
Observe that there can be several witness functions for
$\lebsp{\cl{S}}{k}$. For instance, if $\witfn{\cl{S}}{k}{\mc{L}}$ is a
witness function, then any function $\nu:\mathbb{N} \rightarrow
\mathbb{N}$ such that $\witfn{\cl{S}}{k}{\mc{L}}(m) \leq \nu(m)$ for
all $m \in \mathbb{N}$ is also a witness function. Observe also that
there always exists a monotonic witness function. This is easily seen
as follows. For a function $\nu:\mathbb{N} \rightarrow \mathbb{N}$,
let $\nu':\mathbb{N} \rightarrow \mathbb{N}$ be the function defined
as $\nu'(m) = \Sigma_{i = 0}^{i = m} \,\nu(i)$. Then $\nu'$ is
monotonic and $\nu(m) \leq \nu'(m)$ for all $m \in \mathbb{N}$;
whereby if $\nu$ is a witness function, then so is $\nu'$.  Therefore,
we assume henceforth that all witness functions are monotonic.
\end{chapterremark}

We list below two simple examples of classes $\cl{S}$ that satisfy
$\lebsp{\cl{S}}{k}$ for every $k \in \mathbb{N}$.  Many more such
classes are presented in Chapter~\ref{chapter:classes-satisfying-lebsp}.

\begin{enumerate}
\item Let $\cl{S}$ be a finite class of structures. Clearly,
  $\lebsp{\cl{S}}{k}$ holds for all $k \in \mathbb{N}$, with
  $\witfn{\cl{S}}{k}{\mc{L}}(m)$ giving the size of the largest
  structure in $\cl{S}$.
\item Let $\cl{S}$ be the class of all $\tau$-structures, where all
  relation symbols in $\tau$ are unary. By a simple $\fef$ game
  argument, one can see that $\febsp{\cl{S}}{k}$ holds for all $k \in
  \mathbb{N}$, with $\witfn{\cl{S}}{k}{\mc{L}}(m) = m \cdot 2^{|\tau|}
  + k$. In more detail: given $\mf{A} \in \cl{S}$, one can associate
  exactly one of $2^{|\tau|}$ colors with each element $a$ of
  $\mf{A}$, where the colour of $a$ gives the valuation in $\mf{A}$,
  of all predicates of $\tau$ for $a$. Then given a $k$-tuple
  $\bar{a}$ from $\mf{A}$, consider $\mf{B} \subseteq \mf{A}$
  satisfying the following: (i) the set $W$ of elements of $\bar{a}$,
  is contained in $\mathsf{U}_{\mf{B}}$, and (ii) for each colour $c$,
  if $A_c = \{a \mid a \in \mathsf{U}_{\mf{A}} \setminus W,
  ~a~\text{has colour}~c~\text{in}~\mf{A}\}$, then $A_c \subseteq
  \mathsf{U}_{\mf{B}}$ if $|A_c| < m$, else $|A_c \cap
  \mathsf{U}_{\mf{B}}| = m$. It is then easy to see that
  $\febspcond(\cl{S}, \mf{A}, \mf{B}, k, m, \bar{a},
  \witfn{\cl{S}}{k}{\mc{L}})$ is true.

  By a similar $\mef$ game argument, one can show that
  $\mebsp{\cl{S}}{k}$ holds for all $k \in \mathbb{N}$, with the same
  witness function $\witfn{\cl{S}}{k}{\mc{L}}$ as above.
\end{enumerate}



\section{$\lebsp{\cdot}{k}$ entails $\glt{k}$}\label{section:lebsp-and-glt}


In this section, we show that $\lebsp{\cl{S}}{k}$ indeed entails
$\glt{k}$. Towards this result, we first observe that given a class
$\cl{S}$ of structures and a natural number $n$, there exists an $\fo$
sentence that defines the subclass of all structures in $\cl{S}$ of
size at most $n$ (it is easy to construct such a sentence that is in
$\exists^n \forall^*$).  We fix such a sentence and denote it as
$\xi_{\cl{S}, n}$. Secondly, we observe the following.

\begin{lemma}\label{lemma:finite-relativizations} 
Given a sentence $\psi$ over a vocabulary $\tau$ and variables
$\bar{x}= (x_1, \ldots, x_n)$ not appearing in $\psi$, there exists a
{\em quantifier-free} formula $\psi|_{\bar{x}}(\bar{x})$ over $\tau$,
whose free variables are among $\bar{x}$, such that the following
holds: Let $\mf{A}$ be a structure and $\bar{a}=( a_1, \ldots, a_n)$ a
sequence of elements of $\mf{A}$. Then
$$(\mf{A}, a_1, \ldots, a_n) \models \psi|_{\bar{x}}(\bar{x})
\mbox{~iff~} \M{\mf{A}}{\aset} \models \psi$$
where $\M{\mf{A}}{\aset}$ denotes the substructure of $\mf{A}$ induced
by $\aset$. Further, $\psi|_{\bar{x}}(\bar{x})$ is computable from
$\psi$.
\end{lemma}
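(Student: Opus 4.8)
The statement I need to prove is a \emph{relativization lemma}: given any sentence $\psi$ over $\tau$ and fresh variables $\bar{x} = (x_1, \ldots, x_n)$, I must construct a quantifier-free formula $\psi|_{\bar{x}}(\bar{x})$ that, when evaluated at a tuple $\bar{a}$ in a structure $\mf{A}$, expresses exactly the truth of $\psi$ in the substructure $\M{\mf{A}}{\aset}$ induced by the elements of $\bar{a}$. This is a standard syntactic relativization, but the twist here is that the universe of $\M{\mf{A}}{\aset}$ is not the full set $\aset$ (since the vocabulary is relational in this part of the thesis, the induced substructure has universe exactly $\aset$, but to be safe I recall that in general it is the term-closure; since we are in the relational setting, I can take $\univ{\M{\mf{A}}{\aset}} = \aset$). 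The key feature to track is that the result must be \emph{quantifier-free}: we achieve this by replacing each quantifier in $\psi$ with a finite disjunction/conjunction ranging over the variables $x_1, \ldots, x_n$.

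\textbf{Plan of the proof.} The plan is to define $\psi|_{\bar{x}}(\bar{x})$ by structural induction on $\psi$, simultaneously for all formulas (not just sentences), carrying along the free variables. First I would handle atomic formulas: an atomic formula $R(t_1, \ldots, t_r)$ or $t_1 = t_2$ with variables among $\bar{x}$ is already quantifier-free, so its relativization is itself. For boolean connectives, relativization commutes with $\wedge, \vee, \neg$ in the obvious way: $(\varphi_1 \wedge \varphi_2)|_{\bar{x}} = \varphi_1|_{\bar{x}} \wedge \varphi_2|_{\bar{x}}$, and similarly for the others. The crucial inductive step is the quantifier case. For a subformula $\exists y\, \varphi(y, \bar{x})$, since the substructure $\M{\mf{A}}{\aset}$ has universe exactly $\{a_1, \ldots, a_n\}$, an existential quantifier over this universe is equivalent to a finite disjunction over the witnesses being one of $x_1, \ldots, x_n$. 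Concretely, I would define
\[
(\exists y\, \varphi)|_{\bar{x}}(\bar{x}) \;=\; \bigvee_{i=1}^{n} \big(\varphi[y \mapsto x_i]\big)|_{\bar{x}}(\bar{x}),
\]
and dually for the universal quantifier using a conjunction. Each disjunct is obtained by substituting $x_i$ for the bound variable $y$ and then recursively relativizing; the induction is well-founded because the substituted formula has strictly smaller quantifier depth.

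\textbf{Correctness and the main obstacle.} Once the construction is in place, I would prove the biconditional $(\mf{A}, \bar{a}) \models \psi|_{\bar{x}}(\bar{x})$ iff $\M{\mf{A}}{\aset} \models \psi$ by induction on $\psi$, strengthened to the statement that for every subformula $\varphi(\bar{y})$ of $\psi$ and every assignment of its free variables to elements among $\{a_1, \ldots, a_n\}$, the relativized formula agrees with $\varphi$ evaluated in $\M{\mf{A}}{\aset}$. The atomic and boolean cases are immediate since, for a relational vocabulary, the interpretation of each relation symbol in $\M{\mf{A}}{\aset}$ is the restriction of its interpretation in $\mf{A}$ to tuples from $\aset$, and equality is absolute. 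The quantifier case is where the argument does its real work: $\M{\mf{A}}{\aset} \models \exists y\, \varphi$ holds iff some element $b \in \aset$ witnesses $\varphi$, and since $\aset = \{a_1, \ldots, a_n\}$, this is exactly captured by the finite disjunction over the $x_i$ after applying the induction hypothesis to each $\varphi[y \mapsto x_i]$. The main obstacle I anticipate is purely bookkeeping rather than conceptual: I must be careful that the fresh variables $\bar{x}$ do not clash with the bound variables of $\psi$ (a routine renaming of bound variables of $\psi$ away from $\bar{x}$ handles this), and I must verify that the induction measure (say, quantifier rank, which strictly decreases under the substitution-then-relativize step) is genuinely well-founded so that the recursive definition terminates. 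Computability of $\psi|_{\bar{x}}$ from $\psi$ then follows immediately, since every step of the construction (boolean recursion, substitution, forming finite disjunctions and conjunctions indexed by $1, \ldots, n$) is an effective syntactic transformation.
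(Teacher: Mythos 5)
Your construction is exactly the paper's: the paper's proof replaces every subformula $\exists z\,\chi$ by $\bigvee_{z \in X}\chi$ and every $\forall z\,\chi$ by $\bigwedge_{z \in X}\chi$ for $X = \{x_1,\ldots,x_n\}$, which is precisely your substitute-and-recurse definition, and your correctness induction and computability remarks fill in details the paper leaves as ``clear.'' The proposal is correct and takes essentially the same approach.
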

\begin{proof}
Let $X = \{x_1, \ldots, x_n\}$.  Replace every subformula of $\psi$ of
the form $\exists z \chi(z, y_1, \ldots, y_k)$ with $\bigvee_{z \in X}
\chi(z, y_1, \ldots, y_k)$, and every subformula of $\psi$ of the form
$\forall z \chi(z, y_1, \ldots, y_k)$ with $\bigwedge_{z \in X}
\chi(z, y_1, \ldots, y_k)$. It is clear that the resulting formula can
be taken to be $\psi|_{\bar{x}}(\bar{x})$. It is also clear that
$\psi_{\bar{x}}(\bar{x})$ is computable from $\psi$.
\end{proof}

The formula $\psi|_{\bar{x}}(\bar{x})$ is read as \emph{$\psi$
  relativized to $\bar{x}$}. 

Given a class $\cl{S}$ of structures and an $\mc{L}$ sentence
$\varphi$, we say that $\varphi$ is $PSC(k)$ over $\cl{S}$ if the
class of models of $\varphi$ in $\cl{S}$ is $PSC(k)$ over $\cl{S}$
(see Definition~\ref{defn:PSC(k)} for the definition of $PSC(k)$).  We
say $\mc{L}$-$\glt{k}$ holds over $\cl{S}$ if for all $\mc{L}$
sentences $\varphi$, it is the case that $\varphi$ is $PSC(k)$ over
$\cl{S}$ iff $\varphi$ is $\cl{S}$-equivalent to an
$\exists^k\forall^*$ FO sentence. Observe that over any class
$\cl{S}$, $\text{FO}$-$\glt{k}$ holds iff $\glt{k}$ holds, and that if
$\text{MSO}$-$\glt{k}$ holds, then so does $\text{FO}$-$\glt{k}$.


\begin{theorem}\label{theorem:lebsp-implies-glt(k)}
Let $\cl{S}$ be a class of finite structures and $k \in \mathbb{N}$ be
such that $\lebsp{\cl{S}}{k}$ holds. Then $\mc{L}$-$\glt{k}$, and
hence $\glt{k}$, holds over $\cl{S}$.  Further, if there exists a
computable witness function for $\lebsp{\cl{S}}{k}$, then the
translation from an $\mc{L}$ sentence that is $PSC(k)$ over $\cl{S}$,
to an $\cl{S}$-equivalent $\exists^k\forall^*$ sentence, is effective.
\end{theorem}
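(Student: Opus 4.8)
The plan is to prove the two directions of $\mc{L}$-$\glt{k}$ over $\cl{S}$ separately. The easy direction is that any $\exists^k\forall^*$ FO sentence is $PSC(k)$ over $\cl{S}$; this is already established in the discussion at the end of Chapter~\ref{chapter:our-properties} (witnesses to the $k$ existential quantifiers form a $k$-crux), and it holds over any class, so it transfers to $\cl{S}$ immediately. The substance is the converse: given an $\mc{L}$ sentence $\varphi$ of quantifier rank $m$ that is $PSC(k)$ over $\cl{S}$, I must produce an $\exists^k\forall^*$ FO sentence that is $\cl{S}$-equivalent to $\varphi$.

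The key idea is to exploit $\lebsp{\cl{S}}{k}$ to reduce satisfaction of $\varphi$ to a property of \emph{bounded} substructures. First I would fix a monotone witness function $\witfn{\cl{S}}{k}{\mc{L}}$ and set $N = \witfn{\cl{S}}{k}{\mc{L}}(m)$, where $m = \rank{\varphi}$. The plan is to build the desired sentence in the shape $\exists x_1 \cdots \exists x_k\, \forall^* \bar{y}\, \theta$, where the $k$ existentially quantified variables pick out a $k$-crux, and the universal part asserts, via relativization, that \emph{every} substructure of size at most $N$ containing those $k$ witnesses and lying in $\cl{S}$ satisfies $\varphi$. Concretely, using Lemma~\ref{lemma:finite-relativizations}, for each choice of at most $N$ variables I can write the quantifier-free relativized formula $\varphi|_{\bar{z}}(\bar{z})$ asserting that the substructure induced by those elements models $\varphi$, and similarly use the relativization of $\xi_{\cl{S}, N}$ to assert membership in $\cl{S}$ (up to the $\cl{S}$-equivalence we are working modulo). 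Bounding the extra universally quantified variables by $N - k$, I obtain an $\exists^k\forall^{N-k}$ sentence; padding shows it is $\exists^k\forall^*$ as required.

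The correctness argument runs as follows. If $\mf{A}\in\cl{S}$ and $\mf{A}\models\varphi$, then since $\varphi$ is $PSC(k)$ there is a $k$-crux, and by $\lebsp{\cl{S}}{k}$ there is a substructure $\mf{B}\subseteq\mf{A}$ of size at most $N$ in $\cl{S}$ containing the crux with $\ltp{\mf{B}}{\bar{a}}{m}(\bar{x}) = \ltp{\mf{A}}{\bar{a}}{m}(\bar{x})$; I must verify that the crux elements witness the leading existentials and that every admissible small substructure containing them models $\varphi$ — this follows because any such small substructure in $\cl{S}$ containing the crux models $\varphi$ by the crux property, so the relativized universal statement holds. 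Conversely, if $\mf{A}\in\cl{S}$ satisfies the constructed sentence, the witnesses give a $k$-tuple $\bar{a}$, and applying $\lebsp{\cl{S}}{k}$ to $\mf{A}$ and $\bar{a}$ produces a small $\mf{B}\in\cl{S}$ that is $(m,\mc{L})$-equivalent to $\mf{A}$; since $\mf{B}$ has size at most $N$ and contains $\bar{a}$, the universal part forces $\mf{B}\models\varphi$, and then $(m,\mc{L})$-equivalence (and $m=\rank{\varphi}$) gives $\mf{A}\models\varphi$.

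The main obstacle I anticipate is the careful bookkeeping in the converse direction: I must ensure the relativized sentence genuinely captures ``some substructure of size $\le N$ in $\cl{S}$ containing the crux satisfies $\varphi$'' versus ``all such substructures satisfy $\varphi$'', and get the quantifier alternation exactly right so the result lands in $\exists^k\forall^*$ rather than a higher prefix class. A second delicate point is that $\lebsp{\cl{S}}{k}$ guarantees existence of a \emph{good} small substructure but the universal formula quantifies over \emph{all} small tuples; reconciling these requires using the crux property of $\varphi$ (every substructure in $\cl{S}$ containing the crux models $\varphi$) rather than merely the existence of one good $\mf{B}$. Finally, for the effectiveness claim, I would observe that when $\witfn{\cl{S}}{k}{\mc{L}}$ is computable, $N$ is computable from $\varphi$, the relativizations $\varphi|_{\bar{z}}$ and $\xi_{\cl{S}, N}|_{\bar{z}}$ are computable by Lemma~\ref{lemma:finite-relativizations}, and hence the entire $\exists^k\forall^*$ sentence is computable from $\varphi$, giving the effective translation.
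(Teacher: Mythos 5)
Your proposal is correct and follows essentially the same route as the paper: the paper's target sentence is $\chi = \exists^k \bar{x}\, \forall^p \bar{y}\, \psi|_{\bar{x}\bar{y}}(\bar{x},\bar{y})$ with $p = \witfn{\cl{S}}{k}{\mc{L}}(m)$ and $\psi = (\xi_{\cl{S},p} \rightarrow \varphi)$, and both directions are argued exactly as you describe (crux elements witness the existentials and the crux property discharges the universal part; conversely, the single small $\mf{B}$ supplied by $\lebsp{\cl{S}}{k}$ is used to instantiate the universal block and then $(m,\mc{L})$-equivalence transfers $\varphi$ back to $\mf{A}$). The only cosmetic difference is that the paper uses $p$ universal variables rather than your $N-k$, which sidesteps the corner case where the $k$ crux elements are not pairwise distinct and $N-k$ variables would not suffice to enumerate all of $\univ{\mf{B}}$ alongside $\bar{x}$.
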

\begin{proof} 

Let $\witfn{\cl{S}}{k}{\mc{L}}$ be a witness function for
$\lebsp{\cl{S}}{k}$. We show below that an $\mc{L}$ sentence $\varphi$
of quantifier rank $m$ that is $PSC(k)$ over $\cl{S}$, is
$\cl{S}$-equivalent to the sentence $\chi$ given by $\chi = \exists^k
\bar{x} \forall^p \bar{y} ~\psi|_{\bar{x}\bar{y}}(\bar{x}, \bar{y})$,
where $p = \witfn{\cl{S}}{k}{\mc{L}}(m)$ and $\psi = (\xi_{\cl{S}, p}
\rightarrow \varphi)$.  It is easy to see that if
$\witfn{\cl{S}}{k}{\mc{L}}$ is computable, then since $m$ is
effectively computable from $\varphi$, so are $p$, $\xi_{\cl{S}, p}$
and $\chi$.

From the discussion in Section~\ref{section:PSC(k)}, we see that a
$\mc{L}$ sentence that is $\cl{S}$-equivalent to an
$\exists^k\forall^*$ sentence, is $PSC(k)$ over $\cl{S}$.  Towards the
converse, consider an $\mc{L}$ sentence $\varphi$, of quantifier rank
$m$, that is $PSC(k)$ over $\cl{S}$.  Let $\chi$ be the sentence as
given in the previous paragraph.  Since $\varphi$ is $PSC(k)$ over
$\cl{S}$, every model $\mf{A}$ of $\varphi$ in $\cl{S}$ also satisfies
$\chi$.  This is because the elements of any $k$-crux of $\mf{A}$ can
serve as witnesses to the existential quantifiers of $\chi$.  Thus
$\varphi$ entails $\chi$ over $\cl{S}$. To show $\chi$ entails
$\varphi$ over $\cl{S}$, suppose $\mf{A}$ is a model of $\chi$ in
$\cl{S}$.  Let $\bar{a}$ be a $k$-tuple that is a witness in $\mf{A}$
to the $k$ existential quantifiers of $\chi$.  Since
$\lebsp{\cl{S}}{k}$ holds, there exists a structure $\mf{B}$ such that
$\lebspcond(\cl{S}, \mf{A}, \mf{B}, k, m, \bar{a},
\witfn{\cl{S}}{k}{\mc{L}})$ is true. In other words, we have (i)
$\mf{B} \in \cl{S}$, (ii) $\mf{B} \subseteq \mf{A}$, (iii) the
elements of $\bar{a}$ are contained in $\mf{B}$, (iv) $|\mf{B}| \leq
\witfn{\cl{S}}{k}{\mc{L}}(m) = p$, and (v) $(\mf{B}, \bar{a})
\lequiv{m} (\mf{A}, \bar{a})$; then $\mf{B} \lequiv{m} \mf{A}$.  Since
$(\mf{A}, \bar{a}) \models \forall^p \bar{y}
~\psi|_{\bar{x}\bar{y}}(\bar{x}, \bar{y})$, by instantiating the
universal variables $\bar{y}$ with the elements of
$\mathsf{U}_{\mf{B}}$, and by using
Lemma~\ref{lemma:finite-relativizations}, we get $\mf{B} \models
(\xi_{\cl{S}, p} \rightarrow \varphi)$. Since $\mf{B} \in \cl{S}$ and
$|\mf{B}| \leq p$, we have $\mf{B} \models \xi_{\cl{S}, p}$ whereby
$\mf{B} \models \varphi$.  Since $\varphi$ is an $\mc{L}$ sentence of
quantifier rank $m$ and $\mf{B} \lequiv{m} \mf{A}$, it follows that
$\mf{A} \models \varphi$. Thus $\chi$ entails $\varphi$ over $\cl{S}$
whereby $\varphi$ is $\cl{S}$-equivalent to $\chi$.
\end{proof}

\section[$\lebsp{\cdot}{k}$ and the downward L\"owenheim-Skolem property]{$\lebsp{\cl{S}}{k}$ -- a finitary analogue of the downward\\ L\"owenheim-Skolem property}\label{section:lebsp-and-dlsp}

The downward L\"owenheim-Skolem theorem, as already seen in
Section~\ref{section:background:structures}, is one of the first
results of classical model theory.  Well before the central tool of
classical model theory, namely the compactness theorem, was
discovered, L\"owenheim and Skolem~\cite{lowenheim, skolem}, showed
the following result.

\begin{theorem}[L\"owenheim 1915, Skolem 1920]\label{theorem:LS-original}
If an FO theory over a countable vocabulary has an infinite model,
then it has a countable model.
\end{theorem}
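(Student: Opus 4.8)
The plan is to produce a countable elementary substructure of the given infinite model by a Skolem-hull (Tarski--Vaught) construction, after which the result is immediate. Before doing so, I note that the statement in fact follows at once from the stronger Theorem~\ref{theorem:DLS} already recorded above: if $\mf{A}$ is an infinite model of the theory $T$ over the countable vocabulary $\tau$, apply Theorem~\ref{theorem:DLS} with $W = \emptyset$ and $\lambda = \aleph_0$ to obtain an elementary substructure $\mf{B} \preceq \mf{A}$ with $|\univ{\mf{B}}| \le \aleph_0$; since $\mf{B} \preceq \mf{A}$ gives $\mf{B} \equiv \mf{A}$ by Lemma~\ref{lemma:prop-of-structures}(2), we have $\mf{B} \models T$, and $\mf{B}$ is countable. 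Because this theorem historically predates the machinery behind Theorem~\ref{theorem:DLS}, however, I would give the following self-contained argument.

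First I would fix an infinite model $\mf{A} \models T$ and build an increasing chain $B_0 \subseteq B_1 \subseteq \cdots$ of subsets of $\univ{\mf{A}}$. Take $B_0$ to be any nonempty countable set containing all interpretations $c^{\mf{A}}$ of the constants of $\tau$ (there are countably many). Given $B_n$, form $B_{n+1}$ by adjoining, for each $\fo(\tau)$ formula $\varphi(x, \bar{y})$ and each tuple $\bar{b}$ from $B_n$ with $\mf{A} \models \exists x\, \varphi(x, \bar{b})$, a single chosen witness $a \in \univ{\mf{A}}$ satisfying $\mf{A} \models \varphi(a, \bar{b})$, and also adjoining $f^{\mf{A}}(\bar{b})$ for every function symbol $f \in \tau$ and every tuple $\bar{b}$ from $B_n$. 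Since $\tau$ is countable the set of $\fo(\tau)$ formulae is countable, and $B_n$ is countable, so only countably many pairs $(\varphi, \bar{b})$ and only countably many function-applications arise at each stage; hence each $B_{n+1}$ is countable. Put $B = \bigcup_{n} B_n$, a countable subset of $\univ{\mf{A}}$ that, by the function-closure clause, induces a substructure $\mf{B} \subseteq \mf{A}$.

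Next I would verify that $\mf{B} \preceq \mf{A}$ using the Tarski--Vaught test: for every $\fo(\tau)$ formula $\varphi(x, \bar{y})$ and every tuple $\bar{b}$ from $B$, if $\mf{A} \models \exists x\, \varphi(x, \bar{b})$ then some witness lies in $B$. This holds by construction, since $\bar{b}$ already appears in some $B_n$, whence a witness for $\varphi(x,\bar{b})$ was inserted at stage $n+1$. A routine induction on formula structure then upgrades this to agreement of $(\mf{B}, \bar{b})$ and $(\mf{A}, \bar{b})$ on all $\fo(\tau)$ formulae, i.e. $\mf{B} \preceq \mf{A}$. By Lemma~\ref{lemma:prop-of-structures}(2) we obtain $\mf{B} \equiv \mf{A}$, so $\mf{B} \models T$; and $\mf{B}$ is countable, completing the argument.

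The main obstacle is the bookkeeping that keeps the construction countable: one must check at each stage that the collection of $(\varphi, \bar{b})$ pairs requiring a witness, together with the closure under the finitely many function symbols applied to tuples from a countable set, stays countable --- and this is precisely where the hypothesis that $\tau$ is countable is indispensable (over an uncountable vocabulary the stages could blow up in size). The only other point needing care is ensuring $\mf{B}$ is a genuine substructure, i.e. closure under constants and functions, which is handled by the explicit closure clauses in the inductive step; the Tarski--Vaught verification itself is then forced by the way witnesses were added.
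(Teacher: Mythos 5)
Your proof is correct. Note, however, that the paper does not actually prove Theorem~\ref{theorem:LS-original}: it is recalled as a historical result with citations to the literature, and in the text it is immediately subsumed by the stronger Theorem~\ref{theorem:skolem-LS} and Theorem~\ref{theorem:modern-LS} (the latter being a restatement of Theorem~\ref{theorem:DLS}). So there is no in-paper argument to compare against; your one-line derivation from Theorem~\ref{theorem:DLS} with $W = \emptyset$ and $\lambda = \aleph_0$ is exactly the reduction the paper's presentation implicitly licenses, and your self-contained Skolem-hull/Tarski--Vaught construction is the standard proof and is sound. Two very minor remarks: the paper's definition of the substructure induced by a set already closes under terms, so your explicit function-closure clause is harmless but technically redundant there; and if ``countable'' is read as ``countably infinite,'' you should add the one-line observation that $\mf{B} \equiv \mf{A}$ forces $\mf{B}$ to be infinite (since for each $n$ the sentence asserting the existence of $n$ distinct elements holds in $\mf{A}$), so $|\mf{B}| = \aleph_0$ exactly.
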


Historically, the proof of Theorem~\ref{theorem:LS-original},
initially due to L\"owenheim, assumed K\"onig's lemma, though the
latter lemma was proven only in 1927. Skolem in 1920 gave the first
fully self-contained proof of the theorem and hence the theorem is
jointly attributed to Skolem~\cite{badesa}. In subsequent years,
Skolem came up with a more general statement.

\begin{theorem}[Skolem's revised version of the downward {\ls} theorem]\label{theorem:skolem-LS}
Let $\tau$ be a countable vocabulary. For every $\tau$-structure
$\mf{A}$ and every countable set $W$ of elements of $\mf{A}$, there is
a countable substructure $\mf{B}$ of $\mf{A}$ such that $\mf{B}$
contains the elements of $W$, and $\mf{B}$ is elementarily equivalent
to $\mf{A}$.
\end{theorem}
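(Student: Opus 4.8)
The plan is to prove a slightly stronger statement, namely that the witness substructure $\mf{B}$ can be taken to be an \emph{elementary} substructure of $\mf{A}$; the required elementary equivalence $\mf{B} \equiv \mf{A}$ then follows immediately from Lemma~\ref{lemma:prop-of-structures}(2). The main tool I would use is the Tarski--Vaught test, a standard criterion stating that a substructure $\mf{B}$ of $\mf{A}$ satisfies $\mf{B} \preceq \mf{A}$ if and only if, for every $\fo(\tau)$ formula $\varphi(x, \bar{y})$ and every tuple $\bar{b}$ from $\mf{B}$, whenever $\mf{A} \models \exists x\, \varphi(x, \bar{b})$ there exists an element $b \in \univ{\mf{B}}$ with $\mf{A} \models \varphi(b, \bar{b})$. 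The entire argument is then a \emph{Skolem-hull} construction engineered so that this criterion holds for a countable universe containing $W$.

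Concretely, first I would fix (using the axiom of choice) a witness assignment that sends each pair $(\varphi(x,\bar{y}), \bar{a})$ --- where $\varphi$ is an $\fo(\tau)$ formula, $\bar{a}$ is a tuple from $\univ{\mf{A}}$ of matching length, and $\mf{A} \models \exists x\, \varphi(x, \bar{a})$ --- to a chosen element $w(\varphi, \bar{a}) \in \univ{\mf{A}}$ satisfying $\mf{A} \models \varphi(w(\varphi,\bar{a}), \bar{a})$. Then I would build an increasing $\omega$-chain of subsets $W = W_0 \subseteq W_1 \subseteq \cdots$ of $\univ{\mf{A}}$, where $W_{n+1}$ is obtained from $W_n$ by adjoining $w(\varphi, \bar{a})$ for every formula $\varphi$ and every tuple $\bar{a}$ drawn from $W_n$ for which the witness is defined. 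Setting $B = \bigcup_{n} W_n$, I would verify that $B$ contains $W$ and that $B$ is closed under all function symbols of $\tau$ and contains all constant interpretations --- this is secured by also applying the construction to formulas of the shape $x = f(\bar{y})$ and $x = c$ --- so that $B$ induces a genuine substructure $\mf{B}$ of $\mf{A}$ with $\univ{\mf{B}} = B$.

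The cardinality bookkeeping is the routine but essential part. Since $\tau$ is countable there are only countably many $\fo(\tau)$ formulas, and if $W_n$ is countable then there are only countably many finite tuples over $W_n$; hence at each stage only countably many elements are adjoined and each $W_n$ remains countable. A countable union of countable sets is countable, so $B$ is countable. To then verify $\mf{B} \preceq \mf{A}$ via Tarski--Vaught, I would take any $\varphi(x,\bar{y})$ and any tuple $\bar{b}$ from $B$; being finite, $\bar{b}$ lies entirely inside some $W_n$, so if $\mf{A} \models \exists x\, \varphi(x, \bar{b})$ then $w(\varphi, \bar{b}) \in W_{n+1} \subseteq B$ supplies the required witness inside $\mf{B}$. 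This yields $\mf{B} \preceq \mf{A}$, and therefore $\mf{B} \equiv \mf{A}$, completing the proof.

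The main obstacle I expect is not any single deep step but the need to meet two competing requirements \emph{simultaneously}: the universe must be closed enough --- under Skolem witnesses, and in particular under the functions and constants of $\tau$ --- to both pass the Tarski--Vaught test and induce an honest substructure, yet it must stay countable to meet the size bound. The stage-by-stage construction with careful counting is exactly what reconciles these; the subtlety lies in confirming that folding the function and constant closure into the witness step does not disturb the countability count, and in making explicit that every finite tuple over $B$ already appears at some finite stage $W_n$, which is what powers the Tarski--Vaught verification.
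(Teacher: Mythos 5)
Your proof is correct. Note that the paper does not prove this statement at all: Theorem~\ref{theorem:skolem-LS} is recalled there as a classical result of L\"owenheim--Skolem--Mal'tsev, stated without proof, so there is no in-paper argument to compare against. What you give is the standard Skolem-hull construction verified via the Tarski--Vaught test, and it is sound: the witness assignment, the $\omega$-chain of countable stages, the closure under function and constant symbols (needed so that $B$ genuinely induces a substructure in the sense of Section~\ref{section:background:structures}), and the observation that every finite tuple over $B = \bigcup_n W_n$ lives in some $W_n$ are all handled correctly. In fact you prove the stronger Mal'tsev form --- $\mf{B} \preceq \mf{A}$ rather than merely $\mf{B} \equiv \mf{A}$ --- which is exactly Theorem~\ref{theorem:modern-LS} with $\kappa = \omega$, and the passage from elementary substructure to elementary equivalence via Lemma~\ref{lemma:prop-of-structures} is the same implication the paper itself records when relating its $\mathsf{DLSP}$ variants in Section~\ref{section:lebsp-and-dlsp}.
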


Finally, Mal'tsev~\cite{maltsev} proved the most general version of
\dlsfull, which is also considered as the modern statement of the
theorem.  This version by Mal'tsev, stated as
Theorem~\ref{theorem:DLS} in
Section~\ref{section:background:structures}, is restated
below. Recall, that $\mf{B} \preceq \mf{A}$ denotes that $\mf{B}$ is
an elementary substructure of $\mf{A}$.

\begin{theorem}[Modern statement of the {\ls} theorem, Mal'tsev 1936]\label{theorem:modern-LS}
Let $\tau$ be a countable vocabulary. For every $\tau$-structure
$\mf{A}$ and every infinite cardinal $\kappa$, if $W$ is a set of at
most $\kappa$ elements of $\mf{A}$, then there exists a structure
$\mf{B}$ such that (i) $\mf{B} \subseteq \mf{A}$, (ii) $\mf{B}$
contains the elements of $W$, (iii) $|\mf{B}| \leq \kappa$, and (iv)
$\mf{B} \preceq \mf{A}$.
\end{theorem}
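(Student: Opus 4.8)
The plan is to prove this by the classical \emph{Skolem hull} construction together with the Tarski--Vaught test, exactly along the lines of the cited references. First I would recall the Tarski--Vaught criterion: for $\mf{B} \subseteq \mf{A}$ one has $\mf{B} \preceq \mf{A}$ if, and only if, for every $\fo(\tau)$ formula $\varphi(x, \bar{y})$ and every tuple $\bar{b}$ from $\mf{B}$, whenever some $a \in \univ{\mf{A}}$ satisfies $\mf{A} \models \varphi(a, \bar{b})$ there already exists $b \in \univ{\mf{B}}$ with $\mf{A} \models \varphi(b, \bar{b})$. This is standard background (it follows by induction on formula structure from the definition of $\preceq$ given in Section~\ref{section:background:structures}). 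The construction will produce a set of elements satisfying this criterion while keeping the cardinality bounded by $\kappa$.

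The construction proceeds in $\omega$ stages. If $|\univ{\mf{A}}| \le \kappa$ we are immediately done by taking $\mf{B} = \mf{A}$, so assume $|\univ{\mf{A}}| > \kappa$. Set $W_0 = W$ together with the interpretations $c^{\mf{A}}$ of all constant symbols of $\tau$; then $|W_0| \le \kappa$. Given $W_n$, form $W_{n+1}$ by adjoining, for each $\fo(\tau)$ formula $\exists x\, \varphi(x, \bar{y})$ and each tuple $\bar{b}$ from $W_n$ for which $\mf{A} \models \exists x\, \varphi(x, \bar{b})$, a single witness $a$ chosen (using the axiom of choice) so that $\mf{A} \models \varphi(a, \bar{b})$. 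Taking formulas of the shape $\exists x\,(x = t(\bar{y}))$ among these ensures that the limit set is closed under the functions of $\mf{A}$. Let $W_\omega = \bigcup_{n < \omega} W_n$ and let $\mf{B}$ be the substructure of $\mf{A}$ induced by $W_\omega$; by the closure just noted, $\univ{\mf{B}} = W_\omega$, and conditions (i)--(ii) of the statement are immediate.

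For the cardinality bound (iii) I would argue stage by stage: since $\tau$ is countable there are only countably many formulas, and the set of finite tuples over $W_n$ has cardinality at most $\aleph_0 \cdot |W_n|$, so $|W_{n+1}| \le \aleph_0 \cdot |W_n| \le \kappa$ whenever $|W_n| \le \kappa$ and $\kappa$ is infinite; hence $|W_\omega| \le \aleph_0 \cdot \kappa = \kappa$. For (iv) I would apply the Tarski--Vaught criterion: any tuple $\bar{b}$ from $W_\omega$ lies entirely in some $W_n$ (take the maximum of the finitely many stages at which its entries first appear), so if $\mf{A} \models \exists x\, \varphi(x, \bar{b})$ then a witness was inserted into $W_{n+1} \subseteq W_\omega = \univ{\mf{B}}$, which is precisely the condition required; therefore $\mf{B} \preceq \mf{A}$.

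The main obstacle is not conceptual but a matter of bookkeeping: one must verify that the witness-closure is simultaneously rich enough to satisfy Tarski--Vaught (which forces adding witnesses for \emph{all} formulas and \emph{all} tuples at every stage) yet sparse enough that the cardinality never exceeds $\kappa$. The delicate points are the cardinal arithmetic $\aleph_0 \cdot \kappa = \kappa$, which relies essentially on $\kappa$ being infinite, and the correct treatment of function and constant symbols so that $W_\omega$ is genuinely the universe of an induced substructure rather than merely a subset; both are dispatched by the choice of $W_0$ and the inclusion of the term-witnessing formulas above.
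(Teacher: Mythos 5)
Your proof is correct: the Skolem-hull construction combined with the Tarski--Vaught test is the standard argument for this theorem, and you handle the two delicate points properly --- the cardinal arithmetic $\aleph_0 \cdot \kappa = \kappa$ for an infinite $\kappa$, and closure under constants and function symbols via the choice of $W_0$ and the term-witnessing formulas (note that the paper's definition of induced substructure already closes the universe under terms, so this last step is automatic here). The paper itself gives no proof --- the theorem is recalled from the literature as Theorem~\ref{theorem:DLS} --- and your argument is precisely the classical one from the cited references.
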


We now define a model-theoretic property of arbitrary structures, that
is closely related to the model-theoretic properties contained in the
versions of the {\dlsfull} by Skolem and Mal'tsev.  Given structures
$\mf{A}$ and $\mf{B}$, we say $\mf{A}$ and $\mf{B}$ are
\emph{$\mc{L}$-equivalent}, denoted $\mf{A} \equiv_{\mc{L}} \mf{B}$,
if $\mf{A}$ and $\mf{B}$ agree on all $\mc{L}$ sentences. If $\mf{B}
\subseteq \mf{A}$ and $\bar{b}$ is a (potentially infinite) tuple from
$\mf{B}$, that contains exactly the elements of $\mf{B}$, then we say
$\mf{B}$ is an \emph{$\mc{L}$-substructure} of $\mf{A}$, denoted
$\mf{B} \preceq_{\mc{L}} \mf{A}$, if $(\mf{B}, \bar{b})
\equiv_{\mc{L}} (\mf{A}, \bar{b})$. The reader can recognize that when
$\mc{L} = \fo$, then $\equiv_{\mc{L}}$ and $\preceq_{\mc{L}}$ are
exactly the literature notions of elementary equivalence and
elementary substructure (see
Section~\ref{section:background:structures}).  One easily sees that if
$\mf{B} \subseteq \mf{A}$ and $\bar{a}$ is any tuple (of any length)
from $\mf{B}$, then
\begin{center}
\begin{tabular}{ccccc}
$\mf{B} \preceq_{\mc{L}} \mf{A}$ & $\rightarrow$ & $(\mf{B}, \bar{a})
  \equiv_{\mc{L}} (\mf{A}, \bar{a})$ & $\rightarrow$ & $\mf{B}
  \equiv_{\mc{L}} \mf{A}$\\
\end{tabular}
\end{center}
where $\rightarrow$ denotes the usual implication.  Consider now the
following model-theoretic property of a class $\cl{S}$ of
\emph{arbitrary} structures. Below, a $\kappa$-tuple is a tuple of
length $\kappa$.

\begin{defn}\label{defn:dlsp}
Let $\cl{S}$ be a class of arbitrary structures over a countable
vocabulary, and let $\kappa$ be an infinite cardinal.  We say that
\emph{$\ldlsp{\cl{S}}{\kappa}$ is true}, if for each structure $\mf{A}
\in \cl{S}$ and each $\kappa$-tuple $\bar{a}$ from $\mf{A}$, there
exists a structure $\mf{B}$ such that (i) $\mf{B} \in \cl{S}$, (ii)
$\mf{B} \subseteq \mf{A}$, (iii) the elements of $\bar{a}$ are
contained in $\mf{B}$, (iv) $|\mf{B}| \leq \kappa$, and (v) $(\mf{B},
\bar{a}) \equiv_{\mc{L}} (\mf{A}, \bar{a})$.
\end{defn}

Let $\ldlspm{\cl{S}}{\kappa}$, resp. $\ldlsps{\cl{S}}{\kappa}$, be the
properties obtained from Definition~\ref{defn:dlsp} by simply
replacing the last condition in the definition with ``$\mf{B}
\preceq_{\mc{L}} \mf{A}$'', resp. ``$\mf{B} \equiv_{\mc{L}}
\mf{A}$''. The implication above then shows that
\begin{center}
\begin{tabular}{ccccc}
$\ldlspm{\cl{S}}{\kappa}$ & $\rightarrow$ & $\ldlsp{\cl{S}}{\kappa}$ & $\rightarrow$ & $\ldlsps{\cl{S}}{\kappa}$\\
\end{tabular}
\end{center}
Observe now that by taking $\mc{L}$ as $\fo$ and $\cl{S}$ as the class
of all (i.e. finite and infinite) structures, both
$\ldlspm{\cl{S}}{\kappa}$ and $\ldlsps{\cl{S}}{\omega}$ are true,
since indeed, these are respectively, the versions of the {\dlsfull}
by Mal'tsev and Skolem, given by Theorem~\ref{theorem:modern-LS} and
Theorem~\ref{theorem:skolem-LS}.  Whereby, we can see
$\ldlsp{\cl{S}}{\kappa}$ as a version of the {\dls} property that is
``intermediate'' between the versions of the {\dls} property by
Mal'tsev and Skolem.  And now, as the figure below shows,
$\lebsp{\cl{S}}{k}$ reads very much like
$\ldlsp{\cl{S}}{\kappa}$. Indeed then, $\lebsp{\cl{S}}{k}$ can very
well be regarded as a finitary analogue of the {\dls} property.

\begin{figure}[H]\centering \includegraphics[scale=0.7]{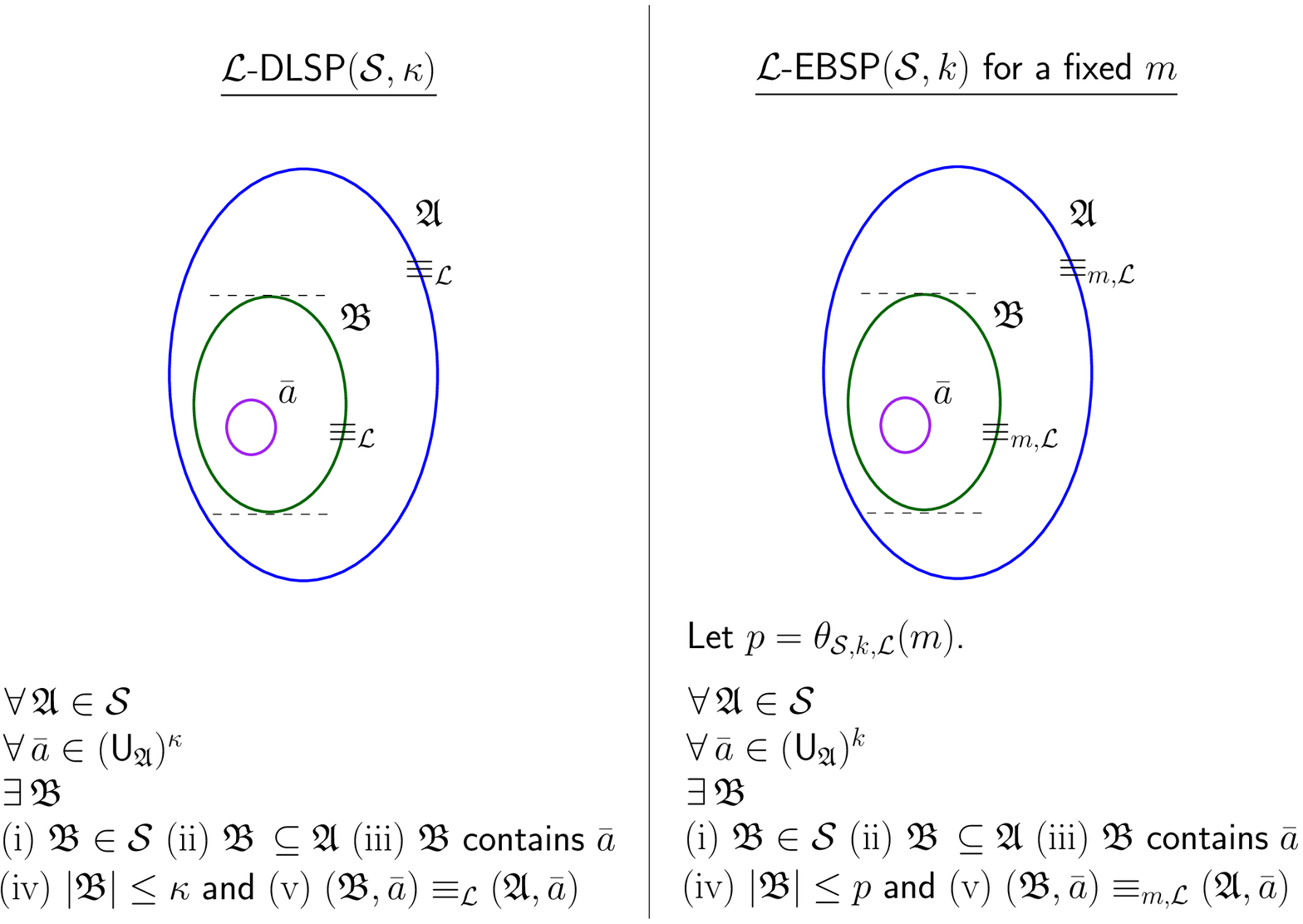}
\caption{\label{figure:lebsp-and-dlsp}$\lebsp{\cl{S}}{k}$ as a finitary analogue of $\ldlsp{\cl{S}}{\kappa}$}

\end{figure}

\section{A sufficient condition for $\lebsp{\cdot}{k}$}\label{section:sufficient-condition-for-lebsp}

For a class $\cl{S}$ of $\tau$-structures, let $\cl{S}_p = \{ (\mf{A},
\nu) \mid \mf{A} \in \cl{S},\, \nu: \mathsf{U}_{\mf{A}} \rightarrow
\{0, \ldots, p-1\} \}$ be the class of all structures obtained by
labeling the elements of the structures of $\cl{S}$, with elements
from $\{0, \ldots, p-1\}$. Formally, each structure from $\cl{S}_p$
can be seen as a $\tau'$-structure where $\tau' = \tau \cup \{Q_i \mid
i \in \{0, \ldots, p-1\}\}$ and $Q_i$ is a unary relation symbol that
does not appear in $\tau$, for each $i \in \{0, \ldots, p-1\}$. We
have the following lemma.


\begin{lemma}\label{lemma:gebsp-and-ebsp}
Let $\cl{S}$ be a class of finite $\tau$-structures, and $p, k \in
\mathbb{N}$ be such that $p \ge k$. Then the following are true.
\begin{enumerate}[nosep]
\item $\lebsp{\cl{S}_p}{0}$ implies
  $\lebsp{\cl{S}_k}{0}$\label{lemma:gebsp-and-ebsp:1}
\item $\lebsp{\cl{S}_{p\cdot k}}{0}$ implies
  $\lebsp{(\cl{S}_p)_k}{0}$\label{lemma:gebsp-and-ebsp:2}
\item $\lebsp{\cl{S}_{k+1}}{0}$ implies
  $\lebsp{\cl{S}}{k}$\label{lemma:gebsp-and-ebsp:3}
\item $\lebsp{\cl{S}}{k}$ implies
  $\febsp{\cl{S}}{k}$\label{lemma:gebsp-and-ebsp:4}
\end{enumerate}
Further, in each of the implications above, any witness function for
the antecedent is also a witness function for the consequent.
\end{lemma}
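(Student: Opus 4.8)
The plan is to prove all four implications by a single strategy: re-express the $\lebsp{\cdot}{\cdot}$ condition demanded by the consequent through a change of encoding, and then transport the small substructure $\mf{B}$ (together with its five defining properties) that the antecedent supplies. Throughout, I would fix a monotonic witness function for the antecedent (available by Remark~\ref{remark:lebsp-witness-functions}) and show that it also witnesses the consequent. The three recurring tools are: (a) vocabulary reducts, which preserve rank-$m$ $\mc{L}$-equivalence downward; (b) scalar, quantifier-free, identity-on-universe translation schemes, for which Proposition~\ref{prop:relating-transductions-applications-to-structures-and-formulae} and Corollary~\ref{corollary:transferring-m-equivalence-across-transductions} (with $r = 1\cdot m + \text{rank}(\Xi) = m$) transfer $\lequiv{m}$ in both logics, and which preserve the substructure relation because quantifier-free formulae are preserved under substructures; and (c) the $\mc{L}$-Ehrenfeucht--Fr\"aiss\'e games of Theorems~\ref{theorem:foef-theorem} and~\ref{theorem:msoef-theorem}.

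For part~(\ref{lemma:gebsp-and-ebsp:1}) I would view $\cl{S}_k$ inside $\cl{S}_p$ (legitimate since $k \le p$) as the subclass of colourings using only the colours $0,\dots,k-1$, i.e.\ whose surplus predicates $Q_k,\dots,Q_{p-1}$ are empty. Given $(\mf{A},\nu)\in\cl{S}_k$, expand it to $\mf{A}'\in\cl{S}_p$ with $Q_k=\dots=Q_{p-1}=\emptyset$ and apply $\lebsp{\cl{S}_p}{0}$ at rank $m$ to obtain $\mf{B}'\subseteq\mf{A}'$. Since substructures inherit the emptiness of unary predicates, $\mf{B}'$ again uses only colours $0,\dots,k-1$, so its $\tau_k$-reduct $\mf{B}$ lies in $\cl{S}_k$; and as $\tau_k\subseteq\tau_p$, the equivalence $\mf{A}'\lequiv{m}\mf{B}'$ restricts to $\mf{A}\lequiv{m}\mf{B}$. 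For part~(\ref{lemma:gebsp-and-ebsp:2}) I would fuse the $p$-colouring and the $k$-colouring of a structure in $(\cl{S}_p)_k$ into one $(p\cdot k)$-colouring via $(i,j)\mapsto i\cdot k+j$, realised by a scalar quantifier-free scheme $\Xi$ with $Q_i=\bigvee_j P_{ik+j}$ and $Q'_j=\bigvee_i P_{ik+j}$. Applying $\lebsp{\cl{S}_{p\cdot k}}{0}$ to the fused structure $\mf{A}^{\ast}$ and pushing the resulting $\mf{B}^{\ast}$ through $\Xi$ yields, by tool~(b), a substructure $\mf{B}=\Xi(\mf{B}^{\ast})\subseteq\Xi(\mf{A}^{\ast})$ in $(\cl{S}_p)_k$ with $\mf{A}\lequiv{m}\mf{B}$ and the same size bound. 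In both parts the five conditions defining $\lebspcond$ hold with the antecedent's witness function unchanged.

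Part~(\ref{lemma:gebsp-and-ebsp:3}) is the crux. Given $\mf{A}\in\cl{S}$ and a $k$-tuple $\bar{a}$, I would mark its (at most $k$) distinct entries with distinct colours $1,2,\dots$ and colour $0$ everywhere else, producing $\mf{A}'\in\cl{S}_{k+1}$, and apply $\lebsp{\cl{S}_{k+1}}{0}$ to get a small $\mf{B}'\subseteq\mf{A}'$ with $\mf{B}'\lequiv{m'}\mf{A}'$. Each nonzero colour is a singleton in $\mf{A}'$, a property expressible by a rank-$2$ sentence, so for $m'\ge 2$ the equivalence forces $\mf{B}'$ to retain exactly the marked elements; hence $\bar{a}$ is contained in the $\tau$-reduct $\mf{B}$. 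It then remains to show $\ltp{\mf{B}}{\bar{a}}{m}(\bar{x})=\ltp{\mf{A}}{\bar{a}}{m}(\bar{x})$, which I would establish by an $\mc{L}$-EF game: a duplicator strategy on the coloured pair $\mf{A}',\mf{B}'$ induces one on the pebbled pair $(\mf{A},\bar{a}),(\mf{B},\bar{a})$, the singleton colours forcing each pebble $a_i$ to be answered by $a_i$ whenever a player touches it. \textbf{The main obstacle is the rank bookkeeping here.} To certify the atomic relations between a freely chosen element and a pre-placed pebble that was \emph{not} itself chosen, the duplicator must ``reveal'' that pebble, costing up to $k$ extra rounds; equivalently, eliminating the $k$ marking colours from an $\mc{L}(\tau)$-formula of rank $m$ costs $k$ quantifiers. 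Thus the faithful transfer requires applying the antecedent at rank $m'=\max(m+k,2)$, so the bound is $\witfn{\cl{S}_{k+1}}{0}{\mc{L}}(m')$; I would absorb this offset by passing (via Remark~\ref{remark:lebsp-witness-functions}) to a monotonic witness function dominating $m\mapsto\witfn{\cl{S}_{k+1}}{0}{\mc{L}}(m+k)$, giving one function that serves both antecedent and consequent. Checking that the coloured-to-pebbled simulation is faithful for the point moves (FO and MSO) and the set moves (MSO) is the delicate, though routine, remaining part.

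Finally, part~(\ref{lemma:gebsp-and-ebsp:4}) is immediate from the fact that every $\fo$ formula is an $\mso$ formula of the same quantifier rank, so $\mequiv{m}$ refines $\fequiv{m}$ and the $\fo$-type of a tuple is the $\fo$-fragment of its $\mso$-type. Hence the structure $\mf{B}$ witnessing $\mebsp{\cl{S}}{k}$ satisfies conditions (i)--(iv) verbatim, while condition (v) for $\mso$ entails condition (v) for $\fo$; the witness function is unchanged, and when $\mc{L}=\fo$ the implication is vacuous. Combined with the bookkeeping remark for part~(\ref{lemma:gebsp-and-ebsp:3}), this yields the final ``Further'' clause that a witness function for each antecedent also witnesses the corresponding consequent.
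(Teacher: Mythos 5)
Your proposal follows the same route as the paper on all four parts: parts~(1), (2) and (4) use exactly the paper's identifications (a $k$-colouring is a $p$-colouring using only the first $k$ colours, the pairing $(i,j)\mapsto i\cdot k+j$, and $\fo\subseteq\mso$), the only cosmetic difference being that you package the decoding step of part~(2) as a quantifier-free scalar translation scheme where the paper simply replays the duplicator's strategy. The one substantive divergence is in part~(3), and there your ``rank bookkeeping'' worry is not an artefact of your write-up --- it points at a real gap in the paper's own argument. The paper asserts that the duplicator's $m$-round strategy on the coloured pair $\mf{A}',\mf{B}'$ is verbatim a winning strategy on $(\mf{A},\bar{a})$ versus $(\mf{B},\bar{a})$, so that the antecedent's witness function transfers unchanged; but the winning condition of the latter game checks atomic relations between chosen elements and the constants $\bar{a}$ even when those constants are never picked, and rank-$m$ equivalence of the coloured structures does not certify these. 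Concretely, take $\tau=\{E\}$, $k=m=1$, $\mf{A}$ the graph on $\{a_1,b,c\}$ with the single edge $\{a_1,b\}$, with $a_1$ coloured $0$ and $b,c$ coloured $1$: the induced substructure $\mf{B}'$ on $\{a_1,c\}$ satisfies $\mf{B}'\lequiv{1}\mf{A}'$ and contains $a_1$, yet the rank-$1$ formula $\exists y\, E(y,x_1)$ separates $(\mf{A},a_1)$ from $(\mf{B},a_1)$. Your repair --- apply the antecedent at rank $m+k$, since eliminating the $k$ singleton marking colours costs $k$ quantifiers (equivalently, the duplicator needs $k$ reserve rounds to reveal unpicked pebbles) --- is correct for both $\fo$ and $\mso$, and rank $1$ already suffices for the containment of $\bar{a}$ in $\mf{B}$. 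The only price is that the ``Further'' clause for part~(3) must be read as ``$m\mapsto\witfn{\cl{S}_{k+1}}{0}{\mc{L}}(m+k)$ is a witness function for the consequent'' rather than as literal identity of witness functions; since this preserves computability, every downstream use of the lemma is unaffected.
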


\begin{proof}
\ul{Part \ref{lemma:gebsp-and-ebsp:1}}: Obvious since $\cl{S}_k
\subseteq \cl{S}_p$ and $\cl{S}_k$ is closed under substructures that
are in $\cl{S}_p$.

\ul{Part \ref{lemma:gebsp-and-ebsp:2}}: Consider $\mf{A} \in
(\cl{S}_p)_k$. Each element of $\mf{A}$ can be seen to be labelled
with a pair $(i, j)$ of labels where $i \in \{0, \ldots, p-1\}$ and $j
\in \{0, \ldots, k-1\}$. Then $\mf{A}$ can naturally be represented as
a structure $\mf{A}'$ of $\cl{S}_{p \cdot k}$ as follows: (i) the
$\tau$-reducts of $\mf{A}$ and $\mf{A}'$ are the same, and (ii) an
element $a$ of $\mf{A}'$ having label $(i, j)$ in $\mf{A}$ is labelled
with label $i\times k + j$ in $\mf{A}'$. Since $\lebsp{\cl{S}_{p \cdot
    k}}{0}$ is true, there exists a witness function $\witfn{\cl{S}_{p
    \cdot k}}{0}{\mc{L}}:\mathbb{N} \rightarrow \mathbb{N}$ such that
for any $m \in \mathbb{N}$, there exists $\mf{B}' \in \cl{S}_{p \cdot
  k}$ such that (i) $\mf{B}' \subseteq \mf{A}'$ (ii) $|\mf{B}'| \leq
\witfn{\cl{S}_{p \cdot k}}{0}{\mc{L}}(m)$ and (iii) $\mf{B}'
\lequiv{m} \mf{A}'$. Consider the structure $\mf{B} \in (\cl{S}_p)_k$
such that (i) the $\tau$-reducts of $\mf{B}'$ and $\mf{B}$ are the
same, and (ii) if the label of an element $a$ of $\mf{B}$ is $i \times
k + j$ in $\mf{B}'$, then its label in $\mf{B}$ is $(i, j)$.  It is
easy to see that (i) $\mf{B} \subseteq \mf{A}$ (ii) $|\mf{B}| \leq
\witfn{\cl{S}_{p \cdot k}}{0}{\mc{L}}(m)$, and (iii) using the same
strategy as of the duplicator in the $m$-round $\mc{L}$-EF game
between $\mf{B}'$ and $\mf{A}'$, the duplicator always wins in the
$m$-round $\mc{L}$-EF game between $\mf{B}$ and $\mf{A}$; in other
words, $\mf{B} \lequiv{m} \mf{A}$. Then $\lebspcond((\cl{S}_p)_k,
\mf{A}, \mf{B}, 0, m, \mathsf{null}, \witfn{(\cl{S}_p)_k}{0}{\mc{L}})$
is true, where $\mathsf{null}$ denotes the empty tuple and
$\witfn{(\cl{S}_p)_k}{0}{\mc{L}} = \witfn{\cl{S}_{p \cdot
    k}}{0}{\mc{L}}$.

\ul{Part \ref{lemma:gebsp-and-ebsp:3}}: Let $\mf{A} \in \cl{S}$ and
$\bar{a} = (a_1, \ldots, a_k)$ be a $k$-tuple from $\mf{A}$. Consider
the structure $\mf{A}'$ of $\cl{S}_{k+1}$ whose $\tau$-reduct is
$\mf{A}$ and in which the element $a_i$ has been labelled with label
$i-1$ for $1 \leq i \leq k$ and all elements other than the $a_i$s
have been labelled with label $k$. Since $\lebsp{\cl{S}_{k+1}}{0}$
holds, there exists a witness function
$\witfn{\cl{S}_{k+1}}{0}{\mc{L}}:\mathbb{N} \rightarrow \mathbb{N}$
such that given any $m \in \mathbb{N}$, there exists $\mf{B}' \in
\cl{S}_{k+1}$ satisfying (i) $\mf{B}' \subseteq \mf{A}'$ (ii)
$|\mf{B}'| \leq \witfn{\cl{S}_{k+1}}{0}{\mc{L}}(m)$ and (iii) $\mf{B}'
\lequiv{m} \mf{A}'$. Let $\mf{B}$ be the $\tau$-reduct of
$\mf{B}'$. It is clear that (i) $\mf{B} \in \cl{S}$ since $\mf{B}' \in
\cl{S}_{k+1}$ (ii) $\mf{B} \subseteq \mf{A}$ (iii) $\mf{B}$ must
contain the elements of $\bar{a}$ since $a_i$ is the unique element of
$\mf{A}'$ that is labeled with label $i-1$, for each $i \in \{1,
\ldots, k\}$ (iv) $|\mf{B}| \leq \witfn{\cl{S}_{k+1}}{0}{\mc{L}}(m)$,
and (v) using the same strategy as of the duplicator in the $m$-round
$\mc{L}$-EF game between $\mf{B}'$ and $\mf{A}'$, the duplicator
always wins in the $m$-round $\mc{L}$-EF game between $(\mf{B},
\bar{a})$ and $(\mf{A}, \bar{a})$; in other words,
$\ltp{\mf{B}}{\bar{a}}{m}(\bar{x}) =
\ltp{\mf{A}}{\bar{a}}{m}(\bar{x})$. Then $\lebspcond(\cl{S}, \mf{A},
\mf{B}, k, m, \bar{a}, \witfn{\cl{S}}{k}{\mc{L}})$ is true, where
$\witfn{\cl{S}}{k}{\mc{L}} = \witfn{\cl{S}_{k+1}}{0}{\mc{L}}$.

\ul{Part \ref{lemma:gebsp-and-ebsp:4}}: Obvious; since FO $\subseteq
  \mc{L}$, it follows that for structures $(\mf{A}, \bar{a})$ and
  $(\mf{B}, \bar{b})$, if $\ltp{\mf{A}}{\bar{a}}{m}(\bar{x}) =
  \ltp{\mf{B}}{\bar{b}}{m}(\bar{x})$, then
  $\ftp{\mf{A}}{\bar{a}}{m}(\bar{x}) =
  \ftp{\mf{B}}{\bar{b}}{m}(\bar{x})$. \vspace{2pt}

We observe that in all the cases above, any witness function for the
antecdent is also a witness function for the consequent.
\end{proof}

In the next chapter, we prove that $\lebsp{\cl{S}}{k}$ holds of
several classes $\cl{S}$ which are of interest in computer science.
In doing so, we use Lemma~\ref{lemma:gebsp-and-ebsp} in an important
way to simplify our proofs since as this lemma shows, to prove
$\reflebsp$ for a class $\cl{S}$ and parameter $k$, it suffices to
prove $\reflebsp$ for the class $\cl{S}_p$ and parameter 0, where $p$
is suitably chosen.

\chapter{Classes satisfying $\lebsp{\cdot}{k}$}\label{chapter:classes-satisfying-lebsp} 

In this chapter, we show that various interesting classes of
structures satisfy $\lebsp{\cdot}{k}$, and also give methods to
construct new classes of structures that satisfy $\lebsp{\cdot}{k}$
from classes known to satisfy the latter property. Broadly speaking,
the specific classes that we consider are of two kinds -- one that are
special kinds of posets, and the other that are special kinds of
graphs. In Section~\ref{section:words-trees-nestedwords}, we consider
the former kind of classes and prove our results for words, trees
(unordered, ordered, or ranked) and nested words over a given finite
alphabet $\Sigma$. In Section~\ref{section:n-partite-cographs}, we
consider the latter kind of classes and prove our results for
$n$-partite cographs, and hence various subclasses of these including
cographs, graph classes of bounded tree-depth, graph classes of
bounded shrub-depth and graph classes of bounded $\mc{SC}$-depth. In
Section~\ref{section:closure-prop-of-lebsp}, we show that classes that
satisfy $\lebsp{\cdot}{\cdot}$ are, under suitable assumptions, closed
under set-theoretic operations, under operations that are
implementable using quantifier-free translation schemes, and under
transformations that are defined using regular operation-tree
languages, where an operation-tree is a finite composition of the
aforementioned operations on classes of structures. These closure
properties give us means to construct a wide array of classes that
satisfy $\lebsp{\cdot}{\cdot}$.

All of the above results derive from an abstract result concerning
tree representations that we now describe in
Section~\ref{section:abstract-tree-theorem}.


\section{An abstract result concerning tree representations}\label{section:abstract-tree-theorem}

\newcommand{\troot}[1]{\ensuremath{\mathsf{root}(#1)}}
\newcommand{\sigmaint}{\ensuremath{\Sigma_{\text{int}}}}
\newcommand{\sigmaleaf}{\ensuremath{{\Sigma}_{\text{leaf}}}}
\newcommand{\str}[1]{\ensuremath{\mathsf{Str}(#1)}}
\newcommand{\stri}[2]{\ensuremath{\mathsf{Str}_{#1}(#2)}}
\newcommand{\rftrees}{\ensuremath{\mathsf{RF}\text{-}\mathsf{trees}}}
\newcommand{\mylabel}[1]{\ensuremath{\mathsf{#1}}}
\newcommand{\disjun}{\ensuremath{\dot\cup}}
\newcommand{\liequiv}[2]{\ensuremath{\equiv_{#1, #2}}}
\newcommand{\rfse}{\ensuremath{\mc{L}\text{-RFSE}}}
\newcommand{\mrfse}{\ensuremath{\mso\text{-RFSE}}}

An \emph{unlabeled unordered tree}\sindex[term]{tree!unordered} is a
finite poset $P = (A, \leq)$ with a unique minimal element (called
``root''), and such that for each $c \in A$, the set $\{b \mid b \leq
c\}$ is totally ordered by $\leq$.  Informally speaking, the Hasse
diagram of $P$ is an inverted (graph-theoretic) tree.  We call $A$ as
the set of \emph{nodes} of $P$.  We use the standard notions of leaf,
internal node, ancestor, descendent, parent, child, degree, height and
subtree in connection with trees. Explicitly, these are as defined
below. Let $P = (A, \leq)$ be a (unlabeled unordered) tree. Let $a, b$
be distinct nodes of $A$.
\begin{enumerate}[nosep]
\item We say $a$ is a \emph{leaf} of $P$ if for any node $c \in A$, we
  have $(a \leq c) \rightarrow (c = a)$. A node of $P$ that is not a
  leaf of $P$ is called an \emph{internal node} of $P$.
\item 
We say $a$ is an \emph{ancestor} of $b$ in $P$, or equivalently that
$b$ is a \emph{descendent} of $a$ in $P$, if $a \leq b$ and $a \neq
b$.  A \emph{common ancestor} of $a$ and $b$ in $P$ is a node $c$ of
$P$ such that $c \leq a$ and $c \leq b$. The \emph{greatest common
  ancestor}\sindex[term]{greatest common ancestor} of $a$ and $b$ in
$P$, denoted $a \wedge_{P} b$, is a common ancestor of $a$ and $b$ in
$P$ such that for every common ancestor $c$ of $a$ and $b$ in $P$, we
have $c \leq (a \wedge_{P} b)$.
\item 
We say $a$ is the \emph{parent} of $b$ in $P$, or equivalently, that
$b$ is a \emph{child} of $a$ in $P$, if $a$ is an ancestor of $b$ in
$P$ and any ancestor of $b$ in $P$ is either $a$ itself or an ancestor
of $a$ in $P$.  We let $\mathsf{Children}_{P}(a)$ denote the set $\{b
\mid b~\text{is a child of}~a~\text{in}~P\}$.
\item 
The \emph{degree} of $a$ in $P$ is the size of
$\mathsf{Children}_{P}(a)$.  The degree of $P$ is the maximum of the
degrees of the nodes of $P$.
\item 
The \emph{height} of $P$ is one less than the size of the longest
chain in $P$.
\item 
The \emph{subtree of $P$ induced by a subset $A'$ of $A$} is the tree
$(A', \leq')$ where $\leq' \,= \,\leq \,\cap\, (A' \times A')$. A
\emph{subtree of $P$} is a subtree of $P$ induced by some subset of
$A$.


\end{enumerate}

An \emph{unlabeled ordered}\sindex[term]{tree!ordered} tree is a pair
$O = (P, \lesssim)$ where $P$ is an unlabeled unordered tree and
$\lesssim$ is a binary relation that imposes a linear order on the
children of any internal node of $P$. In this section, by trees we
always mean \emph{ordered trees}. It is clear that the notions
introduced above for unordered trees can be adapted for ordered
trees. We define some additional notions for ordered trees below. It
is clear that these notions can be adapted for unordered trees.
\begin{enumerate}[nosep]
\item[7.] Given a countable alphabet $\Sigma$, a \emph{tree over
  $\Sigma$}, also called a \emph{$\Sigma$-tree}, or simply \emph{tree}
  when $\Sigma$ is clear from context, is a pair $(O, \lambda)$ where
  $O$ is an unlabeled tree and $\lambda: A \rightarrow \Sigma$ is a
  labeling function, where $A$ is the set of nodes of $O$. We denote
  $\Sigma$-trees by $\tree{s}, \tree{t}, \tree{x}, \tree{y} $ or
  $\tree{z}$, possibly with numbers as subscripts.
\item[8.] Given a tree $\tree{t}$, we denote the root of $\tree{t}$ as
  $\troot{\tree{t}}$\sindex[symb]{$\troot{\tree{t}}$}. For a node $a$
  of $\tree{t}$, we denote the subtree of $\tree{t}$ rooted at $a$ as
  $\tree{t}_{\ge a}$,\sindex[symb]{$\tree{t}_{\ge a}$} and the subtree
  of $\tree{t}$ obtained by deleting $\tree{t}_{\ge a}$ from
  $\tree{t}$, as $\tree{t} - \tree{t}_{\ge a}$.\sindex[symb]{$\tree{t}
    - \tree{t}_{\ge a}$}
\item[9.] Given a tree $\tree{s}$ and a non-root node $a$ of
  $\tree{t}$, the \emph{replacement of $\tree{t}_{\ge a}$ with
  $\tree{s}$ in $\tree{t}$}, denoted $\tree{t} \left[ \tree{t}_{\ge a}
  \mapsto \tree{s} \right]$,\sindex[symb]{$\tree{t} \left[
    \tree{t}_{\ge a} \mapsto \tree{s} \right]$} is a tree defined as
  follows: let $c$ be the parent of $a$ in $\tree{t}$ and $\tree{s}'$
  be an isomorphic copy of $\tree{s}$ whose nodes are disjoint with
  those of $\tree{t}$. Then $\tree{t} \left[ \tree{t}_{\ge a} \mapsto
    \tree{s} \right]$ is defined upto isomorphism as the tree obtained
  by deleting $\tree{t}_{\ge a}$ from $\tree{t}$ to get a tree
  $\tree{t}'$, and inserting (the root of) $\tree{s}'$ at the same
  position among the children of $c$ in $\tree{t}'$, as the position
  of $a$ among the children of $c$ in $\tree{t}$.
\item[10.] For $\tree{t}, \tree{s}$ and $\tree{s}'$ as in the previous
  point, suppose the roots of each of these trees have the same
  label. Then the \emph{merge of $\tree{s}$ with $\tree{t}$}, denoted
  $\tree{t} \odot \tree{s}$,\sindex[symb]{$\tree{t} \odot \tree{s}$}
  is defined upto isomorphism as the tree obtained by deleting
  $\troot{\tree{s}'}$ from $\tree{s}'$ and concatenating the sequence
  of subtrees hanging at $\troot{\tree{s}'}$ in $\tree{s}'$, to the
  sequence of subtrees hanging at $\troot{\tree{t}}$ in
  $\tree{t}$. Thus the children of $\troot{\tree{s}'}$ in $\tree{s}'$
  are the ``new'' children of $\troot{\tree{t}}$, and appear ``after''
  the ``old'' children of $\troot{\tree{t}}$, and in the order they
  appear in $\tree{s}'$.
\end{enumerate}



Fix a finite alphabet $\sigmaint$ and a countable alphabet
$\sigmaleaf$ (where the two alphabets are allowed to be overlapping).
We say a class $\mc{T}$ of $(\sigmaint \cup \sigmaleaf)$-trees is
\emph{representation-feasible}\sindex[term]{representation!-feasible}
if it is closed under (label-preserving) isomorphisms, and if every
tree $\tree{t} = (O, \lambda)$ in the class has the property that for
every leaf, resp. internal, node $a$ of $\tree{t}$, the label
$\lambda(a)$ belongs to $\sigmaleaf$, resp. $\sigmaint$.  Given a
class $\cl{S}$ of $\tau$-structures, let $\mathsf{Str}: \mc{T}
\rightarrow \cl{S}$ be a map that associates with each tree in
$\mc{T}$, a structure in $\cl{S}$. For a tree $\tree{t} \in \mc{T}$,
if $\mf{A} = \str{\tree{t}}$, then we say $\tree{t}$ is a \emph{tree
  representation} of $\mf{A}$ under $\mathsf{Str}$, or simply a
\emph{tree representation} of $\mf{A}$.  We call $\mathsf{Str}$ as a
\emph{representation map},\sindex[term]{representation!map} and
$\mc{T}$ as a class of representation trees. For the purposes of our
result, we consider maps $\mathsf{Str}$ that map isomorphic
(preserving labels) trees to isomorphic structures, and that have
additional properties among those mentioned below.

\begin{enumerate}[label=\Alph*., ref=\Alph*, nosep]
\item  Transfer properties:  
\begin{enumerate}[label=\arabic*., ref=\theenumi.\arabic*, nosep]
\item Let $\tree{t}, \tree{s}_1 \in \mc{T}$, $\tree{t}$ be of size
  $\ge 2$, and $a$ be a child of $\troot{\tree{t}}$. Suppose
  $\tree{s}_2 = \tree{t}_{\ge a}$ and $\tree{z} = \tree{t}
  \left[\tree{s}_{2} \mapsto \tree{s}_1\right] \in \mc{T}$.
\begin{enumerate}[label=\alph*., ref=\theenumii.\alph*, nosep]
\item \label{A.1.a} If $\str{\tree{s}_1} \hookrightarrow \str{\tree{s}_2}$, then
  $\str{\tree{z}} \hookrightarrow \str{\tree{t}}$.
\item \label{A.1.b} If $\str{\tree{s}_1} \lequiv{m} \str{\tree{s}_2}$, then
  $\str{\tree{z}} \lequiv{m} \str{\tree{t}}$.
\end{enumerate}
\item \label{A.2} Let $\tree{t}, \tree{s}_1, \tree{s}_2 \in \mc{T}$ be  trees
  of size $\ge 2$ such that the roots of all these trees have the same
  label. For $i \in \{1, 2\}$, suppose $\tree{z}_i
  = \tree{t} \odot \tree{s}_i \in \mc{T}$.  If
  $\str{\tree{s}_1} \lequiv{m} \str{\tree{s}_2}$, then
  $\str{\tree{z}_1} \lequiv{m} \str{\tree{z}_2}$.
\end{enumerate}
\item  Monotonicity: Let $\tree{t} \in \mc{T}$ and  $a$ be a child of
  $\troot{\tree{t}}$.
\begin{enumerate}[label=\arabic*., ref=\theenumi.\arabic*, nosep]
\item \label{B.1} If $\tree{s} = \tree{t}_{\ge a} \in \mc{T}$, then $\str{\tree{s}} \hookrightarrow \str{\tree{t}}$

\item \label{B.2} If $\tree{s} = (\tree{t} - \tree{t}_{\ge a}) \in \mc{T}$, then $\str{\tree{s}} \hookrightarrow \str{\tree{t}}$.
\end{enumerate}
\end{enumerate}
We say a representation map is \emph{$\mc{L}$-height-reduction
  favourable}\sindex[term]{reduction favourable!for height} if it
satisfies conditions \ref{A.1.a}, \ref{A.1.b} and \ref{B.1} above, for
all $m \ge m_0$, for some $m_0 \in \mathbb{N}$. A representation map
is said to be \emph{$\mc{L}$-degree-reduction
  favourable}\sindex[term]{reduction favourable!for degree} if it
satisfies all the conditions above, except possibly \ref{B.1}, for all
$m \ge m_0$, for some $m_0 \in \mathbb{N}$.  The following result
justifies why the two kinds of representation maps just defined, are
called so. This result contains the core argument of most results in
the subsequent sections of this chapter.  Below, $\mc{T}$ is said to
be \emph{closed under rooted subtrees and under replacements with
  rooted subtrees} if for all $\tree{t} \in \mc{T}$ and non-root nodes
$a$ and $b$ of $\tree{t}$ such that $a$ is an ancestor of $b$ in
$\tree{t}$, we have that each of the subtrees $\tree{t}_{\ge a}$ and
$\tree{t} \left[ \tree{t}_{\ge a} \mapsto \tree{t}_{\ge b} \right]$
are in $\mc{T}$.







\begin{theorem}\label{theorem:abstract-tree-theorem}
For $I = \{1, \ldots, n\}$ and $i \in I$, let $\cl{S}_i$ be a class of
$\tau_i$-structures, $\mathsf{Str}_i: \mc{T} \rightarrow \cl{S}_i$ be
a representation map, and $\mc{L}_i$ be either FO or MSO. Then there
exist computable functions
$\eta_1, \eta_2: \mathbb{N}^n \rightarrow \mathbb{N}$, such that for
each $\tree{t} \in \mc{T}$ and $m_1, \ldots, m_n \in \mathbb{N}$, we
have the following.
\begin{enumerate}[nosep]
\item If $\mc{T}$ is closed under subtrees, and $\mathsf{Str}_i$ is
  $\mc{L}_i$-degree-reduction favourable for all $i \in I$, then there
  exists a subtree $\tree{s}_1$ of $\tree{t}$ in $\mc{T}$, of degree
  at most $\eta_1(m_1, \ldots, m_n)$, such that for all $i \in I$ (i)
  $\stri{i}{\tree{s}_1} \hookrightarrow \stri{i}{\tree{t}}$, and (ii)
  $\stri{i}{\tree{s}_1} \liequiv{m_i}{\mc{L}_i}
  \stri{i}{\tree{t}}$.\label{theorem:abstract-tree-theorem-degree-reduction}
\item 
If $\mc{T}$ is closed under rooted subtrees and under replacements
with rooted subtrees, and $\mathsf{Str}_i$ is
$\mc{L}_i$-height-reduction favourable for all $i \in I$, then there
exists a subtree $\tree{s}_2$ of $\tree{t}$ in $\mc{T}$, of height at
most $\eta_2(m_1, \ldots, m_n)$, such that for all $i \in I$ (i)
$\stri{i}{\tree{s}_2} \hookrightarrow \stri{i}{\tree{t}}$, and (ii)
$\stri{i}{\tree{s}_2} \liequiv{m_i}{\mc{L}_i}
\stri{i}{\tree{t}}$.\label{theorem:abstract-tree-theorem-height-reduction}
\end{enumerate}
\end{theorem}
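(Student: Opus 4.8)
The plan is to prove both parts by a common recipe. To every node $a$ of a tree $\tree{t} \in \mc{T}$ I attach its \emph{combined type} $\theta(a) = (\kappa_1(a), \ldots, \kappa_n(a))$, where $\kappa_i(a)$ is the $\liequiv{m_i}{\mc{L}_i}$-class of $\stri{i}{\tree{t}_{\ge a}}$. By Proposition~\ref{prop:finite-index-of-lequiv(m)-relation} the number of possible combined types is at most $N := \prod_{i=1}^{n} \Lambda_{\cl{S}_i, \mc{L}_i}(m_i)$, which is computable from $m_1, \ldots, m_n$ since each $\Lambda_{\cl{S}_i, \mc{L}_i}$ is computable; I take $\eta_1(m_1, \ldots, m_n) = \eta_2(m_1, \ldots, m_n) = N$. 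As the reduction-favourable hypotheses are only assumed for $m \ge m_0$, I first replace each $m_i$ by $\max(m_i, m_0)$, using that $(\max(m_i, m_0), \mc{L}_i)$-equivalence refines $(m_i, \mc{L}_i)$-equivalence. The engine of both parts is a \emph{lifting lemma}: using the closure hypotheses on $\mc{T}$, I will show by induction on the depth of a non-root node $b$ that conditions \ref{A.1.a}, \ref{A.1.b} (and, where available, \ref{B.1}, \ref{B.2}) persist when $\tree{t}_{\ge b}$ is replaced by a tree $\tree{s}$ that is simultaneously embeddable into, resp.\ $\liequiv{m_i}{\mc{L}_i}$-equivalent to, $\tree{t}_{\ge b}$ for all $i$, yielding the corresponding relation between $\stri{i}{\tree{t}[\tree{t}_{\ge b} \mapsto \tree{s}]}$ and $\stri{i}{\tree{t}}$. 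The inductive step rewrites a deep replacement as a root-level replacement inside $\tree{t}_{\ge a}$, where $a$ is the child of $\troot{\tree{t}}$ lying above $b$, followed by a root-level replacement in $\tree{t}$, each legal under the stated closure assumptions.

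For part~\ref{theorem:abstract-tree-theorem-height-reduction} I repeatedly contract long root-to-leaf paths. If $\tree{t}$ has height exceeding $N$, some root-to-leaf path carries more than $N$ nodes, so by pigeonhole two of them, an ancestor $b$ and a descendant $b'$, satisfy $\theta(b) = \theta(b')$, i.e.\ $\stri{i}{\tree{t}_{\ge b'}} \liequiv{m_i}{\mc{L}_i} \stri{i}{\tree{t}_{\ge b}}$ for every $i$. I form $\tree{t}[\tree{t}_{\ge b} \mapsto \tree{t}_{\ge b'}]$, which is precisely the subtree of $\tree{t}$ induced by the union of the nodes of $\tree{t}$ outside $\tree{t}_{\ge b}$ and the nodes of $\tree{t}_{\ge b'}$, and which lies in $\mc{T}$ by closure under rooted subtrees and under replacements with rooted subtrees. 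The lifted forms of \ref{B.1} and \ref{A.1.a} give $\stri{i}{\tree{t}[\tree{t}_{\ge b} \mapsto \tree{t}_{\ge b'}]} \hookrightarrow \stri{i}{\tree{t}}$, and the lifted form of \ref{A.1.b} gives $\liequiv{m_i}{\mc{L}_i}$-equivalence, for all $i$. Each contraction strictly lowers the node count, so finitely many steps yield a subtree $\tree{s}_2$ of height at most $N$, with both relations to $\tree{t}$ following by transitivity.

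For part~\ref{theorem:abstract-tree-theorem-degree-reduction} I bound the degree node by node, processing internal nodes in any fixed order, say bottom-up. Fix an internal node $a$ and read its children left to right as a word whose $j$-th letter is the combined type of the $j$-th child; this is a word over an alphabet of size at most $N$. Viewing $\tree{t}_{\ge a}$ as the merge $\tree{u} \odot \tree{v}$ of a prefix-subtree $\tree{u}$ (root $\troot{\tree{t}_{\ge a}}$ with an initial block of children) and a suffix-subtree $\tree{v}$ (same root, remaining children), condition \ref{A.2} shows that replacing $\tree{v}$ by an $\liequiv{m_i}{\mc{L}_i}$-equivalent suffix-subtree preserves $\theta(a)$ for all $i$. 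Now if $a$ has more than $N$ children, then among the suffix-subtrees obtained by successively dropping leading children two share a combined type; deleting the intervening block of consecutive child subtrees is exactly such a suffix replacement, so it realizes a genuine deletion while preserving $\theta(a)$. Iterating drives the degree of $a$ below $N$. Equivalence of the modified $\tree{t}_{\ge a}$ to the original propagates to the whole tree by the lifted \ref{A.1.b}, and embeddability by iterating \ref{B.2} locally and lifting through \ref{A.1.a}; since $\mc{T}$ is closed under subtrees, the end product is a subtree $\tree{s}_1$ of $\tree{t}$ of degree at most $N$ carrying both relations to $\tree{t}$.

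The step I expect to be most delicate is the lifting lemma together with the ordered-tree bookkeeping in the degree reduction: conditions \ref{A.1.a}, \ref{A.1.b}, \ref{A.2}, \ref{B.1}, \ref{B.2} are stated only for children of the \emph{root}, so every modification deep in the tree must be re-expressed as a chain of root-level operations on nested rooted subtrees, which is exactly where closure under rooted subtrees and under replacements with rooted subtrees is used, and one must verify that the ordered structure is respected so that the ``append at the end'' semantics of \ref{A.2} matches the deletion of an arbitrary consecutive block of children. A secondary point demanding care is that all $n$ embeddings and equivalences must be maintained \emph{simultaneously}, so each pigeonhole is taken over the product bound $N$ rather than over a single logic, and every invocation of a transfer or monotonicity property is applied in parallel across $i \in I$.
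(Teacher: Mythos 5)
Your proposal is correct and follows essentially the same route as the paper's proof: the same product-of-indices bound $\prod_i \Lambda_{\cl{S}_i,\mc{L}_i}(\max(m_i, m_0))$, the same pigeonhole on combined $\liequiv{m_i}{\mc{L}_i}$-types (over a long root-to-leaf path for height, and over the suffix-subtrees $\tree{y}_j$ in the merge decomposition $\tree{t}_{\ge a} = \tree{x}_j \odot \tree{y}_j$ for degree), and the same propagation of \ref{A.1.a}/\ref{A.1.b} along the path from the modified node up to the root, with termination by strict decrease in tree size. Your ``lifting lemma'' is exactly the paper's iterated root-level application of the transfer properties, so no further comparison is needed.
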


\begin{proof}
We recall from
Section~\ref{section:background-FMT-adaptation-from-CMT-to-FMT} of
Chapter~\ref{chapter:background-FMT} that for a class $\cl{S}$ of
structures, $\Delta_{\mc{L}}(m, \cl{S})$ denotes the set of all
equivalence classes of the $\lequiv{m}$ relation restricted to the
structures in $\cl{S}$, and $\Lambda_{\cl{S}, \mc{L}}: \mathbb{N}
\rightarrow \mathbb{N}$ is a fixed computable function with the
property that $\Lambda_{\cl{S}, \mc{L}}(m) \ge |\Delta_{\mc{L}}(m,
\cl{S})|$.

(Part \ref{theorem:abstract-tree-theorem-degree-reduction}): For
$i \in I$, let $k_i$ be such that $\mathsf{Str}_i$ satisfies all the
transfer and monotonicity properties, except possibly \ref{B.1}, for
all $m \ge k_i$. Define $\eta_1:\mathbb{N}^n \rightarrow \mathbb{N}$
as follows: for $l_1, \ldots, l_n \in \mathbb{N}$,
$\eta_1(l_1, \ldots, l_n) = \Pi_{i \in
I} \,\Lambda_{\cl{S}_i, \mc{L}_i}(\text{max}(k_i, l_i))$. Then
$\eta_1$ is computable. Now given $m_1, \ldots, m_n \in \mathbb{N}$,
let $p = \eta_1(m_1, \ldots, m_n)$.  If $\tree{t}$ has degree $\leq
p$, then putting $\tree{s}_1 = \tree{t}$ we are done. Else, some node
of $\tree{t}$, say $a$, has degree $r > p$. Let $\tree{z}
= \tree{t}_{\ge a}$. Let $a_1, \ldots, a_r$ be the sequence of
children of $\troot{\tree{z}}$ in $\tree{z}$. For $j \in \{1, \ldots,
r\}$, let $\tree{x}_j$, resp. $\tree{y}_j$, be the subtree of
$\tree{z}$ obtained from $\tree{z}$ by deleting the subtrees rooted at
$a_j, a_{j+1}, \ldots, a_r$, resp.  deleting the subtrees rooted at
$a_1, a_2, \ldots, a_{j-1}$. Then $\tree{z} = \tree{y}_1
= \tree{x}_j \odot \tree{y}_j$ for all $j \in \{2, \ldots, r\}$.  Let
$q_i = \text{max}(k_i, m_i)$ for $i \in I$ and let $g: \{1, \ldots,
r\} \rightarrow \Pi_{i \in I} \,\Delta_{\mc{L}_i}(q_i, \cl{S}_i)$ be
such that for $j \in \{1, \ldots, r\}$, $g(j)$ is the sequence
$(\delta_i)_{i \in I}$ where $\delta_i$ is the
$\liequiv{q_i}{\mc{L}_i}$ class of $\stri{i}{\tree{y}_j}$. Verify that
$|\Pi_{i \in I}\, \Delta_{\mc{L}_i}(q_i, \cl{S}_i)| \leq p$. Then
since $r > p$, there exist $j, k \in \{1, \ldots, r\}$ such that $j <
k$ and $g(j) = g(k)$, i.e. for all $i \in I$,
$\stri{i}{\tree{y}_j} \liequiv{q_i}{\mc{L}_i} \stri{i}{\tree{y}_k}$.
If $\tree{z}_1 = \tree{x}_j \odot \tree{y}_k$, then since $\mc{T}$ is
closed under subtrees, we have $\tree{z}_1 \in \mc{T}$.  By the
properties \ref{B.2} and \ref{A.2} above, we have
$\stri{i}{\tree{z}_1} \hookrightarrow \stri{i}{\tree{z}}$ and
$\stri{i}{\tree{z}_1} \liequiv{q_i}{\mc{L}_i} \stri{i}{\tree{z}}$, for
all $i \in I$.  By iteratively applying properties \ref{A.1.a}
and \ref{A.1.b} to the nodes along the path from $a$ to
$\troot{\tree{t}}$, we see that if $\tree{t}_1
= \tree{t} \left[\tree{z} \mapsto \tree{z}_1 \right]$, then
$\stri{i}{\tree{t}_1} \hookrightarrow \stri{i}{\tree{t}}$ and
$\stri{i}{\tree{t}_1} \liequiv{q_i}{\mc{L}_i} \stri{i}{\tree{t}}$, for
all $i \in I$. Observe that $\tree{t}_1$ has strictly lesser size than
$\tree{t}$. Recursing on $\tree{t}_1$, we eventually get a subtree
$\tree{s}_1$ of $\tree{t}$ of degree at most $p$ such that for all
$i \in I$, (i)
$\stri{i}{\tree{s}_1} \hookrightarrow \stri{i}{\tree{t}}$ and (ii)
$\stri{i}{\tree{s}_1} \liequiv{q_i}{\mc{L}_i} \stri{i}{\tree{t}}$. Since
$q_i = \text{max}(k_i, m_i)$, we have
$\stri{i}{\tree{s}_1} \liequiv{m_i}{\mc{L}_i} \stri{i}{\tree{t}}$ for
all $i \in I$.

(Part \ref{theorem:abstract-tree-theorem-height-reduction}): For
$i \in I$, let $r_i$ be such that $\mathsf{Str}_i$ satisfies conditions 
\ref{A.1.a}, \ref{A.1.b} and \ref{B.1}, for
all $m \ge r_i$. Define $\eta_2:\mathbb{N}^n \rightarrow \mathbb{N}$
as follows: for $l_1, \ldots, l_n \in \mathbb{N}$,
$\eta_1(l_1, \ldots, l_n) = \Pi_{i \in
I} \,\Lambda_{\cl{S}_i, \mc{L}_i}(\text{max}(r_i, l_i))$. Then
$\eta_2$ is computable. Now given $m_1, \ldots, m_n \in \mathbb{N}$,
let $p = \eta_2(m_1, \ldots, m_n)$.  If $\tree{t}$ has height $\leq
p$, then putting $\tree{s}_2 = \tree{t}$ we are done. Else there is a
path from the root of $\tree{t}$ to some leaf of $\tree{t}$, whose
length is $> p$. Let $A$ be the set of nodes appearing along this
path. For $i \in I$, let $q_i = \text{max}(r_i, m_i)$.  Consider the
function $h : A \rightarrow \Pi_{i \in
I}\, \Delta_{\mc{L}_i}(q_i, \cl{S}_i)$ such that for each $a \in A$,
$h(a) = (\delta_i)_{i \in I}$ where $\delta_i$ is the
$\liequiv{q_i}{\mc{L}_i}$ class of $\stri{i}{\tree{t}_{\ge a}}$.
Verify that $|\Pi_{i \in I} \Delta_{\mc{L}_i}(q_i, \cl{S}_i)| \leq
p$. Since $|A| > p$, there exist distinct nodes $a, b \in A$ such that
$a$ is an ancestor of $b$ in $\tree{t}$ and $h(a) = h(b)$. We have two
cases: (i) Node $a$ is the root of $\tree{t}$; then let $\tree{t}_2
= \tree{t}_{\ge b}$. (ii) Node $a$ is not the root of $\tree{t}$; then
let $\tree{t}_2 = \tree{t}\left[\tree{t}_{\ge a} \mapsto \tree{t}_{\ge
b} \right]$.  Since $\mc{T}$ is closed under rooted subtrees and under
replacements with rooted subtrees, we have $\tree{t}_{\ge
a}, \tree{t}_{\ge b}$ and $\tree{t}_2$ are all in $\mc{T}$.  By
property \ref{B.1}, $\stri{i}{\tree{t}_{\ge
b}} \hookrightarrow \stri{i}{\tree{t}_{\ge a}}$. Also since $h(a) =
h(b)$, we have $\stri{i}{\tree{t}_{\ge
b}} \liequiv{q_i}{\mc{L}_i} \stri{i}{\tree{t}_{\ge a}}$. Then by
iteratively applying properties \ref{A.1.a} and \ref{A.1.b} to the
nodes along the path from $a$ to $\troot{\tree{t}}$, we get in either
of the cases above, that
$\stri{i}{\tree{t}_2} \hookrightarrow \stri{i}{\tree{t}}$ and
$\stri{i}{\tree{t}_2} \liequiv{q_i}{\mc{L}_i} \stri{i}{\tree{t}}$, for
all $i \in I$. Observe that $\tree{t}_2$ has strictly less size than
$\tree{t}$. Recursing on $\tree{t}_2$, we eventually get a subtree
$\tree{s}_2$ of $\tree{t}$ of height at most $p$ such that for all
$i \in I$, (i)
$\stri{i}{\tree{s}_2} \hookrightarrow \stri{i}{\tree{t}}$ and (ii)
$\stri{i}{\tree{s}_2} \liequiv{q_i}{\mc{L}_i} \stri{i}{\tree{t}}$. Since
$q_i = \text{max}(r_i, m_i)$, we have
$\stri{i}{\tree{s}_2} \liequiv{m_i}{\mc{L}_i} \stri{i}{\tree{t}}$ for
all $i \in I$.
\end{proof}

Call a representation map $\mathsf{Str}: \mc{T} \rightarrow \cl{S}_1$
as \emph{size effective} if there is a computable function $f:
\mathbb{N} \rightarrow \mathbb{N}$ such that $|\str{\tree{t}}| \leq
f(|\tree{t}|)$ for all $\tree{t} \in \mc{T}$. Call $\mathsf{Str}$
\emph{onto upto isomorphism} if for every structure in $\cl{S}_1$,
there is an isomorphic structure that is in the range of
$\mathsf{Str}$.  For a given class $\cl{S}$, we say $\cl{S}$ admits an
\emph{$\mc{L}$-reduction favourable size effective ($\rfse$)} tree
representation\sindex[term]{reduction favourable!size effective
  representation}\sindex[symb]{$\rfse$} if there exist finite
alphabets $\sigmaint$ and $\sigmaleaf$, a class $\mc{T}$ of
representation-feasible trees over $\sigmaint \cup \sigmaleaf$, and a
size effective representation map $\mathsf{Str}: \mc{T} \rightarrow
\cl{S}$ that is onto upto isomorphism, such that $\mc{T}$ and
$\mathsf{Str}$ are of one of the following types:
\begin{itemize}[nosep]
\item $\rfse$-type I: $\mc{T}$ has bounded degree and is closed under
  rooted subtrees and under replacements with rooted subtrees, and
  $\mathsf{Str}$ is $\mc{L}$-height-reduction favourable.
\item $\rfse$-type II: $\mc{T}$ is closed under subtrees, and
  $\mathsf{Str}$ is both $\mc{L}$-height-reduction favourable and
  $\mc{L}$-degree-reduction favourable.
\end{itemize}
We say $\cl{S}$ admits an \emph{$\rfse$ tree representation
  schema}\sindex[term]{representation!schema} if for each $p \in
\mathbb{N}$, the class $\cl{S}_p$ (defined in
Section~\ref{section:sufficient-condition-for-lebsp}) admits an
$\rfse$ tree representation. We now have the following result.

\begin{lemma}\label{lemma:rfse-tree-rep-implies-lebsp}
Let $\cl{S}$ be a class of structures that admits an $\rfse$ tree
representation schema. Then $\lebsp{\cl{S}_p}{k}$ holds with a
computable witness function, for all $p, k \in \mathbb{N}$,
\end{lemma}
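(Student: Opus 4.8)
The plan is to isolate the parameter-$0$ case as a \emph{core claim}---that any class admitting an $\rfse$ tree representation satisfies $\lebsp{\cdot}{0}$ with a computable witness function---and then bootstrap to arbitrary $p,k$ using the reductions of Lemma~\ref{lemma:gebsp-and-ebsp}. Granting the core claim, here is how I would assemble the lemma. The schema hypothesis supplies an $\rfse$ tree representation of $\cl{S}_{p(k+1)}$, so the core claim yields $\lebsp{\cl{S}_{p(k+1)}}{0}$ with a computable witness function. Part~(\ref{lemma:gebsp-and-ebsp:2}) of Lemma~\ref{lemma:gebsp-and-ebsp}---whose proof is exactly the pair-encoding $(i,j)\mapsto i\cdot(k+1)+j$ identifying $(\cl{S}_p)_{k+1}$ with $\cl{S}_{p(k+1)}$---upgrades this to $\lebsp{(\cl{S}_p)_{k+1}}{0}$, and part~(\ref{lemma:gebsp-and-ebsp:3}) then gives $\lebsp{\cl{S}_p}{k}$. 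Since each of these implications carries any witness function of the antecedent over to the consequent, computability is preserved throughout. (The only bookkeeping point is the ordering constraint in Lemma~\ref{lemma:gebsp-and-ebsp}: as $(\cl{S}_p)_{k+1}$ and $(\cl{S}_{k+1})_p$ denote the same class up to a relabeling of the product label-set, I may always take the larger of $p$ and $k+1$ as the outer label when invoking part~(\ref{lemma:gebsp-and-ebsp:2})).

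It remains to prove the core claim, which is where Theorem~\ref{theorem:abstract-tree-theorem} does the work. Let $\cl{T}$ admit an $\rfse$ representation through $\sigmaint,\sigmaleaf,\mc{T}$ and a size-effective, onto-up-to-isomorphism map $\mathsf{Str}\colon\mc{T}\to\cl{T}$. Fix $m\in\mathbb{N}$ and $\mf{A}\in\cl{T}$, and pick $\tree{t}\in\mc{T}$ with $\str{\tree{t}}\cong\mf{A}$. I apply Theorem~\ref{theorem:abstract-tree-theorem} with $n=1$, $\cl{S}_1=\cl{T}$, $\mathsf{Str}_1=\mathsf{Str}$, $\mc{L}_1=\mc{L}$ and $m_1=m$. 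If the representation is of $\rfse$-type~I, then $\mc{T}$ has bounded degree and is closed under rooted subtrees and under replacements, and $\mathsf{Str}$ is $\mc{L}$-height-reduction favourable, so part~(\ref{theorem:abstract-tree-theorem-height-reduction}) gives a subtree $\tree{s}$ of $\tree{t}$ in $\mc{T}$ of height at most $\eta_2(m)$ with $\str{\tree{s}}\hookrightarrow\str{\tree{t}}$ and $\str{\tree{s}}\lequiv{m}\str{\tree{t}}$; bounded degree together with the height bound $\eta_2(m)$ bounds $|\tree{s}|$ by a computable function of $m$. If instead the representation is of $\rfse$-type~II, then $\mc{T}$ is closed under subtrees---hence (a one-line check on induced node sets) under rooted subtrees and under replacements with rooted subtrees---and $\mathsf{Str}$ is both height- and degree-reduction favourable; I first apply part~(\ref{theorem:abstract-tree-theorem-height-reduction}) to obtain a subtree $\tree{s}'$ of height $\le\eta_2(m)$, then apply part~(\ref{theorem:abstract-tree-theorem-degree-reduction}) to $\tree{s}'$ to obtain a subtree $\tree{s}$ of $\tree{s}'$ of degree $\le\eta_1(m)$, composing the embeddings by transitivity of $\hookrightarrow$ and the two $\lequiv{m}$-equivalences by the fact that $\lequiv{m}$ is an equivalence relation. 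As $\tree{s}$ is a subtree of $\tree{s}'$ its height is still $\le\eta_2(m)$, so bounded height and bounded degree again bound $|\tree{s}|$ computably. In both cases size-effectiveness converts this into a computable bound $\theta(m)$ on $|\str{\tree{s}}|$.

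To finish the core claim I transport $\mf{B}':=\str{\tree{s}}$ back to a genuine substructure of $\mf{A}$: from $\mf{B}'\hookrightarrow\str{\tree{t}}\cong\mf{A}$ there is $\mf{B}\subseteq\mf{A}$ with $\mf{B}\cong\mf{B}'$, and since $\cl{T}$ is isomorphism-closed we get $\mf{B}\in\cl{T}$, $|\mf{B}|=|\mf{B}'|\le\theta(m)$, and $\mf{B}\lequiv{m}\mf{B}'\lequiv{m}\str{\tree{t}}\lequiv{m}\mf{A}$. This is precisely $\lebspcond(\cl{T},\mf{A},\mf{B},0,m,\mathsf{null},\theta)$ for the empty tuple, so $\lebsp{\cl{T}}{0}$ holds with the computable witness function $\theta$, establishing the core claim and hence the lemma. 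The step I expect to require the most care is the $\rfse$-type~II composition: one must verify that part~(\ref{theorem:abstract-tree-theorem-degree-reduction}) returns a genuine \emph{subtree} of $\tree{s}'$ (so that its height cannot exceed $\eta_2(m)$) and then combine ``bounded degree and bounded height'' into a single computable size bound---together with checking that computability of the witness function genuinely survives the two encoding steps of Lemma~\ref{lemma:gebsp-and-ebsp}, which is where most of the routine bookkeeping lies.
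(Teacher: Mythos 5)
Your proposal is correct and follows essentially the same route as the paper: reduce to $\lebsp{\cl{S}_{p(k+1)}}{0}$ via Lemma~\ref{lemma:gebsp-and-ebsp}, then apply Theorem~\ref{theorem:abstract-tree-theorem} (one part for $\rfse$-type~I, both parts composed for type~II) and size-effectiveness to get the computable bound. The points you flag as needing care — that degree reduction returns a genuine subtree so the height bound survives, and that closure under subtrees subsumes closure under rooted subtrees and replacements — are exactly the details the paper's proof leaves implicit, and your handling of them is sound.
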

\begin{proof}
By Lemma~\ref{lemma:gebsp-and-ebsp}, it suffices to show that
$\lebsp{\cl{S}_r}{0}$ holds for $r = p \cdot (k+1)$, with a computable
witness function. Since $\cl{S}$ admits an $\rfse$ tree representation
schema, the class $\cl{S}_r$ admits an $\rfse$ tree
representation. Let the latter fact be witnessed by a
representation-feasible class $\mc{T}$ of trees, and a representation
map $\mathsf{Str}: \mc{T} \rightarrow \cl{S}_r$. We have two cases from
the definition of $\rfse$ tree representation:
\begin{enumerate}[leftmargin=*,nosep]
\item $\mc{T}$ and $\mathsf{Str}$ are of $\rfse$-type I: ~Given $\mf{A}
  \in \cl{S}_r$, let $\tree{t} \in \mc{T}$ be such that
  $\str{\tree{t}} \cong \mf{A}$ (this is guaranteed since
  $\mathsf{Str}$ is onto upto isomorphism).  Let
  $m \in \mathbb{N}$. By Theorem~\ref{theorem:abstract-tree-theorem},
  there is a computable function
  $\eta_2: \mathbb{N} \rightarrow \mathbb{N}$ and a subtree $\tree{s}$
  of $\tree{t}$ in $\mc{T}$ such that (i) the height of $\tree{s}$ is
  at most $h = \eta_2(m)$, (ii)
  $\str{\tree{s}} \hookrightarrow \str{\tree{t}}$ and (iii)
  $\str{\tree{s}} \lequiv{m} \str{\tree{t}}$. Since the degree of
  $\tree{s}$ is bounded, by say $d$, the size of $\tree{s}$ is at most
  $d^{h+1}$. Whereby if $f$ is the computable function witnessing the
  size effectiveness of $\mathsf{Str}$, then $|\str{\tree{s}}| \leq
  f(d^h)$. Let $\mc{B}$ be the substructure of $\mf{A}$ such that
  $\mf{B} \cong \str{\tree{s}}$. Since $\cl{S}$ is closed under
  isomorphisms, so is $\cl{S}_r$, whereby $\mf{B} \in \cl{S}_r$. We
  can now see that $\lebspcond(\cl{S}_r, \mf{A}, \mf{B}, 0,
  m, \mathsf{null}, \witfn{\cl{S}_r}{0}{\mc{L}})$ is true, where
  $\mathsf{null}$ is the empty tuple and
  $\witfn{\cl{S}_r}{0}{\mc{L}}(m) = f(d^h)$. Clearly
  $\witfn{\cl{S}_r}{0}{\mc{L}}$ is computable.

\item $\mc{T}$ and $\mathsf{Str}$ are of $\rfse$-type II: ~By using both
  parts of Theorem~\ref{theorem:abstract-tree-theorem} and reasoning
  similarly as in the previous case, we can show that
  $\lebsp{\cl{S}_r}{0}$ holds with a computable witness function given
  by $\witfn{\cl{S}}{0}{\mc{L}}(m) = f(d^{h+1})$ where $d = \eta_1(m)$
  and $h = \eta_2(m)$.  Here, $f$ is the computable function
  witnessing the size-effectiveness of $\mathsf{Str}$, and
  $\eta_1, \eta_2$ are the computable functions given by
  Theorem~\ref{theorem:abstract-tree-theorem}.
\end{enumerate}
\end{proof}


\section{Words, trees and nested words}\label{section:words-trees-nestedwords}


\newcommand{\unorderedtrees}{\mathsf{Unordered}\text{-}\mathsf{trees}}
\newcommand{\orderedtrees}{\mathsf{Ordered}\text{-}\mathsf{trees}}
\newcommand{\orderedrankedtrees}{\mathsf{Ordered}\text{-}\mathsf{ranked}\text{-}\mathsf{trees}}

~Let $\Sigma$ be a finite alphabet. The notion of unordered and
ordered $\Sigma$-trees was already introduced in the previous
section. A $\Sigma$-tree whose underlying poset is a linear order is
called a \emph{$\Sigma$-word}. An ordered $\Sigma$-tree $\tree{t} =
(((A, \leq), \lesssim), \lambda)$ is said to be
\emph{ranked}\sindex[term]{tree!ranked} by a function $\rho: \Sigma
\rightarrow \mathbb{N}$ if the number of children of any internal node
$a$ of $\tree{t}$ is exactly $\rho(\lambda(a))$.

Nested words\sindex[term]{nested word} were introduced by Alur and
Madhusudan in~\cite{alur-madhu}. Intuitively speaking, a nested word
is a $\Sigma$-word equipped with a binary relation that is interpreted
as a matching. Formally, given a finite alphabet $\Sigma$, a
\emph{nested word over $\Sigma$}, henceforth also called a
\emph{nested $\Sigma$-word}, is a 4-tuple $(A, \leq, \lambda,
\leadsto)$, where $A$ is a finite set (called the \emph{set of
  positions}), $\leq$ is a total linear order on $A$, $\lambda : A
\rightarrow \Sigma$ is a labeling function, and $\leadsto$ is a binary
\emph{matching relation} on $A$. Each pair $(i, j) \in\, \leadsto$ is
called a \emph{nesting edge}, the position $i$ is called a \emph{call
  position}, and the position $j$ is called a \emph{return successor}.
The relation $\leadsto$ satisfies the following properties. Below, $i
< j$ denotes $\big( (i \leq j) \wedge (i \neq j) \big)$.
\begin{enumerate}[nosep]
\item Nesting edges go only forward: For $i, j \in A$, if $i \leadsto
  j$, then $i < j$.
\item No two nesting edges share a position: For $i \in A$, each of
  the sets $\{ j \in A \mid i \leadsto j \}$ and $\{ j \in A \mid j
  \leadsto i \}$ has cardinality at most 1.
\item Nesting edges do not cross: For $i_1, i_2, j_1, j_2 \in A$, if
  $i_1 \leadsto j_1$ and $i_2 \leadsto j_2$, then it is not the case
  that $i_1 < i_2 \leq j_1 < j_2$.
\end{enumerate}

\begin{remark}
The definition of nested words presented above corresponds to the
definition of nested words in~\cite{alur-madhu}, in which the nested
words \emph{do not have any pending calls or pending returns}. Hence,
the elements $+\infty$ and $-\infty$ present in the definition
in~\cite{alur-madhu} are not necessary here, and have been dropped in
the definition above without any loss of generality.
\end{remark}

For each of the classes of words, trees (unordered, ordered and
ranked) and nested words introduced above, the notion of
\emph{regularity} of a subclass is well-studied in the literature. For
words and nested words, this notion is defined in terms of finite
state (word) automata and finite state nested word
automata~\cite{alur-madhu}. For each of the aforementioned classes of
trees, regularity is defined in terms of variants of finite state tree
automata~\cite{tata}. All of these notions of regularity have been
shown to be equivalent to definability via $\mso$
sentences~\cite{alur-madhu, tata}. Therefore, in our result below, a
subclass $\cl{S}$ of any of the classes $\cl{S}'$ of words, trees and
nested words, is said to be \emph{regular} if it is definable over
$\cl{S}'$ using an $\mso$ sentence (in other words, $\cl{S}$ is the
class of models in $\cl{S}'$, of an $\mso$ sentence). The central
theorem of this section is now stated as follows.





\begin{theorem}\label{theorem:words-and-trees-and-nested-words-satisfy-lebsp}
Given a finite alphabet $\Sigma$ and a function $\rho: \Sigma
\rightarrow \mathbb{N}$, let ${\words}(\Sigma)$,\linebreak
${\unorderedtrees}(\Sigma)$, ${\orderedtrees}(\Sigma)$,
$\orderedrankedtrees(\Sigma, \rho)$ and $\nestedwords{\Sigma}$ denote
respectively, the classes of all $\Sigma$-words, all unordered
$\Sigma$-trees, all ordered $\Sigma$-trees, all ordered $\Sigma$-trees
ranked by $\rho$, and all nested $\Sigma$-words.  Let $\cl{S}$ be a
regular subclass of any of these classes.  Then $\lebsp{\cl{S}}{k}$
holds with a computable witness function for each $k \in \mathbb{N}$.
\end{theorem}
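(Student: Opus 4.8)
The plan is to reduce the whole theorem to the abstract tree theorem (Theorem~\ref{theorem:abstract-tree-theorem}) applied to suitable tree representations, and to handle the passage from the five ambient classes to their \emph{regular} subclasses by preserving, alongside the $(m,\mc{L})$-type, the \emph{bounded $\mso$-type} that defines the subclass. First, by Lemma~\ref{lemma:gebsp-and-ebsp}(\ref{lemma:gebsp-and-ebsp:3}) it suffices, given $k$, to establish $\lebsp{\cl{S}_r}{0}$ with a computable witness function for $r=k+1$ (and more generally for each $\cl{S}_r$). So I fix a regular subclass $\cl{S}$ of one of the ambient classes $\cl{S}'$, observe that $\cl{S}$ is the class of models in $\cl{S}'$ of an $\mso$ sentence $\phi$ of some rank $m_0$, and that $\cl{S}_r$ is likewise definable in $(\cl{S}')_r$ by $\phi$ (the extra unary labels being ignored), so that $m_0$ is independent of $r$. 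The full-class case is simply $\phi=\mathsf{true}$, $m_0=0$.

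Second, for each ambient class $\cl{S}'$ I exhibit a representation-feasible class $\mc{T}$ of trees together with a size-effective map $\mathsf{Str}\colon\mc{T}\to(\cl{S}')_r$ that is onto up to isomorphism and reduction favourable for \emph{both} $\fo$ and $\mso$. For $\words(\Sigma)$ the representation tree is the linear order drawn as a path (unary, hence bounded, degree) and $\str{\tree{t}}$ reads off the word; for $\orderedtrees(\Sigma)$, $\unorderedtrees(\Sigma)$ and $\orderedrankedtrees(\Sigma,\rho)$ the representation tree is essentially the structure itself and $\mathsf{Str}$ is the induced-poset map; for $\nestedwords{\Sigma}$ it is the ordered tree induced by the well-nested matching $\leadsto$, with $\mathsf{Str}$ reconstructing the nested word by the natural traversal (cf.~\cite{alur-madhu}). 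The monotonicity properties (B.1, B.2) hold because a rooted subtree, a pruned subtree, or a cut representation tree maps to an induced substructure of the original; the transfer properties (A.1.a, A.1.b, A.2) are Feferman--Vaught/Shelah-style composition statements for $\fo$ and $\mso$ along concatenation (words), along subtree substitution and children-merge (trees), and along the analogous nesting operations (nested words), all obtained from the Ehrenfeucht--Fra\"iss\'e characterisations (Theorems~\ref{theorem:foef-theorem} and~\ref{theorem:msoef-theorem}) by composing winning strategies. Words and ranked trees have bounded degree, giving an \rfse-type~I representation (height reduction only); ordered and unordered trees and nested words have unbounded degree, giving an \rfse-type~II representation (both height and degree reduction).

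Third, with such a representation of $(\cl{S}')_r$ in hand I run Theorem~\ref{theorem:abstract-tree-theorem} on $\mc{T}$ with two coordinates: coordinate $1=((\cl{S}')_r,\mathsf{Str},\mc{L})$ with threshold $m$, and coordinate $2=((\cl{S}')_r,\mathsf{Str},\mso)$ with threshold $m_0$. Given $\mf{A}\in\cl{S}_r$ and a representing tree $\tree{t}$ with $\str{\tree{t}}\cong\mf{A}$ (available since $\mathsf{Str}$ is onto $(\cl{S}')_r\supseteq\cl{S}_r$), the theorem returns a bounded-height (and, in the type~II case, bounded-degree) subtree $\tree{s}\in\mc{T}$ with $\str{\tree{s}}\hookrightarrow\str{\tree{t}}$, $\str{\tree{s}}\lequiv{m}\str{\tree{t}}$ and $\str{\tree{s}}\mequiv{m_0}\str{\tree{t}}$. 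Taking $\mf{B}$ to be the substructure of $\mf{A}$ equal to the image of the embedding, the first two relations give $\mf{B}\subseteq\mf{A}$, a bound on $|\mf{B}|$ that is a computable function of $m$, and $\mf{B}\lequiv{m}\mf{A}$; the third, together with $\mf{A}\models\phi$ and $\rank{\phi}=m_0$, gives $\mf{B}\models\phi$, i.e.\ $\mf{B}\in\cl{S}_r$. Hence $\lebspcond(\cl{S}_r,\mf{A},\mf{B},0,m,\mathsf{null},\theta)$ holds, and the witness function $\theta$ is computable because the bounds $\eta_1,\eta_2$ of Theorem~\ref{theorem:abstract-tree-theorem}, the index functions $\Lambda_{\cdot,\cdot}$ (Proposition~\ref{prop:finite-index-of-lequiv(m)-relation}), the constant $m_0$, and the size-effectiveness function of $\mathsf{Str}$ are all computable. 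This yields $\lebsp{\cl{S}_r}{0}$ with a computable witness function, and then Lemma~\ref{lemma:gebsp-and-ebsp} delivers $\lebsp{\cl{S}}{k}$ with a computable witness function for every $k$.

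The main obstacle will be verifying the transfer properties A.1.a, A.1.b, A.2 (and the monotonicity properties) for the \emph{nested word} representation, since the matching relation $\leadsto$ crosses the linear order and must be tracked through the substitutions, prunings and merges: one must check both that the surviving position set still carries a well-nested matching (so the outcome is again a genuine nested word lying in $\mc{T}$) and that the induced map is a substructure embedding, and then prove the $\fo$ and $\mso$ composition statements for these nesting operations. The corresponding checks for words and for the three kinds of trees are comparatively routine instances of standard composition theorems. A secondary technical point is uniformity in $r$: the representations must be set up for every labelled class $(\cl{S}')_r$ at once (the leaf/internal alphabets $\sigmaleaf,\sigmaint$ grow with $r$ but stay finite for each fixed $r$), so that $\cl{S}'$ genuinely admits an \rfse\ tree representation \emph{schema}.
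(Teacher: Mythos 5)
Your proposal is correct and follows essentially the same route as the paper: reduce to $\lebsp{\cl{S}_r}{0}$ via Lemma~\ref{lemma:gebsp-and-ebsp}, exhibit $\rfse$ tree representations for each ambient class, verify the transfer/monotonicity conditions via EF-game composition lemmas (with nested words as the delicate case, exactly as in the paper), and invoke Theorem~\ref{theorem:abstract-tree-theorem}. The only (inessential) variation is how regular subclasses are handled: the paper proves $\mebsp{\cdot}{k}$ for the full class and then applies closure under $\mso$-definable subclasses (Lemma~\ref{lemma:lebsp-closure-under-set-theoretic-ops}(\ref{lemma:lebsp-closure-under-set-theoretic-ops:part-4})), whereas you thread the rank $m_0$ of the defining sentence through the abstract tree theorem as a second coordinate — both yield the same computable bounds.
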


We devote the rest of this section to proving
Theorem~\ref{theorem:words-and-trees-and-nested-words-satisfy-lebsp}. 


\vspace{-5pt}
\subsection*{Proof of
  Theorem~\ref{theorem:words-and-trees-and-nested-words-satisfy-lebsp}
  for words and trees}\label{subsection:proof-for-words-and-trees}

We show $\mebsp{\cl{S}}{k}$ holds with a computable witness function
when $\cl{S}$ is exactly one of the classes ${\words}(\Sigma)$,
${\unorderedtrees}(\Sigma)$, ${\orderedtrees}(\Sigma)$, and
$\orderedrankedtrees(\Sigma, \rho)$.  That $\lebsp{\cdot}{k}$ holds
with a computable witness function for a regular subclass follows,
because (i) a regular subclass of any of the above classes is $\mso$
definable over the class, (ii) $\mebsp{\cdot}{k}$ and the
computability of witness function are preserved under $\mso$ definable
subclasses
(Lemma~\ref{lemma:lebsp-closure-under-set-theoretic-ops}(\ref{lemma:lebsp-closure-under-set-theoretic-ops:part-4})),
and (iii) $\mebsp{\cdot}{k}$ implies $\febsp{\cdot}{k}$, and any
witness function for the former is also a witness function for the
latter (Lemma~\ref{lemma:gebsp-and-ebsp}(\ref{lemma:gebsp-and-ebsp:4})).

Of the classes mentioned above, we show $\mebsp{\cl{S}}{k}$ holds with
a computable witness function for the case when $\cl{S}$ is either
${\unorderedtrees}(\Sigma)$ or
$\orderedrankedtrees(\Sigma, \rho)$. The proof for
${\orderedtrees}(\Sigma)$ can be done similarly.  That the result
holds for $\words(\Sigma)$ follows from the fact that $\words(\Sigma)$
is a subclass of ${\unorderedtrees}(\Sigma)$ that is hereditary over
the latter, and then by using
Lemma~\ref{lemma:lebsp-closure-under-set-theoretic-ops}(\ref{lemma:lebsp-closure-under-set-theoretic-ops:part-1}).

For our proofs, we need $\mso$ \emph{composition
lemmas} \sindex[term]{composition lemma} for unordered trees and
ordered trees. Composition results were first studied by Feferman and
Vaught, and subsequently by many others (see~\cite{makowsky}).  We
state the $\mso$ composition lemma first for ordered trees, towards
which we define some terminology. For a finite alphabet $\Omega$,
given ordered $\Omega$-trees $\tree{t}, \tree{s}$ and a non-root node
$a$ of $\tree{t}$, the \emph{join of $\tree{s}$ to $\tree{t}$ to the
right of $a$}, denoted $\tree{t} \cdot^{\rightarrow}_a \tree{s}$, is
defined as follows: Let $\tree{s}'$ be an isomorphic copy of
$\tree{s}$ whose set of nodes is disjoint with the set of nodes of
$\tree{t}$. Then $\tree{t} \cdot^{\rightarrow}_a \tree{s}$ is defined
upto isomorphism as the tree obtained by making $\tree{s}'$ as a new
child subtree of the parent of $a$ in $\tree{t}$, at the successor
position of the position of $a$ among the children of $\tree{t}$.  We
can similarly define the \emph{join of $\tree{s}$ to $\tree{t}$ to the
left of $a$}, denoted $\tree{t} \cdot^{\leftarrow}_a
\tree{s}$. Likewise, for $\tree{t}$ and $\tree{s}$ as above, if $a$ is
a leaf node of $\tree{t}$, we can define the \emph{join of $\tree{s}$
to $\tree{t}$ below $a$}, denoted
$\tree{t} \cdot^{\uparrow}_a \tree{s}$, as the tree obtained upto
isomorphism by making the root of $\tree{s}$ as a child of $a$. The
$\mso$ composition lemma for ordered trees can now be stated as
follows. The proof of this lemma is provided towards the end of this
section.

\begin{lemma}[Composition lemma for ordered trees]\label{lemma:mso-composition-lemma-for-ordered-trees}\sindex[term]{composition lemma!for ordered trees}
For a finite alphabet $\Omega$, let ${\tree{t}}_i, \tree{s}_i$ be
non-empty ordered $\Omega$-trees, and let $a_i$ be a non-root node of
$\tree{t}_i$, for each $i \in \{1, 2\}$. Let $m \ge 2$ and suppose
that $({\tree{t}}_1, a_1) \mequiv{m} ({\tree{t}}_2, a_2)$ and
${\tree{s}}_1
\mequiv{m} {\tree{s}}_2$. Then each of the following hold.
\begin{enumerate}[nosep]
\item
$(({\tree{t}}_1 \cdot^{\rightarrow}_{a_1} {\tree{s}}_1), a_1) \mequiv{m}
(({\tree{t}}_2 \cdot^{\rightarrow}_{a_2} {\tree{s}}_2), a_2)$\label{lemma:mso-composition-lemma-for-ordered-trees:part-1}
\item
$(({\tree{t}}_1 \cdot^{\leftarrow}_{a_1} {\tree{s}}_1), a_1) \mequiv{m}
(({\tree{t}}_2 \cdot^{\leftarrow}_{a_2} {\tree{s}}_2), a_2)$
\item
$(({\tree{t}}_1 \cdot^{\uparrow}_{a_1} {\tree{s}}_1), a_1) \mequiv{m}
  (({\tree{t}}_2 \cdot^{\uparrow}_{a_2} {\tree{s}}_2), a_2)$ if $a_1,
  a_2$ are leaf nodes of $\tree{t}_1, \tree{t}_2$ resp.
\end{enumerate}
\end{lemma}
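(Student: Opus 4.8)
The plan is to prove all three equivalences uniformly through the game characterization of $\mequiv{m}$ supplied by Theorem~\ref{theorem:msoef-theorem}: it suffices to build, for the composed trees, a winning strategy for the duplicator in the $m$-round $\mef$ game by gluing together winning strategies for two component games. First I would apply Theorem~\ref{theorem:msoef-theorem} to the two hypotheses, obtaining a winning strategy $S_{\tree{t}}$ for the duplicator in the $m$-round $\mef$ game on $(\tree{t}_1, a_1)$ and $(\tree{t}_2, a_2)$, and a winning strategy $S_{\tree{s}}$ in the $m$-round game on $\tree{s}_1$ and $\tree{s}_2$ (equivalently on the disjoint isomorphic copies $\tree{s}_1', \tree{s}_2'$ used in forming the joins). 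The point is that the universe of $\tree{t}_i \cdot^{\rightarrow}_{a_i} \tree{s}_i$ is the disjoint union of the node set of $\tree{t}_i$ with that of $\tree{s}_i'$, so every point and every set chosen during the combined game splits canonically into a $\tree{t}_i$-part and an $\tree{s}_i'$-part.

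I would then define the combined duplicator strategy $S$ as follows. A \emph{point move} on an element is routed to $S_{\tree{t}}$ or $S_{\tree{s}}$ according to the part containing it, and the response returned verbatim. A \emph{set move} on a set $U$ is decomposed as $U = U_{\tree{t}} \sqcup U_{\tree{s}'}$; the duplicator feeds $U_{\tree{t}}$ as a set move to $S_{\tree{t}}$ and $U_{\tree{s}'}$ to $S_{\tree{s}}$, obtaining responses $U'_{\tree{t}}$ and $U'_{\tree{s}'}$, and answers with $U' = U'_{\tree{t}} \sqcup U'_{\tree{s}'}$. Since a set move drives one round in \emph{each} subgame whereas a point move drives a round in only one, after $m$ rounds of the combined game each subgame has seen at most $m$ rounds, so $S_{\tree{t}}$ and $S_{\tree{s}}$ stay defined and winning throughout. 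At the end, the $\tree{t}$-pebbles (together with the distinguished $a_i$, carried from the pointed $\tree{t}$-game) constitute a partial isomorphism between $(\tree{t}_1, a_1)$ and $(\tree{t}_2, a_2)$, and the $\tree{s}'$-pebbles a partial isomorphism between $\tree{s}_1'$ and $\tree{s}_2'$; consequently all atomic relations and set-membership facts \emph{internal} to either part are preserved.

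What remains is to verify the atomic relations that cross the seam, and this is where the combinatorics of the join operation enters. Writing $c_i$ for the parent of $a_i$ in $\tree{t}_i$, in $\tree{t}_i \cdot^{\rightarrow}_{a_i} \tree{s}_i$ one checks directly that: (i) for a node $x$ of $\tree{t}_i$ and a node $y$ of $\tree{s}_i'$ one has $x \leq y$ iff $x \leq a_i$ and $x \neq a_i$ (i.e.\ $x$ is an ancestor-or-equal of $c_i$), uniformly in $y$, while $y \leq x$ never holds; and (ii) the only sibling-order facts crossing the seam relate $\troot{\tree{s}_i'}$ to the children of $c_i$ in $\tree{t}_i$, and are governed entirely by the insertion position of $\troot{\tree{s}_i'}$, which is immediately after $a_i$. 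Fact (i) depends only on the relation of $x$ to the distinguished node $a_i$, hence is preserved because $S_{\tree{t}}$ preserves the $\tree{t}$-pebbles together with $a_i$; the labelling is preserved for the same reasons. Fact (ii) similarly reduces to the relation of a child of $c_i$ to $a_i$ in the $\tree{t}$-game, \emph{provided} the duplicator also matches $\troot{\tree{s}_1'}$ with $\troot{\tree{s}_2'}$ and non-roots with non-roots among the played $\tree{s}'$-elements. For $\cdot^{\uparrow}_{a}$ this last provision is vacuous, since there $a_i$ is a leaf and $\troot{\tree{s}_i'}$ is its unique child, so no sibling-order fact crosses the seam and only (i) — handled by $S_{\tree{t}}$ — is needed; the case $\cdot^{\leftarrow}_{a}$ is symmetric to $\cdot^{\rightarrow}_{a}$.

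The main obstacle is exactly this root-matching in the cases $\cdot^{\rightarrow}_{a}$ and $\cdot^{\leftarrow}_{a}$: the bare hypothesis $\tree{s}_1 \mequiv{m} \tree{s}_2$ gives a winning strategy for the \emph{unpointed} game, which need not send $\troot{\tree{s}_1'}$ to $\troot{\tree{s}_2'}$, yet the cross-sibling check forces it to. The remedy is to run the $\tree{s}$-subgame with the roots pre-selected, i.e.\ to use a duplicator strategy for the game on $(\tree{s}_1, \troot{\tree{s}_1})$ and $(\tree{s}_2, \troot{\tree{s}_2})$; here the hypotheses $m \geq 2$ and the fact that the root is the unique $\leq$-minimal element — and hence $\mso$-definable by the rank-one formula $\forall z\,(x \leq z)$ — are used to upgrade $\tree{s}_1 \mequiv{m} \tree{s}_2$ to a root-respecting strategy of the required depth, so that ``being the grafted root'' is preserved among played $\tree{s}'$-elements. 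Carefully tracking this bookkeeping (and the dual interaction of set moves with the root predicate) is the one genuinely delicate step; everything else is the routine verification that the glued play is a partial isomorphism, after which Theorem~\ref{theorem:msoef-theorem} yields the three claimed $\mequiv{m}$ equivalences.
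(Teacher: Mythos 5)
Your proposal follows essentially the same route as the paper's proof: compose the two duplicator strategies supplied by Theorem~\ref{theorem:msoef-theorem}, routing point moves to the relevant component game and splitting set moves across the seam, and then observe that the only nontrivial atomic facts to check are the cross-seam $\leq$ and $\lesssim$ facts, which reduce to the position of the $\tree{t}$-element relative to $a_i$ and to whether the played $\tree{s}$-element is the grafted root. The one place you diverge is in how root-matching is secured: you propose to run the $\tree{s}$-subgame as a pointed game on $(\tree{s}_i, \troot{\tree{s}_i})$, but extracting such a pointed strategy from $\tree{s}_1 \mequiv{m} \tree{s}_2$ via the rank-$1$ definability of the root costs a quantifier, so as stated it yields only an $(m-1)$-round root-respecting strategy and hence only $\mequiv{m-1}$ for the composed trees. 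The paper avoids this loss by keeping the unpointed $m$-round strategy $\beta_2$ and noting that, because $m \ge 2$, $\beta_2$ is already forced to answer a root with a root (a mismatch would be refuted by the spoiler in the next round); with that observation in place of your pre-pointing step, your verification goes through at rank $m$ exactly as written.
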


We now state the $\mso$ composition lemma for unordered trees, towards
which we introduce terminology akin to that introduced above for
ordered trees. Given unordered trees $\tree{t}$ and $\tree{s}$, and a
node $a$ of $\tree{t}$, define the \emph{join of $\tree{s}$ to
$\tree{t}$ to $a$}, denoted $\tree{t} \cdot_a \tree{s}$, as follows:
Let $\tree{s}'$ be an isomorphic copy of $\tree{s}$ whose set of nodes
is disjoint with the set of nodes of $\tree{t}$. Then
$\tree{t} \cdot_a \tree{s}$ is defined upto isomorphism as the tree
obtained by making $\tree{s}'$ as a new child subtree of $a$ in
$\tree{t}$. The $\mso$ composition lemma for unordered trees is now as
stated below. The proof is similar to that of
Lemma~\ref{lemma:mso-composition-lemma-for-ordered-trees}, and is hence
skipped.

\begin{lemma}[Composition lemma for unordered trees]\label{lemma:mso-composition-lemma-for-unordered-trees}\sindex[term]{composition lemma!for unordered trees}
For a finite alphabet $\Omega$, let ${\tree{t}}_i, \tree{s}_i$ be
non-empty unordered $\Omega$-trees, and let $a_i$ be a node of
$\tree{t}_i$, for each $i \in \{1, 2\}$. For $m \in \mathbb{N}$,
suppose that $({\tree{t}}_1, a_1) \mequiv{m} ({\tree{t}}_2, a_2)$ and
${\tree{s}}_1 \mequiv{m} {\tree{s}}_2$. Then
$(({\tree{t}}_1 \cdot_{a_1} {\tree{s}}_1), a_1) \mequiv{m}
(({\tree{t}}_2 \cdot_{a_2} {\tree{s}}_2), a_2)$.
\end{lemma}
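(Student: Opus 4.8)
The plan is to prove the lemma through the Ehrenfeucht-Fra\"iss\'e characterization of $\mequiv{m}$, namely Theorem~\ref{theorem:msoef-theorem}, rather than by a direct syntactic argument. By that theorem it suffices to exhibit a winning strategy for the duplicator in the $m$-round $\mef$ game on $(\tree{t}_1 \cdot_{a_1} \tree{s}_1, a_1)$ and $(\tree{t}_2 \cdot_{a_2} \tree{s}_2, a_2)$. The two hypotheses $(\tree{t}_1, a_1) \mequiv{m} (\tree{t}_2, a_2)$ and $\tree{s}_1 \mequiv{m} \tree{s}_2$ supply, again via Theorem~\ref{theorem:msoef-theorem}, winning duplicator strategies $\mathbf{S}_t$ on $(\tree{t}_1, a_1),(\tree{t}_2, a_2)$ and $\mathbf{S}_s$ on $\tree{s}_1, \tree{s}_2$; I would transport $\mathbf{S}_s$ to the disjoint isomorphic copies $\tree{s}_1', \tree{s}_2'$ of $\tree{s}_1, \tree{s}_2$ that get glued below $a_1, a_2$ in forming the joins.

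First I would observe that each joined tree $\tree{t}_i \cdot_{a_i} \tree{s}_i$ partitions, as a relational structure, into the part on $\univ{\tree{t}_i}$ and the part on $\univ{\tree{s}_i'}$, with the order $\leq$ being the two original orders inside each part, plus exactly those cross pairs $u \leq v$ (with $u \in \tree{t}_i$ and $v \in \tree{s}_i'$) for which $u \leq_{\tree{t}_i} a_i$; no cross pair of the form $v \leq u$ ever holds, since every node of $\tree{s}_i'$ is a descendant of $a_i$. Then I would define the composite strategy: whenever the spoiler makes a point move it lands in exactly one of the two parts, and the duplicator answers using $\mathbf{S}_t$ or $\mathbf{S}_s$ accordingly; whenever the spoiler makes a set move $A$, I split $A = A^t \sqcup A^s$ along the partition, feed $A^t$ to the $\tree{t}$-game and $A^s$ to the $\tree{s}$-game as set moves, and answer with the union of the two responses. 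A set move thus consumes one round in each sub-game while a point move consumes one round in a single sub-game, so if the big game lasts $m$ rounds then each sub-game is played for at most $m$ rounds, and the sub-strategies are never asked to move beyond their budget. Since a configuration reached by following a winning strategy is a partial isomorphism at every stage (being a subconfiguration of the winning final configuration), both induced sub-plays remain partial isomorphisms throughout.

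The crux will be verifying that the combined map on the two joined trees is itself a partial isomorphism, i.e.\ that it preserves labels, set membership and the order $\leq$. Labels are unary and set membership decomposes along the partition, so these are immediate from the two sub-plays being partial isomorphisms; equality across the two parts never occurs because the universes are disjoint and the composite map keeps each element inside its own part. The only genuinely new point is the order on a cross pair $u \leq v$ with $u$ in the $\tree{t}$-part and $v$ in the $\tree{s}$-part. Here I would use that $a_i$ is a distinguished constant of the $\tree{t}$-game: since $\mathbf{S}_t$ is winning, the chosen $\tree{t}$-tuples together with the constant $a_1 \mapsto a_2$ form a partial isomorphism, so $u \leq_{\tree{t}_1} a_1$ holds iff its partner satisfies $u' \leq_{\tree{t}_2} a_2$. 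Combining this with the characterization of cross pairs from the first step yields $u \leq v$ iff $u' \leq v'$, completing the partial-isomorphism check, after which Theorem~\ref{theorem:msoef-theorem} gives $(\tree{t}_1 \cdot_{a_1} \tree{s}_1, a_1) \mequiv{m} (\tree{t}_2 \cdot_{a_2} \tree{s}_2, a_2)$. The main obstacle I anticipate is purely in the bookkeeping --- tracking how set moves split, confirming the round budgets, and isolating the fact that the distinguished node $a_i$ is exactly what makes the cross-order relation transfer --- rather than in any deep model-theoretic content; this is also why the unordered case needs no lower bound on $m$, unlike the ordered case.
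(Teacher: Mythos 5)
Your proposal is correct and is essentially the paper's own argument: the paper proves the ordered-tree composition lemma by exactly this composite Ehrenfeucht-Fra\"iss\'e strategy (answer point moves in the relevant part, split set moves along the partition, and verify the partial isomorphism by a case analysis in which the distinguished node $a_i$ controls the cross-order pairs), and then states that the unordered case is proved the same way. Your observations that the only nontrivial atomic relation to check is $u \leq v$ for a cross pair, that this reduces to $u \leq_{\tree{t}_i} a_i$ because $\tree{s}_i'$ hangs entirely below $a_i$, and that the absence of the sibling order $\lesssim$ is why no bound $m \ge 2$ is needed here, all match the paper's reasoning.
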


We now show $\lebsp{\cdot}{k}$ holds with a computable witness
function for each of the classes ${\unorderedtrees}(\Sigma)$ and
$\orderedrankedtrees(\Sigma, \rho)$.

\begin{enumerate}[leftmargin=*, nosep]
\item 
Let $\cl{S}$ be the class of all unlabeled unordered trees. We show
that $\cl{S}$ admits an $\mrfse$ tree representation schema. Then
using Lemma~\ref{lemma:rfse-tree-rep-implies-lebsp}, we get that
$\mebsp{\cl{S}_p}{k}$ holds with a computable witness function for
each $p \in \mathbb{N}$. Since there is a 1-1 correspondence between
$\cl{S}_p$ and ${\unorderedtrees}(\Sigma)$ when $|\Sigma| = p$, it
follows that 
\linebreak $\mebsp{{\unorderedtrees}(\Sigma)}{k}$ holds with a
computable witness function.

Consider the class $\cl{S}_p$ where $p \ge 1$. Let $\mc{T}$ be the
class of all representation-feasible trees over
$\sigmaint \cup \sigmaleaf$ where $\sigmaint = \sigmaleaf
= \{0, \ldots, p-1\}$.  There is a natural map
$\mathsf{Str}: \mc{T} \rightarrow \cl{S}_p$ that simply ``forgets''
the ordering among the children of any node of its input tree. More
precisely, for an ordered tree $(O, \lambda)$ over $\{0, \ldots,
p-1\}$ where $O = ((A, \leq), \lesssim)$, we have
$\mathsf{Str}((O, \lambda)) = ((A, \leq), \lambda)$. It is easy to see
that $\mathsf{Str}$ satisfies for $\mc{L} = \mso$, the
conditions \ref{A.1.a}, \ref{B.1} and \ref{B.2} stated in
Section~\ref{section:abstract-tree-theorem}. That $\mathsf{Str}$
satisfies for $\mc{L} = \mso$, the conditions \ref{A.1.b}
and \ref{A.2} for $m \ge 0$ follows from
Lemma~\ref{lemma:mso-composition-lemma-for-unordered-trees}
above. Then $\mathsf{Str}$ is $\mso$-height-reduction favourable and
$\mso$-degree-reduction favourable. That $\mc{T}$ is closed under
subtrees, and that $\mathsf{Str}$ is size effective and onto upto
isomorphism, are obvious. Then $\mc{T}$ and $\mathsf{Str}$ are of
$\mrfse$-type II. Since $p$ is arbitrary, we get that $\cl{S}$ admits
an $\mrfse$ tree representation schema.

\item 
Let $\cl{S}$ be the class of all unlabeled ordered trees ranked by
$\rho$. We show that $\cl{S}$ admits an $\mrfse$ tree representation
schema. Then by Lemma~\ref{lemma:rfse-tree-rep-implies-lebsp}, we get
that\linebreak $\mebsp{\cl{S}_p}{k}$ holds with a computable witness
function for each $p \in \mathbb{N}$. Since there is a 1-1
correspondence between $\cl{S}_p$ and
${\orderedrankedtrees}(\Sigma, \rho)$ when $|\Sigma| = p$, it follows
that $\mebsp{{\orderedrankedtrees}(\Sigma, \rho)}{k}$ holds with a
computable witness function.

Consider the class $\cl{S}_p$ where $p \ge 1$.  Let $\sigmaint
= \sigmaleaf = \{0, \ldots, p-1\}$, and let $\mc{T}$ be the class of
all representation-feasible trees over $\sigmaint \cup \sigmaleaf$,
that are ranked by $\rho$. Indeed, then $\mc{T} = \cl{S}_p$. That
$\mc{T}$ is of bounded degree, and is closed under rooted subtrees and
under replacements with rooted subtrees is clear.  Let
$\mathsf{Str}: \mc{T} \rightarrow \cl{S}_p$ be the identity map.  That
$\mathsf{Str}$ satisfies for $\mc{L} = \mso$, the
conditions \ref{A.1.a} and \ref{B.1}, and that $\mathsf{Str}$ is size
effective and onto upto isomorphism, are clear.  That $\mathsf{Str}$
satisfies \ref{A.1.b} for $m \ge 2$ follows from
Lemma~\ref{lemma:mso-composition-lemma-for-ordered-trees}. Whence
$\mathsf{Str}$ is $\mso$-height-reduction favourable, whereby $\mc{T}$
and $\mathsf{Str}$ are of \linebreak$\mrfse$-type I. Since $p$ is
arbitrary, we get that $\cl{S}$ admits an $\mrfse$ tree representation
schema.
\end{enumerate}

We now prove the $\mso$ composition lemma for ordered trees. 
\begin{proof}[Proof of Lemma~\ref{lemma:mso-composition-lemma-for-ordered-trees}]
Without loss of generality, we assume $\tree{t}_i$ and $\tree{s}_i$
have disjoint sets of nodes for $i \in \{1. 2\}$. We show the result
for part (\ref{lemma:mso-composition-lemma-for-ordered-trees:part-1})
above. The others are similar. Let $\tree{z}_i =
({\tree{t}}_i \cdot^{\rightarrow}_{a_i} {\tree{s}}_i)$ for $i \in \{1,
2\}$.

Let $\beta_1$ be the winning strategy of the duplicator in the $m$
round $\mef$ game between $(\tree{t}_1, a_1)$ and $(\tree{t}_2,
a_2)$. Let $\beta_2$ be the winning strategy of the duplicator in the
$m$ round $\mef$ game between $\tree{s}_1$ and $\tree{s}_2$. Observe
that since $m \ge 2$, we have $\beta_2$ is such that if the spoiler
picks $\troot{\tree{s}_1}$ (resp. $\troot{\tree{s}_2}$), then
$\beta_2$ will require the duplicator to pick $\troot{\tree{s}_2}$
(resp. $\troot{\tree{s}_1}$). We use this observation later on.  The
strategy $\alpha$ of the duplicator in the $m$-round $\mef$ game
between $(\tree{z}_1, a_1)$ and $(\tree{z}_2, a_2)$ is defined as
follows:
\begin{enumerate}[nosep]
\item Point move: (i) If the spoiler picks an element of $\tree{t}_1$
  (resp. $\tree{t}_2$), the duplicator picks the element of
  $\tree{t}_2$ (resp. $\tree{t}_1$) given by $\beta_1$.  (ii) If the
  spoiler picks an element of $\tree{s}_1$ (resp. $\tree{s}_2$), the
  duplicator picks the element of $\tree{s}_2$ (resp. $\tree{s}_1$)
  given by $\beta_2$.
\item Set move: If the spoiler picks a set $X$ from $\tree{z}_1$, then
  let $X = Y_1 \sqcup Y_2$ where $Y_1$ is a set of elements of
  $\tree{t}_1$ and $Y_2$ is a set of elements of $\tree{s}_1$. Let
  $Y_1'$ and $Y_2'$ be the sets of elements of $\tree{t}_2$ and
  $\tree{s}_2$ respectively, chosen according to strategies $\beta_1$
  and $\beta_2$. Then in the game between $(\tree{z}_1, a_1)$ and
  $(\tree{z}_2, a_2)$, the duplicator responds with the set $X' = Y_1'
  \cup Y_2'$. A similar choice of set is made by the duplicator from
  $\tree{z}_1$ when the spoiler chooses a set from $\tree{z}_2$.
\end{enumerate}

We now show that the strategy $\alpha$ is winning for the duplicator
in the $m$-round $\mef$ game between $(\tree{z}_1, a_1)$ and 
$(\tree{z}_2, a_2)$.

Let at the end of $m$ rounds, the vertices and sets chosen from
$\tree{z}_1$, resp. $\tree{z}_2$, be $e_1, \ldots, e_p$ and $E_1,
\ldots, E_r$, resp. $f_1, \ldots, f_p$ and $F_1, \ldots, F_r$, where
$p + r = m$. For $l \in \{1, \ldots, r\}$, let $E_l^t$, resp. $E_l^s$
be the intersection of $E_l$ with the nodes of $\tree{t}_1$,
resp. nodes of $\tree{s}_1$, and likewise, let $F_l^t$, resp. $F_l^s$
be the intersection of $F_l$ with the nodes of $\tree{t}_2$,
resp. nodes of $\tree{s}_2$.

Firstly, it is straightforward to verify that the labels of $e_i$ and
$f_i$ are the same for all $i \in \{1, \ldots, p\}$, and that for
$l \in \{1, \ldots, r\}$, $e_i$ is in $E_l^s$, resp. $E_l^t$, iff
$f_i$ is in $F_l^s$, resp. $F_l^t$, whereby $e_i \in E_l$ iff $f_i \in
F_l$.  For $1 \leq i, j \leq p$, if $e_i$ and $e_j$ both belong to
$\tree{t}_1$ or both belong to $\tree{s}_1$, then it is clear from the
strategy $\alpha$ described above, that $f_i$ and $f_j$ both belong
resp. to $\tree{t}_2$ or both belong to $\tree{s}_2$. It is easy to
verify from the description of $\alpha$ that for every binary relation
(namely, the ancestor-descendent-order $\leq$, and the
ordering-on-the-children-order $\lesssim$), the pair $(e_i, e_j)$ is
in the binary relation in $\tree{z}_1$ iff $(f_i, f_j)$ is in that
binary relation in $\tree{z}_2$.  Consider the case when without loss
of generality, $e_1 \in \tree{t}_1$ and $e_2 \in \tree{s}_1$. Then
$f_1 \in \tree{t}_2$ and $f_2 \in \tree{s}_2$. We have the following
cases. Assume that the ordered tree underlying $\tree{z}_i$ is
$((A_i, \leq_i), \lesssim_i)$ for $i \in \{1, 2\}$.
\begin{enumerate}[nosep]
\item $e_1 \lesssim_1 a_1$ and $e_2 = \troot{\tree{s}_1}$: Then we see
  that $f_1 \lesssim_2 a_2$ and $f_2 = \troot{\tree{s}_2$}. Observe that
  $f_2$ must be $\troot{\tree{s}_2}$ by the property of $\beta_2$
  stated at the outset. Whereby $e_1 \lesssim_1 e_2$ and $f_1
  \lesssim_2 f_2$. Likewise $e_1 \not\leq_1 e_2$, $e_2 \not\leq_1 e_1$ and
  $f_1 \not\leq_2 f_2$, $f_2 \not\leq_2 f_1$.
\item $e_1 \lesssim_1 a_1$ and $e_2 \neq \troot{\tree{s}_1}$: Then we
  see that $f_1 \lesssim_2 a_2$ and $f_2 \neq \troot{\tree{s}_2$}
  (again by the property of $\beta_2$ stated at the outset). Whereby
  $e_1 \not\lesssim_1 e_2$, $e_2 \not\lesssim_1 e_1$ and $f_1
  \not\lesssim_2 f_2$, $f_2 \not\lesssim_2 f_1$. Likewise, $e_1
  \not\leq_1 e_2$, $e_2 \not\leq_1 e_1$ and $f_1 \not\leq_2 f_2$, $f_2
  \not\leq_2 f_1$.
\item $a_1 \lesssim_1 e_1$, $a_1 \neq e_1$ and $e_2 =
  \troot{\tree{s}_1}$: Then we see that $a_2 \lesssim_2 f_1$, $a_2
  \neq f_1$ and $f_2 = \troot{\tree{s}_2$}. Observe that $f_2$ must be
  $\troot{\tree{s}_2}$ by the property of $\beta_2$ stated at the
  outset. Whereby $e_2 \lesssim_1 e_1$ and $f_2 \lesssim_2
  f_1$. Likewise $e_1 \not\leq_1 e_2$, $e_2 \not\leq_1 e_1$ and $f_1
  \not\leq_2 f_2$, $f_2 \not\leq_2 f_1$.
\item $a_1 \lesssim_1 e_1$, $a_1 \neq e_1$ and $e_2 \neq
  \troot{\tree{s}_1}$: Then we see that $a_2 \lesssim_2 f_1$, $a_2
  \neq f_1$ and $f_2 \neq \troot{\tree{s}_2$} (again by the property
  of $\beta_2$ stated at the outset). Whereby $e_1 \not\lesssim_1
  e_2$, $e_2 \not\lesssim_1 e_1$ and $f_1 \not\lesssim_2 f_2$, $f_2
  \not\lesssim_2 f_1$. Likewise, $e_1 \not\leq_1 e_2$, $e_2 \not\leq_1
  e_1$ and $f_1 \not\leq_2 f_2$, $f_2 \not\leq_2 f_1$.
\item $e_1 \neq a_1, e_1 \leq_1 a_1$: Then $f_1 \neq a_1, f_1 \leq_2
  a_2$. Whereby $e_1 \leq_1 e_2$ and $f_1 \leq_2 f_2$. This is because
  $e_1 \leq_1 c_1$ and $f_1 \leq_2 c_2$ where $c_1$ and $c_2$ are
  resp. the parents of $a_1$ and $a_2$ in $\tree{z}_1$ and
  $\tree{z}_2$. Also $e_1 \not\lesssim_1 e_2$, $e_2 \not\lesssim_1
  e_1$ and $f_1 \not\lesssim_2 f_2$, $f_2 \not\lesssim_2 f_1$.
\item $e_1$ and $e_2$ are not related by $\leq_1$ or $\lesssim_1$: Then
  $f_1$ and $f_2$ are also not related by $\leq_2$ or $\lesssim_2$.
\end{enumerate}
In all cases, we have that the pair $(e_i, e_j)$ is in $\leq_1$
(resp. $\lesssim_1$) iff $(f_i, f_j)$ is in $\leq_2$
(resp. $\lesssim_2$).
\end{proof}

\subsection*{Proof of
  Theorem~\ref{theorem:words-and-trees-and-nested-words-satisfy-lebsp}
  for nested words}\label{subsection:proof-for-nested-words}


We first prove a composition lemma for nested words. Towards the
statement of this lemma, we define the notion of \emph{insert of a
nested word $\nesword{v}$ in a nested word $\nesword{u}$ at a given
position $e$ of $\nesword{u}$}.

\begin{defn}[Insert]
Let $\nesword{u} =
(A_{\nesword{u}}, \leq_{\nesword{u}}, \lambda_{\nesword{u}}, \leadsto_{\nesword{u}})$
and $\nesword{v} =
(A_{\nesword{v}}, \leq_{\nesword{v}}, \lambda_{\nesword{v}}, \leadsto_{\nesword{v}})$
be given nested $\Sigma$-words, and let $e$ be a position in
$\nesword{u}$. The \emph{insert of $\nesword{v}$ in $\nesword{u}$ at
$e$}, denoted $\nesword{u}\uparrow_{e}\nesword{v}$, is a nested
$\Sigma$-word defined as below.
\begin{enumerate}[nosep]
\item 
If $\nesword{u}$ and $\nesword{v}$ have disjoint sets of positions,
then $\nesword{u}\uparrow_{e}\nesword{v} =
(A, \leq, \lambda, \leadsto)$ where
\begin{itemize}[nosep]
\item $A = A_{\nesword{u}} \sqcup A_{\nesword{v}}$
\item $\leq \,=\, \leq_{\nesword{u}} \cup \leq_{\nesword{v}}$ $\cup
  \,\{(i, j) \mid i \in A_{\nesword{u}}, j \in A_{\nesword{v}}, i
  \leq_{\nesword{u}} e\}$ $\cup\, \{(j, i) \mid i \in A_{\nesword{u}},
  j \in A_{\nesword{v}}, e \leq_{\nesword{u}} i, e \neq i\}$
\item $\lambda(a) = \lambda_{\nesword{u}}(a)$ if $a \in
  A_{\nesword{u}}$, else $\lambda(a) = \lambda_{\nesword{v}}(a)$
\item $\leadsto \,=\, \leadsto_{\nesword{u}} \cup \leadsto_{\nesword{v}}$
\end{itemize}
\item 
If $\nesword{u}$ and $\nesword{v}$ have overlapping sets of positions,
then let $\nesword{v}_1$ be an isomorphic copy of $\nesword{v}$ whose
set of positions is disjoint with that of $\nesword{u}$. Then
$\nesword{u}\uparrow_e\nesword{v}$ is defined upto isomorphism as
$\nesword{u}\uparrow_e\nesword{v}_1$.
\end{enumerate}
In the special case that $e$ is the last (under $\leq_{\nesword{u}}$)
position of $\nesword{u}$, we denote
$\nesword{u} \uparrow_e \nesword{v}$ as
$\nesword{u} \cdot \nesword{v}$, and call the latter as
the \emph{concatenation of $\nesword{v}$ with $\nesword{u}$}.
\end{defn}

\begin{lemma}[Composition lemma for nested words]\label{lemma:nested-words-composition-lemma}\sindex[term]{composition lemma!for nested words}
For a finite alphabet $\Sigma$, let
$\nesword{u}_i, \nesword{v}_i \in \nestedwords{\Sigma}$, and let $e_i$
be a position in $\nesword{u}_i$ for $i \in \{1, 2\}$. Then the
following hold for each $m \in \mathbb{N}$.
\begin{enumerate}[nosep]
\item If $(\nesword{u}_1, e_1) \lequiv{m} (\nesword{u}_2, e_2)$ and
  $\nesword{v}_1 \lequiv{m} \nesword{v}_2$, then
  $(\nesword{u}_1\uparrow_{e_1}\nesword{v}_1) \lequiv{m}
  (\nesword{u}_2\uparrow_{e_2}\nesword{v}_2)$.
\item 
$\nesword{u}_1 \lequiv{m} \nesword{u}_2$ and
  $\nesword{v}_1 \lequiv{m} \nesword{v}_2$, then
  $\nesword{u}_1\cdot\nesword{v}_1 \lequiv{m} \nesword{u}_2\cdot\nesword{v}_2$.
\end{enumerate}
\end{lemma}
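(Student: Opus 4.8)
The plan is to prove both parts via the Ehrenfeucht--Fra\"iss\'e game characterizations of $\lequiv{m}$, namely Theorem~\ref{theorem:foef-theorem} for $\mc{L} = \fo$ and Theorem~\ref{theorem:msoef-theorem} for $\mc{L} = \mso$, following the template set by the proof of Lemma~\ref{lemma:mso-composition-lemma-for-ordered-trees}. A nested word is a relational structure over the order $\leq$, the matching $\leadsto$, and the unary label predicates, and the insert operation takes the disjoint union of the position sets, splices $\nesword{v}$ immediately after $e$ in the order, and --- crucially --- forms the matching as $\leadsto_{\nesword{u}} \cup \leadsto_{\nesword{v}}$, adding no nesting edge between the two parts. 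For part~1 I would fix a winning strategy $\beta_{\nesword{u}}$ for the duplicator in the $m$-round $\mc{L}$-game on the marked words $(\nesword{u}_1, e_1)$ and $(\nesword{u}_2, e_2)$, and a winning strategy $\beta_{\nesword{v}}$ on $\nesword{v}_1$ and $\nesword{v}_2$. I would then define a composite strategy on $\nesword{z}_i := \nesword{u}_i \uparrow_{e_i} \nesword{v}_i$ that routes each move to the appropriate component: a point move landing in the $\nesword{u}$-part of one word is answered, in the $\nesword{u}$-part of the other, according to $\beta_{\nesword{u}}$, and likewise for the $\nesword{v}$-part via $\beta_{\nesword{v}}$; a set move (relevant only when $\mc{L} = \mso$) is split as $X = (X \cap \nesword{u}\text{-part}) \sqcup (X \cap \nesword{v}\text{-part})$, each part answered by the respective strategy and the responses unioned. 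This routing keeps $\nesword{u}$-picks paired with $\nesword{u}$-picks and $\nesword{v}$-picks with $\nesword{v}$-picks throughout.

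Next I would verify that after $m$ rounds the induced correspondence $p_i \mapsto q_i$ (together with the chosen sets) is a partial isomorphism between $\nesword{z}_1$ and $\nesword{z}_2$. Label agreement and the membership equivalences $p_i \in P_\ell \iff q_i \in Q_\ell$ follow immediately from the two component strategies together with the splitting of sets. For $\leq$ and $\leadsto$ I would argue by cases on whether the two picks lie in the same part or in different parts. When both lie in the $\nesword{u}$-part, both relations are inherited from $(\nesword{u}_i, e_i)$ and are preserved because $\beta_{\nesword{u}}$ is winning; symmetrically for the $\nesword{v}$-part. The only genuinely new case is a mixed pair $p_i$ (in the $\nesword{u}$-part) and $p_j$ (in the $\nesword{v}$-part): by the definition of insert, $p_i \leq p_j$ in $\nesword{z}_1$ iff $p_i \leq_{\nesword{u}_1} e_1$, and $p_j \leq p_i$ iff $e_1 \leq_{\nesword{u}_1} p_i$ with $e_1 \neq p_i$, so the cross-order is entirely determined by the position of $p_i$ relative to the marked element $e_1$. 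Since $e_1, e_2$ are treated as constants in the marked game, $\beta_{\nesword{u}}$ forces $e_1 \mapsto e_2$ into the partial isomorphism, whence $p_i \leq_{\nesword{u}_1} e_1 \iff q_i \leq_{\nesword{u}_2} e_2$; this transfers the cross-order correctly. For $\leadsto$ on a mixed pair there is nothing to check beyond observing that insert creates no cross nesting edge, so neither side has an edge, consistently with the duplicator keeping the parts separate.

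Finally, part~2 is the specialization of this argument to the case where $e_i$ is the last position of $\nesword{u}_i$. Then every $\nesword{u}$-element $a$ satisfies $a \leq_{\nesword{u}_i} e_i$, so in $\nesword{z}_i = \nesword{u}_i \cdot \nesword{v}_i$ every $\nesword{u}$-element precedes every $\nesword{v}$-element unconditionally; the cross-order no longer depends on tracking $e_i$, which is precisely why the weaker hypotheses $\nesword{u}_1 \lequiv{m} \nesword{u}_2$ and $\nesword{v}_1 \lequiv{m} \nesword{v}_2$ (with no marked positions) already suffice. I expect the main obstacle to be the bookkeeping for the matching relation $\leadsto$: unlike the tree case, where only $\leq$ and the child-order had to be tracked, here one must confirm that insert is genuinely matching-additive --- that no edge is created across the cut --- so that the disjoint routing of the two strategies never produces a spurious or missing nesting edge. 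Once this is isolated, the verification is a finite case analysis parallel to the one closing the proof of Lemma~\ref{lemma:mso-composition-lemma-for-ordered-trees}, carried out uniformly for $\mc{L} \in \{\fo, \mso\}$ since the routing handles point and set moves alike. Note also that, because the marked positions enter only as constants, no lower bound on $m$ is needed, matching the statement's ``for each $m \in \mathbb{N}$.''
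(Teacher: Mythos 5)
Your proposal is correct and follows essentially the same route as the paper: the paper's proof likewise composes the two duplicator strategies (the one for the marked game on $(\nesword{u}_1, e_1)$, $(\nesword{u}_2, e_2)$ and the one for $\nesword{v}_1$, $\nesword{v}_2$), routing point moves to the appropriate component and splitting set moves across the two parts. Your explicit verification of the cross-order via the marked position $e_i$ and of the matching relation being additive across the cut is exactly the check the paper leaves as "easy to see," so nothing is missing or divergent.
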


\begin{proof}
We give the proof for $\mc{L} = $MSO. The proof for $\mc{L} = $FO is
similar.  

The winning strategy $S$ for the duplicator in the $m$-round $\mef$
game between $\nesword{u}_1\uparrow_{e_1}\nesword{v}_1$ and
$\nesword{u}_2\uparrow_{e_2}\nesword{v}_2$ is simply the composition
of the winning strategies $S_1$, resp. $S_2$, of the duplicator in the
$m$-round $\mef$ game between $(\nesword{u}_1, e_1)$ and
$(\nesword{u}_2, e_2)$, resp.  $\nesword{v}_1$ and
$\nesword{v}_2$. Formally, $S$ is defined as follows. 
\begin{enumerate}[nosep]
\item Point move: If the spoiler picks an element of $\nesword{u}_1$,
resp. $\nesword{v}_1$, from
$\nesword{u}_1\uparrow_{e_1}\nesword{v}_1$, then the duplicator picks
the element of $\nesword{u}_2$, resp. $\nesword{v}_2$, from
$\nesword{u}_2\uparrow_{e_2}\nesword{v}_2$, that is given by the
strategy $S_1$, resp. $S_2$.  A similar choice of an element from
$\nesword{u}_1\uparrow_{e_1}\nesword{v}_1$ is made by the duplicator
if the spoiler picks an element from
$\nesword{u}_2\uparrow_{e_2}\nesword{v}_2$.
\item Set move: If the spoiler picks a set $Z$ from
$\nesword{u}_1\uparrow_{e_1}\nesword{v}_1$, then let $Z = X \sqcup Y$
where $X$ is a subset of positions of $\nesword{u}_1$ and $Y$ is a
subset of positions of $\nesword{v}_1$. Then the duplicator picks the
set $Z'$ from $\nesword{u}_2\uparrow_{e_2}\nesword{v}_2$ where $Z' =
X' \sqcup Y'$, $X'$ is the subset of positions of $\nesword{u}_2$ that
is chosen by the duplicator in response to $X$ according to strategy
$S_1$, and $Y'$ is the subset of positions of $\nesword{v}_2$ that is
chosen by the duplicator in response to $Y$ according to strategy
$S_2$.  A similar choice of a set from
$\nesword{u}_1\uparrow_{e_1}\nesword{v}_1$ is made by the duplicator
if the spoiler picks a set from
$\nesword{u}_2\uparrow_{e_2}\nesword{v}_2$.  
\end{enumerate}
It is easy to see that $S$ is a winning strategy in the $\mef$ game
between $\nesword{u}_1\uparrow_{e_1}\nesword{v}_1$ and
$\nesword{u}_2\uparrow_{e_2}\nesword{v}_2$.
\end{proof}

Towards the proof of
Theorem~\ref{theorem:words-and-trees-and-nested-words-satisfy-lebsp}
for nested words, we first observe that each nested $\Sigma$-word has
a natural representation using a representation-feasible tree over
$\sigmaint \cup \sigmaleaf$, where $\sigmaleaf = \Sigma \cup
(\Sigma \times \Sigma)$, and $\sigmaint = \sigmaleaf \cup \{\circ\}$,
We demonstate this for the example of the nested $\Sigma$-word
$\nesword{w} = (abaabba, \{(2, 6), (4, 5)\})$, where $\Sigma = \{a,
b\}$. See Figure~\ref{figure:nested-word-as-tree}.
\vspace{-15pt}
\begin{figure}[h]
\begin{minipage}{0.47\textwidth}
\centering
\includegraphics[scale=1.3]{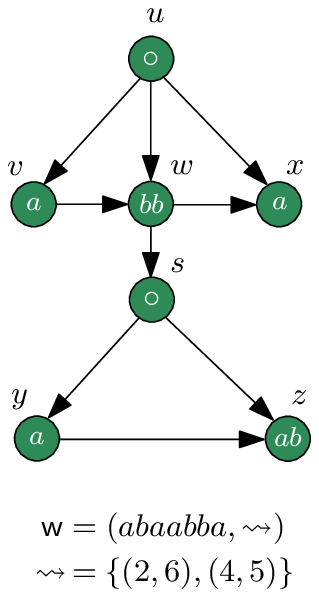}
\caption{\label{figure:nested-word-as-tree}Nested $\Sigma$-word as a
  tree over $\Sigma \cup (\Sigma \times \Sigma) \cup \{\circ\}$}
\end{minipage}\hfill
\begin{minipage}{0.53\textwidth}
\centering
\includegraphics[scale=1.3]{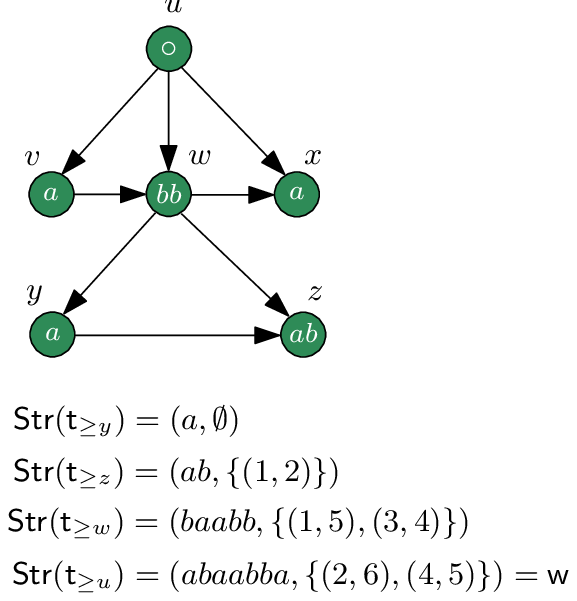}
\caption{\label{figure:tree-as-nested-word}Tree over $\Sigma \cup (\Sigma \times \Sigma) \cup \{\circ\}$ as a nested $\Sigma$-word}
\end{minipage}
\end{figure}

Formally, each non-empty nested $\Sigma$-word can be seen to be of one
of two types. A non-empty nested word $\nesword{u} = (A, \leq,
\lambda, \leadsto)$ is said to be of \emph{type A} if either
$\leadsto$ is empty and $|A| = 1$ (i.e. $\nesword{u}$ is really a
$\Sigma$-word of length 1), or for the minimum and maximum (under
$\leq$) positions $i$ and $j$ respectively of $\nesword{u}$, it is the
case that $i \leadsto j$. A non-empty nested word $\nesword{u}$ is
said to be of \emph{type B} if it is not of type A. It is easy to see
that a type B nested word can be written as a concatenation of type A
nested words. We describe inductively, the tree-representation of
$\nesword{u} = (A, \leq, \lambda, \leadsto)$ below. We have three
cases.

\begin{enumerate}[nosep]
\item $\nesword{u}$ is empty:
Then the tree $\tree{t}$ over
$\sigmaint \cup \sigmaleaf$ representing $\nesword{u}$ is the empty tree.
\item $\nesword{u}$ is of type A: 
If $\leadsto$ is empty and $|A| = 1$, then let the only element of $A$
be labeled (by $\lambda$) with the letter $a$, where $a \in
\Sigma$. Then the tree $\tree{t}$ over $\sigmaint \cup \sigmaleaf$
representing $\nesword{u}$ is a singleton whose only node is labeled
with $a$.

Else, let $\nesword{u}_1$ be the nested sub-$\Sigma$-word of
$\nesword{u}$ induced by the positions $l \in A$ such that $i \leq l
\leq j, l \neq i$ and $l \neq j$, where $i$ and $j$ are respectively
the minimum and maximum (under $\leq$) positions of $\nesword{u}$. Let
$\tree{t}_1$ be the tree over $\sigmaint \cup \sigmaleaf$ representing
$\nesword{u}_1$, if the latter is not empty. Then the tree $\tree{t}$
over $\sigmaint \cup \sigmaleaf$ representing $\nesword{u}$ is defined
as follows. If $\nesword{u}_1$ is empty, then $\tree{t}$ is a
singleton whose only node is labeled with the label $(\lambda(i),
\lambda(j))$. Else, $\tree{t}$ is such that (i) the label of the
$\troot{\tree{t}}$ is $(\lambda(i), \lambda(j))$ and (ii) the only
child of $\troot{\tree{t}}$ is $\troot{\tree{t}_1}$, i.e. $\tree{t}_1$
is the only child subtree of $\troot{\tree{t}}$.
\item $\nesword{u}$ is of type B:
Then $\nesword{u}$ can be written as $\nesword{u}
= \nesword{u}_1 \cdots \nesword{u}_n$ where $\nesword{u}_i$ is a type
A nested $\Sigma$-word for $i \in \{1, \ldots, n\}$. Let $\tree{t}_i$
be the tree over $\sigmaint \cup \sigmaleaf$ representing
$\nesword{u}_i$. Then the tree $\tree{t}$ over
$\sigmaint \cup \sigmaleaf$ representing $\nesword{u}$ is such that
(i) the label of $\troot{\tree{t}}$ is $\circ$, and (ii) the children
of $\troot{\tree{t}}$ in ``increasing order'' are
$\troot{\tree{t}_1}, \ldots, \troot{\tree{t}_n}$.
\end{enumerate}

Conversely, each representation-feasible tree $\tree{t}$ over
$\sigmaint \cup \sigmaleaf$ represents a nested $\Sigma$-word
$\nesword{u}_{\tree{t}}$. We demonstrate this for the example of the
nested $\Sigma$-word $\nesword{w} = (abaabba, \{(2, 6), (4, 5)\})$
where $\Sigma = \{a, b\}$, in
Figure~\ref{figure:tree-as-nested-word}. Formally, we see this
inductively as follows.
\begin{enumerate}[nosep]
\item If $\tree{t}$ is empty, then $\nesword{u}_{\tree{t}}$ is the
  empty nested $\Sigma$-word.
\item If
$\tree{t} = (O, \lambda)$ contains only a single node, say $e$, then
there are two cases. If $\lambda(e) \in \Sigma$, then
$\nesword{u}_{\tree{t}} =
(A_{\tree{t}}, \leq_{\tree{t}}, \lambda_{\tree{t}}, \leadsto_{\tree{t}})$
where $A_{\tree{t}} = \{e_1\}, \leq_{\tree{t}} = \{(e_1,
e_1)\}, \lambda_{\tree{t}}(e_1) = \lambda(e)$ and $\leadsto_{\tree{t}}
= \emptyset$. Else, i.e. if $\lambda(e) = (a,
b) \in \Sigma \times \Sigma$, then $\nesword{u}_{\tree{t}} =
(A_{\tree{t}}, \leq_{\tree{t}}, \lambda_{\tree{t}}, \leadsto_{\tree{t}})$
where $A_{\tree{t}} = \{e_1, e_2\}, \leq_{\tree{t}} = \{(e_1, e_1),
(e_1, e_2), (e_2, e_2)\}, \lambda_{\tree{t}}(e_1) =
a, \lambda_{\tree{t}}(e_2) = b$ and $\leadsto_{\tree{t}} = \{(e_1,
e_2)\}$.
\item If $\tree{t} = (O, \lambda)$ contains more than one node, then let
$\tree{t}_1, \ldots, \tree{t}_n$ be, in ``increasing order'', the
subtrees of $\tree{t}$ rooted at the children of
$\troot{\tree{t}}$. Let $\nesword{v}
= \nesword{u}_{\tree{t}_1} \cdots \nesword{u}_{\tree{t}_n}$, where
$\nesword{u}_{\tree{t}_i}$ is the nested $\Sigma$-word represented by
$\tree{t}_i$, for $i \in \{1, \ldots, n\}$. If
$\lambda(\troot{\tree{t}}) = \circ$, then $\nesword{u}_{\tree{t}}
= \nesword{v}$. Else, suppose that $\lambda(\troot{\tree{t}}) = (a,
b)$ where $a, b \in \Sigma$. Let $\nesword{w}$ be the 2-letter
$\Sigma$-word given by $\nesword{w} =
(A_{\nesword{w}}, \leq_{\nesword{w}}, \lambda_{\nesword{w}}, \leadsto_{\nesword{w}})$
where $A_{\nesword{w}} = \{e_1, e_2\}, \leq_{\nesword{w}} = \{(e_1,
e_1), (e_1, e_2), (e_2, e_2)\}, \lambda_{\nesword{w}}(e_1) =
a, \lambda_{\nesword{w}}(e_2) = b$ and $ \leadsto_{\nesword{w}}
= \{(e_1, e_2)\}$. Then $\nesword{u}_{\tree{t}}
= \nesword{w} \uparrow_{e_1} \nesword{v}$.
\end{enumerate}

We now prove
Theorem~\ref{theorem:words-and-trees-and-nested-words-satisfy-lebsp}
for nested words. It suffices to show that $\lebsp{\cdot}{k}$ holds
with a computable witness function for $\nestedwords{\Sigma}$. That
any regular subclass of \linebreak $\nestedwords{\Sigma}$ satisfies
$\lebsp{\cdot}{k}$ follows, because (i) a regular subclass of
\linebreak$\nestedwords{\Sigma}$ is $\mso$ definable, (ii)
$\mebsp{\cdot}{k}$ and the computability of witness function are
preserved under $\mso$ definable subclasses
(Lemma~\ref{lemma:lebsp-closure-under-set-theoretic-ops}(\ref{lemma:lebsp-closure-under-set-theoretic-ops:part-4})),
and \linebreak(iii) $\mebsp{\cdot}{k}$ implies $\febsp{\cdot}{k}$, and
any witness function for the former is also a witness function for the
latter
(Lemma~\ref{lemma:gebsp-and-ebsp}(\ref{lemma:gebsp-and-ebsp:4})).

Let $\cl{S}=\nestedwords{\Sigma}$. Consider the class $\cl{S}_p$ where
$p \ge 1$.  Let $\mc{T}$ be the class of all representation-feasible
trees over $\sigmaint' \cup \sigmaleaf'$ where $\sigmaint'
= \sigmaleaf' \cup \{\circ\}$, $\sigmaleaf' = \Sigma' \cup
(\Sigma' \times \Sigma')$ and $\Sigma' = \Sigma \times \{0, \ldots,
p-1\}$, Let $\mathsf{Str}: \mc{T} \rightarrow \cl{S}_p$ be the map
given by $\mathsf{Str}(\tree{t})$ is the nested $\Sigma'$-word
represented by $\tree{t}$ as described above. That $\mathsf{Str}$ is
size effective and onto upto isomorphism, is clear. That $\mc{T}$ is
closed under subtrees, and that $\mathsf{Str}$ satisfies the
conditions \ref{A.1.a}, \ref{B.1} and \ref{B.2} of
Section~\ref{section:abstract-tree-theorem} are easy to see. That
$\mathsf{Str}$ satisfies \ref{A.1.b} for $m \ge 0$ follows directly
from Lemma~\ref{lemma:nested-words-composition-lemma}. We show below
that $\mathsf{Str}$ satisfies \ref{A.2} for all $m \ge 2$. Then
$\mathsf{Str}$ is $\mc{L}$-height-reduction favourable and
$\mc{L}$-degree-reduction favourable, whereby $\mc{T}$ and
$\mathsf{Str}$ are of $\rfse$ type II. Then $\cl{S}$ admits an $\rfse$
tree representation schema, whereby using
Lemma~\ref{lemma:rfse-tree-rep-implies-lebsp}, it follows that
$\lebsp{\cl{S}}{k}$ holds with a computable witness function.

Let $\tree{t}, \tree{s}_1, \tree{s}_2 \in \mc{T}$ be trees of size
$\ge 2$ such that the roots of all these trees have the same label,
and suppose $\tree{z}_i = \tree{t} \odot \tree{s}_i \in \mc{T}$ for
$i \in \{1, 2\}$. Assume
$\mathsf{Str}(\tree{s}_1) \lequiv{m} \mathsf{Str}(\tree{s}_2)$ for
$m \ge 2$. If the label of $\troot{\tree{t}}$ is $\circ$, then
$\mathsf{Str}(\tree{z}_i)
= \mathsf{Str}(\tree{t}) \cdot \mathsf{Str}(\tree{s}_i)$ for
$i \in \{1, 2\}$ whereby from
Lemma~\ref{lemma:nested-words-composition-lemma},
$\mathsf{Str}(\tree{z}_1) \lequiv{m} \mathsf{Str}(\tree{z}_2)$. Else
suppose the label of $\troot{\tree{t}}$ is $(a, b)$ where $a,
b \in \Sigma$. Let $\nesword{u}$ be the concatenation, in ``increasing
order'', of the nested $\Sigma$-words represented by the subtrees of
$\tree{t}$ rooted at the children of the root of $\tree{t}$. Likewise,
let $\nesword{v}_i$ be the concatenation, in ``increasing order'', of
the nested $\Sigma$-words represented by the subtrees of $\tree{s}_i$
rooted at the children of the root of $\tree{s}_i$, for $i \in \{1,
2\}$.  Let $\nesword{w}$ be the 2-letter $\Sigma$-word given by
$\nesword{w} =
(A_{\nesword{w}}, \leq_{\nesword{w}}, \lambda_{\nesword{w}}, \leadsto_{\nesword{w}})$
where $A_{\nesword{w}} = \{e_1, e_2\}, \leq_{\nesword{w}} = \{(e_1,
e_1), (e_1, e_2), (e_2, e_2)\}, \lambda_{\nesword{w}}(e_1) =
a, \lambda_{\nesword{w}}(e_2) = b$ and $\leadsto_{\nesword{w}}
= \{(e_1, e_2)\}$. Then $\mathsf{Str}(\tree{z}_i)
= \nesword{w} \uparrow_{e_1} (\nesword{u} \cdot \nesword{v}_i)$. Now
observe that since
$\mathsf{Str}(\tree{s}_1) \lequiv{m} \mathsf{Str}(\tree{s}_2)$, we
have $\nesword{v}_1 \lequiv{m} \nesword{v}_2$ for each $m \ge 2$. Then
by Lemma~\ref{lemma:nested-words-composition-lemma}, we have
$\mathsf{Str}(\tree{z}_1) \lequiv{m} \mathsf{Str}(\tree{z}_2)$,
completing the proof.

\section{$n$-partite cographs}\label{section:n-partite-cographs}

\newcommand{\labelednpartitecographs}{\ensuremath{\mathsf{Labeled}\text{-}\mathsf{n}\text{-}\mathsf{partite}\text{-}\mathsf{cographs}}}

The class of $n$-partite cographs was introduced by Ganian et. al.
in~\cite{shrub-depth}.  An \emph{$n$-partite
  cograph}\sindex[term]{$n$-partite cograph} $G$ is a graph that
admits an $n$-partite \emph{cotree representation} $\mathsf{t}$. Here
$\mathsf{t}$ is an unordered tree whose leaves are exactly the
vertices of $G$, and are labeled with labels from $\left[n\right] =
\{1, \ldots, n\}$. Each internal node $v$ of $\mathsf{t}$ is labeled
with a binary symmetric function $f_v: \left[n\right] \times
\left[n\right] \rightarrow \{0, 1\}$ such that two vertices $a$ and
$b$ of $G$ with respective labels $i$ and $j$, are adjacent in $G$ iff
the greatest common ancestor of $a$ and $b$ in $\mathsf{t}$, call it
$c$, is such that $f_c(i, j) = 1$. Given below is an example of an
$n$-partite cograph $G$ and a cotree representation $\mathsf{t}$ of
it.

\begin{figure}[h]
\centering
\includegraphics[scale=0.8]{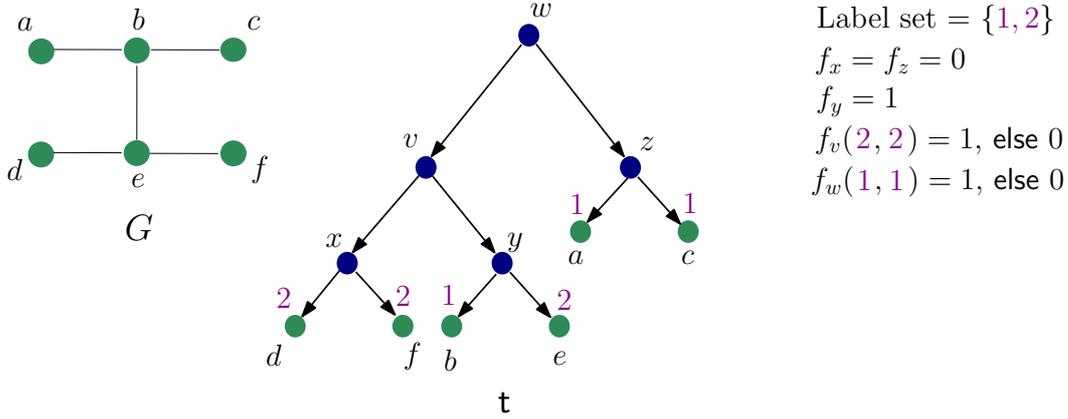}
\caption{\label{figure:n-partite-cograph}$n$-partite cograph $G$ and an $n$-partite cotree representation $\mathsf{t}$ of $G$}
\end{figure}

Given a finite alphabet $\Sigma$, a \emph{$\Sigma$-labeled $n$-partite
  cograph} is a pair $(G, \nu)$ where $G$ is an $n$-partite cograph
and $\nu: V \rightarrow \Sigma$ is a labeling function.  Recall that
given a class $\cl{S}_1$ of structures and a subclass $\cl{S}_2$ of
$\cl{S}_1$, we say $\cl{S}_2$ is \emph{hereditary over $\cl{S}_1$}, if
$\cl{S}_2$ is $PS$ over $\cl{S}_1$ (see the last paragraph of
Chapter~\ref{chapter:background-FMT}). A class
$\cl{S}$ is \emph{hereditary} if it is hereditary over the class of
all (finite) structures.

The central result of this section can now be stated as follows.

\begin{theorem}\label{theorem:n-partite-cographs-satisfy-lebsp}
Given $n, k \in \mathbb{N}$, let $\labelednpartitecographs(\Sigma)$ be
the class of all $\Sigma$-labeled $n$-partite cographs. Let $\cl{S}$
be any subclass of $\labelednpartitecographs(\Sigma)$, that is
hereditary over the latter. Then $\lebsp{\cl{S}}{k}$ holds with a
computable witness function.  Consequently, each of the following
classes of graphs satisfies $\lebsp{\cdot}{k}$ with a computable
witness function for each $k \ge 0$.
\begin{enumerate}[nosep]
\item Any hereditary class of $n$-partite cographs, for each $n \in
  \mathbb{N}$.
\item Any hereditary class of graphs of bounded shrub-depth.
\item Any hereditary class of graphs of bounded $\mc{SC}$-depth.
\item Any hereditary class of graphs of bounded tree-depth.
\item Any hereditary class of cographs.
\end{enumerate}
\end{theorem}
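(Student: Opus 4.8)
The plan is to exhibit the cotree representations of $\Sigma$-labeled $n$-partite cographs as an $\rfse$ tree representation in the sense of Section~\ref{section:abstract-tree-theorem}, and then to invoke Lemma~\ref{lemma:rfse-tree-rep-implies-lebsp}, mirroring the treatment of words and trees. First I would fix the finite alphabets $\sigmaleaf = [n] \times \Sigma$ and $\sigmaint = \{f : [n] \times [n] \to \{0,1\} \mid f \text{ symmetric}\}$, and let $\mc{T}$ be the class of all representation-feasible \emph{ordered} trees over $\sigmaint \cup \sigmaleaf$ (ordered merely so that Theorem~\ref{theorem:abstract-tree-theorem} applies, the order being discarded by the representation map, exactly as in the unordered-trees proof). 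The map $\str{\tree{t}}$ sends an ordered cotree to the $\Sigma$-labeled $n$-partite cograph it represents: the vertices are the leaves of $\tree{t}$ carrying their $\Sigma$-labels, and two leaves $a,b$ with $[n]$-labels $i,j$ are adjacent iff $f_c(i,j)=1$ where $c = a \wedge_{\tree{t}} b$ and $f_c$ is the symmetric function labeling $c$. This map is onto up to isomorphism (every $n$-partite cograph has a cotree) and size-effective (number of vertices $=$ number of leaves), and $\mc{T}$ is closed under the rooted-subtree, child-deletion, merge and replacement operations used in the proof of Theorem~\ref{theorem:abstract-tree-theorem}, since each such operation on a cotree yields the cotree of an induced subgraph or a rearrangement of child-subtrees.

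The crux is to verify that $\str{\cdot}$ satisfies the transfer conditions \ref{A.1.a}, \ref{A.1.b}, \ref{A.2} and the monotonicity conditions \ref{B.1}, \ref{B.2}, placing $\mc{T}$ and $\str{\cdot}$ in $\mrfse$-type II (cotrees have unbounded degree, so type I is unavailable). The key structural observation is that if $a$ is a child of $\troot{\tree{t}}$ whose root carries the function $f$, and $G_a = \str{\tree{t}_{\ge a}}$, $H = \str{\tree{t} - \tree{t}_{\ge a}}$, then $\str{\tree{t}}$ is the \emph{generalized join} $G_a \oplus_f H$: the disjoint union of $G_a$ and $H$ together with an edge between $u \in G_a$ and $v \in H$ exactly when $f$ applied to the $[n]$-labels of $u,v$ equals $1$. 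The monotonicity and embedding-transfer conditions are then immediate: a label-preserving embedding $G_a' \hookrightarrow G_a$ extends by the identity on $H$ to $G_a' \oplus_f H \hookrightarrow G_a \oplus_f H$, because the cross-edges depend only on the $[n]$-labels and on $f$, which the embedding preserves; this gives \ref{A.1.a}, while $G_a \hookrightarrow G_a \oplus_f H$ and $H \hookrightarrow G_a \oplus_f H$ give \ref{B.1} and \ref{B.2}.

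The hard part will be the $(m,\mc{L})$-equivalence transfer conditions \ref{A.1.b} and \ref{A.2}, which amount to a composition lemma for $[n]$-$\Sigma$-labeled graphs: if $G_a \liequiv{m}{\mc{L}} G_a'$ and $H \liequiv{m}{\mc{L}} H'$, then $G_a \oplus_f H \liequiv{m}{\mc{L}} G_a' \oplus_f H'$. I would prove this by composing winning strategies in the $m$-round $\mc{L}$-Ehrenfeucht-Fr\"aiss\'e game (Theorems~\ref{theorem:foef-theorem} and~\ref{theorem:msoef-theorem}): the duplicator answers a point- or set-move landing in the $G_a$/$G_a'$ part using the strategy witnessing $G_a \liequiv{m}{\mc{L}} G_a'$, and in the $H$/$H'$ part using the one witnessing $H \liequiv{m}{\mc{L}} H'$, splitting chosen sets accordingly. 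The only thing to check is that the composed play is a partial isomorphism, and here the point is precisely that both component games preserve the $[n]$-labels (these are part of the structure), so that every cross-edge — being a function of $f$ and the two $[n]$-labels — is matched automatically; this is what makes the generalized join, unlike an arbitrary graph sum, amenable to composition. Carrying this out for $\mc{L} = \mso$ yields $\mso$-reduction favourability, hence $\mebsp$, from which $\febsp$ follows by Lemma~\ref{lemma:gebsp-and-ebsp}(\ref{lemma:gebsp-and-ebsp:4}).

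With the $\rfse$ tree representation established for every finite alphabet, I would finish as follows. Since $(\labelednpartitecographs(\Sigma))_p = \labelednpartitecographs(\Sigma \times \{0,\ldots,p-1\})$ is again a labeled $n$-partite cograph class over a finite alphabet, the construction shows that $\labelednpartitecographs(\Sigma)$ admits an $\rfse$ tree representation schema, and Lemma~\ref{lemma:rfse-tree-rep-implies-lebsp} yields $\lebsp{\labelednpartitecographs(\Sigma)}{k}$ with a computable witness function for every $k$. For a subclass $\cl{S}$ that is hereditary (i.e.\ $PS$) over $\labelednpartitecographs(\Sigma)$, the witnessing substructure $\mf{B}$ produced for $\mf{A} \in \cl{S}$ lies in $\labelednpartitecographs(\Sigma)$ and satisfies $\mf{B} \subseteq \mf{A}$, so $\mf{B} \in \cl{S}$ by $PS$; thus the same witness function serves $\cl{S}$, which is the content of Lemma~\ref{lemma:lebsp-closure-under-set-theoretic-ops}(\ref{lemma:lebsp-closure-under-set-theoretic-ops:part-1}). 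Finally the five corollaries follow because, by~\cite{shrub-depth}, each named class (cographs, and the classes of bounded tree-depth, bounded shrub-depth and bounded $\mc{SC}$-depth) is contained in the class of $n$-partite cographs for a suitable $n$, and the class of $n$-partite cographs is itself hereditary (restricting a cotree to a leaf subset represents the induced subgraph); hence any hereditary class of such graphs is a hereditary subclass of $\labelednpartitecographs(\Sigma_0)$ for a one-letter $\Sigma_0$, and the theorem applies.
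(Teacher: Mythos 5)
Your proposal follows essentially the same route as the paper's proof: the same cotree representation (leaf labels drawn from $[n]$ times the alphabet, internal labels drawn from the symmetric functions $[n]\times[n]\to\{0,1\}$), the same appeal to Theorem~\ref{theorem:abstract-tree-theorem} in its type-II form, a composition lemma proved by composing Ehrenfeucht-Fr\"aiss\'e strategies (this is the paper's Lemma~\ref{lemma:composition-lemma-for-n-partite-cographs}, which you repackage as a ``generalized join'' $G \oplus_f H$), and the same use of Lemma~\ref{lemma:rfse-tree-rep-implies-lebsp} and Lemma~\ref{lemma:lebsp-closure-under-set-theoretic-ops}(\ref{lemma:lebsp-closure-under-set-theoretic-ops:part-1}) to pass to hereditary subclasses and then to the five named corollaries.

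The one point that needs more care than you give it is the status of the $[n]$-labels, and here your write-up is internally inconsistent: you first say the vertices of $\mathsf{Str}(\tree{t})$ carry only their $\Sigma$-labels, but your proof of the composition lemma relies on the component games preserving the $[n]$-labels, which you justify by the parenthetical ``these are part of the structure''. The two readings are not interchangeable. If the $[n]$-labels are \emph{not} in the vocabulary, the composition lemma is false: take $n = 2$, a one-letter $\Sigma$, a symmetric $f$ with $f(1,1) = 1$ and $f(1,2) = 0$, and let $G$ and $G'$ be single vertices with $[n]$-labels $1$ and $2$ respectively, and $H = H'$ a single vertex with $[n]$-label $1$; then $G \cong G'$ and $H \cong H'$ as $\Sigma$-labeled graphs, yet $G \oplus_f H$ is an edge while $G' \oplus_f H'$ is an independent pair, so they are not even $\mequiv{1}$. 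Hence the $[n]$-labels must be carried as unary predicates, i.e.\ what your construction directly establishes is $\lebsp{\cdot}{k}$ for $([n]\times\Sigma)$-labeled $n$-partite cographs equipped with a witnessing partition. But then $\mathsf{Str}$ is not onto the class of $\Sigma$-labeled $n$-partite cographs named in the theorem, so you still owe a short descent step: expand a given $(G,\nu)$ by the $[n]$-labeling coming from some cotree of $G$, apply the result in the richer vocabulary around the given $k$-tuple, and observe that the reduct of the witnessing substructure is still an induced subgraph, is still an $n$-partite cograph (that class being hereditary), and is still $(m,\mc{L})$-equivalent in the smaller vocabulary. With that step made explicit the argument is complete; note that the paper's own proof elides the same point.
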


\begin{proof}

We first show that the class $\cl{S} = \ncographs$, where $\ncographs$
is the class of all $n$-partite cographs, admits an $\mrfse$ tree
representation schema. Then by
Lemma~\ref{lemma:rfse-tree-rep-implies-lebsp}, we have
$\mebsp{\cl{S}_p}{k}$ holds with a computable witness function for
each $p, k \in \mathbb{N}$. Then $\lebsp{\cl{S}_p}{k}$ holds with a
computable witness function for each $p, k \in \mathbb{N}$ by
Lemma~\ref{lemma:gebsp-and-ebsp}(\ref{lemma:gebsp-and-ebsp:4}). Since
there is a 1-1 correspondence between
$\labelednpartitecographs(\Sigma)$ and $\cl{S}_p$ if $|\Sigma| = p$,
it follows that $\lebsp{\cdot}{k}$ holds with a computable witness
function for $\labelednpartitecographs(\Sigma)$. Whereby, the same
holds of any subclass of \linebreak $\labelednpartitecographs(\Sigma)$
that is hereditary over the latter by
Lemma~\ref{lemma:lebsp-closure-under-set-theoretic-ops}(\ref{lemma:lebsp-closure-under-set-theoretic-ops:part-1}). That
\linebreak $\lebsp{\cdot}{k}$ holds with a computable witness function
for the various specific classes mentioned in the statement of this
result, follows from the fact that these classes are hereditary
subclasses of $\ncographs$, and the fact that $\ncographs$ is itself
hereditary~\cite{shrub-depth}.

Consider $\cl{S}_p$ for $p \ge 1$. Let $\sigmaleaf = \left[n\right]
\times \{0, \ldots, p-1\}$ and $\sigmaint = \{ f \mid f:
\left[n\right] \times \left[n\right] \rightarrow \{0, 1\}\}$.  Let
$\mc{T}$ be the class of all representation-feasible $(\sigmaint \cup
\sigmaleaf)$-trees.  Then $\mc{T}$ is closed under subtrees.  Let
$\mathsf{Str}: \mc{T} \rightarrow \cl{S}_p$ be such that for $\tree{t}
= (O, \lambda) \in \mc{T}$, we have $\str{\tree{t}} = (G, \nu)$ where
(i) $G$ is the $n$-partite cograph represented by the unordered tree
obtained from $\tree{t}$ by ``forgetting'' the ordering among the
children of $\tree{t}$ and by dropping the second component of the
labels of the leaves of $\tree{t}$, and (ii) $\nu$ is such that for
any vertex $a$ of $G$ (which is a leaf node of $\tree{t}$), it is the
case that $\nu(a)$ is the second component of $\lambda(a)$.  It is
easily seen that $\mathsf{Str}$ is size effective and onto upto
isomorphism. It is also easy to see that $\mathsf{Str}$ satisfies
conditions \ref{A.1.a}, \ref{B.1} and \ref{B.2} of
Section~\ref{section:abstract-tree-theorem}. We now show below that
$\mathsf{Str}$ satisfies for $\mc{L} = \mso$, the conditions
\ref{A.1.b} and \ref{A.2} for $m \ge 0$; then $\mc{T}$ and
$\mathsf{Str}$ are of $\rfse$-type II. Whereby, $\cl{S}$ admits an
$\rfse$ tree representation schema, completing the proof.


For our proof, we need the following composition lemma. We prove this
lemma towards the end of this section.

\begin{lemma}[Composition lemma for $n$-partite cographs]\label{lemma:composition-lemma-for-n-partite-cographs}\sindex[term]{composition lemma!for $n$-partite cographs}
For $i \in \{1, 2\}$, let $(G_i, \nu_{i, 1})$ and $(H_i, \nu_{i, 2})$
be graphs in $\cl{S}_p$. Suppose $\tree{t}_i$ and $\tree{s}_i$ are
trees of $\mc{T}$ such that $\str{\tree{t}_i} = (G_i, \nu_{i, 1})$,
$\str{\tree{s}_i} = (H_i, \nu_{i, 2})$, and the labels of
$\troot{\tree{t}_i}$ and $\troot{\tree{s}_i}$ are the same. Let
$\tree{z}_i = \tree{t}_i \odot \tree{s}_i$ and $\str{\tree{z}_i} =
(Z_i, \nu_{i})$ for $i \in \{1, 2\}$. For each $m \in \mathbb{N}$, if
$(G_1, \nu_{1, 1}) \mequiv{m} (G_2, \nu_{2, 1})$ and $(H_1, \nu_{1,
  2}) \mequiv{m} (H_2, \nu_{2, 2})$, then $(Z_1, \nu_{1}) \mequiv{m}
(Z_2, \nu_{2}) $.
\end{lemma}

We now show that
Lemma~\ref{lemma:composition-lemma-for-n-partite-cographs} implies
that $\mathsf{Str}$ satisfies \ref{A.1.b} and \ref{A.2} for $m \ge 0$
and $\mc{L} = \mso$.

That $\mathsf{Str}$ satisfies \ref{A.2} for $m \ge 0$ and $\mc{L} =
\mso$ follows easily from
Lemma~\ref{lemma:composition-lemma-for-n-partite-cographs}. Let
$\tree{t}, \tree{s}_1, \tree{s}_2 \in \mc{T}$ be trees of size $\ge 2$
such that the roots of all these trees have the same label. For $i \in
\{1, 2\}$, suppose $\tree{z}_i = \tree{t} \odot \tree{s}_i \in \mc{T}$
and that $\str{\tree{s}_1} \mequiv{m} \str{\tree{s}_2}$. Let
$\tree{t}_i = \tree{t}$, $(G_i, \nu_{i, 1}) = \str{\tree{t}_i}$ and
$(H_i, \nu_{i, 2}) = \str{\tree{s}_i}$ for $i \in \{1, 2\}$. It now
follows directly from
Lemma~\ref{lemma:composition-lemma-for-n-partite-cographs}, that
$\str{\tree{z}_1} \mequiv{m} \str{\tree{z}_2}$.

To see that $\mathsf{Str}$ satisfies \ref{A.1.b} for $m \ge 0$ and
$\mc{L} = \mso$, let $\tree{t}, \tree{s}_1 \in \mc{T}$ and $a$ be a
child of $\troot{\tree{t}}$. Suppose that $\tree{s}_2 = \tree{t}_{\ge
  a}$ and $\tree{z} = \tree{t} \left[\tree{s}_2 \mapsto
  \tree{s}_1\right] \in \mc{T}$, and that $\str{\tree{s}_1} \mequiv{m}
\str{\tree{s}_2}$.  For $i \in \{1, 2\}$, let $\tree{s}_i'$ be the
tree in $\mc{T}$ obtained by making the root of $\tree{s}_i$, the sole
child of a new node whose label is the same as the label of
$\troot{\tree{t}}$ in $\tree{t}$. Using the notation introduced in
Section~\ref{subsection:proof-for-words-and-trees}, if $\tree{s}_3$ is
the singleton tree whose sole node, say $b$, is labeled with the same
label as that of $\troot{\tree{t}}$ in $\tree{t}$, then $\tree{s}_i' =
\tree{s}_3 \cdot^{\uparrow}_b \tree{s}_i$. It is easy to verify that
$\str{\tree{s}_i} = \str{\tree{s}_i'}$, whereby $\str{\tree{s}_1'}
\mequiv{m} \str{\tree{s}_2'}$.

Let $\tree{y}_1$, resp. $\tree{y}_2$, be the subtree of $\tree{t}$
obtained by deleting the subtrees of $\tree{t}$ rooted at the children
of $\troot{\tree{t}}$ that are ``greater than or equal to'' $a$,
resp. ``less than or equal to'' $a$, under the ordering of the
children of $\troot{\tree{t}}$ in $\tree{t}$. Then $\tree{t} =
(\tree{y}_1 \odot \tree{s}_2') \odot \tree{y}_2$ and $\tree{z} =
(\tree{y}_1 \odot \tree{s}_1') \odot \tree{y}_2$.  Since
$\str{\tree{s}_1'} \mequiv{m} \str{\tree{s}_2'}$, we have by
Lemma~\ref{lemma:composition-lemma-for-n-partite-cographs} that
$\str{\tree{y}_1 \odot \tree{s}_1'} \mequiv{m} \str{\tree{y}_1 \odot
  \tree{s}_2'}$, whereby $\str{\tree{t}} \mequiv{m} \str{\tree{z}}$,
showing that $\mathsf{Str}$ satisfies \ref{A.1.b} for $m \ge 0$ and
$\mc{L} = \mso$, completing the proof.
\end{proof}

\newcommand{\V}[1]{\ensuremath{\mathsf{V}\text{-}\mathsf{Str}(#1)}}

\begin{proof}[Proof of  Lemma~\ref{lemma:composition-lemma-for-n-partite-cographs}]

We can assume w.l.o.g. that $\tree{t}_i$ and $\tree{s}_i$ have
disjoint sets of nodes for $i \in \{1, 2\}$.  Let the set of vertices
of $\mathsf{Str}(\tree{t}_i)$ and $\mathsf{Str}(\tree{s}_i)$ be
$\V{\tree{t}_i}$ and $\V{\tree{s}_i}$ respectively. Then the vertex
set $\V{\tree{z}_i}$ of $\mathsf{Str}(\tree{z}_i)$ is $\V{\tree{t}_i}
\sqcup \V{\tree{s}_i}$ for $i \in \{1, 2\}$.

Let $\tbf{S}_{\tree{t}}$, resp. $\tbf{S}_{\tree{s}}$, be the strategy
of the duplicator in the $m$-round $\mef$ game between
$\mathsf{Str}(\tree{t}_1)$ and $\mathsf{Str}(\tree{t}_2)$,
resp. between $\mathsf{Str}(\tree{s}_1)$ and $\mathsf{Str}(\tree{s}_2)$. For
the $m$-round $\mef$ game between $\mathsf{Str}({\tree{z}}_1)$ and
$\mathsf{Str}({\tree{z}}_2)$, the duplicator follows the following
strategy, call it $\tbf{R}$.
\begin{itemize}[nosep]
\item Point move: If the spoiler chooses a vertex from
  $\V{\tree{t}_1}$ (resp. $\V{\tree{t}_2}$), then the duplicator
  chooses a vertex from $\V{\tree{t}_2}$ (resp. $\V{\tree{t}_1}$)
  according to $\tbf{S}_{\tree{t}}$. Else, if the spoiler chooses a
  vertex from $\V{\tree{s}_1}$ (resp. $\V{\tree{s}_2}$), then the
  duplicator chooses a vertex from $\V{\tree{s}_2}$
  (resp. $\V{\tree{s}_1}$) according to $\tbf{S}_{\tree{s}}$.

\item Set move: If the spoiler chooses a set, say $U$, from
  $\V{\tree{z}_1}$ (resp. $\V{\tree{z}_2}$), then let $X = U \cap
  \V{\tree{t}_1}$ (resp. $X = U \cap \V{\tree{t}_2}$) and $Y = U \cap
  \V{\tree{s}_1}$ (resp. $Y = U \cap \V{\tree{s}_2}$). Let $X'$ be the
  subset of $\V{\tree{t}_2}$ (resp. $\V{\tree{t}_1}$) that is picked
  according to the strategy $\tbf{S}_{\tree{t}}$ in response to the
  choice of $X$ in $\V{\tree{t}_1}$
  (resp. $\V{\tree{t}_2}$). Likewise, let $Y'$ be the subset of
  $\V{\tree{s}_2}$ (resp. $\V{\tree{s}_1}$) that is picked according
  to $\tbf{S}_{\tree{s}}$ in response to the choice of $Y$ in
  $\V{\tree{s}_1}$ (resp. $\V{\tree{s}_2}$).  Then the set $U'$ picked
  by the duplicator from $\V{\tree{z}_2}$ according to strategy
  $\tbf{R}$ is given by $U' = X' \sqcup Y'$.
\end{itemize}
We now show that $\tbf{R}$ is a winning strategy for the duplicator.

Let at the end of $m$ rounds, the vertices and sets chosen from
$\mathsf{Str}({\tree{z}}_1)$, resp. $\mathsf{Str}({\tree{z}}_2)$, be
$a_1, \ldots, a_p$ and $A_1, \ldots, A_r$, resp.  $b_1, \ldots, b_p$
and $B_1, \ldots, B_r$, where $p + r = m$. Let $A_l^1 = A_l\cap
\V{\tree{t}_1}, A_l^2 = A_l \cap \V{\tree{s}_1}, B_l^1 = B_l \cap
\V{\tree{t}_2}$ and $B_l^2 = B_l \cap \V{\tree{s}_2}$ for $l \in \{1,
\ldots, r\}$.

It is easy to see that the labels of $a_i$ and $b_i$ are the same for
all $i \in \{1, \ldots, p\}$.  Also by the description of $\tbf{R}$
given above it is easy to check for all $i \in \{1, \ldots, p\}$ that
$a_i \in \V{\tree{t}_1}$ iff $b_i \in \V{\tree{t}_2}$ and $a_i \in
\V{\tree{s}_1}$ iff $b_i \in \V{\tree{s}_2}$. Likewise, for all $l \in
\{1, \ldots, r\}$ and $i \in \{1, \ldots, p\}$, we have $a_i \in
A_l^1$ iff $b_i \in B_l^1$ and $a_i \in A_l^2$ iff $b_i \in B_l^2$,
whereby $a_i \in A_l$ iff $b_i \in B_l$. 


Consider $a_i, a_j$ for $i \neq j$ and $i, j \in \{1, \ldots, p\}$.
We show below that $a_i, a_j$ are adjacent in $\str{\tree{z}_1}$ iff
$b_i, b_j$ are adjacent in $\str{\tree{z}_2}$. This would show that
$a_i \mapsto b_i$ is a partial isomorphism between $(\str{\tree{z}_1},
A_1, \ldots, A_r)$ and $(\str{\tree{z}_2}, B_1, \ldots, B_r)$
completing the proof. We have the following three cases:

\begin{enumerate}[nosep]
\item Each of $a_i$ and $a_j$ is from $\V{\tree{t}_1}$: Then by the
  description of $\tbf{R}$ above, we have that (i) $b_i$ and $b_j$ are
  both from $\V{\tree{t}_2}$ and (ii) $a_i, a_j$ are adjacent in
  $\mathsf{Str}(\tree{t}_1)$ iff $b_i, b_j$ are adjacent in
  $\mathsf{Str}(\tree{t}_2)$. Observe that $a_i, a_j$ are adjacent in
  $\mathsf{Str}(\tree{t}_1)$ iff $a_i, a_j$ are adjacent in
  $\mathsf{Str}(\tree{z}_1)$. Likewise, $b_i, b_j$ are adjacent in
  $\mathsf{Str}(\tree{t}_2)$ iff $b_i, b_j$ are adjacent in
  $\mathsf{Str}(\tree{z}_2)$. Then $a_i, a_j$ are adjacent in
  $\mathsf{Str}(\tree{z}_1)$ iff $b_i, b_j$ are adjacent in
  $\mathsf{Str}(\tree{z}_2)$.

\item Each of $a_i$ and $a_j$ is from $\V{\tree{s}_1}$: Reasoning
  similarly as in the previous case, we can show that $a_i, a_j$ are
  adjacent in $\mathsf{Str}(\tree{z}_1)$ iff $b_i, b_j$ are adjacent
  in $\mathsf{Str}(\tree{z}_2)$.


\item W.l.o.g.  $a_i \in \V{\tree{t}_1}$ and $a_j \in
  \V{\tree{s}_1}$: Then $b_i \in \V{\tree{t}_2}$ and $b_j \in
  \V{\tree{s}_2}$. Observe now that the greatest common ancestor of
  $a_i$ and $a_j$ in $\tree{z}_1$ is $\troot{\tree{z}_1}$, and the
  greatest common ancestor of $b_i$ and $b_j$ in $\tree{z}_2$ is
  $\troot{\tree{z}_2}$. Since (i) the labels of $\troot{\tree{z}_1}$
  and $\troot{\tree{z}_2}$ are the same (by assumption) and (ii) the
  label of $a_i$ (resp. $a_j$) in $\tree{z}_1$ = label of $a_i$
  (resp. $a_j$) in $\str{\tree{z}_1}$ = label of $b_i$ (resp. $b_j$)
  in $\str{\tree{z}_2}$ = label of $b_i$ (resp. $b_j$) in
  $\tree{z}_2$, it follows by the definition of an $n$-partite cograph
  that $a_i, a_j$ are adjacent in $\mathsf{Str}(\tree{z}_1)$ iff $b_i,
  b_j$ are adjacent in $\mathsf{Str}(\tree{z}_2)$.
\end{enumerate}

\end{proof}

\section{Closure properties of $\lebsp{\cdot}{\cdot}$}\label{section:closure-prop-of-lebsp}

\subsection[Closure under set-theoretic operations]{Closure of $\lebsp{\cdot}{\cdot}$ under set-theoretic operations}\label{subsection:closure-under set-theoretic-ops}

\begin{lemma}\label{lemma:lebsp-closure-under-set-theoretic-ops}
Given classes $\cl{S}_1$ and $\cl{S}_2$ of finite structures, and
$k_1, k_2 \in \mathbb{N}$, suppose that $\lebsp{\cl{S}_i}{k_i}$ is
true for each $i \in \{1, 2\}$, and suppose
$\witfn{\cl{S}_i}{k_i}{\mc{L}}$ is a witness function for
$\lebsp{\cl{S}_i}{k_i}$. Then the following hold.
\begin{enumerate}[nosep]
\item If $\cl{S}$ is any subclass of $\cl{S}_i$ that is hereditary
  over $\cl{S}_i$, where $i \in \{1, 2\}$, then $\lebsp{\cl{S}}{k}$ is
  true for $k = k_i$, with witness function
  $\witfn{\cl{S}}{k}{\mc{L}}$ given by $\witfn{\cl{S}}{k}{\mc{L}} =
  \witfn{\cl{S}_i}{k_i}{\mc{L}}$.\label{lemma:lebsp-closure-under-set-theoretic-ops:part-1}
\item If $\cl{S} = \cl{S}_1 \cup \cl{S}_2$, then $\lebsp{\cl{S}}{k}$
  is true for $k = \text{min}(k_1, k_2)$, with witness function
  $\witfn{\cl{S}}{k}{\mc{L}}$ given by $\witfn{\cl{S}}{k}{\mc{L}} =
  \text{max}(\witfn{\cl{S}_1}{k_1}{\mc{L}},
  \witfn{\cl{S}_2}{k_2}{\mc{L}})$.\label{lemma:lebsp-closure-under-set-theoretic-ops:part-2}
\item If $\cl{S} = \cl{S}_1 \cap \cl{S}_2$ and $\cl{S}_1$ is
  hereditary, then $\lebsp{\cl{S}}{k}$ is true for $k = k_2$, with
  witness function $\witfn{\cl{S}}{k}{\mc{L}}$ given by
  $\witfn{\cl{S}}{k}{\mc{L}} = \witfn{\cl{S}_2}{k_2}{\mc{L}}$. If
  $\cl{S}_2$ is also hereditary, then $\lebsp{\cl{S}}{k}$ is true for
  $k = \text{max}(k_1, k_2)$, with witness function
  $\witfn{\cl{S}}{k}{\mc{L}}$ given by $\witfn{\cl{S}}{k}{\mc{L}} =$
  \linebreak$ \text{max}(\witfn{\cl{S}_1}{k_1}{\mc{L}},
  \witfn{\cl{S}_2}{k_2}{\mc{L}})$.\label{lemma:lebsp-closure-under-set-theoretic-ops:part-3}
\item If $\cl{S}$ is a subclass of $\cl{S}_i$ that is definable over
  $\cl{S}_i$ by an $\mc{L}$ sentence of rank $r$, then
  $\lebsp{\cl{S}}{k}$ is true for $k = k_i$, with witness function
  $\witfn{\cl{S}}{k}{\mc{L}}$ given by $\witfn{\cl{S}}{k}{\mc{L}}(m) =
  \witfn{\cl{S}_i}{k_i}{\mc{L}}(r)$ if $m \leq r$, else
  $\witfn{\cl{S}}{k}{\mc{L}}(m) =
  \witfn{\cl{S}_i}{k_i}{\mc{L}}(m)$. It follows that for $\cl{S}$ as
  aforementioned, if $\overline{\cl{S}}$ is the complement of $\cl{S}$
  in $\cl{S}_i$, then $\lebsp{\overline{\cl{S}}}{k}$ is also true for
  $k = k_i$, with witness function
  $\witfn{\overline{\cl{S}}}{k}{\mc{L}}$ given by
  $\witfn{\overline{\cl{S}}}{k}{\mc{L}} = \witfn{\cl{S}}{k}{\mc{L}}$
  where $\witfn{\cl{S}}{k}{\mc{L}}$ is as aforementioned.
   
\label{lemma:lebsp-closure-under-set-theoretic-ops:part-4}
\end{enumerate}
\end{lemma}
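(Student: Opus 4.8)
The plan is to verify each of the four parts directly against Definition~\ref{defn:lebsp}: in every case I fix an arbitrary $\mf{A}$ in the target class $\cl{S}$, an arbitrary $k$-tuple $\bar{a}$ from $\mf{A}$, and an arbitrary rank $m$, then invoke the hypothesised $\lebsp$ of $\cl{S}_1$ or $\cl{S}_2$ to manufacture a candidate substructure $\mf{B}$, and finally check the five clauses of $\lebspcond$. Clauses (ii) (substructure) and (iii) (containing $\bar{a}$) are inherited verbatim from the invoked instance, so the only points needing attention are membership $\mf{B}\in\cl{S}$ (clause (i)) and the size/type bounds (clauses (iv),(v)).

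For Part (\ref{lemma:lebsp-closure-under-set-theoretic-ops:part-1}), since $\cl{S}\subseteq\cl{S}_i$, I would apply $\lebsp{\cl{S}_i}{k_i}$ to $(\mf{A},\bar{a},m)$ to get $\mf{B}\subseteq\mf{A}$ with $\mf{B}\in\cl{S}_i$, $|\mf{B}|\le\witfn{\cl{S}_i}{k_i}{\mc{L}}(m)$ and $(\mf{B},\bar{a})\lequiv{m}(\mf{A},\bar{a})$; as $\cl{S}$ is $PS$ over $\cl{S}_i$, the triple $\mf{A}\in\cl{S}$, $\mf{B}\subseteq\mf{A}$, $\mf{B}\in\cl{S}_i$ forces $\mf{B}\in\cl{S}$, so $\lebspcond(\cl{S},\mf{A},\mf{B},k_i,m,\bar{a},\witfn{\cl{S}_i}{k_i}{\mc{L}})$ holds. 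Part (\ref{lemma:lebsp-closure-under-set-theoretic-ops:part-2}) runs identically after one extra manoeuvre: $\mf{A}$ lies in some $\cl{S}_i$, and since $k=\min(k_1,k_2)\le k_i$ I would pad the $k$-tuple $\bar{a}$ to a $k_i$-tuple $\bar{a}'$ (repeating an entry, or simply taking $\mf{B}=\mf{A}$ in the degenerate case $|\mf{A}|<k_i$) and apply $\lebsp{\cl{S}_i}{k_i}$ to $\bar{a}'$; here $\mf{B}\in\cl{S}_i\subseteq\cl{S}$ needs no hereditariness, and the equality $\ltp{\mf{B}}{\bar{a}'}{m}(\bar{x})=\ltp{\mf{A}}{\bar{a}'}{m}(\bar{x})$ restricts to the one for $\bar{a}$ because a formula in the variables of $\bar{a}$ is a formula in the variables of $\bar{a}'$. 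The size bound then follows from $\witfn{\cl{S}_i}{k_i}{\mc{L}}(m)\le\max(\witfn{\cl{S}_1}{k_1}{\mc{L}},\witfn{\cl{S}_2}{k_2}{\mc{L}})(m)$.

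For Part (\ref{lemma:lebsp-closure-under-set-theoretic-ops:part-3}), the first assertion is Part (\ref{lemma:lebsp-closure-under-set-theoretic-ops:part-1}) in disguise: with $\mf{A}\in\cl{S}_1\cap\cl{S}_2$ and a $k_2$-tuple I apply $\lebsp{\cl{S}_2}{k_2}$ to get $\mf{B}\in\cl{S}_2$, and hereditariness of $\cl{S}_1$ (so $\mf{B}\subseteq\mf{A}\in\cl{S}_1$ yields $\mf{B}\in\cl{S}_1$) gives $\mf{B}\in\cl{S}_1\cap\cl{S}_2=\cl{S}$. When $\cl{S}_2$ is hereditary too, the symmetric argument yields $\lebsp{\cl{S}}{k_1}$; taking whichever of $k_1,k_2$ equals $\max(k_1,k_2)$ delivers $\lebsp{\cl{S}}{\max(k_1,k_2)}$, and by Remark~\ref{remark:lebsp-witness-functions} any pointwise-larger function — in particular $\max(\witfn{\cl{S}_1}{k_1}{\mc{L}},\witfn{\cl{S}_2}{k_2}{\mc{L}})$ — is again a witness function. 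Part (\ref{lemma:lebsp-closure-under-set-theoretic-ops:part-4}) is where the rank bookkeeping matters: $\cl{S}$ is defined over $\cl{S}_i$ by an $\mc{L}$ sentence $\phi$ of rank $r$, so given $m$ I would instead invoke $\lebsp{\cl{S}_i}{k_i}$ at rank $m'=\max(r,m)$, obtaining $\mf{B}\in\cl{S}_i$ with $(\mf{B},\bar{a})\lequiv{m'}(\mf{A},\bar{a})$, hence $\mf{B}\lequiv{m'}\mf{A}$. Since $m'\ge r$ the structures agree on $\phi$, and $\mf{A}\models\phi$ gives $\mf{B}\models\phi$, so $\mf{B}\in\cl{S}$; since $m'\ge m$ the rank-$m'$ type equality restricts to the rank-$m$ equality demanded by clause (v); and the size bound $\witfn{\cl{S}_i}{k_i}{\mc{L}}(m')$ equals, by monotonicity of witness functions (Remark~\ref{remark:lebsp-witness-functions}), exactly the piecewise function in the statement. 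The complement claim follows by applying this to $\neg\phi$, which is again an $\mc{L}$ sentence of rank $r$ and defines $\overline{\cl{S}}$ over $\cl{S}_i$.

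Everything here is a careful unwinding rather than a deep argument, so I do not anticipate a genuine obstacle; the two places where one must be attentive are the tuple-padding in Part (\ref{lemma:lebsp-closure-under-set-theoretic-ops:part-2}) (using that truncating a tuple truncates its $(m,\mc{L})$-type) and the rank shift $m'=\max(r,m)$ in Part (\ref{lemma:lebsp-closure-under-set-theoretic-ops:part-4}) (ensuring the defining sentence $\phi$ is witnessed by the $(m',\mc{L})$-equivalence while still recovering the required rank-$m$ type). Throughout, I rely only on the elementary fact, recorded just before Proposition~\ref{prop:finite-index-of-lequiv(m)-relation}, that $\ltp{\mf{A}}{\bar{a}}{m}(\bar{x})=\ltp{\mf{B}}{\bar{b}}{m}(\bar{x})$ is equivalent to $(\mf{A},\bar{a})\lequiv{m}(\mf{B},\bar{b})$, together with the monotonicity remark for witness functions.
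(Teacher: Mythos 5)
Your proposal is correct and follows essentially the same route as the paper's own proof: part by part, the same tuple-padding trick for the union case, the same use of hereditariness to place $\mf{B}$ back in the target class, and the same rank shift to $\max(r,m)$ (which the paper phrases as a case split on $m\le r$ versus $m>r$) for the definable-subclass case. The only cosmetic difference is that the paper picks the repeated padding element explicitly from $\bar{a}$ and states the piecewise witness function directly rather than via the monotonicity remark; nothing substantive changes.
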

\begin{proof}
Let $m \in \mathbb{N}$ be given.

\ul{Part \ref{lemma:lebsp-closure-under-set-theoretic-ops:part-1}}:
Consider $\mf{A} \in \cl{S}$ and let $\bar{a}$ be a $k$-tuple from
$\mf{A}$ where $k = k_i$.  Since $\lebsp{\cl{S}_i}{k_i}$ is true,
there exists $\mf{B} \in \cl{S}_i$ such that $\lebspcond(\cl{S}_i,
\mf{A}, \mf{B}, k_i, m,$ $ \bar{a}, \witfn{\cl{S}_i}{k_i}{\mc{L}})$ is
true. Then since $\mf{B} \subseteq \mf{A}$ and $\cl{S}$ is hereditary
over $\cl{S}_i$, we have $\mf{B} \in \cl{S}$; whence
$\lebspcond(\cl{S}, \mf{A}, \mf{B}, $ $ k, m,$ $ \bar{a},
\witfn{\cl{S}}{k}{\mc{L}})$ is true, where $\witfn{\cl{S}}{k}{\mc{L}}
= \witfn{\cl{S}_i}{k_i}{\mc{L}}$.

\ul{Part \ref{lemma:lebsp-closure-under-set-theoretic-ops:part-2}}:
Consider $\mf{A} \in \cl{S}$ and let $\bar{a}$ be a $k$-tuple from
$\mf{A}$ where $k = \text{min}(k_1, k_2)$. Since $\cl{S} = \cl{S}_1
\cup \cl{S}_2$, assume w.l.o.g. that $\mf{A} \in \cl{S}_1$. Let $b$ be
an element of $\bar{a}$ and let $\bar{a}_1$ be a $k_1$-tuple whose
first $k$ components form exactly the tuple $\bar{a}$ and in which $b$
is the element at all the indices $k+1, \ldots, k_1$.  Since
$\lebsp{\cl{S}_1}{k_1}$ is true, there exists $\mf{B} \in \cl{S}_1$
such that $\lebspcond(\cl{S}_1, \mf{A}, \mf{B}, k_1, $ $ m, \bar{a}_1,
\witfn{\cl{S}_1}{k_1})$ is true.  Then $\mf{B} \in \cl{S}$. Further
since $\ltp{\mf{B}}{\bar{a}_1}{m}(\bar{x}) =
\ltp{\mf{A}}{\bar{a}_1}{m}(\bar{x})$, it follows that
$\ltp{\mf{B}}{\bar{a}}{m}(\bar{x}) =
\ltp{\mf{A}}{\bar{a}}{m}(\bar{x})$; whence $\lebspcond(\cl{S},
\mf{A},$ $ \mf{B}, k, m, \bar{a}, \witfn{\cl{S}}{k}{\mc{L}})$ is true,
where $\witfn{\cl{S}}{k}{\mc{L}} =
\text{max}(\witfn{\cl{S}_1}{k_1}{\mc{L}},
\witfn{\cl{S}_2}{k_2}{\mc{L}})$.

\ul{Part \ref{lemma:lebsp-closure-under-set-theoretic-ops:part-3}}:
Consider $\mf{A} \in \cl{S}$ and let $\bar{a}$ be a $k$-tuple from
$\mf{A}$ where $k = k_2$. Since $\lebsp{\cl{S}_2}{k_2}$ is true, there
exists $\mf{B} \in \cl{S}_2$ such that $\lebspcond(\cl{S}_2, \mf{A},
\mf{B}, k_2, m, \bar{a}, \witfn{\cl{S}_2}{k_2}{\mc{L}})$ is
true. Since $\mf{B} \subseteq \mf{A}, \mf{A} \in \cl{S}_1$ and
$\cl{S}_1$ is hereditary, we have $\mf{B} \in \cl{S}_1$, and hence
$\mf{B} \in \cl{S}$. Then $\lebspcond(\cl{S}, \mf{A}, \mf{B}, k, m,
\bar{a}, \witfn{\cl{S}}{k}{\mc{L}})$ is true, where
$\witfn{\cl{S}}{k}{\mc{L}} = \witfn{\cl{S}_2}{k_2}{\mc{L}}$. If
$\cl{S}_2$ is also hereditary, then let $\bar{a}$ be a $k$-tuple from
$\mf{A}$ where $k = \text{max}(k_1, k_2)$. W.l.o.g., suppose $k_1 \ge
k_2$, so that $k = k_1$. Since $\lebsp{\cl{S}_1}{k_1}$ is true, there
exists $\mf{B} \in \cl{S}_1$ such that $\lebspcond(\cl{S}_1,$ $
\mf{A}, $ $ \mf{B}, k_1, m, \bar{a}, \witfn{\cl{S}_1}{k_1}{\mc{L}})$
is true. Since $\mf{B} \subseteq \mf{A}, \mf{A} \in \cl{S}_2$ and
$\cl{S}_2$ is hereditary, we have $\mf{B} \in \cl{S}_2$, and hence
$\mf{B} \in \cl{S}$. Then $\lebspcond(\cl{S}, \mf{A}, \mf{B}, k, m,
\bar{a}, \witfn{\cl{S}}{k}{\mc{L}})$ is true, where
$\witfn{\cl{S}}{k}{\mc{L}} = \text{max}(\witfn{\cl{S}_1}{k_1}{\mc{L}},
\witfn{\cl{S}_2}{k_2}{\mc{L}})$.

\ul{Part \ref{lemma:lebsp-closure-under-set-theoretic-ops:part-4}}:
Consider $\mf{A} \in \cl{S}$ and let $\bar{a}$ be a $k$-tuple from
$\mf{A}$ where $k = k_i$. Let $\witfn{\cl{S}}{k}{\mc{L}}$ and
$\witfn{\overline{\cl{S}}}{k}{\mc{L}}$ be functions as defined in the
statement of this part. Since $\witfn{\cl{S}_i}{k_i}{\mc{L}}$ is
monotonic (see Remark~\ref{remark:lebsp-witness-functions}), so are
$\witfn{\cl{S}}{k}{\mc{L}}$ and
$\witfn{\overline{\cl{S}}}{k}{\mc{L}}$.

Suppose $m \leq r$. Since $\lebsp{\cl{S}_i}{k}$ is true, there exists
$\mf{B} \in \cl{S}_i$ such that \linebreak $\lebspcond(\cl{S}_i,$ $
\mf{A}, \mf{B}, k, r, \bar{a}, \witfn{\cl{S}_i}{k}{\mc{L}})$ is true.
Then since $\ltp{\mf{B}}{\bar{a}}{r}(\bar{x}) =
\ltp{\mf{A}}{\bar{a}}{r}(\bar{x})$, we have (i)
$\ltp{\mf{B}}{\bar{a}}{m}(\bar{x}) =
\ltp{\mf{A}}{\bar{a}}{m}(\bar{x})$ since $m \leq r$, and (ii) $\mf{B}
\lequiv{r} \mf{A}$. Since $\mf{A} \in \cl{S}$ and $\cl{S}$ is defined
over $\cl{S}_i$ by an $\mc{L}$ sentence of rank $r$, we have $\mf{B}
\in \cl{S}$.  Then $\lebspcond(\cl{S}, \mf{A}, \mf{B}, $ $ k, m,
\bar{a}, \witfn{\cl{S}}{k}{\mc{L}})$ is true.

Suppose $m > r$. Then there exists $\mf{B} \in \cl{S}_i$ such that
$\lebspcond(\cl{S}_i, \mf{A}, \mf{B}, k, m, \bar{a},
\witfn{\cl{S}_i}{k}{\mc{L}})$ is true.  Then since
$\ltp{\mf{B}}{\bar{a}}{m}(\bar{x}) =
\ltp{\mf{A}}{\bar{a}}{m}(\bar{x})$ and $m > r$, we have $\mf{B}
\lequiv{r} \mf{A}$ whereby reasoning as before, we have $\mf{B} \in
\cl{S}$.  Then $\lebspcond(\cl{S}, \mf{A}, \mf{B}, k, m, \bar{a},
\witfn{\cl{S}}{k}{\mc{L}})$ is true.

Finally, since $\overline{\cl{S}}$ is also definable over $\cl{S}_i$
by an $\mc{L}$ sentence of rank $r$, namely by the negation of the
sentence defining $\cl{S}$ over $\cl{S}_i$, we have that
$\lebsp{\overline{\cl{S}}}{k}$ is true, with the witness function
$\witfn{\overline{\cl{S}}}{k}{\mc{L}}$ as in the statement of this
part.
\end{proof}

\subsection[Closure under operations implemented using translation schemes]{Closure of $\lebsp{\cdot}{\cdot}$ under operations implemented using translation schemes}\label{subsection:closure-under-operations-implemented-using-translation-schemes}

\newcommand{\ndisjointsum}{\ensuremath{n\text{-}\mathsf{disjoint}\text{-}\mathsf{sum}}}
\newcommand{\ncopy}{\ensuremath{n\text{-}\mathsf{copy}}}
\newcommand{\twocopy}{\ensuremath{2\text{-}\mathsf{copy}}}

We look at ways of generating new classes of structures that satisfy
$\lebsp{\cdot}{\cdot}$ from those known to satisfy this
property. Examples of well-known operations that produce new
structures from given ones include
``sum-like''\sindex[term]{operation!sum-like}
operations~\cite{makowsky} like disjoint union and join~\cite{diestel}
and ``product-like'' operations\sindex[term]{operation!product-like}
like the cartesian and tensor products. All of these are examples of
operations that are ``implementable'' using quantifier-free
translation schemes. Let us look at the cartesian product as an
example. For a vocabulary $\tau$, let $\tau_{\text{disj-un}, 2}$ be
the vocabulary obtained by expanding $\tau$ with $2$ fresh unary
predicates $P_1$ and $P_2$. Given structures $\mf{A}_1$ and $\mf{A}_2$
whose cartesian product we intend to take, we first construct the
\emph{2-disjoint sum}~\cite{nicole-lmcs-15} of $\mf{A}_1$ and
$\mf{A}_2$, denoted $\mf{A}_1 \oplus \mf{A}_2$, which is the
$\tau_{\text{disj-un}, 2}$-structure obtained upto isomorphism, by
expanding the disjoint union $\mf{A}_1 \sqcup \mf{A}_2$ with $P_1$ and
$P_2$ interpreted respectively as the universes of the isomorphic
copies of $\mf{A}_1$ and $\mf{A}_2$ that are used in constructing
$\mf{A}_1 \sqcup \mf{A}_2$.  The cartesian product $\mf{A}_1 \otimes
\mf{A}_2$ is then the structure $\Xi(\mf{A}_1 \oplus \mf{A}_2)$ where
$\Xi$ is the $(2, \tau_{\text{disj-un}, 2}, \tau, \fo)$-translation
scheme given by $\Xi = (\xi, (\xi_R)_{R \in \tau})$ where $\xi(x, y) =
(P_1(x) \wedge P_2(y))$ and for $R \in \tau$ of arity $r$, we have
$\xi_R(x_1, y_1, \ldots, x_r, y_r) = R(x_1, \ldots, x_r) \wedge R(y_1,
\ldots, y_r)$. As a second example, consider the \emph{across-connect}
operation which takes two copies of a graph $\mathsf{G}$ and connects
corresponding nodes across. To implement this operation, we first
construct the \emph{2-copy} of
$\mathsf{G}$~\cite{blumensath}. Specifically, we take isomorphic
copies $\mathsf{G}_1$ and $\mathsf{G}_2$ of $\mathsf{G}$, where the
universe of $\mathsf{G}_i$ is $\{(i, a) \mid a \in
\mathsf{U}_{\mathsf{G}}\}$ for $i \in \{1, 2\}$. We then expand
$\mathsf{G}_1 \oplus \mathsf{G}_2$ with the relation $\sim$
interpreted as the set $\{((1, a), (2, a)) \mid a \in
\mathsf{U}_{\mathsf{G}}\}$, to get a $\tau_{\text{copy}, 2}$-structure
$\twocopy(\mathsf{G})$, where $\tau_{\text{copy}, 2} =
\tau_{\text{disj-un}, 2} \cup \{\sim\}$. Then the \emph{across-connect
  of $\mathsf{G}$} is the structure $\Phi(\twocopy(\mathsf{G}))$ where
$\Phi = (\phi, \phi_E)$, $\phi$ is the formula $(x = x)$ and
$\phi_E(x, y) = E(x, y) \vee \big( P_1(x) \wedge P_2(y) \wedge (x \sim
y)\big)$. Observe that both of the translation schemes above are
quantifier-free. The above operations are two instances of several
useful and well-studied operations on structures that are
implementable using quantifier-free translation schemes.  We consider
such operations in this section. Specifically, the quantifier-free
translation schemes implementing the operations are those that ``act''
on the \emph{$n$-disjoint sums} or \emph{$n$-copies} of structures of
a given class. We define these notions formally below.

\begin{defn}[$n$-disjoint sum]
Given a vocabulary $\tau$, let $\tau_{k_i}$ be the vocabulary obtained
by expanding $\tau$ with $k_i$ fresh constant symbols, for $k_i \ge 0$
and $i \in \{1, \ldots, n\}$. Let $\tau_{\text{disj-un}, k_1, \ldots,
  k_n}$ be the vocabulary obtained by expanding $\tau$ with $k_1 +
\cdots + k_n$ fresh constant symbols, and $n$ fresh unary relation
symbols $P_1, \ldots, P_n$.  For $i \in \{1, \ldots, n\}$, let
$(\mf{A}_i, \bar{a}_i)$ be a $\tau_{k_i}$-structure, where $\mf{A}_i$
is a $\tau$-structure. Then the \emph{$n$-disjoint sum of $(\mf{A}_1,
  \bar{a}_1), \ldots, (\mf{A}_n, \bar{a}_n)$}, denoted $\bigoplus_{i =
  1}^{i = n} (\mf{A}_i, \bar{a}_i)$, is the $\tau_{\text{disj-un},
  k_1, \ldots, k_n}$-structure $\mf{A}$ defined as follows.

\begin{enumerate}[nosep]
\item If $\mf{A}_1, \ldots, \mf{A}_n$ have disjoint universes, then
  $\mf{A}$ is such that (i) the $\tau$-reduct of $\mf{A}$ is the
  disjoint union of $\mf{A}_1, \ldots, \mf{A}_n$, (ii) $P_i^{\mf{A}} =
  \mathsf{U}_{\mf{A}_i}$ for each $i \in \{1, \ldots, n\}$ (thus the
  $P_i^{\mf{A}}$s form a partition of the universe of $\mf{A}$), and
  (iii) for $i \in \{1, \ldots, n\}$, if $l_i = k_1 + \cdots +
  k_{i-1}$ and $l_1 = 0$, then $(c^{\mf{A}}_{l_i + 1}, \ldots,
  c^{\mf{A}}_{l_i + k_i}) = \bar{a}_i$, where $c_1, \ldots, c_{k_1 +
    \cdots + k_n}$ are the fresh constant symbols of
  $\tau_{\text{disj-un}, k_1, \ldots, k_n}$.
\item In case, $\mf{A}_1, \ldots, \mf{A}_n$ do not have disjoint
  universes, then let $(\mf{A}_i', \bar{a}_i')$ be an isomorphic copy
  of $(\mf{A}_i, \bar{a}_i)$ for $i \in \{1, \ldots, n\}$, such that
  $\mf{A}'_1, \ldots, \mf{A}'_n$ have disjoint universes. Then
  $\mf{A}$ is defined upto isomorphism as the $\tau$-structure
  $\bigoplus_{i = 1}^{i = n} (\mf{A}_i', \bar{a}_i')$.
\end{enumerate}
\end{defn}

\begin{defn}[$n$-copy]
Given a vocabulary $\tau$, let $\tau_{\text{copy}, k_1, \ldots, k_n} =
\tau_{\mathsf{disj-un}, k_1, \ldots, k_n} \cup \{\sim\}$, where $\sim$
is a binary relation symbol not in $\tau$, and $k_1, \ldots, k_n \ge
0$.  Given a $\tau$-structure $\mf{A}$ and a $k_i$-tuple $\bar{a}_i$
from $\mf{A}$ for each $i \in \{1, \ldots, n\}$, let $(\mf{A}_i,
\bar{b}_i)$ be an isomorphic copy of $(\mf{A}, \bar{a}_i)$, with
universe $\{(i, a) \mid a \in \mathsf{U}_{\mf{A}}\}$.  Then the
\emph{$n$-copy of $\mf{A}$ with $\bar{a}_1, \ldots, \bar{a}_n$},
denoted $\ncopy(\mf{A}, \bar{a}_1, \ldots, \bar{a}_n)$, is the
$\tau_{\text{copy}, k_1, \ldots, k_n}$-structure defined as below:

\begin{enumerate}[nosep]
\item If $n = 1$, then $\ncopy(\mf{A}, \bar{a}_1) = (\mf{A},
  \bar{a}_1)$.
\item If $n > 1$, then $\ncopy(\mf{A}, \bar{a}_1, \ldots, \bar{a}_n)$
  is such that (i) the $\tau_{\mathsf{disj-un}, k_1, \ldots,
    k_n}$-reduct of \linebreak $\ncopy(\mf{A}, \bar{a}_1, \ldots,
  \bar{a}_n)$ is the structure $\bigoplus_{i = 1}^{i = n} (\mf{A}_i,
  \bar{a}_i)$, and (ii) $\sim$ is interpreted in\linebreak
  $\ncopy(\mf{A}, \bar{a}_1, \ldots, \bar{a}_n)$ as the set $ \{((i,
  a), (j, a)) \mid 1 \leq i, j \leq n, a \in \mathsf{U}_{\mf{A}}\}$.
\end{enumerate}
\end{defn}

The above definitions, which are given for structures expanded with
tuples of elements, instead of simply for structures that are not
expanded with tuples of elements, are given so because in the proofs
of our results below, we will need these general definitions. However,
for the statements of our results, we deal with $n$-disjoint sums and
$n$-copies of only structures that are not expanded with tuples of
elements, i.e. for the case when $k_1 = \cdots = k_n = 0$ in the
definitions above. In such a case, we denote $\tau_{\mathsf{disj-un},
  k_1, \ldots, k_n}$ and $\tau_{\mathsf{copy}, k_1, \ldots, k_n}$,
simply as $\tau_{\mathsf{disj-un}, n}$ and $\tau_{\mathsf{copy}, n}$
respectively. Likewise, we denote $\bigoplus_{i = 1}^{i = n}
(\mf{A}_i, \bar{a}_i)$ simply as $\bigoplus_{i = 1}^{i = n} \mf{A}_i$
(since each $\bar{a}_i$ is empty), and $\ncopy(\mf{A}, \bar{a}_1,
\ldots, \bar{a}_n)$ simply as $\ncopy(\mf{A})$.



Given classes $\cl{S}_1, \ldots, \cl{S}_n$ of $\tau$-structures, let
$\ndisjointsum(\cl{S}_1, \ldots, \cl{S}_n) = \{\bigoplus_{i = 1}^{i =
  n} \mf{A}_i \mid \mf{A}_i \in \cl{S}_i, 1 \leq i \leq n\}$.  Given a
quantifier-free $(t, \tau_{\text{disj-un}, n}, \tau, \fo)$-translation
scheme $\Xi_1$, let \linebreak $\Xi_1(\ndisjointsum(\cl{S}_1, \ldots,
\cl{S}_n)) = \{\Xi_1(\bigoplus_{i = 1}^{i = n} \mf{A}_i) \mid \mf{A}_i
\in \cl{S}_i, 1 \leq i \leq n\}$. Then $\Xi_1$ gives rise to an
$n$-ary operation $\mathsf{O}_1: \cl{S}_1 \times \cdots \times
\cl{S}_n \rightarrow \Xi_1(\ndisjointsum(\cl{S}_1, \ldots, \cl{S}_n))$
defined as $\mathsf{O}_1(\mf{A}_1, \ldots, \mf{A}_n) =
\Xi_1(\bigoplus_{i = 1}^{i = n} \mf{A}_i)$.  Likewise, given a class
$\cl{S}$ of structures, if $\ncopy(\cl{S}) = \{\ncopy(\mf{A}) \mid
\mf{A} \in \cl{S}\}$, then a quantifier-free $(t, \tau_{\text{copy},
  n}, \tau, \fo)$-translation scheme $\Xi_2$ gives rise to a unary
operation $\mathsf{O}_2: \cl{S} \rightarrow \Xi_2(\ncopy(\cl{S}))$
where $\Xi_2(\ncopy(\cl{S})) = \{ \Xi_2(\ncopy(\mf{A})) \mid \mf{A}
\in \cl{S}\}$ such that $\mathsf{O}_2(\mf{A}) =
\Xi_2(\ncopy(\mf{A}))$.  For the above cases, we say that
$\mathsf{O}_1$, resp. $\mathsf{O}_2$, is \emph{implementable using
  $\Xi_1$, resp. $\Xi_2$}. We say an operation is implementable using
a quantifier-free translation scheme if it is one of the two kinds of
operations $\mathsf{O}_1$ and $\mathsf{O}_2$ just
described.\sindex[term]{operation!implementable using
  quantifier-\\free translation scheme} The following two results,
which are our central results of this section, together show that
operations that are implementable using quantifier-free translation
schemes preserve the $\lebsp{\cdot}{\cdot}$ property of the classes
they operate on.

\begin{lemma}\label{lemma:lebsp-pres-under-n-disj-sum-and-n-copy}
Let $\cl{S}, \cl{S}_1, \ldots, \cl{S}_n$ be classes of structures for
$n \ge 1$. The following are true.
\begin{enumerate}[nosep]
\item If $\lebsp{\cl{S}_i}{k_i}$ is true for $k_i \in \mathbb{N}$ for
  each $i \in \{1, \ldots, n\}$, then so is \linebreak
  $\lebsp{\ndisjointsum(\cl{S}_1, \ldots, \cl{S}_n)}{l}$, where $l =
  \text{min}\{k_i \mid i \in \{1, \ldots, n\}\}$. Further, if there is
  a computable witness function for $\lebsp{\cl{S}_i}{k_i}$ for each
  $i \in \{1, \ldots, n\}$, then there is a computable witness
  function for $\lebsp{\ndisjointsum(\cl{S}_1, \ldots, \cl{S}_n)}{l}$
  as well.\label{lemma:lebsp-pres-under-n-disj-sum-and-n-copy:part-1}
\item If $\lebsp{\cl{S}}{k}$ is true for $k \in \mathbb{N}$, then so
  is $\lebsp{\ncopy(\cl{S})}{k}$. Further, if there is a computable
  witness function for $\lebsp{\cl{S}}{k}$, then there is a computable
  witness function for $\lebsp{\ncopy(\cl{S})}{k}$ as
  well.\label{lemma:lebsp-pres-under-n-disj-sum-and-n-copy:part-2}
\end{enumerate}
\end{lemma}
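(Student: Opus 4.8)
The plan is to prove both parts uniformly: apply the hypothesised $\mc{L}$-EBSP to the constituent structures of the sum (respectively the copy), recombine the resulting bounded substructures by the \emph{same} operation $\bigoplus$ (respectively $\ncopy$), and then verify the five conditions of the $\reflebsp$ definition for the recombined structure. Four of these (membership in the class, being a substructure, containing the given tuple, the size bound) will be immediate bookkeeping; the only substantial point is preservation of the $(m,\mc{L})$-type, which I would establish by an Ehrenfeucht-Fra\"iss\'e (EF) game argument that composes the component games, invoking Theorem~\ref{theorem:foef-theorem} and Theorem~\ref{theorem:msoef-theorem}. For Part~1 (the $n$-disjoint sum), given $\mf{A}=\bigoplus_{i=1}^{n}\mf{A}_i$ and an $l$-tuple $\bar{a}$ with $l=\min_i k_i$, I would split $\bar{a}$ by parts, letting $W_i$ be the elements of $\bar{a}$ lying in $\univ{\mf{A}_i}$, so $|W_i|\le l\le k_i$; padding $W_i$ by repetition to a $k_i$-tuple $\bar{a}_i$ and invoking $\lebsp{\cl{S}_i}{k_i}$ at rank $m$ gives $\mf{B}_i\in\cl{S}_i$ with $\mf{B}_i\subseteq\mf{A}_i$, $W_i\subseteq\univ{\mf{B}_i}$, $|\mf{B}_i|\le\witfn{\cl{S}_i}{k_i}{\mc{L}}(m)$ and $(\mf{B}_i,\bar{a}_i)\lequiv{m}(\mf{A}_i,\bar{a}_i)$. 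Setting $\mf{B}=\bigoplus_{i=1}^{n}\mf{B}_i$, disjointness of the $\univ{\mf{A}_i}$ makes $\mf{B}$ a genuine $\tau_{\mathsf{disj-un},n}$-substructure of $\mf{A}$ (the $\tau$-reduct restricts and $P_i^{\mf{B}}=\univ{\mf{B}}\cap P_i^{\mf{A}}$), lying in $\ndisjointsum(\cl{S}_1,\ldots,\cl{S}_n)$, containing $\bar{a}$, of size at most $\sum_i\witfn{\cl{S}_i}{k_i}{\mc{L}}(m)$.

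The type equality $(\mf{B},\bar{a})\lequiv{m}(\mf{A},\bar{a})$ for the disjoint sum is the easy case: the duplicator plays the $n$ component games in parallel and \emph{independently}, answering a point move in part $i$ by the part-$i$ strategy and a set move $Z=\bigsqcup_i Z_i$ componentwise. Since in a disjoint sum no relation of $\tau$ ever holds across parts, while membership in each $P_i$ and equality are decided within a single part, the union of the component partial isomorphisms is again a partial isomorphism; hence the combined strategy wins with no increase in quantifier rank, and Part~1 follows with the witness function $m\mapsto\sum_i\witfn{\cl{S}_i}{k_i}{\mc{L}}(m)$, which is computable whenever the component witness functions are. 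For Part~2 ($n$-copy), given $\ncopy(\mf{A})$ and a $k$-tuple $\bar{c}$, I would set $W=\{a\in\univ{\mf{A}}:(i,a)\text{ occurs in }\bar{c}\text{ for some }i\}$, so $|W|\le k$, and apply $\lebsp{\cl{S}}{k}$ to $\mf{A}$ with a $k$-tuple enumerating $W$, but at the inflated rank $nm$, obtaining $\mf{B}\subseteq\mf{A}$ in $\cl{S}$ with $W\subseteq\univ{\mf{B}}$, $|\mf{B}|\le\witfn{\cl{S}}{k}{\mc{L}}(nm)$ and $(\mf{B},W)\lequiv{nm}(\mf{A},W)$. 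Then $\mf{C}=\ncopy(\mf{B})$ is a $\tau_{\mathsf{copy},n}$-substructure of $\ncopy(\mf{A})$ (the copies, the $P_i$'s and the matching $\sim$ all restrict correctly because $\mf{B}\subseteq\mf{A}$), lies in $\ncopy(\cl{S})$, contains $\bar{c}$ (each entry $(i,a)$ has $a\in W\subseteq\univ{\mf{B}}$, so $\bar{c}$ is a common tuple of $\mf{C}$ and $\ncopy(\mf{A})$), and has size at most $n\cdot\witfn{\cl{S}}{k}{\mc{L}}(nm)$.

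The hard part will be the type-preservation $(\mf{C},\bar{c})\lequiv{m}(\ncopy(\mf{A}),\bar{c})$, because the matching $\sim$ rules out the naive parallel strategy used for disjoint sums: if the spoiler pebbles the same base element in two distinct copies, $\sim$ holds between them, forcing the duplicator's two responses to share their base element, so $n$ independent copy-games are unusable. My resolution is to run a \emph{single} base game on $(\mf{B},W)$ versus $(\mf{A},W)$: a derived point move $(i,x)$ is translated to the base move $x$, and its response $x'$ is returned in copy $i$; since a winning strategy necessarily answers a repeated pick by the previously chosen element, all copies of a given base element automatically receive a common response, respecting $\sim$. The remaining difficulty is set moves, where the spoiler's set $Z=\bigsqcup_i Z_i$ has generally distinct $Z_i$ and cannot be imitated by one base set move; I would handle each derived set move by playing the $n$ base set moves $Z_1,\ldots,Z_n$ in succession and answering with $\bigsqcup_i Z_i'$. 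As a derived play of $m$ rounds then triggers at most $nm$ base rounds, the strategy guaranteed by $(\mf{B},W)\lequiv{nm}(\mf{A},W)$ suffices, and a routine check that equality, the $P_i$'s, the $\tau$-relations (which, as in a disjoint union, hold only within a copy) and $\sim$ (which holds exactly when the base elements coincide) are preserved completes the EF game. This yields Part~2 with the computable witness function $m\mapsto n\cdot\witfn{\cl{S}}{k}{\mc{L}}(nm)$; for $\mc{L}=\fo$ there are no set moves, so the rank inflation is unnecessary and one may take $m\mapsto n\cdot\witfn{\cl{S}}{k}{\mc{L}}(m)$ instead.
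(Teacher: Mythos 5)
Your proof is correct, and its skeleton is the same as the paper's: split the tuple across the parts, invoke the component $\reflebsp$ hypotheses, recombine with the same operation, and reduce everything to a composition fact for $\lequiv{m}$ under $\bigoplus$ and $\ncopy$. The difference is in how that composition fact is discharged. The paper delegates it to two auxiliary statements, Lemma~\ref{lemma:facts-about-disj-unions} and Lemma~\ref{lemma:facts-about-n-copies}, both stated without proof, and the latter asserts that componentwise $\lequiv{m}$ is preserved by $\ncopy$ \emph{at the same rank} $m$; Part~2 of the lemma is then dispatched in one line as ``analogous''. You instead supply the Ehrenfeucht--Fra\"iss\'e arguments yourself, and in doing so you correctly isolate the one genuinely delicate point: the matching relation $\sim$ couples the $n$ copies, so the independent parallel strategy that works for the disjoint sum cannot be reused, and a single MSO set move on the copied structure carries the information of $n$ base set moves. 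Your resolution --- run one base game, translate each derived set move into $n$ successive base set moves, and pay for it by invoking the hypothesis at rank $nm$ --- is sound and self-contained; it is arguably more cautious than the paper's unproved same-rank claim. The price is a larger witness function, $m \mapsto n\cdot\witfn{\cl{S}}{k}{\mc{L}}(nm)$ instead of one evaluated at $m$, but since the lemma only requires existence of a computable witness function (and composition with $m \mapsto nm$ preserves computability), the conclusion is unaffected. The remaining bookkeeping in both parts (substructure, membership in the class, containment of the tuple, size bound, the FO case without rank inflation) matches the paper's.
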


\begin{theorem}\label{theorem:ebsp-gebsp-and-transductions}
Let $\cl{S}$ be class of $\tau$-structures, and let $\Xi = (\xi,
(\xi_R)_{R \in \sigma})$ be a quantifier-free $(t, \tau, \sigma,
\fo)$-translation scheme.  Then the following hold for each $k \in
\mathbb{N}$.
\begin{enumerate}[nosep]
\item If ~$\febsp{\cl{S}}{k \cdot t}$ is true, then so is
  $\febsp{\Xi(\cl{S})}{k}$.\label{theorem:ebsp-pres-under-transductions}
\item If ~$\Xi$ is scalar and $\mebsp{\cl{S}}{k}$ is true, then so
  is $\mebsp{\Xi(\cl{S})}{k}$.
  \label{theorem:gebspmso-pres-under-transductions}
\end{enumerate}
In each of the implications above, a computable witness function for
the antecedent implies a computable witness function for the
consequent.
\end{theorem}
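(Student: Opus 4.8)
The plan is to reduce the $\febsp{\Xi(\cl{S})}{k}$ instance over the $\sigma$-structure $\mf{C} = \Xi(\mf{A})$ to a $\febsp{\cl{S}}{k \cdot t}$ instance over the underlying $\tau$-structure $\mf{A}$, exploiting the fact that, by the definition of $\Xi^*$, each element of $\Xi(\mf{A})$ is a $t$-tuple of elements of $\mf{A}$. Concretely, for part~(\ref{theorem:ebsp-pres-under-transductions}), fix $m \in \mathbb{N}$, a structure $\mf{C} = \Xi(\mf{A}) \in \Xi(\cl{S})$ with $\mf{A} \in \cl{S}$, and a $k$-tuple $\bar{c} = (c_1, \ldots, c_k)$ from $\mf{C}$. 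Writing each $c_i$ as a $t$-tuple $\bar{a}_i \in \univ{\mf{A}}^t$ (which satisfies $(\mf{A}, \bar{a}_i) \models \xi$), I would form the $(k \cdot t)$-tuple $\bar{a}$ from $\mf{A}$ by concatenating $\bar{a}_1, \ldots, \bar{a}_k$. Since $\Xi$ is quantifier-free, $\rank{\Xi} = 0$, so putting $r = t \cdot m$ matches the hypothesis $r = t \cdot m + \rank{\Xi}$ of Corollary~\ref{corollary:transferring-m-equivalence-across-transductions}. Applying $\febsp{\cl{S}}{k \cdot t}$ with parameter $r$ to $\mf{A}$ and $\bar{a}$ then yields a structure $\mf{B}$ with $\mf{B} \in \cl{S}$, $\mf{B} \subseteq \mf{A}$, $\bar{a}$ contained in $\mf{B}$, $|\mf{B}| \leq \witfn{\cl{S}}{k \cdot t}{\fo}(r)$, and $(\mf{B}, \bar{a}) \fequiv{r} (\mf{A}, \bar{a})$.

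I would then claim that $\mf{D} = \Xi(\mf{B})$ witnesses $\febspcond$ for $\Xi(\cl{S}), \mf{C}, \mf{D}, k, m, \bar{c}$ and a suitable witness function. Membership $\mf{D} \in \Xi(\cl{S})$ is immediate from $\mf{B} \in \cl{S}$. The crucial condition is $\mf{D} \subseteq \mf{C}$: because $\xi$ and each $\xi_R$ are quantifier-free and hence preserved both ways under the inclusion $\mf{B} \subseteq \mf{A}$, one obtains $\univ{\Xi(\mf{B})} = \univ{\Xi(\mf{A})} \cap \univ{\mf{B}}^t$ and that the $\sigma$-relations of $\Xi(\mf{B})$ are exactly those of $\Xi(\mf{A})$ restricted to $\univ{\Xi(\mf{B})}$, i.e. $\Xi(\mf{B})$ is an induced substructure of $\Xi(\mf{A})$ (this is precisely the content of Lemma~\ref{lemma:qt-free-transductions-preserve-substructure-prop}). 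The same preservation shows that each $c_i = \bar{a}_i$, whose components lie in $\mf{B}$, remains an element of $\Xi(\mf{B})$, so $\bar{c}$ is contained in $\mf{D}$. For the size bound, $|\mf{D}| = |\Xi(\mf{B})| \leq |\univ{\mf{B}}|^t \leq \bigl(\witfn{\cl{S}}{k \cdot t}{\fo}(t \cdot m)\bigr)^t$, which suggests defining $\witfn{\Xi(\cl{S})}{k}{\fo}(m) = \bigl(\witfn{\cl{S}}{k \cdot t}{\fo}(t \cdot m)\bigr)^t$. Finally, the type-agreement $(\mf{D}, \bar{c}) \fequiv{m} (\mf{C}, \bar{c})$ is exactly the conclusion of Corollary~\ref{corollary:transferring-m-equivalence-across-transductions}(1) applied to $(\mf{B}, \bar{a}) \fequiv{r} (\mf{A}, \bar{a})$, once the concatenated tuple $\bar{a}$ is read back as the tuple $\bar{c}$ of elements of $\Xi(\mf{B})$, resp.\ $\Xi(\mf{A})$.

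Part~(\ref{theorem:gebspmso-pres-under-transductions}) would go through by the identical argument specialized to $t = 1$ (scalarity): applying $\mebsp{\cl{S}}{k}$ with parameter $r = m + \rank{\Xi} = m$ and invoking Corollary~\ref{corollary:transferring-m-equivalence-across-transductions}(2), whose transfer of $m$-equivalence for $\mso$ requires precisely that $\Xi$ be scalar; here the witness function is simply $\witfn{\Xi(\cl{S})}{k}{\mso}(m) = \witfn{\cl{S}}{k}{\mso}(m)$. In both parts, computability of the antecedent witness function clearly yields computability of the consequent one, since the defining expressions are obtained by composition with computable arithmetic operations.

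The step I expect to require the most care, and which I would write out most explicitly, is the bookkeeping that simultaneously tracks the three roles played by the $t$-tuples: as single elements of $\Xi(\mf{A})$, as the concatenated $(k \cdot t)$-tuple over $\mf{A}$ to which $\febsp{\cl{S}}{k \cdot t}$ is applied, and as the tuple across which Corollary~\ref{corollary:transferring-m-equivalence-across-transductions} transfers $m$-equivalence. Hand in hand with this is the point that quantifier-freeness of $\Xi$ is used in two distinct ways, and I would make sure both are flagged: it is what forces $\rank{\Xi} = 0$, so that the single parameter $r = t \cdot m$ feeds the antecedent correctly, and it is also what guarantees the substructure relation $\Xi(\mf{B}) \subseteq \Xi(\mf{A})$ needed for condition~(ii). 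Everything else is routine verification against Definition~\ref{defn:lebsp}.
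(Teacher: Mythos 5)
Your proposal is correct and follows essentially the same route as the paper's proof: concatenating the $t$-tuples into a $(k\cdot t)$-tuple, applying the antecedent at rank $r = t\cdot m$, invoking Lemma~\ref{lemma:qt-free-transductions-preserve-substructure-prop} for the substructure condition and Corollary~\ref{corollary:transferring-m-equivalence-across-transductions} for the transfer of $m$-equivalence, with the same witness function $\bigl(\witfn{\cl{S}}{k \cdot t}{\fo}(t \cdot m)\bigr)^t$. No gaps.
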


\begin{remark}
The quantifier-freeness of $\Xi$ in
Theorem~\ref{theorem:ebsp-gebsp-and-transductions} is necessary in
general. In fact, the presence of even a single quantifier in any one
of the formulas $\xi_R$ above can cause
Theorem~\ref{theorem:ebsp-gebsp-and-transductions} to fail. We show
this towards the end of this section.
\end{remark}

For an operation $\mathsf{O}$ that is implementable using a
quantifier-free translation scheme, define the \emph{dimension} of
$\mathsf{O}$ to be the minimum of the dimensions of the
quantifier-free translation schemes that implement $\mathsf{O}$.  We
say $\mathsf{O}$ is ``sum-like'' if its dimension is one, else we say
$\mathsf{O}$ is ``product-like''.  Examples of
sum-like\sindex[term]{operation!sum-like} operations include unary
graph operations like complement, transpose, across-connect and the
line-graph operation~\cite{diestel}, and binary operations like
disjoint union and join. Examples of
product-like\sindex[term]{operation!product-like} operations include
various kinds of products such as cartesian, tensor, lexicographic,
and strong products. We now have the following corollary which shows
that $\lebsp{\cdot}{\cdot}$ and $\febsp{\cdot}{\cdot}$ are indeed
preserved under sum-like and product-like operations respectively.
  
\begin{corollary}\label{corollaty:lebsp-closure-under-1-step-operations}
Let $\cl{S}_1, \ldots, \cl{S}_n, \cl{S}$ be classes of structures and
let $\mathsf{O}: \cl{S}_1 \times \cdots \times \cl{S}_n \rightarrow
\cl{S}$ be an $n$-ary operation that is implementable using a
quantifier-free translation scheme.  Let $\mathsf{O}(\cl{S}_1, \ldots,
\cl{S}_n)$ denote the class of structures that are in the range of
$\mathsf{O}$, and let $t$ be the dimension of $\mathsf{O}$. Then the
following are true.
\begin{enumerate}[nosep]
\item If $\lebsp{\cl{S}_i}{k_i}$ is true for $k_i \in \mathbb{N}$ for
  each $i \in \{1, \ldots, n\}$, then so is \linebreak
  $\lebsp{\mathsf{O}(\cl{S}_1, \ldots, \cl{S}_n)}{l}$, for $l =
  \text{min}\{k_i \mid i \in \{1, \ldots, n\}\}$, whenever
  $\mathsf{O}$ is sum-like.
\item If $\febsp{\cl{S}_i}{k_i \cdot t}$ is true for $k_i \in
  \mathbb{N}$ for each $i \in \{1, \ldots, n\}$, then so is \linebreak
  $\febsp{\mathsf{O}(\cl{S}_1, \ldots, \cl{S}_n)}{l}$, for $l =
  \text{min}\{k_i \mid i \in \{1, \ldots, n\}\}$, whenever
  $\mathsf{O}$ is product-like.
\end{enumerate}
In each of the implications above, if there are computable witness
functions for each of the conjuncts in the antecedent, then there is
a computable witness function for the consequent as well.
\end{corollary}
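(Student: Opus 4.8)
The plan is to derive Corollary~\ref{corollaty:lebsp-closure-under-1-step-operations} directly from Lemma~\ref{lemma:lebsp-pres-under-n-disj-sum-and-n-copy} and Theorem~\ref{theorem:ebsp-gebsp-and-transductions} by unwinding the definition of ``implementable using a quantifier-free translation scheme''. By definition, an operation $\mathsf{O}$ of this kind is either a $\mathsf{O}_1$-type operation, implemented by a quantifier-free $(t, \tau_{\text{disj-un}, n}, \tau, \fo)$-translation scheme $\Xi_1$ acting on $n$-disjoint sums, or a $\mathsf{O}_2$-type operation, implemented by a quantifier-free $(t, \tau_{\text{copy}, n}, \tau, \fo)$-translation scheme $\Xi_2$ acting on $n$-copies. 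In both cases the range $\mathsf{O}(\cl{S}_1, \ldots, \cl{S}_n)$ is obtained by first forming an intermediate class ($\ndisjointsum(\cl{S}_1, \ldots, \cl{S}_n)$ or $\ncopy(\cl{S})$) and then applying a quantifier-free translation scheme $\Xi$. The strategy is to show $\reflebsp$ (respectively $\reffebsp$) is inherited through each of these two stages in turn.

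First I would handle the sum-like case. Here $t = 1$, so $\Xi$ is scalar. Given that $\lebsp{\cl{S}_i}{k_i}$ holds for each $i$, Lemma~\ref{lemma:lebsp-pres-under-n-disj-sum-and-n-copy}(\ref{lemma:lebsp-pres-under-n-disj-sum-and-n-copy:part-1}) yields $\lebsp{\ndisjointsum(\cl{S}_1, \ldots, \cl{S}_n)}{l}$ with $l = \text{min}\{k_i\}$, preserving computability of witness functions. Since $\Xi$ is scalar, Theorem~\ref{theorem:ebsp-gebsp-and-transductions}(\ref{theorem:gebspmso-pres-under-transductions}) for $\mc{L} = \mso$, and Theorem~\ref{theorem:ebsp-gebsp-and-transductions}(\ref{theorem:ebsp-pres-under-transductions}) with $t = 1$ for $\mc{L} = \fo$, together transfer $\lebsp{\cdot}{l}$ from the intermediate class to $\Xi(\ndisjointsum(\cl{S}_1, \ldots, \cl{S}_n)) = \mathsf{O}(\cl{S}_1, \ldots, \cl{S}_n)$, again preserving computability of witness functions. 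This proves part~1. For the product-like case, $t \ge 2$ and $\Xi$ need not be scalar, so I can only use Theorem~\ref{theorem:ebsp-gebsp-and-transductions}(\ref{theorem:ebsp-pres-under-transductions}), which is stated for $\fo$ and requires $\febsp{\cdot}{k \cdot t}$ in its antecedent. Starting from $\febsp{\cl{S}_i}{k_i \cdot t}$, Lemma~\ref{lemma:lebsp-pres-under-n-disj-sum-and-n-copy}(\ref{lemma:lebsp-pres-under-n-disj-sum-and-n-copy:part-1}) gives $\febsp{\ndisjointsum(\cl{S}_1, \ldots, \cl{S}_n)}{l \cdot t}$ with $l = \text{min}\{k_i\}$ (since $\text{min}\{k_i \cdot t\} = (\text{min}\{k_i\}) \cdot t = l \cdot t$), and then Theorem~\ref{theorem:ebsp-gebsp-and-transductions}(\ref{theorem:ebsp-pres-under-transductions}) applied with parameter $l$ transfers this to $\febsp{\Xi(\ndisjointsum(\cl{S}_1, \ldots, \cl{S}_n))}{l} = \febsp{\mathsf{O}(\cl{S}_1, \ldots, \cl{S}_n)}{l}$. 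This proves part~2.

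For operations of $\mathsf{O}_2$-type (acting on $n$-copies rather than $n$-disjoint sums), the argument is identical except that I invoke Lemma~\ref{lemma:lebsp-pres-under-n-disj-sum-and-n-copy}(\ref{lemma:lebsp-pres-under-n-disj-sum-and-n-copy:part-2}) to pass $\reflebsp$ through the $n$-copy construction (with parameter $k$ simply preserved), and then apply Theorem~\ref{theorem:ebsp-gebsp-and-transductions} to the copy-vocabulary translation scheme $\Xi_2$. One subtlety to address carefully is the bookkeeping of parameters: the dimension $t$ enters only through the translation-scheme stage, and since $\reflebsp$ passes through the disjoint-sum/copy stage with parameter unchanged (beyond the $\text{min}$), the factor-of-$t$ loss in the product-like case comes solely from Theorem~\ref{theorem:ebsp-gebsp-and-transductions}(\ref{theorem:ebsp-pres-under-transductions}), which is exactly why the antecedent demands $\febsp{\cl{S}_i}{k_i \cdot t}$. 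The main obstacle is not any deep argument but ensuring that the dimension $t$ used in the corollary (defined as the \emph{minimum} dimension over all implementing translation schemes) matches the dimension of the specific $\Xi$ we apply the preservation theorems to; one must fix a minimum-dimension implementing translation scheme at the outset and verify that its dimension equals the $t$ appearing in the hypothesis $\febsp{\cl{S}_i}{k_i \cdot t}$, so that the chain of implications closes cleanly. Since the preservation of computable witness functions is asserted at each invoked step, it is inherited through the composition, completing the proof.
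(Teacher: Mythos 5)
Your proposal is correct and follows exactly the route the paper intends: the paper's own proof is the one-line remark that the corollary ``follows easily'' from Lemma~\ref{lemma:lebsp-pres-under-n-disj-sum-and-n-copy} and Theorem~\ref{theorem:ebsp-gebsp-and-transductions}, and your argument is precisely the careful unwinding of that composition (disjoint-sum/copy stage, then quantifier-free translation stage), with the parameter bookkeeping for $t$ handled correctly.
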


\begin{proof}
Follows easily from
Lemma~\ref{lemma:lebsp-pres-under-n-disj-sum-and-n-copy} and
Theorem~\ref{theorem:ebsp-gebsp-and-transductions}.
\end{proof}


The rest of this section is entirely devoted to proving
Lemma~\ref{lemma:lebsp-pres-under-n-disj-sum-and-n-copy} and
Theorem~\ref{theorem:ebsp-gebsp-and-transductions}.


Towards the proof of
Lemma~\ref{lemma:lebsp-pres-under-n-disj-sum-and-n-copy}, we present
the following simple facts about $n$-disjoint sum and $n$-copy. We
skip the proof.

\begin{lemma}\label{lemma:facts-about-disj-unions}
Let $(\mf{A}_i, \bar{a}_i)$ and $(\mf{B}_i, \bar{b}_i)$ be
$\tau_{k_i}$-structures for $i \in \{1, \ldots, n\}$. Let $m \in
\mathbb{N}$. Then the following are true.
\begin{enumerate}[nosep]
\item If $(\mf{B}_i, \bar{b}_i) \hookrightarrow (\mf{A}_i, \bar{a}_i)$
  for $i \in \{1, \ldots, n\}$, then $\bigoplus_{i = 1}^{i = n}
  (\mf{B}_i, \bar{b}_i) \hookrightarrow \bigoplus_{i = 1}^{i = n}
  (\mf{A}_i, \bar{a}_i)$.
\item If $(\mf{B}_i, \bar{b}_i) \lequiv{m} (\mf{A}_i, \bar{a}_i)$ for
  $i \in \{1, \ldots, n\}$, then $\bigoplus_{i = 1}^{i = n} (\mf{B}_i,
  \bar{b}_i) \lequiv{m} \bigoplus_{i = 1}^{i = n} (\mf{A}_i,
  \bar{a}_i)$.
\end{enumerate}
\end{lemma}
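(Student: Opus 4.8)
The plan is to prove the two parts separately: part (1) by directly exhibiting an embedding, and part (2) by a Feferman--Vaught style composition of Ehrenfeucht--Fra\"iss\'e strategies. Throughout, set $\mf{A} = \bigoplus_{i=1}^{n} (\mf{A}_i, \bar{a}_i)$ and $\mf{B} = \bigoplus_{i=1}^{n} (\mf{B}_i, \bar{b}_i)$, and by the definition of $n$-disjoint sum assume (w.l.o.g., up to isomorphism) that the $\mf{A}_i$ have pairwise disjoint universes and likewise the $\mf{B}_i$, so that $\univ{\mf{A}} = \bigcup_i \univ{\mf{A}_i}$ and $\univ{\mf{B}} = \bigcup_i \univ{\mf{B}_i}$, with $P_i^{\mf{A}} = \univ{\mf{A}_i}$ and $P_i^{\mf{B}} = \univ{\mf{B}_i}$.

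For part (1), I would start from embeddings $f_i \colon (\mf{B}_i, \bar{b}_i) \hookrightarrow (\mf{A}_i, \bar{a}_i)$ witnessing the hypotheses, and define $f = \bigcup_i f_i \colon \univ{\mf{B}} \to \univ{\mf{A}}$. Injectivity is immediate from the disjointness of the components and the injectivity of each $f_i$. Each $f_i$ sends $\univ{\mf{B}_i} = P_i^{\mf{B}}$ into $\univ{\mf{A}_i} = P_i^{\mf{A}}$, so $f$ preserves and reflects every unary predicate $P_i$; and since the constants of $\tau_{\mathsf{disj-un}, k_1, \ldots, k_n}$ are interpreted within the respective components as the tuples $\bar{b}_i$ in $\mf{B}$ and $\bar{a}_i$ in $\mf{A}$, with $f_i(\bar{b}_i) = \bar{a}_i$, the map $f$ preserves the constants as well. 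Finally, for any $\tau$-relation symbol $R$, a tuple lies in $R^{\mf{B}}$ iff all its entries lie in a single component $\mf{B}_i$ and the tuple is in $R^{\mf{B}_i}$; since $f_i$ is an embedding and no tuple of $R$ straddles two components in either sum, $f$ preserves and reflects $R$. Hence $f$ is the desired embedding, giving $\mf{B} \hookrightarrow \mf{A}$.

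For part (2), I would invoke the Ehrenfeucht--Fra\"iss\'e characterization of $\lequiv{m}$ (Theorem~\ref{theorem:foef-theorem} when $\mc{L} = \fo$ and Theorem~\ref{theorem:msoef-theorem} when $\mc{L} = \mso$) and compose the componentwise winning strategies. Let $\tbf{S}_i$ be a winning strategy for the duplicator in the $m$-round $\mc{L}$-EF game on $(\mf{B}_i, \bar{b}_i)$ and $(\mf{A}_i, \bar{a}_i)$, which exists by the hypothesis $(\mf{B}_i, \bar{b}_i) \lequiv{m} (\mf{A}_i, \bar{a}_i)$. In the $m$-round $\mc{L}$-EF game on $\mf{B}$ and $\mf{A}$, the duplicator plays as follows: for a point move, when the spoiler picks an element lying in the $i$-th component of one sum, the duplicator responds with the element of the $i$-th component of the other sum dictated by $\tbf{S}_i$; for a set move (the case $\mc{L} = \mso$), the duplicator decomposes the chosen set $X = \bigsqcup_i X_i$ according to the partition induced by the $P_i$, applies each $\tbf{S}_i$ to $X_i$, and returns the disjoint union of the component responses. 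Because the $P_i^{\mf{A}}$ (resp. $P_i^{\mf{B}}$) partition the universe and every move is routed to a single component, the restriction of the resulting play to each component agrees with $\tbf{S}_i$, and is therefore won there.

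The final and most substantive step is to verify that the composed play is won in the big game. At the end of $m$ rounds, let $e_1, \ldots, e_p$ and the sets $E_1, \ldots, E_r$ be chosen from $\mf{B}$, and $f_1, \ldots, f_p$ and $F_1, \ldots, F_r$ from $\mf{A}$; I would check that $e_j \mapsto f_j$, together with the constants, is a partial isomorphism between $(\mf{B}, E_1, \ldots, E_r)$ and $(\mf{A}, F_1, \ldots, F_r)$. The routing guarantees that $e_j$ and $f_j$ lie in corresponding components, so: the $P_i$-membership and set-membership conditions hold because each is decided within a single component where $\tbf{S}_i$ wins; for a $\tau$-relation $R$, a tuple of chosen elements can lie in $R$ only when all its entries come from one component, whereupon preservation follows from the componentwise win, while tuples straddling two components fail $R$ on both sides; and the constant-symbol clauses reduce similarly, since each constant lives in a fixed component. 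The main obstacle to watch is precisely this ``no crossing relations'' bookkeeping across components for the higher-arity symbols of $\tau$, handled together with the $P_i$ and the constants; once it is in place, the two EF theorems deliver $\mf{B} \lequiv{m} \mf{A}$, completing the proof.
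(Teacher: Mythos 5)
Your proof is correct; the paper actually omits the proof of this lemma entirely (``We skip the proof''), and your argument --- gluing the component embeddings for part (1) and composing the componentwise EF strategies with the no-crossing-relations bookkeeping for part (2) --- is precisely the standard route the paper itself follows in its other composition lemmas (for ordered trees, nested words and $n$-partite cographs).
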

\begin{lemma}\label{lemma:facts-about-n-copies}
Let $(\mf{A}, \bar{a}_i)$ and $(\mf{B}, \bar{b}_i)$ be
$\tau_{k_i}$-structures for $i \in \{1, \ldots, n\}$.  Let $m \in
\mathbb{N}$. If $\mc{C} = \ncopy(\mf{B}, \bar{b}_1, \ldots,
\bar{b}_n)$ and $\mf{D} = \ncopy(\mf{A}, \bar{a}_1, \ldots,
\bar{a}_n)$, then the following are true.
\begin{enumerate}[nosep]
\item If $(\mf{B}, \bar{b}_i) \hookrightarrow (\mf{A}, \bar{a}_i)$ for
  $i \in \{1, \ldots, n\}$, then $\mc{C} \hookrightarrow \mc{D}$.
\item If $(\mf{B}, \bar{b}_i) \lequiv{m} (\mf{A}, \bar{a}_i)$ for $i
  \in \{1, \ldots, n\}$, then $\mc{C} \lequiv{m} \mc{D}$.
\end{enumerate}
\end{lemma}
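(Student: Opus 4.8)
The plan is to mirror the treatment of $n$-disjoint sums in Lemma~\ref{lemma:facts-about-disj-unions}, exploiting the fact that $\ncopy(\mf{A}, \bar{a}_1, \ldots, \bar{a}_n)$ is by definition the $n$-disjoint sum $\bigoplus_{i=1}^{n}(\mf{A}_i, \bar{a}_i)$ of $n$ isomorphic copies of the common base $\mf{A}$, augmented with the single extra symbol $\sim$ interpreted as the set of pairs $((i,a),(j,a))$ with $1 \le i,j \le n$ and $a \in \univ{\mf{A}}$. The only ingredient beyond Lemma~\ref{lemma:facts-about-disj-unions} is therefore the preservation of $\sim$; the $\tau$-relations inside each copy, the predicates $P_i$ marking the copies, and the distinguished constants are all handled exactly as in the disjoint-sum case. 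A first observation I would record is that, since $P_i$ marks copy $i$ in both $\mc{C}$ and $\mf{D}$, any embedding (and, in the game, any legal response) must send copy $i$ into copy $i$; consequently every relevant map is of the form $(i,a) \mapsto (i, h(a))$ for a map $h$ acting on the common base.

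For part (1), I would produce from the embeddings $f_i \colon (\mf{B}, \bar{b}_i) \hookrightarrow (\mf{A}, \bar{a}_i)$ a single base embedding $h \colon \univ{\mf{B}} \to \univ{\mf{A}}$ with $h(\bar{b}_i) = \bar{a}_i$ for every $i$, and set $g((i,b)) = (i, h(b))$. Checking that $g \colon \mc{C} \to \mf{D}$ is an embedding reduces to three points: (i) $h$ is a $\tau$-embedding of $\mf{B}$ into $\mf{A}$, which lifts into each copy because the $\tau$-reduct of copy $i$ is isomorphic to the base; (ii) $g$ respects $P_i$ and the distinguished constants of each copy, which follows from $h(\bar{b}_i)=\bar{a}_i$; and (iii) $g$ respects $\sim$, since $(i,b)$ and $(j,b)$ are $\sim^{\mc{C}}$-related exactly when their base coordinates coincide, and their images $(i,h(b)),(j,h(b))$ are $\sim^{\mf{D}}$-related exactly when $h(b)=h(b)$, which is automatic, while injectivity of $h$ rules out spurious $\sim$-edges between images of distinct base elements.

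For part (2), I would argue through the $\fef$ and $\mef$ games (Theorem~\ref{theorem:foef-theorem} and Theorem~\ref{theorem:msoef-theorem}). The duplicator drives a single base play and replays it uniformly in every copy: a spoiler point move $(i,a)$ is treated as a base move on the coordinate $a$ (reusing the earlier answer if $a$ has already been probed in another copy), yielding a base response $a'$ that is copied back as $(i,a')$; an $\mso$ set move $X = \bigsqcup_i X_i$ is answered copy by copy through the base strategy. At the end, the resulting base correspondence $a \leftrightarrow a'$ is a partial isomorphism $\mf{A} \leftrightarrow \mf{B}$ carrying each $\bar{a}_i$ to $\bar{b}_i$, so the copywise map $(i,a)\mapsto(i,a')$ is a partial isomorphism for the $\tau$-part, the $P_i$, and the constants; and because the point responses are shared across copies and the base correspondence is a bijection on the probed elements, two played elements are $\sim$-related on one side iff their base coordinates agree iff their images are $\sim$-related on the other side. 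A round count, treating repeated probes of the same base coordinate as a single base move and set moves componentwise, then certifies that an $m$-round base strategy is enough.

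The point requiring the most care, and the one that genuinely separates this lemma from Lemma~\ref{lemma:facts-about-disj-unions}, is precisely the matching relation $\sim$. Because $\sim$ identifies the \emph{same} base element across all $n$ copies, the copies cannot be treated independently: a given base element must receive the \emph{same} answer in every copy, so the per-$i$ hypotheses have to be witnessed \emph{coherently}, by one common embedding $h$ in part (1) and by one winning strategy for the combined tuples $(\mf{A}, \bar{a}_1, \ldots, \bar{a}_n)$ against $(\mf{B}, \bar{b}_1, \ldots, \bar{b}_n)$ in part (2). (Separate per-$i$ witnesses are in fact insufficient, as one already sees with pure sets; so this coherence is not cosmetic.) This is exactly the form in which the hypothesis arises in the proof of Lemma~\ref{lemma:lebsp-pres-under-n-disj-sum-and-n-copy}, where $\mf{B} \subseteq \mf{A}$, each $\bar{b}_i$ equals $\bar{a}_i$, and the common witness is the inclusion of $\mf{B}$ into $\mf{A}$ together with the single winning strategy furnished by the combined $(m,\mc{L})$-equivalence of $(\mf{B},\bar{a})$ and $(\mf{A},\bar{a})$. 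I would make this coherence explicit both in stating and in applying the lemma, and flag the $\mso$ set-move bookkeeping — reconciling the componentwise set responses with the shared point responses into a single partial isomorphism respecting $\sim$ — as the remaining technical heart.
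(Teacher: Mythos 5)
The paper states this lemma as a ``simple fact'' and explicitly skips its proof, so there is nothing of the paper's to compare your argument against; yours is the only proof on record. Your central structural observation is correct and is, in my view, the most important thing to say about this statement: because $\sim$ identifies the same base element across all $n$ copies, the $n$ hypotheses cannot be witnessed independently, and the lemma as literally written is false. Take $\tau=\emptyset$, $n=2$, $k_1=k_2=1$, $\mf{B}=\{x,y\}$ with $\bar{b}_1=x$, $\bar{b}_2=y$, and $\mf{A}=\{u,v\}$ with $\bar{a}_1=\bar{a}_2=u$: then $(\mf{B},\bar{b}_i)\cong(\mf{A},\bar{a}_i)$ for both $i$, yet the quantifier-free sentence $c_1\sim c_2$ holds in $\mf{D}$ and fails in $\mc{C}$, so both parts fail even at rank $0$. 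Your coherent reading --- a single embedding $h$ with $h(\bar{b}_i)=\bar{a}_i$ for all $i$ in part (1), a single winning strategy on the concatenated tuples $(\mf{A},\bar{a}_1\cdots\bar{a}_n)$ versus $(\mf{B},\bar{b}_1\cdots\bar{b}_n)$ in part (2) --- is the right repair, it is exactly what the application in Lemma~\ref{lemma:lebsp-pres-under-n-disj-sum-and-n-copy} supplies, and your proof of part (1) under that reading is complete. The $\fo$ case of part (2) is also fine: point moves translate one-for-one, the partial-isomorphism condition already forces the base strategy to answer a repeated base element consistently, and the resulting base correspondence transports $\sim$, the $P_i$ and the constants exactly as you describe.

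The genuine gap is the $\mso$ set move, and it is not mere bookkeeping. Answering a set move $X=\bigsqcup_i X_i$ of the $\ncopy$ game ``copy by copy through the base strategy'' consumes $n$ set moves of the base game, so $m$ rounds of the copy game may cost $nm$ rounds of the base game; your argument therefore only derives $\mc{C}\mequiv{m}\mf{D}$ from $(\mf{B},\bar{b}_1\cdots\bar{b}_n)\mequiv{nm}(\mf{A},\bar{a}_1\cdots\bar{a}_n)$, not from $\mequiv{m}$. Moreover the same-rank statement cannot be rescued by a cleverer strategy: a single set move on $\ncopy(\mf{A})$ encodes $n$ independent subsets of $\univ{\mf{A}}$, each recoverable by the spoiler through $P_i$ and a $\sim$-lookup, so rank-$m$ $\mso$ on the $n$-copy counts up to roughly the rank-$nm$ threshold of the base. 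Concretely, over $\tau=\emptyset$ two pure sets of sizes $N$ and $N+1$ with $N\ge 2^{m}$ are $\mequiv{m}$, but for $N$ of order $4^{m}$ the spoiler separates their $2$-copies in about $m$ rounds by a four-way binary search (each set move quarters the discrepant cell of base elements, classified by the membership pattern of their two copies, and two final point moves on a matched pair expose the empty cell). So part (2) for $\mso$ should be restated with $\lequiv{nm}$ (or any rank at least $nm$) in the hypothesis; this costs nothing downstream, since $\reflebsp$ quantifies over all ranks and one simply precomposes the witness function with $m\mapsto nm$. I would make both the coherence requirement and this rank adjustment explicit in the statement before using it.
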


We now prove Lemma~\ref{lemma:lebsp-pres-under-n-disj-sum-and-n-copy}.

\begin{proof}[Proof of Lemma~\ref{lemma:lebsp-pres-under-n-disj-sum-and-n-copy}]
\ul{Part \ref{lemma:lebsp-pres-under-n-disj-sum-and-n-copy:part-1}}:
Consider a structure $\mf{A} = (\bigoplus_{i = 1}^{i = n} \mf{A}_i)
\in \ndisjointsum(\cl{S}_1, \ldots, \cl{S}_n)$ and let $\bar{a}$ be an
$l$-tuple from $\mf{A}$. Let $\bar{a}_i$ be the sub-tuple of $\bar{a}$
consisting of all elements of $\bar{a}$ that belong to
$\univ{\mf{A}_i}$; clearly $|\bar{a}_i| \leq k_i$ for $i \in \{1,
\ldots, n\}$. Let $m \in \mathbb{N}$. Since $\lebsp{\cl{S}_i}{k_i}$ is
true, there exists $\mf{B}_i$ such that $\lebspcond(\cl{S}_i,
\mf{A}_i, \mf{B}_i, k_i, m, \bar{a}_i, \witfn{\cl{S}_i}{m}{\mc{L}})$
holds where $\witfn{\cl{S}_i}{m}{\mc{L}}$ is a witness function for
$\lebsp{\cl{S}_i}{k_i}$.  Then $(\mf{B}_i, \bar{a}_i) \subseteq
(\mf{A}_i, \bar{a}_i)$ and $(\mf{B}_i, \bar{a}_i) \lequiv{m}
(\mf{A}_i, \bar{a}_i)$. Then by
Lemma~\ref{lemma:facts-about-disj-unions}, we have that (i)
$\bigoplus_{i = 1}^{i = n} (\mf{B}_i, \bar{b}_i) \hookrightarrow
\bigoplus_{i = 1}^{i = n} (\mf{A}_i, \bar{a}_i)$, and (ii)
$\bigoplus_{i = 1}^{i = n} (\mf{B}_i, \bar{b}_i) \lequiv{m}
\bigoplus_{i = 1}^{i = n} (\mf{A}_i, \bar{a}_i)$. Then it is easy to
verify that (i) $((\bigoplus_{i = 1}^{i = n} \mf{B}_i), \bar{a})
\hookrightarrow ((\bigoplus_{i = 1}^{i = n} \mf{A}_i), \bar{a})$, and
(ii) $((\bigoplus_{i = 1}^{i = n} \mf{B}_i), \bar{a}) \lequiv{m}
((\bigoplus_{i = 1}^{i = n} \mf{A}_i), \bar{a})$. Observe that
$(\bigoplus_{i = 1}^{i = n} \mf{B}_i) \in \ndisjointsum(\cl{S}_1,
\ldots, \cl{S}_n)$, and that $|(\bigoplus_{i = 1}^{i = n} \mf{B}_i)|
\leq \theta(m) = \Sigma_{i = 0}^{i = n}
\witfn{\cl{S}_i}{m}{\mc{L}}(m)$. Taking $(\mf{B}, \bar{a})$ to be the
substructure of $(\mf{A}, \bar{a})$ that is isomorphic to
$((\bigoplus_{i = 1}^{i = n} \mf{B}_i), \bar{a})$, we see
$\lebspcond(\ndisjointsum(\cl{S}_1, \ldots, \cl{S}_n), $ $ \mf{A},
\mf{B}, l, m, \bar{a}, \theta)$ is true with witness function
$\theta$. Whereby $\lebsp{\ndisjointsum(\cl{S}_1, \ldots,
  \cl{S}_n)}{l}$ is true. It is easy to see that if
$\witfn{\cl{S}_i}{m}{\mc{L}}$ is computable for each $i \in \{1,
\ldots, n\}$, then so is $\theta$.

\ul{Part \ref{lemma:lebsp-pres-under-n-disj-sum-and-n-copy:part-2}}:
This is proved analogously as the previous part, and using
Lemma~\ref{lemma:facts-about-n-copies}.
\end{proof}

We now proceed to proving
Theorem~\ref{theorem:ebsp-gebsp-and-transductions}.  Towards the
proof, we first prove the following result that shows that
quantifier-free translation schemes preserve the substructure relation
between any two structures of $\cl{S}$. We use results mentioned in
Section~\ref{section:background-FMT-translation-schemes} in our proof.

\begin{lemma}\label{lemma:qt-free-transductions-preserve-substructure-prop}
Let $\cl{S}$ be a given class of finite structures. Let $\Xi = (\xi,
(\xi_R)_{R \in \sigma})$ be a quantifier-free $(t, \tau, \sigma,
\fo)$-translation scheme. Let $\mf{A}$ and $\mf{B}$ be given
structures from $\cl{S}$, and let $\bar{b}_1, \ldots, \bar{b}_n$ be
$n$ elements from $\Xi(\mf{B})$, for some $n \ge 0$.  If $(\mf{B},
\bar{b}_1, \ldots, \bar{b}_n) \subseteq (\mf{A}, \bar{b}_1, \ldots,
\bar{b}_n)$, then (i) $\bar{b}_1, \ldots, \bar{b}_n$ belong to
$\Xi(\mf{A})$ and (ii) $(\Xi(\mf{B}), \bar{b}_1, \ldots, \bar{b}_n)
\subseteq (\Xi(\mf{A}), \bar{b}_1, \ldots, \bar{b}_n)$.
\end{lemma}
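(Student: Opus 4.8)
The plan is to reduce everything to the single classical fact that \emph{quantifier-free formulae are absolute between a structure and its substructures}: if $\mf{B} \subseteq \mf{A}$ and $\bar{c}$ is a tuple from $\mf{B}$, then for every quantifier-free $\varphi(\bar{x})$ we have $(\mf{B}, \bar{c}) \models \varphi(\bar{x})$ iff $(\mf{A}, \bar{c}) \models \varphi(\bar{x})$ (recorded in Section~\ref{section:background:structures}). First I would unpack the hypothesis $(\mf{B}, \bar{b}_1, \ldots, \bar{b}_n) \subseteq (\mf{A}, \bar{b}_1, \ldots, \bar{b}_n)$: taking $\tau$-reducts it yields $\mf{B} \subseteq \mf{A}$, and since each $\bar{b}_i$ is a point of $\Xi(\mf{B}) \subseteq \univ{\mf{B}}^t$, all of the $\bar{b}_i$ are $t$-tuples from $\mf{B}$, hence also from $\mf{A}$. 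Because $\sigma$ is relational and $\Xi$ is quantifier-free, both the universe formula $\xi(\bar{x}_0)$ and each relation formula $\xi_R(\bar{x}_R)$ are quantifier-free, so every membership question about $\Xi(\mf{B})$ or $\Xi(\mf{A})$ reduces to the satisfaction of a quantifier-free formula over a tuple from $\mf{B}$, to which the absoluteness fact applies verbatim.

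For part (i), recall $\univ{\Xi(\mf{A})} = \{\bar{a} \in \univ{\mf{A}}^t \mid (\mf{A}, \bar{a}) \models \xi(\bar{x}_0)\}$, and similarly for $\mf{B}$. The step is to note that $\bar{b}_i \in \Xi(\mf{B})$ means $(\mf{B}, \bar{b}_i) \models \xi(\bar{x}_0)$; applying absoluteness with $\bar{c} = \bar{b}_i$ (a $t$-tuple from $\mf{B}$) gives $(\mf{A}, \bar{b}_i) \models \xi(\bar{x}_0)$, and since $\bar{b}_i \in \univ{\mf{A}}^t$ this is exactly $\bar{b}_i \in \univ{\Xi(\mf{A})}$.

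For part (ii), I would verify the defining clauses of the substructure relation for the two $\sigma$-structures expanded with the $n$ constants $\bar{b}_1, \ldots, \bar{b}_n$. The universe containment $\univ{\Xi(\mf{B})} \subseteq \univ{\Xi(\mf{A})}$ is the argument of part (i) applied to an arbitrary $\bar{c} \in \univ{\Xi(\mf{B})}$. Agreement on the distinguished constants holds because each $\bar{b}_i$ is literally the same tuple on both sides and lies in both universes by (i). For the relation clause, fix $R \in \sigma$ of arity $r$ and an $r$-tuple $(\bar{c}_1, \ldots, \bar{c}_r)$ of elements of $\Xi(\mf{B})$: by definition $(\bar{c}_1, \ldots, \bar{c}_r) \in R^{\Xi(\mf{B})}$ iff $(\mf{B}, \bar{c}_1, \ldots, \bar{c}_r) \models \xi_R(\bar{x}_R)$, and since $(\bar{c}_1, \ldots, \bar{c}_r)$ is a $(t \times r)$-tuple from $\mf{B}$ and $\xi_R$ is quantifier-free, absoluteness converts this to $(\mf{A}, \bar{c}_1, \ldots, \bar{c}_r) \models \xi_R(\bar{x}_R)$, i.e. $(\bar{c}_1, \ldots, \bar{c}_r) \in R^{\Xi(\mf{A})}$. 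Together with the universe containment, this gives $R^{\Xi(\mf{B})} = R^{\Xi(\mf{A})} \cap (\univ{\Xi(\mf{B})})^r$, which is precisely the substructure condition.

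I expect no real obstacle beyond bookkeeping; the only point demanding care is the dimension-$t$ encoding, namely that one ``element'' of $\Xi(\cdot)$ is a $t$-tuple of the underlying structure, so each satisfaction check must be phrased with the correct number of free variables ($t$ for $\xi$, $t \times r$ for $\xi_R$) before invoking absoluteness. The quantifier-freeness of $\Xi$ enters in exactly one place — it is what makes $\xi$ and the $\xi_R$ absolute between $\mf{B}$ and $\mf{A}$ — and, as the accompanying remark indicates, even a single quantifier would break the argument, since then satisfaction could depend on witnesses lying in $\univ{\mf{A}} \setminus \univ{\mf{B}}$.
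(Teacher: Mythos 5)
Your proposal is correct and follows essentially the same route as the paper: both arguments rest on the absoluteness of quantifier-free formulae between $\mf{B}$ and its extension $\mf{A}$, applied first to $\xi$ for the universe/part (i) and then to the relation formulae for part (ii) (the paper merely phrases the relation step via $\Xi(R)$ and Proposition~\ref{prop:relating-transductions-applications-to-structures-and-formulae} rather than invoking $\xi_R$ directly). No gaps.
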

\begin{proof}

Consider any element of $\Xi(\mf{B})$; it is a $t$-tuple $\bar{b}$ of
$\mf{B}$ such that $(\mf{B}, \bar{b}) \models \xi(\bar{x})$. Since
$\xi(\bar{x})$ is quantifier-free, it is preserved under extensions
over $\cl{S}$. Whereby $(\mf{A}, \bar{b}) \models \xi(\bar{x})$; then
$\bar{b}$ is an element of $\Xi(\mf{A})$. Since $\bar{b}$ is an
arbitrary element of $\Xi(\mf{B})$, we have $\mathsf{U}_{\Xi(\mf{B})}
\subseteq \mathsf{U}_{\Xi(\mf{A})}$. In particular therefore,
$\bar{b}_1, \ldots, \bar{b}_n$ belongs to $\Xi(\mf{A})$.

Consider a relation symbol $R \in \sigma$ of arity say $n$. Let
$\bar{d}_1, \ldots, \bar{d}_n$ be elements of $\Xi(\mf{B})$. Then we
have the following. Below $\bar{x}_i = (x_{i, 1}, \ldots, x_{i, t})$
for each $i \in \{1, \ldots, n\}$.
\[
\begin{array}{lllll}
& (\Xi(\mf{B}), \bar{d}_1, \ldots, \bar{d}_n) & \models & R(x_1,
  \ldots, x_n) & \\

\mbox{iff} & (\mf{B}, \bar{d}_1, \ldots, \bar{d}_n) & \models &
\Xi(R)(\bar{x}_1, \ldots, \bar{x}_n) & \mbox{(by
  Proposition~\ref{prop:relating-transductions-applications-to-structures-and-formulae})}\\

\mbox{iff} & (\mf{B}, \bar{d}_1, \ldots, \bar{d}_n) & \models &
\bigwedge_{i = 1}^{i = n}\xi(\bar{x}_i) ~\wedge~ \xi_R(\bar{x}_1,
\ldots, \bar{x}_n) & \mbox{(by defn. of $\Xi(R)$; see Section~\ref{section:background-FMT-translation-schemes})} \\
\end{array}
\]
Now since (i) each of $\xi$ and $\xi_R$ is quantifier-free, (ii) a
finite conjunction of quantifier-free formulae is a
quantifier-free formula, and (iii) a quantifier-free formula is
preserved under substructures as well as preserved under extensions
over any class, we have that
\[
\begin{array}{lllll}
& (\mf{B}, \bar{d}_1, \ldots, \bar{d}_n) & \models & \bigwedge_{i =
    1}^{i = n}\xi(\bar{x}_i) ~\wedge~ \xi_R(\bar{x}_1, \ldots,
  \bar{x}_n) &  \\

\mbox{iff} & (\mf{A}, \bar{d}_1, \ldots, \bar{d}_n) & \models &
\bigwedge_{i = 1}^{i = n}\xi(\bar{x}_i) ~\wedge~ \xi_R(\bar{x}_1,
\ldots, \bar{x}_n) & \\

\mbox{iff} & (\mf{A}, \bar{d}_1, \ldots, \bar{d}_n) & \models &
\Xi(R)(\bar{x}_1, \ldots, \bar{x}_n) & \mbox{(by definition of
  $\Xi(R)$)} \\

\mbox{iff} & (\Xi(\mf{A}), \bar{d}_1, \ldots, \bar{d}_n) & \models & R(x_1,
  \ldots, x_n) & \mbox{(by
  Proposition~\ref{prop:relating-transductions-applications-to-structures-and-formulae})}\\
\end{array}
\]
Since $R$ is an arbitrary relation symbol of $\sigma$, we have that
$\Xi(\mf{B}) \subseteq \Xi(\mf{A})$, whereby \linebreak $(\Xi(\mf{B}),
\bar{d}_1, \ldots, \bar{d}_n) \subseteq (\Xi(\mf{A}), \bar{d}_1,
\ldots, \bar{d}_n)$.
\end{proof}

\begin{proof}[Proof of Theorem~\ref{theorem:ebsp-gebsp-and-transductions}]

\ul{Part 1}: 
Consider a structure $\Xi(\mf{A}) \in \Xi(\cl{S})$ for some structure
$\mf{A} \in \cl{S}$. Let $(\bar{a}_1, \ldots, \bar{a}_k)$ be a
$k$-tuple from $\Xi(\mf{A})$ and let $m \in \mathbb{N}$.  For each $i
\in \{1, \ldots, k\}$, let $\bar{a}_i = (a_{i, 1}, \ldots, a_{i,
  t})$. Let $p = k \cdot t$ and consider the $p$-tuple $\bar{a}$ from
$\mf{A}$ given by $\bar{a} = (a_{1, 1}, \ldots, a_{1, t}, a_{2, 1},
\ldots, a_{2, t}, \ldots, a_{k, 1}, \ldots, a_{k, t})$.  Let $r = t
\cdot m$. Since $\febsp{\cl{S}}{p}$ is true, there exists a witness
function $\witfn{\cl{S}}{p}{\fo}:\mathbb{N} \rightarrow \mathbb{N}$
and a structure $\mf{B}$ such that \linebreak $\febspcond(\cl{S},
\mf{A}, \mf{B}, p, r, \bar{a}, \witfn{\cl{S}}{p}{\fo})$ is true. That
is (i) $\mf{B} \in \cl{S}$, (ii) $\mf{B} \subseteq \mf{A}$, (iii) the
elements of $\bar{a}$ are contained in $\mf{B}$, (iv) $|\mf{B}| \leq
\witfn{\cl{S}}{p}{\fo}(r)$ and (v) $(\mf{B}, \bar{a}) \equiv_r
(\mf{A}, \bar{a})$.

We now show that there exists a function
$\witfn{\Xi(\cl{S})}{k}{\fo}:\mathbb{N} \rightarrow \mathbb{N}$ such
that \linebreak $\febspcond(\Xi(\cl{S}), \Xi(\mf{A}), $ $\Xi(\mf{B}),
k, m, (\bar{a}_1, \ldots, \bar{a}_k), \witfn{\Xi(\cl{S})}{k}{\fo})$ is
true. This would show that $\febsp{\Xi(\cl{S})}{k}$ is true.
\begin{enumerate}[(i), leftmargin=*, nosep]
\item $\Xi(\mf{B}) \in \Xi(\cl{S})$: Obvious from the definition of
  $\Xi(\cl{S})$ and the fact that $\mf{B} \in \cl{S}$.

\item $\Xi(\mf{B}) \subseteq \Xi(\mf{A})$: Follows from
  Lemma~\ref{lemma:qt-free-transductions-preserve-substructure-prop}.

\item The element $\bar{a}_i$ is contained in
  $\mathsf{U}_{\Xi(\mf{B})}$ for each $i \in \{1, \ldots, k\}$: Since
  the elements of $\bar{a}$ are contained in $\mf{B}$, we have for
  each $i \in \{1, \ldots, k\}$, that $\bar{a}_i$ is a $t$-tuple from
  $\mf{B}$.  Now since $\bar{a}_i$ is an element of $\Xi(\mf{A})$, we
  have $(\mf{A}, \bar{a}_i) \models \xi(\bar{x})$. Since
  $\xi(\bar{x})$ is quantifier-free, it is preserved under
  substructures over $\cl{S}$. Whereby $(\mf{B}, \bar{a}_i) \models
  \xi(\bar{x})$; then $\bar{a}_i$ is an element of $\Xi(\mf{B})$, for
  each $i \in \{1, \ldots, k\}$.

\item $(\Xi(\mf{B}), \bar{a}_1, \ldots, \bar{a}_k) \equiv_m
  (\Xi(\mf{A}), \bar{a}_1, \ldots, \bar{a}_k)$: Since $(\mf{B},
  \bar{a}) \equiv_r (\mf{A}, \bar{a})$, it follows from
  Corollary~\ref{corollary:transferring-m-equivalence-across-transductions},
  that $(\Xi(\mf{B}), \bar{a}_1, \ldots, \bar{a}_k) \equiv_m
  (\Xi(\mf{A}), \bar{a}_1, \ldots, \bar{a}_k)$.

\item The existence of a function $\witfn{\Xi(\cl{S})}{k}{\fo}:
  \mathbb{N} \rightarrow \mathbb{N}$ such that $|\Xi(\mf{B})| \leq
  \witfn{\Xi(\cl{S})}{k}{\fo}(m)$: Define
  $\witfn{\Xi(\cl{S})}{k}{\fo}:\mathbb{N} \rightarrow \mathbb{N}$ as
  $\witfn{\Xi(\cl{S})}{k}{\fo}(m) =
  (\witfn{\cl{S}}{p}{\fo}(r))^t$. Since $|\mf{B}| \leq
  \witfn{\cl{S}}{p}{\fo}(r)$, we have that $|\Xi(\mf{B})| \leq
  \witfn{\Xi(\cl{S})}{k}{\fo}(m)$.
\end{enumerate}
It is clear that if $\witfn{\cl{S}}{p}{\fo}$ is computable, then so is
$\witfn{\Xi(\cl{S})}{k}{\fo}$.

\vspace{5pt}\ul{Part 2}: The proof of this part is similar to
the proof above.  
\end{proof}

\vspace{5pt} \tbf{Necessity of the condition on $\Xi$ of being
  quantifier-free in
  Theorem~\ref{theorem:ebsp-gebsp-and-transductions}}:

Let $\tau = \{\leq\}$ and $\sigma = \{E\}$ where $\leq, E$ are binary
relation symbols.  Consider the $(1, \tau, \sigma,
\text{FO})$-translation scheme $\Xi_1$ given by $\Xi_1 = (\xi_1,
\xi^1_E)$ where $\xi_1(x)$ is the formula $ (x = x)$ and $\xi^1_E(x,
y) = \forall z \big(\big((x \leq z) \wedge (x \neq z)\big) \rightarrow
(y \leq z)\big)$. Consider the class $\cl{S}$ of all finite linear
orders. We know from
Theorem~\ref{theorem:words-and-trees-and-nested-words-satisfy-lebsp}
that both $\febsp{\cl{S}}{l}$ and $\mebsp{\cl{S}}{l}$ hold for all $l
\in \mathbb{N}$. The (universal) formula $\xi^1_E(x, y)$ cannot be
$\cl{S}$-equivalent to a quantifier-free formula. To see this, suppose
$\xi^1_E(x, y)$ is $\cl{S}$-equivalent to a quantifier-free formula
$\beta(x, y)$. Consider the structure $\mf{A} = (\{1, 2, 3\},
\leq^{\mf{A}}) \in \cl{S}$ where $\leq^{\mf{A}}$ is the usual linear
order on $\{1, 2, 3\}$. Clearly $(\mf{A}, 1, 3) \models \neg
\xi^1_E(x, y)$ whereby $(\mf{A}, 1, 3) \models \neg \beta(x,
y)$. Since $\neg \beta$ is quantifier-free, it is preserved under
substructures whereby $\mf{B} = (\{1. 3\}, \{(1, 1), (1, 3), (3, 3)
\})$ is such that $\mf{B} \models \neg \beta(x, y)$ and hence $\mf{B}
\models \neg \xi^1_E(x, y)$. The latter is clearly not true. Then
$\Xi_1$ is not quantifier-free.

We now show that $\febsp{\Xi_1(\cl{S})}{k}$ is false for each $k \ge
2$. The class $\Xi_1(\cl{S})$ is the class of all finite directed
paths. It is easy to see that for $m, n \in \mathbb{N}$ such that $m
\ge 4$ and $n \ge 2$, the path $P_n$ of length $n$ (i.e. having $n+1$
vertices) is not $m$-equivalent to any substructure of $P_n$ that
contains both the end-points of $P_n$ and that has size at most
$n$. Then $\febsp{\Xi_1(\cl{S})}{k}$ is false for each $k \ge 2$.

Consider the $(1, \tau, \sigma, \text{FO})$-translation scheme $\Xi_2$
given by $\Xi_2 = (\xi_2, \xi^2_E)$ where $\xi_2 = \xi_1$ and $\xi^2_E
= \neg \xi^1_E$. Let $\mathsf{Neg}_{\sigma} = (\alpha, \alpha_E)$ be
the $(1, \sigma, \sigma, \text{FO})$-translation scheme that is
quantifier-free and such that $\alpha(x)$ is the formula $ (x = x)$
and $\alpha_E(x, y) = \neg E(x, y)$.  For the class $\cl{S}$ as above,
observe that $\Xi_1(\cl{S})$ is exactly the class
$\mathsf{Neg}_{\sigma}(\Xi_2(\cl{S}))$.  Whence if
$\febsp{\Xi_2(\cl{S})}{k}$ is true for some $k \ge 2$, then by Part
(\ref{theorem:ebsp-pres-under-transductions}) above,
$\febsp{\mathsf{Neg}_{\sigma}(\Xi_2(\cl{S}))}{k}$ is true,
contradicting the fact that $\febsp{\Xi_1(\cl{S})}{k}$ is false for
all $k \ge 2$.

\vspace{5pt}
\subsection{Closure under regular operation-tree languages}\label{subsection:closure-under-regular-op-tree-lang}
\newcommand{\optree}[1]{\ensuremath{\mathsf{Op}\text{-}\mathsf{tree}(#1)}}



Theorem~\ref{theorem:ebsp-gebsp-and-transductions} shows us that
operations that are implemented using quantifier-free translation
schemes, preserve the $\febsp{\cdot}{\cdot}$ or $\mebsp{\cdot}{k}$
property of the class of structures they are applied to. From this,
and from
Lemma~\ref{lemma:lebsp-closure-under-set-theoretic-ops}(\ref{lemma:lebsp-closure-under-set-theoretic-ops:part-2}),
it follows that finite unions of the classes obtained by applying
finite compositions of the aforesaid kind of operations to a given
class $\cl{S}$ of structures, also preserves the
$\febsp{\cdot}{\cdot}$ or $\mebsp{\cdot}{k}$ property of $\cl{S}$.
However, as already mentioned in the introduction, there are
interesting classes of structures that are produced only by taking
infinite such unions; examples include hamming graphs of the
$n$-clique, and the class of all $p$-dimensional grid posets, where
$p$ belongs to an $\mso$-definable (using a linear order) class of
natural numbers. In this section, we discuss the case of such infinite
unions.  Specifically, we show that under reasonable additional
assumptions on the aforementioned operations, that are satisfied by
the operations of disjoint union, join, across connect, and the
various kinds of products mentioned in
Section~\ref{subsection:closure-under-operations-implemented-using-translation-schemes},
it is the case that the property of $\lebsp{\cdot}{0}$ of a class is
preserved under taking the aforementioned infinite unions, provided
that these unions are ``regular'' in a sense that we make precise.
Indeed the infinite unions that produce the examples of hamming graphs
of the $n$-clique, and the class of $p$-dimensional grid posets
referred to above, are regular in our sense, whereby since the
examples are produced using the cartesian product operation, it
follows that each of these satisfies $\lebsp{\cdot}{0}$.

Let $\mathsf{Op}$ be a finite set of operations implementable using
quantifier-free translation schemes. We call the operations of
$\mathsf{Op}$ as \emph{quantifier-free operations}, and abusing
notation, use $\Xi$ to represent these operations.  Let $\rho:
\mathsf{Op} \rightarrow \mathbb{N}$ be such that $\rho(\Xi)$ is the
arity of $\Xi$, for $\Xi \in \mathsf{Op}$. An \emph{operation tree
over $\mathsf{Op}$} \sindex[term]{operation!-tree} is an ordered tree
ranked by $\rho$, in which each internal node is labeled with an
operation of $\mathsf{Op}$, and each leaf node is labeled with the
label $\diamond$, which is a place-holder for an ``input'' structure.
The singleton tree (without any internal nodes) in which the sole node
is labeled with a $\diamond$ is also an operation tree over
$\mathsf{Op}$ (treated as the ``no operation'' tree).  When the
$\diamond$ labels of the leaf nodes of an operation tree $\tree{t}$
are replaced with structures, then the resulting tree $\tree{s}$ can
naturally be seen as a representation tree of a structure
$\mf{A}_{\tree{s}}$.  Formally, the structure $\mf{A}_{\tree{s}}$ can
be defined (up to isomorphism) inductively as follows. If $\tree{s}$
is a singleton, then $\mf{A}_{\tree{s}}$ is the structure labeling the
sole node of $\tree{s}$. Else, let $a_1, \ldots, a_n$ be in increasing
order, the children of the root of $\tree{s}$. Let $\tree{t}_i
= \tree{s}_{\ge a_i}$ be the subtree of $\tree{s}$ rooted at $a_i$,
for $i \in \{1, \ldots, n\}$. Assume (as induction hypothesis) that
the structure $\mf{A}_{\tree{t}_i}$ represented (upto isomorphism) by
the tree $\tree{t}_i$ is already defined.  Let $\Xi$ be the operation
labeling the root of $\tree{s}$.  Then $\mf{A}_{\tree{s}}
= \Xi(\mf{A}_{\tree{t}_1}, \ldots,
\mf{A}_{\tree{t}_n})$ upto isomorphism.

Given an operation tree $\tree{t}$ over $\mathsf{Op}$ and a class
$\cl{S}$ of structures, let $\tree{t}(\cl{S})$ be the
\emph{isomorphism-closed} class of structures represented by the
representation trees obtained by simply replacing the labels of the
leaf nodes of $\tree{t}$, with structures from $\cl{S}$. By extension,
given a class $\mc{V}$ of operation trees over $\mathsf{Op}$, let
$\mc{V}(\cl{S}) = \bigcup_{\tree{t} \in \mc{V}} \tree{t}(\cl{S})$. The
class $\mc{V}(\cl{S})$ is then isomorphism-closed as well.  If
$\mc{V}$ is finite, then
Theorem~\ref{theorem:ebsp-gebsp-and-transductions} and
Lemma~\ref{lemma:lebsp-closure-under-set-theoretic-ops}(\ref{lemma:lebsp-closure-under-set-theoretic-ops:part-2})
show that $\lebsp{\cdot}{\cdot}$ property of $\cl{S}$ remains
preserved under $\mc{V}$, where $\mc{V}$ is seen as a transformation
of a class of structures.  While we are yet to investigate what
happens if $\mc{V}$ is an arbitrary infinite class, we show below that
if $\mc{V}$, seen as a language of ordered ranked trees over
$\mathsf{Op} \cup \{\diamond\}$, is regular (in the sense of
regularity used in the literature for ordered ranked trees), then the
truth of $\lebsp{\cdot}{0}$ is preserved in going from $\cl{S}$ to
$\mc{V}(\cl{S})$, provided that the operations in $\mathsf{Op}$
satisfy the additional properties of ``monotonicity'' and
``$\lequiv{m}$-preservation'' that we define below.

An $n$-ary operation $\Xi$ is said to be \emph{monotone} if for all
structures $\mf{A}_1, \ldots, \mf{A}_n$, we have $\mf{A}_i$ is
(isomorphically) embeddable in $\Xi(\mf{A}_1, \ldots, \mf{A}_n)$. We
say $\Xi$ is \emph{$\lequiv{m}$-preserving} if for all structures
$\mf{A}_1, \ldots, \mf{A}_n, \mf{B}_1, \ldots, \mf{B}_n$, it is the
case that if $\mf{A}_i \lequiv{m} \mf{B}_i$ for all $i \in \{1,
\ldots, n\}$, then $\Xi(\mf{A}_1, \ldots, \mf{A}_n) \lequiv{m}
\Xi(\mf{B}_1, \ldots, \mf{B}_n)$. The operations of disjoint union, join and
across connect seen in
Section~\ref{subsection:closure-under-operations-implemented-using-translation-schemes}
are monotone and $\mequiv{m}$-preserving, while each of the products
mentioned in
Section~\ref{subsection:closure-under-operations-implemented-using-translation-schemes},
like cartesian, tensor, lexicographic and strong products, is monotone
and $\fequiv{m}$-preserving.  The central result of this section can
now be stated as follows.

\begin{theorem}\label{theorem:lebsp(S,0)-pres-under-MSO-def-op-tree-lang}
Let $\mathsf{Op}$ be a finite set of operations, where each operation
in $\mathsf{Op}$ is quantifier-free, monotone and
$\lequiv{m}$-preserving. Let $\mc{V}$ be a class of operation trees
over $\mathsf{Op}$, that is regular. Let $\cl{S}$ be a class of
structures. If $\lebsp{\cl{S}}{0}$ is true, then so is
$\lebsp{\mc{V}(\cl{S})}{0}$. Further, if $\lebsp{\cl{S}}{0}$ has a
computable witness function, then so does $\lebsp{\mc{V}(\cl{S})}{0}$.
\end{theorem}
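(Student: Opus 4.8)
The plan is to prove $\lebsp{\mc{V}(\cl{S})}{0}$ in two phases — first shrinking the structures that fill the leaves of an operation tree, and then shrinking the operation tree itself — and finally to combine the two reductions using transitivity of $\lequiv{m}$ and the isomorphism-closure of $\mc{V}(\cl{S})$. Fix $m \in \mathbb{N}$ and a structure $\mf{A} \in \mc{V}(\cl{S})$; then $\mf{A} \cong \mf{A}_{\tree{s}_0}$ for some filled operation tree $\tree{s}_0$ whose underlying operation tree $\tree{t} \in \mc{V}$ and whose leaves are filled with structures $\mf{C}_1, \ldots, \mf{C}_r \in \cl{S}$. Since $k = 0$, there are no tuples to track, which is what lets the two reductions proceed cleanly. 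I will use freely that each operation $\Xi \in \mathsf{Op}$, being quantifier-free, \emph{preserves the substructure relation}: if $\mf{B}_i \subseteq \mf{A}_i$ for all $i$, then $\Xi(\mf{B}_1, \ldots, \mf{B}_n) \subseteq \Xi(\mf{A}_1, \ldots, \mf{A}_n)$; this follows by composing Lemma~\ref{lemma:facts-about-disj-unions} (or Lemma~\ref{lemma:facts-about-n-copies}) with Lemma~\ref{lemma:qt-free-transductions-preserve-substructure-prop}.

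Phase 1 (shrinking leaves). For each leaf structure $\mf{C}_i$, the property $\lebsp{\cl{S}}{0}$ supplies a structure $\mf{C}_i' \in \cl{S}$ with $\mf{C}_i' \subseteq \mf{C}_i$, $|\mf{C}_i'| \leq \witfn{\cl{S}}{0}{\mc{L}}(m)$, and $\mf{C}_i' \lequiv{m} \mf{C}_i$. Let $\tree{s}_1$ be the filled operation tree obtained from $\tree{s}_0$ by replacing each $\mf{C}_i$ with $\mf{C}_i'$; its underlying operation tree is still $\tree{t} \in \mc{V}$ and its leaves lie in $\cl{S}$, so $\mf{A}_{\tree{s}_1} \in \mc{V}(\cl{S})$. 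Propagating substructure-preservation of the operations up $\tree{t}$ gives $\mf{A}_{\tree{s}_1} \subseteq \mf{A}_{\tree{s}_0} = \mf{A}$, and propagating $\lequiv{m}$-preservation up $\tree{t}$ gives $\mf{A}_{\tree{s}_1} \lequiv{m} \mf{A}$.

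Phase 2 (shrinking the tree). Since $\mc{V}$ is regular, fix a deterministic bottom-up tree automaton $\mc{M}$ with state set $Q$ recognising $\mc{V}$. To each node $a$ of $\tree{s}_1$ assign the \emph{combined type} $(q_a, \delta_a)$, where $q_a \in Q$ is the state reached by $\mc{M}$ on the operation subtree at $a$, and $\delta_a$ is the $\lequiv{m}$-class of $\mf{A}_{(\tree{s}_1)_{\ge a}}$. Both components propagate bottom-up deterministically — the automaton state by definition, and $\delta_a$ because $\lequiv{m}$-preservation makes the $\lequiv{m}$-class of $\Xi(\ldots)$ a function of $\Xi$ and the children's classes. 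As the number of combined types is at most $|Q| \cdot \Lambda_{\mc{V}(\cl{S}), \mc{L}}(m)$, finite by Proposition~\ref{prop:finite-index-of-lequiv(m)-relation}, I carry out the height-reduction argument of Theorem~\ref{theorem:abstract-tree-theorem}(\ref{theorem:abstract-tree-theorem-height-reduction}) but with the combined type in place of the plain $\lequiv{m}$-type: whenever a root-to-leaf path exceeds this bound there are ancestor--descendant nodes $a, b$ with $(q_a, \delta_a) = (q_b, \delta_b)$, and I replace $(\tree{s}_1)_{\ge a}$ by $(\tree{s}_1)_{\ge b}$. The equality $q_a = q_b$ is exactly the pumping condition that keeps the underlying operation tree inside $\mc{V}$; the equality $\delta_a = \delta_b$ together with transfer property~\ref{A.1.b} (which holds here by $\lequiv{m}$-preservation) preserves $\lequiv{m}$; and monotonicity (property~\ref{B.1}, iterated to give $\mf{A}_{(\tree{s}_1)_{\ge b}} \hookrightarrow \mf{A}_{(\tree{s}_1)_{\ge a}}$) together with transfer property~\ref{A.1.a} yields embeddability. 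Iterating produces a filled operation tree $\tree{s}_2$ with underlying tree in $\mc{V}$ and height bounded by a function of $m$, such that $\mf{A}_{\tree{s}_2} \in \mc{V}(\cl{S})$, $\mf{A}_{\tree{s}_2} \hookrightarrow \mf{A}_{\tree{s}_1}$, and $\mf{A}_{\tree{s}_2} \lequiv{m} \mf{A}_{\tree{s}_1}$. The main obstacle is precisely this simultaneous control: the cut must respect regularity, $\lequiv{m}$-equivalence, and embeddability at once, which forces the type to be the \emph{product} of the automaton state and the $\lequiv{m}$-class, and makes the observation that this product still propagates compositionally the crux of the argument.

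To finish, operation trees are ranked, so $\tree{s}_2$ has degree bounded by the maximal arity $d$ of $\mathsf{Op}$; being of bounded height and degree it has boundedly many nodes, hence boundedly many leaves, each filled by a structure of size at most $\witfn{\cl{S}}{0}{\mc{L}}(m)$. Since each operation is size-effective (a disjoint-sum or copy followed by a $t$-dimensional translation scheme has output size bounded by a computable function of the input sizes), $|\mf{A}_{\tree{s}_2}|$ is bounded by a computable function $\witfn{\mc{V}(\cl{S})}{0}{\mc{L}}(m)$ of $m$. Finally $\mf{A}_{\tree{s}_2} \hookrightarrow \mf{A}_{\tree{s}_1} \subseteq \mf{A}$, so $\mf{A}_{\tree{s}_2}$ is isomorphic to some $\mf{B} \subseteq \mf{A}$; as $\mc{V}(\cl{S})$ is isomorphism-closed, $\mf{B} \in \mc{V}(\cl{S})$, and by transitivity $\mf{B} \lequiv{m} \mf{A}_{\tree{s}_2} \lequiv{m} \mf{A}_{\tree{s}_1} \lequiv{m} \mf{A}$. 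Thus $\lebspcond(\mc{V}(\cl{S}), \mf{A}, \mf{B}, 0, m, \mathsf{null}, \witfn{\mc{V}(\cl{S})}{0}{\mc{L}})$ holds, establishing $\lebsp{\mc{V}(\cl{S})}{0}$; and all the quantities above ($|Q|$, $\Lambda_{\mc{V}(\cl{S}), \mc{L}}(m)$, $d$, the size-effectiveness bounds, and $\witfn{\cl{S}}{0}{\mc{L}}$) are computable when $\witfn{\cl{S}}{0}{\mc{L}}$ is and $\mc{M}$ is given, yielding a computable witness function.
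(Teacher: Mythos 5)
Your proof is correct and follows essentially the same route as the paper's: the paper likewise reduces to a bounded-height operation tree by pumping on the product of an $\lequiv{m}$-invariant for the represented structure and a regularity invariant for the underlying operation tree, and then bounds the structures at the leaves using $\lebsp{\cl{S}}{0}$. The only differences are presentational — the paper does the tree reduction first and the leaf reduction second, and it tracks regularity through the $\mequiv{n}$-class of the operation tree (a second representation map fed into Theorem~\ref{theorem:abstract-tree-theorem}, justified by the MSO composition lemma for ordered trees) rather than through an explicit bottom-up automaton state; your automaton formulation is an equivalent bookkeeping device that even spares you the paper's small $m \ge 2$ caveat.
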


\begin{proof}
We assume familiarity with the notions and results of
Section~\ref{section:abstract-tree-theorem} for the present proof.

Let $\cl{S}_1 = \bigcup_{\tree{t} \in \cl{S}_2} \tree{t}(\cl{S})$,
where $\cl{S}_2$ be the class of all operation trees over
$\mathsf{Op}$. Then $\cl{S}$ and $\mc{V}$ are resp. subclasses of
$\cl{S}_1$ and $\cl{S}_2$.  Let $\sigmaint = \mathsf{Op}$ and
$\sigmaleaf = \{\mf{A} \mid \mathsf{U}_{\mf{A}} \subseteq \mathbb{N},
\mf{A} \cong \mf{B}, \mf{B} \in \cl{S}\}$. Observe that $\sigmaleaf$
is countable.  Let $\rho: \mathsf{Op} \rightarrow \mathbb{N}$ be such
that $\rho(\Xi)$ is the arity of $\Xi$.  Let $\mc{T}$ be the class of
all representation-feasible trees over $\sigmaint \cup \sigmaleaf$,
that are ranked by $\rho$; then $\mc{T}$ is closed under rooted
subtrees and under replacements with rooted subtrees.

We now construct two representation maps $\mathsf{Str}_i: \mc{T}
\rightarrow \cl{S}_i$ for $i \in \{1, 2\}$ such that for $\tree{s} \in
\mc{T}$, $\stri{1}{\tree{s}}$ is the structure $\mf{A}_{\tree{s}}$
represented by $\tree{s}$ (as defined earlier), while
$\stri{2}{\tree{s}}$ is the operation tree corresponding to $\tree{s}$
(i.e. the tree obtained by simply replacing the leaf nodes of
$\tree{s}$ with $\diamond$).  We now observe the following.
\begin{enumerate}[nosep]
\item The map $\mathsf{Str}_1$ satisfies conditions \ref{B.1} and
  \ref{A.1.b} of Section~\ref{section:abstract-tree-theorem}, for each
  $m \in \mathbb{N}$, because each operation in $\mathsf{Op}$ is
  assumed to be monotone and $\lequiv{m}$ preserving. That
  $\mathsf{Str}_1$ also satisfies \ref{A.1.a} is seen by observing
  that each operation in $\mathsf{Op}$ is implementable using a
  quantifier-free translation scheme (see the paragraph before
  Lemma~\ref{lemma:lebsp-pres-under-n-disj-sum-and-n-copy} for the
  precise meaning of implementability using quantifier-free
  translation schemes), and then using
  Lemmas~\ref{lemma:facts-about-disj-unions},~\ref{lemma:facts-about-n-copies}
  and~\ref{lemma:qt-free-transductions-preserve-substructure-prop}. Whereby,
  $\mathsf{Str}_1$ is $\mc{L}$-height-reduction favourable.
\item The map $\mathsf{Str}_2$ is easily seen to satisfy conditions
  \ref{A.1.a} and \ref{B.1}. That it also satisfies \ref{A.1.b} for
  $\mc{L} = \mso$ and for all $m \ge 2$ follows from the $\mso$
  composition lemma for ordered trees (see
  Lemma~\ref{lemma:mso-composition-lemma-for-ordered-trees}). Whereby,
  $\mathsf{Str}_2$ is $\mso$-height-reduction favourable.
\end{enumerate}

Let $\mf{A}$ be a structure in $\mc{V}(\cl{S})$, and let $m \ge 2$. We
show below the existence of a structure $\mf{B}$ such that
$\lebspcond(\mc{V}(\cl{S}), \mf{A}, \mf{B}, 0, m, \mathsf{null},
\witfn{\mc{V}(\cl{S})}{0}{\mc{L}})$ holds, where $\mathsf{null}$ is
the empty tuple and $\witfn{\mc{V}(\cl{S})}{0}{\mc{L}}$ is a function
from $\mathbb{N}$ to $\mathbb{N}$ such that
$\witfn{\mc{V}(\cl{S})}{0}{\mc{L}}(p) =
\witfn{\mc{V}(\cl{S})}{0}{\mc{L}}(2)$ for $p \leq 2$. It is obvious
then that $\lebspcond(\mc{V}(\cl{S}), \mf{A}, \mf{B}, 0, p,
\mathsf{null}, \witfn{\mc{V}(\cl{S})}{0}{\mc{L}})$ holds for $p \leq
2$. Then $\lebsp{\mc{V}(\cl{S})}{0}$ holds with
$\witfn{\mc{V}(\cl{S})}{0}{\mc{L}}$ being a witness function.

Since $\mf{A} \in \mc{V}(\cl{S})$, there exists $\tree{t} \in \mc{T}$
such that $\stri{1}{\tree{t}} \cong \mf{A}$ and $\stri{2}{\tree{t}}
\in \mc{V}$.  Since $\mc{V}$ is regular, it is defined by an MSO
sentence $\varphi$ (the sentence $\varphi$ exists since regularity
corresponds to MSO definability for ordered ranked trees; see
Section~\ref{section:words-trees-nestedwords}).  Then
$\stri{2}{\tree{t}} \models \varphi$. Let the rank of $\varphi$ be
$n$. Since $\mathsf{Str}_1$ is $\mc{L}$-height-reduction favourable
and $\mathsf{Str}_2$ is $\mso$-height-reduction favourable, we have by
Theorem~\ref{theorem:abstract-tree-theorem}, that there is a
computable function $\eta_2:\mathbb{N} \times \mathbb{N} \rightarrow
\mathbb{N}$ and a subtree $\tree{s}_2$ of $\tree{t}$ in $\mc{T}$, such
that (i) the height of $\tree{s}_2$ is at most $\eta_2(m, n)$, (ii)
$\stri{1}{\tree{s}_2} \hookrightarrow \stri{1}{\tree{t}}$, (iii)
$\stri{1}{\tree{s}_2} \lequiv{m} \stri{1}{\tree{t}}$, and (iv)
$\stri{2}{\tree{s}_2} \mequiv{n} \stri{2}{\tree{t}}$. Since
$\stri{2}{\tree{t}} \models \varphi$, we have $\stri{2}{\tree{s}_2}
\models \varphi$ whereby $\stri{2}{\tree{s}_2} \in \mc{V}$.

Now since $\lebsp{\cl{S}}{0}$ is true, we have for each structure
$\mf{C} \in \cl{S}$, a structure $\mf{C}' \in \cl{S}$ such that (i)
$\mf{C}' \subseteq \mf{C}$ (ii) $|\mf{C}'| \leq
\witfn{\cl{S}}{0}{\mc{L}}(m)$ and (iii) $\mf{C}' \lequiv{m} \mf{C}$,
where $\witfn{\cl{S}}{0}{\mc{L}}$ is a witness function for
$\lebsp{\cl{S}}{0}$.  Let $\tree{s}_1 \in \mc{T}$ be the tree obtained
from $\tree{s}_2$ by replacing each structure $\mf{C}$ labeling a leaf
of $\tree{s}_2$ with the structure $\mf{C}'$ described above.  Since
$\mathsf{Str}_1$ satisfies conditions \ref{A.1.a} and \ref{A.1.b} for
each $m \ge 2$, one can verify that (i) $\stri{1}{\tree{s}_1}
\hookrightarrow \stri{1}{\tree{s}_2} \hookrightarrow
\stri{1}{\tree{t}} \cong \mf{A}$, and (ii) $\stri{1}{\tree{s}_1}
\lequiv{m} \stri{1}{\tree{s}_2} \lequiv{m} \stri{1}{\tree{t}}$. It is
clear that $\stri{1}{\tree{s}_1} \in \mc{V}(\cl{S})$ since
$\stri{2}{\tree{s}_1} = \stri{2}{\tree{s}_2} \in \mc{V}$. Let $\mf{B}$
be the substructure of $\mf{A}$ such that $\mf{B} \cong
\stri{1}{\tree{s}_1}$. Then from the above discussion, we have $\mf{B}
\lequiv{m} \mf{A}$ and $\mf{B} \in \mc{V}(\cl{S})$.  We now show that
$\mf{B}$ is of bounded size. Let $d$ be the maximum arity of any
operation in $\mathsf{Op}$, and $t$ be the maximum of the dimensions
of the translation schemes implementing the operations in
$\mathsf{Op}$ (see the definition of dimension in
Section~\ref{section:background-FMT-translation-schemes}). Recall that
(i) the height of $\tree{s}_1$ is the same as the height of
$\tree{s}_2$ which in turn is at most $\eta_2(m, n)$, and (ii) the
size of any structure labeling a leaf node of $\tree{s}_1$ is at most
$\witfn{\cl{S}}{0}{\mc{L}}(m)$.  Then $|\mf{B}| =
|\stri{1}{\tree{s}_1}| \leq \witfn{\mc{V}(\cl{S})}{0}{\mc{L}}(m) =
f(0)$ where for $0 \leq j \leq \eta_2(m, n)$, $f(j)$ is as defined
below.
\[
f(j)  =  \left\{
\begin{array}{ll}
\witfn{\cl{S}}{0}{\mc{L}}(m) & \mbox{if}~ j = \eta_2(m, n)\\
(d \cdot f(j+1))^t & \mbox{if}~ j < \eta_2(m, n)\\
\end{array}  
\right.
\]

It is now easy to verify that $\lebspcond(\mc{V}(\cl{S}), \mf{A},
\mf{B}, 0, m, \mathsf{null}, \witfn{\mc{V}(\cl{S})}{0}{\mc{L}})$ is
true, where $\mathsf{null}$ denotes the empty tuple. Define
$\witfn{\mc{V}(\cl{S})}{0}{\mc{L}}(p) =
\witfn{\mc{V}(\cl{S})}{0}{\mc{L}}(2)$ for $p \leq 2$.  Then as
reasoned earlier, we have that $\lebsp{\mc{V}(\cl{S})}{0}$ holds with
$\witfn{\mc{V}(\cl{S})}{0}{\mc{L}}$ being a witness function. One can
see that if $\witfn{\cl{S}}{0}{\mc{L}}$ is computable, then so is
$\witfn{\mc{V}(\cl{S})}{0}{\mc{L}}$.
\end{proof}

Using the above theorem, we show below that each of the following
classes, that motivated this section, satisfies $\lebsp{\cdot}{0}$:
the class of hamming graphs of the $n$-clique, and the class of all
$p$-dimensional grid posets where $p$ belongs to an $\mso$ definable
(using a linear order) class of natural numbers.
\begin{enumerate}[leftmargin=*,nosep]
\item Let $\cl{S}$ be a class consisting of only the $n$-clique upto
  isomorphism. Let $\mathsf{Op} = \{ \times\}$ where $\times$ denotes
  cartesian product. Let $\mc{V}$ be the class of all trees over
  $\mathsf{Op}$; clearly $\mc{V}$ is defined by the sentence $\true$,
  and is trivially regular. Observe that the class $\mc{V}(\cl{S})$ is
  exactly the class of all hamming graphs of the $n$-clique. Since
  $\cl{S}$ is finite, $\lebsp{\cl{S}}{0}$, and hence
  $\febsp{\cl{S}}{0}$, is true with a computable witness function (see
  Chapter~\ref{chapter:lebsp}). Since $\times$ is quantifier-free,
  monotone and $\fequiv{m}$-preserving, we have by
  Theorem~\ref{theorem:lebsp(S,0)-pres-under-MSO-def-op-tree-lang}
  that $\febsp{\mc{V}(\cl{S})}{0}$ is true with a computable witness
  function.
\item Let $\cl{S}$ be the class of all linear orders. Let $\mathsf{Op}
  = \{\times\}$. Let $\mc{U}$ be the class of all operation-trees over
  $\mathsf{Op}$ in which each internal node has exactly two children,
  at least one of which is a leaf. It is easy to see that any tree in
  $\mc{U}$ has a ``spine'' consisting of the internal nodes of the
  tree. Let $\mc{V}$ be any $\mso$ definable (over the class of all
  trees over $\mathsf{Op}$) subclass of $\mc{U}$ (like for instance,
  the class of all trees of $\mc{U}$ having a spine of even
  length). Then $\mc{V}$ is clearly regular.  Since $\mathsf{Op}$ is a
  singleton, we can identify $\mc{V}$ with a set $Z$ of natural
  numbers that is definable in $\mso$ using a linear order.  Then
  $\mc{V}(\cl{S})$ can be seen as the class of all $p$-dimensional
  grid posets where $p \in Z$. Since $\lebsp{\cl{S}}{0}$, and hence
  $\febsp{\cl{S}}{0}$, is true with a computable witness function (by
  Theorem~\ref{theorem:words-and-trees-and-nested-words-satisfy-lebsp}),
  it follows from
  Theorem~\ref{theorem:lebsp(S,0)-pres-under-MSO-def-op-tree-lang},
  that $\febsp{\mc{V}(\cl{S})}{0}$ is also true with a computable
  witness function.
\end{enumerate}

One can ask what happens to
Theorem~\ref{theorem:lebsp(S,0)-pres-under-MSO-def-op-tree-lang} for
$k > 0$. From the very special cases we have managed to solve so far,
we believe that new techniques would be necessary, in addition to the ones
currently employed in proving 
Theorem~\ref{theorem:lebsp(S,0)-pres-under-MSO-def-op-tree-lang}.

\chapter{Additional studies on $\lebsp{\cdot}{k}$}\label{chapter:additional-studies-on-lebsp}
\vspace{-12pt}
\section{$\lebsp{\cl{S}}{k}$ and the decidability of  $\mc{L}$-$\mathsf{Th}(\cl{S})$}\label{section:lebsp-and-SAT}

Denote by $\mc{L}$-$\mathsf{Th}(\cl{S})$ the $\mc{L}$-theory of
$\cl{S}$, i.e. the set of all $\mc{L}$ sentences that are true in all
structures of $\cl{S}$. We have the following result.

\begin{lemma}\label{lemma:lebsp-and-lSAT}
Let $\cl{S}$ be a class of structures such that $\lebsp{\cl{S}}{k}$
holds for some $k \in \mathbb{N}$. If there exists a computable
witness function for $\lebsp{\cl{S}}{k}$, then
$\mc{L}$-$\mathsf{Th}(\cl{S})$ is decidable.
\end{lemma}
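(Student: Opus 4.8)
The plan is to decide membership in $\mc{L}$-$\mathsf{Th}(\cl{S})$ by using $\lebsp{\cl{S}}{k}$ to collapse the (in general infinite) class $\cl{S}$ to a finite, effectively bounded family of small structures on which the input sentence can be directly model-checked. Fix a computable witness function $\witfn{\cl{S}}{k}{\mc{L}}$ for $\lebsp{\cl{S}}{k}$; by Remark~\ref{remark:lebsp-witness-functions} I may assume it is monotone. Given an input $\mc{L}$ sentence $\varphi$, I would first compute its quantifier rank $m = \rank{\varphi}$ and then the bound $p = \witfn{\cl{S}}{k}{\mc{L}}(m)$, which is effective since the witness function is computable.

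The core of the argument is the equivalence
$$\varphi \in \mc{L}\text{-}\mathsf{Th}(\cl{S}) \quad\Longleftrightarrow\quad \mf{B} \models \varphi \ \text{ for every } \mf{B} \in \cl{S} \text{ with } |\mf{B}| \leq p.$$
The left-to-right direction is immediate. For the converse I would take an arbitrary $\mf{A} \in \cl{S}$, pick any $k$-tuple $\bar{a}$ of $\mf{A}$, and apply Definition~\ref{defn:lebsp} to obtain a structure $\mf{B}$ with $\lebspcond(\cl{S}, \mf{A}, \mf{B}, k, m, \bar{a}, \witfn{\cl{S}}{k}{\mc{L}})$; this yields $\mf{B} \in \cl{S}$, $|\mf{B}| \leq p$, and, as a consequence of condition (v), $\mf{B} \lequiv{m} \mf{A}$. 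Since $\varphi$ holds on all size-$\leq p$ members of $\cl{S}$, it holds on $\mf{B}$, and since $\rank{\varphi} = m$ and $\mf{B} \lequiv{m} \mf{A}$, it follows that $\mf{A} \models \varphi$; as $\mf{A}$ was arbitrary, $\varphi \in \mc{L}\text{-}\mathsf{Th}(\cl{S})$. Note that the free-variable refinement controlled by $k$ is never exploited here: only the size bound and the induced $m$-equivalence $\mf{B} \lequiv{m} \mf{A}$ are used, so the argument in fact runs through $\lebsp{\cl{S}}{0}$, which $\lebsp{\cl{S}}{k}$ trivially entails by discarding $\bar{a}$. The decision procedure then enumerates the finitely many isomorphism types of structures of size at most $p$ over the fixed finite relational vocabulary, retains those lying in $\cl{S}$, and checks $\mf{B} \models \varphi$ on each, accepting iff all of them satisfy $\varphi$; both the enumeration and the model-checking of a fixed finite structure terminate.

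The main obstacle, and the only step that is not purely mechanical, is guaranteeing that the finite family $\{\mf{B} \in \cl{S} : |\mf{B}| \leq p\}$ can be produced \emph{effectively}: a computable size bound by itself isolates the relevant structures only up to size, not up to membership in $\cl{S}$. I would discharge this exactly as in the effective version of Theorem~\ref{theorem:lebsp-implies-glt(k)}, where the sentences $\xi_{\cl{S}, n}$ defining the size-$\leq n$ members of $\cl{S}$ are taken to be available; equivalently, I rely on membership in $\cl{S}$ being decidable on finite structures, which holds for all the concrete classes (words, trees, nested words, $n$-partite cographs) to which this lemma is applied. Under this mild effectivity hypothesis the retained subfamily is computable from $p$, and the procedure above decides $\mc{L}$-$\mathsf{Th}(\cl{S})$.
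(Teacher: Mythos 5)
Your proposal is correct and follows essentially the same route as the paper's proof: compute the rank $m$, compute $p = \witfn{\cl{S}}{k}{\mc{L}}(m)$, use the $\lebsp$ condition (only the size bound and $\mf{B} \lequiv{m} \mf{A}$, i.e.\ effectively the $k=0$ case) to reduce truth over all of $\cl{S}$ to truth over the members of $\cl{S}$ of size at most $p$, and then model-check; the paper merely phrases this dually, checking whether $\neg\varphi$ has a model in $\cl{S}$ of size $\leq p$. Your explicit flagging of the need to effectively enumerate the size-$\leq p$ members of $\cl{S}$ (i.e.\ decidability of membership in $\cl{S}$ on finite structures) is a fair point that the paper's proof leaves implicit in its enumeration step, and it does not affect the correctness of your argument.
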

\begin{proof}
Let $\varphi$ be an $\mc{L}$ sentence of rank $m$.  Let $\psi = \neg
\varphi$ be the negation of $\varphi$; then $\psi$ has rank $m$ as
well.  Suppose $\psi$ is satisfied in a structure $\mf{A} \in
\cl{S}$. Then since $\lebsp{\cl{S}}{k}$ is true, for any $k$-tuple
$\bar{a}$ from $\mf{A}$, there exists a structure $\mf{B}$ such that
$\lebspcond(\cl{S}, \mf{A}, \mf{B}, k, m, \bar{a},
\witfn{\cl{S}}{k}{\mc{L}})$ is true, where $\witfn{\cl{S}}{k}{\mc{L}}$
is a witness function for $\lebsp{\cl{S}}{k}$. Then, (i) $\mf{B} \in
\cl{S}$, (ii) $|\mf{B}| \leq \witfn{\cl{S}}{k}{\mc{L}}(m)$, and (iii)
$\mf{B} \lequiv{m} \mf{A}$. Whereby $\mf{B} \models \psi$ since
$\mf{A} \models \psi$ and the rank of $\psi$ is $m$.  Thus, if $\psi$
is satisfiable over $\cl{S}$, it is satisfied in a structure of
$\cl{S}$, of size $\leq \witfn{\cl{S}}{k}{\mc{L}}(m)$.  Whereby, if
$\witfn{\cl{S}}{k}{\mc{L}}$ is a computable function, the following
algorithm $\mc{A}$ decides membership in
$\mc{L}$-$\mathsf{Th}(\cl{S})$.

\ul{Algorithm $\mc{A}$:}
\begin{enumerate}[nosep]
\item Compute the rank $m$ of the input sentence $\varphi$, and
  compute the number $p = \witfn{\cl{S}}{k}{\mc{L}}(m)$.
\item Enumerate all the finitely many structures $\mf{C}$ in $\cl{S}$
  of size $\leq p$, and check if the sentence $\psi = \neg \varphi$ is
  true in all of them. Checking if $\psi$ is true in $\mf{C}$ is
  effective since $\mf{C}$ is finite.
\item If some structure $\mf{C}$ is found satisfying $\psi$ in the
  previous step, then output ``No'', else output ``Yes''.
\end{enumerate}
It is clear that $\mc{A}$ indeed decides $\mc{L}$-$\mathsf{Th}(\cl{S})$. 
\end{proof}

As seen in Chapter~\ref{chapter:classes-satisfying-lebsp}, a wide
array of classes $\cl{S}$ satisfy either $\febsp{\cl{S}}{k}$ or
$\mebsp{\cl{S}}{k}$ with computable witness functions, whereby
FO-$\mathsf{Th}(\cl{S})$ or MSO-$\mathsf{Th}(\cl{S})$ resp., is
decidable.

\section{$\lebsp{\cdot}{k}$ and well-quasi-ordering under embedding}\label{section:wqo-and-lebsp}



A pre-order $(A, \leq)$ is said to be a \emph{well-quasi-order}
(w.q.o.)\sindex[term]{well-quasi-order} if for every infinite sequence
$a_1, a_2, \ldots$ of elements of $A$, there exists $i < j$ such that
$a_i \leq a_j$ (see \cite{diestel}).  If $(A, \leq)$ is a w.q.o., we
say that ``$A$ is a w.q.o. under $\leq$''.  An elementary fact is that
if $A$ is a w.q.o. under $\leq$, then for every infinite sequence
$a_1, a_2, \ldots$ of elements of $A$, there exists an infinite
subsequence $a_{i_1}, a_{i_2}, \ldots$ such that $i_1 < i_2 < \ldots$
and $a_{i_1} \leq a_{i_2} \leq \ldots$.

Given a vocabulary $\tau$ and $k \in \mathbb{N}$, let $\tau_k$ be as
usual, the vocabulary obtained by expanding $\tau$ with $k$ fresh and
distinct constant symbols.  Let $\cl{S}$ be a class of
$\tau$-structures.  Denote by $\cl{S}^k$\sindex[symb]{$\cl{S}^k$} the
class of all $\tau_k$-structures whose $\tau$-reducts are structures
in $\cl{S}$.  Observe that $(\cl{S}^k, \hookrightarrow)$ is a
pre-order.  We now define the property $\wqo{\cl{S}}{k}$ via the
notion of w.q.o. mentioned above.

\begin{defn}\label{defn:pwqo} 
We say that \emph{$\wqo{\cl{S}}{k}$ holds} if $(\cl{S}^k,
\hookrightarrow)$ is a well-quasi-order.
\end{defn}

A simple example of a class $\cl{S}$ of structures satisfying
$\wqo{\cl{S}}{k}$ for every $k \in \mathbb{N}$ is a finite class of
finite structures. The celebrated results such as Higman's lemma and
Kruskal's tree theorem~\cite{diestel} state that
$\wqo{{\words}(\Sigma)}{0}$ and $\wqo{{\unorderedtrees}(\Sigma)}{0}$
respectively hold. Also, the results in~\cite{shrub-depth} show that
$\wqo{\ncographs}{0}$ holds, for every $n \in \mathbb{N}$, where
$\ncographs$ is the class of all $n$-partite cographs.

\emph{A priori}, there is no reason to expect any relation between the
$\wqo{\cdot}{k}$ and $\lebsp{\cdot}{k}$ properties.  Surprisingly, we
have the following result.




\begin{theorem}\label{theorem:wqo-implies-lebsp}
Let $\cl{S}$ be a class of structures that is closed under
isomorphisms, and let $k \in \mathbb{N}$. If $\wqo{\cl{S}}{k}$ holds,
then so does $\lebsp{\cl{S}}{k}$.
\end{theorem}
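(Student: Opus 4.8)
The goal is to show that if $\wqo{\cl{S}}{k}$ holds---i.e. $(\cl{S}^k, \hookrightarrow)$ is a well-quasi-order---then $\lebsp{\cl{S}}{k}$ holds, meaning there is a witness function $\witfn{\cl{S}}{k}{\mc{L}}$ bounding the size of an $(m, \mc{L})$-similar substructure around any $k$-tuple. The natural strategy is a proof by contradiction that extracts from a failure of $\lebsp{\cl{S}}{k}$ an infinite $\hookrightarrow$-antichain (or, more precisely, an infinite bad sequence) in $\cl{S}^k$, contradicting the w.q.o. assumption. First I would recall that by Proposition~\ref{prop:finite-index-of-lequiv(m)-relation} the index of $\lequiv{m}$ is finite for each fixed $m$, so there are only finitely many $(m, \mc{L})$-types of $\tau_k$-structures; this finiteness is what lets me play the types off against the w.q.o.

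\textbf{Setting up the contradiction.}
Suppose $\lebsp{\cl{S}}{k}$ fails. Then no function works as a witness function, so in particular for \emph{some} fixed $m \in \mathbb{N}$ there is no finite bound $N$ such that every structure $\mf{A} \in \cl{S}$ with a $k$-tuple $\bar{a}$ admits a substructure $\mf{B} \subseteq \mf{A}$ containing $\bar{a}$, lying in $\cl{S}$, of size $\leq N$, with $\ltp{\mf{B}}{\bar{a}}{m}(\bar{x}) = \ltp{\mf{A}}{\bar{a}}{m}(\bar{x})$. Hence I can choose a sequence of pairs $(\mf{A}_i, \bar{a}_i)$ in $\cl{S}^k$ (viewing $(\mf{A}_i, \bar{a}_i)$ as a $\tau_k$-structure) such that for each $i$, \emph{every} substructure of $(\mf{A}_i, \bar{a}_i)$ in $\cl{S}^k$ that is $(m, \mc{L})$-equivalent to $(\mf{A}_i, \bar{a}_i)$ has size $> i$. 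The plan is to show this sequence is ``bad'' for $\hookrightarrow$ restricted to each $\lequiv{m}$-class. Since there are finitely many $(m, \mc{L})$-types, by pigeonhole an infinite subsequence $(\mf{A}_{i_1}, \bar{a}_{i_1}), (\mf{A}_{i_2}, \bar{a}_{i_2}), \ldots$ all share a single type $\delta$, while the minimum size of an $(m, \mc{L})$-equivalent substructure in $\cl{S}^k$ grows without bound along the subsequence.

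\textbf{Deriving the antichain.}
Now I invoke $\wqo{\cl{S}}{k}$: applied to this subsequence, it yields indices $p < q$ with $(\mf{A}_{i_p}, \bar{a}_{i_p}) \hookrightarrow (\mf{A}_{i_q}, \bar{a}_{i_q})$. The isomorphic image of $(\mf{A}_{i_p}, \bar{a}_{i_p})$ is then a substructure $\mf{B}$ of $(\mf{A}_{i_q}, \bar{a}_{i_q})$ that lies in $\cl{S}^k$ (by closure of $\cl{S}$ under isomorphisms), contains the distinguished tuple, and---since both endpoints have the \emph{same} $(m,\mc{L})$-type $\delta$---is $(m, \mc{L})$-equivalent to $(\mf{A}_{i_q}, \bar{a}_{i_q})$. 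But $\mf{B}$ has size $|(\mf{A}_{i_p}, \bar{a}_{i_p})| = i_p$, which is bounded, contradicting the requirement that every such substructure of $(\mf{A}_{i_q}, \bar{a}_{i_q})$ has size $> i_q > i_p$. This contradiction establishes that the minimum sizes are uniformly bounded, and taking $\witfn{\cl{S}}{k}{\mc{L}}(m)$ to be that bound (for each $m$) gives the witness function.

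\textbf{The main obstacle.}
The delicate point is the ordering of quantifiers in the negation of $\lebsp{\cl{S}}{k}$ and arranging that a \emph{single} $m$ with unbounded witness sizes can be isolated, rather than a diagonal across varying $m$'s. The failure of $\lebsp{\cl{S}}{k}$ could a priori be witnessed by different $m$'s with finite-but-unbounded bounds; I must argue that $\lebsp{\cl{S}}{k}$ is equivalent to the existence, \emph{for each fixed} $m$, of a finite bound $N(m)$ (after which one defines the monotone witness function via $\witfn{\cl{S}}{k}{\mc{L}}(m) = \max_{m' \le m} N(m')$ as in Remark~\ref{remark:lebsp-witness-functions}). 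So the reduction I really carry out is: fix $m$, assume $N(m)$ does not exist, and run the argument above for that one $m$. The finiteness of the number of $(m, \mc{L})$-types (Proposition~\ref{prop:finite-index-of-lequiv(m)-relation}) combined with the pigeonhole-then-w.q.o. extraction is the technical heart, and I expect verifying that the $\hookrightarrow$-embedding between same-type structures genuinely preserves the $(m, \mc{L})$-type (i.e. that an isomorphic copy carries its type) to require care but to follow directly from the definitions.
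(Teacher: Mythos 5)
Your overall strategy coincides with the paper's: negate $\lebsp{\cl{S}}{k}$ to isolate a single $m$ with unbounded minimal witness sizes, build a sequence in $\cl{S}^k$, use the finite index of $\lequiv{m}$ to pass to a same-type subsequence, and then use $\wqo{\cl{S}}{k}$ to embed an earlier term into a later one and contradict the unboundedness. The reduction to a fixed $m$, the pigeonhole on types, and the observation that an isomorphic image carries its $(m,\mc{L})$-type are all handled correctly.

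However, there is a genuine gap in the final step. You assert that the embedded copy $\mf{B}$ of $(\mf{A}_{i_p}, \bar{a}_{i_p})$ inside $(\mf{A}_{i_q}, \bar{a}_{i_q})$ has size $|\mf{A}_{i_p}| = i_p$. Nothing in your construction forces this; in fact $|\mf{A}_{i_p}| > i_p$ necessarily, since $\mf{A}_{i_p}$ is a substructure of itself, contains $\bar{a}_{i_p}$, and is trivially $\lequiv{m}$-equivalent to itself, so by your defining property its size must exceed $i_p$. What the contradiction actually requires is $|\mf{A}_{i_p}| \leq i_q$, and a single pair $p < q$ extracted from the well-quasi-order gives you no control over this: it may well be that $|\mf{A}_{i_p}| > i_q$, in which case the embedded substructure is not forbidden by the property of $\mf{A}_{i_q}$ and no contradiction arises. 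The paper avoids this by using the stronger (standard) consequence of w.q.o.-ness that every infinite sequence contains an infinite \emph{ascending chain} $\mf{A}'_{j_1} \hookrightarrow \mf{A}'_{j_2} \hookrightarrow \ldots$ within a single $\lequiv{m}$-class; one then sets $r = |\mf{A}'_{j_1}|$ and picks $n$ with $j_n \geq r$, so the image of $\mf{A}'_{j_1}$ in $\mf{A}'_{j_n}$ has size $r \leq j_n$ and the contradiction goes through. (Alternatively, you could repair your version by constructing the original sequence so that each index $i_{q}$ exceeds the sizes of all previously chosen structures, which restores the needed inequality after a single application of the w.q.o.) As written, though, the size bound on which the whole contradiction rests is unjustified.
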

\begin{proof}
We prove the result by contradiction.  Suppose, if possible,
$\wqo{\cl{S}}{k}$ holds but \linebreak$\lebsp{\cl{S}}{k}$ fails.  Then
by Definition~\ref{defn:lebsp}, there exists $m \in \mathbb{N}$ such
that for all $p \in \mathbb{N}$, there exists a structure $\mf{A}_p$
in $\cl{S}$ and a $k$-tuple $\bar{a}_p$ from $\mf{A}_p$ such that for
any structure $\mf{B} \in \cl{S}$, we have $\big((\mf{B} \subseteq
\mf{A}_p) \,\wedge\, (\bar{a}_p \in \mathsf{U}^k_{\mf{B}}) \,\wedge\,
(|\mf{B}| \leq p)\big) \,\rightarrow\, (\mf{B}, \bar{a}_p)
\not\lequiv{m} (\mf{A}_p, \bar{a}_p)$.

For each $p \geq 1$, fix the structure $\mf{A}_p$ and the tuple
$\bar{a}_p$ that satisfy the above properties. Let $\mf{A}'_p$ be the
structure $(\mf{A}_p, \bar{a}_p) \in \cl{S}^k$.  Consider the sequence
$(\mf{A}'_i)_{i \geq 1}$.  Since $\wqo{\cl{S}}{k}$ holds, $\cl{S}^k$
is a w.q.o. under $\hookrightarrow$.  Therefore, there exists an
infinite sequence $I = (i_1, i_2, \ldots)$ of indices such that $i_1 <
i_2 < \ldots$ and $\mf{A}'_{i_1} \hookrightarrow \mf{A}'_{i_2}
\hookrightarrow \ldots$.  Consider $\Delta_{\mc{L}}(m, \cl{S}^k)$ --
the set of all equivalence classes of the $\lequiv{m}$ relation over
the structures of $\cl{S}^k$. From
Proposition~\ref{prop:finite-index-of-lequiv(m)-relation}, we see that
$\Delta_{\mc{L}}(m, \cl{S}^k)$ is a finite set. Therefore, there
exists an infinite subsequence $J = (j_1, j_2, \ldots)$ of $I$ such
that (i) $j_1 < j_2 < \ldots$ (ii) $\mf{A}'_{j_1} \hookrightarrow
\mf{A}'_{j_2} \hookrightarrow \ldots$, and (iii) $\mf{A}'_{j_1},
\mf{A}'_{j_2}, \ldots$ are all in the same $\lequiv{m}$ class.  Let $r
= |\mf{A}'_{j_1}|$, and let $n > 1$ be an index such that $j_n \geq
r$.  Then $\mf{A}'_{j_1} \hookrightarrow \mf{A}'_{j_n}$ and
$\mf{A}'_{j_1} \lequiv{m} \mf{A}'_{j_n}$.  Fix an embedding $\imath:
\mf{A}'_{j_1} \hookrightarrow \mf{A}'_{j_n}$.

Recall that $\mf{A}'_{j_n} = (\mf{A}_{j_n}, \bar{a}_{j_n})$ whereby
the image of $\mf{A}'_{j_1}$ under $\imath$ is a structure $(\mf{B},
\bar{a}_{j_n})$. Then $\mf{B}$ has the following properties: (i)
$\mf{B} \in \cl{S}$, since $\mf{A}_{j_1} \in \cl{S}$ and $\cl{S}$ is
closed under isomorphisms, (ii) $\mf{B} \subseteq \mf{A}_{j_n}$, (iii)
$\bar{a}_{j_n} \in \mathsf{U}^k_{\mf{B}}$, (iv) $|\mf{B}|=
|\mf{A}'_{j_1}| = r \leq j_n$, and (v) $(\mf{B}, \bar{a}_{j_n})
\lequiv{m} (\mf{A}_{j_n}, \bar{a}_{j_n})$.  This contradicts the
property of $\mf{A}_{j_n}$ stated at the outset, completing the proof.
\end{proof}

\begin{remark}\label{remark:wqo-does-not-imply-computable-lebsp} 
The implication given by Theorem~\ref{theorem:wqo-implies-lebsp} does
not in general, imply the existence of a computable witness function
for $\lebsp{\cl{S}}{k}$. Consider the class $\cl{S}$ of two
dimensional grid posets; $\cl{S}$ can be seen to be w.q.o. under
embedding, whereby $\lebsp{\cl{S}}{0}$ holds. But if there is a
computable witness function for $\lebsp{\cl{S}}{0}$, then by
Lemma~\ref{lemma:lebsp-and-lSAT}, it follows that
$\mc{L}$-$\mathsf{Th}(\cl{S})$ is decidable. Equivalently, the
satisfiability problem for $\mc{L}$ (the problem of deciding if a
given $\mc{L}$ sentence is satisfiable) is decidable over $\cl{S}$,
for both $\mc{L} = \fo$ and $\mc{L} = \mso$. However, this contradicts
the known result that the $\mso$ satisfiability is undecidable over
two dimensional grid posets. The latter class of posets is thus an
example of a class of structures that is w.q.o. under embedding, and
hence satisfies $\mebsp{\cdot}{0}$, but for which there is no
computable witness function for $\mebsp{\cdot}{0}$.
\end{remark}

We now show that the converse to
Theorem~\ref{theorem:wqo-implies-lebsp} does not hold in general.

\begin{proposition}\label{prop:febsp-does-not-imply-wqo}
There exists a class $\cl{S}$ of structures such that
$\febsp{\cl{S}}{0}$ holds but $\wqo{\cl{S}}{0}$ fails.
\end{proposition}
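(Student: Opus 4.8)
The plan is to take for $\cl{S}$ the class of all finite structures over the vocabulary $\{<_1,<_2\}$ in which both $<_1$ and $<_2$ are interpreted as strict linear orders on the universe; up to isomorphism such a structure is exactly a finite permutation, recorded as the pattern of $<_2$ relative to $<_1$. The features I will exploit are that an induced substructure of such a structure is obtained simply by deleting elements, after which $<_1$ and $<_2$ remain total orders — so every induced substructure is again in $\cl{S}$, and deletion never manufactures new ``boundary'' features the way removing vertices from a cycle (which destroys the universal property ``minimum degree $\ge 2$'') or from a successor-path (which creates extra sources/sinks) does. Thus the obstructions that make such rigid families fail $\febsp{\cdot}{\cdot}$ are absent here. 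Under $\hookrightarrow$, embedding of these structures is precisely classical permutation-pattern containment.

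For the failure of $\wqo{\cl{S}}{0}$ I would invoke the classical fact that permutation-pattern containment is \emph{not} a well-quasi-order: explicit infinite antichains of permutations are known. Fixing one such antichain $\mf{A}_1,\mf{A}_2,\dots$ and noting $\cl{S}$ is isomorphism-closed, $(\cl{S}^0,\hookrightarrow)$ contains an infinite antichain, so it is not a well-quasi-order and $\wqo{\cl{S}}{0}$ fails. This direction is immediate once the antichain is cited or exhibited.

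The substance is proving $\febsp{\cl{S}}{0}$. Here I would show directly that for each $m$ every permutation $\mf{A}$ has an induced sub-permutation $\mf{B}\subseteq\mf{A}$ with $|\mf{B}|\le\witfn{\cl{S}}{0}{\fo}(m)$ and $\mf{B}\fequiv{m}\mf{A}$; since $\mf{B}\in\cl{S}$ automatically and no tuple is singled out ($k=0$), this is exactly $\lebspcond(\cl{S},\mf{A},\mf{B},0,m,\mathsf{null},\witfn{\cl{S}}{0}{\fo})$. The approach is an Ehrenfeucht--Fra\"iss\'e argument: the rank-$m$ type of a permutation should be determined by a bounded amount of ``positional'' data (the relative $<_1$- and $<_2$-gaps between chosen points, resolved only up to threshold $2^m$), and I would select a bounded set of $O(f(m))$ representative points realizing this data, let $\mf{B}$ be the induced sub-permutation on them, and verify via a duplicator strategy that $\mf{B}\fequiv{m}\mf{A}$.

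The hard part will be this last verification, because the two orders must be respected \emph{simultaneously}. For permutations possessing long monotone blocks (intervals that are monotone in both orders) one can pump or delete exactly as for a single linear order; the genuine difficulty is the ``simple'' (block-free) permutations, where no nontrivial interval can be collapsed. I expect the crux to be showing that even these admit a bounded $m$-equivalent pattern — equivalently, that the coarse rank-$m$ type cannot force unbounded size. I emphasise that there is no conflict with the antichain: the bounded witnesses $\mf{B}$ need not themselves be antichain members, so shrinkability for $\febsp{\cdot}{\cdot}$ coexists with incomparability for $\wqo{\cdot}{\cdot}$. Should the two-order EF analysis prove intractable for the full class, a fallback is to carve out a non-well-quasi-ordered subclass of permutations on which the selection argument is manageable, or to replace the second order by a bounded-degree sparse relation and run the analogous Hanf-locality argument (choosing, for each local type, a threshold number of bounded neighbourhoods as the substructure $\mf{B}$) in place of the EF argument.
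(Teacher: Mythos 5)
Your reduction of the problem to the two halves is sound, and the failure of $\wqo{\cl{S}}{0}$ for permutations via a classical infinite antichain under pattern containment is fine. The genuine gap is that the positive half — $\febsp{\cl{S}}{0}$ for the class of \emph{all} finite permutations — is never actually established: you yourself flag that the case of simple (interval-free) permutations is unresolved, and the proposed EF-game selection argument is only a hope that "the coarse rank-$m$ type cannot force unbounded size." That claim is the entire content of the proposition for your choice of $\cl{S}$, and it is far from routine: because the class of permutations is substructure-closed, every candidate witness $\mf{B}$ must itself be a sub-permutation respecting \emph{both} orders simultaneously, and nothing in the paper's toolkit (the tree-representation theorem, composition lemmas, w.q.o.\ criterion — which fails here precisely because of your antichain) applies to it. Your fallbacks ("carve out a manageable non-w.q.o.\ subclass", "replace the second order by a sparse relation") are not arguments but restatements of the task. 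As it stands the proof is incomplete at its only hard step.

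For contrast, the paper sidesteps all of this by \emph{engineering} a class where both halves are easy. It sets $H_n = \bigsqcup_{i=0}^{3^n} nP_i$ (a disjoint union of many paths of many lengths) and $G_n = C_{3^n} \sqcup H_n$, and takes $\cl{S}$ to be the isomorphism closure of $\{H_n \mid n\ge 1\} \cup \{G_n \mid n \ge 1\}$. The cycles $C_{3^n}$ give the antichain, killing $\wqo{\cl{S}}{0}$; and for any $m$ and any $\mf{A}\in\cl{S}$ with index $n > m$, the structure $H_m$ both embeds in $\mf{A}$ and satisfies $H_m \equiv_m \mf{A}$ (a long cycle is $\equiv_m$ a long path, of which $H_n$ contains plenty), so $H_m$ serves as the bounded witness. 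Note that the witness for $G_n$ lies in $\{H_j\}$, not in $\{G_j\}$ — this is exactly your correct observation that witnesses need not be antichain members, but the paper realizes it by making $\cl{S}$ deliberately non-hereditary and padding the antichain with tailor-made auxiliary structures, rather than by attempting the statement for a natural hereditary class where it may well be false or is at least a substantial open problem.
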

\begin{proof} 
Let $C_n$ (respectively, $P_n$) denote an undirected cycle
(respectively, path) of length $n$. Let $mP_n$ denote the disjoint
union of $m$ copies of $P_n$. Let $H_n = \bigsqcup_{i = 0}^{i = 3^n}
nP_i$ and $G_n = C_{3^n} \sqcup H_n$, where $\sqcup$ denotes disjoint
union.  Now consider the class $\cl{S}$ of undirected graphs that is
closed under isomorphisms, and is given upto isomorphisms by $\cl{S} =
\cl{S}_1 \cup \cl{S}_2$, where $\cl{S}_1 = \{H_n \mid n \ge 1\}$ and
$\cl{S}_2 = \{ G_n \mid n \ge 1\}$.  That $\wqo{\cl{S}}{0}$ fails is
easily seen by considering the sequence $(G_n)_{n \ge 1}$, and noting
that $C_{3^n}$ cannot embed in $C_{3^m}$ unless $m = n$.  We now show
$\febsp{\cl{S}}{0}$ holds with the witness function
$\witfn{\cl{S}}{0}{\fo}$ being given by $\witfn{\cl{S}}{0}{\fo}(m) =
|G_{m}|$. In other words, we show that for $\mf{A} \in \cl{S}$ and $m
\in \mathbb{N}$, there exists $\mf{B}$ such that (i) $\mf{B} \in
\cl{S}$, (ii) $\mf{B} \subseteq \mf{A}$, (iii) $|\mf{B}| \leq
\witfn{\cl{S}}{0}{\fo}(m)$ and (iv) $\mf{B} \equiv_m \mf{A}$. Towards
this, we first present some basic facts about $C_n$, $P_n$, $H_n$ and
$G_n$, that are easy to verify. Let $m \in \mathbb{N}$ be given. 
\begin{enumerate}[nosep]
\item[(F.1)] If $n_1, n_2 \ge 3^m$, then $P_{n_1} \equiv_m P_{n_2}$.
\item[(F.2)] If $n_1 \ge 3^m$ and $n_2 \ge m$, then $n_2P_{n_1} \equiv_m
  mP_{3^m}$.
\item[(F.3)] If $n_1 \leq n_2$, then $H_{n_1}$ always embeds in $H_{n_2}$.
\item[(F.4)] If $m \leq n_1 \leq n_2$, then $H_{n_1} \equiv_m
  H_{n_2}$. (follows from (1) and (2) above)
\item[(F.5)] If $n \ge m$, then $G_n \equiv_m H_n$.
\end{enumerate}

Consider a structure $\mf{A} \in \cl{S}$ and let $m \in
\mathbb{N}$. We have two cases: (a) $\mf{A} \in \cl{S}_1$ (b) $\mf{A}
\in \cl{S}_2$.

\underline{$\mf{A} \in \cl{S}_1$:} Then $\mf{A} = H_n$ for some
$n$. If $n \leq m$, then taking $\mf{B}$ to be $\mf{A}$, we see
that\linebreak $\febspcond(\cl{S}, \mf{A}, \mf{B}, 0, m,
\mathsf{null}, \witfn{\cl{S}}{0}{\fo})$ is true, where $\mathsf{null}$
is the empty tuple.  Else $n > m$. Then consider $H_m$. From the facts
F.3 and F.4, we have that $H_m$ embeds in $H_n$ and that $H_m \equiv_m
H_n$. Then taking $\mf{B}$ to be the isomorphic copy of $H_m$ that is
a substructure of $H_n$, we see that $\febspcond(\cl{S}, \mf{A},
\mf{B}, 0, m, \mathsf{null}, \witfn{\cl{S}}{0}{\fo})$ is indeed true.

\underline{$\mf{A} \in \cl{S}_2$}: Then $\mf{A} = G_n$ for some
$n$. If $n \leq m$, then taking $\mf{B}$ to be $\mf{A}$, we see
that\linebreak $\febspcond(\cl{S}, \mf{A}, \mf{B}, 0, m,
\mathsf{null}, \witfn{\cl{S}}{0}{\fo})$ is true.  Else $n > m$. Then
consider $H_m$. From the facts F.3, F.4 and F.5, we see that $H_m$
embeds in $G_n$ and that $H_m \equiv_m G_n$.  Then taking $\mf{B}$ to
be the isomorphic copy of $H_m$ that is a substructure of $G_n$, we
see that\linebreak $\febspcond(\cl{S}, \mf{A}, \mf{B}, 0, m,
\mathsf{null}, \witfn{\cl{S}}{0}{\fo})$ is indeed true.
\end{proof}

\tbf{Using Theorem~\ref{theorem:wqo-implies-lebsp} as a technique to
  show $\lebsp{\cdot}{k}$ for classes of structures}

Let $\cl{S}$ be the class of all $n$-dimensional grid posets
(i.e. cartesian product of linear orders, iterated $n$ times), for a
given $n \in \mathbb{N}$. From
Theorem~\ref{theorem:words-and-trees-and-nested-words-satisfy-lebsp},
it follows that $\febsp{\cdot}{k}$ holds of the class of all linear
orders (and with a computable witness function). Then using
Lemma~\ref{lemma:lebsp-pres-under-n-disj-sum-and-n-copy} and
Theorem~\ref{theorem:ebsp-gebsp-and-transductions}, we see that
$\febsp{\cl{S}}{k}$ is true for all $k \in \mathbb{N}$ (and with a
computable witness function). But these results do not tell us whether
$\mebsp{\cl{S}}{k}$ is true. We demonstrate below that we can use
Theorem~\ref{theorem:wqo-implies-lebsp} to show that
$\mebsp{\cl{S}}{k}$ is indeed true. Thus
Theorem~\ref{theorem:wqo-implies-lebsp} gives us a new technique to
show $\lebsp{\cdot}{k}$ for classes of structures for which the
$\lebsp{\cdot}{k}$ property cannot be inferred (at least prima facie)
using the results presented in
Chapter~\ref{chapter:classes-satisfying-lebsp}.

We show that $\mebsp{\cl{S}}{k}$ holds by showing that
$\wqo{\cl{S}}{k}$ holds. We show the latter for the case when $\cl{S}$
is the class of all 2-dimensional grid posets. The proof for the case
of $r$-dimensional grid posets for $r > 2$ can be done similarly.

Consider an infinite sequence $(G_i, \bar{a}_i)_{i \ge 0}$ of
structures of $\cl{S}^k$, where $G_i$ is a 2-dimensional grid poset
and $\bar{a}_i$ is a $k$-tuple from $G_i$, for $i \ge 1$. Let $G_i$ be
the cartesian product of linear orders $L_{i, 1}$ and $L_{i, 2}$, and
let $\bar{b}_i$ and $\bar{c}_i$ be the projections of $\bar{a}_i$ onto
$L_{i, 1}$ and $L_{i, 2}$ respectively (in other words, $\bar{b}_i$ is
the $k$-tuple of first components of the elements of $\bar{a}_i$,
while $\bar{c}_i$ is the $k$-tuple of the second components of the
elements of $\bar{a}_i$). Now $(L_{i, 1}, \bar{b}_i)$ can be looked at
as a word $w_{i, 1}$ over the powerset of $\{1,...,k\}$, such that (i)
the underlying linear order of $w_{i, 1}$ is $L_{i, 1}$, and (ii) each
position $e$ of $w_{i, 1}$ is labeled with the set of all those
indices $r$ in $\{1, ... k\}$ such that $e$ equals the $r^{\text{th}}$
component of $\bar{b}_i$. Similarly $(L_{i, 2}, \bar{c}_i)$ can be
looked at as a word $w_{i, 2}$.  Let $M_i$ be the cartesian product of
the words $w_{i, 1}$ and $w_{i, 2}$, and let $N_i$ be labeled grid
poset obtained from $M_i$ such that (i) the unlabeled grid underlying
$N_i$ is exactly the same as the unlabeled grid underlying $M_i$ (and
the latter is the same as $G_i$), and (ii) the label of any element
$(g_1, g_2)$ of $N_i$ is the intersection of the labels of $g_1$ and
$g_2$ in $w_{i, 1}$ and $w_{i, 2}$. It is easy to see that $N_i$ is
simply a ``coloured'' representation of $(G_i, \bar{a}_i)$. Whereby if
for $i, j \ge 0$, we have $N_i \hookrightarrow N_j$, then $(G_i,
\bar{a}_i) \hookrightarrow (G_j, \bar{a}_j)$. The proof that
$\wqo{\cl{S}}{k}$ holds is therefore completed by showing that indeed
there exist $i, j \ge 0$ such that $i < j$ and $N_i \hookrightarrow
N_j$.

Consider the sequences $(w_{i, 1})_{i \ge 0}$ and $(w_{i, 2})_{i \ge
  0}$.  Since words are w.q.o. under embedding (Higman's lemma) and
the cartesian product of two w.q.o. sets is also w.q.o. under the
point-wise order, there exist $i, j$ such that $i < j$ and the pair
$(w_{i, 1}, w_{i, 2}) \hookrightarrow (w_{j, 1}, w_{j, 2})$ where
$\hookrightarrow$ for pairs means point-wise $\hookrightarrow$. Then
$w_{i, 1} \hookrightarrow w_{j, 1}$ and $w_{i, 2} \hookrightarrow
w_{j, 2}$, whereby $M_i \hookrightarrow M_j$, and hence $N_i
\hookrightarrow N_j$.

\vspace{5pt} On a final note for this section, we observe that the
implication given by Theorem~\ref{theorem:wqo-implies-lebsp}, taken in
its contrapositive form, gives a \emph{logic-based tool} to show
non-w.q.o.-ness of a class of structures under isomorphic embedding.

\section[$\lebsp{\cdot}{k}$ and the homomorphism preservation theorem]{$\lebsp{\cdot}{k}$ and the homomorphism preservation \\theorem}\label{section:lebsp-entails-the-hpt}

The homomophism preservation theorem
($\hpt$)\sindex[term]{homomorphism preservation theorem} is one of the
important classical preservation theorems that has been of significant
interest in the finite model theory
setting~\cite{dawar-hom,dawar-quasi-wide,nowhere-dense-original,nowhere-dense-2}. While
the theorem was shown to be true over various special classes of
finite structures (such as those seen earlier in
Chapter~\ref{chapter:need-for-new-classes-for-GLT}, namely classes
that are acyclic, of bounded degree or of bounded
tree-width~\cite{dawar-hom}), its status over the class of all finite
structures was open for a long time. In a landmark
paper~\cite{rossman-hom}, Rossman proved that this theorem is indeed a
rare classical preservation theorem that holds over the class of all
finite structures. However, Rossman's result does not imply anything
about the truth of the $\hpt$ over the aforementioned special classes
of finite structures, since restricting the theorem to special classes
weakens both the hypothesis and the conclusion of the theorem.  In
this section, we show that the homomorphism preservation theorem, in
fact a parameterized generalization of it along the lines of
$\mathsf{GLT}(k)$, holds over classes that satisfy $\lebsp{\cdot}{k}$.

We first formally define the notion of homomorphism and state the
$\hpt$. While the $\hpt$ holds for arbitrary vocabularies, we restrict
our discussion to vocabularies $\tau$ containing only relation
symbols.  Given a vocabulary $\tau$ and $\tau$-structures $\mf{A}$ and
$\mf{B}$, a \emph{homomorphism} from $\mf{A}$ to $\mf{B}$, denoted $h:
\mf{A} \rightarrow \mf{B}$, is a function $h: \mathsf{U}_{\mf{A}}
\rightarrow \mathsf{U}_{\mf{B}}$ such that $(a_1, \ldots, a_n) \in
R^{\mf{A}}$ implies $(h(a_1), \ldots, h(a_n)) \in R^{\mf{B}}$ for
every $n$-ary relation symbol $R \in \tau$.  We say that an $\fo$
sentence $\varphi$ is \emph{preserved under homomorphisms} over a
class $\cl{S}$ of structures if for all structures $\mf{A}, \mf{B} \in
\cl{S}$, if $\mf{A} \models \varphi$ and there is a homomorphism from
$\mf{A}$ to $\mf{B}$, then $\mf{B} \models \varphi$. We say an FO
formula is \emph{existential-positive} if it is built up from
un-negated atomic formulas using conjunction, disjunction and
existential quantification. The $\hpt$ characterizes preservation
under homomorphisms using existential-positive sentences.

\begin{theorem}[$\hpt$]
A first order sentence is preserved under homomorphisms over all
structures iff it is equivalent over all structures to an
existential-positive sentence.
\end{theorem}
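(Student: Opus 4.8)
The plan is to prove the two implications separately, treating the backward direction as a routine induction and the forward direction via the classical compactness argument, which meshes well with the diagram techniques already used in this paper (cf.\ the proofs of Lemma~\ref{lemma:exis-amalgam} and Lemma~\ref{lemma:relating-e.c.-and-sandwiches}). For the easy direction I would show by structural induction that every existential-positive formula is preserved under homomorphisms: atomic formulas are preserved because a homomorphism $h:\mf{A}\rightarrow\mf{B}$ carries tuples of $R^{\mf{A}}$ into $R^{\mf{B}}$; conjunction and disjunction preserve the property immediately; and for $\exists x\,\chi$ the image under $h$ of a witness in $\mf{A}$ is a witness in $\mf{B}$. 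Hence a sentence equivalent over all structures to an existential-positive sentence is preserved under homomorphisms.

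For the forward direction, assume $\varphi$ is preserved under homomorphisms over all structures, and let $\Gamma$ be the set of all existential-positive sentences $\psi$ with $\varphi\vdash\psi$. Since $\varphi$ trivially entails every member of $\Gamma$, it suffices to show $\Gamma\vdash\varphi$: then Theorem~\ref{theorem:compactness} yields a finite $\Gamma_0\subseteq\Gamma$ with $\Gamma_0\vdash\varphi$, and as existential-positive sentences are closed under finite conjunction, $\bigwedge\Gamma_0$ is the desired equivalent sentence. The engine is a homomorphism analogue of Lemma~\ref{lemma:exis-amalgam}, which I would establish first: \emph{if every existential-positive sentence true in $\mf{B}$ is true in $\mf{A}$, then there is a homomorphism from $\mf{B}$ into an elementary extension of $\mf{A}$.} To prove it, work over $\tau_{\mf{A}}\cup\tau_{\mf{B}}$ with disjoint constants and consider the theory $Y=\eldiag{\mf{A}}\cup\mathsf{Diag}^{+}(\mf{B})$, where $\mathsf{Diag}^{+}(\mf{B})$ is the positive diagram of $\mf{B}$, i.e.\ the set of atomic $\tau_{\mf{B}}$-sentences true in $\mf{B}_{\mf{B}}$. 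A finite subset of $Y$ has the form $\{\xi,\alpha_1,\dots,\alpha_k\}$ with $\xi\in\eldiag{\mf{A}}$ and each $\alpha_j$ a positive atomic sentence; replacing the $\mf{B}$-constants $c_{b_1},\dots,c_{b_r}$ appearing in the $\alpha_j$ by fresh variables and existentially quantifying yields an existential-positive sentence $\theta$ true in $\mf{B}$, hence (by hypothesis) true in $\mf{A}$, so the witnesses for $\theta$ in $\mf{A}$ let $\{\xi,\alpha_1,\dots,\alpha_k\}$ be satisfied in an expansion of $\mf{A}_{\mf{A}}$. By compactness $Y$ has a model $\mf{C}$; its $\tau_{\mf{A}}$-reduct is an elementary extension $\mf{A}^{+}$ of $\mf{A}$ by Lemma~\ref{lemma:prop-of-structures}, and $b\mapsto c_b^{\mf{C}}$ is a homomorphism $\mf{B}\rightarrow\mf{A}^{+}$ because $\mathsf{Diag}^{+}(\mf{B})$ holds.

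With the lemma in hand, to show $\Gamma\vdash\varphi$ I would take any $\mf{A}\models\Gamma$ and first produce a model $\mf{B}$ of $\varphi$ all of whose existential-positive sentences hold in $\mf{A}$. For this, consider $Z=\{\varphi\}\cup\{\neg\psi : \psi\text{ existential-positive},\ \mf{A}\not\models\psi\}$. If $Z$ were inconsistent, compactness would give existential-positive $\psi_1,\dots,\psi_n$ with $\varphi\vdash\psi_1\vee\cdots\vee\psi_n$; but this disjunction is again existential-positive and so lies in $\Gamma$, whence $\mf{A}$ satisfies it, contradicting $\mf{A}\not\models\psi_i$ for every $i$. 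Thus $Z$ is consistent, and any model $\mf{B}$ satisfies $\varphi$ and has the property that every existential-positive sentence true in $\mf{B}$ is true in $\mf{A}$. The Key Lemma then gives a homomorphism from $\mf{B}$ into an elementary extension $\mf{A}^{+}$ of $\mf{A}$; since $\mf{B}\models\varphi$ and $\varphi$ is preserved under homomorphisms, $\mf{A}^{+}\models\varphi$, and since $\mf{A}\preceq\mf{A}^{+}$ we conclude $\mf{A}\models\varphi$, as required.

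I expect the main obstacle to be the Key Lemma, specifically the consistency argument for $\eldiag{\mf{A}}\cup\mathsf{Diag}^{+}(\mf{B})$: one must keep the two sets of constants formally disjoint, use the closure of $\eldiag{\mf{A}}$ under finite conjunction, and—crucially—pass to an \emph{elementary} extension rather than a mere extension, since it is $\mf{A}\preceq\mf{A}^{+}$ that licenses transferring the truth of $\varphi$ back down to $\mf{A}$. The two appeals to closure of the existential-positive class, under disjunction (in the consistency of $Z$) and under conjunction (in the final compactness step), are exactly where the \emph{positive} hypothesis is essential and must not be conflated.
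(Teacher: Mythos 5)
Your proof is correct, but note that the paper does not itself prove this statement: it is quoted as the classical homomorphism preservation theorem, and the paper's own contribution in this section is Theorem~\ref{theorem:h-ebsp-implies-ghpt(k)}, which proves a parameterized generalization over classes satisfying $\hlebsp{\cl{S}}{k}$ by an entirely different, finitary route --- canonical conjunctive queries and the Chandra--Merlin theorem (Theorem~\ref{theorem:chandra-merlin}), with the bounded ``small model'' supplied by the $\mathsf{EBSP}$ hypothesis playing the role that compactness plays for you. Your argument is the standard classical one: the easy direction by structural induction, and the hard direction via the set $\Gamma$ of existential-positive consequences, the consistency of $\{\varphi\}\cup\{\neg\psi \mid \psi \text{ existential-positive},\ \mf{A}\not\models\psi\}$, and a homomorphism analogue of Lemma~\ref{lemma:exis-amalgam} obtained from the consistency of $\eldiag{\mf{A}}\cup\mathsf{Diag}^{+}(\mf{B})$. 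That analogue is exactly right, and the points you flag as delicate (disjoint constant sets, passing to an \emph{elementary} extension, and the two uses of closure of the existential-positive class under $\vee$ and $\wedge$) are indeed the load-bearing ones. The compactness route is essentially forced here, since the statement concerns arbitrary (possibly infinite) structures, where the paper's finitary technique does not apply.

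One small hole: the step ``$Z$ inconsistent implies $\varphi\vdash\psi_1\vee\cdots\vee\psi_n$'' tacitly assumes the inconsistent finite subset of $Z$ contains at least one sentence of the form $\neg\psi$, i.e.\ that $n\ge 1$. This fails exactly when $\varphi$ is unsatisfiable: taking $\mf{A}$ to be the one-element structure in which every relation is total, $\mf{A}$ satisfies every existential-positive sentence (so $\mf{A}\models\Gamma$ and the set of $\neg\psi$'s contributing to $Z$ is empty), yet $Z=\{\varphi\}$ is inconsistent and no disjunction can be extracted. This is not just a defect of the argument but of the statement under the paper's definition of existential-positive (built from un-negated atoms using $\wedge$, $\vee$, $\exists$ only): an unsatisfiable sentence is vacuously preserved under homomorphisms but is equivalent to no such sentence, since every such sentence holds in that one-element structure. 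You should therefore dispose of the unsatisfiable case separately, either by admitting $\false$ as a degenerate existential-positive sentence (the usual convention) or by adding satisfiability of $\varphi$ as a standing hypothesis; with that caveat the proof is complete.
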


We now define a parameterized generalization of the notion of
preservation under homomorphisms, along the lines of preservation
under $k$-ary covered extensions seen in
Section~\ref{section:PCE(k)}. For this, we first define the notion of
\emph{$k$-ary homomorphic covering} as a parameterized generalization
of the notion of homomorphism. Recall, for a vocabulary $\tau$, that
$\tau_k$ is the vocabulary obtained by expanding $\tau$ with $k$ fresh
and distinct constants $c_1, \ldots, c_k$.

\begin{defn}\label{defn:k-ary-hom-cover}
Let $\mf{A}$ be a $\tau$-structure, and $k \in \mathbb{N}$.  Let
$\bar{a}_1, \ldots, \bar{a}_t$ be an enumeration of the $k$-tuples of
$\mf{A}$, and let $I = \{1, \ldots, t\}$.  Let $R = \{(\mf{B}_i,
\bar{b}_i) \mid i \in I\}$ be a (non-empty) set of
$\tau_k$-structures. A \emph{$k$-ary homomorphic covering from $R$ to
  $\mf{A}$} is a set $\mc{H}$ of homomorphisms $\{h_i:(\mf{B}_i,
\bar{b}_i) \rightarrow (\mf{A}, \bar{a}_i) \mid i \in I\}$. If
$\mc{H}$ exists, then we call $R$ a \emph{$k$-ary homomorphic cover}
of $\mf{A}$.
\end{defn}
\begin{remark}
Observe that if $R = \{\mf{B}\}$ for some $\tau$-structure $\mf{B}$,
then $R$ is a 0-ary homomorphic cover of $\mf{A}$ iff there is a
homomorphism from $\mf{B}$ to $\mf{A}$. Also for a structure $\mf{A}$,
if $t$ is the number of $k$-tuples of elements of $\mf{A}$, then for a
set $R$ of $\tau_k$-structures, if there exists a $k$-ary homomorphic
covering from $R$ to $\mf{A}$, then we require that $|R| = t$.
\end{remark}
We now define the notion of \emph{preservation under $k$-ary
  homomorphic coverings}. Recall from
Section~\ref{section:wqo-and-lebsp} that for a class $\cl{U}$ of
$\tau$-structures, $\cl{U}^k$ denotes the class of all
$\tau_k$-structures whose $\tau$-reducts are structures in $\cl{U}$.

\begin{defn}\label{defn:h-PC}
Let $\cl{S}$ be a class of structures and $k \in \mathbb{N}$. A
subclass $\cl{U}$ of $\cl{S}$ is said to be \emph{preserved under
  $k$-ary homomorphic coverings} over $\cl{S}$, abbreviated as
\emph{$\cl{U}$ is $\hpc{k}$ over $\cl{S}$}, if for every collection
$R$ of structures of $\cl{U}^k$, if there is a $k$-ary homomorphic
covering from $R$ to $\mf{A}$ and $\mf{A} \in \cl{S}$, then $\mf{A}
\in \cl{U}$. Given an $\mc{L}$-sentence $\phi$, we say \emph{$\phi$ is
  $\hpc{k}$ over $\cl{S}$} if the class of models of $\phi$ in
$\cl{S}$ is $\hpc{k}$ over $\cl{S}$.
\end{defn}

A class of sentences that is $\hpc{k}$ over any class of structures is
the class of, what we call, \emph{$(\forall^k
  \exists^*)\text{-positive}$ sentences}. A formula $\varphi$ is said
to be $(\forall^k \exists^*)\text{-positive}$ if it is of the form
$\forall x_1 \ldots \forall x_k \psi(x_1, \ldots, x_k)$ where
$\psi(x_1, \ldots, x_k)$ is an existential positive formula. Observe
that for $k = 0$, the class of $(\forall^k \exists^*)\text{-positive}$
formulae is exactly the class of existential-positive formulae. We say
that the \emph{generalized $\hpt$ for $\mc{L}$ and parameter $k$},
abbreviated $\lghpt{k}$, holds over a class $\cl{S}$ if the following
is true: An $\mc{L}$ sentence $\phi$ is $\hpc{k}$ over $\cl{S}$ iff
$\phi$ is equivalent over $\cl{S}$ to a $(\forall^k
\exists^*)$-positive (FO) sentence.  Observe that $\fghpt{0}$ holds
over a class $\cl{S}$ iff $\hpt$ holds over $\cl{S}$.  We show below
that $\lghpt{k}$ holds over classes of structures that satisfy
$\lebsp{\cdot}{k}$.  We in fact show something more general as we
describe below. Towards this, we first present a ``homomorphic''
version of $\mc{L}$-$\mathsf{EBSP}$.

\begin{defn}[$\hlebsp{\cl{S}}{k}$]
Let $\cl{S}$ be a class of finite structures and $k$ be a natural
number.  We say that $\cl{S}$ satisfies the \emph{homomorphic
  $\mc{L}$-$\mathsf{EBSP}$} for parameter $k$, abbreviated
\emph{$\hlebsp{\cl{S}}{k}$ is true}, if there exists a function
$\witfn{\cl{S}}{k}{\mc{L}}: \mathbb{N} \rightarrow \mathbb{N}$ such
that for each $m \in \mathbb{N}$, for each structure $\mf{A}$ of
$\cl{S}$ and for every $k$-tuple $\bar{a}$ from $\mf{A}$, there exists
$(\mf{B}, \bar{b}) \in \cl{S}^k$ such that (i) there is a homomorphism
$h: (\mf{B}, \bar{b}) \rightarrow (\mf{A}, \bar{a})$, (ii) $|\mf{B}|
\leq \witfn{\cl{S}}{k}{\mc{L}}(m)$, and (iii)
$\ltp{\mf{B}}{\bar{b}}{m}(\bar{x}) = \ltp{\mf{A}}{\bar{a}}{m}(\bar{x})
$.  We call $\witfn{\cl{S}}{k}{\mc{L}}$ a \emph{witness function} of
$\hlebsp{\cl{S}}{k}$.
\end{defn}


The following lemma is easy to see and the proof is skipped.
\begin{lemma}\label{lemma:relating-ebsp-and-gebsp-with-their-hom-versions}
Let $\cl{S}$ be a class of structures. Then for each $k \in
\mathbb{N}$, we have the following.
\begin{enumerate}[nosep]
\item $\lebsp{\cl{S}}{k}$ implies $\hlebsp{\cl{S}}{k}$
\item $\hmebsp{\cl{S}}{k}$ implies $\hfebsp{\cl{S}}{k}$
\end{enumerate}
Further, in each of the implications above, any witness function for
the antecedent is also a witness function for the consequent.
\end{lemma}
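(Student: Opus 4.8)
The plan is to prove each implication by reusing, without any modification, the very witness structure produced by the antecedent property, and to show that the witness function carries over unchanged. Both directions reduce to two elementary observations: first, that a substructure inclusion $\mf{B} \hookrightarrow \mf{A}$ is in particular a homomorphism (for the relational vocabularies considered here), and second, that since every $\fo$ formula is an $\mso$ formula, equality of $(m,\mso)$-types entails equality of the corresponding $(m,\fo)$-types --- exactly the observation already used in Lemma~\ref{lemma:gebsp-and-ebsp}(\ref{lemma:gebsp-and-ebsp:4}).

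For part (1), I would fix a witness function $\witfn{\cl{S}}{k}{\mc{L}}$ for $\lebsp{\cl{S}}{k}$ and claim the same function witnesses $\hlebsp{\cl{S}}{k}$. Given $m \in \mathbb{N}$, a structure $\mf{A} \in \cl{S}$ and a $k$-tuple $\bar{a}$ from $\mf{A}$, the property $\lebsp{\cl{S}}{k}$ yields $\mf{B}$ with $\lebspcond(\cl{S}, \mf{A}, \mf{B}, k, m, \bar{a}, \witfn{\cl{S}}{k}{\mc{L}})$ true; in particular $\mf{B} \in \cl{S}$, $\mf{B} \subseteq \mf{A}$, $\bar{a}$ lies in $\mf{B}$, $|\mf{B}| \leq \witfn{\cl{S}}{k}{\mc{L}}(m)$, and $\ltp{\mf{B}}{\bar{a}}{m}(\bar{x}) = \ltp{\mf{A}}{\bar{a}}{m}(\bar{x})$. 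Setting $\bar{b} = \bar{a}$, the pair $(\mf{B}, \bar{a})$ lies in $\cl{S}^k$, and the inclusion map $\mf{B} \hookrightarrow \mf{A}$ --- which fixes $\bar{a}$ pointwise --- is a homomorphism from $(\mf{B}, \bar{a})$ to $(\mf{A}, \bar{a})$, since $R^{\mf{B}} \subseteq R^{\mf{A}}$ for every relation symbol $R$. The size bound and the type equality are then precisely conditions (ii) and (iii) in the definition of $\hlebsp{\cl{S}}{k}$, so the homomorphic property holds with the same witness function.

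For part (2), I would fix a witness function $\witfn{\cl{S}}{k}{\mso}$ for $\hmebsp{\cl{S}}{k}$ and again claim it witnesses $\hfebsp{\cl{S}}{k}$. Given $m$, $\mf{A} \in \cl{S}$ and a $k$-tuple $\bar{a}$, the property $\hmebsp{\cl{S}}{k}$ produces $(\mf{B}, \bar{b}) \in \cl{S}^k$ together with a homomorphism $h \colon (\mf{B}, \bar{b}) \to (\mf{A}, \bar{a})$ satisfying $|\mf{B}| \leq \witfn{\cl{S}}{k}{\mso}(m)$ and $\msotp{\mf{B}}{\bar{b}}{m}(\bar{x}) = \msotp{\mf{A}}{\bar{a}}{m}(\bar{x})$. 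Conditions (i) and (ii) required for $\hfebsp{\cl{S}}{k}$ are identical to what we already have. For condition (iii), because $\fo \subseteq \mso$ the equality of the $(m,\mso)$-types forces $\ftp{\mf{B}}{\bar{b}}{m}(\bar{x}) = \ftp{\mf{A}}{\bar{a}}{m}(\bar{x})$, giving the $(m,\fo)$-type agreement. Hence $\hfebsp{\cl{S}}{k}$ holds with the same witness function.

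Neither direction presents a genuine obstacle; the only points needing care are verifying that a substructure inclusion counts as a homomorphism and that the constant-expanded witness $(\mf{B}, \bar{b})$ is legitimately a member of $\cl{S}^k$ (immediate from $\mf{B} \in \cl{S}$). The overall argument is the homomorphic analogue of the reasoning in Lemma~\ref{lemma:gebsp-and-ebsp}, with the embedding used in the non-homomorphic setting replaced by, respectively, the inclusion-as-homomorphism in part (1) and the given homomorphism in part (2).
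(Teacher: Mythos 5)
Your proof is correct and is exactly the argument the paper has in mind: the paper explicitly skips the proof as "easy to see," and your two observations (a substructure inclusion of relational structures is a homomorphism fixing the distinguished tuple, and $\fo \subseteq \mso$ so $(m,\mso)$-type equality implies $(m,\fo)$-type equality) are precisely the intended justification, mirroring the reasoning already spelled out in Lemma~\ref{lemma:gebsp-and-ebsp}(\ref{lemma:gebsp-and-ebsp:4}).
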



We now state and prove the central result of this section.

\begin{theorem}\label{theorem:h-ebsp-implies-ghpt(k)}
Let $\cl{S}$ be a class of finite structures and $k \in \mathbb{N}$ be
such that $\hlebsp{\cl{S}}{k}$ is true. Then $\lghpt{k}$, and hence
$\hpt$, holds over $\cl{S}$.
Further, if there is a computable witness function for
$\hlebsp{\cl{S}}{k}$, then the translation from an $\mc{L}$ sentence
that is $\hpc{k}$ over $\cl{S}$ to an $\cl{S}$-equivalent
$(\forall^k\exists^*)$-positive sentence, is effective.

The same statement as above holds when $\hlebsp{\cl{S}}{k}$ is
replaced with $\lebsp{\cl{S}}{k}$.
\end{theorem}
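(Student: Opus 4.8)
The plan is to mirror the proof of Theorem~\ref{theorem:lebsp-implies-glt(k)} (that $\lebsp{\cl{S}}{k}$ entails $\glt{k}$), but to replace the substructure/relativization machinery there with the canonical-conjunctive-query machinery standard in proofs of the $\hpt$. The goal is the ``iff'' defining $\lghpt{k}$ over $\cl{S}$. The easy direction is that any $(\forall^k\exists^*)$-positive sentence $\forall x_1\cdots\forall x_k\,\psi(\bar{x})$, with $\psi$ existential-positive, is $\hpc{k}$ over every class: given a $k$-ary homomorphic covering $\{h_i:(\mf{B}_i,\bar{b}_i)\to(\mf{A},\bar{a}_i)\}$ (Definition~\ref{defn:k-ary-hom-cover}) in which each $\mf{B}_i$ models the sentence, the preservation of the existential-positive matrix $\psi$ along each $h_i$ yields $(\mf{A},\bar{a}_i)\models\psi$ for every $k$-tuple $\bar{a}_i$ of $\mf{A}$, hence $\mf{A}$ models the sentence; so an $\mc{L}$ sentence that is $\cl{S}$-equivalent to such a sentence is $\hpc{k}$ over $\cl{S}$ (Definition~\ref{defn:h-PC}).

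For the substantive direction, let $\phi$ be an $\mc{L}$ sentence of quantifier rank $m$ that is $\hpc{k}$ over $\cl{S}$, and set $p=\witfn{\cl{S}}{k}{\mc{L}}(m)$. For each $(\mf{B},\bar{b})\in\cl{S}^k$ with $|\mf{B}|\le p$ and $\mf{B}\models\phi$, I would write the canonical existential-positive formula $\mathrm{canon}_{(\mf{B},\bar{b})}(\bar{x})$ that asserts the existence of a homomorphism $(\mf{B},\bar{b})\to(-,\bar{x})$: introduce an existential variable $y_c$ for each element $c$ of $\mf{B}$, conjoin all atoms $R(y_{c_1},\dots,y_{c_r})$ holding in $\mf{B}$ together with the equalities $x_l=y_{b_l}$, and existentially quantify the $y_c$. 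Then define
\[
\chi \;=\; \forall x_1\cdots\forall x_k\;\bigvee_{(\mf{B},\bar{b})}\mathrm{canon}_{(\mf{B},\bar{b})}(\bar{x}),
\]
the disjunction ranging over the finitely many (up to isomorphism) such $(\mf{B},\bar{b})$. This is a $(\forall^k\exists^*)$-positive FO sentence, and by construction $(\mf{A},\bar{a})\models\mathrm{canon}_{(\mf{B},\bar{b})}$ iff there is a homomorphism $(\mf{B},\bar{b})\to(\mf{A},\bar{a})$.

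I would then establish $\cl{S}$-equivalence of $\chi$ and $\phi$ in two steps. That $\phi$ entails $\chi$ over $\cl{S}$: if $\mf{A}\in\cl{S}$ models $\phi$ and $\bar{a}$ is any $k$-tuple, then $\hlebsp{\cl{S}}{k}$ (with parameter $m$) supplies $(\mf{B},\bar{b})\in\cl{S}^k$ with a homomorphism to $(\mf{A},\bar{a})$, with $|\mf{B}|\le p$, and with $\ltp{\mf{B}}{\bar{b}}{m}(\bar{x})=\ltp{\mf{A}}{\bar{a}}{m}(\bar{x})$; the type equality gives $\mf{B}\lequiv{m}\mf{A}$, and since $\phi$ has rank $m$, $\mf{B}\models\phi$, so $(\mf{B},\bar{b})$ is one of the disjuncts and $(\mf{A},\bar{a})$ satisfies the matrix of $\chi$. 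That $\chi$ entails $\phi$ over $\cl{S}$: if $\mf{A}\in\cl{S}$ models $\chi$, enumerate the $k$-tuples $\bar{a}_1,\dots,\bar{a}_t$ of $\mf{A}$; each $\bar{a}_i$ satisfies some disjunct, yielding $(\mf{B}_i,\bar{b}_i)$ — a model of $\phi$ lying in $\cl{S}$ — together with a homomorphism $h_i:(\mf{B}_i,\bar{b}_i)\to(\mf{A},\bar{a}_i)$ (passing to isomorphic copies if needed to make the collection a set, which is harmless since $\cl{U}^k$ is isomorphism-closed). The family $\{h_i\}$ is a $k$-ary homomorphic covering of $\mf{A}$ from $R=\{(\mf{B}_i,\bar{b}_i)\}\subseteq\cl{U}^k$, where $\cl{U}$ is the class of models of $\phi$ in $\cl{S}$; since $\phi$ is $\hpc{k}$ over $\cl{S}$, it follows that $\mf{A}\models\phi$.

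Effectiveness follows because $m$, hence $p$, is computable from $\phi$ when the witness function is computable; the finitely many $(\mf{B},\bar{b})\in\cl{S}^k$ with $|\mf{B}|\le p$ and $\mf{B}\models\phi$ can be enumerated exactly as in the proof of Lemma~\ref{lemma:lebsp-and-lSAT}, and each $\mathrm{canon}_{(\mf{B},\bar{b})}$ is read off $\mf{B}$, so $\chi$ is computable. The variant with $\hlebsp{\cl{S}}{k}$ replaced by $\lebsp{\cl{S}}{k}$ is immediate from Lemma~\ref{lemma:relating-ebsp-and-gebsp-with-their-hom-versions}, which gives $\lebsp{\cl{S}}{k}\Rightarrow\hlebsp{\cl{S}}{k}$ with the same witness functions. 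The step I expect to be the crux is the soundness direction: it is essential that \emph{every} disjunct of $\chi$ be a model of $\phi$ lying in $\cl{S}$, for otherwise the extracted covering structures need not belong to $\cl{U}^k$ and the hypothesis that $\phi$ is $\hpc{k}$ could not be invoked; this is precisely where the restriction of the disjunction to $\cl{S}$-structures (and, for the effective version, the enumerability of bounded $\cl{S}$-structures) is indispensable. One degenerate case to record is $\phi$ unsatisfiable over $\cl{S}$, where the disjunction is empty and $\chi$ is taken to be $\false$ under the standard convention that the empty disjunction is a $(\forall^k\exists^*)$-positive sentence.
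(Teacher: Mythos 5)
Your proposal is correct and follows essentially the same route as the paper's proof: the same $(\forall^k\exists^*)$-positive candidate sentence built as a universally quantified disjunction of canonical conjunctive queries of the bounded models of $\phi$ in $\cl{S}^k$ (your $\mathrm{canon}_{(\mf{B},\bar{b})}$ is exactly the paper's $\xi_{(\mf{B},\bar{b})}(x_1,\ldots,x_k)$, and your "by construction" equivalence is the Chandra--Merlin theorem), with the two entailments argued identically via $\hlebsp{\cl{S}}{k}$ and the $\hpc{k}$ hypothesis respectively. The effectiveness claim and the reduction of the $\lebsp{\cl{S}}{k}$ variant to the homomorphic version via Lemma~\ref{lemma:relating-ebsp-and-gebsp-with-their-hom-versions} also match the paper.
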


Towards the proof of Theorem~\ref{theorem:h-ebsp-implies-ghpt(k)}, we
recall the notion of canonical conjunctive query from the
literature. Given a $\tau$-structure $\mf{A}$ of size $n$, the
\emph{canonical conjunctive query} associated with $\mf{A}$, denoted
$\xi_{\mf{A}}$, is the sentence given by $\xi_{\mf{A}} = \exists x_1
\ldots \exists x_n \beta(x_1, \ldots, x_n)$ where $\beta(x_1, \ldots,
x_n)$ is the conjunction of all atomic formulae of the form
$R(x_{i_1}, \ldots, x_{i_r})$ where $R \in \tau$, $r$ is the arity of
$R$, $i_1, \ldots, i_r \in \{1, \ldots, n\}$ and $(a_{i_1}, \ldots,
a_{i_r}) \in R^{\mf{A}}$. Observe that $\xi_{\mf{A}}$ is an
existential-positive sentence. The following theorem by Chandra and
Merlin characterizes when a homomorphism exists from a structure
$\mf{A}$ to a structure $\mf{B}$, in terms of $\xi_{\mf{A}}$.

\begin{theorem}[Chandra-Merlin, 1977]\label{theorem:chandra-merlin}
Let $\mf{A}$ and $\mf{B}$ be two finite structures. Then there is a
homomorphism from $\mf{A}$ to $\mf{B}$ iff $\mf{B} \models
\xi_{\mf{A}}$.
\end{theorem}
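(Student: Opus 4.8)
The plan is to prove both directions of the biconditional directly, by setting up the natural correspondence between homomorphisms $h : \mf{A} \to \mf{B}$ and satisfying assignments for the existential variables of $\xi_{\mf{A}}$. First I would fix the enumeration $a_1, \ldots, a_n$ of the (distinct) elements of $\mf{A}$ that underlies the definition of $\xi_{\mf{A}} = \exists x_1 \ldots \exists x_n\, \beta(x_1, \ldots, x_n)$, so that the variable $x_i$ ``names'' the element $a_i$, and $\beta$ is the conjunction of exactly those atoms $R(x_{i_1}, \ldots, x_{i_r})$ for which $(a_{i_1}, \ldots, a_{i_r}) \in R^{\mf{A}}$. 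Keeping this naming fixed across both directions is the one bookkeeping point on which everything rests.

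For the forward direction, suppose $h : \mf{A} \to \mf{B}$ is a homomorphism. I would instantiate the existential variables of $\xi_{\mf{A}}$ via the assignment $x_i \mapsto h(a_i)$ and verify that $(\mf{B}, h(a_1), \ldots, h(a_n)) \models \beta(x_1, \ldots, x_n)$. This reduces to checking each conjunct: a conjunct $R(x_{i_1}, \ldots, x_{i_r})$ of $\beta$ arises from $(a_{i_1}, \ldots, a_{i_r}) \in R^{\mf{A}}$ by construction, and since $h$ preserves relations, $(h(a_{i_1}), \ldots, h(a_{i_r})) \in R^{\mf{B}}$, so the conjunct holds under the assignment. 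As every conjunct holds, the matrix $\beta$ is satisfied and hence $\mf{B} \models \xi_{\mf{A}}$.

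For the converse, suppose $\mf{B} \models \xi_{\mf{A}}$. Then there are witnesses $b_1, \ldots, b_n \in \mathsf{U}_{\mf{B}}$ with $(\mf{B}, b_1, \ldots, b_n) \models \beta(x_1, \ldots, x_n)$, and I would define $h : \mathsf{U}_{\mf{A}} \to \mathsf{U}_{\mf{B}}$ by $h(a_i) = b_i$. This map is well-defined precisely because $a_1, \ldots, a_n$ are distinct; no injectivity of the $b_i$ is required, since a homomorphism need not be injective. To see $h$ is a homomorphism, take any $R \in \tau$ of arity $r$ and any $(a_{i_1}, \ldots, a_{i_r}) \in R^{\mf{A}}$; by the definition of $\beta$ the atom $R(x_{i_1}, \ldots, x_{i_r})$ is a conjunct of $\beta$, so from $\mf{B} \models \beta(b_1, \ldots, b_n)$ we obtain $(h(a_{i_1}), \ldots, h(a_{i_r})) = (b_{i_1}, \ldots, b_{i_r}) \in R^{\mf{B}}$. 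Thus $h$ preserves every relation, completing the argument.

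I expect no genuine obstacle here; the proof is a direct unwinding of definitions. The only points requiring care are the consistency of the variable-to-element naming across the two directions and the observation that well-definedness of $h$ rests on distinctness of the $a_i$ rather than on any property of the witnesses $b_i$. Since $\tau$ is purely relational there are no constant or function symbols to preserve, and $\beta$ contains no equality atoms, so the verification is confined to relational atoms exactly as above. Finiteness of $\mf{A}$ is what guarantees $\xi_{\mf{A}}$ is a genuine finite first-order sentence; finiteness of $\mf{B}$ is not actually used, and the same argument would establish the equivalence for an arbitrary structure $\mf{B}$.
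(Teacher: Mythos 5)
Your proof is correct: both directions are the standard unwinding of the definition of $\xi_{\mf{A}}$, with the bookkeeping points (fixed naming of elements by variables, well-definedness of $h$ from distinctness of the $a_i$, absence of equality atoms and of function/constant symbols) handled properly. The paper states this result as a cited classical theorem of Chandra and Merlin without giving a proof, and your argument is precisely the canonical one, so there is nothing to compare beyond noting that your observation that finiteness of $\mf{B}$ is never used is also accurate.
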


We now prove Theorem~\ref{theorem:h-ebsp-implies-ghpt(k)}.

\begin{proof}[Proof of Theorem~\ref{theorem:h-ebsp-implies-ghpt(k)}] 
Suppose $\hlebsp{\cl{S}}{k}$ is true.

\ul{`If' part of $\lghpt{k}$}: Let $\phi$ be an $\mc{L}$ sentence that
is equivalent over $\cl{S}$ to the $(\forall^k \exists^*)$-positive
sentence $\varphi = \forall^k \bar{x} \psi(\bar{x})$ where $\psi$ is
an existential-positive formula. Let $R = \{(\mf{B}_i, \bar{b}_i) \in
\cl{S}^k \mid i \in I\}$ be a set of structures from $\cl{S}^k$ such
that $\mf{B}_i \models \phi$ for each $i \in I$. Let $\mf{A} \in
\cl{S}$ and suppose there exists a $k$-ary homomorphic covering
$\mc{H}$ from $R$ to $\mf{A}$. Consider a $k$-tuple $\bar{a}$ from
$\mf{A}$. Since $\mc{H}$ is a $k$-ary homomorphic covering, there
exists $i \in I$ such that there is a homomorphism $h: (\mf{B}_i,
\bar{b}_i) \rightarrow (\mf{A}, \bar{a}) \in \mc{H}$.  Since $\mf{B}
\models \phi$, we have $\mf{B} \models \varphi$ and hence $(\mf{B},
\bar{b}_i) \models \psi(\bar{x})$. Since existential-positive formulas
are preserved under homomorphisms, we have $(\mf{A}, \bar{a}) \models
\psi(\bar{x})$.  Since $\bar{a}$ is arbitrary, we have $\mf{A} \models
\varphi$, whence $\mf{A} \models \phi$. Then $\phi$ is $\hpc{k}$ over
$\cl{S}$.

\ul{`Only if' part of $\lghpt{k}$}: Let $\phi$ be an $\mc{L}$ sentence
that is $\hpc{k}$ over $\cl{S}$. Let the rank of $\phi$ be $m$ and let
$p = \witfn{\cl{S}}{k}{\mc{L}}(m)$, where $\witfn{\cl{S}}{k}{\mc{L}}$
is a \emph{witness function} of $\hlebsp{\cl{S}}{k}$. Let
$\text{Mod}(\cl{S}^k, \phi, p)$ be the set (upto isomorphism) of all
models of $\phi$ in $\cl{S}^k$ that have size $\leq p$. For $(\mf{B},
\bar{b}) \in \text{Mod}(\cl{S}^k, \phi, p)$, let $\xi_{(\mf{B},
  \bar{b})}$ be the canonical conjunctive query associated with
$(\mf{B}, \bar{b})$. Observe that $\xi_{(\mf{B}, \bar{b})}$ is an
FO$(\tau_k)$ sentence.  Let $\xi_{(\mf{B}, \bar{b})}[c_1 \mapsto x_1;
  \ldots; c_k \mapsto x_k]$ be the formula whose free variables are
among $x_1, \ldots, x_k$, that is obtained by substituting $x_i$ for
the free occurrences of $c_i$ in $\xi_{(\mf{B}, \bar{b})}$ for $i \in
\{1, \ldots, k\}$, where $c_!, \ldots, c_k$ are the constants of
$\tau_k \setminus \tau$. We abuse notation slightly and denote
$\xi_{(\mf{B}, \bar{b})}[c_1 \mapsto x_1; \ldots; c_k \mapsto x_k]$
simply as $\xi_{(\mf{B}, \bar{b})}(x_1, \ldots, x_k)$.  Now consider
the sentence $\varphi = \forall x_1 \ldots \forall x_k \alpha(x_1,
\ldots, x_k)$ where $\alpha(x_1, \ldots, x_k) = \bigvee_{(\mf{B},
  \bar{b}) \in \text{Mod}(\cl{S}^k, \phi, p)} \xi_{(\mf{B},
  \bar{b})}(x_1, \ldots, x_k)$. Clearly $\varphi$ is $(\forall^k
\exists^*)$-positive. We show below that $\phi$ is equivalent to
$\varphi$ over $\cl{S}$.

\begin{itemize}[leftmargin=*,nosep]
\item \ul{$\phi \rightarrow \varphi$:} Let $\mf{A} \in \cl{S}$ be such
  that $\mf{A} \models \phi$.  Let $t = |\mf{A}|^k$ and let
  $\bar{a}_1, \ldots, \bar{a}_{t}$ be an enumeration of the $k$-tuples
  of $\mf{A}$. Let $I = \{1, \ldots, t\}$.  Since $\hlebsp{\cl{S}}{k}$
  is true, we have for each $k$-tuple $\bar{a}_i$ from $\mf{A}$ where
  $i \in I$, that there exists a structure $(\mf{B}_i, \bar{b}_i) \in
  \cl{S}^k$ such that (i) there is a homomorphism $h_i: (\mf{B}_i,
  \bar{b}_i) \rightarrow (\mf{A}, \bar{a}_i)$, (ii) $|\mf{B}_i| \leq
  \witfn{\cl{S}}{k}{\mc{L}}(m) = p$, and (iii)
  $\ltp{\mf{B}_i}{\bar{b}_i}{m}(\bar{x}) =
  \ltp{\mf{A}}{\bar{a}_i}{m}(\bar{x})$. Then $\mf{B}_i \lequiv{m}
  \mf{A}$. Since the rank of $\phi$ is $m$, we have $\mf{B}_i \models
  \phi$; then $(\mf{B}_i, \bar{b}_i) \in \text{Mod}(\cl{S}^k, \phi,
  p)$. Let $\xi_{(\mf{B}_i, \bar{b}_i)}$ be the canonical conjunctive
  query associated with $(\mf{B}_i, \bar{b}_i)$, where $i \in I$. By
  Theorem~\ref{theorem:chandra-merlin}, we have for each $i \in I$,
  that $(\mf{A}, \bar{a}_i) \models \xi_{(\mf{B}_i, \bar{b}_i)}$
  whereby $(\mf{A}, \bar{a}_i) \models \alpha(x_1, \ldots, x_k)$.
  Then $\mf{A} \models \varphi$.

\item \ul{$\varphi \rightarrow \phi$:} Let $\mf{A} \in \cl{S}$ be such
  that $\mf{A} \models \varphi$.  As before, let $I = \{1, \ldots,
  t\}$ and $\bar{a}_1, \ldots, \bar{a}_{t}$ be an enumeration of the
  $k$-tuples of $\mf{A}$. Since $\mf{A} \models \varphi$, we have,
  recalling the form of $\varphi$, that for each $i \in I$, $(\mf{A},
  \bar{a}_i) \models \xi_{(\mf{B}_i, \bar{b}_i)}(x_1, \ldots, x_k)$
  for some $(\mf{B}_i, \bar{b}_i) \in \text{Mod}(\cl{S}^k, \phi,
  p)$. By Theorem~\ref{theorem:chandra-merlin}, there is a
  homomorphism $h_i: (\mf{B}_i, \bar{b}_i) \rightarrow (\mf{A},
  \bar{a}_i)$. Then the set $\{h_i \mid i \in I\}$ is a $k$-ary
  homomorphic covering from $R$ to $\mf{A}$.  Since $(\mf{B}_i,
  \bar{b}_i) \in \text{Mod}(\cl{S}^k, \phi, p)$, we have $(\mf{B}_i,
  \bar{b}_i) \models \phi$ for each $i \in I$. Then since $\phi$ is
  $\hpc{k}$ over $\cl{S}$, we have $\mf{A} \models \phi$.
\end{itemize}

\vspace{2pt}
That the above result holds when $\hlebsp{\cl{S}}{k}$ is replaced with
$\lebsp{\cl{S}}{k}$ follows directly from
Lemma~\ref{lemma:relating-ebsp-and-gebsp-with-their-hom-versions}.
\end{proof}

\chapter{Directions for future work}\label{chapter:open-questions-finite-model-theory}

The results seen so far naturally motivate various questions that we
propose as future work.

\tbf{A. Questions regarding $\lebsp{\cdot}{k}$:}
\begin{enumerate}[leftmargin=*, nosep]
\item {[}{Model-theoretic}{]} The {\lt} theorem and the homomorphism
  preservation theorem are true over any class satisfying
  $\lebsp{\cdot}{k}$ (Theorems~\ref{theorem:lebsp-implies-glt(k)} and
  \ref{theorem:h-ebsp-implies-ghpt(k)}).  What other theorems of
  classical model theory are true of classes satisfying
  $\lebsp{\cdot}{k}$?  For instance, are Lyndon's positivity theorem
  and Craig's interpolation theorem true of $\lebsp{\cdot}{k}$ classes?
\item {[}{Poset-theoretic}{]} Theorem~\ref{theorem:wqo-implies-lebsp}
  shows us that w.q.o. under embedding entails $\lebsp{\cdot}{0}$. The
  converse however is not true:
  Proposition~\ref{prop:febsp-does-not-imply-wqo} gives a class that
  is not w.q.o. under embedding but for which $\lebsp{\cdot}{0}$
  holds. However, this class is not hereditary. This motivates the
  following question: Under what reasonable closure assumptions on a
  class $\cl{S}$ does $\lebsp{\cl{S}}{0}$ become equivalent to
  w.q.o. under embedding?  Another natural question, given
  Remark~\ref{remark:wqo-does-not-imply-computable-lebsp}, is the
  following: what strengthing of the w.q.o. under embedding property
  entails $\lebsp{\cdot}{0}$ with \emph{computable} witness functions?
\item {[}{Relational structures whose Gaifman graphs are $n$-partite
    cographs}{]} Given a $\tau$-structure $\mf{A}$ where $\tau$ is
  relational, the \emph{Gaifman graph} of $\mf{A}$ is an undirected
  graph $\mc{G}(\mf{A}) = (V, E)$ such that $V$ is exactly
  $\univ{\mf{A}}$, and for $a, b \in V$, the pair $(a, b) \in E$ iff
  for some $r$-ary relation $R \in \tau$ and some $r$-tuple $\bar{c}
  \in R^{\mf{A}}$, it is the case that $\bar{c}$ contains $a$ and $b$
  as components.  We can now ask whether our results showing
  $\lebsp{\cdot}{k}$ for graphs
  (cf. Theorem~\ref{theorem:n-partite-cographs-satisfy-lebsp}) can be
  lifted to relational structures via the Gaifman graphs of the
  latter. Specifically, is it the case under suitable assumptions,
  that a class of finite relational structures whose Gaifman graphs
  form a subclass of $n$-partite cographs, satisfies
  $\lebsp{\cdot}{k}$, and further, with a computable witness function?
  As a step in this direction, we indeed have been able to show that a
  hereditary class of relational structures whose Gaifman graphs are
  of bounded tree-depth, satisfies $\febsp{\cdot}{k}$ with a
  computable witness function (see Theorem 4 of~\cite{abhisekh-mfcs}).
\item {[}{Computational}{]} For what classes of structures that
  satisfy $\lebsp{\cdot}{k}$, is it the case that there are elementary
  witness functions (as opposed to just computable witness functions)?
  For the case of words, trees and nested words, $\lebsp{\cdot}{k}$
  holds with necessarily non-elementary witness functions. This is
  because for any class $\cl{S}$ satisfying $\lebsp{\cdot}{k}$ with
  witness function $\witfn{\cl{S}}{k}{\mc{L}}$, since any structure in
  the class is $(m, \mc{L})$-similar to a structure of size $\leq
  \witfn{\cl{S}}{k}{\mc{L}}(m)$, the index of the $\lequiv{m}$
  relation over $\cl{S}$ is elementary if $\witfn{\cl{S}}{k}{\mc{L}}$
  is elementary. However over words, the index of the $\fequiv{m}$
  relation itself is non-elementary~\cite{frick-grohe}.

  Since the model checking problem for $\mso$ is fixed parameter
  tractable with elementary dependence on formula size, over classes
  of structures of bounded tree-depth or bounded shrub-depth, we would
  like to investigate if this elementariness shows up as the
  elementariness of the witness functions for the $\lebsp{\cdot}{k}$
  properties of the aforementioned classes. If so, this would also
  show that the index of the $\lequiv{m}$ relation is elementary over
  these classes, as reasoned above.

\item {[}{Concerning closure under operations}{]} Is there a syntactic
  characterization of operations that are quantifier-free, monotone
  and $\lequiv{m}$-preserving?
  (cf. Theorem~\ref{theorem:lebsp(S,0)-pres-under-MSO-def-op-tree-lang})
  Also, can
  Theorem~\ref{theorem:lebsp(S,0)-pres-under-MSO-def-op-tree-lang} be
  generalized to $k > 0$?
\item {[}{Structural}{]} Is there a structural characterization of
  posets/graphs that satisfy $\lebsp{\cdot}{k}$? If not in general,
  then under reasonable closure assumptions on the classes (like say
  hereditariness)?  As a step in this direction,
  Theorem~\ref{theorem:glt(k)-implies-bounded-ind-path-lengths} shows
  that any hereditary class $\cl{S}$ of directed graphs for which
  $\lebsp{\cdot}{k}$ holds for any $k \ge 2$ (and hence over which
  $\glt{k}$ holds by Theorem~\ref{theorem:lebsp-implies-glt(k)}) must
  be such that the underlying undirected graphs of the graphs of
  $\cl{S}$ must have bounded induced path lengths.  The converse of
  this statement is a technical challenging question, that we wish to
  investigate. Given the ``empirical evidence'' that many interesting
  classes of structures of interest in computer science satisfy
  $\lebsp{\cdot}{\cdot}$, a structural characterization of the latter,
  even under reasonable assumptions (like hereditariness), might
  ``give back'' notions/new classes of structures of use and relevance
  to computer science. (As a very successful recent example of such a
  ``give back'', a structural characterization under the assumption of
  hereditariness, of the notion of \emph{quasi-wideness} that was
  introduced in~\cite{dawar-hom} in the context of the homomorphism
  preservation theorem, yielded the notion of \emph{nowhere dense
    graphs}~\cite{nowhere-dense-original,nowhere-dense-2}, and this
  class of graphs has turned out to be widely useful from the
  combinatorial and algorithmic points of
  view~\cite{dawar-domination,kernelization,sparsegraphs}.)

\item {[}{Probabilistic}{]} One can define a ``probabilistic version''
  of $\lebsp{\cl{S}}{0}$ in which, instead of asserting that for any
  structure in $\cl{S}$, a bounded $(m, \mc{L})$-similar substructure
  of it that is in $\cl{S}$, exists with probability 1, one asserts
  the same ``with high probability''. One can define an analogous
  probabilistic version of $\lebsp{\cl{S}}{k}$. It would be
  interesting to investigate what classes of graphs satisfy this
  version of $\lebsp{\cdot}{k}$.

\end{enumerate}

\vspace{5pt}
\tbf{B. Questions regarding $\fghpt{k}$:}

\begin{enumerate}[leftmargin=*,nosep]
\item Using techniques very similar to those presented in
  Section~\ref{section:glt-for-sentences}, and using \emph{special
    models} (see Chp. 5 of~\cite{chang-keisler}) instead of
  $\lambda$-saturated models, we can show the following
  result. Observe that $\fghpt{0}$ is exactly $\hpt$.

\begin{futureworktheorem}[The generalized $\hpt$]\label{theorem:ghpt-over-all-structures}
Let $\cl{S}$ be a class of arbitrary structures, that is
elementary. Then $\fghpt{k}$ holds over $\cl{S}$ for each $k \in
\mathbb{N}$.
\end{futureworktheorem}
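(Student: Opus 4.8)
The plan is to mirror the saturated-structures proof of the extensional half of $\glt{k}$ (Theorem~\ref{theorem:glt(k)}, via Lemma~\ref{lemma:k-ary-covers-for-saturated-models}), replacing isomorphic embeddings with homomorphisms, the class of $\forall^k\exists^*$ consequences with the class of $(\forall^k\exists^*)$-positive consequences, and $\lambda$-saturated models with \emph{special} models (Chapter~5 of~\cite{chang-keisler}). Since $\cl{S}$ is elementary, I would fix an FO theory $V$ whose models are exactly the structures of $\cl{S}$, so that all $\hpc{k}$ and definability statements are read ``modulo $V$''. The ``if'' direction of $\fghpt{k}$ — that every $(\forall^k\exists^*)$-positive sentence is $\hpc{k}$ over $\cl{S}$ — is already the content of the `If' part of $\lghpt{k}$ established inside the proof of Theorem~\ref{theorem:h-ebsp-implies-ghpt(k)}, so I would simply cite it. The work is entirely in the `only if' direction.

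The heart of the argument is a homomorphic analogue of Lemma~\ref{lemma:k-ary-covers-for-saturated-models}: for consistent theories $V,T$ and $\Gamma$ the set of $(\forall^k\exists^*)$-positive consequences of $T$ modulo $V$, I claim that for every sufficiently special model $\mf{A}$ of $V$, one has $\mf{A}\models\Gamma$ iff there is a $k$-ary homomorphic cover $R$ of $\mf{A}$ (Definition~\ref{defn:k-ary-hom-cover}) with $\mf{B}\models V\cup T$ for every $\mf{B}\in R$. The easy direction copies the original: each $\gamma\in\Gamma$ is $(\forall^k\exists^*)$-positive hence $\hpc{k}$, each member of $R$ satisfies $V\cup T$ and so $\gamma$, and $\hpc{k}$-preservation along the covering forces $\mf{A}\models\gamma$.

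For the substantive direction I would, for each $k$-tuple $\bar{a}$ of $\mf{A}$, produce a model $(\mf{B},\bar{b})$ of $V\cup T$ together with a homomorphism $h\colon(\mf{B},\bar{b})\to(\mf{A},\bar{a})$; the family of these $\mf{B}$'s is the desired cover. To get $(\mf{B},\bar{b})$, consider $Z(\bar{x})=V\cup T\cup\Theta_{\bar{a}}(\bar{x})$, where $\Theta_{\bar{a}}(\bar{x})=\{\neg\theta(\bar{x})\mid \theta$ is existential-positive and $(\mf{A},\bar{a})\not\models\theta(\bar{x})\}$. Consistency of $Z(\bar{x})$ follows exactly as in Lemma~\ref{lemma:k-ary-covers-for-saturated-models}: were a finite piece inconsistent, compactness and $\forall$-introduction would give $V\cup T\vdash\forall^k\bar{x}\,(\theta_1\vee\cdots\vee\theta_n)$ with each existential-positive $\theta_i$ failing at $\bar{a}$ in $\mf{A}$; but $\forall^k\bar{x}\,\bigvee_i\theta_i$ is $(\forall^k\exists^*)$-positive, hence lies in $\Gamma$, hence holds at $\bar{a}$ — a contradiction. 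Taking a small model of $Z(\bar{x})$ (via the \dls theorem) yields $(\mf{B},\bar{b})$ every existential-positive formula of which, evaluated at $\bar{b}$, holds of $\bar{a}$ in $\mf{A}$. The remaining and genuinely delicate step is to convert this ``positive domination'' into an actual homomorphism. This is where special models enter, playing the role that Proposition~\ref{prop:saturation-results}(\ref{prop:saturation-results:isomorphic-embedding}) plays for embeddings: I would prove (or invoke from Chapter~5 of~\cite{chang-keisler}) that if every existential-positive sentence true in $(\mf{B},\bar{b})$ holds in a special model $(\mf{A},\bar{a})$ of cardinality at least $|\mf{B}|$, then a homomorphism $(\mf{B},\bar{b})\to(\mf{A},\bar{a})$ exists, built by a ``forth'' construction that at each stage realizes a finitely-satisfiable positive-existential type. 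I expect this lemma to be the main obstacle, both because the positive back-and-forth is subtler than the embedding case and because one must verify that special models furnish it in all the cardinalities the covering demands.

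Finally I would assemble the theorem as in the $\glt{k}$ proof. Given $\phi$ that is $\hpc{k}$ modulo $V$, let $\Gamma$ be its $(\forall^k\exists^*)$-positive consequences modulo $V$; clearly $(V\cup\{\phi\})\vdash\Gamma$. For the converse, if $\mf{A}\models V\cup\Gamma$, pass to a special elementary extension $\mf{A}^+$ (so $\mf{A}^+\models V\cup\Gamma$ and $\mf{A}^+\in\cl{S}$), apply the core lemma to obtain a $k$-ary homomorphic cover of $\mf{A}^+$ by models of $V\cup\{\phi\}$, conclude $\mf{A}^+\models\phi$ by $\hpc{k}$, and descend to $\mf{A}\models\phi$ using $\mf{A}\preceq\mf{A}^+$. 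Thus $\phi$ is equivalent modulo $V$ to $\Gamma$; compactness then replaces $\Gamma$ by a finite subset, and since a finite conjunction of $(\forall^k\exists^*)$-positive sentences is again $(\forall^k\exists^*)$-positive (combine the matrices under a single $\forall^k$), $\phi$ is $\cl{S}$-equivalent to a single $(\forall^k\exists^*)$-positive sentence. The degenerate case where $V\cup\{\phi\}$ is unsatisfiable should be dispatched separately, under the usual convention that admits $\bot$ (or $\top$) among the positive sentences.
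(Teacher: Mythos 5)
Your proposal is correct and follows exactly the route the paper intends: the paper gives no detailed proof of this future-work theorem, only the remark that it follows by the techniques of Section~\ref{section:glt-for-sentences} with special models in place of $\lambda$-saturated ones, which is precisely your plan (homomorphic analogue of Lemma~\ref{lemma:k-ary-covers-for-saturated-models}, positive consequences in place of $\forall^k\exists^*$ consequences, and the standard Chang--Keisler fact that positive-existential domination into a sufficiently saturated/special target yields a homomorphism). The one step you flag as delicate is indeed the crux, but it is the known lemma underlying Lyndon's homomorphism theorem in Chapter~5 of~\cite{chang-keisler}, so there is no gap.
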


Over all finite structures, we know that $\fghpt{0}$, which is $\hpt$,
is true by the results of Rossman~\cite{rossman-hom}.  Given that the
$\hpt$ is amongst the very rare theorems from classical model theory
to hold over all finite structures, it would be interesting to
investigate if $\fghpt{k}$ holds over all finite structures for $k >
0$.

\end{enumerate}

\vspace{5pt}
\tbf{C. Questions concerning $\glt{k}$:}

\begin{enumerate}[leftmargin=*,nosep]
\item As Proposition~\ref{prop:glt(k)-failure-for-special-classes}
  demonstrates, each of the classes of structures that are acyclic, or
  of bounded degree (more generally, wide), or of bounded tree-width
  fails to satisfy $\glt{k}$ for $k \ge 2$. A natural question to
  investigate is the case of $k = 1$.
\item Proposition~\ref{prop:failure-of-glt(k)-in-the-finite} shows for
  each $l \ge 0$, that $PSC(l)$ sentences cannot be equivalent to
  $\exists^k \forall^*$ sentences for any fixed $k \ge 0$. In
  particular, for each $k \ge 0$,
  Proposition~\ref{prop:failure-of-glt(k)-in-the-finite} gives a
  sentence $\psi_k$ that is $PS$, and hence $PSC(l)$ for each $l \ge
  0$, over all finite structures, but that is not equivalent in the
  finite, to any $\exists^k \forall^*$ sentence. However, $\psi_k$ is
  itself an $\exists^{k+1} \forall^*$ sentence, i.e. a $\Sigma^0_2$
  sentence
  (cf. Remark~\ref{remark:suggestions-for-PSC=Sigma_2-in-the-finite}).

  This raises the following question: Is it the case that for each $l
  \ge 0$, any sentence that is $PSC(l)$ in the finite is equivalent in
  the finite, to a $\Sigma^0_2$ sentence? Recall that $PSC =
  \bigvee_{l \ge 0} PSC(l)$, and that every $\Sigma^0_2$ sentence is
  $PSC$ over any class of structures. We can then reframe the
  aforesaid question as: Over all finite structures, is it the case
  that a sentence is $PSC$ iff it is equivalent to a $\Sigma^0_2$
  sentence?  We conjecture that this is indeed the case.

\begin{futureworkconj}\label{conjecture}
Over the class of all finite structures, a sentence is $PSC$ if, and
only if, it is equivalent to a $\Sigma^0_2$ sentence.
\end{futureworkconj}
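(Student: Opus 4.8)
The plan is to treat the two directions separately, the forward one being routine and the converse being the genuine conjecture. For ``$\Sigma^0_2 \Rightarrow PSC$'', if $\phi$ is $\cl{S}$-equivalent to a $\Sigma^0_2$ sentence $\sigma = \exists^n \bar{x}\, \forall^* \bar{y}\, \psi$ over the class $\cl{S}$ of all finite structures, then, exactly as observed at the end of Section~\ref{section:PSC(k)}, $\sigma$ is $PSC(n)$ over $\cl{S}$ (witnesses of the $n$ existential quantifiers form an $n$-crux), whence $\phi$ is $PSC(n)$ and in particular $PSC$. So I would only need the converse: if $\phi$ is $PSC$ over all finite structures, then $\phi$ is $\Sigma^0_2$-definable over that class. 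Fix $k$ with $\phi$ being $PSC(k)$ in the finite and set $q = \rank{\phi}$ (I may assume $\phi$ finitely satisfiable, else it is equivalent to a contradiction, which is $\Sigma^0_2$). The one structural fact I would extract immediately is that every finite model $\mf{A}$ of $\phi$ contains a submodel of size at most $k$: the substructure induced by a $k$-crux $C$ contains $C$ and is therefore itself a model.

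The tempting first attempt is to existentially guess the crux and universally verify it. Using the relativization $\phi|_{\bar{x}\bar{y}}$ of Lemma~\ref{lemma:finite-relativizations}, the sentence $\Phi_m := \exists^k \bar{x}\, \forall^m \bar{y}\, \phi|_{\bar{x}\bar{y}}(\bar{x},\bar{y})$ is $\Sigma^0_2$ and asserts ``there is a $k$-tuple $\bar{a}$ such that every substructure of size at most $k+m$ containing $\bar{a}$ is a model''. Every model of $\phi$ satisfies $\Phi_m$ for all $m$, since a genuine crux makes all substructures containing it models. However, I would flag at once that this route \emph{cannot} succeed as stated: for $\phi = \psi_0$ of Proposition~\ref{prop:failure-of-glt(k)-in-the-finite} (which is $PS$, hence $PSC(0)$) the sentence $\Phi_m$ is universal, and that Proposition states $\psi_0$ is not $\cl{S}$-equivalent to any $\exists^0\forall^*$, i.e.\ universal, sentence; so no finite $m$ gives $\Phi_m \models \phi$. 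The lesson, consistent with Remark~\ref{remark:suggestions-for-PSC=Sigma_2-in-the-finite} (where $\psi_k$ turns out to be $\exists^{k+1}\forall^*$), is that the equivalent $\Sigma^0_2$ sentence must existentially guess, \emph{beyond} the crux, a bounded ``certificate'' that renders a universal check sound. The real plan is therefore to isolate, from the semantics of a $PSC(k)$ sentence, a bounded family of such certificates: guess the crux together with the $q$-type data of a bounded number of additional elements, package the finitely many possibilities (finitely many by Proposition~\ref{prop:finite-index-of-lequiv(m)-relation}) into one existential block, and relativize $\phi$ in the universal part.

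The heart of the matter, and the step I expect to be the main obstacle, is a \emph{uniform local-to-global} lemma justifying such a certificate: a bound $m = m(\phi,k)$, depending only on $\phi$ and $k$ and \emph{not} on the ambient model, under which the guessed bounded data forces $\mf{A} \models \phi$. Over arbitrary structures this is precisely what compactness and ascending chains deliver in Section~\ref{subsection:proof-of-glt-using-chains}; in the finite both tools are unavailable, so I would try to \emph{finitize} the chain argument by a pigeonhole/pumping analysis: given a hypothetical finite model carrying a good crux but with a larger substructure failing $\phi$, take a minimal failing substructure, view the inclusion chain from the crux-substructure up to it, and use the finite index of $\equiv_q$ to locate two $\equiv_q$-equivalent levels that can be pumped to contradict minimality or $PSC(k)$. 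The danger is exactly that this pumping need not simultaneously preserve ``contains the crux'', ``is an induced substructure'', ``fails $\phi$'', and ``the crux still witnesses $PSC(k)$ in the ambient model'' — the finite analogue of the amalgamation that saturation supplies for free.

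Finally, since the statement is only conjectured, I would keep a refutation track running in parallel, guided by two facts that localize where a counterexample must live. First, the conjecture already holds on every class satisfying $\lebsp{\cdot}{k}$: there $\glt{k}$ holds (Theorem~\ref{theorem:lebsp-implies-glt(k)}), so $PSC$ collapses to $\exists^k\forall^*$ and hence to $\Sigma^0_2$; thus any counterexample must be driven by structures from a class \emph{without} $\lebsp{\cdot}{\cdot}$, which by Theorem~\ref{theorem:glt(k)-implies-bounded-ind-path-lengths} (under hereditariness) must have unbounded induced path lengths — exactly the order/path-like structures underlying $\psi_k$. Second, one cannot shortcut via the arbitrary-structure result (Corollary~\ref{corollary:glt-PSC-PCE}), because being $PSC$ in the finite does not entail being $PSC$ over all structures (the Tait phenomenon is precisely a sentence $PS$ in the finite but not over all structures); this confirms that finiteness is essential and any proof must be combinatorial. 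Concretely I would search, using Ehrenfeucht-Fr\"aiss\'e lower bounds in the spirit of Proposition~\ref{prop:failure-of-glt(k)-in-the-finite}, for a $PSC(k)$ sentence over long induced paths whose ``reasons for modelhood'' evade every bounded existential certificate: success would refute the conjecture, while a pattern of failures would reveal which invariant the uniform bound $m(\phi,k)$ should track.
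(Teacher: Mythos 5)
The statement you were asked to prove is stated in the paper only as a conjecture (Conjecture~\ref{conjecture}); the paper offers no proof of it, so there is no argument to compare yours against beyond the surrounding discussion. Measured against that discussion, your framing is accurate: the ``if'' direction is indeed the routine observation from Section~\ref{section:PSC(k)} that a $\Sigma^0_2$ sentence with $n$ existential quantifiers is $PSC(n)$ (witnesses form a crux, and this works over any class, finite structures included); your diagnosis of why the naive certificate $\exists^k \bar{x}\,\forall^m \bar{y}\,\phi|_{\bar{x}\bar{y}}$ cannot suffice is exactly the content of Proposition~\ref{prop:failure-of-glt(k)-in-the-finite} together with Remark~\ref{remark:suggestions-for-PSC=Sigma_2-in-the-finite}; and your two localization observations --- that the conjecture already holds on any class satisfying $\lebsp{\cdot}{k}$ via Theorem~\ref{theorem:lebsp-implies-glt(k)}, and that one cannot shortcut through Corollary~\ref{corollary:glt-PSC-PCE} because a sentence $PSC$ in the finite need not be $PSC$ over arbitrary structures --- are both correct and consistent with the paper's own framing of where the difficulty lives.

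That said, this is not a proof, and you are candid about where it stops: the ``uniform local-to-global lemma'' producing a bound $m(\phi,k)$ independent of the ambient model is precisely the open content of the conjecture, not a reduction of it to something easier. Your proposed finitization of the ascending-chain argument of Section~\ref{subsection:proof-of-glt-using-chains} --- pigeonholing on $\equiv_q$-classes along an inclusion chain inside a single finite model --- leaves unresolved exactly the issue you flag yourself: the pumping step must simultaneously preserve ``is an induced substructure of $\mf{A}$,'' ``contains the crux,'' ``fails $\phi$,'' and ``the crux still certifies $PSC(k)$,'' and the $\equiv_q$-type of a level of the chain does not control the first two of these, which is the finite-model analogue of the amalgamation that $\mu$-saturation provides for free in the classical proof. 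So the verdict is: correct and well-informed framing of an open problem, a complete argument for the easy direction, and an honest, unclosed gap for the hard direction --- which is the same position the paper itself is in.
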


  Over arbitrary structures, $PSC$ \emph{is} characterized by
  $\Sigma^0_2$ as shown by Corollary~\ref{corollary:glt-PSC-PCE}.
  Then proving Conjecture~\ref{conjecture} in the affirmative would
  give us a preservation theorem that is not only true over arbitrary
  structures but also true over all finite structures. It would be
  interesting to investigate (the relativized version of) this
  conjecture over the special classes of structures mentioned in the
  previous point, and also over the classes considered in the context
  of the homomorphism preservation theorem (such as nowhere dense
  classes).
\end{enumerate}

\chapter{A summary of our contributions}\label{chapter:conclusion}

We conclude by summarizing the contributions of this thesis in the
classical and finite model theory settings. In each of these settings,
our contributions are of three kinds: notions, results and techniques.

\vspace{3pt} \underline{Classical model theory}:

\begin{enumerate}[leftmargin=*,nosep]
\item[A.] Notions: We introduce the properties of \emph{preservation
  under substructures modulo $k$-cruxes ($PSC(k)$)} and
  \emph{preservation under $k$-ary convered extensions ($PCE(k)$)} as
  natural parameterized generalizations of the classical properties of
  preservation under substructures and preservation under extensions
  (Definitions~\ref{defn:PSC(k)} and ~\ref{defn:PCE(k)}). Our
  properties are finitary and combinatorial, and are non-trivial both
  over arbitrary structures as well as over finite structures.


\item[B.] Results:
 \begin{enumerate}[leftmargin=*,nosep]
  \item The generalized {\lt} theorem for sentences ($\glt{k}$): This
    result provides semantic characterizations of the $\exists^k
    \forall^*$ and $\forall^k \exists^*$ classes of sentences
    (Theorem~\ref{theorem:glt(k)}). Whereby, we get \emph{finer
      characterizations} of the $\Sigma^0_2$ and $\Pi^0_2$ fragments
    of FO sentences than those in the literature, which are via
    notions like unions of ascending chains, intersections of
    descending chains, Keisler's 1-sandwiches, etc. \emph{None} of the
    latter notions relates the \emph{count} of quantifiers to any
    model-theoretic properties. As a consequence of $\glt{k}$, we
    obtain new semantic characterizations of the $\Sigma^0_2$ and
    $\Pi^0_2$ classes of FO sentences
    (Corollary~\ref{corollary:glt-PSC-PCE}).
  \item New semantic characterizations of the $\Sigma^0_2$ and
    $\Pi^0_2$ classes of FO theories: These characterizations are
    obtained via ``infinitary'' variants of $PSC(k)$ and $PCE(k)$,
    namely, the notions of preservation under substructures modulo
    $\lambda$-cruxes and preservation under $\lambda$-ary covered
    extensions respectively, for infinite cardinals $\lambda >
    \aleph_0$
    (Theorems~\ref{theorem:subst-char-for-theories}(\ref{theorem:PSC(lambda)-char-for-theories})
    and~\ref{theorem:ext-chars}(\ref{theorem:char-of-PCE(lambda)-theories})).
  \item Applications in proving inexpressibility results in FO: We
    give new and simple proofs of well-known inexpressibility results
    in FO, such as inexpressibility of acyclicity, connectedness,
    bipartiteness, etc., using our preservation theorems
    (Section~\ref{subsection:inexp}).
  \end{enumerate}

\item[C.] Techniques: We introduce a novel technique of getting a
  syntactically defined FO theory equivalent to a given FO theory
  satisfying a semantic property, \emph{by going outside of FO}
  (Lemma~\ref{lemma:inf-characterization-of-PSC-var(k)} and
  Proposition~\ref{prop:char-of-inf-formulae-defining-elem-classes}). The
  idea is to first express the semantic property in a syntactically
  defined fragment of an infinitary logic , and then use a
  ``compiler-result'' to translate the aforementioned infinitary
  sentences to equivalent FO theories, when these sentences are known
  to be equivalent to FO theories. The latter FO theories are obtained
  from suitable \emph{finite approximations} of the infinitary
  sentences, that are defined syntactically in terms of the latter.
  We believe this technique of accessing the descriptive power of an
  infinitary logic followed by accessing the translation power of a
  compiler result, may have other applications.
\end{enumerate}


\vspace{7pt}\underline{Finite model theory}:

\begin{enumerate}[leftmargin=*,nosep]
\item[A.] Notions: We define a new logic based combinatorial property
  of finite structures that we call the \emph{$\mc{L}$-Equivalent
    Bounded Substructure Property $\lebsp{\cl{S}}{k}$}
  (Definition~\ref{defn:lebsp}).  

  
\item[B.] Results:
  \begin{enumerate}[leftmargin=*,nosep]
  \item
    A strengthening of the classical result showing the failure of the
    {\lt} theorem in the finite: We show that there is a vocabulary
    $\tau$ such that for each $k$, there is an $\fo(\tau)$ sentence
    that is preserved under substructures over the class $\cl{S}$ of
    all finite structures but that is not equivalent over $\cl{S}$, to
    any $\exists^k \forall^*$ sentence
    (Theorem~\ref{prop:failure-of-glt(k)-in-the-finite}). The case of
    $k = 0$ of this result is the classical failure of the {\lt}
    theorem in the finite.
    
  \item
    A preservation theorem that imposes structural restrictions: We
    show that under the assumption that a given class $\cl{S}$ of
    graphs is hereditary, if $\glt{k}$ holds over $\cl{S}$, then
    $\cl{S}$ must have bounded induced path lengths
    (Theorem~\ref{theorem:glt(k)-implies-bounded-ind-path-lengths}).
    
  \item
    Characterizing prenex FO sentences with two blocks of quantifiers:
    The preservation theorems studied over well-behaved classes,
    namely the {\lt} theorem and the homomorphism preservation
    theorem, characterize $\Sigma^0_1$ and $\Pi^0_1$ sentences --
    sentences that contain only one block of quantifiers -- or
    subclasses of these. We characterize over various interesting
    classes of finite structures, $\Sigma^0_2$ and $\Pi^0_2$
    sentences -- sentences which contain two blocks of quantifiers.

  \item
    Strong connections of $\lebsp{\cl{S}}{k}$ with classical model
    theory: The property of $\lebsp{\cl{S}}{k}$ entails $\glt{k}$ (and
    hence the {\lt} theorem) as well as a generalization of the
    homomorphism preservation theorem, and even ``effective'' versions
    of all these theorems if the witness function for
    $\lebsp{\cl{S}}{k}$ is computable
    (Theorems~\ref{theorem:lebsp-implies-glt(k)}
    and~\ref{theorem:h-ebsp-implies-ghpt(k)}).  Furthermore, from the
    very close resemblance of its definition to that of the {\dls}
    property, $\lebsp{\cl{S}}{k}$ can very well be regarded as a
    finitary analogue of the latter
    (Section~\ref{section:lebsp-and-dlsp}). To the best of our
    knowledge, finitary analogues of intrinsically infinitary
    properties from classical model theory have rarely been studied
    earlier.

  \item
    Strong connections of $\lebsp{\cl{S}}{k}$ with computer science:
    We show that a variety of classes of interest in computer science
    satisfy $\lebsp{\cdot}{k}$, and further, with computable witness
    functions. These include the classes of words, trees (unordered,
    ordered, or ranked), nested words, cographs, graphs of bounded
    tree-depth, graph classes of bounded shrub-depth and $n$-partite
    cographs
    (Theorems~\ref{theorem:words-and-trees-and-nested-words-satisfy-lebsp}
    and~\ref{theorem:n-partite-cographs-satisfy-lebsp}). We show that
    $\lebsp{\cdot}{\cdot}$ remains preserved under finite unions and
    finite intersections, and under taking subclasses that are
    hereditary or $\mc{L}$-definable
    (Lemma~\ref{lemma:lebsp-closure-under-set-theoretic-ops}). Again,
    $\lebsp{\cdot}{\cdot}$ remains preserved under various
    well-studied operations from the literature that are implementable
    using quantifier-free translation schemes; these include unary
    operations like complementation, transpose and the line-graph
    operation, binary ``sum-like'' operations like disjoint union,
    join and pointed substitution, and binary ``product-like''
    operations that include various kinds of products like cartesian,
    tensor, lexicographic and strong products
    (Corollary~\ref{corollaty:lebsp-closure-under-1-step-operations}). While
    it follows that $\lebsp{\cdot}{\cdot}$ remains preserved under
    finite unions of classes obtained by finite compositions of the
    above operations, we show that $\lebsp{\cdot}{0}$ remains
    preserved even infinite unions of such classes, provided these
    unions are ``regular''
    (Theorem~\ref{theorem:lebsp(S,0)-pres-under-MSO-def-op-tree-lang}).
    These various closure properties enables us to construct a wide
    spectrum of classes of finite structures that satisfy
    $\lebsp{\cdot}{\cdot}$, and that are hence ``well-behaved''
    model-theoretically. All of these classes are different from the
    well-behaved classes considered in the
    literature~\cite{dawar-pres-under-ext,dawar-hom,nicole-lmcs-15},
    and were earlier not known to enjoy the many model-theoretic
    properties that they do.

  \item
    New composition results for nested words and $n$-partite cographs:
    Composition results allow inferring the formulas that are
    satisfied in a structure that is built up from smaller structures,
    from the formulas satisfied in the latter
    structures~\cite{makowsky}. Composition results for FO and MSO
    have traditionally been known for words. These have natural
    extensions to (unordered, ordered and ranked) trees. (We prove
    these extensions in this thesis.) We provide new FO and MSO
    composition results for nested words and $n$-partite cographs by
    defining the operations of ``insert'' and ``merge'' for these
    classes respectively, and showing that these operations possess
    the FO and MSO composition properties
    (Lemmas~\ref{lemma:nested-words-composition-lemma}
    and~\ref{lemma:composition-lemma-for-n-partite-cographs}).
  \item
    A new connection between well-quasi-ordering and logic: We show
    that any class of structures that is well-quasi-ordered (w.q.o.)
    under embedding satisfies $\lebsp{\cdot}{0}$
    (Theorem~\ref{theorem:wqo-implies-lebsp}). In contrapositive form,
    this result gives a logic-based tool to show that a class of
    structures is not w.q.o. under embedding. This result also shows that
    classes that are w.q.o. under embedding satisfy the {\lt}
    preservation theorem. This fact does not seem to be
    well-known~\cite{goubault16}.
    
  \end{enumerate}

\item[C.] Techniques: We prove an abstract result concerning tree
  representations (Theorem~\ref{theorem:abstract-tree-theorem}), that
  takes as input a tree-representation of a structure and produces as
  output, a small subtree that represents a small and logically
  similar substructure of the original structure. The output structure
  is obtained by iteratively performing appropriate ``prunings'' of,
  and ``graftings'' within, the input tree representation, in a manner
  that preserves the substructure and ``$(m, \mc{L})$-similarity''
  relations between the structures represented by the trees before and
  after the pruning and grafting operations.  Two key technical
  elements that are employed to perform the aforementioned operations
  are the finiteness of the index of the ``$(m, \mc{L})$-similarity''
  relation and the \emph{type-transfer property} of the
  tree-representation.  We utilize our abstract result in showing the
  $\lebsp{\cdot}{\cdot}$ property for the variety of classes of
  structures that we mentioned earlier. Given that many interesting
  classes of finite structures have natural representations using
  trees, it is possible that our abstract result has more applications
  than the ones indicated in this thesis.
 
\end{enumerate}




\Urlmuskip=0mu plus 2mu\relax
\bibliographystyle{plainnat}
\cleardoublepage
\phantomsection
\addcontentsline{toc}{chapter}{Bibliography}
\bibliography{refs}



{\cleardoublepage \fancyhead[RE,RO]{\helv \nouppercase{Acknowledgments}}}{}{}
\begin{acknowledgments}
I have a number of people to thank, who in their own unique ways have
contributed to my growth during the long journey of my Ph.D., and an
even longer stay in the IIT Bombay (IITB henceforth) campus. I have
organized my acknowledgments in two parts: ``academic'' and
``cultural''. In case I have missed out on thanking anyone, I
apologize sincerely, and offer my sincere thanks to them at the outset
itself.

\vspace{7pt} {\large{\tbf{A. ``Academic'' acknowledgments}}}

\vspace{5pt} First and foremost, I am indebted to my advisors 
Supratik Chakraborty and Bharat Adsul for the immense freedom and
unconditional support they gave me to pursue my research for the
length of time that I have. Indeed, but for these, I would never have
been able to express myself through my research, to my satisfaction.
For the many penetrating technical discussions, the valuable feedback
on my results, writing and presentations, the sustained stipend (which
I never had to bother about at all) and generous financial support for
internships and conferences, and for all their appreciation of my work
(which I feel is so vital for any creative activity), I offer my
deepest gratitude to them.


I would like thank my panel members Ajit A. Diwan, Paritosh Pandya,
and Akshay S., for providing me with their valuable comments over the
years in my annual progress seminars. I thank Ajit A. Diwan
additionally for various insightful discussions and for indeed being
an ``oracle'' for my queries on graph theory and combinatorics.  I
thank Akshay S. again, along with Nutan Limaye, for always being
helpful and appreciative, and for the discussions on inexpressibility
results using preservation theorems~\cite{inexp-TR}.

It was my great fortune that I met Anand Pillay at the Logic and Set
Theory satellite meeting of ICM (International Congress of
Mathematicians) 2010, held at CMI (Chennai Mathematical Institute) in
August 2010. I express my sincerest gratitude to him for having
directed me to the subject of classical model theory for the
characterization that I was after then ($\glt{k}$ indeed), and for
providing me, time and again, with the most valuable directions and
advice, extended discussions and historical perspectives on the
subject.


My heartfelt gratitude and thanks go to Anuj Dawar for the precious
interactions and guidance that he provided me with on so many
occasions on the subject of finite model theory. I thank him earnestly
again for giving me the memorable opportunity of doing an internship
with him during June-August 2012 at the University of Cambridge,
UK. His deeply insightful observations and keen directions have proven
to be of vital importance to many aspects of my thesis, particularly
the finite model theory part of it.

I would like to thank the Association for Logic in India (ALI) for
organizing the biennial Indian Conference on Logic and Applications
(ICLA) and the biennial Indian School on Logic and Applications (ISLA)
in alternate years.  Likewise, I thank the Calcutta Logic Circle (CLC)
for organizing the annual meets of CLC.  Having been a regular at
these meets for many years, I have benefitted a lot from them in terms
of exposure to various aspects of logic, getting to meet a number of
stalwarts, and getting platforms to present my work and obtain
important feedback and opportunities.  Members of ALI and CLC whom I
would especially like to thank include Mohua Banerjee, Mihir
Chakraborty, Soma Dutta, Sujata Ghosh, Kamal Lodaya, Ranjan
Mukhopadhyaya, R. Ramajuman, Pulak Samanta, Jayanta Sen and Sourav
Tarafder. I thank them again for all the warmth and touching affection
that they have always given me.  I thank R. Ramanajum further for many
valuable discussions during the time I spent with him at the Institute
of Mathematical Sciences (IMSc), Chennai in September-October 2010.

It is through the above mentioned meets that I got the privilege to
meet and interact with Johan van Benthem and Rohit Parikh.  I am very
grateful to Johan van Benthem for sponsoring my first international
trip to attend the European Summer School for Logic, Language and
Information (ESSLLI) in July 2009 at Bordeaux, France.  I am deeply
thankful to Rohit Parikh for inviting me to spend a month at the
Graduate Center, CUNY (City University of New York), USA in
September-October 2012, and for so many memorable hours that we spent
at New York and during his visits to Mumbai, discussing a variety of
subjects, technical and otherwise. It was during my visit to CUNY that
I had a providential meeting with Saharon Shelah. I offer him my very
sincere thanks for appreciating $\glt{k}$, and for pointing me
to~\cite{keisler:infinitary-logic-book} whose importance I realized
only much later when contemplating on
Theorem~\ref{theorem:conditional-refinement}.  I am very grateful to
Thomas Scanlon for many valuable discussions on classical model
theory, and sincerely thank him, Anand Pillay and Byunghan Kim, for
inviting me to the ICM Satellite Conference on Classification Theory
and its Applications at Daejeon, South Korea in August 2014, which
exposed me to the frontier of one of the currently very active themes
of classical model theory.

I am most thankful to Haim Gaifman, Erich Gr{\"a}del, Martin Grohe,
Yuri Gurevich, Lauri Hella, Phokion Kolaitis, Janos Makowsky, Jaroslav
Ne\v{s}et\v{r}il, Benjamin Rossman and Jouko V\"{a}\"{a}n\"{a}nen for
giving me their valuable time, encouragement and discussions on finite
model theory. I earnestly thank further, Erich Gr{\"a}del for his warm
hospitality during my visit to RWTH Aachen, Germany in September 2015,
and Martin Grohe for raising the possibility that preservation under
substructures may not be capturable by $\exists^k \forall^*$ sentences
for any fixed $k$, over all finite structures -- indeed I was able to
confirm this later
(Theorem~\ref{prop:failure-of-glt(k)-in-the-finite}).  I express my
special gratitude to Benjamin Rossman for the interactions we had at
FSTTCS 2011 and ICM 2014, for sharing his unpublished result
(Theorem~\ref{theorem:rossman-los-tarski}), and for being such an
inspiration for me ever since my Ph.D. took its turn to model theory.

The company of all of the above people is what I regard as, what is
called in Sanskrit as, \emph{satsang}, meaning the company of the wise
and the elevated. I am grateful to all the people above for their
satsang, and I look forward to continue to benefit from it in the
future as well.

The international trips mentioned above, along with trips to other
international conferences, were made possible due to financial
supports from different sources. For their generous support, I
gratefully acknowledge ACM India for sponsoring my travel to the ACM
Turing Centenary Celebrations 2012 (USA), Indian Association for
Research in Computing Sciences (IARCS) and IITB for my travel to MFCS
2012 (Hungary), and the CSE department of IITB for my travel to WoLLIC
2012 (Argentina), ICM 2014 (South Korea) and HLF 2015 (Germany).

In connection with my published papers, I thank the anonymous
reviewers of~\cite{abhisekh-apal,abhisekh-mfcs,abhisekh-wollic} for
their careful reading of the earlier versions of these, and for their
many keen suggestions on improving the presentation of the results
therein.  I also thank Pritish Kamath and Vivek Madan, my
collaborators in~\cite{abhisekh-wollic}, for the many stimulating
discussions we had as a part of their Bachelor's theses (and
appreciate them for their mathematical maturity even as
undergraduates) that led to more insights about $\glt{k}$, and also
proving it in special cases (before it was shown to be true in
general, over arbitrary structures, in~\cite{abhisekh-apal}).

I would like to thank Amey Karkare for making the {\LaTeX} thesis
style file for IITB, that has been used in preparing this thesis.  I
also thank him sincerely for his technical help in preparing the two
indices that can be found towards the end of this document.  I am very
grateful to the IITB CSE department for providing me a robust laptop
to complete the preparation of the final version of my thesis, at the
critical juncture when my laptop went out of order just as I had begun
to give my thesis its final edits.

I express many sincere thanks to the following people of the IITB CSE
department for their affection, support and encouragement: Pushpak
Bhattacharyya, Rushikesh Joshi, Saraswati Krithivasan, Krithi
Ramamritham, Krishna S., Amitabha Sanyal, G. Sivakumar, S. Sudarshan
and Sundar Viswanathan. I express my special gratitude to S. Sudarshan
for his exceptional support during his tenure as the Head of the CSE
Department (and for being, in my opinion, amongst the coolest HoDs the
CSE department has ever had).  Finally, I heartily thank the
administrative staff of CFDVS (Center for Formal Design and
Verification of Software), in particular Chandrakant Talekar,
Kakasaheb Killedar and Pravin Jadhav, and the administrative staff of
the CSE department, in particular Alpana Athavankar, Vijay Ambre,
Sunanda Ghadge, Homcy, and Nitin Peje for completely taking care of
all administratitve matters in connection with my Ph.D., and further,
always with so much warmth.

On a concluding note of my academic acknowledgments, I thank
Prof. Amitabha Gupta and Mrs. Chaitali Gupta for always being so
affectionate, and appreciative of my work.  I hold them in high regard
for their untiring efforts to make possible two surveys of modern
logic (``Logic and Philosophy Today'', Vol. I and II published by
College Publications) containing expository articles written by world
experts on various aspects of logic, which I am sure will prove to be
of immense value to students in India pursuing mathematics, computer
science or philosophy.



\vspace{7pt} {\large{\tbf{B. ``Cultural'' acknowledgments}}}

\vspace{5pt} 
Music has been and will always be an inseparable part of my life.  I
express my deepest gratitude to all my teachers of Indian classical
music (specifically, North Indian classical music or
Hindust$\bar{\text{a}}$ni Sh$\bar{\text{a}}$striya Sangeet). My
sincerest pran$\bar{\text{a}}$m (a Sanskrit word that means bowing in
reverence) to Shri Ajay Thosar for teaching me vocal music from
absolute basics and with so much love always, for giving me the
freedom to explore on my own right from day one, the ocean of Indian
classical music, for being exceptionally considerate and patient with
me, and for being a constant guide in my musical journey. My sincere
pran$\bar{\text{a}}$m to Shri Shripad Zail from whom I took my initial
lessons in tabla (a North Indian classical percussion instrument), for
initiating me formally into the profoundly abstract world of Indian
rhythm, so admirably and with such warmth and fondness.  My sincere
pran$\bar{\text{a}}$m to Shri Upendra Patnaik who took forward my
tabla training, and so affectionately enriched me with several
rhythmically sophisticated compositions, as well as anecdotes from the
lives of Indian classical musicians. Finally, it has been my great
privilege to learn tabla under the tutelage of Pandit Nayan Ghosh. I
am deeply indebted to him for imparting a wide repertoire of
technically involved compositions and their corresponding playing
techniques, a balanced understanding of both the rhythmic and melodic
aspects of music, umpteen anecdotes from the lives of great masters,
and for all the affection that he has given me; I offer him my most
earnest pran$\bar{\text{a}}$m. Finally, my heartfelt
pran$\bar{\text{a}}$m to the Spirit of Indian classical music for the
enormous influence that it has had on my life and for the most
unconditional benevolence that It has showered on me on so many
occasions. To It, and to all the people above who show the way to It,
I pledge my life-long studentship and service.

My fervent thanks to my friends who made my stay at IITB so rich and
full. I can never thank enough Meenarli, Prashant, Poornima, Vishal
(Anvekar), Iman, Kasi and Russell for the committed friendship,
whole-hearted affection, and unstinted support and care they have
shown since the time I have known them. The limited space here
restricts me to write so much more about each one of you. I would just
say to you all -- life has become so beautiful just because of you
guys, and each one of you occupies a very very special and unique
place in my heart.

I am grateful to my department seniors Hrishi, Ravi-g (or simply,
Ravig), Shetal, Sobhan and Sreyash for all their affection (indeed,
pampering), and words of wisdom they have given me over the years. I
thank my department mates Anindya, Durgesh, Girish, Jagdish, Jinesh,
Preeti, Rijurekha, Soumitra and Sujesha for being such wonderful
people. My thanks to my lab mates for the awesome times we've had in
the old and new labs. My special thanks and gratitude go to Rakesh for
being ever present to help me out with problems arising out of my
pedestrian understanding of technology, such as those pertaining to my
laptop, mobile phone and useful services on the Internet (like
Dropbox), and for the memorable times we spent doing creative design
work during the organization of ALC 2015 and ICLA 2015 that took place
at IITB.

Outside of the department, I would like to earnestly thank
Dr. S. N. Jha, my football coach and former in-charge of the IITB
Gymkhana, for his heart-warming care and affection, and personal
attention that he always gave me, and for making me realize the
crucial role of physical education in academics and research.  I thank
sincerely the members of the ``Kannada gang'' for the memorable times
we have shared: Ajay, Karthik (SRK), Nanditha, Prasanna, Smita,
Vineet, and Vinita.  My thanks to my hostel mates Dev-Sir, Dushyant,
Jai (JNT), Lalit, Mithun, Parkash, Pranab, Riju, Sathish and
Vishvendra for making my stay in the hostel so much enjoyable and so
lively. I thank my dearest canine friends Damroo, Kali, Kalia and Kalu
for being so very loving and caring, and for making life in the hostel
such a celebration. I thank the people who fed me for all these years:
my friends in the hostel mess (Bhagwat, Kailas, Munnabhai, Pandeyji,
Rajesh and Shiva), in the hostel canteen (Dilip, Ganesh, Manoj,
Sachin, Satish and Soma), at Staff Canteen (Hemant, Krishna Mama,
Nilesh, Sanjay, Shankar Mama and Yogesh) and at Sunrise Dhaba (Mirza
Uncle and Mukesh). The results in this thesis have a strong connection
with the food that they so fondly prepared always. A special word of
thanks to Mirza Uncle for his painstaking efforts in constructing the
lovely ambience at Sunrise Dhaba, contemplations in which have led to
insights for some results in this thesis (such as
Theorem~\ref{theorem:subst-char-for-theories}(\ref{theorem:PSC(lambda)-char-for-theories})).

As I near the end of my acknowledgments, I return to music since a
majority of my friendships outside of the department, hostel and
special ``gangs'' are those that music made possible.  Surbahaar and
Swarsandhya, the two annual music shows of IITB, have been an integral
part of my life at IITB, and enabled me to bring about the much needed
metamorphosis for my Ph.D. life from a very different undergraduate
life. Of the people who were a part of it, there are so many to thank,
but I would especially mention Amitji, Anubhav, Arpita (Mondal), Balu,
Kolte, Sayantani, Shantanu, Srabasti, Sukhada and Suman. I look back
at the time spent with such talented musicians and awesome people as
you guys, with much nostalgia.  Ujaan and Bijoya-Sammelani, the annual
events of the Bengali Cultural Association of IITB, gave me an
opportunity to know Madhumanti and Avradeep, two highly trained and
proficient classical musicians, whom I thank for their great music and
friendship. Performances at the cultural evenings of the conferences
of various departments gave me an opportunity to know Hemali, and I
thank her so much for being the wonderful friend that she is.
Finally, I thank Milind Malshe for his Thursday \emph{baithaks}
(musical sessions), for generously sharing his deep insights into
r$\bar{\text{a}}$g$\bar{\text{a}}$s (Indian melodic scales) and for
giving me opportunites to perform at his annual \emph{gurupoornima}
(an annual Indian festival dedicated to teachers) programmes.


Om a concluding note, I thank my family members, cousins, aunts,
uncles and grandparents, for their support and encouragement over the
years. I especially mention my brother Kannan and sister-in-law
Andrea, aunts Bhuvana Ramanathan and Subhadra Sethuraman, and
grandmother Rajalakshmi Chandran for being constant sources of warmth
and affection. Last but never the least, I express my indebtedness to
my parents. I can never put into words what you mean for me. Your
immeasurable sacrifices, unfathomable love, and being permanently
there for me at all times, are things that I will never ever be able
to repay. This thesis is dedicated to you.

\vspace{-5pt}

\end{acknowledgments}

{\cleardoublepage \fancyhead[RE,RO]{\helv \nouppercase{\rightmark}}}{}{}

\phantomsection
\addcontentsline{toc}{chapter}{Index to symbols}
\printindex[symb]
{\cleardoublepage}
\phantomsection
\addcontentsline{toc}{chapter}{Index}
\printindex[term]

\end{document}